\newtheoremstyle{theoremdd}
{0.15in}
{0.15in}
{\it}
{}
{\bf}
{.}
{ }
{\thmname{#1}\thmnumber{ #2}\thmnote{ (#3)}}
\theoremstyle{theoremdd}
\newtheorem{theorem}{Theorem}[section]
\newtheorem{lemma}[theorem]{Lemma}
\newtheorem{definition}[theorem]{Definition}
\newtheorem{corollary}[theorem]{Corollary}
\newtheorem{proposition}[theorem]{Proposition}
\newtheorem{invariant}[theorem]{Invariant}
\newtheorem{remark}[theorem]{Remark}
\newtheorem{keyidea}[theorem]{Key Idea}
\newcommand{\ep}{\epsilon}
\newcommand{\mc}{\mathcal}
\newcommand{\CreateEmpire}{\ensuremath{\texttt{RebuildEmpire}}\xspace}
\newcommand{\ConditioningVerts}{\ensuremath{\texttt{ConditioningVerts}}\xspace}
\newcommand{\TwoLevelConditioningVerts}{\ensuremath{\texttt{ExtendHierarchy}}\xspace}
\newcommand{\SimpleConditioningVerts}{\ensuremath{\texttt{SimpleConditioningVerts}}\xspace}
\newcommand{\PartialSample}{\ensuremath{\texttt{PartialSample}}\xspace}
\newcommand{\FixShortcutters}{\ensuremath{\texttt{FixShortcutters}}\xspace}
\newcommand{\FastFix}{\ensuremath{\texttt{FastFix}}\xspace}
\newcommand{\Fix}{\ensuremath{\texttt{Fix}}\xspace}
\newcommand{\Oracle}{\ensuremath{\texttt{Oracle}}\xspace}
\newcommand{\SlowOracle}{\ensuremath{\texttt{SlowOracle}}\xspace}
\newcommand{\FastOracle}{\ensuremath{\texttt{FastOracle}}\xspace}
\newcommand{\VeryStable}{\ensuremath{\texttt{VeryStable}}\xspace}
\newcommand{\Shortcut}{\ensuremath{\texttt{Shortcut}}\xspace}
\newcommand{\CoveringCommunity}{\ensuremath{\texttt{CoveringCommunity}}\xspace}
\newcommand{\BallGrow}{\ensuremath{\texttt{BallGrow}}\xspace}
\newcommand{\Voronoi}{\ensuremath{\texttt{Voronoi}}\xspace}
\newcommand{\Split}{\ensuremath{\texttt{Split}}\xspace}
\newcommand{\SpecialFix}{\ensuremath{\texttt{SpecialFix}}\xspace}
\newcommand{\Clamp}{\ensuremath{\text{clamp}}\xspace}
\newcommand{\CG}{\ensuremath{\texttt{ConditioningDigraph}}\xspace}
\newcommand{\PCG}{\ensuremath{\texttt{PermanentConditioningDigraph}}\xspace}
\newcommand{\MakeNonedgesPermanent}{\ensuremath{\texttt{MakeNonedgesPermanent}}\xspace}
\newcommand{\Postprocess}{\ensuremath{\texttt{Postprocess}}}
\newcommand{\ExactTree}{\ensuremath{\texttt{ExactTree}}}
\newcommand{\ApxTree}{\ensuremath{\texttt{ApxTree}}}
\newcommand{\DSize}{\ensuremath{\texttt{DSize}}\xspace}
\newcommand{\ColumnApxPreproc}{\ensuremath{\texttt{ColumnApxPreprocessing}}\xspace}
\newcommand{\ColumnApx}{\ensuremath{\texttt{ColumnApx}}\xspace}
\newcommand{\ApxPreproc}{\ensuremath{\texttt{ApxPreprocessing}}\xspace}
\newcommand{\ApxQuery}{\ensuremath{\texttt{ApxQuery}}\xspace}
\newcommand{\SketchMatrix}{\ensuremath{\texttt{SketchMatrix}}\xspace}
\newcommand{\RecoverNorm}{\ensuremath{\texttt{RecoverNorm}}\xspace}
\newcommand{\InterclusterEdges}{\ensuremath{\texttt{InterclusterEdges}}\xspace}
\newcommand{\PartitionBall}{\ensuremath{\texttt{PartitionBall}}\xspace}
\newcommand{\PreprocANN}{\ensuremath{\texttt{PreprocANN}}\xspace}
\newcommand{\ANN}{\ensuremath{\texttt{ANN}}\xspace}
\newcommand{\R}[1]{
    \StrLen{#1}[\MyStrLen]
    \ifthenelse{\equal{\MyStrLen}{1}}{\ensuremath{\alpha^{#1/(\sigma_0+1)}r_{min}}\xspace}{\ensuremath{\alpha^{(#1)/(\sigma_0+1)}r_{min}}\xspace}}
\newcommand{\mucarve}{\ensuremath{\mu_{\text{carve}}}}
\newcommand{\muapp}{\ensuremath{\mu_{\text{app}}}}
\newcommand{\mucon}{\ensuremath{\mu_{\text{con}}}}
\newcommand{\murad}{\ensuremath{\mu_{\text{rad}}}}
\newcommand{\mumod}{\ensuremath{\mu_{\text{mod}}}}
\newcommand{\gammadel}{\ensuremath{\gamma_{\text{del}}}}
\newcommand{\gammatemp}{\ensuremath{\gamma_{\text{temp}}}}
\newcommand{\gammads}{\ensuremath{\gamma_{\text{ds}}}}
\newcommand{\gammaann}{\ensuremath{\gamma_{\text{ann}}}}
\newcommand{\xibuc}{\ensuremath{\xi_{\text{buckets}}}}
\newcommand{\zetamax}{\ensuremath{\zeta_{\max}}}
\newcommand{\taumax}{\ensuremath{\tau_{\max}}}
\newcommand{\kappamax}{\ensuremath{\kappa_{\max}}}
\newcommand{\ellmax}{\ensuremath{\ell_{\max}}}
\newcommand{\linestart}{}
\newcommand{\lineend}{}
\newcommand{\partspace}{\vspace{.4 in}}
\newcommand{\introstartspace}{\vspace{.2 in}}
\newcommand{\introendspace}{\vspace{.3 in}}
\newcommand{\highlight}[1]{\colorbox{yellow}{#1}}
\title{An almost-linear time algorithm for uniform random spanning tree generation}
\author{Aaron Schild\footnote{Supported by NSF grant CCF-1535977.} \\EECS, UC
Berkeley \\ \texttt{aschild@berkeley.edu}}
\begin{document}
\maketitle

\begin{abstract}
We give an $m^{1+o(1)}\beta^{o(1)}$-time algorithm for generating a uniformly random spanning tree in an undirected, weighted graph with max-to-min weight ratio $\beta$. We also give an $m^{1+o(1)}\ep^{-o(1)}$-time algorithm for generating a random spanning tree with total variation distance $\ep$ from the true uniform distribution. Our second algorithm's runtime does not depend on the edge weights. Our $m^{1+o(1)}\beta^{o(1)}$-time algorithm is the first almost-linear time algorithm for the problem --- even on unweighted graphs --- and is the first subquadratic time algorithm for sparse weighted graphs.

Our algorithms improve on the random walk-based approach given in \cite{KM09} and \cite{MST15}. We introduce a new way of using Laplacian solvers to shortcut a random walk. In order to fully exploit this shortcutting technique, we prove a number of new facts about electrical flows in graphs. These facts seek to better understand sets of vertices that are well-separated in the effective resistance metric in connection with Schur complements, concentration phenomena for electrical flows after conditioning on partial samples of a random spanning tree, and more.
\end{abstract}

\newpage

\section{Introduction}

In this paper, we give the first almost-linear time algorithm for the following problem:

\begin{quote}
Given an undirected graph $G$ with weights (conductances) $\{c_e\}_{e\in E(G)}$ on its edges, generate a spanning tree $T$ of $G$ with probability proportional to $\prod_{e\in E(T)} c_e$.
\end{quote}

Random spanning tree generation has been studied for a long time \cite{K47}, has many connections to probability theory (for example \cite{BS13}), and is a special case of determinantal point processes \cite{AGR16}. They also have found applications in constructing cut sparsifiers \cite{FH10,GRV09} and have played crucial roles in obtaining better approximation algorithms for both the symmetric \cite{GSS11} and asymmetric \cite{AGMGS10} traveling salesman problem.

The uniform random spanning tree distribution is also one of the simplest examples of a negatively-correlated probability distribution that is nontrivial to sample from. Much work has gone into efficiently sampling from the uniform spanning tree distribution in the past forty years \cite{G83}. This work falls into three categories:

\begin{itemize}
\item Approaches centered around fast exact computation of effective resistances \cite{G83,CDN89,K90,CMN96,HX16}. The fastest algorithm among these takes $\tilde{O}(n^{\omega})$ time for undirected, weighted graphs \cite{CDN89}.
\item Approaches that approximate and sparsify the input graph using Schur complements \cite{DKPRS17,DPPR17}. \cite{DKPRS17} samples a truly uniform tree in $\tilde{O}(n^{4/3}m^{1/2} + n^2)$ time, while \cite{DPPR17} samples a random tree in $\tilde{O}(n^2\delta^{-2})$ time from a distribution with total variation distance $\delta$ from the real uniform distribution for undirected, weighted graphs.
\item Random-walk based approaches \cite{A90,B89,W96,KM09,MST15}. \cite{MST15} takes $\tilde{O}(m^{4/3})$ time for undirected, unweighted graphs.
\end{itemize}

Our main result is an algorithm for sampling a uniformly random spanning tree from a weighted graph with polynomial ratio of maximum to minimum weight in almost-linear time:

\begin{theorem}\label{thm:main-result-aspect}
Given a graph $G$ with edge weights $\{c_e\}_e$ and $\beta = (\max_{e\in E(G)} c_e)/(\min_{e\in E(G)} c_e)$, a uniformly random spanning tree of $G$ can be sampled in $m^{1+o(1)}\beta^{o(1)}$ time.
\end{theorem}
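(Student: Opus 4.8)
The plan is to keep Wilson's algorithm as the backbone: a uniform spanning tree is built by repeatedly appending loop-erased random walks from as-yet-uncovered vertices until the tree spans $G$. A naive implementation is slow only because a single walk may take $\widetilde{\Omega}(mn)$ steps before it hits the current partial tree, so the whole strategy rests on \emph{shortcutting} the walk: simulating it only at a sparse ``net'' $W$ of vertices instead of step by step. The structural fact that makes this possible is that a random walk on $G$, watched only when it visits $W$, is a random walk on the graph whose Laplacian is the Schur complement $\mathrm{SC}(L_G, W)$; one runs the loop-erased/Wilson procedure on this much smaller graph, then recurses inside each region of $G$ hanging off $W$, and reassembles the uniform tree.

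First I would build a hierarchical decomposition of $V(G)$ in the effective-resistance metric: partitions into clusters of geometrically increasing effective-resistance diameter, with radii $r_{\min}, \alpha^{1/(\sigma_0+1)}r_{\min}, \alpha^{2/(\sigma_0+1)}r_{\min}, \dots$ up to $\mathrm{diam}(G)$. Because all effective resistances lie in a window of width $\mathrm{poly}(n,\beta)$, there are $O(\log(n\beta))$ scales, which is the source of the $\beta^{o(1)}$ dependence. For each cluster $C$ I would precompute a \Shortcut data structure supported on a modest superset of $C$ that, given the walk currently on $\partial C$, samples in time near-linear in $|\partial C|$ the vertex at which the walk next exits $C$ (or first meets the net). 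Since $\mathrm{SC}(L_G,W)$ is dense it cannot be written down, so the shortcutters have to be assembled by a recursive scheme that \emph{sparsifies} Schur complements scale by scale using an approximate Laplacian solver; arranging that the sampled tree is nonetheless \emph{exactly} uniform (resp.\ within the target total variation distance) despite all internal linear algebra being approximate is delicate but, I expect, handled with rejection/correction steps.

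The crux, and the step I expect to be the main obstacle, is \emph{conditioning}. Once Wilson's algorithm has appended some walks, the partial tree $T_{\mathrm{partial}}$ is frozen and every later walk must run until it hits $T_{\mathrm{partial}}$ rather than until it hits $W$; any shortcutter whose cluster $T_{\mathrm{partial}}$ has entered is then invalid and must be rebuilt for the graph with $T_{\mathrm{partial}}$ contracted. To keep the amortized cost of these rebuilds at $m^{1+o(1)}$ I would prove (i) an efficient \FixShortcutters/\CreateEmpire-type routine that repairs only the affected part of the hierarchy when $T_{\mathrm{partial}}$ grows, and (ii) a concentration statement for the conditioned uniform spanning tree: conditioned on \emph{any} partial sample, the tree does not clump inside effective-resistance balls, so that the total size of all clusters ever invalidated over the run is $m^{1+o(1)}$ in expectation. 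Part (ii) is exactly the bundle of new electrical-flow facts the abstract promises --- understanding vertex sets well-separated in the effective-resistance metric in relation to Schur complements, and showing that electrical flows, hence tree-edge marginals via the transfer-current theorem, concentrate after conditioning on a partial sample.

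Finally I would put the pieces together: the hierarchy with its shortcutters makes the simulation cost of one loop-erased walk $m^{o(1)}$ times the number of net vertices it visits; summing over all of Wilson's walks, using that a random walk needs only about $(d/r)^2$ crossings of an $r$-ball to move effective-resistance distance $d$ and that the geometric radius schedule balances these crossing counts across scales, bounds the simulation work by $m^{1+o(1)}\beta^{o(1)}$. Adding the rebuild budget from the conditioning analysis and the $m^{o(1)}$ overheads from recursion depth and approximate solvers completes the bound.
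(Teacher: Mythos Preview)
Your plan diverges from the paper in a way that creates a real obstacle. The paper is built on Aldous--Broder, not Wilson: a single random walk covers $G$, and the tree is the set of first-visit edges. This choice is essential to the shortcutting meta-algorithm. A shortcutter $S_u$ may be used only once every vertex of $S\cap S_u$ has already been visited, so skipping the walk's trajectory inside $S_u$ loses no first visits; the only datum needed is the exit edge, which can be sampled with one Laplacian solve (the paper's ``online shortcutting'', Key Idea~2.1). With Wilson, by contrast, you need the loop-erased path itself, not just the exit vertex---the tree edges inside the shortcutter come from that erasure. Your workaround (``run Wilson on $\mathrm{SC}(L_G,W)$, then recurse inside each region and reassemble'') is the Schur-complement sampling paradigm of Durfee et al., a separate line of attack that the paper explicitly distinguishes itself from; it is not clear how to get \emph{exact} sampling in $m^{1+o(1)}$ time that way, and the paper does not attempt it.

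The second gap is the conditioning mechanism. In the paper, conditioning is proactive, not reactive: rather than repairing whichever shortcutters a growing $T_{\mathrm{partial}}$ happens to invalidate, the algorithm \emph{chooses} a set $S$ of vertices (via $\ConditioningVerts$), runs a shortened Aldous--Broder to sample only $T\cap E(G[S])$, and contracts/deletes those edges. The choice of $S$ is engineered so that (a) carving $S$ out of every shortcutter raises conductivity by only $m^{o(1)}$ (this is where well-spacedness and the locally-maximum-shortcutter selection matter), and (b) after $(2\sigma_1)^{2\sigma_0}$ such rounds the graph is empty. The fixing lemma you anticipate is real (Lemma~1.2 and its generalization Lemma~8.1), but its role is to restore shortcutter conductance after this \emph{deliberate} conditioning, not to bound damage from an uncontrolled Wilson tree. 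And its content is not that the conditioned tree ``does not clump in effective-resistance balls''; it is that after conditioning on $T\cap F$ one can always \emph{delete} a set of $\tilde O(\epsilon^{-2})$ edges to restore the $S$--$S'$ effective resistance to within $(1-\epsilon)$ of its original value.
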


We also give a result whose runtime does not depend on the edge weights, but samples from a distribution that is approximately uniform rather than exactly uniform. However, the runtime dependence on the error is small enough to achieve $1/\text{poly}(n)$ error in almost-linear time, so it suffices for all known applications:

\begin{restatable}{theorem}{thmmainresultapply}\label{thm:main-result-apply}
Given a weighted graph $G$ and $\ep\in (0,1)$, a random spanning tree $T$ of $G$ can be sampled from a distribution with total variation distance at most $\ep$ from the uniform distribution in time $m^{1+o(1)}\ep^{-o(1)}$ time.
\end{restatable}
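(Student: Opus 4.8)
The plan is to reduce the approximate-sampling problem on an arbitrary weighted graph to the exact-sampling problem of Theorem~\ref{thm:main-result-aspect} on a graph with polynomially bounded weight ratio. The key observation is that total variation distance between uniform spanning tree distributions is insensitive to small multiplicative perturbations of the conductances: if $\{c_e\}$ and $\{c_e'\}$ satisfy $c_e' \in [c_e, (1+\gamma)c_e]$ for every $e$, then for any fixed tree $T$ the weight $\prod_{e\in E(T)} c_e'$ changes by a factor in $[1,(1+\gamma)^{n-1}]$, so the two normalized distributions have total variation distance $O(\gamma n)$. Thus it suffices, given target error $\ep$, to produce a graph $G'$ whose conductances approximate those of $G$ to within a $(1+\ep/(2n))$ factor and whose max-to-min ratio $\beta'$ is bounded by a polynomial in $n$ and $1/\ep$, and then invoke Theorem~\ref{thm:main-result-aspect} on $G'$.

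The first step is to handle the "spread" of the conductances. Rescaling all conductances by a common factor does not change the distribution, so we may assume $\max_e c_e = 1$. Edges with extremely small conductance — say $c_e < \ep/(2m\cdot n^{C})$ for a suitable constant — contribute negligibly and can be rounded up to this threshold: each such rounding only increases tree weights, and a union bound over the $m$ edges together with the perturbation estimate above shows the distribution moves by at most $\ep/2$ in total variation distance. Actually, one must be slightly careful: rounding up a conductance can only help trees that use that edge, so the cleanest argument is to compare $\prod_{e \in E(T)} c_e$ with $\prod_{e\in E(T)} \tilde c_e$ where $\tilde c_e = \max(c_e, \theta)$ for the threshold $\theta$; since $\tilde c_e / c_e \le 1 + \theta/c_e$ is not uniformly bounded, instead bound the ratio of partition functions directly by noting $\sum_T \prod \tilde c_e \le \sum_T \prod c_e \cdot \prod_e(1 + \theta/c_e)$ is the wrong direction, so the right move is: any tree using an edge with $c_e < \theta$ has weight at most $\theta$ times a product of at most $n-2$ conductances each $\le 1$, hence weight $\le \theta$; there are at most $n^{n}$ trees, so the total weight of such trees is at most $\theta n^n$, which is at most $\ep/4$ times the weight of any single spanning tree of maximum weight once $\theta$ is polynomially small — giving that the uniform distribution restricted to trees avoiding tiny edges is $\ep/4$-close, and on that support the rounded and unrounded weights agree. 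After this truncation, $\beta' \le 2m n^C/\ep = \mathrm{poly}(n,1/\ep)$.

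The second step is to discretize the remaining conductances, now all lying in $[\theta,1]$, onto a geometric grid with ratio $1+\ep/(2n)$; there are $O((n/\ep)\log(1/\theta)) = \mathrm{poly}(n/\ep)$ grid points, each $c_e$ moves by at most a $(1+\ep/(2n))$ factor, and by the perturbation lemma the distribution moves by at most $\ep/4$. The resulting graph $G'$ still has $\beta' = \mathrm{poly}(n,1/\ep)$. Finally, apply Theorem~\ref{thm:main-result-aspect} to $G'$: its running time is $m^{1+o(1)}(\beta')^{o(1)} = m^{1+o(1)} (n/\ep)^{o(1)} = m^{1+o(1)}\ep^{-o(1)}$, using $n \le m$ (we may assume $G$ is connected, else there is no spanning tree). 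Summing the three sources of error — truncation, discretization, and the exact sample from $G'$ — gives total variation distance at most $\ep$ from the uniform distribution on $G$, as required.

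The routine part is the three error estimates, which are all instances of the single elementary perturbation bound; the only mild subtlety, flagged above, is arguing that trees through negligibly-light edges carry negligible probability mass, which needs a crude counting bound ($\binom{m}{n-1}$ or $n^n$ spanning trees) against a lower bound on the weight of a maximum-weight spanning tree — and this is where one must be careful to phrase the comparison in terms of partition-function ratios rather than edgewise multiplicative bounds, since rounding a tiny conductance up to the threshold is not a bounded multiplicative perturbation. No part of this requires any new structural fact about electrical flows; it is purely a reduction, so there is no real obstacle beyond bookkeeping the constants so that the polynomial $\beta'$ feeds through the $(\beta')^{o(1)}$ factor without harm.
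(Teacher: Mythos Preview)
Your truncation step has a genuine gap that cannot be patched by bookkeeping. You argue that trees using an edge with $c_e < \theta$ have weight at most $\theta$ and hence carry negligible total mass compared to a maximum-weight spanning tree. But this requires a lower bound on the maximum-weight spanning tree that you do not have: normalizing $\max_e c_e = 1$ says nothing about the weight of any spanning tree, since every spanning tree may be forced to use at least one tiny edge. Concretely, take a path $s$--$a$--$b$--$t$ with conductances $1,\,c,\,1$ and an extra edge $\{s,t\}$ of conductance $c^{100}$. The unique spanning tree of maximum weight is $\{sa,ab,bt\}$ with weight $c$, and it dominates the distribution. After you round both $c$ and $c^{100}$ up to a common threshold $\theta$, the trees $\{sa,ab,bt\}$ and $\{sa,bt,st\}$ both have weight $\theta$ and receive roughly equal probability, whereas originally the second tree had probability $O(c^{99})$. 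The total variation distance after rounding is $\Omega(1)$ regardless of how small $\theta$ is. The point is that the \emph{ratios} among sub-threshold conductances can determine the distribution, so no uniform rounding of small conductances can work.

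The paper's reduction avoids global rounding entirely. It processes the graph in order of increasing resistance: for the current lowest-resistance edges it samples their intersection with a random tree by running $\ExactTree$ on the subgraph of edges within a $\rho = (m/\ep)^{10}$ factor in resistance, then conditions and repeats. The correctness hinge (Proposition~\ref{prop:del-high-res}) is that deleting edges with resistance more than $\rho\, r_e$ perturbs $\texttt{Reff}(e)$ by only a $(1+O(\ep/m^9))$ factor, so the leverage score of each low-resistance edge --- and hence the marginal distribution on the low-resistance prefix --- is barely affected by the high-resistance edges. A ball-growing argument on resistance scales ensures each $\ExactTree$ call sees a graph not much larger than the set it conditions on and with $\beta \le \rho^{O(\sqrt{\sigma_0})}$, so the total work telescopes to $m^{1+o(1)}\ep^{-o(1)}$. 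The essential difference from your proposal is that the paper never tries to collapse the weight range globally; it only argues that \emph{very distant} resistance scales do not interact, and handles each scale with the exact sampler.
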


Our techniques are based on random walks and are inspired by \cite{A90,B89,KM09,MST15}. Despite this, our runtime guarantees combine the best aspects of all of the former approaches. In particular, our $m^{1+o(1)}\ep^{-o(1)}$-time algorithm has no dependence on the edge weights, like the algorithms from the first two categories, but has subquadratic running time on sparse graphs, like the algorithms in the third category.

\subsection{Other contributions}

We use random walks to generate random spanning trees. The behavior of random walks can be understood through the lens of electrical networks. We prove several new results about electrical flows (for example Lemmas \ref{lem:well-sep-lev-score}, \ref{lem:ball-split}, and \ref{lem:second-order-deg}) and find new uses for many prior results (for example \cite{SRS17,I06,HHNRR08}). We highlight one of our new results here.

One particularly important quantity for understanding random walks is the effective resistance between two vertices:

\begin{definition}[Effective resistance]
The \emph{energy} of a flow $f\in \mathbb{R}^{E(G)}$ in an undirected graph $G$ with weights $\{c_e\}_{e\in E(G)}$ is

$$\sum_{e\in E(G)} \frac{f_e^2}{c_e}$$

For two vertices $s,t$ in a graph $G$, the $G$-\emph{effective resistance} between $s$ and $t$, denoted $\texttt{Reff}_G(s,t)$, is the minimum energy of any $s-t$ flow that sends one unit of flow from $s$ to $t$.
\end{definition}

We study the robustness of the $s-t$ effective resistance to random changes in the graph $G$. Specifically, we consider random graphs $H\sim G[F]$ obtained by conditioning on the intersection of a random spanning tree in $G$ with the set $F\subseteq E(G)$, which amounts to sampling a tree $T$ from $G$, contracting all edges in $E(T)\cap F$, and deleting all edges in $F\setminus E(T)$.

Surprisingly, the $s-t$ effective resistance in $H$ is the same as the $s-t$ effective resistance in $G$ in expectation as long as $G\setminus F$ is connected. No concentration holds in general, though. Despite this, we show that for every $F$, there exists a small set $F'\subseteq F$ for which $\texttt{Reff}_{H\setminus F'}(s,t)$ is not much smaller than its original value:

\begin{lemma}\label{lem:special-slow-fix}
Let $G$ be a graph, $F\subseteq E(G)$, $\ep\in (0,1)$, and $s,t\in V(G)$. Sample a graph $H$ by sampling a random spanning tree $T$ in $G$, contracting all edges in $E(T)\cap F$, and deleting all edges in $F\setminus E(T)$. Then, with high probability, there is a set $F'\subseteq F$ that depends on $H$ that satisfies both of the following guarantees:

\begin{compactitem}
\item (Effective resistance) $\texttt{Reff}_{H\setminus F'}(s,t)\ge (1 - \ep)\texttt{Reff}_G(s,t)$
\item (Size) $|F'|\le O((\log n)/\ep^2)$
\end{compactitem}
\end{lemma}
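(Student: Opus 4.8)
The plan is to exploit the expectation identity mentioned in the excerpt — that $\mathbb{E}_H[\texttt{Reff}_H(s,t)] = \texttt{Reff}_G(s,t)$ whenever $G\setminus F$ is connected — together with a careful edge-by-edge martingale analysis of what happens as we condition on the edges of $F$ one at a time. The key point is that the effective resistance does not drop much on average, so it can only drop a lot on a small set of "bad" edges, which we can put into $F'$.

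\textbf{Step 1: Expose $F$ one edge at a time and build a martingale.} Fix an arbitrary ordering $f_1,\dots,f_k$ of the edges of $F$ (we can even choose the order adaptively, revealing the "most damaging" edge first). Let $G_0 = G$, and let $G_i$ be the graph obtained from $G_{i-1}$ by contracting $f_i$ if $f_i\in E(T)$ and deleting it otherwise, so $G_k = H$. Condition on $G\setminus F$ staying connected throughout (this holds with high probability, or can be enforced by only processing edges whose removal keeps things connected, treating the rest as contracted). Define $X_i = \texttt{Reff}_{G_i}(s,t)$. By the expectation identity applied to the single edge $f_i$ inside the (random) graph $G_{i-1}$, we have $\mathbb{E}[X_i \mid G_{i-1}] = X_{i-1}$, so $(X_i)$ is a martingale with respect to the filtration generated by the partial samples. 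Moreover each step is a two-outcome update (contract or delete $f_i$), and Rayleigh monotonicity plus the electrical interpretation of contraction/deletion lets us quantify the multiplicative increment: deleting an edge never decreases effective resistance, while contracting edge $f_i=(a,b)$ decreases $\texttt{Reff}(s,t)$ by an amount controlled by how much $s$--$t$ current flows through $f_i$, i.e.\ by the leverage-type quantity $\texttt{Reff}(s,t)-\texttt{Reff}_{G_{i-1}/f_i}(s,t)$, which in turn is bounded in terms of the probability $f_i\in E(T\mid\text{history})$.

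\textbf{Step 2: Control the one-step multiplicative drops.} For a single edge $f_i = (a,b)$ with conditional inclusion probability $p_i$ (its leverage score in the current graph), the two outcomes are: with probability $p_i$, contract, and $X_i = X_{i-1} - \Delta_i$; with probability $1-p_i$, delete, and $X_i \ge X_{i-1}$. The martingale property forces $(1-p_i)\,\mathbb{E}[X_i\mid \text{delete}] + p_i(X_{i-1}-\Delta_i) = X_{i-1}$, so the expected upward jump on deletion exactly balances $p_i\Delta_i$. The crucial structural fact (this is where a prior electrical-flow estimate like Lemma~\ref{lem:well-sep-lev-score} or a direct Schur-complement computation is needed) is that $\Delta_i \le X_{i-1}$ always, and more usefully that the relative drop $\Delta_i / X_{i-1}$ on a contraction is at most a constant. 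Call edge $f_i$ \emph{bad} if conditioning on it causes a relative drop of more than, say, $\ep/(2\log n)$ in $\texttt{Reff}(s,t)$; put all bad edges into $F'$. Removing a bad edge $f_i$ from the eventual graph (i.e.\ putting it in $F'$) reverts its effect: by Rayleigh monotonicity, $\texttt{Reff}_{H\setminus F'}(s,t)$ is at least the product over the non-bad steps of the surviving multiplicative factors times $\texttt{Reff}_G(s,t)$.

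\textbf{Step 3: Bound $|F'|$ via a potential/concentration argument.} Consider the process only on the non-bad steps. On these steps each multiplicative decrement is at most $\ep/(2\log n)$, and there can be at most $k \le \binom{n}{2}$ of them; naively $(1-\ep/(2\log n))^{O(n^2)}$ is far too small, so the counting must use the martingale, not the worst case. The right statement is: the sum of the relative drops on contracted non-bad edges is, in expectation, equal to $\sum_i p_i \cdot (\text{relative drop}_i)$, and since $\sum_i p_i \le n-1$ (expected tree size) while each term is also at most the drop itself, an Azuma/Freedman-type inequality on $\log X_i$ shows that with high probability the total relative decrease is $O(\log n)$ — hence $O(\log n / \ep^2)$ of the increments can be of size $\gtrsim \ep/\log n$ while respecting the variance budget $\sum_i p_i(1-p_i)(\text{drop}_i)^2$. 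Declaring exactly those large-increment edges bad gives $|F'| = O(\log n/\ep^2)$ and $\texttt{Reff}_{H\setminus F'}(s,t) \ge (1-\ep)\texttt{Reff}_G(s,t)$.

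\textbf{The main obstacle} I anticipate is making Step 2 rigorous: bounding the relative drop $\Delta_i/X_{i-1}$ when contracting a single edge inside the already-partially-conditioned random graph $G_{i-1}$, and relating it cleanly to $p_i$ so that the variance sum in Step 3 is controlled by $\sum_i p_i$ rather than by $k$. The subtlety is that a single edge can in principle carry all of the $s$--$t$ current (e.g.\ a bridge), so without the hypothesis that $G\setminus F$ stays connected — which guarantees an alternate $s$--$t$ path avoiding $F$, bounding how much any $f_i$ matters — the drops cannot be controlled; I expect the connectivity hypothesis to enter precisely here, via a bound like $\Delta_i \le \texttt{Reff}_{G_{i-1}}(a,b) \cdot (\text{current through } f_i)^2$ and the observation that the $s$--$t$ effective resistance restricted to $G\setminus F$ lower-bounds $X_{i-1}$ throughout, so the cumulative multiplicative loss telescopes against a fixed positive quantity.
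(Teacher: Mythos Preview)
Your martingale setup in Step~1 is correct and matches the paper's starting point: $X_i = \texttt{Reff}_{G_i}(s,t)$ is indeed a martingale under edge-by-edge conditioning. But Step~3 has a genuine gap that Step~2 does not close. The variance budget $\sum_i p_i(1-p_i)(\text{drop}_i)^2$ is not bounded by anything like $O(\log n)$ in general---it can be as large as $\sum_i p_i \approx n$---and a Freedman/Azuma argument on $\log X_i$ with increments bounded only by the ``bad'' threshold $\ep/(2\log n)$ still needs a bound on the \emph{number} of non-bad steps, which is $|F|$, not $O(\log n)$. Nothing in your argument prevents $\Theta(|F|)$ edges from each causing a drop exceeding $\ep/(2\log n)$ under an arbitrary ordering, so the bound $|F'| = O((\log n)/\ep^2)$ is unsupported. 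Your proposed fix via connectivity of $G\setminus F$ does not help: that hypothesis lower-bounds the \emph{final} value $X_k$ by a fixed positive quantity, but does not control individual drops or their count along the way. You correctly identified this as the main obstacle, but the resolution you sketch does not work.

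The paper closes the gap with an idea that runs \emph{opposite} to your suggestion of revealing ``the most damaging edge first'': at each step it adaptively picks the edge $f$ of \emph{minimum} energy $(b_{st}^T L_J^+ b_f)^2/r_f$. Since the energies over all remaining edges sum to at most $X_i$, the minimum-energy edge has energy at most $X_i/|F_i|$, and---combined with a splitting trick that keeps every leverage score in $[1/4,3/4]$ so that the Sherman--Morrison denominators are bounded---this forces every increment to satisfy $|X_{i+1}-X_i|\le O(X_i/|F_i|)$. A direct Azuma bound, applied in $O(\log n)$ phases during each of which $|F|$ shrinks by a constant factor, then gives $X_i \ge \bigl(1-O(\sqrt{\log n}/\sqrt{|F_i|})\bigr)X_0$ with high probability. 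One can therefore condition safely until only $O((\log n)/\ep^2)$ edges remain; \emph{those leftover, never-conditioned edges} are $F'$, and deleting them (rather than conditioning on them) can only raise the resistance further. So $F'$ is the unconditioned remainder, not a retrospectively flagged set of bad steps.
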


Even better, we show that $F'$ can be computed in almost-linear time. See Section \ref{sec:fast-fix} for details. Our algorithm uses a combination of matrix sketching \cite{AMS96,I06} and localization \cite{SRS17} that may be of independent interest.

\section{Algorithm Overview}\label{sec:fake-overview}

Our algorithm, like those of \cite{KM09} and \cite{MST15}, is based on the following beautiful result of Aldous \cite{A90} and Broder \cite{B89}:

\begin{restatable}[Aldous-Broder]{theorem}{thmaldousbroder}\label{thm:aldous-broder}
Pick an arbitrary vertex $u_0$ and run a random walk starting at $u_0$ in a weighted graph $G$. Let $T$ be the set of edges used to visit each vertex besides $u_0$ for the first time. Then $T$ is a weighted uniformly random spanning tree of $G$.
\end{restatable}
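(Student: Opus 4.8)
The plan is to prove the Aldous--Broder theorem directly by constructing an explicit bijection-free probabilistic argument on the stationary distribution of an auxiliary Markov chain. Concretely, I would run the random walk $u_0, u_1, u_2, \ldots$ on $G$ and, at each time $k$, record the pair $(v, e)$ where $v = u_k$ is the current vertex and $e$ is the edge by which $v$ was \emph{last entered} (with a convention for $u_0$, e.g. attaching a self-loop or a special symbol). This ``walk-with-last-edge'' process is itself a Markov chain on the state space of such pairs. The key object is the collection, at a fixed time, of the last-entry edges of \emph{all} vertices visited so far; equivalently, once every vertex has been visited, the function $v \mapsto (\text{edge by which } v \text{ was first visited})$, which is exactly the tree $T$ in the statement. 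The whole point is that the first-visit edges and the last-visit edges coincide if we instead think of running the walk \emph{backwards}.

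**The cleanest route** is the classical one: consider the chain on ``rooted arborescences plus a token,'' or more simply, the time-reversal of the random walk. Because $G$ is undirected (reversible), the time-reversed walk is again a random walk on $G$, and I would argue that the tree of \emph{first-entry} edges of the forward walk, read from a suitably late time, has the same law as the tree of \emph{last-exit} edges. The standard trick is to fix a large time horizon $N$, look at the state ``(current vertex, for each other vertex the edge last used to leave it toward its eventual successor)'' --- this is a spanning arborescence oriented toward the current vertex --- and show this augmented chain has uniform stationary distribution (weighted by $\prod c_e$) over (vertex, arborescence) pairs via a detailed-balance / circulation computation. Reversing time then turns last-exit edges into first-entry edges, and marginalizing out the root vertex yields that $T$ is distributed proportionally to $\prod_{e \in E(T)} c_e$. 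I would also need to invoke that the walk visits every vertex almost surely (finite connected graph) so $T$ is well-defined.

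**Step by step:** (1) Define the augmented Markov chain $M$ whose states are pairs (current vertex $v$, arborescence $A$ rooted at $v$) and whose transition, on stepping the walk from $v$ to $w$ along edge $e$, updates $A$ by redirecting $v$'s out-edge to $e$ and deleting $w$'s old out-edge. (2) Verify $M$ is a valid Markov chain and is irreducible. (3) Compute its stationary distribution: guess $\pi(v, A) \propto c_v \prod_{e \in A} c_e$ (with $c_v$ the weighted degree) and verify the balance equations --- this is the main computational step. (4) Observe that the forward walk's first-entry tree equals the time-reversal's last-exit arborescence, and that reversibility of the underlying walk makes the reversed process a copy of the same walk, so the first-entry tree inherits the stationary marginal over $A$. (5) Marginalize over $v$: since $\sum_v c_v \prod_{e\in A}c_e$ has the $A$-dependence $\prod_{e \in A} c_e$ up to a factor independent of $A$ (here one uses that $\sum_v c_v$ over all roots, reorienting, is the same for every spanning tree --- or one argues the root is asymptotically stationary and independent enough), we get $\Pr[T] \propto \prod_{e\in E(T)} c_e$.

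**The main obstacle** is step (3)--(4): making the time-reversal argument rigorous, i.e. carefully relating ``first edge used to \emph{enter} $v$ in the forward walk'' to ``last edge used to \emph{leave} $v$ in the reversed walk,'' and confirming that the reversal of the stationary augmented chain is again the augmented chain built from the reversed base walk. One must be careful that the arborescence in the augmented chain is oriented \emph{toward} the current vertex, so that under reversal it becomes oriented toward the \emph{start}, and that the edge-relabeling bookkeeping is consistent; the weighted (non-lazy) case also requires tracking the $c_v$ factors correctly through detailed balance. An alternative that sidesteps the reversal subtlety is to directly show the augmented forward chain's stationary law, restricted to arborescences rooted at a fixed vertex and then ``unrooted,'' already gives the claimed tree distribution; I would fall back to that if the reversal bookkeeping becomes unwieldy.
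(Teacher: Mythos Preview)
The paper does not prove Theorem~\ref{thm:aldous-broder}; it is quoted as a known background result of Aldous and Broder (references \cite{A90,B89} in the paper) and used as a black box throughout. So there is no ``paper's own proof'' to compare against.

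That said, your outline is the classical argument and is essentially correct. The augmented chain on (current vertex, last-exit arborescence) pairs is the right object, your guessed stationary law $\pi(v,A) \propto c_v \prod_{e\in A} c_e$ is correct (the $c_v$ factor is needed so that the $v$-marginal matches the stationary law of the base walk; note that $\sum_{A\text{ rooted at }v}\prod_{e\in A}c_e$ is independent of $v$), and the reversibility of the undirected walk is exactly what converts last-exit edges into first-entry edges under time reversal. One small correction to step~(5): you should \emph{condition} on the root, not marginalize over it. The theorem fixes an arbitrary start $u_0$, so what you want is the conditional law of $A$ given $X_0=u_0$ in the stationary augmented chain, which is $\propto \prod_{e\in A} c_e$ directly from the product form of $\pi$; marginalizing would instead give the law of the tree when the start is itself drawn from stationarity. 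Your parenthetical ``or one argues the root is asymptotically stationary and independent enough'' is closer to the actual mechanism: run the walk to a large time $N$, reverse, and use that the reversed walk from $X_N$ is (asymptotically) a fresh random walk from a stationary start, then condition on that start being $u_0$.
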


The runtime of Aldous-Broder is the amount of time it takes to visit each vertex for the first time, otherwise known as the \emph{cover time} of $G$. On one hand, the cover time can be as high as $\Theta(mn)$. On the other hand, Aldous-Broder has the convenient property that only a small number of vertex visits need to be stored. In particular, only $n-1$ visits to vertices add an edge to the sampled tree (the first visits to each vertex besides the starting vertex). This observation motivates the idea of \emph{shortcutting} the random walk.

\subsection{The shortcutting meta-algorithm}\label{subsec:meta-algo}

To motivate our algorithm, we classify all existing algorithms based on Aldous-Broder \cite{B89,A90,KM09,MST15} at a very high level. We start by describing Aldous-Broder in a way that is more readily generalizable:

\introstartspace

\begin{compactitem}
    \item[\textbf{Aldous-Broder}]
    \item For each $v\in V(G)$, let $S_v^{(0)} = \{v\}$.
    \item Pick an arbitrary vertex $u_0$ and set $u\gets u_0$.
    \item Until all vertices in $G$ have been visited
    \begin{compactitem}
        \item Sample the first edge that the random walk starting at $u$ uses to exit $S_u^{(0)}$
        \item Replace $u$ with the non-$S_u^{(0)}$ endpoint of this edge.
    \end{compactitem}
    \item Return all edges used to visit each vertex besides $u_0$ for the first time.
\end{compactitem}

\introendspace

We now generalize the above algorithm to allow it to use some number $\sigma_0$ of \emph{shortcutters} $\{v\}\subseteq S_v^{(i)}\subseteq V(G)$ for each vertex $v\in V(G)$. If the $\{S_v^{(i)}\}_{i=1}^{\sigma_0}$s are chosen carefully, then instead of running the random walk until it exits $S_v^{(i)}$, one can sample the exiting edge much faster using Laplacian solvers.

Ideally, we could shortcut the random walk directly to the next unvisited vertex in order to minimize the number of wasted visits. Unfortunately, we do not know how to do such shortcutting efficiently. Instead, we use multiple shortcutters per vertex. More shortcutters per vertex means a better approximation to the set of previously visited vertices, which leads to fewer unnecessary random walk steps and a better runtime.

\introstartspace

\begin{compactitem}
    \item[\textbf{Simple shortcutting meta-algorithm}]
    \item For each $v\in V(G)$, let $S_v^{(0)} = \{v\}$ \highlight{and \textbf{pick} shortcutters $\{S_v^{(i)}\}_{i=1}^{\sigma_0}$}
    \item Pick an arbitrary vertex $u_0$ and set $u\gets u_0$.
    \item Until all vertices in $G$ have been visited
    \begin{compactitem}
        \item \highlight{\parbox{0.8\textwidth}{Let $i^*\in \{0,1,\hdots,\sigma_0\}$ be the maximum value of $i$ for which all vertices in $S_u^{(i)}$ have been visited}}
        \item \textbf{Sample} the first edge that the random walk starting at $u$ uses to exit $S_u^{(i^*)}$
        \item Replace $u$ with the non-$S_u^{(i^*)}$ endpoint of this edge.
    \end{compactitem}
    \item Return all edges used to visit each vertex besides $u_0$ for the first time.
\end{compactitem}

\introendspace

To implement the above meta-algorithm, one must make two important choices, each of which is bolded above. Both of these choices only affect the runtime of the meta-algorithm; not its correctness:

\introstartspace

\begin{compactitem}
\item A set of shortcutters for each vertex $v\in V(G)$
\item A method for sampling the first exit edge from $S_u^{(i)}$, which we call a \emph{shortcutting method}
\end{compactitem}

\introendspace

The meta-algorithm could also choose an arbitrary starting location $u_0$, but this choice is not important to any shortcutting-based algorithm.

We now argue that the meta-algorithm correctly samples a uniformly random spanning tree, no matter the choice of the $S_v^{(i)}$s or shortcutting method. First of all, $i = 0$ is always a valid choice for $i^*$, so $i^*$ exists and the algorithm is well-defined. Since all vertices in $S_v^{(i^*)}$ have been previously visited, using the shortcutter $S_v^{(i^*)}$ does not skip any first visits. Therefore, by Theorem \ref{thm:aldous-broder}, the edges returned form a uniformly random spanning tree.

Next, we summarize all algorithms based on this meta-algorithm \cite{A90,B89,KM09,MST15} with $\sigma_0$ --- the number of shortcutters per vertex --- and the choice of shortcutting method. We also list bounds on the runtimes of these algorithms on unweighted graphs to offer context.

\begin{table}[H]
\begin{center}
\begin{tabular}{ c | c | c | c }
    \hline
    Algorithm & Shortcutting method & $\sigma_0$ & Runtime \\ \hline
    \cite{A90,B89} & N$\backslash$A & 0 & $O(mn)$\\ \hline
    \cite{KM09} & Offline & 1 & $O(m\sqrt{n})$\\ \hline
    \cite{MST15} & Offline & 2 & $O(m^{4/3})$\\ \hline
    This paper & Online & $\Theta(\log \log n)$ & $m^{1+O(1/(\log \log n))} = m^{1+o(1)}$\\
    \hline
\end{tabular}
\end{center}
\label{tab:runtimes}
\caption{Shortcutting methods, number of shortcutters, and runtimes for prior algorithms}
\end{table}

While we have not yet discussed what ``Offline'' and ``Online'' shortcutting are, we highlight that our shortcutting method is different from that of \cite{KM09} and \cite{MST15}. This is one of the key reasons why we are able to obtain a faster algorithm and are able to effectively use more shortcutters per vertex. 

\subsection{Shortcutting methods}

The starting point for our improvement is an \emph{online} shortcutting method. This method is based on the following observation:

\begin{keyidea}[Online shortcutting]
For a vertex $u$ and a shortcutter $S_u$ associated with $u$, the probability that a random walk exits $S_u$ through an edge $e$ can be $\ep$-additively approximated for all $e\in \partial S_u$ simultaneously using \emph{one} $\ep$-approximate Laplacian system solve on a graph with $|E(S_u)\cup \partial S_u|$ edges.

When combined with a trick due to Propp \cite{P10}, one can exactly sample an escape edge in expected $\tilde{O}(|E(S_u)\cup \partial S_u|)$ time with no preprocessing.
\end{keyidea}

We call this method \emph{online} due to its lack of preprocessing and the shortcutting technique used in \cite{KM09} and \cite{MST15} \emph{offline} due to its fast query time with high preprocessing time. We summarize the runtime properties of these shortcutting methods for a shortcutter $S_u$ here:

\begin{table}[H]
\begin{center}
\begin{tabular}{ c | c | c }
    \hline
    Shortcutting method & Preprocessing & Query\\ \hline
    Online & None & $\tilde{O}(|E(S_u)\cup \partial S_u|)$\\ \hline
    Offline & $\tilde{O}(|E(S_u)\cup \partial S_u||\partial S_u|)$ & $\tilde{O}(1)$\\ \hline
\end{tabular}
\end{center}
\label{tab:shortcutting-methods}
\caption{Shortcutting methods and their runtimes}
\end{table}

The upside to the online method is that its runtime does not depend quadratically on the size of $S_u$'s boundary. This is a critical barrier to improving the technique of \cite{KM09} and \cite{MST15} because it is impossible to obtain balanced cuts with arbitrarily small size that separate a graph into low-radius parts in most metrics. While the high query time for the online method may seem prohibitive initially, it is fine as long as online shortcutting does substantially less work than the random walk would have done to reach the boundary of $S_u$. In the following path example, it takes the random walk $\Theta(k^2)$ time for the random walk to escape $S_u$ starting at $u$, but online shortcutting only takes $\tilde{O}(k)$ time to find the escape edge:

\begin{figure}[H]
\begin{center}
\includegraphics[width=0.8\textwidth]{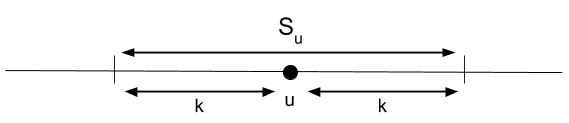}
\end{center}
\caption{How online shortcutting saves over the random walk.}
\label{fig:path-shortcutting}
\end{figure}

\subsection{Properties of shortcutters}

Now, we describe the machinery that allows us to bound the total amount of work. We start with a bound that captures the idea that most random walk steps happen far away from an unvisited vertex. This bound is a weighted generalization of Lemma A.4 given in \cite{MST15}. We prove it in Appendix \ref{sec:cover-time}:

\begin{restatable}[Key result for bounding the number of shortcutter uses]{lemma}{lemwalkbound}\label{lem:walk-bound}
Consider a random walk starting at an arbitrary vertex in a graph $I$, an edge $\{u,v\} = f\in E(I)$, and a set of vertices $S\subseteq V(I)$. The expected number of times the random walk traverses $f$ from $u\rightarrow v$ when $u$ is within $I$-effective resistance distance $R$ of some unvisited vertex in $S$ is at most $\tilde{O}(c_f R)$, where $c_f$ is the conductance of the edge $f$.
\end{restatable}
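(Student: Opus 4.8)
Let me think about how to prove this. We want to bound the expected number of traversals of a directed edge $u \to v$ during a phase when $u$ is "close" (in effective resistance) to an unvisited vertex of $S$. The key classical fact is that for a random walk, the expected number of traversals of a directed edge $u \to v$ before the walk is "absorbed" or "stopped" is controlled by electrical quantities: if the walk starts from stationarity (or we track it up to a stopping time), the expected number of $u\to v$ crossings equals $c_{uv} \cdot (\text{potential-type quantity})$. More precisely, for a random walk run until it hits a target set, the expected number of traversals of $u\to v$ is $c_{uv}$ times the probability-weighted time spent at $u$, which is itself a voltage/Green's function quantity.

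So the plan is: First, I would decompose the walk into excursions or phases indexed by the "current unvisited witness" — but that is awkward because which vertex is unvisited changes over time. Instead, following the approach in \cite{MST15}'s Lemma A.4, I would fix attention on the event being measured and bound the count by a sum over possible "witnessing" unvisited vertices $w \in S$ with $\texttt{Reff}_I(u,w) \le R$. For each such potential witness $w$, consider the (hypothetical) random walk run until it first hits $w$; by the commute-time / Green's function identity, the expected number of $u \to v$ traversals before hitting $w$ is at most $c_{uv} \cdot \texttt{Reff}_I(u,w) \le c_{uv} R$ — roughly, because the expected number of visits to $u$ before hitting $w$, times the transition probability $c_{uv}/c_u$ out along that edge, telescopes into an effective-resistance expression (this uses that the voltage at $u$ in the unit-current-to-$w$ flow is at most $\texttt{Reff}_I(u,w)$). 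The subtlety is that the walk that actually matters is the global Aldous–Broder walk, not the walk-until-hitting-$w$; but once $w$ is visited it can never again be the witness, so the real walk's $u\to v$ traversals during "$w$ is the closest unvisited vertex" phases are stochastically dominated by (a sub-count of) the hitting-$w$ walk's traversals. Summing over witnesses would give a factor of $|S|$, which is too weak — so the real work is to avoid this union bound.

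The way to avoid the $|S|$ loss — and I expect this to be \textbf{the main obstacle} — is a charging/amortization argument: rather than summing over all possible witnesses independently, one observes that the witnesses encountered over the course of the walk are visited in some order $w_1, w_2, \ldots$, and once $w_j$ is visited the walk has made genuine progress. I would argue that the expected $u\to v$ count during the phase where $w_j$ is the witness, conditioned on the history up to when $w_{j-1}$ was visited, is at most $\tilde O(c_{uv} R)$ via the hitting-time identity applied to the walk restarted at that moment (using the strong Markov property), and then that there are only $\tilde O(1)$ "new" witnesses in expectation that are relevant — or, more likely following \cite{MST15}, that the total is telescoped through a potential function so the sum over $j$ collapses to a single $\tilde O(c_{uv}R)$ term up to logarithmic factors coming from a dyadic decomposition of distances/times. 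Concretely I would: (i) establish the single-target bound $\mathbb{E}[\#\{u\to v \text{ before hitting } w\}] \le c_{uv}\texttt{Reff}_I(u,w)$ via Green's functions; (ii) set up the correspondence between real-walk traversals in the "close to unvisited $S$" regime and sub-excursions of target-hitting walks using the strong Markov property at the successive first-visit times to $S$; (iii) amortize across successive witnesses via a monovariant (e.g., the number of unvisited vertices of $S$, or a dyadic bucketing of $\texttt{Reff}$) to kill the $|S|$ factor, absorbing the resulting overhead into the $\tilde O(\cdot)$.

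I would carry this out in the order (i), (ii), (iii), since (i) is the clean electrical-network core, (ii) is the bookkeeping that reduces the global walk to well-understood local pieces, and (iii) is the quantitatively delicate amortization where the $\tilde O$ (as opposed to $O$) is earned. The weighted generalization of \cite{MST15}'s Lemma A.4 should not require new ideas beyond replacing combinatorial degrees and unit-resistance effective resistances with their conductance-weighted analogues throughout, so I would state the weighted Green's-function identity carefully once and then mirror the unweighted argument.
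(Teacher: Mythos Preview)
Your step (i) is correct and matches the paper: the single-target bound $\textbf{E}[\#\{u\to v \text{ crossings before hitting } w\}] \le c_f\,\texttt{Reff}_I(u,w)$ is exactly Theorem~\ref{thm:edge-visits}, and this is the electrical core of the argument.

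The gap is in step (iii). The paper does \emph{not} amortize via a monovariant, a dyadic bucketing of distances, or a telescoping potential function; none of your suggested mechanisms obviously collapses the $|S|$ (or $|B|$) factor. Instead the paper uses the \emph{Matthews trick}: let $B$ be the set of all vertices within effective-resistance distance $R$ of $u$ (note: not just those in $S$ --- this reduction is itself a small simplification you did not make, since any unvisited $S$-witness for $u$ lies in $B$, so all relevant $u\to v$ crossings occur before $B$ is covered). Now pick a \emph{uniformly random bijection} $\pi:\{1,\ldots,|B|-1\}\to B\setminus\{u\}$, let $\tau_i$ be the number of $u\to v$ crossings before the last first-visit to $\{\pi_1,\ldots,\pi_i\}$, and observe that $\tau_i-\tau_{i-1}$ is nonzero only when $\pi_i$ happens to be the last of $\pi_1,\ldots,\pi_i$ visited by the walk. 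By symmetry of the random permutation (independent of the walk), this event has probability $1/i$, and conditioned on it the increment is at most $2Rc_f$ in expectation by step (i) and the triangle inequality. Summing gives $\textbf{E}[\tau_{|B|-1}]\le 2Rc_f\sum_i 1/i = O(\log n)\,Rc_f$.

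Your phase-by-phase decomposition indexed by the actual visitation order $w_1,w_2,\ldots$ of witnesses would only yield $\textbf{E}[\tau_k]\le |B|\cdot 2Rc_f$, which is the bound you correctly flagged as too weak; the randomized re-indexing is precisely what converts the linear sum into a harmonic one. This is the missing idea, and without it your step (iii) as written does not go through.
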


The effective resistance metric\footnote{The effective resistance $\texttt{Reff}_G$ satisfies the triangle inequality and therefore forms a metric space on the vertices of $G$} appears in the above lemma due to the relationship between random walks and electrical networks. Unlike \cite{KM09} and \cite{MST15}, we apply this lemma in the original graph \emph{and} in other graphs obtained by ``eliminating'' vertices from the original graph. Specifically, we apply Lemma \ref{lem:walk-bound} to \emph{Schur complements} of the input graph $G$. In all sections before Section \ref{sec:fix-preliminaries} --- including this one --- one does not need to understand how the Schur complement is constructed or its linear algebraic properties. We just use the following combinatorial, folklore fact about Schur complements:

\begin{theorem}\label{thm:schur-walk}
For a graph $G$ and a set of vertices $X\subseteq V(G)$, there is a graph $H = \texttt{Schur}(G,X)$ with $V(H) = X$ such that given a vertex $v_0\in X$, the distribution over random walk transcripts\footnote{List of all visited vertices} in $H$ starting at $v_0$ is the same as the distribution over random walk transcripts in $G$ starting at $v_0$ with all vertices outside of $X$ removed.
\end{theorem}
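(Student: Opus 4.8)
The plan is to construct $\texttt{Schur}(G,X)$ by eliminating the vertices of $F := V(G)\setminus X$ one at a time, and to prove the transcript equivalence by induction on $|F|$; it then suffices to handle a single elimination, say of $f\in F$, and compose.

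For the single-vertex step, let $c_f=\sum_{w\sim f}c_{fw}$ be the weighted degree of $f$ and define $G'$ to be the graph on $V(G)\setminus\{f\}$ with conductances $c'_{uv}=c_{uv}+c_{uf}c_{fv}/c_f$ for $u\neq v$ (the star-mesh / Kron reduction rule, with $c_{uv}=0$ when $uv\notin E(G)$). I would first record the standard linear-algebra fact that this is a genuine weighted graph whose Laplacian is exactly the Schur complement $L_{\ast\ast}-L_{\ast f}L_{ff}^{-1}L_{f\ast}$ of $L_G$ onto $\ast:=V(G)\setminus\{f\}$: the off-diagonal entries $-c'_{uv}$ are nonpositive since each $c'_{uv}\ge 0$, and because $L_G$ has zero row sums so does this Schur complement, so it is the Laplacian of a graph, namely $G'$, with $c'_u:=\sum_{v}c'_{uv}=c_u-c_{uf}^2/c_f$ matching its diagonal. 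Iterating over all of $F$ therefore yields a graph $\texttt{Schur}(G,X)$ on vertex set $X$ whose Laplacian is the usual Schur complement onto $X$.

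The heart of the proof is the transcript equivalence for one elimination. Fix $u\in V(G')$ and run a random walk on $G$ currently at $u$. By the strong Markov property, the walk's path until it next reaches $V(G)\setminus\{f\}$ splits into i.i.d.\ ``rounds'': in a round the walk either steps directly to some $v\neq f$, or steps to $f$ and then out of $f$; a round ``fails'' (and restarts) exactly when it returns to $u$, which has probability $\rho:=(c_{uf}/c_u)(c_{uf}/c_f)$, and otherwise ``succeeds'' at a new vertex $v\neq u,f$ with per-round probability $c_{uv}/c_u+(c_{uf}/c_u)(c_{fv}/c_f)$. Summing the geometric series over the number of failed rounds,
$$\Pr[\text{next vertex outside }f\text{ is }v]=\frac{c_{uv}/c_u+c_{uf}c_{fv}/(c_u c_f)}{1-\rho}=\frac{c_{uv}+c_{uf}c_{fv}/c_f}{c_u-c_{uf}^2/c_f}=\frac{c'_{uv}}{c'_u},$$
which is precisely the one-step transition probability of the random walk on $G'$. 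Since the start vertex $v_0\in X$ is untouched by the elimination, this identifies the law of the $G'$-transcript with that of the $G$-transcript with all occurrences of $f$ deleted; composing over the vertices of $F$ gives the theorem.

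The one point requiring care is the meaning of ``transcript with vertices outside $X$ removed.'' Deleting $f$ from a pattern $\dots,u,f,u,\dots$ leaves two consecutive copies of $u$, whereas a random walk transcript on $G'$ never repeats a vertex, so the correct reading --- and exactly what the geometric-series computation establishes --- is that the two transcripts agree after additionally collapsing each maximal run of a repeated vertex to a single occurrence (equivalently, one records only the successive distinct positions). This is harmless for the paper's purposes: collapsing such runs changes neither the set of vertices visited nor the order of first visits, hence not the tree produced by Aldous--Broder (Theorem~\ref{thm:aldous-broder}). I would state the theorem with this convention; apart from this, and from ruling out degenerate cases such as $f$ having a single neighbor (which do not arise once the relevant graphs are assumed connected), the argument is routine bookkeeping and I do not expect a substantive obstacle.
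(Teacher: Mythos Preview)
The paper does not prove this theorem; it is introduced in Section~\ref{sec:fake-overview} as a ``combinatorial, folklore fact about Schur complements'' and restated without proof as Theorem~\ref{thm:schur-walk-equiv} in the preliminaries. Your argument is the standard proof: reduce to eliminating a single vertex, use the star-mesh (Kron) formula, and verify that the first-return-to-$V(G)\setminus\{f\}$ transition probabilities match via a geometric series. This is correct.

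The subtlety you flag about collapsing consecutive duplicates is real and your treatment of it is appropriate. The Laplacian-based Schur complement of Definition~\ref{def:schur} does not encode self-loops, so strictly speaking the transcript equality holds only after collapsing maximal runs; as you note, this is immaterial for first-visit applications like Aldous--Broder, which is precisely how the paper uses the statement. An equivalent resolution is to retain the self-loop of conductance $c_{uf}^2/c_f$ at each neighbor $u$ when eliminating $f$: then the total degree at $u$ stays at $c_u$, the one-step probabilities are $c'_{uv}/c_u$ and $c_{uf}^2/(c_u c_f)$ for $v\neq u$ and $v=u$ respectively, and the transcripts match literally without collapsing. Since self-loops are invisible to the Laplacian, this graph has the same Laplacian as the one you construct, so both readings are consistent with the paper's definition.
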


For a shortcutter $S_u$, consider the graph $H = \texttt{Schur}(G,(V(G)\setminus S_u)\cup \{u\})$. Each use of the shortcutter $S_u$ can be charged to crossing of at least one edge incident with $u$ in the random walk on $H$ by Theorem \ref{thm:schur-walk}. Therefore, to bound the number of times a shortcutter $S_u$ is used over the course of the algorithm, it suffices to bound the total conductance of edges between $u$ and $V(H)\setminus \{u\}$ in $H$. This motivates one of the key properties of shortcutters, which we call \emph{conductivity} in later sections:

\begin{keyidea}[Shortcutter Schur complement conductance bound]\label{keyidea:conductance}
Let $S_C$ denote a shortcutter for which $S_u^{(i)} = S_C$ for all $u\in C$.Then ``on average,'' the total conductance of the edges between $C$ and $V(G)\setminus S_C$ in the graph $\texttt{Schur}(G,C\cup (V(G)\setminus S_C))$, which we call the \emph{Schur complement conductance} of $S_C$, is at most $\frac{m^{o(1)}}{\R{i}}$. $\alpha$ is the ratio between the maximum and minimum effective resistance distance between any two vertices in $G$ and $r_{min}$ is the minimum effective resistance distance between any two vertices in $G$.
\end{keyidea}

By ``on average,'' we mean that the shortcutters $S_C$ are organized into $m^{o(1)}$ sets and within each set $\mc C$, the total Schur complement conductance of the shortcutters $S_C\in \mc C$ is at most $\frac{m^{o(1)}|\mc C|}{\R{i}}$. For the rest of this section, we think of \emph{each} shortcutter as having Schur complement conductance at most $\frac{m^{o(1)}}{\R{i}}$ in order to simplify the description. For a more formal description of the organization of our shortcutters, see Section \ref{sec:real-overview}.

To bound the number of times $S_C$ is used, Lemma \ref{lem:walk-bound} requires two things:

\introstartspace

\begin{compactenum}[(1)]
\item [\textbf{Sufficient properties for bounding shortcutter uses}]
\item A bound on the Schur complement conductance of $S_C$ \label{property:conductance}
\item A bound on the effective resistance distance to the nearest unvisited vertex outside of $S_C$ \label{property:distance}
\end{compactenum}

\introendspace

The Schur complement conductance is at most $\frac{m^{o(1)}}{\R{i}}$ by conductivity. Therefore, we just need to bound the distance to the nearest unvisited vertex. If there was an unvisited vertex within effective resistance distance $\R{i+1} m^{o(1)}$ of $C$, Lemma \ref{lem:walk-bound} would imply that $S_C$ is only used $$\left(\R{i+1}m^{o(1)}\right)\left(\frac{m^{o(1)}}{\R{i}}\right) = m^{o(1)}\alpha^{1/(\sigma_0+1)}$$ times over the course of the shortcutted random walk. To bound the total work done, the following fact suffices:

\begin{keyidea}[Shortcutter overlap]\label{keyidea:overlap}
Each vertex in $G$ is in at most $m^{o(1)}$ shortcutters.
\end{keyidea}

The above idea implies that the total size of all shortcutters is $O(m^{1 + o(1)})$. To use a shortcutter, we apply the online shortcutting method, which takes time proportional to the shortcutter's size (see Table \ref{tab:shortcutting-methods}). If each shortcutter is used at most $m^{o(1)}\alpha^{1/(\sigma_0+1)}$ times as described above, the total work due to all shortcutter uses is $(m^{1+o(1)})(m^{o(1)}\alpha^{1/(\sigma_0+1)})\le m^{1+o(1)}\alpha^{o(1)}$, as desired.

Therefore, if we can obtain shortcutters with bounds on (\ref{property:conductance}) and (\ref{property:distance}) that also respect Key Idea \ref{keyidea:overlap}, we would have an almost-linear time algorithm for sampling random spanning trees on weighted graphs.

\subsection{Obtaining shortcutters with Property (\ref{property:conductance}) and small overlap}

We have not yet discussed how to actually obtain shortcutters with the desired conductance property. We discuss this in detail in Section \ref{sec:compute-shortcutters}, but we give a summary here for interested readers. We construct $\{S_v^{(i)}\}_{v\in V(G)}$ for each $i$ independently by

\introstartspace

\begin{compactitem}
\item constructing a small number of families of sets that are each well-separated in the effective resistance metric, have distance separation roughly $\R{i}$, and together cover the graph. These are the \emph{cores} and the construction of these cores is similar to constructions of sparse covers of metric spaces (for example \cite{AP90}).
\item making the shortcutter around each core $C$ be the set of vertices $S_C\subseteq V(G)$ for which a random walk starting at $x\in S_C$ is more likely to hit $C$ before any other core in its family. The sparsity of the cover ensures that the shortcutters satisfy Key Idea \ref{keyidea:overlap}, while their well-separatedness ensures Property (\ref{property:conductance}).
\end{compactitem}

\subsection{Obtaining Property (\ref{property:distance}) using partial sampling and carving}

We now show that when $S_u^{(i)}$ is used, there is an unvisited vertex with effective resistance distance at most $m^{o(1)} \R{i+1}$ from $u$. Review the shortcutting meta-algorithm. When $S_u^{(i^*)}$ is used, there is some vertex $v\in S_u^{(i^*+1)}$ that the random walk has not yet visited. $v$, however, may not be close to $u$. This motivates the following property of a shortcutter $S_u^{(i)}$, which we call \emph{being carved with respect to (a set of vertices) $S$}:

\begin{keyidea}[Carving]\label{keyidea:carving}
A shortcutter $S_u^{(i)}$ is \emph{carved with respect to $S\subseteq V(G)$} if $S_u^{(i)}\cap S$ only consists of vertices that are within effective resistance distance $m^{o(1)}\R{i}$ of $u$.
\end{keyidea}

If $S_u^{(i^*+1)}$ is carved with respect to $V(G)$, then the unvisited vertex $v$ is within distance $m^{o(1)}\R{i+1}$ of $u$. As a result, there is an unvisited vertex within distance $m^{o(1)}\R{i+1}$ of $u$, as desired.

\begin{figure}[H]
\begin{center}
\includegraphics[width=0.8\textwidth]{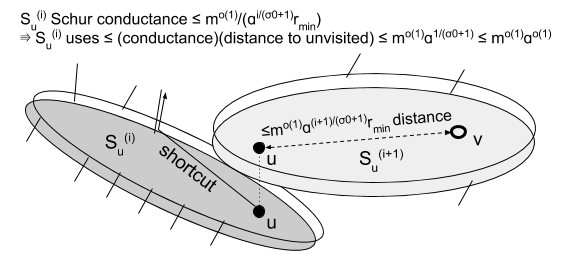}
\end{center}
\caption{Bounding the number of times $S_u^{(i)}$ is used.}
\label{fig:shortcutter-conductivity-fake}
\end{figure}

It is difficult to directly build shortcutters to make them carved with respect to some set $S$. Instead, we explore how we could remove vertices from shortcutters so that \emph{all} shortcutters for \emph{all} vertices are carved with respect to $V(G)$. To carve $V(G)$ out of all shortcutters $S_u^{(i)}$, one could just remove all vertices in $S_u^{(i)}$ that are farther than resistance distance $m^{o(1)}\R{i}$ away from $u$. Unfortunately, this could remove almost all of the shortcutter in general.

Instead, we compute a \emph{partial sample} of a random spanning tree in order to make it so that each shortcutter does not have to be carved with respect to as many vertices. Specifically, we modify the simple shortcutting meta-algorithm as follows:

\introstartspace

\begin{compactitem}
    \item[\textbf{Full shortcutting meta-algorithm (one round of partial sampling)}]
    \item \highlight{\textbf{Choose} a set $S\subseteq V(G)$ for partial sampling}
    \item For each $v\in V(G)$, let $S_v^{(0)} = \{v\}$ and \textbf{pick} shortcutters $\{S_v^{(i)}\}_{i=1}^{\sigma_0}$
    \item Pick an arbitrary vertex $u_0$ and set $u\gets u_0$.
    \item Until all vertices in \highlight{$S$} have been visited
    \begin{compactitem}
        \item Let $i^*\in \{0,1,\hdots,\sigma_0\}$ be the maximum value of $i$ for which all vertices in \highlight{$S\cap S_u^{(i)}$} have been visited
        \item \textbf{Sample} the first edge that the random walk starting at $u$ uses to exit $S_u^{(i^*)}$
        \item Replace $u$ with the non-$S_u^{(i^*)}$ endpoint of this edge.
    \end{compactitem}
    \item \parbox{0.8\textwidth}{\highlight{Let $T'$ be} all edges used to visit each vertex besides $u_0$ \highlight{in $S$} for the first time \highlight{that are in the induced subgraph $F := E(G[S])$}} 
    \item \highlight{\parbox{0.8\textwidth}{Condition on the partial sample, which amounts to contracting all edges in $E(T')\cap F$ and deleting all edges of $F\setminus E(T')$ in $G$}}
\end{compactitem}

\introendspace

\cite{MST15} also exploited partial sampling in this way. This algorithm correctly samples the intersection of a random spanning tree of $G$ with $E(G[S])$ because it does not skip any of the first visits to vertices in $S$ and only vertices in $S$ need to be visited in order to determine the edges in $G[S]$ that are in the sample. While this algorithm no longer samples the entirety of a tree, we only need shortcutters to be carved with respect to $S$ rather than all of $V(G)$ in order to show that the total work is $m^{1+o(1)}\alpha^{o(1)}$.

Our algorithm and \cite{MST15} exploit the full meta-algorithm in multiple rounds. During each round, we pick a set $S$ to condition on, run the meta-algorithm, and repeat until $G$ is a single vertex. At this point, we have sampled a complete spanning tree of $G$.

We want to choose $S$ to be small enough so that every shortcutter can be carved with respect to $S$ without increasing the Schur complement conductance of those shortcutters too much. As long as all shortcutters are carved with respect to $S$, the meta-algorithm takes $m^{1+o(1)}\alpha^{o(1)}$ time. However, we also want $S$ to be large enough to make substantial progress.

When $\sigma_0 = 1$, let $S$ be the set of vertices $u$ assigned to the largest shortcutters. By Key Idea \ref{keyidea:overlap}, there cannot be too many large shortcutters, which means that $S$ is the union of a small number of clusters with small effective resistance diameter ($\sqrt{\alpha}r_{min}$). Deleting each cluster from each shortcutter $S_C$ with a far-away core $C$ makes $S_C$ carved with respect to $S$. Furthermore, because the cluster was well-separated from $C$, its deletion did not increase the Schur complement conductance of $S_C$ much. For more details on this analysis, see Section \ref{subsec:cond-warmup}.

\subsection{Bounding the number of rounds of partial sampling when $\sigma_0 = 1$}\label{subsec:one-level-progress-disc}

In the previous section for the $\sigma_0 = 1$ case, we saw that conditioning on the induced subgraph of the vertices assigned to the largest shortcutters was a good idea for carving. We now show that computing a partial sample for the induced subgraph of these vertices allows us to make substantial progress. Ideally, one could show that conditioning on the induced subgraph of vertices assigned to the largest shortcutters decreases the size of the graph by a constant fraction. Unfortunately, we do not know how to establish this in general.

To get around this, we do not rebuild shortcutters from scratch after each partial sampling round. Instead, we show that it is possible to make shortcutters $S_u^{(i)'}$ that are contained within the shortcutter $S_u^{(i)}$ from the previous round. It is quite tricky to do this directly, as conditioning on a partial sample can change the metric structure of the graph $G$. In particular, the conductance of a shortcutter could dramatically increase after conditioning.

To cope with this, we show a concentration inequality that promises the existence of a small set of edges with high probability that, when deleted, restore the conductance of all shortcutters back to their value before conditioning. This result follows from a nontrivial generalization (Lemma \ref{lem:fast-fix}) of Lemma \ref{lem:special-slow-fix} that we reduce to in Section \ref{sec:fix-reduction}, prove in Section \ref{sec:slow-fix}, and find a fast algorithm for in Section \ref{sec:fast-fix}.

Given that $S_u^{(i)'}$ is contained in $S_u^{(i)}$ for all $u\in V(G)$ and all $i\in [\sigma_0]$, conditioning on the vertices with the near-largest shortcutters decreases the maximum size of a remaining shortcutter by a large factor. Therefore, after $O(\log n)$ rounds of conditioning on the induced subgraph of the vertices assigned to the largest shortcutters, no shortcutters are left. At this point, the algorithm is done. Each round takes $m^{1+o(1)}\sqrt{\alpha}$ time in the $\sigma_0 = 1$ case for a total of $\tilde{O}(m^{1+o(1)}\sqrt{\alpha})$ runtime.

\subsection{Carving and progress when $\sigma_0 > 1$}

The bottleneck in the algorithm for the $\sigma_0 = 1$ case is the sampling step. As discussed in Section \ref{subsec:meta-algo}, using more shortcutters allows us to approximate the set of previously visited vertices better, leading to a better runtime. In particular, the runtime-bounding argument presented earlier, given a carving and conductance bound, shows that using $\sigma_0$ shortcutters yields an $m^{1+o(1)}\alpha^{1/(\sigma_0+1)}$-runtime algorithm for sampling a spanning tree.

Unfortunately, carving shortcutters is more complicated when $\sigma_0 > 1$. We need to pick a relatively large set of vertices that can be carved out of \emph{all} shortcutters for \emph{all} vertices simultaneously. To do this, one could start by trying to generalize the strategy in the $\sigma_0 = 1$ case through repetition. Specifically, one could try the following strategy for picking a set $S$ for use in one round of the meta-algorithm with partial sampling:

\introstartspace

\begin{compactitem}
\item [\textbf{First attempt at conditioning when $\sigma_0 > 1$}]
\item $S_{\sigma_0+1}\gets V(G)$
\item For $i = \sigma_0,\sigma_0-1,\hdots,1$
    \begin{compactitem}
    \item $S_i\gets $ the vertices $u\in S_{i+1}$ with near-maximum size $S_u^{(i)}$ shortcutters; that is within a factor of $m^{-1/\sigma_1}$ of the maximum.
    \end{compactitem}
\item Let $S\gets S_1$ 
\end{compactitem}

\introendspace

This strategy has some benefits. If $S_1$ could be carved out of all shortcutters, the maximum size of $S_v^{(1)}$ shortcutters for vertices $v\in S_2$ would decrease by a factor of $m^{-1/\sigma_1}$.

Before moving on, we elaborate on how conditioning on the induced subgraph of vertices assigned to a shortcutter renders it unnecessary in the future. Start by refining all cores of all $S_v^{(i)}$ shortcutters to obtain $\sigma_0$ partitions $\{\mc P_i\}_{i=1}^{\sigma_0}$ of $V(G)$, with one for each $i\in [\sigma_0]$. Standard ball-growing (for example \cite{LR99}) ensures that the total conductance of all boundary edges of parts in $\mc P_i$ is at most $m^{1+o(1)}/(\R{i})$. Conditioning on the induced subgraph of a part $P$ deletes or contracts all edges in the induced subgraph of $P$, only leaving $P$'s boundary. Since $P$'s boundary is small, the random walk never needs to use $P$'s shortcutter again because the total number of steps across $\mc P_i$ boundary edges is at most $$\frac{m^{1+o(1)}}{\R{i}}\R{i+1}\le m^{1+o(1)}\alpha^{o(1)}$$ where the $\R{i+1}$ bound follows from carving. Therefore, conditioning on a part replaces it with its boundary, thus rendering its shortcutter unnecessary.

Now, we go back to analyzing our first attempt at a $\sigma_0 > 1$ algorithm for selecting $S$. If we could always carve all shortcutters with respect to $S_1$, conditioning $\sigma_1$ times on various $S_1$s would make all shortcutters for $\mc P_1$ parts intersecting $S_2$ irrelevant, thus making it possible to condition on $S_2$ directly. More generally, if carving were not an issue, every $\sigma_1$ rounds of conditioning on $S_i$ would pave the way for one round of conditioning on $S_{i+1}$. Combining this reasoning for all $i$ implies that we have sampled the entire tree after $\sigma_1^{\sigma_0}$ applications of the meta-algorithm.

Unfortunately, our first attempt does not produce carvable sets $S$ in general because there could be a very large shortcutter with core just outside of some $S_i$ that happens to intersect many of the vertices in $S_i$. To cope with this, we incorporate a ball-growing type approach that switches to conditioning on this very large but nearby shortcutter if one exists. Once this procedure stops, one can carve the parts assigned to the selected shortcutters out of all other shortcutters because the selected shortcutters are larger than all other shortcutters that the selected parts intersect. For more details, see Section \ref{sec:choose-parts}.

\subsection{Coping with the fixing lemma in the shortcutting method}

In Section \ref{subsec:one-level-progress-disc}, we established that we could obtain containment of shortcutters in past shortcutters if we deleted a small set of ``fixing edges'' from the graph. However, we cannot actually delete these edges from the graph, as we must do partial sampling in graphs resulting directly from conditioning in order to correctly sample a uniformly random spanning tree.

Instead of deleting these fixing edges from the graph, we just remove their endpoints from the shortcutters and use \emph{offline} shortcutting \cite{MST15,KM09} to make it so that shortcutting to the endpoints of these fixing edges only takes constant time rather than time proportional to the size of the shortcutter. Since there are a small number of fixing edges, the preprocessing time for offline shortcutting is small. Each shortcut to the endpoints of a removed edge takes $\tilde{O}(1)$ time and can be charged to crossing a boundary edge of some core. Constructing the cores using standard ball-growing makes the total conductance of these boundary edges small, so Lemma \ref{lem:walk-bound} can be used to show that the number of such shortcutting steps is small.

While this completes the high-level description of our algorithm, it does not describe all of the contributions of this paper. Along the way, we prove many new results about effective resistance metrics (like Lemma \ref{lem:well-sep-lev-score}) that may be of independent interest.

\newpage

\setcounter{tocdepth}{2}
\tableofcontents

\newpage

\section{Preliminaries}\label{sec:preliminaries}

\subsubsection{Graphs}

For a (directed or undirected) graph $G$, let $V(G)$ and $E(G)$ denote its vertex and edge set respectively. $n$ and $m$ refer to the number of vertices and edges respectively of the input graph to our random spanning tree generation algorithm. For a set of edges $F\subseteq E(G)$, $G\setminus F$ denotes the graph obtained by deleting the edges in $F$ from $G$. For a set of vertices $S\subseteq V(G)$, let $G/S$ be the graph obtained by \emph{identifying} all vertices in $S$ to a single vertex; that is

$$V(G/S) := (V(G)\setminus S)\cup \{s\}$$

and each endpoint of an edge $e\in E(G)$ that is also in $S$ is replaced with $s$. For a set of edges $F\subseteq E(G)$, let $V(F)$ denote the set of endpoints of edges in $F$. For an edge $f\in E(G)$, let $G\setminus f$ and $G/f$ denote the graph obtained by deleting and contracting $f$ respectively.

For two sets of vertices $S,S'\in V(G)$, let $E_G(S,S')$ denote the set of edges with one endpoint in $S$ and the other endpoint in $S'$. Let $G[S] := E_G(S) := E_G(S,S)$. When the graph $G$ is clear from context, we omit it from the subscript. For a graph $G$ with two sets $X,Y\subseteq V(G)$, let $G/(X,Y)$ denote the graph obtained by identifying all vertices in $X$ to one vertex $x$ and all vertices in $Y$ to one vertex $y$.

For a set of vertices $S\in V(G)$, let $\partial_G S := E_G(S,V(G)\setminus S)$ denote the boundary edges of $S$. For a singleton set $S = \{w\}$, $\partial_G S$ is abbreviated $\partial_G w$.

In this paper, graphs are sometimes weighted with \emph{conductances} $\{c_e\}_{e\in E(G)}$. For a set $F\subseteq E(G)$, let $c^G(F) := \sum_{e\in F} c_e^G$. Let $r_e^G = 1/c_e^G$. Let $\beta^G := (\max_{e\in E(G)} r_e^G)/(\min_{e\in E(G)} r_e^G)$. When the context is clear, the graph $G$ is omitted from all superscripts in the aforementioned definitions.

\subsubsection{Laplacian matrices, electrical flows, and effective resistances}

For an undirected graph $G$ with two vertices $s,t\in V(G)$, let $b_{st}\in \mathbb{R}^{V(G)}$ denote the vector with $b_s = 1$, $b_t = -1$, and $b_v = 0$ for $v\ne s,t$. Direct all of the edges of $G$ arbitrarily. Suppose that $e = \{a,b\}$ and is directed from $a$ to $b$. Define $b_e := b_{ab}$. Define the \emph{Laplacian matrix} of a weighted graph $G$ as

$$\sum_{e\in E(G)} c_e b_e b_e^T$$

This definition is invariant of the orientations of the edges. $L_G$ has nontrivial kernel, but still has a Moore-Penrose pseudoinverse $L_G^+$. The vector $L_G^+ b_{st}$ is a vector of \emph{potentials} for the \emph{electrical flow} $C_G B_G L_G^+ b_{st}$, where $B_G$ is the $|E(G)|\times |V(G)|$ matrix with rows equal to the vectors $b_e$ for $e\in G$ and $C_G$ is the $|E(G)|\times |E(G)|$ diagonal matrix of edge conductances.

The \emph{effective resistance} between two vertices $s,t\in V(G)$ is the energy of the electrical flow from $s$ to $t$, which equivalently is

$$\texttt{Reff}_G(s,t) := b_{st}^T L_G^+ b_{st}$$

For an edge $e = \{a,b\}\in E(G)$, let $\texttt{Reff}_G(e) := \texttt{Reff}_G(a,b)$. We use the following folklore fact about effective resistances extensively without reference:

\begin{remark}
The vertex set of $V(G)$ is a metric space with respect to the metric $\texttt{Reff}_G$. In particular, $\texttt{Reff}_G$ satisfies the triangle inequality; i.e.

$$\texttt{Reff}_G(s,t)\le \texttt{Reff}_G(s,w) + \texttt{Reff}_G(w,t)$$

for any three vertices $s,t,w\in V(G)$.
\end{remark}

For a set $S\in V(G)$, define its \emph{effective resistance diameter} to be

$$\max_{u,v\in S} \texttt{Reff}_G(u,v)$$

Often, for clarity, we call this the $G$-effective resistance diameter of $S$. Let $r_{min} := \min_{u,v\in V(G)} \texttt{Reff}_G(u,v)$, $r_{max} := \max_{u,v\in V(G)} \texttt{Reff}_G(u,v)$, and $\alpha = \frac{r_{max}}{r_{min}}$. Notice that $\beta\le \alpha\le m^2 \beta$. Therefore, to obtain an $m^{1 + o(1)}\beta^{o(1)}$-time algorithm, it suffices to obtain an $m^{1 + o(1)}\alpha^{o(1)}$ time algorithm.

\subsubsection{Laplacian solvers}

In this paper, we make extensive use of efficient approximate Laplacian solvers \cite{ST14,KMP14,KOSZ13,CKMPPRX14,PS14,LPS15,KS16}:

\begin{theorem}[\cite{CKMPPRX14}]
There is an $O(m\sqrt{\log n} \log(n/\ep))$ time algorithm, that, given a demand vector $d\in \mathbb{R}^{V(G)}$ for some graph $G$, computes a vector $p\in \mathbb{R}^{V(G)}$ such that

$$||p - L_G^+ d||_{\infty}\le \ep$$

with high probability.
\end{theorem}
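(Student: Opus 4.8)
\medskip
\noindent\textbf{Proof plan.}
The plan is to first reduce the $\ell_\infty$ guarantee to a relative guarantee in the energy norm $\|y\|_{L_G} := \sqrt{y^T L_G y}$, and then construct the solver by the recursive-preconditioning paradigm of \cite{ST14,KMP14}. For the reduction I would use that $\|y\|_\infty^2 \le \|y\|_2^2 \le \|y\|_{L_G}^2 / \lambda$, where $\lambda$ is the smallest nonzero eigenvalue of $L_G$ and is polynomially bounded below; hence producing $p$ with $\|p - L_G^+ d\|_{L_G} \le \delta \|L_G^+ d\|_{L_G}$ for $\delta = 1/\mathrm{poly}(n/\ep)$ already forces $\|p - L_G^+ d\|_\infty \le \ep$, and a constant-factor energy-norm solver can be boosted to relative error $\delta$ by $O(\log(1/\delta)) = O(\log(n/\ep))$ rounds of (preconditioned) Chebyshev/Richardson iteration. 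So it suffices to give a solver with a \emph{constant}-factor energy-norm guarantee running in $O(m\sqrt{\log n})$ time up to $\mathrm{polyloglog}$ factors.

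For that core solver the key steps, in order, are: (i) compute a low-stretch spanning tree $T$ of $G$ with $\sum_{e} \mathrm{st}_T(e) = O(m \log n \log\log n)$ in nearly-linear time (Abraham and Neiman); (ii) build a preconditioner $H$ that keeps $T$ and includes each off-tree edge $e$ independently with probability proportional to $\mathrm{st}_T(e)(\log n)/\kappa$ (rescaling the included edges), so that a matrix-Chernoff bound gives $L_H \preceq L_G \preceq O(\kappa)\, L_H$ with high probability while $|E(H)| = n - 1 + \tilde{O}(m/\kappa)$; (iii) exactly eliminate all degree-$1$ and degree-$2$ vertices of $H$ (partial Cholesky / greedy elimination), which is error-free and leaves a graph $H'$ on only $\tilde{O}(m/\kappa)$ vertices; (iv) recurse on $H'$ to form a chain $G = A_0, A_1, \dots, A_\ell$ of such preconditioners; and (v) solve $L_G p = d$ by preconditioned Chebyshev iteration, where each of the $O(\sqrt{\kappa})$ iterations costs one multiply by $L_G$ ($O(m)$ time) plus one approximate solve against $L_H$, and the $L_H$-solve is serviced recursively down the chain through the elimination of (iii). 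The non-obvious part is tuning $\kappa$ at every level and the chain depth $\ell$ so that the product of the per-level iteration counts, weighted by the per-level sizes, is $O(\sqrt{\log n})$ rather than $O(\log n)$; this forces $\ell = \Theta(\log\log n)$ and a careful telescoping of the total work, which is exactly the refinement of \cite{CKMPPRX14} over \cite{KMP14}.

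I expect the main obstacle to be error propagation through the recursion: since the inner solves against $L_H$ (hence $L_{H'}$, and so on) are themselves only approximate, step (v) must use a form of preconditioned Chebyshev iteration that stays stable when the preconditioner is applied inexactly, so that errors are not amplified geometrically with the depth $\ell = \Theta(\log\log n)$; one then has to budget the relative accuracy across all levels while simultaneously meeting the sparsity constraint $|E(A_{i+1})| = \tilde{O}(|E(A_i)|/\kappa_i)$, the chain's spectral condition-number constraints, and the total-work constraint --- getting all three to hold at once with the $\sqrt{\log n}$ bound is the delicate calculation. A smaller but necessary technical point is making the $\ell_\infty$-to-energy-norm reduction rigorous, which requires an explicit polynomial lower bound on the smallest nonzero eigenvalue of $L_G$ (and an upper bound on $\|d\|_2$ relative to the chosen normalization); both follow from standard estimates but must be written out to justify the clean $\log(n/\ep)$ dependence in the runtime.
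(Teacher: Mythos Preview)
This theorem is not proved in the paper; it is quoted as a black box from \cite{CKMPPRX14} in the preliminaries and used throughout without any argument. So there is no ``paper's own proof'' to compare against --- the paper simply cites the result and moves on.

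Your plan is a reasonable high-level sketch of the recursive-preconditioning approach that \cite{CKMPPRX14} (building on \cite{ST14,KMP14}) actually uses, and you have correctly identified the two nontrivial points: the stability of inexact preconditioned iteration through $\Theta(\log\log n)$ levels of recursion, and the parameter-balancing needed to get $\sqrt{\log n}$ rather than $\log n$. But since the present paper neither proves nor re-derives this, the appropriate ``proof'' here is simply a citation, not a reconstruction of the solver.
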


\subsubsection{Random walks}

For a weighted graph $G$ and some vertex $a\in V(G)$, let $\Pr_a[E]$ denote the probability of an event $E$ over random walks starting at $a$. For a set of vertices $S\in V(G)$, let $t_S$ be the random variable denoting the hitting time to the set $S$. When $S$ is a singleton $\{b\}$, we abbreviate $t_S$ as $t_b$.

We use the following fact about random walks extensively:

\begin{theorem}[Proposition 2.2 of \cite{MR16}]\label{thm:edge-visits}
Let $G$ be a graph with conductances $\{c_e\}_e$. Consider two vertices $s,t\in V(G)$. For a vertex $u$, let $p_u = b_{st}^T L_G^+ b_{ut}$. Consider an edge $e = \{u,v\}$. Then

$$p_u c_e = \textbf{E}_s[\text{ number of times $e$ is crossed from $u\rightarrow v$ before } t_t]$$
\end{theorem}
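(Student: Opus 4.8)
The statement is the classical identity relating electrical potentials to the Green's function of a random walk killed at $t$. I would prove it in three moves: (i) rewrite $p_u$ as a potential difference; (ii) strip the edge $e$ off, reducing to a statement about expected \emph{vertex} visits; (iii) prove the Green's-function/potential identity by a short linear-algebra computation with $L_G$. Throughout, assume $G$ is connected (otherwise pass to the component of $s$; if $t$ lies in another component the statement is degenerate), so that a walk from $s$ hits $t$ almost surely and $\mathbf{E}_s[t_t]<\infty$.

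\textbf{Step (i): identify $p_u$.} Since $L_G^+$ is symmetric, $p_u = b_{st}^T L_G^+ b_{ut} = (L_G^+ b_{st})^T b_{ut} = \phi_u - \phi_t$, where $\phi := L_G^+ b_{st}$ is the potential vector of the unit electrical $s$--$t$ flow; equivalently, $\phi$ is the unique vector with $L_G\phi = b_{st}$ and $\mathbf{1}^T\phi = 0$ (using $b_{st}\perp\mathbf{1}$ and $L_G^+\mathbf{1}=0$). So the claim reduces to $(\phi_u-\phi_t)\,c_e = \mathbf{E}_s[\#\{\text{traversals of } e \text{ from } u \text{ to } v \text{ before } t_t\}]$.

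\textbf{Step (ii): strip off the edge.} Write $c_u := \sum_{w:\{u,w\}\in E(G)} c_{uw} = c(\partial_G u)$ for the weighted degree, and let $g_u := \mathbf{E}_s[\#\{k\ge 0: X_k=u,\ k<t_t\}]$ be the Green's function of the walk $(X_k)$ from $s$ killed at $t$; note $\sum_u g_u = \mathbf{E}_s[t_t]<\infty$, so each $g_u$ is finite and $g_t=0$. If $u=t$ both sides vanish, so take $u\ne t$. Writing the number of $u\to v$ traversals before $t_t$ as $\sum_{k\ge1}\mathbf{1}[\,t_t\ge k,\ X_{k-1}=u,\ X_k=v\,]$, taking expectations term by term (monotone convergence), and applying the Markov property at time $k-1$ — the event $\{t_t\ge k,\ X_{k-1}=u\}$ depends only on $X_0,\dots,X_{k-1}$, and $\Pr[X_k=v\mid X_{k-1}=u]=c_e/c_u$ — gives
\[
\mathbf{E}_s[\#\{u\to v \text{ before } t_t\}] \;=\; \frac{c_e}{c_u}\sum_{k\ge1}\Pr[\,t_t\ge k,\ X_{k-1}=u\,] \;=\; \frac{c_e}{c_u}\,g_u .
\]
It remains to prove $g_u = c_u(\phi_u-\phi_t)$.

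\textbf{Step (iii): Green's function equals the potential.} Let $P$ be the killed transition matrix, $P_{uw}=c_{uw}/c_u$ for $u\ne t$ (with $c_{uw}=0$ if $u\not\sim w$) and the $t$-row zero; for connected $G$, $P$ restricted to $V(G)\setminus\{t\}$ is sub-stochastic with spectral radius $<1$, so $g(s,\cdot)=e_s^T(I-P)^{-1}$ on $V(G)\setminus\{t\}$, i.e. $\sum_u g_u(\delta_{uw}-P_{uw})=\delta_{sw}$ for every $w\ne t$ (including $u=t$ in the sum is harmless since $g_t=0$). Substitute $h_u:=g_u/c_u$; using $c_u P_{uw}=c_{uw}=c_{wu}$ and $c_w=\sum_{w'\sim w}c_{ww'}$, the left-hand side becomes $c_w h_w-\sum_u c_{wu}h_u=\sum_{u\sim w}c_{wu}(h_w-h_u)=(L_G h)_w$, so $(L_G h)_w=\delta_{sw}$ for all $w\ne t$. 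Since $\mathbf{1}^T L_G=0$ forces $\sum_w(L_G h)_w=0$, we get $(L_G h)_t=-1$, hence $L_G h=b_{st}$. Both $h$ and $\phi$ solve $L_G x=b_{st}$, so $h-\phi\in\ker L_G=\mathrm{span}(\mathbf{1})$, and $h_t=g_t/c_t=0$ pins the constant: $h_u=\phi_u-\phi_t$, i.e. $g_u=c_u(\phi_u-\phi_t)$. Combining,
\[
\mathbf{E}_s[\#\{u\to v \text{ before } t_t\}] = \frac{c_e}{c_u}\,g_u = \frac{c_e}{c_u}\cdot c_u(\phi_u-\phi_t) = c_e(\phi_u-\phi_t) = p_u c_e .
\]

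\textbf{Main obstacle.} There is no deep difficulty — the argument assembles standard facts — but the step needing the most care is (iii): one must be precise that $P$ restricted to $V(G)\setminus\{t\}$ is genuinely sub-stochastic (so $(I-P)^{-1}$ exists), and must correctly locate the single additive constant relating the solution $h$ of $L_G h = b_{st}$ to the pseudoinverse solution $\phi$, which is exactly what the boundary value $h_t=0$ (i.e. $g_t=0$) supplies. The minor points are the degenerate cases ($u=t$, or $s,t$ in different components) and the finiteness $\mathbf{E}_s[t_t]<\infty$ on finite connected graphs, which guarantees $g_u<\infty$.
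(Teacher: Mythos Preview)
The paper does not prove this statement; it is quoted as Proposition~2.2 of \cite{MR16} in the preliminaries and used as a black box. Your argument is the standard derivation (Green's function of the killed walk satisfies $L_G h = b_{st}$, fix the additive constant via $h_t=0$) and is correct, so there is nothing to compare against.
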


\subsubsection{Low-dimensional embedding of the effective resistance metric}

Throughout this paper, we use the fact that the effective resistance metric can be embedded into low-dimensional Euclidean space:

\begin{theorem}[\cite{SS08}]\label{thm:jl}
With high probability, one can compute an embedding $D:V(G)\rightarrow \mathbb{R}^d$ of the vertices of a graph $G$ with $d\le (\log n)/\ep^2$ for which

$$||D(u) - D(v)||_2^2\in [(1-\ep)\texttt{Reff}(u,v),(1+\ep)\texttt{Reff}(u,v)]$$

in near-linear time. Furthermore, for each vertex $u\in V(G)$, $D(u)$ takes $O((\log n)/\ep^2)$ time to compute.
\end{theorem}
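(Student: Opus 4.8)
The plan is to follow the by-now-standard effective-resistance embedding argument of \cite{SS08}: first realize $\texttt{Reff}_G$ exactly as squared Euclidean distance in $\mathbb{R}^m$, then apply the Johnson--Lindenstrauss lemma to compress to $O((\log n)/\ep^2)$ dimensions, and finally observe that the compressed embedding is computable with $O((\log n)/\ep^2)$ near-linear-time Laplacian solves.

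\textbf{Exact realization.} Fix an arbitrary orientation of the edges so that $L_G = B_G^T C_G B_G$. Using $L_G^+ L_G L_G^+ = L_G^+$, write $L_G^+ = (C_G^{1/2} B_G L_G^+)^T (C_G^{1/2} B_G L_G^+)$ and set $M := C_G^{1/2} B_G L_G^+ \in \mathbb{R}^{E(G)\times V(G)}$. Letting $\mathbbm{1}_u \in \mathbb{R}^{V(G)}$ be the indicator vector of $u$, so $b_{uv} = \mathbbm{1}_u - \mathbbm{1}_v$, we get $\|M\mathbbm{1}_u - M\mathbbm{1}_v\|_2^2 = b_{uv}^T L_G^+ b_{uv} = \texttt{Reff}_G(u,v)$ for every pair $u,v$. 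Thus the $m$-dimensional vectors $\{M\mathbbm{1}_u\}_{u\in V(G)}$ already realize the effective-resistance metric as exact squared distances.

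\textbf{Dimension reduction and computation.} Let $Q \in \mathbb{R}^{d\times E(G)}$ be a random $\pm 1/\sqrt{d}$ matrix with $d = O((\log n)/\ep^2)$. By the Johnson--Lindenstrauss lemma applied to the $\binom{n}{2}$ difference vectors $M\mathbbm{1}_u - M\mathbbm{1}_v$, with high probability $\|Q(M\mathbbm{1}_u - M\mathbbm{1}_v)\|_2^2 \in [(1-\ep),(1+\ep)]\,\|M\mathbbm{1}_u - M\mathbbm{1}_v\|_2^2$ for all pairs simultaneously (rescaling $\ep$ by a constant to absorb the $\ep^2$ term coming from squaring the usual $(1\pm\ep)$ norm bound). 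Define $D(u) := Q C_G^{1/2} B_G L_G^+ \mathbbm{1}_u \in \mathbb{R}^d$. To compute this without ever forming $M$: first compute $Y := Q C_G^{1/2} B_G \in \mathbb{R}^{d\times V(G)}$ directly in $O(dm)$ time (as $B_G$ has $O(m)$ nonzeros); then the full $d\times n$ matrix $Y L_G^+$ is obtained by applying $L_G^+$ to each of the $d$ rows of $Y$, i.e.\ $d$ Laplacian solves, each in near-linear time, for a total of $\tilde{O}(m(\log n)/\ep^2)$. Storing this matrix, $D(u)$ is its $u$-th column, read off in $O(d) = O((\log n)/\ep^2)$ time.

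\textbf{Main obstacle.} The only non-routine point is that the Laplacian solver returns an approximate solution (an additive $\ell_\infty$, hence $\ell_2$, error per solve), so the computed $\tilde{D}(u)$ is a small perturbation of $D(u)$. One runs each solve to accuracy $1/\mathrm{poly}(n,\beta)$ and bounds the induced perturbation of every squared distance against $r_{min} = \min_{u\ne v}\texttt{Reff}_G(u,v) \ge 1/\mathrm{poly}(n,\beta)$, so the change is at most an additive $\ep\cdot r_{min} \le \ep\cdot\texttt{Reff}_G(u,v)$; this costs only an extra $O(\log(n\beta))$ factor in the solve time, which is absorbed into ``near-linear'' and into $O((\log n)/\ep^2)$. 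The high-probability guarantee holds jointly with the JL event by a union bound. I expect the rest (verifying the JL failure probability with $d = O((\log n)/\ep^2)$, the sparse-matrix timings) to be entirely routine.
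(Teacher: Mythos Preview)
The paper does not prove this theorem; it is stated in the preliminaries as a known result and attributed to \cite{SS08} without proof. Your sketch is precisely the standard Spielman--Srivastava argument (write $\texttt{Reff}$ as squared distances via $C_G^{1/2}B_GL_G^+$, compress with Johnson--Lindenstrauss, and compute the compressed coordinates with $O((\log n)/\ep^2)$ approximate Laplacian solves), and it is correct, including your handling of the solver error.
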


Throughout this paper, we use $\ep = 1/2$. In most of this paper, we use this result to approximately compute effective resistances. In the appendix and in Section \ref{sec:choose-parts}, we make greater use of this through approximate nearest neighbors. Specifically, we apply \emph{locality-sensitive hashing} for $\ell_2^2$:

\begin{definition}[Locality-sensitive hashing and approximate nearest neighbors\cite{AI06}]

A family $\mc H$ of functions with domain $\mathbb{R}^d$ is called $(R,cR,p_1,p_2)$-\emph{sensitive} if for any $p,q\in \mathbb{R}^d$,

\begin{itemize}
\item If $||p - q||_2^2\le R$, then $\Pr_{h\sim \mc H}[h(q) = h(p)] \ge p_1$.
\item If $||p - q||_2^2\ge cR$, then $\Pr_{h\sim \mc H}[h(q) = h(p)] \le p_2$.
\end{itemize}
\end{definition}

\begin{theorem}[\cite{AI06}]\label{thm:lsh}
For any $R > 0$ and dimension $d$, a $(R,O(c)R,1/n^{1/c},1/n^5)$-sensitive family of hash functions with query time $O(dn^{1/c})$ for $\mathbb{R}^d$ can be computed in almost-linear time.
\end{theorem}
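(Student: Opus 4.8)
This theorem is a re-packaging of the Euclidean locality-sensitive hash family of \cite{AI06}, with its parameters rewritten in terms of squared distances and the amplification step folded into the family; accordingly, the plan is to quote their base family, translate between $\ell_2$ and $\ell_2^2$, amplify by concatenation, and tune constants so that the collision probabilities come out to exactly $1/n^{1/c}$ and $1/n^5$.

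First I would invoke the base family of \cite{AI06}: for every base radius $r_0 > 0$ and $\ell_2$-approximation factor $A > 1$ it provides a hash family $\mc G$ on $\mathbb{R}^d$ whose functions are sampled and evaluated in $n^{o(1)}d$ time (the construction projects onto a random subspace of polylogarithmic dimension and partitions that subspace by balls of radius $\Theta(r_0)$), such that points at $\ell_2$-distance $\le r_0$ collide with probability $\ge q_1$, points at $\ell_2$-distance $\ge A r_0$ collide with probability $\le q_2 < q_1$, and the exponent $\rho_0 := \ln(1/q_1)/\ln(1/q_2)$ is at most $(1+o(1))/A^2$. I would then translate: $\|p-q\|_2^2 \le R$ is $\|p-q\|_2 \le \sqrt R$, and $\|p-q\|_2^2 \ge c'R$ is $\|p-q\|_2 \ge \sqrt{c'}\sqrt R$, so I apply $\mc G$ with $r_0 = \sqrt R$ and $A = \sqrt{c'}$, obtaining base exponent $\rho_0 \le (1+o(1))/c'$. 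Choosing the constant hidden in the far radius $c'R = O(c)R$ large enough makes $\rho_0$ comfortably smaller than $1/(5c)$.

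To amplify, set $h(x) := (g_1(x),\dots,g_k(x))$ for $k := \lceil 5\ln n / \ln(1/q_2)\rceil$ independent draws $g_i \sim \mc G$, and let $\mc H$ be the law of $h$. Points at squared distance $\ge c'R$ collide under $h$ with probability $\le q_2^k \le n^{-5}$; points at squared distance $\le R$ collide with probability $\ge q_1^{k} \ge n^{-5\rho_0 - o(1)} \ge n^{-1/c}$, the slack in $\rho_0 < 1/(5c)$ swallowing the ceiling and the $(1+o(1))$ loss (and the $n^{-o(1)}$ worst-case value of $q_1$ in ball carving). Each $h$ is a tuple of $k = n^{o(1)}$ base hashes, so $\mc H$ is sampled and evaluated in $n^{o(1)}d$ time; the query-time and construction-time bounds then follow from the textbook near-neighbor data structure built from any $(R, O(c)R, 1/n^{1/c}, 1/n^5)$-sensitive family, which keeps $L = n^{1/c}$ tables, answers a query in $L \cdot n^{o(1)} d = O(dn^{1/c})$ time after absorbing the $n^{o(1)}$, and is built in $n \cdot L \cdot n^{o(1)}d = n^{1+o(1)} n^{1/c}d$ time --- almost-linear in the $n^{1/c} = n^{o(1)}$ regime the paper uses.

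The one step with actual content, as opposed to bookkeeping, is the translation combined with the choice of the far-radius constant: it is essential that \cite{AI06}'s family achieves exponent linear in $1/c'$ (equivalently $\approx 1/A^2$ in the $\ell_2$-approximation factor), not the weaker $\approx 1/A = 1/\sqrt{c'}$ of elementary bit-sampling, because only then are a far radius of $O(c)R$ and collision probabilities $1/n^{1/c}$ and $1/n^5$ simultaneously attainable after amplification. Everything after that --- the $q_i^k$ bounds, the $n^{o(1)}d$ evaluation time, and the table counting --- is a one-line computation or union bound; and since the statement is attributed to \cite{AI06}, one may simply cite that paper in place of all of the above.
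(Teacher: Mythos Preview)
Your proposal is correct, and indeed goes well beyond what the paper does: the paper offers no proof of this theorem at all, simply citing \cite{AI06} as the source. Your unpacking of how the stated parameters follow from the Andoni--Indyk construction --- translating between $\ell_2$ and $\ell_2^2$ distances, noting that the exponent $\rho \approx 1/A^2$ in the $\ell_2$ approximation factor becomes $\rho \approx 1/c'$ in the squared-distance approximation factor, and then amplifying by concatenation --- is accurate and is exactly the derivation one would supply if pressed. You correctly identify at the end that the citation alone suffices, which is precisely what the paper does.
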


Locality sensitive hashing can be used to find approximate nearest neighbors in Euclidean space. In particular, by Theorem \ref{thm:jl}, it can be used to find approximate nearest neighbors in effective resistance metrics of graphs with $c\gets \gammaann := \log n$:

\begin{theorem}[Fact 2.7 in \cite{AI06}]\label{thm:ann}
Given a graph $G$ and a set of vertices $S\subseteq V(G)$, there is a data structure $D$ computed by an algorithm $D\gets \PreprocANN(G,S)$ with query algorithm $v'\gets \ANN_D(v)$. $\ANN_D$ takes any vertex $v\in V(G)$ as input and uses the data structure $D$ to return a vertex $v'\in S$ with the following properties with probability at least $1 - 1/n^{10}$:

\begin{itemize}
\item (Closeness) $\texttt{Reff}_G(v,v')\le \min_{u\in S} \gammaann\texttt{Reff}_G(v,u)$
\item (Preprocessing runtime) $\PreprocANN$ takes $\tilde{O}(m)$ time.
\item (Query runtime) $\ANN$ takes $\tilde{O}(1)$ time.
\end{itemize}
\end{theorem}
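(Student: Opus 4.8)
The plan is to prove Theorem~\ref{thm:ann} by composing the two black boxes already stated: the Johnson--Lindenstrauss-type embedding of the effective resistance metric (Theorem~\ref{thm:jl}) and the locality-sensitive hashing family for squared Euclidean distance (Theorem~\ref{thm:lsh}). In $\PreprocANN(G,S)$, first run the algorithm of Theorem~\ref{thm:jl} with $\ep = 1/2$ to obtain, in near-linear time and with high probability, an embedding $D : V(G)\to \mathbb{R}^d$ with $d = O(\log n)$ and $\|D(u)-D(v)\|_2^2 \in [\tfrac12 \texttt{Reff}_G(u,v),\ \tfrac32 \texttt{Reff}_G(u,v)]$ for all $u,v$. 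This reduces the problem to approximate nearest neighbor search, under squared Euclidean distance, in the point set $D(S)\subseteq\mathbb{R}^d$: by the two-sided distortion of $D$, a $c$-approximate nearest neighbor of $D(v)$ in $D(S)$ is a $3c$-approximate nearest neighbor of $v$ in the metric $\texttt{Reff}_G$.

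For the Euclidean ANN structure I would use the standard LSH-based construction of \cite{AI06}. Fix the approximation parameter $c = \Theta(\gammaann) = \Theta(\log n)$, with the hidden constant tuned so that the final blow-up below is at most $\gammaann$. For each radius $R$ in a geometric sequence covering $[r_{min}, r_{max}]$ (so $O(\log\alpha)$ scales), instantiate an $(R, O(c)R, 1/n^{1/c}, 1/n^5)$-sensitive family from Theorem~\ref{thm:lsh} and build $L = \tilde{O}(n^{1/c})$ independent hash tables storing $D(S)$; this packages into a near-neighbor primitive that, whp, returns a point within squared distance $O(c)R$ of the query whenever one lies within squared distance $R$. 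On a query $v$, compute $D(v)$ (time $O((\log n)/\ep^2) = \tilde{O}(1)$ by Theorem~\ref{thm:jl}), then scan the scales $R$ from smallest to largest, querying the primitive at each, and return the first point found. Since $p_2 = n^{-5}$, the expected number of points at squared distance $> O(c)R$ ever examined is at most $|S|\cdot p_2 \cdot L \cdot O(\log\alpha)$, which is negligible; and with $L = \Theta(n^{1/c}\log n)$ tables the failure probability per scale is $1/\text{poly}(n)$, so a union bound over all scales keeps the total failure probability below $n^{-10}$.

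The runtime accounting then closes the argument. A near-neighbor query at a single radius costs $\tilde{O}(d n^{1/c})$ by Theorem~\ref{thm:lsh} plus the negligible scan above, and we run it over $O(\log\alpha)$ scales, so a query costs $\tilde{O}(d n^{1/c}) = \tilde{O}(1)$ because $c = \Theta(\log n)$ makes $n^{1/c} = O(1)$. Preprocessing builds $O(\log\alpha)$ scales of $\tilde{O}(n^{1/c})$ hash tables over $|S|\le n$ points at $\tilde O(d n^{1/c})$ per table, i.e.\ $\tilde{O}(n\, n^{2/c}) = \tilde{O}(n) = \tilde{O}(m)$, which is dominated by the $\tilde{O}(m)$ cost of Theorem~\ref{thm:jl}. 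For the closeness guarantee: if $v'$ is returned at scale $R$ then $\|D(v)-D(v')\|_2^2 \le O(c)R$, while the previous (smaller) scale produced nothing, so no point of $S$ lies within squared distance $\Omega(R)$ of $D(v)$ and hence $\min_{u\in S}\|D(v)-D(u)\|_2^2 \ge \Omega(R)$; chaining with the distortion of $D$ gives $\texttt{Reff}_G(v,v') \le 2\|D(v)-D(v')\|_2^2 \le O(c)R \le O(c)\min_{u\in S}\|D(v)-D(u)\|_2^2 \le O(c)\min_{u\in S}\texttt{Reff}_G(v,u)$, and the constant is absorbed into $c = \Theta(\gammaann)$ so the bound reads $\le \gammaann\min_{u\in S}\texttt{Reff}_G(v,u)$.

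The composition is essentially routine --- it is exactly Fact~2.7 of \cite{AI06} applied through the embedding of Theorem~\ref{thm:jl} --- so there is no deep obstacle. The only points that need care are (i) choosing the LSH amplification parameters (the number $L$ of tables per scale and the approximation factor $c = \gammaann$) so that both preprocessing and per-query cost land inside the stated $\tilde{O}(m)$ and $\tilde{O}(1)$ budgets, which works precisely because $n^{1/\gammaann} = O(1)$; and (ii) bounding the number of distance scales by $O(\log\alpha)$, which is polylogarithmic for the instances to which this data structure is applied, and in any case can be reduced to $O(\log n)$ by the standard reduction of collapsing near-coincident points and capping far distances before building the scales.
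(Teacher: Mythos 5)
Your proposal is correct, and it is exactly the argument the paper intends: the paper does not prove Theorem~\ref{thm:ann} itself but cites it as Fact~2.7 of \cite{AI06}, noting in one sentence that it follows by composing the embedding of Theorem~\ref{thm:jl} with the LSH family of Theorem~\ref{thm:lsh} at approximation factor $c = \gammaann = \log n$ (so that $n^{1/c} = O(1)$). Your write-up simply fleshes out that standard composition, including the scale reduction and the failure-probability amplification, so there is nothing further to add.
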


\subsubsection{Basic facts about random spanning trees}

Let $T\sim G$ denote the distribution over spanning trees of $G$ with each tree selected with probability proportional to $\prod_{e\in E(T)} c_e$. The following shows that conditioning on a partial sample is equivalent to modifying the input graph:

\begin{theorem}[\cite{MST15}]\label{thm:conditioning}
Consider a graph $G$ and a set of edges $F\subseteq E(G)$. Fix a spanning tree $T_0$ of $G$ and let $F_0 := E(T_0)\cap F$. Obtain a graph $H$ of $G$ by contracting all edges in $F_0$ and deleting all edges in $F\setminus F_0$. Then

$$\Pr_{T\sim G}[T = T_0 | E(T)\cap F = F_0] = \Pr_{T'\sim H}[T' = T_0/F_0]$$ 
\end{theorem}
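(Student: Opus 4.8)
The plan is to reduce the claimed identity to the Matrix-Tree / Burton--Pemantle style formula for the probability that a random spanning tree of $G$ equals a fixed tree $T_0$, namely $\Pr_{T\sim G}[T = T_0] = \frac{\prod_{e\in E(T_0)} c_e}{\sum_{T} \prod_{e\in E(T)} c_e}$, where the sum ranges over all spanning trees of $G$. Writing $Z(G) := \sum_{T} \prod_{e\in E(T)} c_e$ for the weighted spanning-tree partition function, the left-hand side of the statement is, by the definition of conditional probability,
\begin{equation*}
\Pr_{T\sim G}[T = T_0 \mid E(T)\cap F = F_0] = \frac{\Pr_{T\sim G}[T = T_0]}{\Pr_{T\sim G}[E(T)\cap F = F_0]} = \frac{\prod_{e\in E(T_0)} c_e \, / \, Z(G)}{\Pr_{T\sim G}[E(T)\cap F = F_0]},
\end{equation*}
valid because $T_0$ is a spanning tree of $G$ with $E(T_0)\cap F = F_0$, so the event $\{T = T_0\}$ is contained in the conditioning event. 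So the task is to compute the denominator $\Pr_{T\sim G}[E(T)\cap F = F_0]$ and to show it equals $\big(\prod_{e\in F_0} c_e\big) Z(H) / Z(G)$, after which everything telescopes to $\Pr_{T'\sim H}[T' = T_0/F_0] = \prod_{e\in E(T_0)\setminus F_0} c_e / Z(H)$ (here I use that $E(T_0/F_0) = E(T_0)\setminus F_0$ since contracting the tree edges $F_0$ just removes them, and the weights of the surviving edges are unchanged under deletion/contraction).

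The key step is therefore the bijective/weight-preserving correspondence between spanning trees $T$ of $G$ with $E(T)\cap F = F_0$ and spanning trees $T'$ of $H$. I would argue this as follows. Since $F_0\subseteq E(T_0)$ and $T_0$ is a tree, $F_0$ is a forest (no cycles), so contracting $F_0$ is well-defined and $H = (G/F_0)\setminus(F\setminus F_0)$ has $V(H) = V(G/F_0)$. The map $T \mapsto T/F_0 = T \setminus F_0$ (as an edge set) sends any spanning tree $T$ of $G$ containing $F_0$ and disjoint from $F\setminus F_0$ to a spanning tree of $H$: contracting a subset of the tree's edges keeps it a spanning tree of the contracted graph, and none of its edges lie in $F\setminus F_0$ so it survives in $H$. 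Conversely, given a spanning tree $T'$ of $H$, the edge set $E(T')\cup F_0$ is a spanning tree of $G$ (reversing the contraction: adding back the forest $F_0$ to a spanning tree of $G/F_0$ yields a spanning tree of $G$), it contains $F_0$, and it is disjoint from $F\setminus F_0$ since those edges are absent from $H$ and from $F_0$. These two maps are mutually inverse, and crucially the multiset of edge weights satisfies $\prod_{e\in E(T)} c_e = \big(\prod_{e\in F_0} c_e\big)\cdot \prod_{e\in E(T')} c_e$ since the weights of edges are not altered by deletion or contraction. Summing over all such $T$ gives $\sum_{T : E(T)\cap F = F_0} \prod_{e\in E(T)} c_e = \big(\prod_{e\in F_0} c_e\big) Z(H)$, hence $\Pr_{T\sim G}[E(T)\cap F = F_0] = \big(\prod_{e\in F_0} c_e\big) Z(H)/Z(G)$.

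Plugging this into the displayed quotient above, the $Z(G)$ factors cancel and we get
\begin{equation*}
\Pr_{T\sim G}[T = T_0 \mid E(T)\cap F = F_0] = \frac{\prod_{e\in E(T_0)} c_e}{\big(\prod_{e\in F_0} c_e\big) Z(H)} = \frac{\prod_{e\in E(T_0)\setminus F_0} c_e}{Z(H)} = \Pr_{T'\sim H}[T' = T_0/F_0],
\end{equation*}
which is exactly the claim. I expect the main obstacle to be stating the deletion/contraction bijection cleanly enough that the weight-preservation is manifest and edge-cases are covered — in particular being careful that $F_0$ is genuinely a forest (so the contraction introduces no self-loops or multi-edges that would spuriously change $Z$), that parallel edges created by contraction are handled consistently on both sides, and that the ``add back $F_0$'' inverse map really does land on spanning trees of $G$ avoiding $F\setminus F_0$. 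None of this is deep, but it is the part that must be written with care; the probability bookkeeping is then immediate from the Matrix-Tree weight formula. As a sanity check one can verify the identity holds after summing both sides over all $T_0$ consistent with a fixed $F_0$ (both give $1$), which it does by the computation above.
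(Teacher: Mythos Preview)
Your proof is correct and is the standard bijective argument for this fact. The paper does not actually prove this statement; it is quoted as Theorem~\ref{thm:conditioning} in the preliminaries and attributed to \cite{MST15}, so there is no ``paper's own proof'' to compare against. Your partition-function computation and the weight-preserving bijection $T\mapsto T\setminus F_0$ between $\{T\text{ spanning }G : E(T)\cap F = F_0\}$ and spanning trees of $H$ are exactly how one would prove it from scratch, and the edge-cases you flag (self-loops created by contraction, parallel edges) are harmless since self-loops never lie in spanning trees on either side of the bijection.
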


For any set $F\subseteq E(G)$, let $H\sim G[F]$ denote the distribution over minors $H$ of $G$ obtained by sampling a tree $T\sim G$, contracting all edges in $F\cap E(T)$, and deleting all edges in $F\setminus E(T)$. We also use the following folklore fact extensively:

\begin{theorem}\label{thm:edge-marginal}
Consider a graph $G$ and an edge $e\in E(G)$. Then

$$\Pr_{T\sim G}[e\in E(T)] = c_e^G \texttt{Reff}_G(e)$$
\end{theorem}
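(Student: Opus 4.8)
The plan is to express the probability $\Pr_{T\sim G}[e\in E(T)]$ as a ratio of two weighted spanning-tree counts and then use the Matrix-Tree theorem together with the known formula for effective resistance in terms of Laplacian minors. Write $\mathcal{T}(G) := \sum_{T} \prod_{f\in E(T)} c_f$ for the total weight of spanning trees of $G$ (where the sum ranges over spanning trees $T$ of $G$). By definition of the distribution $T\sim G$, the probability that a fixed edge $e$ appears is
$$\Pr_{T\sim G}[e\in E(T)] = \frac{\sum_{T\ni e} \prod_{f\in E(T)} c_f}{\mathcal{T}(G)}.$$
The numerator is the total tree weight of graphs that use $e$; grouping by $T\mapsto T/e$ shows this equals $c_e\,\mathcal{T}(G/e)$, since contracting $e$ puts the spanning trees of $G$ through $e$ in weight-preserving bijection with spanning trees of $G/e$ (the factor $c_e$ accounting for the edge $e$ itself). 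Hence
$$\Pr_{T\sim G}[e\in E(T)] = c_e\,\frac{\mathcal{T}(G/e)}{\mathcal{T}(G)}.$$

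The key step is then to identify $\mathcal{T}(G/e)/\mathcal{T}(G)$ with $\texttt{Reff}_G(e)$. Here I would invoke the weighted Matrix-Tree theorem: for any vertex $v$, $\mathcal{T}(G)$ equals the determinant of the reduced Laplacian $L_G$ with row and column $v$ removed; more symmetrically, $\mathcal{T}(G)$ equals any nonzero cofactor of $L_G$, and $\mathcal{T}(G)\cdot \frac{1}{n}$ relates to $\det^+ L_G$ (the product of nonzero eigenvalues). A clean route: fix $e=\{a,b\}$; contracting $e$ and applying Matrix-Tree to $G/e$ with the merged vertex deleted, versus applying Matrix-Tree to $G$ with both $a$ and $b$'s rows handled appropriately, expresses the ratio $\mathcal{T}(G/e)/\mathcal{T}(G)$ as a ratio of minors of $L_G$. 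The standard identity $b_{ab}^T L_G^+ b_{ab} = \texttt{Reff}_G(a,b)$ together with the cofactor/adjugate relationship for pseudoinverses of Laplacians then yields exactly that this ratio of minors equals $b_{ab}^T L_G^+ b_{ab} = \texttt{Reff}_G(e)$. Combining with the previous display gives $\Pr_{T\sim G}[e\in E(T)] = c_e\texttt{Reff}_G(e)$.

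An alternative, and perhaps cleaner, approach avoids determinants entirely and uses the random-walk characterization: by Theorem \ref{thm:aldous-broder}, a uniform (weighted) spanning tree can be generated by the Aldous-Broder walk, and the probability that a particular edge $e=\{a,b\}$ is the "first-entry" edge into one of its endpoints can be computed from the stationary distribution and escape probabilities of the random walk, which are themselves governed by electrical quantities; alternatively one can use the transfer-current theorem of Burton-Pemantle, which directly states $\Pr_{T\sim G}[e\in E(T)] = c_e\texttt{Reff}_G(e)$ as the diagonal entry of the transfer-current matrix. Since the paper calls this "folklore," I would present the determinant-ratio proof as the self-contained one.

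The main obstacle I anticipate is bookkeeping the Matrix-Tree normalization carefully when passing between $G$ and $G/e$: one must track which vertex is deleted in each reduced Laplacian and ensure the cofactors being compared are taken consistently, so that the ratio genuinely collapses to the $(a,b)$-entry structure of $L_G^+$ rather than picking up spurious factors. Handling the case where $G$ is disconnected (then both sides are zero if $e$ is a bridge... actually $\mathcal{T}(G)=0$ and the statement is vacuous or needs the convention that we restrict to connected $G$) is a minor edge case worth a one-line remark.
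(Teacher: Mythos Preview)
The paper does not actually prove this theorem: it is stated in the preliminaries as a ``folklore fact'' and is used throughout without justification. There is therefore no paper proof to compare against.

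Your proposal is a correct and standard route. The identity $\Pr_{T\sim G}[e\in E(T)] = c_e\,\mathcal{T}(G/e)/\mathcal{T}(G)$ is exactly the contraction bijection you describe, and the identification $\mathcal{T}(G/e)/\mathcal{T}(G) = b_{ab}^T L_G^+ b_{ab}$ is the classical consequence of the Matrix-Tree theorem (often phrased via Cramer's rule on the reduced Laplacian, with the $\{a,b\}$-minor in the numerator and the single-vertex-deleted minor in the denominator). Your caution about consistently choosing which vertex to delete is well placed but routine: deleting vertex $b$ in both $L_G$ and $L_{G/e}$ makes the two determinants directly comparable, and the ratio is then the $(a,a)$ entry of the inverse of the reduced Laplacian, which equals $b_{ab}^T L_G^+ b_{ab}$. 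The alternative via the Burton--Pemantle transfer-current theorem is also fine and is in fact how many references state the result; either would serve here. One small correction to your edge-case remark: if $e$ is a bridge then $\texttt{Reff}_G(e) = r_e = 1/c_e$, so the right-hand side is $1$, matching $\Pr[e\in T]=1$; the vacuous case is when $G$ itself is disconnected, in which case there are no spanning trees and the statement is empty.
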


\subsubsection{Schur complements}

\begin{definition}[Schur complements]\label{def:schur}
The \emph{Schur complement} of a graph $I$ with respect to a subset of its vertices $S\subseteq V(I)$, denoted $\texttt{Schur}(I,S)$, is the weighted graph $J$ with $V(J) = S$ with Laplacian matrix

$$L_J = L_I[S,S] - L_I[S,S^c] L_I[S^c,S^c]^{-1} L_I[S^c,S]$$

where $M[S_0,S_1]$ denotes the submatrix of a matrix $M$ with rows and columns indexed by $S_0$ and $S_1$ respectively.
\end{definition}

In the above definition, it is not immediately clear that $L_J$ is the Laplacian matrix of a graph, but it turns out to be one. Furthermore, the following associativity property holds:

\begin{remark}\label{rmk:schur}
For any two disjoint sets of vertices $S_0,S_1\in V(I)$ for some graph $I$,

$$\texttt{Schur}(\texttt{Schur}(I,S_0\cup S_1),S_0) = \texttt{Schur}(I,S_0)$$
\end{remark}

Also, Schur complements commute with edge deletions and contractions in the kept set $S$:

\begin{remark}[Lemma 4.1 of \cite{CDN89}]\label{rmk:com-schur}
Let $S$ be a set of vertices in a graph $G$ and $f\in E_G(S)$. Then,

$$\texttt{Schur}(G\setminus f,S) = \texttt{Schur}(G,S)\setminus f$$

and

$$\texttt{Schur}(G/f,S) = \texttt{Schur}(G,S)/f$$
\end{remark}

Schur complements also have the following combinatorial property, which is the only property we use of Schur complements before Section \ref{sec:fix-preliminaries}:

\begin{theorem}\label{thm:schur-walk-equiv}
Consider a graph $I$ and some set of vertices $S\subseteq V(I)$. Let $J = \texttt{Schur}(I,S)$. Pick a vertex $v\in S$ and generate two lists of vertices as follows:

\begin{itemize}
\item Do a random walk in $J$ starting at $v$ and write down the list of visited vertices.
\item Do a random walk in $I$ starting at $v$ and write down the list of visited vertices that are also in $S$.
\end{itemize}

These two distributions over lists are identical.
\end{theorem}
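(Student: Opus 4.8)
The plan is to identify both processes as Markov chains on $S$, show they have the same one-step transition matrix and the same initial state $v$, and then conclude that their laws on the space of (finite-or-infinite) vertex lists coincide, since a Markov chain is determined by its initial distribution and transition matrix. Throughout I assume, as is implicit in $\texttt{Schur}(I,S)$ being defined, that every connected component of $I$ meets $S$, so that a random walk in $I$ returns to $S$ almost surely. To set up the traced walk: run a random walk $(X_t)$ in $I$ from $v$, let $\tau_0<\tau_1<\cdots$ be the successive (a.s.\ finite) times at which $X_t\in S$, and note that the list of visited vertices of $I$ that lie in $S$ is exactly $(X_{\tau_0},X_{\tau_1},\dots)$. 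By the strong Markov property of the walk, conditioned on $X_{\tau_j}=u$ the future of the walk, hence the future of the traced subsequence, depends only on $u$; so the traced process is a Markov chain on $S$ started at $v$, and it suffices to compute its transition probabilities and match them against those of the walk on $J$.

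The next step is to reduce to eliminating a single vertex. By Remark \ref{rmk:schur}, $J=\texttt{Schur}(I,S)$ is obtained by eliminating the vertices of $V(I)\setminus S$ one at a time. Correspondingly, for any $w\notin S$, tracing the walk at $S$ is the same as first tracing it at $V(I)\setminus\{w\}$ and then keeping the sub-subsequence of entries lying in $S$; since passing to a subsequence is a measurable function of the trajectory, two processes with the same law have traced subsequences with the same law. Hence, if the theorem holds whenever $|V(I)\setminus S|=1$, it follows in general by induction on $|V(I)\setminus S|$: the walk on $I$ traced at $V(I)\setminus\{w\}$ has the same law as the walk on $\texttt{Schur}(I,V(I)\setminus\{w\})$, to which the inductive hypothesis applies, and composing the two tracings gives the walk on $J$.

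It remains to do the single-vertex computation, which is the arithmetic core. Let $S=V(I)\setminus\{w\}$. From $u\in S$, the traced walk reaches $u'\in S$ next either by stepping directly to $u'$, with probability $c^I_{uu'}/c^I_u$, or by stepping to $w$, with probability $c^I_{uw}/c^I_u$, and then out of $w$ to $u'$, with probability $c^I_{wu'}/c^I_w$ (using that all neighbors of $w$ lie in $S$, so the step after $w$ lands back in $S$). Thus the traced transition probability from $u$ to $u'$ is $\big(c^I_{uu'}+c^I_{uw}c^I_{wu'}/c^I_w\big)/c^I_u$. On the other hand, Definition \ref{def:schur} (a short expansion of the one-vertex Schur complement) gives edge weights $c^J_{uu'}=c^I_{uu'}+c^I_{uw}c^I_{wu'}/c^I_w$ in $J$, and summing over $u'\ne w$ together with $\sum_{u'\ne w}c^I_{wu'}=c^I_w$ yields $c^J_u=c^I_u$. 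Hence the walk on $J$ has transition probability $c^J_{uu'}/c^J_u$ equal to the traced one, and both chains start at $v$; this settles the single-vertex case, and with it the theorem.

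I expect the only delicate points to be (i) the appeal to the strong Markov property needed to see that the traced process is genuinely Markov, and (ii) the bookkeeping that ``trace, then trace again'' equals ``trace once,'' which underpins the reduction to one-vertex elimination; both are conceptually routine but worth stating explicitly. The algebraic heart — the identity $c^J_u=c^I_u$ after one elimination, which makes the two transition probabilities literally equal — is a single line.
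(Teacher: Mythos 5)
The paper itself offers no proof of this statement---it is cited as a folklore fact---so there is nothing to compare against line by line; your route (view the traced walk as the induced Markov chain on $S$ via the strong Markov property, reduce to eliminating one vertex using Remark \ref{rmk:schur} and the fact that ``trace twice equals trace once,'' then match transition probabilities) is the standard one and is structurally sound. The reduction step and the Markov-property step are fine as you describe them.

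However, the ``single line'' at the algebraic heart has a genuine error, and it points at a real subtlety in the statement. With $S=V(I)\setminus\{w\}$, Definition \ref{def:schur} gives $(L_J)_{uu}=c^I_u-(c^I_{uw})^2/c^I_w$, so for the loop-free graph $J$ whose Laplacian this is, the total conductance at $u$ is $c^J_u=\sum_{u'\ne u}c^J_{uu'}=c^I_u-(c^I_{uw})^2/c^I_w$, \emph{not} $c^I_u$. Your identity $c^J_u=c^I_u$ holds only if you count the term $u'=u$, i.e.\ only if you equip $J$ with a self-loop at $u$ of weight $c^I_{uw}c^I_{wu}/c^I_w$ recording the excursion $u\to w\to u$. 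Correspondingly, the two lists in the theorem are \emph{not} literally equal in distribution for the loop-free $J$: take $I$ to be the path $a-w-b$ with unit conductances and $S=\{a,b\}$; the walk on $J$ (a single edge) produces the strictly alternating list $a,b,a,\dots$, whereas the traced walk in $I$ outputs $a,a,\dots$ with probability $1/2$ after the first excursion to $w$. So your claimed equality of one-step transition matrices fails exactly on the diagonal. The fix is either to carry the self-loops through the elimination (in which case your computation is correct verbatim and the induction goes through), or to prove the statement for the lists with consecutive duplicates collapsed, observing that conditioned on the traced walk leaving $u$, it moves to $u'$ with probability $c^J_{uu'}/\bigl(c^I_u-(c^I_{uw})^2/c^I_w\bigr)=c^J_{uu'}/c^J_u$. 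Either repair is short, and neither affects the paper's use of the theorem (which only concerns first visits), but as written your proof asserts a false identity at its key step, and you should state explicitly which convention for $J$ you are proving the theorem under.
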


\newpage

\section{Structure of the Paper and the Proof of Theorem \ref{thm:main-result-aspect}}\label{sec:real-overview}

Now, we formally introduce the concepts that were alluded to in Section \ref{sec:fake-overview}. In the process, we outline the structure of the paper and reduce the main result (Theorem \ref{thm:main-result-aspect}) given the four main components of our algorithm: building shortcutters, selecting vertices to condition on, sampling, and computing a set of fixing edges.

Throughout this section, we use two key parameters: $\sigma_0$ and $\sigma_1$. These parameters should be thought of as distance and shortcutter size-related parameters respectively. While there are other constants (like the $\mu$s, which are all $m^{o(1)}$), these constants are purely determined by proofs in the main sections. Only $\sigma_0$ and $\sigma_1$ are traded off in order to bound the runtime of the main algorithm $\ExactTree$. For more details on parameter values, see Appendix \ref{sec:constants}.

\subsection{Our shortcutting data structure}

Recall that in Section \ref{sec:fake-overview}, we stated that no vertices were in more than $m^{o(1)}$ different shortcutters. Here, we organize the shortcutters into a small number of families of disjoint shortcutters, which we call \emph{clans}, in order to achieve this property.

\linestart

\begin{definition}[Organization of shortcutters]

Consider a graph $H$ obtained as a minor of $G$. A \emph{cluster} is a set of vertices. In our algorithm, there are three kinds of clusters: parts, cores and shortcutters. We define parts in Definition \ref{def:covering-hordes-overlays}. A \emph{core} is an arbitrary cluster. A \emph{shortcutter} is a cluster $S_C$ that contains a core $C$ of vertices that are ``assigned'' to it. A \emph{clan} is a set of (vertex-)disjoint shortcutters. A \emph{horde} is set of clans.

\end{definition}

All hordes in our algorithm satisfy the following invariant:

\begin{invariant}\label{inv:numclans}
A horde $\mc H$ consists of at most $\ellmax\le m^{o(1)}$ clans.
\end{invariant}

\begin{definition}[Covering hordes and overlay partitions]\label{def:covering-hordes-overlays}

A horde $\mc H$ is said to \emph{cover} $H$ if each vertex in $H$ is in the core of some shortcutter in some clan of $\mc H$.

Given a collection of covering hordes $\{\mc H_i\}_{i=1}^{\sigma_0}$, the \emph{overlays} $\mc P_i(\{\mc H_i\}_{i=1}^{\sigma_0})$ are formed by refining all cores of shortcutters from all clans in $\cup_{j\ge i} \mc H_j$. More precisely, let $\chi_i$ denote the equivalence relation formed by letting $u\sim_{\chi_i} v$ if and only if for all clans $\mc C\in \cup_{j\ge i} \mc H_j$, $u$ and $v$ are either (a) both in the same core of $\mc C$ or (b) both not in any core of $\mc C$. Let $\mc P_i(\{\mc H_i\}_{i=1}^{\sigma_0})$ denote the equivalence classes of $\chi_i$.

Since all $\mc H_i$s are covering, each $\mc P_i(\{\mc H_i\}_{i=1}^{\sigma_0})$ is a partition of $V(H)$. A \emph{part} $P$ is some cluster in $\mc P_i(\{\mc H_i\}_{i=1}^{\sigma_0})$ for some $i\in [\sigma_0]$. Each part $P\in \mc P_i(\{\mc H_i\}_{i=1}^{\sigma_0})$ is assigned to a single core $C_P$ of a shortcutter $S_P$ in some clan of $\mc H_i$.

Let $\partial \mc P_i(\{\mc H_i\}_{i=1}^{\sigma_0})$ denote the set of boundary edges of parts in $\mc P_i(\{\mc H_i\}_{i=1}^{\sigma_0})$.
\end{definition}

\lineend

\begin{figure}
\includegraphics[width=1.0\textwidth]{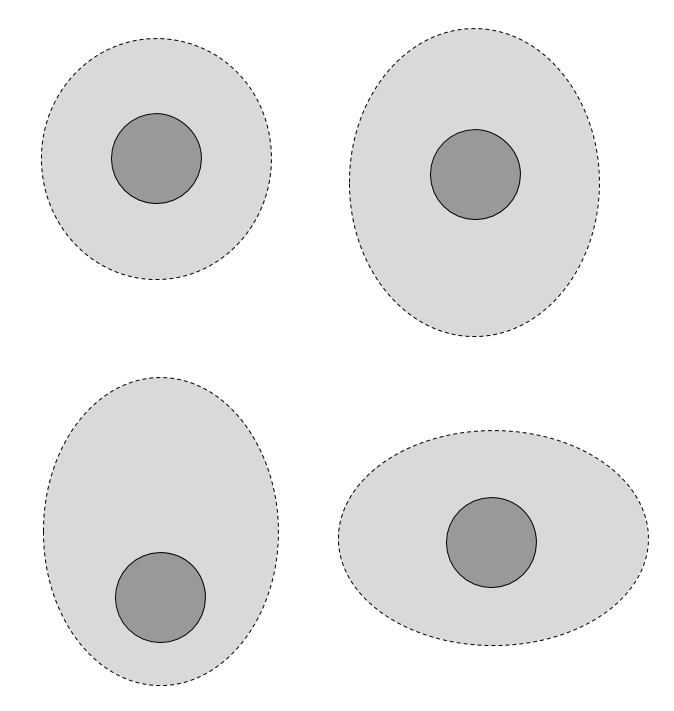}
\caption{Visualizing shortcutters and cores within a clan. Each shortcutter (light gray) contains a core (dark gray) of vertices for whom the shortcutter is usable.}
\label{fig:shortcutters-cores}
\end{figure}

Organizing shortcutters into clans allows us to define properties that hold for shortcutters in a clan ``on average.'' Now, we define various properties that cores, shortcutters, clans, and hordes should have. After defining these properties, we summarize their relevance to bounding the shortcutted random walk simulation time in Table \ref{tab:rw-runtime}.

For each of these definitions, fix a distance scale $R$. We start by insisting that each core consist of closeby vertices.

\begin{definition}[$R$-clans]
Call a clan an \emph{$R$-clan} if each shortcutter's core has $H$-effective resistance diameter at most $R$.
\end{definition}

$R$ may be referred to even if a clan is not an $R$-clan (i.e. the clan may not have bounded diameter cores).

Each clan contains shortcutters that are relatively similar to one another. This way, our analysis of the shortcutting scheme can focus on the clans within a horde independently. Specifically, a clan $\mc C$ is said to be \emph{bucketed} if the maximum size of a shortcutter in $\mc C$ is at most $4m/|\mc C|$.

Inverting this definition suggests a more convenient definition of the size of a clan.

\begin{definition}[Effective size and bucketing]

The \emph{effective size} of a clan $\mc C$, denoted $s_{\mc C}$, is the following:

$$s_{\mc C} := \frac{m}{\max_{S_C\in \mc C} |E(S_C)\cup \partial S_C|}$$

We say that a clan $\mc C$ is \emph{bucketed} if

$$|\mc C|\le 4 s_{\mc C}$$

\end{definition}

Clans also contain shortcutters with the property that using a shortcutter bypasses many random walk steps. Specifically, the \emph{conductance} of a shortcutter is relevant for assessing how many times it is used, as discussed in Section \ref{sec:fake-overview}. For an arbitrary graph $H'$, let $c^{H'}(S_C)$, the conductance of $S_C$ with respect to $H'$, be

$$c^{H'}(S_C) := \sum_{e\in E(C,V(H')\setminus S_C)} c_e^{\texttt{Schur}(H',C\cup (V(H')\setminus S_C))}$$

We define the conductance with respect to $H'$, not $H$, because we need to delete edges from $H$ in order to maintain the condition that $c^{H'}(S_C)$ is low after conditioning. $H'$ will be a graph obtained by deleting some edges $\texttt{deleted}(\mc C)$ from $H$:

\begin{definition}[Deletion set and the deletion set condition]\label{def:del-set-con}
For a clan $\mc C\in \mc E_i$, maintain a set $\texttt{deleted}(\mc C)$ of edges. This set must satisfy the \emph{deletion set condition}, which states that no deleted edge is incident with a nonempty part. Specifically, for any $P\in \mc P_i(\mc E)$ for which $E(P)\ne \emptyset$,

$$\texttt{deleted}(\mc C)\cap (\partial_H P) = \emptyset$$
\end{definition}

The deletion set condition ensures that precomputed random walk steps to the endpoint of a deleted edge cross a boundary edge of some part in $\mc P_i(\mc E)$. We exploit this in Section \ref{sec:random-walk}.

The following condition is used to bound the precomputation work during the shortcutting algorithm:

\begin{definition}[Modifiedness]
We say that a clan $\mc C$ is \emph{$\tau$-modified} if the number of deleted edges is not too high on average:

$$|\texttt{deleted}(\mc C)|\le \tau m^{1/\sigma_1} s_{\mc C}$$
\end{definition}

For a clan $\mc C$, let $H_{\mc C} := H\setminus \texttt{deleted}(\mc C)$. For a shortcutter $S_C\in \mc C$, let $c^{\mc C}(S_C) = c^{H_{\mc C}}(S_C)$.

\begin{definition}[Conductivity]
A clan $\mc C$ is \emph{$\zeta$-conductive} if

$$\sum_{S_C\in \mc C} c^{\mc C}(S_C)\le \frac{\zeta m^{1/\sigma_1} s_{\mc C}}{R}$$
\end{definition}

\begin{figure}
\includegraphics[width=1.0\textwidth]{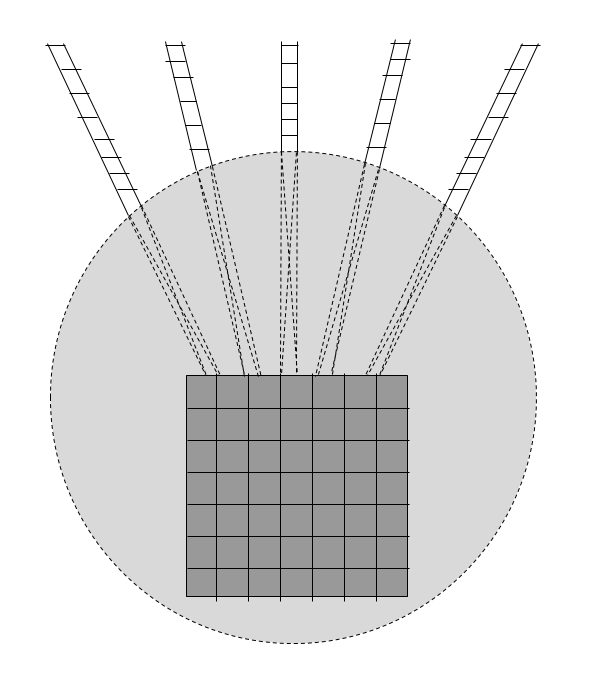}
\caption{The conductivity of the light gray shortcutter is determined by the conductance of the dashed edges, which are obtained by Schur complementing all vertices in the shortcutter with its core (dark gray) removed. The conductance of these edges is relevant for assessing the quality of a shortcutter because (1) doing a random walk on a Schur complement is equivalent to doing a random walk on the original graph and removing all eliminated vertices from the visit list and (2) Lemma \ref{lem:walk-bound}.}
\label{fig:shortcutter-largeness-real}
\end{figure}

The $\zeta$-conductive requirement is a way of saying that shortcutters within a clan are large on average. However, we also need a way of saying that they are not too large. If they are too large, the set of vertices that we are conditioning on may intersect too many shortcutters from another clan. View these vertices as being clustered into a small number of low effective resistance diameter balls and focus on each ball $C'$ one at a time. If $C'$ is close to the core of some shortcutter $S_C$, then $C'$ does not need to be removed from $S_C$ to make $S_C$ carved with respect to $C'$. Therefore, we only need to delete $C'$ from $S_C$ when $C'$ and $C$ are well-separated. This motivates the notion of \emph{ties}.

Consider an $R$-clan $\mc C$ and a shortcutter $S_C\in \mc C$. Let $C'$ be a cluster with $H_{\mc C}$-effective resistance diameter at most $\beta R$ for some $\beta\ge 1$. We say that $C'$ is \emph{tied} to $S_C$ if both of the following hold:

\begin{itemize}
\item (Intersection) $C'$ intersects $S_C$.
\item (Well-separatedness) $\min_{u\in C, v\in C'} \texttt{Reff}_{H_{\mc C}}(u,v) > \beta_0\beta R$, where $\beta_0 = 100$.
\end{itemize}

\begin{definition}[Well-spacedness]
An $R$-clan $\mc C$ is \emph{well-spaced} if no cluster $C'\subseteq V(H)$ is tied to more than one shortcutter $S_C\in \mc C$.
\end{definition}

The lack of ties for well-spaced clusters ensures that deleting $C'$ from all shortcutters in $\mc C$ does not increase the total conductance of shortcutters in $\mc C$ much.

All of the definitions that we have discussed leading up to this are used to show that conditioning once takes at most $O(m^{1 + o(1)}\alpha^{o(1)})$ time. Recall from Section \ref{sec:fake-overview} that sampling the intersection of a random tree with $E(S)$ for some set of vertices $S$ is supposed to allow us to get rid of some shortcutters because the boundary of $S$ is small.

\begin{definition}[Boundedness]
Say that a clan associated with distance scale $R$ is \emph{$\kappa$-bounded} if

$$\sum_{S_C\in \mc C} c^H(\partial C) \le \frac{\kappa m}{R}$$
\end{definition}

We now extend our definitions of clans to hordes. A horde $\mc H$ is an \emph{$R$-horde} if each clan in $\mc H$ is an $R$-clan. A horde $\mc H$ is \emph{bucketed} if each clan in it is bucketed. A horde is \emph{$\tau$-modified} if each clan in it is $\tau$-modified. A horde $\mc H$ is \emph{$\zeta$-conductive} if each clan in it is $\zeta$-conductive. A horde is \emph{well-spaced} if each of its clans are well-spaced. A horde is \emph{$\kappa$-bounded} if each of its clans is $\kappa$-bounded. A horde satisfies the deletion set condition if each of its clans satisfies it.

We now give definitions that are specific to hordes and to collections of hordes. Each vertex needs to have a shortcutter at every distance scale. Since a horde is associated with one distance scale $R$, each vertex should have a shortcutter in each horde. Now, we define a special collection of hordes called an \emph{empire} with which sampling can be performed:

\begin{definition}[Empires]
An \emph{empire} $\mc E$ is a set of covering hordes $\{\mc E_i\}_{i=1}^{\sigma_0}$, with $\mc E_i$ being an $\murad\R{i}$-horde. Define bucketedness, $\tau$-modifiedness, $\zeta$-conductivity, well-spacing, $\kappa$-boundedness, and the deletion set condition for empires as well if these conditions hold for all constituent hordes.
\end{definition}

Now, we show how these properties fit together to bound the runtime of our implementation of the full shortcutting meta-algorithm described in Section \ref{sec:fake-overview}. When the random walk is at a vertex $u$, the meta-algorithm first finds the maximum $i\in [\sigma_0]$ for which the intersection of a shortcutter $S_{P_i}$ with the set $S$ is covered, where $P_i\in \mc P_i(\mc E)$ is the unique part containing $u$. If $E(P_i) = \emptyset$, it does a standard random walk step. Otherwise, it samples whether the random walk hits an endpoint of an edge in $\texttt{deleted}(\mc C)$ before exiting $S_{P_i}$, where $\mc C$ is the clan containining $S_{P_i}$. If so, it uses offline shortcutting to shortcut to $\texttt{deleted}(\mc C)$. Otherwise, it uses online shortcutting to shortcut to the boundary of $S_{P_i}$.

The above discussion cites three kinds of random walk-related work and one kind of precomputation work. To bound the random walk-related work, we exploit Lemma \ref{lem:walk-bound}. Lemma \ref{lem:walk-bound} requires two things: a bound on conductance and a bound on the distance to an unvisited vertex. Work done using a part $P_i$ is charged to the clan containing $S_{P_i}$ as follows:

\begin{table}[H]
\begin{center}
\begin{tabular}{ c | c | c }
    \hline
    Type of work & Edge to charge to & Reason for charging\\ \hline
    Walk across $\partial P_i$ & $P_i$ boundary edge & Normal random walk step\\ \hline
    Shortcut to $\texttt{deleted}(\mc C)$ & $P_i$ boundary edge & Deletion set condition\\ \hline
    Shortcut to $\partial S_{P_i}$ & An edge in $\texttt{Schur}(H_{\mc C}, C_{P_i}\cup V(H)\setminus S_{P_i})$ & Theorem \ref{thm:schur-walk}\\ \hline
\end{tabular}

\vspace{0.2in}

\begin{tabular}{ c | c | c }
    \hline
    Type of work & \parbox{.3\textwidth}{\centering Conductance bound for work charged to a clan $\mc C$} & Distance to unvisited vertex\\ \hline
    Walk across $\partial P_i$ & $\kappa m/(\R{i})$ (boundedness) & $\R{i+1}$ (carving, $R$-clan)\\ \hline
    Shortcut to $\texttt{deleted}(\mc C)$ & $\kappa m/(\R{i})$ (boundedness) & $\R{i+1}$ (carving, $R$-clan)\\ \hline
    Shortcut to $\partial S_{P_i}$ & $\zeta m^{1/\sigma_1} s_{\mc C}/(\R{i})$ (conductivity) & $\R{i+1}$ (carving, $R$-clan)\\ \hline
\end{tabular}

\vspace{0.2in}

\begin{tabular}{ c | c | c | c }
    \hline
    Type of work & Total steps & Work per step & Total work\\ \hline
    Walk across $\partial P_i$ & $\kappa m \alpha^{1/(\sigma_0+1)}$ (Lemma \ref{lem:walk-bound}) & $O(1)$ & $\le m^{1 + o(1)}\alpha^{o(1)}$\\ \hline
    Shortcut to $\texttt{deleted}(\mc C)$ & $\kappa m \alpha^{1/(\sigma_0+1)}$ (Lemma \ref{lem:walk-bound}) & $\tilde{O}(1)$ (offline) & $\le m^{1 + o(1)}\alpha^{o(1)}$\\ \hline
    Shortcut to $\partial S_{P_i}$ & $\zeta m^{1/\sigma_1} s_{\mc C}\alpha^{1/(\sigma_0+1)}$ (Lemma \ref{lem:walk-bound}) & $\tilde{O}(\frac{m}{s_{\mc C}})$ (online) & $\le m^{1+o(1)}\alpha^{o(1)}$\\ \hline
    Precomputation & $\tau m^{1/\sigma_1} s_{\mc C}$ (modifiedness) & $\tilde{O}(\frac{m}{s_{\mc C}})$ & $\le m^{1 + o(1)}$\\ \hline
\end{tabular}
\end{center}
\label{tab:rw-runtime}
\caption{Accounting for work during each sampling round}
\end{table}

This table does not discuss the well-spacedness or bucketing conditions. Well-spacedness is used to bound the conductivity increase due to carving in the proof of Lemma \ref{lem:holes}, while the bucketing condition is used to bound the number of edges added to $\texttt{deleted}(\mc C)$ in the proof of Lemma \ref{lem:clan-fix}.

\subsection{Creating and maintaining shortcutters}

Our algorithm maintains an empire $\mc E$. Before each conditioning phase, it recomputes shortcutters in order to endow them with properties that are lost after one round of conditioning:

\begin{restatable}{lemma}{lemmaintainshortcutters}\label{lem:maintain-shortcutters}
There is an almost-linear time algorithm $\CreateEmpire(\{\mc H_i\}_{i=1}^{\sigma_0})$ that, when given a set of covering hordes $\{\mc H_i\}_{i=1}^{\sigma_0}$ with $\mc H_i$ associated with distance scale $\R{i}$ in a graph $H$, returns an empire $\mc E = \{\mc H_i'\}_{i=1}^{\sigma_0}$ with the following properties:

\begin{itemize}
\item (Bucketing) $\mc E$ is bucketed.
\item (Conductivity) If each horde $\mc H_i$ is $\zeta$-conductive, then $\mc E$ is $(8\log n)\zeta + (16\log n)\muapp$-conductive.
\item (Well-spacedness) $\mc E$ is well-spaced.
\item (Boundedness) If each horde $\mc H_i$ is $\kappa$-bounded, then $\mc E$ is $\kappa + \kappa_0$ bounded, for $\kappa_0\le m^{o(1)}$.
\item (Modifiedness and deletion set condition) If each horde $\mc H_i$ is $\tau$-modified, then $\mc E$ is $\tau$-modified as well. Furthermore, if the deletion set condition is satisfied in each clan of each $\mc H_i$, it continues to be satisfied in $\mc E$.
\item (Clan growth) The number of clans in $\mc E$ is at most $\muapp\log n$ times as high as the number of clans in all of the $\mc H_i$s.
\item (Containment) For any $i\in [\sigma_0]$, consider any part $P\in \mc P_i(\mc E)$. There is a unique part $Q\in \mc P_i(\{\mc H_j\}_j)$ for which $P\subseteq Q$. Furthermore, $C_P\subseteq C_Q$ and $S_P\subseteq S_Q$.
\end{itemize}
\end{restatable}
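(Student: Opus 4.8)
The plan is to build $\mc E$ from $\{\mc H_i\}_{i=1}^{\sigma_0}$ by three successive refinement passes, applied independently to each input horde $\mc H_i$ at its distance scale $\R{i}$, where every pass only \emph{shrinks} clusters — so that all the containment claims hold automatically at the end — and multiplies the clan count only by $O(\log n)$ (bucketing) times $m^{o(1)}$ (the sparse cover), for a total of at most $\muapp\log n$. First, bucket each clan by shortcutter size. Second, re-cluster the cores of each bucket into a bounded number of families that are well-separated in the effective resistance metric, via a sparse-cover decomposition. Third, rebuild the shortcutter around each new core as a hitting-probability Voronoi cell, intersected with the old shortcutter. Since the overlays $\mc P_i(\mc E)$ then refine $\mc P_i(\{\mc H_j\}_j)$ and cores only shrink, the output remains a covering empire, at scales $\murad\R{i}$.

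\textbf{Bucketing and well-spacing.} Split each clan $\mc C$ of $\mc H_i$ into $O(\log n)$ subclans by the geometric size class of its shortcutters (those with $|E(S_C)\cup\partial S_C|\in[2^j,2^{j+1})$ together). The shortcutters in a subclan are vertex-disjoint and each has $\Omega(2^j)$ incident edges, so there are $O(m/2^j)$ of them, which is at most $4 s_{\mc C}$ for that subclan after choosing the geometric ratio appropriately, giving bucketedness. For well-spacing, run a padded sparse-cover decomposition at scale $\R{i}$ in the spirit of \cite{AP90}, made almost-linear time through the embedding of Theorem \ref{thm:jl} and the approximate nearest neighbor structure of Theorem \ref{thm:ann}: each core $C$ is replaced by a subset of $C$ that lies in one of $m^{o(1)}$ new families whose members are pairwise well-separated at scale $\Omega(\R{i})$ (with the separation parameter chosen large relative to the other $m^{o(1)}$ constants). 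Well-spacedness follows because a small-diameter cluster $C'$ that is well-separated from one core of a clan but intersects a second core would force those two cores close together, contradicting the separation of the new families. Containment for cores and parts is then immediate. The deletion set of each subclan is inherited verbatim, so $|\texttt{deleted}(\cdot)|$ does not grow while $s_{\mc C}$ changes by only a constant, preserving $\tau$-modifiedness; the deletion set condition is checked by tracking which refined parts become nonempty (each sits inside a nonempty input part) and verifying that no inherited deleted edge lands on their boundary.

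\textbf{Voronoi shortcutters, conductivity, and boundedness.} For each new core $C$ in a family $\mc F$, set $S_C$ to be the set of vertices from which a random walk hits $C$ before any other core of $\mc F$, intersected with the old shortcutter $S_Q\supseteq C$; this forces $S_P\subseteq S_Q$, completing containment, and the Voronoi cells can be computed approximately via a bounded number of Laplacian solves. Boundedness: $\partial C$ is contained in $\partial Q$ together with the edges cut by the sparse cover, and standard ball-growing at scale $\R{i}$ (as in \cite{LR99}) bounds the total conductance of the new cut edges by $\kappa_0 m/\R{i}$ with $\kappa_0=m^{o(1)}$, giving $(\kappa+\kappa_0)$-boundedness. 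Conductivity is the substantive point: for a core that is $\Omega(\R{i})$-separated from the rest of its family, the Schur complement conductance $c^{\mc C}(S_C)$ of its (restricted) Voronoi cell is at most $m^{o(1)}/\R{i}$ times the number of vertices assigned to it; summing over a clan, adding the input's inherited conductance, and absorbing the $O(\log n)$ families of the sparse cover yields the claimed $(8\log n)\zeta+(16\log n)\muapp$ bound. This is exactly where the new electrical-flow facts — chiefly Lemma \ref{lem:well-sep-lev-score} and the associated level-score/Schur-complement estimates — are invoked, with well-spacedness supplying the "$\Omega(\R{i})$-separated" hypothesis on average across a clan, as in Lemma \ref{lem:holes}.

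\textbf{Main obstacle.} The hard step is the conductivity analysis: bounding the Schur complement conductance of the rebuilt shortcutters \emph{on average within a clan} while simultaneously nesting them in the old shortcutters, keeping each clan's cores well-separated so that both well-spacedness and the conductance estimate apply, and inflating the clan count only by $m^{o(1)}$. Forcing the Voronoi construction and the sparse cover to run in almost-linear time means both must be carried out approximately, via Laplacian solves and the embedding of Theorem \ref{thm:jl}, and reconciling that approximation with the exact combinatorial guarantees demanded by the empire definition is the principal technical burden; the accounting for $\texttt{deleted}(\cdot)$ and the deletion set condition under part refinement is a secondary but genuinely fiddly point.
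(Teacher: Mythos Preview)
Your proposal has the right ingredients---sparse cover on cores, Voronoi shortcutters, bucketing, and Lemma~\ref{lem:well-sep-lev-score} for conductivity---but the order of operations breaks two of the guarantees. First, bucketedness is a constraint on the \emph{final} shortcutter sizes, and your Voronoi step comes \emph{after} bucketing, so the rebuilt shortcutters need not respect the size classes you set up earlier. The paper interleaves bucketing with the Voronoi construction: for each $k$, it runs $\Voronoi$ on the current family, assigns to $\mc C_{jk}$ only those shortcutters of size at most $2^k$, and recurses on the remaining cores with a strictly smaller family $\mc F_{j(k+1)}$. The point is that the previous level certifies each surviving core had a shortcutter of size greater than $2^{k-1}$, so disjointness gives $|\mc C_{jk}|\le m/2^{k-1}\le 2s_{\mc C_{jk}}$. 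A single up-front bucketing of the \emph{input} shortcutters cannot give this.

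Second, your well-spacedness argument is for the wrong object: a cluster $C'$ is \emph{tied} to $S_C$ when it intersects the shortcutter $S_C$ and is well-separated from the \emph{core} $C$---it need not be near any core at all, so ``$C'$ intersects a second core'' is not the obstruction. The paper's argument (Proposition~\ref{prop:well-spaced}) is potential-theoretic: if $C'$ is tied to $S_C\subseteq S_{\mc F'}(1/8,C)$ then $C'\subseteq S_{\mc F'}(1/8+3/10,C)$, and these level sets are disjoint across cores because normalized hitting potentials sum to at most $1$. Relatedly, the paper enforces $S_P\subseteq S_Q$ not by intersecting the Voronoi cell with $S_Q$ but by inserting $V(H)\setminus S_Q$ as an additional cluster in the family passed to $\Voronoi$; this is what makes Proposition~\ref{prop:separated-large} apply directly, with the old shortcutter's conductance appearing as the $\xi$ term and Lemma~\ref{lem:well-sep-lev-score} bounding the $\Delta_{\mc F}$ sum. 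Your ``intersect afterward'' variant would need a separate argument that truncation does not blow up $c^{\mc C}(S_C)$, which is not automatic.
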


Our spanning tree generation algorithm starts by calling $\CreateEmpire$ on the set of hordes consisting of one clan, each of which just contains the one shortcutter $V(G)$. These hordes are clearly covering and have $\zeta = 0$, $\kappa = 0$, and $\tau = 0$. $\CreateEmpire$ is useful to call on the remnants of empires after conditioning later on in order to achieve the containment property. Containment is essential to our notion of progress, as discussed in Section \ref{sec:fake-overview}.

\subsection{Selecting parts to condition on}

Given an empire $\mc E$ with respect to a graph $H$, we can choose a set of vertices $S$ to condition on. The set $S$ is small enough that, when carved out of shortcutters in $\mc E$, does not increase their conductivity too much. The upside of carving is that each vertex in $S$ is close to the core of any shortcutter that it is in. We now define this precisely:

\linestart

\begin{definition}[Active parts and carving]

A part $P$ is called \emph{active} if it has nonempty interior, i.e. $E(P) \ne \emptyset$. A shortcutter is called \emph{active} if any part assigned to it is active.

A shortcutter $S_C$ in an $R$-clan has been \emph{carved} with respect to $S\subseteq V(G)$ if each vertex $v\in S\cap S_C$ is within $H$-effective resistance distance $\mucarve R$ of all vertices in $C$. An $R$-clan $\mc C$ in an empire $\mc E$ has been \emph{carved} with respect to $S$ if all of its active shortcutters have been carved with respect to $S$. An $R$-horde $\mc H$ in an empire $\mc E$ has been \emph{carved} with respect to $S$ if each clan in it has been carved with respect to $S$. An empire $\mc E$ has been \emph{carved} with respect to $S$ if each of its hordes has been carved with respect to $S$.

\end{definition}

\lineend

The routine $\ConditioningVerts$ both (a) selects parts $\mc K$ for conditioning on and (b) removes vertices from the shortcutters of the input empire $\mc E$ in order to ensure that it is carved with respect to $\cup_{P\in \mc K} P$. The $\ConditioningVerts$ subroutine maintains internal state and is the only method that exploits the ``Containment'' guarantee of Lemma \ref{lem:maintain-shortcutters}. The ``Progress'' input condition in the following definition captures the fact that partial sampling eliminates edges in the induced subgraph of the previously chosen parts:

\begin{definition}[$\ConditioningVerts$ input conditions]\label{def:cond-verts-input}
Given an empire $\mc E$ in a graph $H$, the algorithm $\ConditioningVerts(\mc E)$ returns a set of parts $\mc K$ to condition on and removes vertices from the shortcutters in the empire $\mc E$ to obtain $\mc E'$. Let $\mc E_{prev}$ be the argument supplied to the previous call to $\ConditioningVerts$, let $\mc K_{prev} := \ConditioningVerts(\mc E_{prev})$, and let $\mc E_{prev}'$ be the empire $\mc E_{prev}$ after being modified by $\ConditioningVerts$. Let $H_{prev}$ be the graph in which $\mc E_{prev}$ lies. The following conditions are the \emph{input conditions} for $\ConditioningVerts$:

\begin{itemize}
\item (Parameters) $\mc E$ is a bucketed, $\zeta$-conductive, well-spaced, $\tau$-modified, and $\kappa$-bounded empire that satisfies the deletion set condition.
\item (Containment) For any $i\in [\sigma_0]$, consider any part $P\in \mc P_i(\mc E)$. There is a unique part $Q\in \mc P_i(\mc E_{prev})$ for which $P\subseteq Q$. Furthermore, $C_P\subseteq C_Q$ and $S_P\subseteq S_Q$.
\item (Progress) For each $P\in \mc K_{prev}$, $E_H(P) = \emptyset$.
\end{itemize}
\end{definition}

\begin{restatable}{lemma}{lemholes}\label{lem:holes}
Given an empire $\mc E = \{\mc E_i\}_{i=1}^{\sigma_0}$ in a graph $H$ that satisfies the input conditions given in Definition \ref{def:cond-verts-input}, $\ConditioningVerts(\mc E)$ returns a set of parts $\mc K$ to condition on and removes vertices from the shortcutters in the empire $\mc E$ to obtain $\mc E'$. Let $S = \cup_{P\in \mc K} P \subseteq V(H)$. Then the following guarantees are satisfied:

\begin{itemize}
\item (Conductivity) $\mc E'$ is a bucketed, $\tau$-modified, $\zeta + 10(\log m)\muapp(\ellmax + \tau)$-conductive, well-spaced, $\kappa$-bounded empire that satisfies the deletion set condition.
\item (Carving) $\mc E'$ is carved with respect to $S$.
\end{itemize}
\end{restatable}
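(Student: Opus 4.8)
The plan is to first pin down the algorithm $\ConditioningVerts$, then observe that every structural property except conductivity survives carving because carving only shrinks shortcutters, and finally do the real work of bounding the conductivity increase using well-spacedness together with the Schur-complement behaviour of well-separated clusters.

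\textbf{The algorithm and the carving step.} I would have $\ConditioningVerts$ process the hordes $\mc E_{\sigma_0},\mc E_{\sigma_0-1},\dots,\mc E_1$ from the coarsest distance scale down, maintaining a growing collection $\mc K$ of parts. Following the ``first attempt'' of Section~\ref{sec:fake-overview}, at scale $i$ it adds the parts of $\mc P_i(\mc E)$ whose assigned shortcutters have size within an $m^{-1/\sigma_1}$ factor of the current maximum over eligible parts; to defeat the failure mode (a very large shortcutter whose core sits just outside the current selection but which meets many selected vertices) it then runs a ball-growing loop, repeatedly absorbing the part(s) of any shortcutter $S_C$ with core $C$ within $O(\R{i})$ of the current selection and size much larger than what has been chosen, restarting the size comparison after each absorption. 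The controlling size strictly increases among the $O(\log m)$ size classes each time, so only $O(\log m)$ augmentations occur. After $\mc K$ is fixed, set $S=\cup_{P\in\mc K}P$ and carve: for each $\mc E_i$-clan $\mc C$ (distance scale $R=\murad\R{i}$) and each active $S_C\in\mc C$, remove from $S_C$ every $v\in S\cap S_C$ with $\texttt{Reff}_H(v,C)>\mucarve R$. Cores are never touched, since an $R$-clan has cores of $H$-diameter at most $R\le\mucarve R$, so this yields exactly the carving guarantee.

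\textbf{Properties preserved for free.} The one fact used throughout is that carving only \emph{removes non-core vertices from shortcutters}. Hence every $|E(S_C)\cup\partial S_C|$ can only decrease, so every $s_{\mc C}$ can only increase; since $|\mc C|$ and $\texttt{deleted}(\mc C)$ are unchanged, bucketing ($|\mc C|\le 4s_{\mc C}$) and $\tau$-modifiedness ($|\texttt{deleted}(\mc C)|\le\tau m^{1/\sigma_1}s_{\mc C}$) survive. The cores and the sets $\texttt{deleted}(\mc C)$ are unchanged, so the overlays $\mc P_i(\mc E)$ and their boundaries are unchanged and the deletion set condition is unaffected; $\kappa$-boundedness depends only on $\sum_{S_C}c^H(\partial C)$, which is unchanged. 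Finally, a cluster tied to a shortcutter must intersect it, and shrinking shortcutters can only break ties, so well-spacedness is preserved. Thus the only parameter that degrades is conductivity, and all the others come through with the same values.

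\textbf{Conductivity --- the main obstacle.} Fix an $\mc E_i$-clan $\mc C$, $R=\murad\R{i}$; I must bound $\sum_{S_C\in\mc C}\bigl(c^{\mc C}(S_C^{new})-c^{\mc C}(S_C^{old})\bigr)$. Using Schur-complement associativity (Remark~\ref{rmk:schur}), adding a cluster $C'$ back to the kept set of $\texttt{Schur}(H_{\mc C},C\cup(V(H_{\mc C})\setminus S_C))$ and re-eliminating it can only raise the $C$-to-outside conductance, so removing $C'$ from $S_C$ increases $c^{\mc C}(S_C)$ by at most the conductance that edges from $C$ to $C'$ carry in $\texttt{Schur}(H_{\mc C},C\cup(V(H_{\mc C})\setminus S_C)\cup C')$. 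A part $P\in\mc K$ contributes such a term to $S_C$ only when $P$ meets $S_C$ and, since its surviving-in-$S_C$ vertices were carved away, $P$ lies at $H_{\mc C}$-distance more than $\mucarve R$ from $C$; choosing $\mucarve$ larger than $\beta_0$ times the $O(1)$ diameter ratio makes $P$ and $C$ well separated in the sense of the definition of ties, so our Schur-complement conductance bound for well-separated clusters (Lemma~\ref{lem:well-sep-lev-score}) caps each such term by $\muapp/R$. Now well-spacedness enters: a cluster is tied to at most one shortcutter of $\mc C$, so the number of (shortcutter, contributing part) incidences inside $\mc C$ is at most the number of parts of $\mc K$ meeting $\mc C$, and bucketedness (at most $\ellmax$ clans per horde, each with $O(s_{\mc C})$ shortcutters once size classes are matched) together with the modifiedness budget governing the extra parts absorbed by the ball-growing loop bounds this by $O\bigl((\ellmax+\tau)s_{\mc C}\bigr)$. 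Multiplying the $\muapp/R$ per-term bound by this incidence count and by the $O(\log m)$ ball-growing rounds gives a total increase of at most $10(\log m)\muapp(\ellmax+\tau)m^{1/\sigma_1}s_{\mc C}/R$ for $\mc C$, i.e.\ $\mc E'$ is $\zeta+10(\log m)\muapp(\ellmax+\tau)$-conductive. The two places I expect to fight with are (a) fixing $\mucarve$ so that \emph{every} carved-out vertex of $S$ is genuinely beyond the well-separation radius of every core whose shortcutter it was sharing, so Lemma~\ref{lem:well-sep-lev-score} truly applies to every term, and (b) the incidence count --- turning ``$\mc K$ is assembled from parts of a bounded number of clans, plus deleted-edge-driven splitting'' into the clean $O((\ellmax+\tau)s_{\mc C})$ bound, which is what actually determines the stated form of the conductivity loss.
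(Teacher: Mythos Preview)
Your high-level skeleton is right --- carving only shrinks shortcutters, so bucketing, modifiedness, boundedness, well-spacedness, and the deletion set condition survive for free, and conductivity is the only parameter that fights back. But the conductivity argument as written has two concrete gaps, and your algorithm is quite far from what the paper actually runs.

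\textbf{Wrong lemma for the per-removal bound.} You invoke Lemma~\ref{lem:well-sep-lev-score} to cap the conductance increase from removing one well-separated cluster by $\muapp/R$. That lemma bounds a sum of \emph{leverage scores} $\sum \texttt{Reff}(e)/r_e$ over intercluster edges, not a conductance. The paper's per-removal bound comes from Lemma~\ref{lem:small-holes} (via Lemma~\ref{lem:well-sep}): if $C'$ is $\beta_1$-separated from $C$ in $H_{\mc C}$, then $c^{H_{\mc C}}(S_C\setminus C')\le c^{H_{\mc C}}(S_C)+1/((\beta_1-4)R)$. This is packaged as Proposition~\ref{prop:delete-low-rad}, which is what actually drives the conductivity increase.

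\textbf{Missing the $H$ versus $H_{\mc C}$ bridge.} The parts in $\mc K$ sit inside cores, so they have bounded $H$-effective-resistance diameter. But conductivity is defined in $H_{\mc C}=H\setminus\texttt{deleted}(\mc C)$, and deleting $\texttt{deleted}(\mc C)$ can blow up diameters. The paper handles this with Lemma~\ref{lem:ball-split}: the clusters $\mc L$ covering $S$ split into at most $\muapp(|\mc L|+|\texttt{deleted}(\mc C)|)$ clusters of bounded $H_{\mc C}$-diameter. The $|\texttt{deleted}(\mc C)|\le\tau m^{1/\sigma_1}s_{\mc C}$ term is exactly where the $\tau$ in the stated bound $\zeta+10(\log m)\muapp(\ellmax+\tau)$ comes from. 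Your sketch never makes this move, and your ``modifiedness budget governing the extra parts absorbed by the ball-growing loop'' is not the same mechanism.

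\textbf{Algorithm mismatch.} The paper's $\ConditioningVerts$ is not a single top-down sweep with ball-growing. It maintains a \emph{conditioning hierarchy} $(\{\mc Q_i\},\{\mc R_i\})$ across calls, builds \emph{conditioning digraphs} on parts, uses $\TwoLevelConditioningVerts$ to pick a distance index $j^*$ where the size histogram plateaus, and then calls $\MakeNonedgesPermanent$, which (a) splits each clan into $O(\log m)$ size-bucketed subclans and (b) removes only those intersections that correspond to nonedges of the digraph. The proof of Lemma~\ref{lem:holes} itself is then two lines: cite Proposition~\ref{prop:make-nonedges-permanent} for conductivity and Proposition~\ref{prop:carve-hierarchy} for carving. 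The $\ellmax$ in the bound arises from counting cores covering $\mc Q$ via $\DSize$ and the locally-maximum condition, not from ``at most $\ellmax$ clans per horde''. Your ball-growing variant is essentially the paper's ``first attempt'' from Section~\ref{subsec:multiple-levels-intuition}, which it explicitly abandons because the resulting $\mc K$ need not be carvable out of \emph{all} levels simultaneously.
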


\subsection{Making enough progress during each round of conditioning}

In the previous section, we showed that $S$ is small enough to ensure that carving $S$ out of all shortcutters in $\mc E$ does not increase the conductivity of $\mc E$ too much. We now show that $S$ is large enough to make a lot of progress. Specifically, we show the following:

\begin{restatable}{lemma}{lemprogress}\label{lem:progress}
Consider a sequence of calls $\mc K^j\gets \ConditioningVerts(\mc E^j)$ that modifies $\mc E^j$ to obtain $(\mc E^j)'$. Suppose that $H^j$ is the graph in which $\mc E^j$ is defined. Suppose that for each $j > 0$, $\mc E\gets \mc E^j$, $\mc E_{prev}\gets \mc E^{j-1}$, $\mc K_{prev}\gets \mc K^{j-1}$ satisfies the input conditions in Definition \ref{def:cond-verts-input}. Let

$$j_{final} = (2\sigma_1)^{2\sigma_0}$$

Then $E(H^{j_{final}}) = \emptyset$.
\end{restatable}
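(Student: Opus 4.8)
\textbf{Proof proposal for Lemma~\ref{lem:progress}.}
The plan is to attach to each distance scale a progress measure, show that every round of conditioning drives down the smallest scale that is not yet ``finished,'' and then run a nested induction over the $\sigma_0$ scales. Recall the ``first attempt'' construction: $\ConditioningVerts$ keeps, as part of its internal state, a nested chain $V(H) = S_{\sigma_0+1}\supseteq S_{\sigma_0}\supseteq\cdots\supseteq S_1$, where (modulo the ball-growing modification of Section~\ref{sec:choose-parts}) $S_i$ is the union of the parts $P\in\mc P_i(\mc E)$ contained in $S_{i+1}$ whose shortcutter $S_P$ has size $|E(S_P)\cup\partial S_P|$ within a factor $m^{1/\sigma_1}$ of the maximum such size over active parts contained in $S_{i+1}$, and $\mc K$ is exactly the collection of parts in $\mc P_1(\mc E)$ covering $S_1$. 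For each scale $i\in[\sigma_0]$ I would define
\[
\Phi_i \;:=\; \max\{\, |E(S_P)\cup\partial S_P| \;:\; P\in\mc P_i(\mc E),\; P\subseteq S_{i+1},\; E_H(P)\neq\emptyset \,\}
\]
(and $\Phi_i:=1$ when no such $P$ exists); since $1\le\Phi_i\le O(m)$ always, $\Phi_i$ can drop by a factor $m^{1/\sigma_1}$ at most about $\sigma_1$ times before it bottoms out at $1$.

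First I would record two invariants that make the $\Phi_i$ comparable across rounds. By the Containment clause of Definition~\ref{def:cond-verts-input} (which holds by hypothesis and is preserved through $\CreateEmpire$ by Lemma~\ref{lem:maintain-shortcutters}), every part, core, and shortcutter is contained in its counterpart from the previous round; since $\ConditioningVerts$ recomputes the chain $\{S_i\}$ greedily from the current, already-shrunk shortcutters, the maximum shortcutter size at each scale --- hence each $\Phi_i$ --- is nonincreasing from round to round. Second, by the Progress clause, every part $P\in\mc K_{prev}$ conditioned on in the previous round has $E_H(P)=\emptyset$ in the current round; this is the mechanism that converts ``we conditioned on the largest scale-$i$ shortcutters'' into ``those parts are now inactive and drop out of the maximum defining $\Phi_i$.''

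Next comes the one-scale progress claim: if $\Phi_1=\cdots=\Phi_{i-1}=1$ (equivalently, no active part of $\mc P_{<i}(\mc E)$ survives inside the current $S_i$), then within $O(\sigma_1)$ further rounds $\Phi_i$ drops by a factor $m^{1/\sigma_1}$. Once the lower scales are cleared we have $S_{i-1}=\cdots=S_1=S_i$, so $\mc K$ contains every part attaining $\Phi_i$; by the Progress invariant these parts are inactive next round, and by the factor-$m^{1/\sigma_1}$ selection rule every active part of $\mc P_i$ left inside $S_{i+1}$ has a shortcutter a factor $m^{1/\sigma_1}$ smaller --- \emph{unless} the ball-growing step fired, i.e.\ there was an oversized shortcutter of some scale $\le i$ whose core lies just outside $S_i$ but which meets many vertices of $S_i$. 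This is the sole obstruction, and it is exactly where the clean bound $\sigma_1^{\sigma_0}$ degrades to $(2\sigma_1)^{2\sigma_0}$: I would invoke the analysis of Section~\ref{sec:choose-parts} to show that a ball-growing chain has length $O(\sigma_1)$, that conditioning on a link permanently retires the oversized shortcutter that triggered it, and that re-clearing the lower scales inside the enlarged region costs at most another ``clear lower scales'' block per link, so ball-growing inflates the rounds needed for one factor-$m^{1/\sigma_1}$ drop of $\Phi_i$ by a polynomial-in-$\sigma_1$ factor rather than leaving it unchanged.

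Finally I would assemble the induction. Let $N_i$ bound the number of rounds needed, from an arbitrary configuration, to reach $\Phi_1=\cdots=\Phi_i=1$. The one-scale claim, applied $O(\sigma_1)$ times (once per factor-$m^{1/\sigma_1}$ step of $\Phi_i$, each of which may un-clear and force re-clearing of scales $<i$ and may be delayed by a ball-growing chain), yields a recurrence of the form $N_i\le (2\sigma_1)^2\,N_{i-1}$ with base case $N_1\le 2\sigma_1$, which solves to $N_{\sigma_0}\le (2\sigma_1)^{2\sigma_0}=j_{final}$. When $\Phi_1=\cdots=\Phi_{\sigma_0}=1$, every shortcutter in sight is trivial, so the chain collapses to $S_1=\cdots=S_{\sigma_0+1}=V(H)$, whence $\mc K$ is all of $\mc P_1(\mc E)$ and $S=V(H)$; the ensuing conditioning step resolves all of $E(H[V(H)])=E(H)$, so $E(H^{j_{final}})=\emptyset$. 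The main obstacle throughout is controlling the ball-growing case of the one-scale claim: without it the argument is the textbook ``$\sigma_1$ rounds per scale, $\sigma_0$ scales,'' and the entire apparatus of $\ConditioningVerts$ exists precisely to confine ball-growing's cost to a polynomial-in-$\sigma_1$ blow-up per scale, which is what promotes the exponent from $\sigma_0$ to $2\sigma_0$.
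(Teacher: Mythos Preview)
Your proposal has two structural problems. First, you model $\ConditioningVerts$ as the ``first attempt'' nested chain $S_{\sigma_0+1}\supseteq\cdots\supseteq S_1$ with a ball-growing patch bolted on; but the paper explicitly discards that construction (end of Section~\ref{subsec:multiple-levels-intuition}) and replaces it with the conditioning-hierarchy machinery of Sections~\ref{subsec:cond-hierarchies}--\ref{subsec:two-level}. The algorithm maintains sets $\mc Q_i,\mc R_i$ chosen via distances in the conditioning digraph $\CG(\mc E,i{+}1,\mc Q_{i+1},\mc R_{i+1})$, and the progress measure is $s_i=\lfloor\log_{m^{1/\sigma_1}}\max_{P\in\mc R_i}\DSize(\mc E,P)\rfloor$, not your $\Phi_i$ defined via raw shortcutter size over parts in $S_{i+1}$. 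The ``All local maxima'' clause of advanceable hierarchies --- which is what actually forces $s_{i^*}$ to drop after conditioning --- is stated for $\mc R_i$ and $\DSize$, not for your objects, and you never verify an analogue for $\Phi_i$. Second, your monotonicity claim that each $\Phi_i$ is nonincreasing round to round is false, and in fact contradicts your own later concession that advancing scale $i$ ``may un-clear'' scales $<i$: when $S_{i+1}$ (equivalently $\mc R_{i+1}$) is rebuilt, the pool over which $\Phi_i$ maximizes changes and $\Phi_i$ can jump up. The paper says this explicitly in Section~\ref{subsec:multiple-levels-intuition}.

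The paper's proof sidesteps all of this with a single lexicographic argument. Proposition~\ref{prop:cv-well-def} shows the maintained hierarchy stays carvable and advanceable; the proof of Lemma~\ref{lem:progress} then shows each call strictly decreases the word $s_{\sigma_0+1}s_{\sigma_0}\cdots s_1$: either $s_{i^*}>0$ and conditioning on $\mc Q_1$ drops $s_{i^*}$ while Containment holds $s_j$ for $j>i^*$ (lower $s_j$ may rise, which is irrelevant lexicographically), or $s_{i^*}=0$ and conditioning on $\mc Q_{i^*+1}$ drops $s_{i^*+1}$. There is no separate ball-growing case in the progress proof --- that complexity is absorbed once into the definition of $\CG$ and the advanceable property via Proposition~\ref{prop:two-level-holes}. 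Your plan to ``invoke the analysis of Section~\ref{sec:choose-parts}'' for that case is circular, since that is the section containing this lemma, and the clean factoring the paper achieves is exactly what lets the final bound be just the number of possible words.
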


This means that only $(2\sigma_1)^{2\sigma_0}\le o(\log n)$ rounds of conditioning are necessary to sample a random spanning tree.

\subsection{Conditioning on the intersection of a random tree with the selected vertices}

Now that each shortcutter $S_C$ only intersects vertices to condition on that are close to $C$, we can make the idea for using online shortcutting in Section \ref{sec:fake-overview} a reality:

\begin{restatable}{lemma}{lempartialsample}\label{lem:partial-sample}
Let $\mc K \subseteq \cup_{i=1}^{\sigma_0} \mc P_i(\mc E)$ be a set of parts. Let $F = \cup_{P\in \mc K} E(P)$ and $S = \cup_{P\in \mc K} P$. Suppose that the empire $\mc E$ is $\zeta$-conductive, $\kappa$-bounded, $\tau$-modified, satisfies the deletion set condition, and has been carved with respect to $S$. Then, there is an algorithm $\PartialSample(\mc E,\mc K)$ that returns the intersection of a random spanning tree $T$ in $H$ with $F$ in $\tilde{O}(((\zeta + \kappa)\mucarve + \tau)\ellmax m^{1+1/\sigma_1}\alpha^{1/(\sigma_0+1)})$ time.
\end{restatable}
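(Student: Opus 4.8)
The plan is to implement the full shortcutting meta-algorithm from Section~\ref{sec:fake-overview} on the empire $\mc E$ with partial-sampling set $S$, and bound its running time by carefully accounting for every random-walk step and every unit of precomputation work using the table of charges that precedes this lemma. Correctness of the output distribution is immediate: the meta-algorithm does not skip any first visit to a vertex of $S$ (it only shortcuts through already-visited vertices), so by Theorem~\ref{thm:aldous-broder} and Theorem~\ref{thm:conditioning} the returned edge set is distributed as $E(T)\cap F$ for $T\sim H$. The bulk of the work is the runtime bound, which I would organize exactly as the three-part table: (i) normal walk steps across part boundaries, (ii) offline-shortcut steps to endpoints of $\texttt{deleted}(\mc C)$, and (iii) online-shortcut steps to $\partial S_{P_i}$, plus (iv) one-time precomputation for offline shortcutting.

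\textbf{Setup and the key charging argument.} First I would fix, for each $i\in[\sigma_0]$ and each part $P\in\mc P_i(\mc E)$, the shortcutter $S_P$ and core $C_P$ it is assigned to, and recall that when the walk sits at $u\in P$ with $E_H(P)\neq\emptyset$, the algorithm either does a plain step (if $P$ has empty interior), or samples whether the walk reaches $\texttt{deleted}(\mc C)$ before $\partial S_P$ and shortcuts accordingly; online shortcutting costs $\tilde O(|E(S_P)\cup\partial S_P|)=\tilde O(m/s_{\mc C})$ per use and offline costs $\tilde O(1)$ per use after $\tilde O(|\texttt{deleted}(\mc C)|\cdot|E(S_P)\cup\partial S_P|)$-ish preprocessing. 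The heart of the proof is: each use of $S_P$ (either shortcut type) can be charged to a crossing, in the walk transcript restricted to the relevant graph, of an edge for which I have a conductance bound, and the walk at that moment is within $H$-effective-resistance distance $\mucarve\R{i+1}$ of an \emph{unvisited} vertex of $S$. The distance claim is where carving enters: since $S_P$ is carved w.r.t.\ $S$, every vertex of $S\cap S_P$ lies within $\mucarve\R{i}$ of $C_P$; and if $S_P$ is being used at level $i$ then the maximal-level condition forces an unvisited vertex of $S$ inside $S_{P'}$ for the level-$(i{+}1)$ part $P'\supseteq u$ assigned appropriately — using the $\R{i+1}$-clan diameter bound and carving at level $i{+}1$, that unvisited vertex is within $\mucarve\R{i+1}$ of $u$. (For the plain-step and offline cases the charge is to a boundary edge of $P_i$, justified by the deletion-set condition; for online it is to an edge of $\texttt{Schur}(H_{\mc C},C_{P_i}\cup(V(H)\setminus S_{P_i}))$, justified by Theorem~\ref{thm:schur-walk-equiv}.)

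\textbf{Summing via Lemma~\ref{lem:walk-bound}.} With the charging in place I would apply Lemma~\ref{lem:walk-bound} in each of the relevant graphs ($H$ itself for boundary crossings, and the Schur complement $\texttt{Schur}(H_{\mc C},C_{P}\cup(V(H)\setminus S_P))$ for online shortcuts), with $R=\mucarve\R{i+1}$. For a clan $\mc C\in\mc E_i$: boundary-crossing steps total $\tilde O\big(c^H(\partial C)\cdot\mucarve\R{i+1}\big)$ summed over $S_C\in\mc C$, which by $\kappa$-boundedness is $\tilde O(\kappa m \mucarve \alpha^{1/(\sigma_0+1)})$ (and the same bound covers the offline-shortcut steps, by the deletion-set condition); online-shortcut steps total $\tilde O\big(\sum_{S_C\in\mc C} c^{\mc C}(S_C)\cdot\mucarve\R{i+1}\big)$, which by $\zeta$-conductivity is $\tilde O(\zeta m^{1/\sigma_1}s_{\mc C}\mucarve\alpha^{1/(\sigma_0+1)})$, and multiplying by the $\tilde O(m/s_{\mc C})$ cost per online use makes the $s_{\mc C}$ cancel, giving $\tilde O(\zeta \mucarve m^{1+1/\sigma_1}\alpha^{1/(\sigma_0+1)})$ per clan. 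The offline work per clan is similarly $\tilde O(|\texttt{deleted}(\mc C)|\cdot m/s_{\mc C})=\tilde O(\tau m^{1+1/\sigma_1})$ by $\tau$-modifiedness. Finally I would sum over the at most $\ellmax$ clans per horde and the $\sigma_0$ hordes in $\mc E$; absorbing $\sigma_0$ and the logarithmic factors into $\tilde O(\cdot)$ and collecting the $\zeta\mucarve$, $\kappa\mucarve$, and $\tau$ contributions yields the claimed $\tilde O\big(((\zeta+\kappa)\mucarve+\tau)\ellmax m^{1+1/\sigma_1}\alpha^{1/(\sigma_0+1)}\big)$ bound.

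\textbf{Main obstacle.} The delicate point is not the summation but the precise statement and proof of the charging/distance claim — showing that whenever $S_P$ is invoked the walk really is within $\mucarve\R{i+1}$ effective-resistance distance of an unvisited vertex of $S$. This requires juggling three things at once: the meta-algorithm's ``maximal $i$'' rule (which guarantees \emph{some} vertex of $S\cap S_{P_{i+1}}$ is unvisited, where $P_{i+1}$ is the level-$(i{+}1)$ part at $u$), the $\R{i+1}$-clan diameter bound on cores in $\mc E_{i+1}$, and carving at level $i{+}1$ (which places $S\cap S_{P_{i+1}}$ within $\mucarve\R{i+1}$ of $C_{P_{i+1}}$); combined with $u\in P_{i+1}$, the triangle inequality for $\texttt{Reff}_H$ then closes the gap, up to adjusting the constant hidden in $\mucarve$. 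I would also need to double-check that deleting fixing edges (passing from $H$ to $H_{\mc C}$) only \emph{decreases} effective resistances in the wrong direction but is harmless here because conductances only go down and we only need an \emph{upper} bound on distance to an unvisited vertex in $H$, where the online-shortcut conductance is measured in $H_{\mc C}$ — reconciling which graph each quantity lives in is the bookkeeping that most easily goes wrong.
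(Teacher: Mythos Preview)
Your proposal is essentially the paper's proof: correctness via Aldous--Broder plus the observation that only already-visited vertices are skipped, and runtime by the four-way case split (plain boundary steps, offline shortcuts to $\texttt{deleted}(\mc C)$, online shortcuts to $\partial S_{P_i}$, and preprocessing), each bounded by Lemma~\ref{lem:walk-bound} with $R=\mucarve\R{i+1}$ and the corresponding conductance hypothesis, then summed over $\ellmax$ clans.

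One precision worth noting, since you flagged it yourself: the paper does \emph{not} apply Lemma~\ref{lem:walk-bound} in $\texttt{Schur}(H_{\mc C},C_P\cup(V(H)\setminus S_P))$ as you suggest, because the walk runs in $H$, not $H_{\mc C}$. Instead it applies the lemma in $I=\texttt{Schur}(H,\,X_{P_i}\cup C_{P_i}\cup(V(H)\setminus S_{P_i}))$, where $X_{P_i}=V(\texttt{deleted}(\mc C))\cap S_{P_i}$ is kept in the Schur complement. Then it observes that the deleted edges inside $S_{P_i}$ have both endpoints in $X_{P_i}$, so removing them does not change $c^I(E_I(C_{P_i},V(H)\setminus S_{P_i}))$; after that, eliminating $X_{P_i}$ only increases this conductance, landing at $c^{\mc C}(S_{P_i})$. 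This is exactly the ``reconciling which graph each quantity lives in'' bookkeeping you anticipated, and resolving it this way is the only step beyond your outline.
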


\subsection{Fixing shortcutters}

After computing $T\cap F\gets\PartialSample$, contracting all edges in $F\cap T$ in $H$, and deleting all edges in $F\setminus T$ from $H$, $\mc E$ is no longer an empire with respect to $H$. In particular, the well-spacedness, $\zeta$-conductivity, and core diameter conditions break down. Well-spacedness and diameter can be fixed by applying $\CreateEmpire$. However, the $\zeta$-conductivity constraint accumulates over an old value. We could recompute the empire from scratch, but that forgoes the containment property that is so important to establishing progress. We deal with this issue by adding edges to $\texttt{deleted}(\mc C)$ for each clan $\mc C$ in $\mc E$:

\begin{restatable}{lemma}{lemclanfix}\label{lem:clan-fix}
Let $H$ be a graph, $\mc E$ be an empire in $H$ and $\mc K$ be a set of parts. Let $S = \cup_{P\in \mc K} P$ and let $F = \cup_{P\in \mc K} E(P)$. Let $H'\sim H[F]$. Suppose that the following input conditions hold $\mc E$:

\begin{itemize}
\item (Bucketing) The empire $\mc E$ is bucketed.
\item (Carving) $\mc E$ is carved with respect to $S$.
\end{itemize}

With high probability over $H'$, $\FixShortcutters(\mc E, H',\mc K)$ adds edges to the deletion set of each clan of $\mc E$ to obtain a set of covering hordes $\{\mc H_i'\}$ with the following properties:

\begin{itemize}
\item (Boundedness) For each $i$, if $\mc E_i$ is $\kappa$-bounded, then $\mc H_i'$ is $\ell\kappa$-bounded, where $\ell = \sum_{i=1}^{\sigma_0} |\mc E_i|$.
\item (Modifiedness and deletion set condition) For each $i$, if $\mc E_i$ is $\tau$-modified and satisfies the deletion set condition, then $\mc H_i'$ is $\mumod (\tau + \zeta)$-modified and also satisfies the deletion set condition.
\item (Conductivity) For each $i$, if $\mc E_i$ is $\zeta$-conductive with respect to $H$, then $\mc H_i'$ is at most $7\zeta$-conductive with respect to $H'$.
\end{itemize}

Futhermore, it does so in $m^{1 + o(1)}$ time.
\end{restatable}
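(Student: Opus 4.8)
The plan is to treat each clan $\mc C$ of each horde $\mc E_i$ of $\mc E$ separately and, for each, invoke the fast generalization of Lemma~\ref{lem:special-slow-fix} --- namely Lemma~\ref{lem:fast-fix}, which handles a whole clan of shortcutters at once --- to obtain, with high probability, a small set $F'_{\mc C}$ of edges whose addition to $\texttt{deleted}(\mc C)$ restores the conductance guarantee in the conditioned graph $H'$. The key starting point is that the conductance $c^{H'}(S_C)$ of a shortcutter is, after Schur-complementing onto $C\cup(V(H')\setminus S_C)$ and identifying $C$ and $V(H')\setminus S_C$ to two terminals $x$ and $y$, exactly $1/\texttt{Reff}(x,y)$ in that two-terminal network; so controlling a shortcutter's conductance after conditioning is an instance of controlling an $s$-$t$ effective resistance after conditioning, which is what Lemma~\ref{lem:fast-fix} is built to do. Since $H'\sim H[F]$ differs from $H$ only inside the conditioned parts $\mc K$, and $\mc E$ is carved with respect to $S=\cup_{P\in\mc K}P$ --- so every conditioned vertex inside a shortcutter $S_C$ lies within $H$-effective-resistance distance $m^{o(1)}\R{i}$ of $C$ --- Lemma~\ref{lem:fast-fix}'s hypotheses are met on each shortcutter of the clan, and, using that conditioning preserves the relevant effective resistances in expectation, it returns each conductance to within a constant factor of its pre-conditioning value after deleting at most $m^{o(1)}$ edges per unit of (conductance)$\times$(distance scale).

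Concretely, I would fix a clan $\mc C$ in $\mc E_i$, write $H_{\mc C}=H\setminus\texttt{deleted}(\mc C)$, and note via Remark~\ref{rmk:com-schur} and Theorem~\ref{thm:conditioning} that conditioning on $F$ commutes both with removing $\texttt{deleted}(\mc C)$ (the deletion set condition guarantees $\texttt{deleted}(\mc C)$ shares no boundary edge with a part of $\mc K$, and any edge of $F$ already in $\texttt{deleted}(\mc C)$ is deleted in $H_{\mc C}$ anyway) and with the Schur complement used to define $c(S_C)$. Then apply Lemma~\ref{lem:fast-fix} to $H_{\mc C}$, the conditioning set $F$, and the family $\{(C,\,V(H_{\mc C})\setminus S_C)\}_{S_C\in\mc C}$ with a small constant error parameter; this yields $F'_{\mc C}\subseteq\bigcup_{P\in\mc K}\partial_H P$ such that $\sum_{S_C\in\mc C}c^{H'\setminus(\texttt{deleted}(\mc C)\cup F'_{\mc C})}(S_C)=O\!\left(\sum_{S_C\in\mc C}c^{H_{\mc C}}(S_C)\right)$ and $|F'_{\mc C}|=m^{o(1)}\R{i}\sum_{S_C\in\mc C}c^{H_{\mc C}}(S_C)$. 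Since $\sigma_0\ellmax\le m^{o(1)}$ (Invariant~\ref{inv:numclans}), a union bound over all clans gives the overall high-probability statement, and one sets the new deletion set of $\mc C$ to $\texttt{deleted}(\mc C)\cup F'_{\mc C}$.

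The three output bullets are then bookkeeping. \emph{Conductivity:} substitute the input bound $\sum_{S_C\in\mc C}c^{H_{\mc C}}(S_C)\le\zeta m^{1/\sigma_1}s_{\mc C}/(\murad\R{i})$ into the constant-factor restoration and absorb the constant into $7$, so $\mc H_i'$ is $7\zeta$-conductive with respect to $H'$. \emph{Modifiedness and deletion set condition:} the same input bound gives $|F'_{\mc C}|\le m^{o(1)}\zeta m^{1/\sigma_1}s_{\mc C}$, so the new deletion set has size $\le(\tau+m^{o(1)}\zeta)m^{1/\sigma_1}s_{\mc C}\le\mumod(\tau+\zeta)m^{1/\sigma_1}s_{\mc C}$; and since $F'_{\mc C}$ consists only of boundary edges of parts of $\mc K$, each of which has empty interior in $H'$, no deleted edge is incident with a nonempty part, so the deletion set condition persists. \emph{Boundedness:} contracting a conditioned part can enlarge a core's boundary conductance only through parts lying in overlays $\mc P_{i'}$ strictly coarser than the core's level (parts at level $\le i$ sit inside a single level-$i$ core and so do not cross its boundary); bounding this growth by those coarser overlays' total boundary conductances --- each $m^{o(1)}m/\R{i}$ by ball-growing --- together with a counting argument over the at most $\ell=\sum_i|\mc E_i|$ clans and $\sigma_0$ overlays, turns the input $\kappa$-bound into an $\ell\kappa$-bound. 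Finally, the runtime is $m^{o(1)}$ invocations of the fast algorithm underlying Lemma~\ref{lem:fast-fix}, each running in $m^{1+o(1)}$ time (the shortcutters in a clan are disjoint, and each vertex lies in $m^{o(1)}$ shortcutters), hence $m^{1+o(1)}$ in total.

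I expect the main obstacle to be the reduction to Lemma~\ref{lem:fast-fix}, not the accounting: one must (i) express $c^{H'}(S_C)$ as a two-terminal effective conductance and verify, via Remark~\ref{rmk:com-schur} and Theorem~\ref{thm:conditioning}, that ``condition then Schur'' equals ``Schur then condition'' even when a part of $\mc K$ meets the interior $S_C\setminus C$; (ii) use carving to confine all of $H'$'s deviations from $H$ that matter for a shortcutter's conductance to a bounded-diameter neighbourhood of its core so that Lemma~\ref{lem:fast-fix}'s hypotheses hold uniformly over the clan; and (iii) arrange that the returned fixing set can be taken among the boundary edges of the (now interior-empty) conditioned parts, which is what keeps the deletion set condition intact. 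Packaging Lemma~\ref{lem:fast-fix}'s per-shortcutter guarantees into a single per-clan statement, and keeping the conductivity constant at exactly $7$ after the union bound, are the remaining details.
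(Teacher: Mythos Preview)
Your high-level plan --- invoke Lemma~\ref{lem:fast-fix} per clan and bookkeep --- is right, but the reduction you sketch is the one the paper explicitly rejects. You propose applying \textsf{FastFix} with $S=C$ and $S'=V(H_{\mc C})\setminus S_C$ (or the clan-wide analogue $S=\cup C$, $S'=\cup(V(H)\setminus S_C)$). The paper notes (Section~\ref{sec:fix-reduction}, second paragraph after the statement of Lemma~\ref{lem:fast-fix}) that with this choice ``$\Delta^{H\setminus D}(S,S')$ could be very large'': the size bound in Lemma~\ref{lem:fast-fix} is $|F'|\le\mucon(\Delta^{I\setminus D}(S,S')+\Delta^{I\setminus D}(S',S)+|D|)\ep^{-3}$, and $\Delta$ involves effective resistances from individual vertices of $S'$ to the identified $S$. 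When $S'$ is the entire outside of a shortcutter there is no reason this is $O(s_{\mc C})$; your asserted bound $|F'_{\mc C}|=m^{o(1)}\R{i}\sum_{S_C}c^{H_{\mc C}}(S_C)$ is not what Lemma~\ref{lem:fast-fix} outputs and does not follow from it.

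The missing idea is the buffer-zone reduction. The paper sets $Y=\cup_{S_C\in\mc C}C$, $X=\cup_{S_C\in\mc C}S_C$, and crucially takes $S'$ to be the conditioned vertices \emph{outside} the potential level set $Z=S_{\{Y,V(H)\setminus X\}}(1/4,Y)$, i.e.\ $S'=\bigcup_{P\in\mc K_{\mc C}}\bigl(P\setminus Z\bigr)$. It then applies \textsf{FastFix} once to the pair $(Y,S')$. Two separate arguments are needed afterwards: (a) the $Y$--$S'$ conductance in $H_{\mc C}$ is at most $4\zeta m^{1/\sigma_1}s_{\mc C}/\R{i}$ because $Z$ sits a quarter of the way from $Y$ to $\partial X$ (Lemma~\ref{lem:p-cond}); and (b) the \emph{direct} $Y$--$(V(H)\setminus X)$ conductance in $\texttt{Schur}(H_{\mc C}',Y\cup S'\cup(V(H)\setminus X))$ is unchanged by conditioning, because every edge of $F$ has both endpoints in $S'\cup Z$ --- this is where carving is actually used. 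Combining (a), (b), and Lemma~\ref{lem:monotone-schur} gives the $7\zeta$ conductivity bound. The $\Delta$ bounds needed for modifiedness then come from bucketing ($|\mc C|\le 4s_{\mc C}$ gives $\Delta^{H_{\mc C}}(S',Y)\lesssim s_{\mc C}$ via a near/far split and Lemma~\ref{lem:well-sep}) and from carving plus Lemma~\ref{lem:ball-split} (which covers $S'$ by $\lesssim s_{\mc C}$ low-diameter clusters, giving $\Delta^{H_{\mc C}}(Y,S')\lesssim s_{\mc C}$). Your proposal has none of this machinery.

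One smaller error: you write $F'_{\mc C}\subseteq\bigcup_{P\in\mc K}\partial_H P$. In fact \textsf{FastFix} returns $F'\subseteq F=\bigcup_{P\in\mc K}E(P)$, i.e.\ \emph{interior} edges of conditioned parts. The deletion set condition survives not because $F'$ lives on boundaries but because every $P\in\mc K$ has $E_{H'}(P)=\emptyset$ after conditioning, so Definition~\ref{def:del-set-con} is vacuous for those parts; for parts not in $\mc K$, $F'$ simply doesn't touch them.
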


\subsection{An $m^{1+o(1)}\alpha^{o(1)}$ time algorithm for exact random spanning tree generation}

We now tie the results from the previous sections together to prove Theorem \ref{thm:main-result-aspect}. We prove this result using the algorithm $\ExactTree$, which simply chains the algorithms from the previous sections in order:

\begin{algorithm}[H]
\SetAlgoLined
\DontPrintSemicolon
\caption{$\ExactTree(G)$}

    $H\gets G$\;

    \tcp{the set of hordes which contain one clan consisting of one shortcutter (the entire graph)}
    $\mc E\gets \{\{\{V(G)\}\}\}_{i=1}^{\sigma_0}$\;

    $T\gets \emptyset$\;

    \While{$E(H)\ne\emptyset$}{

        $\mc E\gets \CreateEmpire(\mc E)$\;

        $\mc K\gets \ConditioningVerts(\mc E)$\;

        $T\gets T\cup \PartialSample(\mc E, \mc K)$\;

        Contract all edges in $H$ added to $T$ and delete all other edges internal to parts of $\mc K$\; \label{line:condition}

        $\FixShortcutters(\mc E, H, \mc K)$\;

    }

    \Return $T$\;
\end{algorithm}

Most of the effort in proving the above result boils down to checking that all of the input conditions are satisfied for each of the subroutines that $\ExactTree$ calls.

\begin{proof}[Proof of Theorem \ref{thm:main-result-aspect}]

\underline{Invariant \ref{inv:numclans}.} Each of $\CreateEmpire$, $\ConditioningVerts$, $\PartialSample$, and $\FixShortcutters$ increases the number of clans by at most a factor of $(\log m)\muapp$. By Lemma \ref{lem:progress}, only $(2\sigma_1)^{2\sigma_0}$ iterations take place. Since there is only one clan initially, the number of clans at the end is at most $$((\log m)\muapp)^{(2\sigma_1)^{2\sigma_0}} = \ellmax\le m^{o(1)}$$ as desired.

\underline{$\kappa\le \kappamax$.} Each of the subroutines called in the while loop increases $\kappa$ by at most a factor of $\ellmax$ and additively by at most $\kappa_0\le m^{o(1)}$. Therefore, 

\begin{align*}
\kappa &\le (\ellmax)^{(2\sigma_1)^{2\sigma_0}}\\
&\le ((\log m)\muapp)^{(2\sigma_1)^{4\sigma_0}}\\
&= \kappamax\\
&\le m^{o(1)}\\
\end{align*}

as desired.

\underline{$\tau\le \taumax$ and $\zeta\le \zetamax$.} Each subroutine call increases $\max(\tau,\zeta)$ by a factor of at most $10(\log m)\muapp$ and additively by at most $10(\log m)\muapp\ellmax\mumod$. Therefore,

\begin{align*}
\max(\tau,\zeta) &\le (10(\log m)\muapp\ellmax\mumod)^{(2\sigma_1)^{2\sigma_0}}\\
&\le ((\log m)\muapp)^{(2\sigma_1)^{8\sigma_0}}\\
&= \max(\taumax,\zetamax)\\
&\le m^{o(1)}\\
\end{align*}

as desired.

\textbf{Well-definedness.} Start with $\CreateEmpire$. At the beginning of the algorithm, $\zeta = 0$, $\kappa = 0$, and all of the deletion sets are empty, so the deletion set condition is satisfied. $\mc E$ is not an empire when it is supplied to $\CreateEmpire$, but is a set of covering hordes because either (a) this is the first iteration and the cores are all $V(G)$ or (b) Lemma \ref{lem:clan-fix} states that the hordes $\mc H_i$ are covering. Therefore, $\CreateEmpire$'s input conditions given in Lemma \ref{lem:maintain-shortcutters} are always respected.

Next, consider $\ConditioningVerts$. The ``Parameters'' condition is the ``Parameters'' guarantee from Lemma \ref{lem:maintain-shortcutters}. The ``Containment'' condition follows from the ``Containment'' guarantee of Lemma \ref{lem:maintain-shortcutters}, along with the fact that $\FixShortcutters$ only adds to the deletion sets of the clans and $\PartialSample$ does not change $\mc E$. Line \ref{line:condition} of $\ExactTree$ contracts or deletes each edge internal to each part in $\mc K$. Therefore, the ``Progress'' condition is satisfied afterwards.

The desired parameter bounds for $\PartialSample$ are given in the ``Boundedness and covering'' guarantee of Lemma \ref{lem:holes}. The carving condition of Lemma \ref{lem:partial-sample} is the ``Carving'' guarantee of Lemma \ref{lem:holes}.

Finally, deal with $\FixShortcutters$. The input conditions for Lemma \ref{lem:clan-fix} are given directly as the ``Carving'' guarantee of Lemma \ref{lem:holes}, the ``Bucketing'' guarantee of Lemma \ref{lem:maintain-shortcutters}, and the fact that removing vertices from shortcutters preserves the bucketing guarantee.

\textbf{Correctness.} By Theorem \ref{thm:conditioning}, sampling a random tree in some $H$ is equivalent to partial sampling with $F = \cup_{P\in \mc K} E(P)$ and sampling a tree in the graph obtained by contracting the chosen edges in $F$ and deleting all others. By Lemma \ref{lem:partial-sample}, $\PartialSample$ returns a valid sample from a uniformly random spanning tree of $H$ intersected with $F$. Therefore, once $E(H) = \emptyset$, $T$ has been completely sampled and is valid.

\textbf{Runtime.} By Lemma \ref{lem:progress}, the while loop runs at most $(2\sigma_1)^{2\sigma_0}\le m^{o(1)}$ times. $\CreateEmpire$, $\ConditioningVerts$, $\PartialSample$, and $\FixShortcutters$ each take $m^{1+o(1)}\alpha^{o(1) + 1/(\sigma_0+1)}$ time by Lemmas \ref{lem:maintain-shortcutters}, \ref{lem:holes}, \ref{lem:partial-sample}, and \ref{lem:clan-fix} respectively and our bounds on $\ellmax$, $\taumax$, $\kappamax$, and $\zetamax$. Contracting and deleting edges only takes $O(m)$ time. Therefore, the entire algorithm only takes $O(m^{1+o(1)}\alpha^{o(1)+1/(\sigma_0+1)})$ time. Since $\sigma_0$ is superconstant, this runtime is $m^{1+o(1)}\alpha^{o(1)}$, as desired.
\end{proof}

\subsection{An $m^{1+o(1)}\ep^{-o(1)}$-time algorithm for generating a random spanning tree from a distribution with total variation distance $\ep$ from uniform}

In Section \ref{sec:apx-tree}, we give a simple reduction that proves Theorem \ref{thm:main-result-apply} given just Theorem \ref{thm:main-result-aspect}. The reduction samples the intersection of a random tree with a part of the graph with polynomial aspect ratio and smallest resistances. Conditioning on this part of the graph removes the edges with smallest resistance from the graph. A ball-growing-type technique and Theorem \ref{thm:main-result-aspect} ensures that each round of conditioning eliminates a number of edges from the graph proportional to the amount of work done.

\begin{figure}
\includegraphics[width=1.0\textwidth]{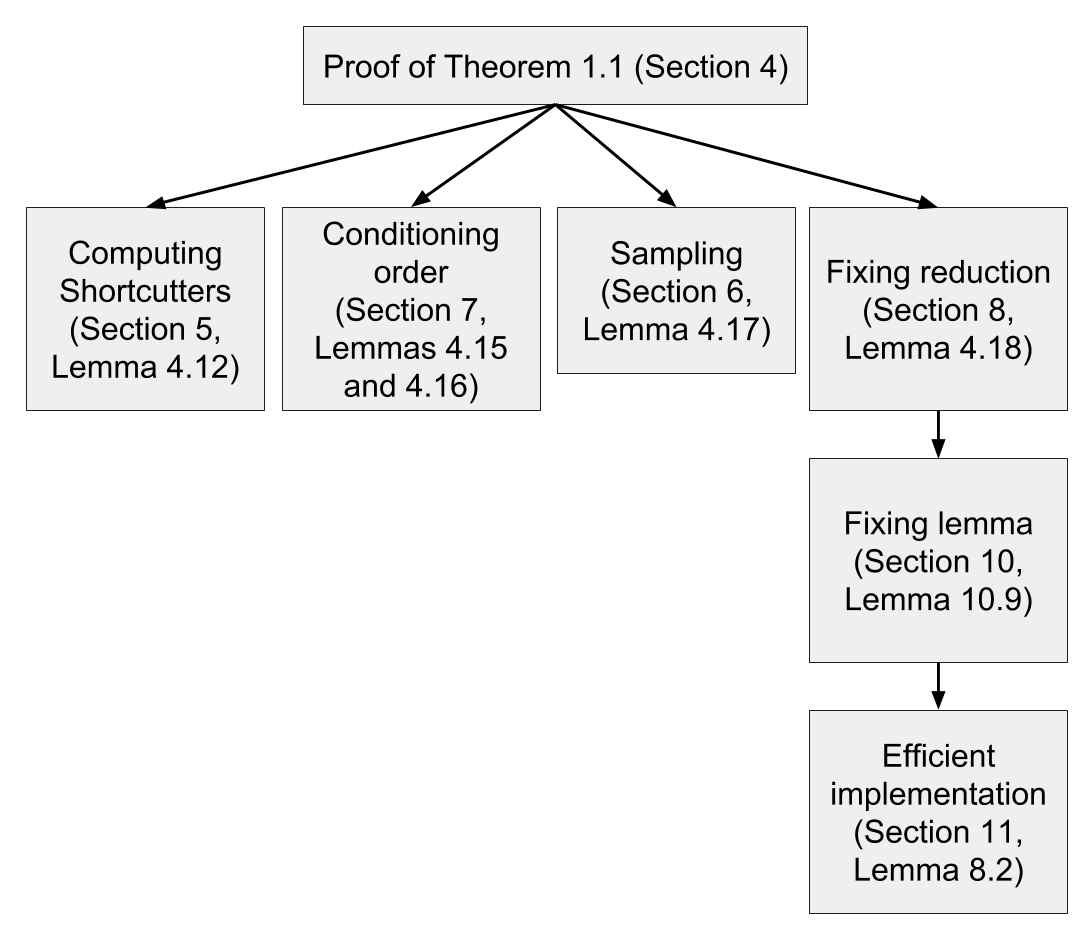}
\caption{Organization of the paper. Each section only depends on its children in the tree.}
\label{fig:paper-outline}
\end{figure}

\newpage

\section{Computing Shortcutters}\label{sec:compute-shortcutters}

In this section, we prove Lemma \ref{lem:maintain-shortcutters} by implementing $\CreateEmpire$. $\CreateEmpire$ starts by building cores in $\CoveringCommunity$. It then builds shortcutters around those cores in $\Voronoi$.

\subsection{Building cores}

In order to ensure that $\Voronoi$ can actually build good shortcutters around the cores, $\CoveringCommunity$ outputs cores that are organized into well-separated families. We start by giving some relevant definitions.

\begin{definition}[$\CoveringCommunity$-related definitions: families and communities]
Consider a graph $I$. A \emph{family} is a set of clusters $\mc F$. A \emph{community} is a set of families. An \emph{$R$-family} is a family of clusters with $I$-effective resistance diameter at most $R$. An \emph{$R$-community} is a community consisting of $R_{\mc F}$-families for possibly different values $\frac{R}{\murad}\le R_{\mc F}\le R$. An \emph{$(R,\gamma)$-well-separated family} is an $R$-family $\mc F$ with the additional property that the $I$-effective resistance distance between any two vertices in different clusters in $\mc F$ is at least $\gamma R$. A \emph{$\gamma$-well-separated community} is a community that consists of $(R_{\mc F},\gamma)$-well-separated families $\mc F$.
\end{definition}

Notice that in this definition, $\gamma$ is constant across all families but $R_{\mc F}$ is not. Well-separatedness is important for obtaining $\zeta$-conductive shortcutters. However, Lemma \ref{lem:maintain-shortcutters} also demands that cores have small total boundary. The boundary size is judged based on a cluster $C$ coming from remains of a former empire. This motivates the definition of $X$-constraint.

\begin{definition}[$\CoveringCommunity$-related definitions: $X$-constraint and boundedness]
Say that a community is \emph{$X$-constrained} if all vertices in clusters within families of the community are in $X$. Say that an $X$-constrained $R$-community $\mc D$ is \emph{$\kappa$-bounded} if for any family $\mc F\in \mc D$,

$$\sum_{C\in \mc F} c^I(\partial C) \le \frac{\kappa|E_I(X)\cup \partial_I X|}{R}$$
\end{definition}

We construct cores using the following result, which is proven in Section \ref{sec:covering-community}:

\begin{restatable}{lemma}{lemcoveringcommunity}\label{lem:covering-community}
The algorithm $\CoveringCommunity_D(X,I,R)$, when given a cluster $X$, a graph $I$, a radius $R$, and a Johnson-Lindenstrauss embedding $D$ of the vertices of $V(I)$, returns an $\murad R$-community $\mc D$ with the following properties:

\begin{itemize}
\item (Input constraint) $\mc D$ is $X$-constrained.
\item (Covering) Each vertex in $X$ is in some cluster of some family in $\mc D$.
\item (Boundedness) Each family $\mc F\in \mc D$ satisfies

$$\sum_{C\in \mc F} c^I(\partial_I C)\le \frac{\kappa_0|E_I(X)\cup \partial_I X|}{R} + c^I(\partial_I X)$$

\item (Well-separatedness) $\mc D$ is $\gammads$-well-separated.
\item (Number of families) $\mc D$ has at most $\muapp$ families.
\end{itemize}

Furthermore, $\CoveringCommunity$ takes almost-linear time in $|E(X)\cup\partial X|$.
\end{restatable}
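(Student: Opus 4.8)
The plan is to build $\mc D$ by an adaptive region-growing (ball-carving) procedure in the effective resistance metric restricted to $X$, in the spirit of sparse-cover constructions for metric spaces (cf.\ \cite{AP90,LR99}): carve clusters of bounded effective resistance diameter that are each surrounded by a ``moat'' (so the family they lie in is well-separated), picking the exact radii by a region-growing argument so the boundary conductance is small, and repeat on the uncovered remainder.

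\textbf{Carving one well-separated family.} Since all distances we use are $\texttt{Reff}_I$ distances and $D$ is a Johnson--Lindenstrauss embedding (Theorem \ref{thm:jl}), the $\ANN$ structure of Theorem \ref{thm:ann} answers approximate nearest-neighbor and ball-membership queries in $\tilde{O}(1)$ time. Maintain the set $U$ of vertices of $X$ not yet placed in a cluster of the current family. While $U\neq\emptyset$, pick $v\in U$ and grow a region outward from $v$: increase the radius $\rho$, absorbing the nearest uncovered vertex of $X$ each time, as long as the next uncovered vertex lies within resistance $\Theta(\gammads R)$ of the current region and the region's diameter stays below a cap $\Theta(\murad R)$. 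Once growth stops, use the standard region-growing estimate --- each edge $e$ incident to $X$ lies on $\partial_I B_{\texttt{Reff}_I}(v,\rho)$ only for $\rho$ in an interval of length at most $\texttt{Reff}_I(e)\le r_e=1/c_e$, so over a $\Theta(R)$-length sub-range there is a radius $\rho_v$ with $c^I(\partial_I B_{\texttt{Reff}_I}(v,\rho_v))\le O(|E_I(X)\cup\partial_I X|/R)$ --- and set $C_v:=B_{\texttt{Reff}_I}(v,\rho_v)\cap X$. Remove $C_v$ from $U$, \emph{freeze} (for this family only) every vertex of $U$ within resistance $\Theta(\gammads R)$ of the region, and continue; output the family once $U$ is entirely clustered or frozen. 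By the stopping rule each $C_v$ either is a lone cluster of its family or is wrapped in an uncovered moat of width $\Theta(\gammads R)$, and its $\texttt{Reff}_I$-diameter is between $R$ and $\murad R$ (using the $\murad$ slack to absorb ``clumps''); consequently distinct clusters are at $\texttt{Reff}_I$-distance at least $\gammads R_{\mc F}$, so the output is an $(R_{\mc F},\gammads)$-well-separated family with $R\le R_{\mc F}\le\murad R$.

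\textbf{Iterating and checking the properties.} Repeat on the still-uncovered vertices of $X$ to obtain families $\mc F_1,\mc F_2,\ldots$, stopping once $X$ is covered; output $\mc D=\{\mc F_t\}_t$. Then $\mc D$ is $X$-constrained, covering, $\gammads$-well-separated, and an $\murad R$-community by construction. For boundedness, sum the per-cluster estimate over one family: the clusters are disjoint and well-separated, and since every edge $e$ incident to $X$ satisfies $\texttt{Reff}_I(e)\le r_e$, each such edge is within resistance $O(\murad R)$ of the chosen boundary of only $O(1)$ clusters of the family (this ``bounded moat overlap'' is exactly where the structure of effective resistance metrics is used --- see the obstacle below), while edges of $\partial_I X$ touch at most one cluster; this yields $\sum_{C\in\mc F}c^I(\partial_I C)\le \kappa_0|E_I(X)\cup\partial_I X|/R+c^I(\partial_I X)$. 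For the runtime, each growth/carve step inspects only the subgraph within resistance $O(\murad R)$ of its center, clusters of a family are disjoint, and moats overlap $O(1)$ times, so the $\ANN$ data structure and $D$ let us run everything in time near-linear in $|E_I(X)\cup\partial_I X|$.

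\textbf{Main obstacle.} The crux is the bound $\muapp=m^{o(1)}$ on the number of families --- and, closely related, the $O(1)$ moat-overlap bound used for boundedness. Neither is automatic: effective resistance metrics are not doubling (a star is a counterexample), so the naive ``few centers near any point'' packing argument fails, and one has to rule out moats (and rounds) nesting too many times. I expect the right argument to exploit a global property of the effective resistance metric --- most naturally the total-leverage identity $\sum_e c_e\,\texttt{Reff}_I(e)=|V(I)|-1$ --- to set up a potential (such as a conductance-weighted volume of the uncovered region, or the number of edges incident to uncovered vertices) that shrinks by a constant factor per round, giving $O(\log n)$ rounds; alternatively one can randomize the radii and the processing order and show each uncovered vertex is covered with probability $\Omega(1/\mathrm{polylog})$ per round. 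Making such an argument coexist with the well-separatedness and boundedness guarantees, and with the almost-linear time bound, is where essentially all the work lies; the remaining parts are standard region-growing bookkeeping.
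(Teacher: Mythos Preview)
Your general framework is right---this is indeed a sparse-cover construction---and you have correctly located the crux: bounding the number of families (and, as you note, the closely related moat-overlap bound for boundedness). But the concrete mechanism the paper uses to resolve this is different from both of your suggestions, and your proposal as written does not contain it.

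The paper's key idea is \emph{adaptive multi-scale growth based on edge-count ratios}. When building a cluster around $v_0$, the algorithm considers balls at geometrically increasing radii $R\gamma^0, R\gamma^1, R\gamma^2,\ldots$ and keeps growing to the next scale \emph{only if} the number of incident edges increases by at least a factor $m^{1/z_0}$. This immediately caps the number of growth steps at $2z_0$ (since the edge count is at most $m$), giving the diameter bound; more importantly, when growth stops at scale $i$, the ``moat'' $C'$ at scale $i{+}1$ has at most $m^{1/z_0}$ times as many incident edges as the cluster $C$ at scale $i$. The cluster $C$ is removed from the global uncovered set $S$, while the larger $C'$ is removed only from the current-round set $S'$. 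Summing over one round, exhausting $S'$ forces at least a $m^{-1/z_0}$ fraction of $S$'s incident edges to be removed, so the outer loop runs $O(m^{1/z_0})$ times, giving $O(z_0 m^{1/z_0})\le\muapp$ families. The same disjointness of the $C'$ sets (they are carved out of $S'$) is what makes boundedness go through cleanly: each cluster's $\BallGrow$ boundary is charged to the edges of its own $C''\subseteq C'$, and these are disjoint, so the sum telescopes to $|E_I(X)\cup\partial_I X|$ without any ``bounded moat overlap'' assumption.

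Your proposed fixed-radius-with-moat construction lacks this charging mechanism: you grow until the next vertex is too far or a diameter cap is hit, but nothing ties the size of the frozen moat to the size of the cluster, so neither your number-of-families argument nor your boundedness argument (``each edge is near only $O(1)$ clusters'') has a foundation. Neither the leverage-score identity nor randomization is used; the argument is purely the combinatorial edge-count doubling trick above, together with locality-sensitive hashing (Theorem \ref{thm:lsh}) to enumerate the balls at each scale in time proportional to their size.
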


\begin{figure}
\includegraphics[width=1.0\textwidth]{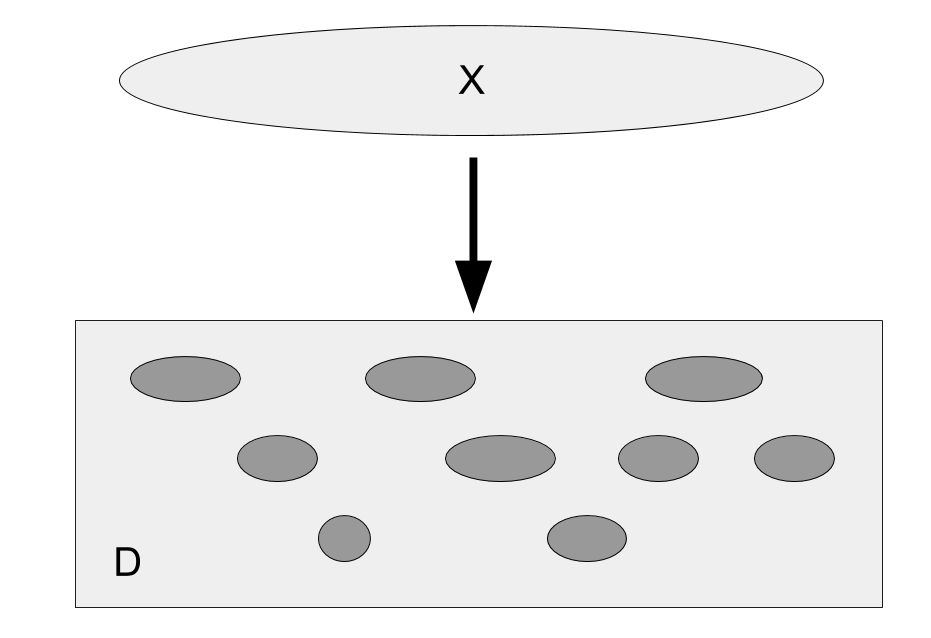}
\caption{$\CoveringCommunity$ splits the set $X$ into a small number of well-separated families of clusters in the effective resistance metric. $\CoveringCommunity$ is applied to each core of each clan independently. This splits each clan into $m^{o(1)}$ clans.}
\label{fig:building-cores-summary}
\end{figure}

$D$ is only given as input to $\CoveringCommunity$ for runtime purposes.

We now briefly discuss how each of these properties relates to properties of cores in Lemma \ref{lem:maintain-shortcutters}. We apply $\CoveringCommunity(X,I,R)$ to each core $C$ in the $R$-clain $\mc C$ with $I\gets H_{\mc C}$. The input, covering, and number of families constraints of Lemma \ref{lem:covering-community} relate to the parts of the containment constraint of Lemma \ref{lem:maintain-shortcutters} as it relates to $C$. The boundedness constraint relates to the boundedness constraint of Lemma \ref{lem:covering-community}. The well-separatedness constraint is used later on to obtain a $\zeta$-conductive clan of shortcutters.

$\CoveringCommunity$ is very similar to the sparse cover constructions (for example \cite{AP90}). When growing each cluster, though, (1) consider all nearby vertices for adjacent growth, not just adjacent balls and (2) ball-grow the resulting cluster afterwards to ensure that it has small boundary size. We give this construction in the appendix.

\subsection{Building the shortcutters}

Now we exploit the well-separatedness of the cores to build good shortcutters. We start by showing a result that allows us to translate well-separatedness into a shortcutter conductance upper bound:

\begin{restatable}{lemma}{lemwellseplevscore}\label{lem:well-sep-lev-score}
Consider a $\gammads = 2^{(\log n)^{2/3}}$ well-separated $R$-family of clusters $\mc F$ in a graph $G$. Let $H := \texttt{Schur}(G,\cup_{C\in \mc F} C)$. Then

$$\sum_{e\in E(C,C'), C\ne C'\in \mc F} \frac{\texttt{Reff}_H(e)}{r_e^H} \le \muapp |\mc F|$$
\end{restatable}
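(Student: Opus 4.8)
\textbf{Proof plan for Lemma \ref{lem:well-sep-lev-score}.}

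The plan is to reduce the sum over inter-cluster edges to a sum of effective-resistance-weighted quantities that can be controlled by the well-separatedness hypothesis. The key observation is that $\texttt{Reff}_H(e)/r_e^H = c_e^H \texttt{Reff}_H(e)$ is exactly the leverage score of edge $e$ in $H$, i.e., the probability that $e$ lies in a random spanning tree of $H$ (Theorem \ref{thm:edge-marginal}). Equivalently, it is the ``electrical contribution'' of $e$ to the graph, and $\sum_{e\in E(H)} c_e^H\texttt{Reff}_H(e) = |V(H)| - 1 = |\cup_{C\in\mc F}C| - 1$ in general. Since that trivial bound is far too weak (it counts intra-cluster edges too), the real task is to show that the inter-cluster leverage scores are small: each inter-cluster edge $e\in E(C,C')$ connects two clusters whose mutual effective resistance is at least $\gammads R$, while $C$ and $C'$ themselves have diameter at most $R$, so the ``bottleneck'' across $e$ should force $\texttt{Reff}_H(e)$ to be close to $r_e^H$ only if $r_e^H$ is itself huge — and we will argue that can only happen for very few edges.

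The main step I would carry out is the following. Fix a cluster $C\in\mc F$ and identify $C$ to a single vertex; also identify each other cluster $C'\ne C$ to a single vertex (this only decreases effective resistances, so upper bounds are preserved — but I need to be careful about direction, since I want an \emph{upper} bound on $\sum \texttt{Reff}_H(e)/r_e^H$, and identifying vertices \emph{decreases} $\texttt{Reff}$, which helps). After identification, the well-separatedness says every edge leaving (the contracted) $C$ goes to a vertex at effective resistance $\ge \gammads R$ from $C$; combined with Rayleigh monotonicity this means that in the contracted graph, the effective resistance from $C$ to $V\setminus C$ is at least $\gammads R$ through any single parallel class of edges, but more usefully, the total conductance of $\partial C$ can be bounded: if there were total conductance $\kappa$ across $\partial C$ then $\texttt{Reff}$ from $C$ to the ``rest'' is at most $1/\kappa$ through these edges in parallel — but we also know any individual neighbor is $\gammads R$ away. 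The cleanest route is probably: (1) show $\sum_{e\in E(C,V\setminus C)}\texttt{Reff}_H(e)/r_e^H$ is at most something like $O(\log n)$ or $m^{o(1)}$ \emph{per cluster} by a localization/flow argument à la \cite{SRS17}, then (2) sum over the $|\mc F|$ clusters, with each inter-cluster edge counted at most twice. Step (1) is where well-separatedness enters: the energy of the electrical flow defining $\texttt{Reff}_H(e)$ for $e=\{a,b\}$ with $a\in C$ must ``pay'' for crossing the $\gammads R$ gap, so only a $\frac{r_e^H}{\gammads R}$-ish fraction of edges can have leverage score near $1$; making this quantitative gives the $\muapp$ bound since $\gammads = 2^{(\log n)^{2/3}}$ dominates any $\log$ factors.

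Concretely, I would try the following chain: for each $C\in\mc F$, let $\phi = L_H^+ b_{C,V\setminus C}$ be potentials with $C$ at potential $\chi_C$ and consider $\sum_{e\in\partial C} c_e(\phi_a - \phi_b)^2$, relating it to $1/\texttt{Reff}_H(C, V\setminus C)$; well-separatedness plus the triangle inequality on $\texttt{Reff}_H$ gives $\texttt{Reff}_H(C,V\setminus C)\ge \gammads R - R = (\gammads-1)R$. Then for a single edge $e=\{a,b\}$ with $a\in C$, $b\in C'$, I would bound $\texttt{Reff}_H(a,b)\le \texttt{Reff}_H(a, \text{(contracted }C\text{)}) + \texttt{Reff}_H(C, C') + \ldots$ — actually the right move is to use that $\texttt{Reff}_H(e) \le r_e^H$ always, and separately that $\sum_{e\in\partial C} \texttt{Reff}_H(e)/r_e^H = \sum_{e\in\partial C} c_e^H \texttt{Reff}_H(e)$ equals the expected number of $\partial C$ edges in a random spanning tree restricted to being incident to $C$'s side, which by a matrix-determinant/Schur argument is at most $1 + (\text{the Schur complement contribution})$, and the well-separated geometry forces the latter to be tiny.

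\textbf{Main obstacle.} The hard part will be making the well-separatedness quantitatively force the inter-cluster leverage scores to be small — naively, an edge $e$ with enormous resistance $r_e^H \gg \gammads R$ could still have $\texttt{Reff}_H(e) \approx r_e^H$ and hence leverage score $\approx 1$, so I cannot hope to bound each term individually; I must bound the \emph{number} of such ``near-tree'' edges crossing the gap, which requires a global argument (e.g. counting spanning trees, or a careful electrical-energy accounting showing that many high-leverage inter-cluster edges would contradict the $\gammads R$ separation because they would provide too many low-resistance paths in parallel across the gap). I expect this to route through a localization-of-electrical-flows estimate in the style of \cite{SRS17} together with the embedding of Theorem \ref{thm:jl} to discretize the distance scales, and the factor $\gammads = 2^{(\log n)^{2/3}}$ being superpolylogarithmic is exactly what gives the slack to absorb all the $\log n$ and $m^{o(1)}$ losses into the final $\muapp|\mc F|$ bound.
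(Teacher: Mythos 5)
Your reduction of the sum to the expected number of inter-cluster edges in a random spanning tree of $H$ (via Theorem \ref{thm:edge-marginal}) is the right starting point, but the load-bearing step of your plan --- step (1), a \emph{per-cluster} bound $\sum_{e\in E(C,V\setminus C)}\texttt{Reff}_H(e)/r_e^H \le m^{o(1)}$ --- is false, and no amount of well-separatedness can rescue it. Take $H$ to be a star whose center is a singleton cluster $C=\{v\}$ and whose $d$ leaves are singleton clusters, with every edge of resistance $\gammads R$: the family is $(R,\gammads)$-well-separated (singletons have diameter $0\le R$, and all pairwise resistances are at least $\gammads R$), yet every edge is a bridge with leverage score exactly $1$, so the single cluster $C$ contributes $d=|\mc F|-1$ to the sum. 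The correct statement is inherently global --- the \emph{total} over all clusters is $O(\muapp|\mc F|)$ even though one cluster can be responsible for almost all of it --- and your "Main obstacle" paragraph correctly identifies that you need to count near-tree inter-cluster edges globally, but the mechanism you gesture at (localization of electrical flows à la \cite{SRS17} plus Johnson--Lindenstrauss bucketing) is not developed and is not what makes the argument go through.

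The paper's proof uses a different, combinatorial mechanism that you should be aware of: run Aldous--Broder on $H$ and record the sequence of clusters in which new tree edges are acquired. Well-separatedness enters through a hitting-probability estimate (Lemmas \ref{lem:pot-bound} and \ref{lem:well-sep}): each time the walk alternates between two clusters $C,C'$ it has probability at least $1-1/(\gammads-4)$ of visiting any fixed vertex of $C$ in between, so after $\tau=O(\log(m\alpha)/\log\gammads)$ alternations both clusters are covered w.h.p.\ and contribute no further tree edges. Hence the visit sequence is (after a nontrivial cleanup of consecutive repetitions, which are charged to edges of an auxiliary forest on $\mc F$) a $(|\mc F|,\tau)$-Davenport--Schinzel sequence, whose length is at most $|\mc F|\cdot C_0(|\mc F|,\tau)\le\muapp|\mc F|$ by Theorem \ref{thm:davenport-schinzel}. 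This correctly handles the star example: the center is visited $d$ times, but those are consecutive repetitions absorbed by the forest-charging step, and $d\le|\mc F|$. To salvage your approach you would need to replace the per-cluster bound with an argument of this global flavor; as written, the proposal does not constitute a proof.
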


We prove this lemma in the appendix. We now give a brief description of the proof. If each cluster is a single vertex, the result is equivalent to Foster's Theorem (Remark \ref{rmk:lev-sum}). One can prove Remark \ref{rmk:lev-sum} combinatorially by running the Aldous-Broder algorithm on $H$ and writing down the sequence of relevant cluster visits; i.e. visits that end up adding a new edge to the tree. This sequence has the property that no two clusters can alternate more than once; otherwise at least one would be covered. Such sequences have been investigated before; they are \emph{Davenport-Schinzel sequences}. In particular, Davenport-Schinzel sequences are known to have linear length in the number of letters. This means that only a linear number of edges can be added.

This reasoning generalizes to the case where clusters are not single vertices. If the random walk alternates between two clusters more than $\log_{\gammads} n$ times, it covers one of them with probability at least $1 - 1/n$. Therefore, the sequence of visits is $\log_{\gammads} n$-Davenport-Schinzel. Picking $\gammads$ to be a high enough subpolynomial value gives a linear bound on the length of the sequence. There are some minor complications caused by the fact that a cluster can appear multiple times in a row in the sequence, so it is not truly Davenport-Schinzel. We describe how to cope with this issue in the proof of Lemma \ref{lem:well-sep-lev-score}.

If we could compute the Schur complement $H$, we could directly run Aldous-Broder on this Schur complement and sample the intersection of a random tree with all of intracluster edges in $\mc F$. This is expensive, though. Furthermore, the resulting graph is no longer sparse. Instead, we build shortcutters around each cluster in order to make it so that using a shortcutter effectively simulates one step of the Schur complement random walk in $\mc F$. Intuitively, we design a shortcutter around each core that has the property that using it takes the random walk from a cluster $C$ to a vertex from which it is more likely to hit some cluster besides $C$ before returning to $C$. This intuition motivates the following definition:

\begin{definition}[Potential level sets]
Consider a family of clusters $\mc F$ in a graph $G$ and a cluster $C\in \mc F$. Let $S_{\mc F}(p,C)$ denote the cluster of vertices $v\in V(G)$ for which

$$\Pr_v[t_C < t_{\mc F\setminus \{C\}}] \ge 1 - p$$
\end{definition}

Due to the interpretation of probabilities as electrical potentials, this set can be computed using one Laplacian solve on $G$ with each cluster in $\mc F$ identified to a vertex:

\begin{remark}\label{rmk:level-set-lsolve}
For any family $\mc F$ in a graph $G$, a cluster $C\in \mc F$, and a number $p\in (0,1)$. Then a set $S'\subseteq V(G)$ with $S_{\mc F}(p - 1/(m\alpha)^4,C)\subseteq S'\subseteq S_{\mc F}(p + 1/(m\alpha)^4,C)$ can be computed in near-linear time in $|E(G\setminus (\cup_{C'\in \mc F} C'))|$.
\end{remark}

The upper bound on $S_C$ in the following lemma is used to show that $\mc C$ is well-spaced; i.e. that the shortcutters are far away from one another. The lower bound is used to show that $\mc C$ is $\zeta$-conductive for some $\zeta\le m^{o(1)}\alpha^{o(1)}$; i.e. that using those shortcutters saves a lot of work. We show the following in Section \ref{sec:voronoi}:

\begin{restatable}{lemma}{lemvoronoi}\label{lem:voronoi}
The algorithm $\Voronoi(I,\mc F)$ takes a family $\mc F$ in the graph $I$ and outputs a clan $\mc C$ in near-linear time in $|E(Z)\cup \partial Z|$ with the property that for each $C\in \mc F$, there is a shortcutter $S_C\in \mc C$ with the property that $S_{\mc F}(1/(8\log n),C)\subseteq S_C\subseteq S_{\mc F}(1/8,C)$, where $Z = V(I)\setminus (\cup_{C\in \mc F} C)$.
\end{restatable}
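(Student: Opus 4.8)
The plan is to rephrase $\Voronoi$'s output in terms of harmonic potentials and then compute all of them at once using only $O(\log n)$ Laplacian solves. For $C\in\mc F$ let $h_C\colon V(I)\to[0,1]$ be the potential that equals $1$ on $C$, equals $0$ on $\bigcup_{C'\in\mc F\setminus\{C\}}C'$, and is harmonic on $Z:=V(I)\setminus\bigcup_{C\in\mc F}C$; probabilistically $h_C(v)=\Pr_v[t_C<t_{\mc F\setminus C}]$, so $S_{\mc F}(p,C)=\{v:h_C(v)\ge 1-p\}$ and $\sum_{C\in\mc F}h_C\equiv 1$ (on the part of $Z$ that can reach some cluster; vertices that cannot reach any cluster are assigned to no shortcutter). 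The first point is that \emph{any} sets $S_C$ with $S_{\mc F}(1/(8\log n),C)\subseteq S_C\subseteq S_{\mc F}(1/8,C)$ already give a valid clan: since $1/8<1/2$, if $h_C(v)\ge 7/8$ then $h_{C'}(v)\le 1/8$ for all $C'\ne C$, so the sets $S_{\mc F}(1/8,C)$, and hence the $S_C$, are pairwise disjoint; and $C\subseteq S_{\mc F}(1/(8\log n),C)\subseteq S_C$ because $h_C\equiv 1$ on $C$. So the entire task reduces to computing such $S_C$'s in time near-linear in $|E(Z)\cup\partial Z|$.

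\textbf{The batching step.} A single call to Remark~\ref{rmk:level-set-lsolve} (one approximate Laplacian solve on $I$ with the clusters of $\mc F$ identified, a graph with $O(|E(Z)\cup\partial Z|)$ edges) computes one level set, but doing this once per cluster is too slow since $|\mc F|$ may be $\Omega(n)$. Instead, draw for each $C\in\mc F$ an independent uniformly random sign vector $\epsilon_C\in\{\pm1\}^L$ with $L=\Theta(\log n)$; with high probability the $\epsilon_C$ are distinct and pairwise Hamming distance at least $L/4$ apart. For $\ell=1,\dots,L$ use Remark~\ref{rmk:level-set-lsolve} to compute $H^{(\ell)}:=\sum_{C\in\mc F}\epsilon_C^{(\ell)}h_C$ (so $H^{(\ell)}(v)=\mathbf E_v[\epsilon^{(\ell)}_{\text{first }\mc F\text{-cluster hit}}]$) --- that is $O(\log n)$ solves, each near-linear in $|E(Z)\cup\partial Z|$. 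For each vertex $v$ form the sign pattern $\sigma(v)=(\operatorname{sign}H^{(\ell)}(v))_{\ell\le L}$; if $\sigma(v)=\epsilon_C$ for some $C$ (a hash lookup, and the match is unique) \emph{and} $\hat q(v):=\frac1L\sum_{\ell}\epsilon_C^{(\ell)}H^{(\ell)}(v)\ge \tfrac{15}{16}$, then put $v\in S_C$; otherwise $v$ is put in no shortcutter. Output $\mc C=\{S_C\}_{C\in\mc F}$.

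\textbf{Why the sandwich holds.} Write $\hat q(v)=\sum_{C'\in\mc F}h_{C'}(v)\gamma_{C,C'}$ with $\gamma_{C,C}=1$ and $\gamma_{C,C'}=1-2d_H(\epsilon_C,\epsilon_{C'})/L\le 1/2$ for $C'\ne C$. Then $\hat q(v)\le h_C(v)+\tfrac12(1-h_C(v))$, so $v\in S_C$ forces $h_C(v)\ge 7/8$, i.e.\ $S_C\subseteq S_{\mc F}(1/8,C)$ --- and this uses nothing about whether $C$ is the ``dominant'' cluster, so disjointness is automatic. Conversely, if $h_C(v)\ge 1-\delta$ with $\delta=1/(8\log n)$: for every $\ell$ the term $\epsilon_C^{(\ell)}h_C(v)$ has magnitude $\ge 1-\delta$ and dominates the rest (magnitude $\le\delta$), so $\sigma(v)=\epsilon_C$; and $\hat q(v)\ge h_C(v)-\sum_{C'\ne C}h_{C'}(v)=2h_C(v)-1\ge 1-2\delta\ge \tfrac{15}{16}$ once $n$ exceeds an absolute constant (handle tiny $n$ by brute force). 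Hence $S_{\mc F}(1/(8\log n),C)\subseteq S_C$. Finally, Remark~\ref{rmk:level-set-lsolve} returns each $H^{(\ell)}$ only up to additive error $1/(m\alpha)^4$; this is negligible against the constant gaps above, so padding the threshold by $O(L/(m\alpha)^4)$ repairs everything. The total cost is $O(\log n)$ Laplacian solves plus $O(L)$ arithmetic and one hash lookup per vertex, i.e.\ near-linear in $|E(Z)\cup\partial Z|$.

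\textbf{Main obstacle.} The crux is the batching insight: one cannot afford a solve per cluster, but $O(\log n)$ \emph{random linear combinations} of the potentials suffice, because matching a vertex's sign pattern against the planted random codewords pins down the unique cluster it could belong to, while the \emph{magnitude} $\hat q(v)$ of the matched combination simultaneously certifies a lower bound on that cluster's hitting probability strong enough to keep $v$ inside $S_{\mc F}(1/8,C)$. Everything else is routine: checking that the two thresholds $1/8$ and $1/(8\log n)$ have enough slack relative to the random-code distance and the Laplacian-solver error, and noting that the setup needs the clusters of $\mc F$ to be disjoint (which they are, coming from a well-separated community).
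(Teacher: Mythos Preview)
Your proposal is correct but takes a different route from the paper. The paper's $\Voronoi$ uses a \emph{deterministic} binary encoding: order the clusters $C_1,\dots,C_k$ arbitrarily and, for each bit position $i\in\{0,\dots,\lceil\log k\rceil\}$, partition $\mc F$ into $X_i$ (clusters whose $i$th index bit is $0$) and $Y_i$ (bit is $1$). One Laplacian solve per bit computes the two-sided level sets $S_{\{X_i,Y_i\}}(1/(8\log k),\,\cdot\,)$, and $S_{C_j}$ is simply the intersection over $i$ of the level sets for the side containing $C_j$. The lower bound is immediate (each two-class event contains the $\mc F$-vs-$C$ event), and the upper bound is a union bound over the $\log k$ bits: since any two clusters differ in some bit, $\Pr_x[t_C>t_{\mc F\setminus C}]\le\sum_i\Pr_x[t_{Z_{ij}}>t_{\text{other}_i}]\le(\log k)\cdot 1/(8\log k)=1/8$.

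Your approach replaces the index bits by \emph{random} $\pm 1$ codewords and the intersection-of-level-sets by a sign-pattern match plus a threshold on the average correlation $\hat q$. This is correct; in fact your $H^{(\ell)}$ is, up to the affine map $x\mapsto 2x-1$, exactly the two-sided harmonic potential for the random bipartition $\{C:\epsilon_C^{(\ell)}=+1\}$ versus its complement, so you are computing the same primitive as the paper. The real difference is in the aggregation and analysis: the paper's event-based union bound is shorter and avoids the random-code concentration step and the small-$n$ caveat, while your sketching-flavored argument is a bit more involved but perfectly valid. A minor quibble: Remark~\ref{rmk:level-set-lsolve} as stated returns a level \emph{set}, not the potential values; what you actually need (and what the paper uses too) is the underlying fact that the harmonic extension with prescribed values on the identified clusters is one approximate Laplacian solve on a graph with $O(|E(Z)\cup\partial Z|)$ edges.
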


\begin{figure}
\includegraphics[width=1.0\textwidth]{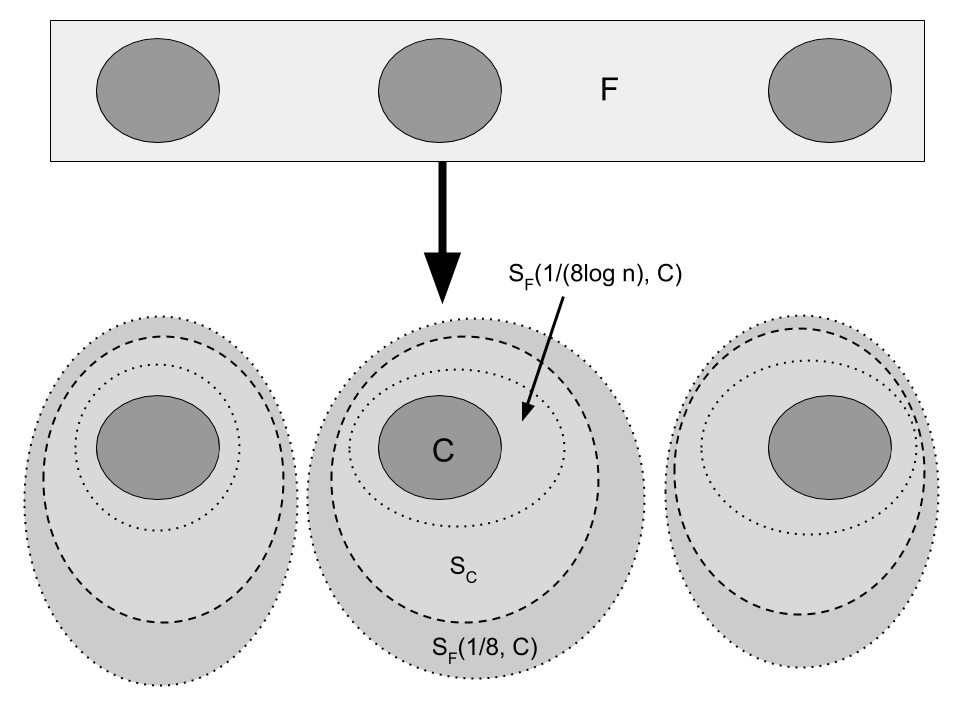}
\caption{$\Voronoi$ produces shortcutters $S_C$ that contain $S_{\mc F}(1/(8\log n),C)$ and are contained within $S_{\mc F}(1/8,C)$. This containment guarantee, along with the well-separatedness of $\mc F$, is translated into a bound on conductivity using Proposition \ref{prop:separated-large} and Lemma \ref{lem:well-sep-lev-score}.}
\label{fig:building-shortcutters-summary}
\end{figure}

One could satisfy the above lemma just by returning a clan of $S_{\mc F}(p,C)$s for some constant $p\in [1/(8\log n),1/8]$. $S_{\mc F}(p,C)$ can be computed using one approximate Laplacian system solve. We are not aware of a way to compute all shortcutters for $\mc F$ efficiently. Instead, $\Voronoi$ partitions clusters into two megaclusters in $\log n$ different ways so that no two clusters are in the same part of every partition. Then, it computes $S_{\mc F}(1/(8\log n),P)$ where $P$ is one side of the partition and intersects all of the partitions to obtain shortcutters. This only requires $O(\log n)$ Laplacian solves. The Laplacian solves are all in the graph with clusters identified to different vertices. This graph has size $|E(Z)\cup \partial Z|$.

As we saw in the Algorithm Overview, these shortcutters are modified many times over the course of the algorithm. The intuitive description of these shortcutters quickly breaks down after any modification. We use Lemma \ref{lem:well-sep-lev-score}, along with the following proposition, to establish that the clan output by $\Voronoi$ is $\zeta$-conductive for some reasonable $\zeta$:

\begin{restatable}{proposition}{propseparatedlarge}\label{prop:separated-large}
Consider a $\gamma$-well-separated $R$-family $\mc F$ in $Y\subseteq X\subseteq V(H_{\mc C})$ and let $I := \texttt{Schur}(H_{\mc C},\cup_{C\in \mc F} C)$. Suppose that

$$c^{\texttt{Schur}(H_{\mc C},Y\cup (V(H)\setminus X))}(E(Y,V(H)\setminus X)) \le \xi$$

For any $C\in \mc F$, let

$$\Delta_{\mc F}(C) := \sum_{e\in E_I(C,C'), C'\ne C\in \mc F} \frac{\texttt{Reff}_I(e)}{r_e^I}$$

Let $\mc F' := \mc F\cup \{V(H)\setminus X\}$ and consider any clusters $S_C$ with $S_{\mc F'}(p,C)\subseteq S_C$ for all $C\in \mc F$. Then

$$\sum_{C\in \mc F} c^{\mc C}(S_C)\le \left(\sum_{C\in \mc F}\frac{\Delta_{\mc F}(C)}{p(\gamma-4) R}\right) + \frac{\xi}{p}$$
\end{restatable}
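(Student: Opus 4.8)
\textbf{Proof proposal for Proposition \ref{prop:separated-large}.}

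The plan is to bound $c^{\mc C}(S_C)$ for each $C$ by a sum of two terms: one accounting for the ``internal'' escape edges going to other clusters $C'\in\mc F$, and one accounting for the ``external'' escape edges going to the super-node $V(H)\setminus X$. The starting observation is that $c^{\mc C}(S_C)$ is, by definition, the total conductance of the Schur complement $\texttt{Schur}(H_{\mc C}, C\cup(V(H_{\mc C})\setminus S_C))$ restricted to edges between $C$ and the rest. Since Schur complements compose (Remark \ref{rmk:schur}), I would first pass to the intermediate graph $I' := \texttt{Schur}(H_{\mc C}, (\cup_{C'\in\mc F} C')\cup(V(H)\setminus X))$, which contains $I$ as a further Schur complement, and observe that the conductance out of $C$ in $\texttt{Schur}(H_{\mc C}, C\cup(V(H_{\mc C})\setminus S_C))$ equals the expected number of escapes from $S_C$ through each boundary edge, summed appropriately; this is where Theorem \ref{thm:edge-visits} enters. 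Concretely, for each edge $e$ leaving $S_C$ in $I'$, the contribution to $c^{\mc C}(S_C)$ is $c_e \cdot p_u$ where $p_u = b_{Ct}^T L^+ b_{ut}$ with $t$ the identified sink $C$ — i.e., the probability-flavored potential of the escape endpoint. The containment hypothesis $S_{\mc F'}(p,C)\subseteq S_C$ says precisely that every vertex $u\in S_C$ has $\Pr_u[t_C < t_{\mc F'\setminus\{C\}}]\ge 1-p$, so any vertex $u'$ just \emph{outside} $S_C$ (an escape endpoint) has this probability $< 1-p$ plus a slack, i.e. $\Pr_{u'}[t_{\mc F'\setminus\{C\}} < t_C] > p - o(1)$.

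Next I would convert that probability bound into the factor $\frac1p$. Using the potential interpretation: for an escape edge $e=\{u,u'\}$ with $u\in S_C$, $u'\notin S_C$, the quantity $p_{u'}$ (potential of $u'$ in the harmonic function that is $1$ on $C$ and $0$ on $\mc F'\setminus\{C\}$) is at most $1-p$. But the flow out of $S_C$ toward the other clusters, when rescaled by $1/p$, becomes a unit flow from $C$ to $\mc F\setminus\{C\}\cup\{V(H)\setminus X\}$; the energy of this rescaled flow is what appears on the right-hand side. More carefully: the edges leaving $S_C$ split into those whose far endpoint leads (in $I$) to some $C'\in\mc F$ and those that lead to $V(H)\setminus X$. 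For the first group, the total rescaled conductance is controlled by $\Delta_{\mc F}(C)/(p(\gamma-4)R)$: the $(\gamma-4)R$ in the denominator comes from the well-separatedness — any path from $C$ to a different $C'$ in the effective resistance metric has length at least $\gamma R$, and subtracting the two cluster diameters (each $\le R$) and a little slack gives $(\gamma-4)R$, so each such escape edge ``uses up'' resistance at least $(\gamma-4)R$, converting a $\texttt{Reff}_I(e)/r_e^I$-type leverage-score sum into a conductance bound by dividing by $(\gamma-4)R$. Summing over $C\in\mc F$ and invoking the definition of $\Delta_{\mc F}(C)$ directly yields the first term. For the second group — escapes to $V(H)\setminus X$ — the hypothesis bounds the total conductance of $E(Y,V(H)\setminus X)$ in the relevant Schur complement by $\xi$; combined with the $\frac1p$ rescaling this gives the $\xi/p$ term, and since this is a single global budget (not per-cluster), it is \emph{not} multiplied by $|\mc F|$.

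The main obstacle I anticipate is making the ``rescale by $1/p$ and bound the energy by well-separatedness'' step rigorous, because $S_C$ is an arbitrary superset of $S_{\mc F'}(p,C)$ and may be quite irregular, so one cannot simply identify $S_C$ with a clean potential level set. The clean way around this is to work entirely in $\texttt{Schur}(H_{\mc C}, C\cup(V(H_{\mc C})\setminus S_C))$ and note that $c^{\mc C}(S_C) = \sum_{e} c_e^{\texttt{Schur}}$ where the sum is over edges incident to $C$, and then re-expand each such Schur-complement edge back in $I'$ as an expected flow using Theorem \ref{thm:edge-walk}-style identities (Theorem \ref{thm:edge-visits}): the conductance of a Schur-complement edge from $C$ equals $\sum_{u' \text{ first-hit outside } S_C} (\text{escape prob through } u') \cdot (\text{return-to-}C\text{ structure})$, which is bounded because the escape endpoint has $p_{u'}\le 1-p$, so dividing through by $p$ turns the raw ``hitting-$C$-vs-the-rest'' quantities into genuine unit-flow energies. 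Once that reduction is in place, the two terms fall out of (a) Foster/leverage-score accounting for the inter-cluster edges with the $(\gamma-4)R$ separation discount, and (b) the hypothesis on $c^{\texttt{Schur}(H_{\mc C},Y\cup(V(H)\setminus X))}(E(Y,V(H)\setminus X))\le\xi$ for the escapes to the outside, finishing the proof.
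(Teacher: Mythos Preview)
Your overall decomposition is right and matches the paper's: split the escape conductance into inter-cluster edges (controlled by $\Delta_{\mc F}(C)/((\gamma-4)R)$ via well-separatedness and Lemma \ref{lem:well-sep}) and edges to $V(H)\setminus X$ (controlled by $\xi$ via Lemma \ref{lem:monotone-schur}), each picking up a $1/p$ factor. Two points, though.

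First, a misstatement: you write that $S_{\mc F'}(p,C)\subseteq S_C$ ``says precisely that every vertex $u\in S_C$ has $\Pr_u[t_C < t_{\mc F'\setminus\{C\}}]\ge 1-p$.'' That is the wrong direction --- the containment only says that every vertex with that hitting-probability bound lies in $S_C$, not vice versa. Fortunately, what you actually \emph{use} is the contrapositive (vertices outside $S_C$ lie outside $S_{\mc F'}(p,C)$, hence have escape probability $>p$), which is correct.

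Second, the ``main obstacle'' you flag --- that $S_C$ may be an irregular superset of the level set --- is a non-issue, and the paper dissolves it in one line. Since a larger shortcutter has smaller Schur-complement conductance (Lemma \ref{lem:monotone-cond} / Rayleigh), $c^{\mc C}(S_C)\le c^{\mc C}(S_{\mc F'}(p,C))$, so you may immediately replace $S_C$ by the clean potential level set. Then Lemma \ref{lem:p-cond} gives
\[
c^{\mc C}(S_{\mc F'}(p,C)) \le \frac{1}{p}\, c^J\big(E_J(C,\textstyle\bigcup_{C'\in\mc F',\,C'\ne C}C')\big),
\]
where $J=\texttt{Schur}(H_{\mc C},(V(H)\setminus X)\cup\bigcup_{C'\in\mc F}C')$. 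From here the two terms drop out directly: $c^J(E_J(C,C'))\le c^I(E_I(C,C'))$ by Lemma \ref{lem:outside-schur}, and $\texttt{Reff}_I(e)\ge(\gamma-4)R$ converts $\sum c_e^I$ into $\Delta_{\mc F}(C)/((\gamma-4)R)$; the $V(H)\setminus X$ part is bounded by $\xi$ via Lemma \ref{lem:monotone-schur}. There is no need to unpack Schur-complement edges through Theorem \ref{thm:edge-visits} or reason about individual escape endpoints of $S_C$.
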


A relatively simple argument about electrical potentials and their relationship with effective resistances shows the following, which is used to establish well-spacedness:

\begin{restatable}{proposition}{propwellspaced}\label{prop:well-spaced}
Consider a family $\mc F$ in a graph $H$. Let $C\in \mc F$ be a cluster with $H$-effective resistance diameter $R$. Consider some $S_C$ for which $C\subseteq S_C\subseteq S_{\mc F}(p,C)$ for any $p\in (0,1/2)$. Consider a cluster $C'$ that is tied to $C$. Then $C'\subseteq S_{\mc F}(p + 3/10,C)$.
\end{restatable}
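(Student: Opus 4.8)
The plan is to work with the hitting probability $\phi(v):=\Pr_v[t_C<t_{\mc F\setminus\{C\}}]$, which is the electrical potential that equals $1$ on $C$ and $0$ on $\mc F\setminus\{C\}$, so that $S_{\mc F}(q,C)=\{v:\phi(v)\ge 1-q\}$ for every $q$. Note first that $C'$ is disjoint from $C$: tiedness gives $\texttt{Reff}_H(u,v)>\beta_0\beta R>0$ for $u\in C$, $v\in C'$. Using the ``Intersection'' part of tiedness, $C'\cap S_C\ne\emptyset$, so I fix some $x\in C'\cap S_C\subseteq C'\cap S_{\mc F}(p,C)$, giving $\phi(x)\ge 1-p$. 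The goal reduces to showing $\phi(w)\ge\phi(x)-3/10$ for every $w\in C'$, which yields $w\in S_{\mc F}(p+3/10,C)$.

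The key estimate I would prove is: for all distinct $a,b\in C'$, $\Pr_a[t_C<t_b]\le 1/7$. To get this, contract $C$ to a single vertex $c$; random-walk hitting probabilities among vertices outside $C$ are unchanged, and $\texttt{Reff}_{H/C}(c,b)=\texttt{Reff}_H(C,b)$, the effective resistance between the contracted set $C$ and $b$. The folklore three-point hitting formula (equivalently, Cauchy--Schwarz in the $L_{H/C}^+$ inner product together with the triangle inequality) gives $\Pr_a[t_C<t_b]\le\sqrt{\texttt{Reff}_{H/C}(a,b)/\texttt{Reff}_{H/C}(c,b)}$. Since $C'$ has $H$-effective resistance diameter at most $\beta R$ and contracting only lowers resistances, $\texttt{Reff}_{H/C}(a,b)\le\texttt{Reff}_H(a,b)\le\beta R$. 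For the denominator I would show $\texttt{Reff}_H(C,b)\ge\tfrac12\texttt{Reff}_H(u_0,b)-R$ for any $u_0\in C$; combined with the ``Well-separatedness'' part of tiedness, $\texttt{Reff}_H(u_0,b)>\beta_0\beta R=100\beta R$ (valid since $b\in C'$), and $R\le\beta R$, this gives $\texttt{Reff}_H(C,b)>49\beta R$ and hence $\Pr_a[t_C<t_b]\le\sqrt{1/49}=1/7$. I would then record one further consequence: $C'$ is disjoint from $\mc F\setminus\{C\}$, because $z\in C'\cap(\mc F\setminus\{C\})$ would force $1-\phi(x)=\Pr_x[t_{\mc F\setminus\{C\}}<t_C]\ge\Pr_x[t_z<t_C]\ge 1-1/7$, contradicting $\phi(x)\ge 1-p>1/2$.

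The main obstacle is the resistance bound $\texttt{Reff}_H(C,b)\ge\tfrac12\texttt{Reff}_H(u_0,b)-R$, which quantifies that contracting the bounded-diameter core $C$ cannot pull it much closer to an outside vertex than one of its own vertices is. I would prove it by taking the minimum-energy unit flow $f^\ast$ from the terminal set $C$ to $b$ (energy $\texttt{Reff}_H(C,b)$); by the maximum principle $f^\ast$ carries no current inside $C$, and the amounts $d_u\ge 0$ it sources at the vertices $u\in C$ satisfy $\sum_{u\in C}d_u=1$. Adding a correction flow $g=\sum_{u\in C}g^{(u)}$, where $g^{(u)}$ is the electrical flow of value $d_u$ from $u_0$ to $u$, makes $f^\ast+g$ a unit $u_0$--$b$ flow; in the energy norm $\|\cdot\|:=\sqrt{\text{energy}(\cdot)}$ (a Euclidean norm on flows) one has $\|g^{(u)}\|=d_u\sqrt{\texttt{Reff}_H(u_0,u)}\le d_u\sqrt R$ since $C$ has diameter at most $R$, so $\|g\|\le\sqrt R$. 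Then $\sqrt{\texttt{Reff}_H(u_0,b)}\le\|f^\ast+g\|\le\|f^\ast\|+\|g\|\le\sqrt{\texttt{Reff}_H(C,b)}+\sqrt R$, and squaring (using $2\sqrt{Rt}\le 2R+\tfrac12 t$) gives the claimed bound.

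Finally I would conclude: for $w\in C'$ we have $w\notin B:=C\cup(\mc F\setminus\{C\})$ by the disjointness above, so running the walk from $x$ until it first hits $w$ or $B$ and applying the strong Markov property gives $\phi(x)=\Pr_x[t_w<t_B]\,\phi(w)+\Pr_x[t_C<t_{\{w\}\cup(\mc F\setminus\{C\})}]$ (on $\{t_w<t_B\}$ the walk restarts from $w$; on $\{t_B<t_w\}$ it succeeds iff the first $B$-vertex it visits lies in $C$). Dropping the nonnegative term $\Pr_x[t_B<t_w]\phi(w)$ and enlarging the last event to $\{t_C<t_w\}$ gives $\phi(x)-\phi(w)\le\Pr_x[t_C<t_w]\le 1/7<3/10$, so $C'\subseteq S_{\mc F}(p+3/10,C)$. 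Everything besides the resistance bound in the third paragraph is routine electrical-network and first-passage bookkeeping.
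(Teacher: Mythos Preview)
Your proof is correct and takes a genuinely different route from the paper's. The paper identifies $C$ to $s$ and $\mc F\setminus\{C\}$ to $t$, passes to the four-vertex Schur complement on $\{s,x,y,t\}$, and bounds the $s$--$t$ potential drop from $x$ to $y$ by a direct electrical case analysis on the edge resistances of that four-vertex graph (using Lemma~\ref{lem:well-sep} to lower-bound $r_{sx}$ and $r_{sy}$, and then splitting into cases depending on whether $r_{xy}$ or $r_{xt}+r_{ty}$ is small). You instead prove a uniform hitting bound $\Pr_a[t_C<t_b]\le 1/7$ for all distinct $a,b\in C'$ and feed it into a one-step first-passage decomposition of $\phi$; this avoids the Schur-complement bookkeeping and the case split entirely.

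A few remarks. Your Cauchy--Schwarz bound $\Pr_a[t_c<t_b]\le\sqrt{\texttt{Reff}(a,b)/\texttt{Reff}(c,b)}$ is correct but weaker than the paper's Lemma~\ref{lem:pot-bound}, which gives the same ratio without the square root; using it you would get $\Pr_x[t_C<t_w]\le 1/49$ rather than $1/7$, though $1/7$ already clears $3/10$. Your flow argument for $\texttt{Reff}_H(C,b)\ge\tfrac12\texttt{Reff}_H(u_0,b)-R$ is essentially a hand-rolled variant of Lemma~\ref{lem:well-sep} specialized to one side being a singleton; invoking that lemma directly would have given $\texttt{Reff}_{H/C}(c,b)\ge(100\beta-4)R$ and saved the detour. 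On the other hand, your argument tracks the dependence on the tie parameter $\beta$ explicitly throughout (both in the numerator $\beta R$ and the denominator $49\beta R$, so the ratio is $\beta$-free), whereas the paper's written proof drops the $\beta$ factor in a couple of places (e.g., writing $\texttt{Reff}_{J_0}(x,y)\le 10R$ where $\beta R$ is what the diameter hypothesis actually gives); in that respect your version is cleaner.
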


We prove all of these statements in the appendix.

\subsection{Tying the parts together}

Now, we combine the statements of the previous two subsections into a proof of Lemma \ref{lem:maintain-shortcutters}. We first implement the algorithm $\CreateEmpire$:

\begin{algorithm}[H]
\DontPrintSemicolon
\caption{$\CreateEmpire(\{\mc H_i\}_{i=1}^{\sigma_0})$}

    $D\gets $ factor 2 error Johnson-Lindenstrauss embedding of $H$ into $\mathbb{R}^{C_1\log n}$ dimensions\;

    \ForEach{$i\in \{1,2,\hdots,\sigma_0\}$}{

        $\mc E_i\gets \emptyset$\;

        \ForEach{clan $\mc C\in \mc H_i$}{

            $T\gets \muapp$\;

            $\{\mc C_{1k}\}_{k=1}^{\log m},\{\mc C_{2k}\}_{k=1}^{\log m},\hdots,\{\mc C_{Tk}\}_{k=1}^{\log m}$ all are initialized to $\emptyset$\;

            \ForEach{shortcutter $S_C\in \mc C$}{

                $\mc F_1,\mc F_2,\hdots,\mc F_T\gets \emptyset$\;

                For each $j\in [T]$, $\mc F_j\gets $ $j$th family of the community $\CoveringCommunity_D(C,H_{\mc C},\R{i})$\;\label{line:cov-com}

                \tcp{bucketing}

                \ForEach{$j\in [T]$}{

                    $\mc F_{j1}\gets \mc F_j$\;

                    \For{$k=1,2,\hdots,\log m$}{

                        $\mc C_{jk}\gets \mc C_{jk}\cup $ the shortcutters in $\Voronoi(H_{\mc C},\mc F_{jk}\cup \{V(H)\setminus S_C\})$ with size at most $2^k$\; \label{line:clan-ub}

                        $\texttt{deleted}(\mc C_{jk})\gets \texttt{deleted}(\mc C)$\;

                        $\mc F_{j(k+1)}\gets $ the subset of $\mc F_{jk}$ with shortcutters in $\Voronoi(H_{\mc C},\mc F_{jk}\cup \{V(H)\setminus S_C\})$ with size greater than $2^k$\; \label{line:clan-lb}

                    }

                }

            }

            $\mc E_i\gets \mc E_i\cup_{j\in [T],k\in [\log m]} \{\mc C_{jk}\}$\;

        }

    }

    \ForEach{$i\in \{1,2,\hdots,\sigma_0\}$}{

        \ForEach{$P\in \mc P_i(\mc E)$}{

            \tcp{will justify well-definedness in ``Containment'' analysis}

            Let $Q$ be the unique part in $\mc P_i(\{\mc H_i\}_{i=1}^{\sigma_0})$ containing $P$\;\label{line:q-def}

            Let $C_P$ be an arbitrary core in an $\mc F_j$ obtained from the $\CoveringCommunity(C_Q,H_{\mc C},\R{i})$ call on Line \ref{line:cov-com}\;\label{line:cp-def}

            Let $S_P$ be the shortcutter in $\mc E$ assigned to $C_P$\;

        }

    }

    \Return $\{\mc E_i\}_{i=1}^{\sigma_0}$\;

\end{algorithm}

We now prove Lemma \ref{lem:maintain-shortcutters} given all of the propositions and lemmas in this section.

\lemmaintainshortcutters*

\begin{proof}[Proof of Lemma \ref{lem:maintain-shortcutters}]

\textbf{Radii of clans.} $\CoveringCommunity$ outputs an $\murad\R{i}$-community, so each cluster has $H$ effective resistance diameter at most $\murad\R{i}$, as desired.

\textbf{Bucketing.} Each clan $\mc C_{jk}\in \mc E_i$ arises from one clan $\mc C$ of $\mc H_i$ and $\mc F_{jk}$ for that clan $\mc C$. By Line \ref{line:clan-ub},

$$\max_{S_{C'}\in \mc C_{jk}} |E(S_{C'})\cup \partial S_{C'}|\le 2^k$$

Therefore,

$$s_{\mc C_{jk}} \ge \frac{m}{2^k}$$

Next, we bound $|\mc C_{jk}|$. The shortcutters in $\Voronoi(H_{\mc C_{jk}},\mc F_{jk})$ are disjoint because the shortcutters $\mc S_{\mc F}(1/2 - \ep,C')$ for $C'\in \mc F$ are disjoint for any $\ep\in (0,1/2)$. Line \ref{line:clan-lb} ensures that the clusters $\mc F_{jk}$ are cores of shortcutters with size at least $2^{k-1}$. The shortcutters in $\mc C_{jk}$ are disjoint because $\Voronoi$ produces disjoint subclusters of $S_C$ and the $S_C$s are disjoint by definition of $\mc C$. This means that

$$|\mc C_{jk}|\le m/2^{k-1}$$

Combining inequalities shows that

$$|\mc C_{jk}|\le 2s_{\mc C_{jk}}$$

which implies bucketing.

\textbf{Conductivity.} Suppose that each $\mc H_i$ is $\zeta$-conductive and consider a clan $\mc C\in \mc H_i$. Consider a shortcutter $S_C\in \mc C$, $j\in [T]$, and $k\in [\sigma_1]$. Let $X\gets S_C$, $Y\gets C$, $\mc F\gets \mc F_{jk}$, $\gamma\gets \gammads$, and $p\gets 8\log n$. This is valid because $\mc F_{jk}$ is a $\gammads$-well-separated $\R{i}$-family by Lemma \ref{lem:covering-community}. Therefore, Proposition \ref{prop:separated-large} applies and shows that

$$\sum_{C'\in \mc F_{jk}} c^{\mc C_{jk}}(S_{C'})\le \left(\sum_{C'\in \mc F_{jk}}\frac{2\Delta_{\mc F_{jk}}(C')}{p\gamma\R{i}}\right) + \frac{c^{\mc C}(S_C)}{p}$$

where the shortcutters $S_{C'}$ are in $\mc C_{jk}$. By Lemma \ref{lem:well-sep-lev-score},

$$\left(\sum_{C'\in \mc F_{jk}}\frac{2\Delta_{\mc F_{jk}}(C')}{p\gamma\R{i}}\right) + \frac{c^{\mc C}(S_C)}{p} \le \frac{2\muapp|\mc F_{jk}|}{p\gamma\R{i}} + \frac{c^{\mc C}(S_C)}{p}$$

Summing over all $S_C\in \mc C$ shows that

$$\sum_{S_{C'}\in \mc C_{jk}} c^{\mc C_{jk}}(S_{C'}) \le \left(\sum_{S_C\in \mc C} \frac{2\muapp|\mc F_{jk}|}{p\gamma\R{i}}\right) + \left(\sum_{S_C\in \mc C}\frac{c^{\mc C}(S_C)}{p}\right)$$

where $\mc F_{jk}$ is defined for $S_C$ in the above right hand side summand. Furthermore,

\begin{align*}
\left(\sum_{S_C\in \mc C} \frac{2\muapp|\mc F_{jk}|}{p\gamma\R{i}}\right) + \left(\sum_{S_C\in \mc C}\frac{c^{\mc C}(S_C)}{p}\right)&\le \frac{2\muapp|\mc C_{jk}|}{p\gamma\R{i}} + \frac{\zeta m^{1/\sigma_1}}{p\R{i}} s_{\mc C}\\
&\le \left(16(\log n)\muapp + (8\log n)\zeta m^{1/\sigma_1}\right) s_{\mc C}/(\R{i})
\end{align*}

by the bucketing of $\mc C_{jk}$. This is the desired conductivity statement.

\textbf{Well-spacedness.} Consider any cluster $C''$ that is tied to $S_{C'}\in \mc C_{jk}$. Suppose that $S_{C'}$ was generated from the shortcutter $S_C$. By Lemma \ref{lem:voronoi}, $S_{C'}\subseteq S_{\mc F_{jk}\cup \{V(H)\setminus S_C\}}(1/8,C')$, where $\mc F_{jk}$ corresponds to $S_C$. By Proposition \ref{prop:well-spaced} applied in the graph $H\gets H_{\mc C}$, $C''\subseteq S_{\mc F_{jk}\cup \{V(H)\setminus S_C\}}(3/8,C')$. Since each vertex has potential greater than 1/2 for only one cluster in $\mc F_{jk}\cup \{V(H)\setminus S_C\}$, $S_{\mc F_{jk}\cup \{V(H)\setminus S_C\}}(3/8,C')$ does not intersect any other $S_{\mc F_{jk}\cup \{V(H)\setminus S_C\}}(3/8,C''')$ for $C'''\in \mc F_{jk}\setminus \{C'\}$. Therefore, $C''$ cannot intersect any other shortcutter in $\mc C_{jk}$ for a core in $\mc F_{jk}$. Furthermore, $S_{\mc F_{jk}\cup \{V(H)\setminus S_C\}}(3/8,C')$ does not intersect $S_{\mc F_{jk}\cup \{V(H)\setminus S_C\}}(3/8,V(H)\setminus S_C)$. Therefore, $C''$ cannot contain any vertices outside of $S_C$. As a result, it cannot intersect any shortcutters of $\mc C_{jk}$ that are not for cores in $\mc F_{jk}$. Combining these two statements shows that $C''$ can only be tied to one shortcutter $S_{C'}\in \mc C_{jk}$, which is well-spacedness.

\textbf{Boundedness.} Boundedness follows from the following:

\begin{align*}
\sum_{S_{C'}\in \mc C_{jk}} c^H(\partial_H C') &= \sum_{S_C\in \mc C}\sum_{C'\in \mc F_{jk}} c^H(\partial_H C')\\
&\le \left(\sum_{S_C\in \mc C} c^H(\partial_H C) + \frac{\kappa_0|E_H(C)\cup \partial_H C|}{\R{i}}\right)\\
&\le \left(\sum_{S_C\in \mc C} c^H(\partial_H C)\right) + \frac{\kappa_0 m}{\R{i}}\\
&\le \frac{(\kappa + \kappa_0)m}{\R{i}}\\
\end{align*}

where the first inequality follows from the ``Boundedness'' guarantee of Lemma \ref{lem:covering-community}, the second inequality follows from the fact that $S_C$s in $\mc C$ are disjoint, and the third inequality follows from the $\kappa$-boundedness of $\mc C$.

\textbf{Clan growth.} Each clan $\mc C$ is replaced by clans of the form $\mc C_{jk}$. There are only $T\log m\le \muapp\log n$ clans of this form.

\textbf{Containment.} We start by showing that the part $Q$ defined on Line \ref{line:q-def} is well-defined. It suffices to show that $\mc P_i(\mc E)$ is a refinement of $\mc P_i(\{\mc H_j\}_{j=1}^{\sigma_0})$ for all $i\in \{1,2,\hdots,\sigma_0\}$. We show this by induction on decreasing $i$. Inductively assume that $\mc P_{i+1}(\mc E)$ is a refinement of $\mc P_{i+1}(\{\mc H_j\}_{j=1}^{\sigma_0})$ and consider a part $P\in \mc P_i(\mc E)$.

Let $C$ be a core in some clan of $\mc H_i$ that intersects $P$. By the ``Covering'' guarantee of Lemma \ref{lem:covering-community}, some core $C'$ in some family output during the call to $\CoveringCommunity$ on $C$ in Line \ref{line:cov-com} intersects $P$. By the ``Input constraint'' guarantee of Lemma \ref{lem:covering-community}, $C'\subseteq C$. By definition of $\mc P_i(\mc E)$, $P\subseteq C'$. Therefore, $P$ is contained in all cores $C$ in some clan of $\mc H_i$ that intersect $P$. Furthermore, $P$ is contained in a unique part of $\mc P_{i+1}(\mc E)$ by definition. This part is contained in a unique part of $\mc P_{i+1}(\{\mc H_j\}_{j=1}^{\sigma_0})$ by the inductive assumption. Since $\mc P_i(\mc E)$ is the refinement of all cores in $\mc E_i$ and $\mc P_{i+1}(\mc E)$ and $P$ is contained in all cores of $\mc H_i$ and parts of $\mc P_{i+1}(\mc E)$ that it intersects, $P$ is contained in a part $Q$ in the refinement $\mc P_i(\{\mc H_j\}_{j=1}^{\sigma_0})$. This completes the inductive step and shows that $Q$ is well-defined.

Let $C_Q$ be the core assigned to $\mc Q$ in $\{\mc H_j\}_{j=1}^{\sigma_0}$. By the ``Covering'' guarantee of Lemma \ref{lem:covering-community} applied to the $C_Q$ call, there is a core $C'\subseteq C_Q$ (by the ``Input constraint'') with $P\subseteq C'$. Therefore, there exists a choice of $C_P$ on Line \ref{line:cp-def} and $C_P\subseteq C_Q$.

Now, we show that $S_P\subseteq S_Q$, which is the same as showing that $S_{C_P}\subseteq S_{C_Q}$. Since $C_P$ is in some $\mc F_k$ created on Line \ref{line:cov-com} for $C_Q$, the $\Voronoi$ call that creates $S_{C_P}$ has the set $V(H)\setminus S_{C_Q}$ in its input family. This means that $S_{C_P}$ cannot intersect $V(H)\setminus S_{C_Q}$, which means that $S_{C_P}\subseteq S_{C_Q}$, as desired.

\textbf{Modifiedness and the deletion set condition.} This follows from the fact that $\texttt{deleted}(\mc C_{jk}) = \texttt{deleted}(\mc C)$ and the ``Containment'' guarantee for cores.

\textbf{Runtime.} For each core $C$ of a shortcutter in $\mc C$, the runtime of $\CoveringCommunity$ is at most $m^{o(1)}|E(C)\cup \partial C|$ by the runtime condition of Lemma \ref{lem:covering-community}. The runtime of the $j$ and $k$ loops is at most $m^{o(1)}|E(S_C)\cup \partial S_C|$, by Lemma \ref{lem:voronoi} and the fact that $\Voronoi$ takes $V(G)\setminus  S_C$ as one input cluster. Since $C\subseteq S_C$ and shortcutters in $\mc C$ are disjoint, the total work for shortcutters in $\mc C$ is $m^{1+o(1)}$. Since there are $m^{o(1)}$ clans in the input hordes, the total runtime of $\CreateEmpire$ is $m^{1+o(1)}$.

\end{proof}

\newpage

\section{Conditioning on the selected parts}\label{sec:random-walk}

Now, we prove Lemma \ref{lem:partial-sample}. Let $S = \cup_{P\in \mc K} P$. Aldous-Broder only needs to cover $S$ in order to sample the intersection of a random tree with $\cup_{P\in \mc K} E(P)$. Since $\mc E$ has been carved with respect to $S$, each active shortcutter $S_C \in \mc H_i\in \mc E$ has one of two statuses at any time:

\begin{itemize}
\item (Ready) $S_C$ has been covered and therefore can be used or $S_C$ is not active.
\item (Boundary) $S_C$ has not been covered and $C$ is within distance $\mucarve\R{i}$ of some unvisited vertex in $S$.
\end{itemize}

Now, suppose that a vertex $v$ is in a part $P_i$ in $\mc P_i(\mc E)$. Look at the maximum level $i$ for which $P_i$'s shortcutter has a ``Ready'' status. Since $P_{i+1}$'s shortcutter has a ``Boundary'' status, $v$ is within distance $\mucarve\R{i+1}$ of some unvisited vertex in $S$.

Since $P_i$'s shortcutter is ``Ready,'' it can be used without skipping any first visits to vertices in $S$. We now clarify what ``use'' means. Ideally, we could just use online shortcutting to shortcut from $P_i$ to $\partial S_{P_i}$. This could be too expensive for two reasons:

\begin{itemize}
\item $E(P_i) = \emptyset$, in which case $S_{P_i}$ is not active and may be larger than the promised maximum active shortcutter size.
\item $S_{P_i}$'s conductance is low in $H_{\mc C_i}$ where $\mc C_i$ is the clan containing $S_{P_i}$, not in $H$.
\end{itemize}

In the first case, no shortcutting is necessary, as $\partial S_{P_i}$ is part of a set of boundary edges with small enough total weight (conductance). In the second case, we precompute the probability that a random walk hits an edge in $\texttt{deleted}(\mc C_i)$ before $\partial S_{P_i}$. This is our use of offline shortcutting. We show that the precomputation work is small thanks to $\tau$-modifiedness. If the random walk does hit one of these edges first, shortcutting takes $\tilde{O}(1)$ work and the work can be charged to a traversal over an edge in $\partial P_i$. Otherwise, the shortcut step can be charged to a step over a small conductance set given by $\zeta$-conductivity.

Our random walk analysis relies heavily on the following lemma, which we prove in the appendix. One can think of this lemma as a softer version of the subgraph cover time bound used by \cite{KM09}. Unlike \cite{KM09} and \cite{MST15}, we use the following lemma on many different graphs $I$ corresponding to various Schur complements:

\lemwalkbound*

We prove this in the appendix.

\subsection{Our shortcutting method}

We now introduce our shortcutting primitive. It relies on two facts about electrical flows, the first of which was used by both Kelner-Madry and Madry-Straszak-Tarnawski:

\begin{theorem}[\cite{KM09,MST15}]\label{thm:offline-prob}
Consider two vertices $u,v$ in a graph $I$. Let $\textbf{p}\in \mathbb{R}^{V(I)}$ denote the potentials for a $u-v$ electrical flow with $p_u$ and $p_v$ normalized to 1 and 0 respectively. Then for any vertex $w\in V(I)$,

$$\Pr_w[t_u < t_v] = p_w$$

where $t_u$ is the hitting time to $u$.
\end{theorem}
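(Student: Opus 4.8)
This is the classical identification of hitting probabilities with harmonic functions (as in Doyle--Snell), so the plan is to exhibit both sides as solutions of the same Dirichlet problem and invoke uniqueness. First I would define $q_w := \Pr_w[t_u < t_v]$ for each $w \in V(I)$. Working (without loss of generality) in the connected component of $I$ containing $u$ and $v$, the weighted random walk hits $\{u,v\}$ almost surely, so $q_w$ is well defined; moreover $q_u = 1$ and $q_v = 0$ by definition.

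Next I would show $q$ is harmonic away from $\{u,v\}$. For $w \notin \{u,v\}$, condition on the first step of the random walk from $w$: it moves to a neighbor $x$ with probability $c_{wx} / \big(\sum_{y} c_{wy}\big)$, and by the strong Markov property
\[
q_w = \sum_{x} \frac{c_{wx}}{\sum_{y} c_{wy}}\, q_x .
\]
Equivalently, $(L_I q)_w = 0$ for all $w \notin \{u,v\}$. On the other hand, if $\mathbf{p}$ denotes the potentials of a $u$--$v$ electrical flow normalized so that $p_u = 1$ and $p_v = 0$, then since the flow injects/extracts current only at $u$ and $v$, conservation of current at every internal vertex gives exactly $(L_I \mathbf{p})_w = 0$ for all $w \notin \{u,v\}$; that is, $\mathbf{p}$ is harmonic off $\{u,v\}$ with the same boundary values as $q$.

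Finally I would conclude by uniqueness of the Dirichlet solution. The function $r := \mathbf{p} - q$ is harmonic on $V(I) \setminus \{u,v\}$ and satisfies $r_u = r_v = 0$. By the maximum principle on the finite connected graph (a harmonic function attains its extrema on the boundary $\{u,v\}$), $r \equiv 0$, hence $\Pr_w[t_u < t_v] = q_w = p_w$ for every $w$, as claimed. The only points needing care are the well-definedness issues in the disconnected case (handled by restricting to the relevant component) and a clean statement of the graph maximum principle; neither is a real obstacle, since this is a standard fact, and I expect the bulk of the write-up to be the first-step computation establishing harmonicity of $q$.
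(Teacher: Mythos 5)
Your proof is correct, and it is the canonical harmonicity-plus-uniqueness argument (Doyle--Snell); the paper itself gives no proof of this statement, citing it directly from \cite{KM09,MST15}, so there is nothing to compare against. The one point worth stating cleanly in a full write-up is why the Dirichlet problem on $V(I)\setminus\{u,v\}$ has a unique solution --- either via the maximum principle as you sketch, or by noting that the grounded Laplacian $L_I[V\setminus\{u,v\},V\setminus\{u,v\}]$ is invertible on the component containing $u$ and $v$ --- but this is standard and your outline handles it adequately.
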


\begin{theorem}[Special case of Theorem \ref{thm:edge-visits}]\label{thm:online-prob}
Consider two vertices $u,v\in V(I)$. For each edge $e\in \partial v$, let $f_e$ denote the unit $u-v$ electrical flow on $e$. Then for each $e\in \partial v$,

$$\Pr_u[e \text{ traversed to visit $v$ for the first time } ] = f_e$$
\end{theorem}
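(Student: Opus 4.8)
The plan is to deduce this as a genuine special case of Theorem \ref{thm:edge-visits}, exactly as the name of the statement advertises. Fix the start vertex $u$ and the target $v$, let $e = \{a,v\}\in\partial v$ be the edge in question oriented from $a$ toward $v$, and apply Theorem \ref{thm:edge-visits} with $s\gets u$, $t\gets v$, its generic vertex set to $a$, and its generic edge set to $e$. This yields $p_a c_e = \textbf{E}_u[\,\text{number of times $e$ is crossed from } a\to v \text{ before } t_v\,]$, where $p_a = b_{uv}^T L_I^+ b_{av}$.

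The first thing I would argue is that the count on the right is a $\{0,1\}$-valued random variable. The first time the walk traverses $e$ from $a$ to $v$ it has arrived at $v$, so no such traversal can occur strictly before $t_v$; with the convention (the one that makes Theorem \ref{thm:edge-visits} nontrivial) that the step arriving at $v$ is counted, the count equals $1$ exactly when the walk's first visit to $v$ is along $e$ and $0$ otherwise. Hence the expectation equals $\Pr_u[e \text{ traversed to visit $v$ for the first time}]$. Separately, writing $\phi := L_I^+ b_{uv}$ for the potential vector of the unit $u\to v$ electrical flow and using $b_{av} = b_a - b_v$, one gets $p_a = b_{uv}^T L_I^+(b_a - b_v) = \phi_a - \phi_v$, so $p_a c_e = c_e(\phi_a - \phi_v)$, which is precisely the amount of unit $u\to v$ electrical flow across the oriented edge $e=\{a,v\}$, i.e. $f_e$ (consistently nonnegative, since $v$ is the sink and hence has minimum potential). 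Combining the two identities gives $f_e = \Pr_u[e \text{ traversed to visit $v$ for the first time}]$, as claimed.

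I expect essentially no obstacle here beyond pinning down the ``before $t_v$'' convention so that the indicator argument is airtight: if the arrival step at $v$ were excluded, the count would be identically zero, so it must be included. The cleanest way to confirm this (and a useful sanity check to record) is that summing the identity over all $e\in\partial v$ gives $\sum_{e\in\partial v} f_e = 1$ by Kirchhoff's law at the sink $v$, which matches the fact that the walk reaches $v$ along exactly one edge. Everything else is the two one-line identities above.
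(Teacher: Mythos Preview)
Your proposal is correct and follows exactly the route the paper intends: the paper does not give a separate proof but simply labels the statement as a ``Special case of Theorem~\ref{thm:edge-visits},'' and you have spelled out precisely that specialization, including the key observation that the crossing count is $\{0,1\}$-valued when the target vertex is an endpoint of the edge. Your sanity check via Kirchhoff's law at $v$ is a nice way to pin down the ``before $t_v$'' convention.
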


The first result motivates offline shortcutting, while the second motivates online shortcutting. Efficient Laplacian solvers are approximate rather than exact. A trick due to Propp \cite{P10} allows us to get around this issue in expected near-linear time.

Our algorithm maintains a data structure $D$ of shortcutting probabilities that is internal to it and an accuracy parameter $\ep_D$, which initially is $1/|X_C|$. Let $\mc C$ be the clan containing $S_C$. For each vertex $v\in C$ of a shortcutter $S_C$ and the set $X_C = V(\texttt{deleted}(\mc C))\cap S_C$, $D$ stores an $\ep_D$-approximations $q_{vx}$ for all $x\in X_C$ and $q_v$ to $\Pr_v[t_x < t_{(X_C\setminus \{x\})\cup \partial S_C}]$ for all $x\in X_C$ and $\Pr_v[t_{\partial S_C} < t_{X_C}]$ respectively. It occasionally recomputes $D$ when it needs higher accuracy in accordance with Propp's trick. Also in accordance with Propp's trick, $D$ represents these probabilities as subintervals of $[0,1]$, with $r_{vx} = q_{vx} + \sum_{y \text{ before } x} q_{vy}$ for an arbitrary ordering of $X_C$.

\begin{algorithm}[H]
\DontPrintSemicolon

    \tcp{offline part}

    $p\gets$ uniformly random number in $[0,1]$\;

    $x_1,x_2,\hdots,x_k\gets $ arbitrary ordering of $X_C$\;

    \While{$p$ is within $\ep_D$ distance of some $r_{vx}$}{
        
        $\ep_D\gets \ep_D/2$\;

        recompute $D$ for $S_C$\;
    }

    \If{$p$ is in a $q_{vx}$ interval}{

        \Return{an arbitrary edge incident with $x$}\;

    }\Else{

        \tcp{$p$ is in the $q_v$ interval, so switch to online}

        $I\gets H[S_C\cup\partial S_C]\setminus X_C$, with $\partial S_C$ identified to a vertex $s$\;

        Compute $\ep$-additive approximations to $u-s$ electrical flows for decreasing $\ep$ repeatedly using Propp's trick to sample an escape edge from $I$\;

        \Return{the sampled escape edge}

    }

\caption{$\Shortcut(S_C,v)$}
\end{algorithm}

The runtime analysis of this algorithm takes place over three parts: preprocessing, runtime to hit $X_C$, and the runtime to hit $\partial S_C$:

\begin{lemma}\label{lem:shortcut}
$\Shortcut(S_C,v)$ takes as input a shortcutter $S_C$ in a clan $\mc C$ and a vertex $v\in C$. It samples the first vertex $w\in X_C\cup \partial S_C$ that the random walk starting at $v$ in $H$ hits with the correct probability. The edge is correct if that vertex is outside of $S_C$. $\Shortcut$ satisfies the following runtime guarantees:

\begin{itemize}
\item (Preprocessing) The total work to update $D$ over an arbitrary number of uses of $\Shortcut$ on $S_C$ is at most $\tilde{O}(|X_C||E(S_C)\cup \partial S_C|)$ in expectation.
\item (Runtime to hit $X_C$) If $\Shortcut(S_C,v)$ returns some vertex in $X_C$, it took $\tilde{O}(1)$ time to do so, excluding time to update $D$.
\item (Runtime to hit $\partial S_C$) If $\Shortcut(S_C,v)$ returns some vertex in $\partial S_C$, it took $\tilde{O}(|E(S_C)\cup \partial S_C|)$ time to do so in expectation.
\end{itemize}
\end{lemma}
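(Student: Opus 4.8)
The plan is to establish, in order, that $\Shortcut(S_C,v)$ produces the claimed sample and then the three running-time bounds; the correctness of the sample and the preprocessing bound carry essentially all of the content. Throughout I would use Theorem \ref{thm:offline-prob} to read hitting probabilities as harmonic functions, Theorem \ref{thm:online-prob} to read first-visit edge probabilities as electrical flows, an approximate Laplacian solver (\cite{CKMPPRX14}) to compute both to additive error $\ep_D$, and Propp's trick (\cite{P10}) to convert the approximations into exact samples.

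\textbf{Correctness.} The events ``the walk from $v$ reaches $x$ before every other vertex of $X_C$ and before crossing $\partial S_C$'' ($x\in X_C$), together with ``the walk crosses $\partial S_C$ before reaching $X_C$,'' partition the sample space, so $\Pr_v[t_x<t_{(X_C\setminus\{x\})\cup\partial S_C}]$ over $x\in X_C$ and $\Pr_v[t_{\partial S_C}<t_{X_C}]$ sum to $1$; by Theorem \ref{thm:offline-prob} each is the value at $v$ of a harmonic function on $H[S_C\cup\partial S_C]$ obtained by grounding the complementary terminal set and putting unit potential on the other, so a single Laplacian solve per $x$ (and one more) yields all the stored $q_{vx}$ and $q_v$. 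Drawing $p\in[0,1]$ uniformly and halving $\ep_D$ / recomputing $D$ whenever $p$ lies within $\ep_D$ of an endpoint $r_{vx}$ is exactly Propp's trick, and it gives an exact draw from the distribution over $\{\text{hit }x:x\in X_C\}\cup\{\text{exit}\}$. If $p$ lands in a $q_{vx}$-interval, returning an edge at $x$ is correct, since then the first vertex of $X_C\cup\partial S_C$ visited is $x$. Otherwise we have conditioned on the walk leaving $S_C$ before touching $X_C$; the key observation is that this conditioned walk is the Doob transform of the $H$-walk by the harmonic function $h(u):=\Pr_u[t_{\partial S_C}<t_{X_C}]$ — whose values on all of $S_C\setminus X_C$ are produced, via Theorem \ref{thm:offline-prob}, by the same solve that produced the $q_v$'s — and that the Doob transform of a reversible walk by a harmonic function is again a reversible walk, on the network $I$ on vertex set $(S_C\setminus X_C)\cup\{s\}$ (with $\partial S_C$ identified to the sink $s$ and $h(s):=1$) carrying conductances $c_e\,h(a)h(b)$ on each edge $e=\{a,b\}$. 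By Theorem \ref{thm:online-prob} the edge by which this walk first reaches $s$ is distributed as the unit $v$--$s$ electrical flow of $I$ restricted to $\partial S_C$; Propp's trick applied to the solver's approximate flows makes the sample exact, and the returned edge lies outside $S_C$, as required.

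\textbf{Running times.} If $\Shortcut$ returns a vertex of $X_C$, the only work (excluding updates to $D$) is drawing $p$ and a binary search among the $|X_C|+1\le\mathrm{poly}(m)$ stored intervals: $\tilde O(1)$. If it returns a vertex of $\partial S_C$, the cost is the online loop; each iteration is one approximate Laplacian solve on $I$, which has $|E(S_C)\cup\partial S_C|$ edges, and by the standard analysis of Propp's trick the expected number of iterations to resolve a single call is $O(1)$, for $\tilde O(|E(S_C)\cup\partial S_C|)$ expected. For the preprocessing bound, one recomputation of $D$ re-solves, for each $x\in X_C$, a single Laplacian system on $H[S_C\cup\partial S_C]$ that yields $q_{vx}$ at every $v\in C$ simultaneously, plus one more for the $q_v$'s, costing $\tilde O(|X_C|\,|E(S_C)\cup\partial S_C|)$. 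It remains to bound the expected number of recomputations: a recomputation is triggered only when a fresh $p$ falls within the current $\ep_D$ of one of the $\le|X_C|+1$ endpoints; $\ep_D$ starts at $1/|X_C|$ and halves each time, so after $K$ recomputations $\ep_D=1/(|X_C|2^K)$, whence $K\le O(1)+\log_2(1/M)$ where $M$ is the minimum distance over all drawn $p$'s to any endpoint. Since $M$ is a minimum of $\mathrm{poly}(m)$ samples whose density near $0$ is $O(|X_C|)$, $\mathbf{E}[\log_2(1/M)]=O(\log m)$, so $\mathbf{E}[K]=\tilde O(1)$ and the total expected update work is $\tilde O(|X_C|\,|E(S_C)\cup\partial S_C|)$.

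\textbf{Main obstacle.} I expect the delicate step to be the online case: one must apply Theorem \ref{thm:online-prob} to the residual network \emph{with the conditioning harmonic function $h$ folded into the conductances}, since running a plain walk on the bare residual network $H[S_C\cup\partial S_C]\setminus X_C$ samples the walk \emph{restricted to} that network, not the walk \emph{conditioned on} escaping before $X_C$, and these distributions differ. The hypotheses that make the reduction clean are the deletion-set condition (every deleted edge, hence every vertex of $X_C$, avoids active parts) and the fact that $S_C$ is used only when ``Ready'' (all of $S_C$ already visited, so fast-forwarding past it loses no first-visit information); checking the Doob-transform / reweighted-network identity and the edge-by-edge correspondence between $I$ and $H$ is the calculation that needs care. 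A secondary point is making the Propp's-trick recomputation-count argument robust as the number of calls to $\Shortcut$ grows.
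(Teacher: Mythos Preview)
Your offline-phase correctness and all three runtime bounds track the paper's arguments closely (your preprocessing estimate via $K\le O(1)+\log_2(1/M)$ and $\mathbf{E}[\log(1/M)]=O(\log m)$ is just a repackaging of the paper's geometric sum over accuracy levels, valid in the $N=\mathrm{poly}(m)$-call regime the algorithm operates in). The sole point of divergence is the online phase, and here your approach is genuinely different --- and, as you anticipate in your ``Main obstacle,'' necessarily so.

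The paper asserts without justification that for every boundary vertex $w$,
\[
\Pr_v\!\bigl[t_w<t_{(\partial S_C\setminus\{w\})\cup X_C}\ \text{in }H\ \big|\ t_{\partial S_C}<t_{X_C}\bigr]
=\Pr_v\!\bigl[t_w<t_{\partial S_C\setminus\{w\}}\ \text{in }H\setminus X_C\bigr],
\]
and then runs Theorem~\ref{thm:online-prob} on the bare network $I=H[S_C\cup\partial S_C]\setminus X_C$. Your suspicion that this identity fails is correct. A witness: take $S_C=\{v,a,b,c\}$ with unit-conductance edges $\{v,b\},\{v,c\},\{b,a\},\{b,w_1\},\{c,w_2\}$, $X_C=\{a\}$, and boundary $\{w_1,w_2\}$; then the true conditioned exit law from $v$ is $(w_1,w_2)=(2/5,\,3/5)$, while the exit law in $H\setminus\{a\}$ is $(1/2,\,1/2)$. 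The mechanism is exactly the one you name: paths through $b$ are down-weighted under conditioning because $b$ is adjacent to $X_C$, whereas deletion leaves the symmetry between $b$ and $c$ intact. Your Doob reweighting $c_e\mapsto c_e\,h(u)h(v)$ with $h(\cdot)=\Pr_{\cdot}[t_{\partial S_C}<t_{X_C}]$ recovers the correct conditioned exit law, and costs nothing extra since $h$ is produced by the same Laplacian solve that yields the $q_v$'s; the solve on the reweighted $I$ has the same size, so the runtime bounds are unchanged.

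One caveat on scope: what your argument actually establishes is correctness of a \emph{corrected} $\Shortcut$ that builds $I$ with Doob-transformed conductances, not of $\Shortcut$ as literally written in the paper; the lemma for the written algorithm, read at face value, appears to be false.
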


\begin{proof}

\textbf{Correctness.} Consider some $x\in X_C$. $p$ is in the $q_{vx}$ interval with probability exactly $\Pr_v[t_x < t_{(X_C\setminus \{x\})\cup \partial S_C}]$ by Theorem \ref{thm:offline-prob}, so any $x\in X_C$ is sampled with the right probability. $\partial S_C$ is sampled with probability exactly $\Pr_v[t_{\partial S_C} < t_{X_C}]$. Moreover, notice that for any non-$S_C$ endpoint $w$ of an edge in $\partial S_C$,

$$\Pr_v[t_w < t_{(\partial S_C\setminus \{w\})\cup X_C} \text{ in $H$ }| t_{\partial S_C} < t_{X_C} \text{ in $H$ }] = \Pr_v[t_w < t_{\partial S_C\setminus \{w\}} \text{ in $H\setminus X_C$}]$$

Since $I = H[S_C\cup \partial S_C]\setminus X_C$, Theorem \ref{thm:online-prob} implies that we are sampling escape edges with the right probabilities for endpoints $w$ of edges in $\partial S_C$.

\textbf{Preprocessing.} For a particular value of $\ep_D$, the probability of recomputation is at most $2|X_C|\ep_D$, as this is a bound on the probability that $p$ is within distance $\ep_D$ of an $r_{vx}$. If this happens, Theorem \ref{thm:offline-prob} implies that the $q_{ux}$s for all $u\in C$ and one $x$ can be computed using one Laplacian solve. Doing this for all vertices in $X_C$ takes $\tilde{O}(|X_C||E(S_C)\cup \partial S_C|\log(1/\ep_D))$ time. The expected time is at most

\begin{align*}
\sum_{\text{values of $\ep_D$}}^{\infty} 2|X_C|\ep_D \tilde{O}(|X_C||E(S_C)\cup \partial S_C|\log(1/\ep_D)) &= \sum_{i=0}^{\infty} \frac{i}{2^i} \tilde{O}(|X_C||E(S_C)\cup \partial S_C|)\\
&= \tilde{O}(|X_C||E(S_C)\cup \partial S_C|)\\
\end{align*}

to update $D$, as desired.

\textbf{Runtime to hit $X_C$.} If $x\in X_C$ is sampled, the else block does not execute. Everything in the if block takes $\tilde{O}(1)$ time besides updating $D$, as desired.

\textbf{Runtime to hit $\partial S_C$.} $i$ Laplacian solves on $I$ with error $2^{-i}$ are done with probability at most $2^{-i}$ to compute all of the exit probabilities for $v$ out of $S_C$. Therefore, the expected work is $\tilde{O}(|E(S_C)\cup\partial S_C|)$, as desired.
\end{proof}

\subsection{Our implementation of the shortcutting meta-algorithm}

Now, we implement $\PartialSample$. This algorithm is exactly the same as the shortcutting meta-algorithm given at the beginning of Section \ref{sec:fake-overview}:

\begin{algorithm}[H]
\DontPrintSemicolon

    $S\gets \cup_{P\in \mc K} P$\;

    $v\gets $ arbitrary vertex in $H$\;

    $F\gets\emptyset$\;

    \While{there is a vertex in $S$ that has not been visited}{

        $i\gets$ maximum $i$ for which the shortcutter $S_{P_i}$ for the part $P_i\in \mc P_i(\mc E)$ containing $v$ has status ``Ready''\;

        \If{$E(P_i) = \emptyset$}{

            $\{v,w\}\gets$ random edge incident with $v$ with probability proportional to conductance\;

            $e\gets \{v,w\}$\;

        }\Else{

            $e\gets \Shortcut(S_{P_i},v)$\;

        }

        $w\gets $ the $X_{P_i}\cup \partial S_{P_i}$ endpoint of $e$\;

        \If{both endpoints of $e$ are in $S$ and $w$ not previously visited}{

            $F\gets F\cup \{e\}$\;

        }

        $v\gets w$\;

    }

    \Return{$F$}

\caption{$\PartialSample(\mc E,\mc K)$}
\end{algorithm}

The runtime analysis does the following:

\begin{itemize}
\item Preprocessing takes a small amount of time because of $\tau$-modifiedness.
\item Random walk steps and shortcut steps to $X_{P_i}$ take $\tilde{O}(1)$ time. We can afford to charge these steps to traversals over boundary edges of $\mc P_i(\mc E)$ thanks to the deletion set condition.
\item Shortcut steps to $\partial S_{P_i}$ can be charged to a step over an edge in the Schur complement obtained by eliminating all vertices internal to $S_{P_i}$ besides $X_{P_i}$. $\zeta$-conductivity can be used to bound the number of these steps.
\end{itemize}

Picking the maximum level ``Ready'' shortcutter allows us to charge these steps to covering the $i+1$th horde, since the above ``Boundary'' shortcutter contains a closeby uncovered vertex.

Correctness relies on the following fact, which we restate from the overview:

\thmaldousbroder*

\begin{figure}
\includegraphics[width=1.0\textwidth]{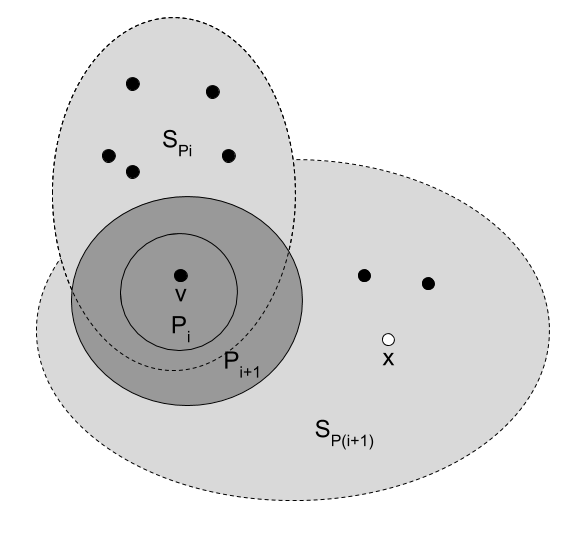}
\caption{Runtime analysis of the shortcutted random walk in $\PartialSample$. When $S_{P_i}$ is used to shortcut a random walk starting at $v$, there is guaranteed to be an unvisited vertex $x$ in $S_{P_{i+1}}$. By carving, $x$ is within distance $\mucarve \alpha^{(i+1)/(\sigma_0+1)}r_{min}$ of $v$. Therefore, $R = \mucarve \alpha^{(i+1)/(\sigma_0+1)}r_{min}$ can be used when applying Lemma \ref{lem:walk-bound}. $\alpha^{(i+1)/(\sigma_0+1)}r_{min}$ partially cancels with an $\alpha^{i/(\sigma_0+1)}r_{min}$ in the denominator of both the boundedness and conductivity of $\mc E_i$.}
\label{fig:rw-analysis}
\end{figure}

\lempartialsample*

\begin{proof}[Proof of Lemma \ref{lem:partial-sample}]
We start with correctness. We need to sample the intersection of a random tree with $Z = \cup_{P\in \mc K} E(P)$. By Theorem \ref{thm:aldous-broder}, it suffices to show that $\PartialSample$ finds all of the edges used to visit $S$ for the first time, since all edges in $Z$ have both of their endpoints in $S$.

Since $S_{P_i}$ has a status of ``Ready,'' it is either (1) covered or (2) not necessary to use. If (2) is the case, then $E(P_i) = \emptyset$, which means that a true random walk step out of $v$ is performed. If (1) is the case, then using the shortcutter $S_{P_i}$ will not skip any first visits to vertices in $S$. Furthermore, $X_{P_i}\cap S$ has been visited, so there is no need to keep track of the correct edge used to visit these vertices. By Lemma \ref{lem:shortcut}, $\Shortcut$ returns the correct visit edge if the walk exits through $\partial S_{P_i}$. Therefore, $\PartialSample$ does not miss any first visits to $S$, which means that it returns the intersection of a random tree with $Z$.

Now, we bound the runtime of $\PartialSample$. We break this analysis up into a number of different types of steps. First, though, we make an observation that is relevant for all types. $v$ is within distance $\alpha^{(i+1)/(\sigma_0+1)}\alpha^{o(1)}r_{min}$ of some unvisited vertex in $S$ because either (1) $S_{P_{i+1}}$ has status ``Boundary'' or (2) $i = \sigma_0$, in which case the diameter of the graph is at most $\alpha r_{min} = \alpha^{(i+1)/(\sigma_0+1)} r_{min}$.

\textbf{If-statement steps.} These occur when $E(P_i) = \emptyset$, including the case in which $i$ does not exist. In this case, one random walk step is performed incident with $v$. This takes $O(\log n)$ time to execute and occurs over an edge of $\partial \mc P_i(\mc E)$, where $\partial \mc P_i(\mc E) := \cup_{P'\in \mc P_i(\mc E)} \partial P'$.

Apply Lemma \ref{lem:walk-bound} to $I\gets H$, all edges in $\partial \mc P_i(\mc E)$, $S\gets S$, and $R\gets \mucarve\R{i+1}$. By the $\kappa$-boundedness of $\mc E$, the total number of steps across edges in $\partial \mc P_i(\mc E)$ within distance $R$ of an unvisited vertex of $S$ is at most

$$\tilde{O}(c^H(\partial \mc P_i(\mc E)) R)\le \frac{\kappamax m}{\R{i}} \mucarve \R{i+1} \le \mucarve \kappamax m \alpha^{1/(\sigma_0+1)}$$

per clan. Summing over all clans yields a bound of $\ellmax \mucarve \kappamax m \alpha^{1/(\sigma_0+1)}$, as desired.

\textbf{Else-statement $X_{P_i}$ steps.} Notice that since $E(P_i)$ is nonempty, the deletion set condition applies, which ensures that no edge in $\texttt{deleted}(\mc C)$ for any $\mc C$ is on the boundary of $\mc P_i$. The random walk, in going from $P_i$ to $X_{P_i}$, must cross an edge of $\partial P_i$, as the deletion set condition implies that $X_{P_i}$ is disjoint from $\partial P_i$. As discussed in the If-statement steps bound, only $\ellmax \mucarve \kappamax m \alpha^{1/(\sigma_0+1)}$ steps occur. By the $X_{P_i}$ step condition of Lemma \ref{lem:shortcut}, each of these steps takes $\tilde{O}(1)$ time to execute.

\textbf{Else-statement $\partial S_{P_i}$ steps.} Lemma \ref{lem:shortcut} says that $\Shortcut$ returns an edge incident with the first vertex in $X_{P_i}\cup \partial S_{P_i}$ that the random walk visits. In particular, a shortcut step directly to $\partial S_{P_i}$ can be charged to random walk steps across a $C_{P_i}-\partial S_{P_i}$ edge in $I \gets \texttt{Schur}(H,X_{P_i}\cup C_{P_i}\cup (V(H)\setminus S_{P_i}))$. Let $F\gets E_I(C_{P_i},V(H)\setminus S_{P_i})$, $S\gets V(I)\cap S$, and $R\gets \mucarve\alpha^{(i+1)/(\sigma_0+1)}\alpha^{o(1)}r_{min}$. By Lemma \ref{lem:walk-bound}, the total number of steps that occur across $E_I(C_{P_i},V(H)\setminus S_{P_i})$ that are within distance $R$ of $S\cap V(I)$ is at most

$$c^I(E_I(C_{P_i},V(H)\setminus S_{P_i})) R$$

in expectation. We start by arguing that each $\partial S_{P_i}$ shortcut step can be charged to one of these random walk steps. Recall that $S_{P_i}$ is only used when $S\cap S_{P_i}$ is covered. All vertices that were eliminated to obtain $I$ were in $S_{P_i}$, so all shortcut steps occur within distance $R$ of some vertex in $S\cap V(I)$. Each shortcut step can be charged to at least one step across an edge of $E_I(C_{P_i},V(H)\setminus S_{P_i})$, as discussed earlier. Therefore, the number of shortcut steps is at most $c^I(E_I(C_{P_i},V(H)\setminus S_{P_i})) R$.

Now, we bound $c^I(E_I(C_{P_i},V(H)\setminus S_{P_i}))$. Let $I' = \texttt{Schur}(H\setminus \texttt{deleted}(\mc C_{P_i}),X_{P_i}\cup C_{P_i}\cup (V(H)\setminus S_{P_i}))$, where $\mc C_{P_i}$ is the clan that contains the shortcutter $S_{P_i}$. Each edge of $\texttt{deleted}(\mc C_{P_i})$ with endpoints in $S_{P_i}$ has both of its (identified) endpoints in $X_{P_i}$, by definition of $X_{P_i}$. Therefore, deleting these edges does not affect the conductance of the relevant set:

$$c^I(E_I(C_{P_i},V(H)\setminus S_{P_i})) = c^{I'}(E_{I'}(C_{P_i},V(H)\setminus S_{P_i}))$$

Eliminating $X_{P_i}$ also can only increase the conductance of this set. Precisely, let $I'' = \texttt{Schur}(H\setminus \texttt{deleted}(\mc C_{P_i}),C_{P_i}\cup (V(H)\setminus S_{P_i}))$. Then

$$c^{I'}(E_{I'}(C_{P_i},V(H)\setminus S_{P_i}))\le c^{I''}(E_{I''}(C_{P_i},V(H)\setminus S_{P_i}))$$

This quantity has already been defined. Recall that $H_{\mc C_{P_i}} = H\setminus \texttt{deleted}(\mc C_{P_i})$. As a result,

$$c^{I''}(E_{I''}(C_{P_i},V(H)\setminus S_{P_i})) = c^{\mc C_{P_i}}(S_{P_i})$$

$\zeta$-conductivity can be used to bound this quantity. In particular, summing these bounds over $\mc C_{P_i}$ shows that the total number of times shortcutters in $\mc C_{P_i}$ can be used to travel to their true boundaries is at most

\begin{align*}
\tilde{O}\left(\sum_{S_C\in \mc C_{P_i}} c^{\mc C_{P_i}}(S_{P_i}) R\right) &\le \tilde{O}\left(\frac{\zeta m^{1/\sigma_1} s_{\mc C_{P_i}}}{\R{i}} R\right)\\
&\le \tilde{O}(\mucarve \zetamax s_{\mc C_{P_i}} m^{1/\sigma_1} \alpha^{1/(\sigma_0+1)})\\
\end{align*}

By Lemma \ref{lem:shortcut}, using any shortcutter $S_C\in \mc C_{P_i}$ takes at most $\max_{S_C\in \mc C_{P_i}} |E(S_C)\cup \partial S_C|$ time in expectation. By definition of $s_{\mc C_{P_i}}$, the total work done using shortcutters in $\mc C_{P_i}$ is at most

$$(\max_{S_C\in \mc C_{P_i}} |E(S_C)\cup \partial S_C|)\tilde{O}(\mucarve \zetamax s_{\mc C_{P_i}} m^{1/\sigma_1} \alpha^{1/(\sigma_0+1)})\le \mucarve\zetamax m^{1+1/\sigma_1}\alpha^{1/(\sigma_0+1)}$$

in expectation. Since $\mc E$ contains $\ellmax$ clans, the total amount of work due to online shortcut steps is at most $\ellmax \mucarve\zetamax m^{1+1/\sigma_1}\alpha^{1/(\sigma_0+1)}$.

\textbf{Else-statement preprocessing work.} Each edge in $\texttt{deleted}(\mc C)$ for any clan $\mc C$ is incident with at most two shortcutters in $\mc C$ since shortcutters in a clan are disjoint. By Lemma \ref{lem:shortcut}, an expected $O(1)$ Laplacian solves on a cluster with size at most $\max_{S_C\in \mc C} |E(S_C)\cup \partial S_C|$ happen for each edge in $\texttt{deleted}(\mc C)$. Therefore, the expected total amount of work is at most

$$(\max_{S_C\in \mc C} |E(S_C)\cup \partial S_C|)|\texttt{deleted}(\mc C)|$$

Since $\mc C$ is $\tau$-modified,

$$(\max_{S_C\in \mc C} |E(S_C)\cup \partial S_C|)|\texttt{deleted}(\mc C)|\le (\max_{S_C\in \mc C} |E(S_C)\cup \partial S_C|)(\taumax m^{1/\sigma_1} s_{\mc C})\le \taumax m^{1+1/\sigma_1}$$

The desired bound follows from the fact that $\mc E$ only has $\ellmax$ clans.

\textbf{Completing the proof.} All work that $\PartialSample$ does falls into one of these four categories for some $i$. Since there are only $\sigma_0$ possible values of $i$, the total runtime is at most $\tilde{O}(((\zetamax + \kappamax)\mucarve + \taumax)\ellmax m^{1+1/\sigma_1}\alpha^{1/(\sigma_0+1)})$, as desired.

\end{proof}

\newpage

\section{Choosing vertices to condition on}\label{sec:choose-parts}

In this section, we implement $\ConditioningVerts$. This amounts to choosing which parts to condition on when there are multiple levels of shortcutters.

Before doing this, we implement a simple version, $\SimpleConditioningVerts$, which works when $\sigma_0 = 1$. This method illustrates the utility of conditioning on parts whose shortcutters are largest. When $\sigma_0 > 1$, we can no longer chose parts whose shortcutters in all hordes are largest. Instead, we find parts whose shortcutters are ``locally largest'' in all hordes; i.e. they are well-separated from cores of larger shortcutters. In this case, the parts chosen for conditioning can be carved out anyways. This choice makes enough progress because one can show that the sizes of shortcutters for parts chosen decrease by a factor of $m^{1/\sigma_1}$. After conditioning $\sigma_1$ times, one can condition on parts in a higher-level horde. Since there are $\sigma_0$ hordes, the number of rounds required is $\sigma_1^{\sigma_0}$.

\subsection{Warmup: A version of $\ConditioningVerts$ for $\sigma_0 = 1$ (one shortcutter per vertex)}\label{subsec:cond-warmup}

In this section, we implement $\SimpleConditioningVerts$, which satisfies Lemmas \ref{lem:holes} and \ref{lem:progress} when $\sigma_0 = 1$. $\SimpleConditioningVerts$ is not used in any way to prove Theorem \ref{thm:main-result-aspect}, but is included to motivate some of the ideas behind $\ConditioningVerts$. Recall that using no shortcutters (Aldous-Broder) takes $\tilde{O}(m\alpha)$ time. Replacing $\ConditioningVerts$ with $\SimpleConditioningVerts$ results in a $\tilde{O}(m^{1+o(1)}\alpha^{1/2+o(1)})$-time algorithm.

The $\SimpleConditioningVerts(\mc E)$ routine just takes the input empire $\mc E$, outputs the parts whose shortcutters have size within an $m^{1/\sigma_1}$-factor of the maximum, and ``carves'' the selected parts out of every shortcutter:

\begin{figure}
\begin{center}
\includegraphics[width=0.8\textwidth]{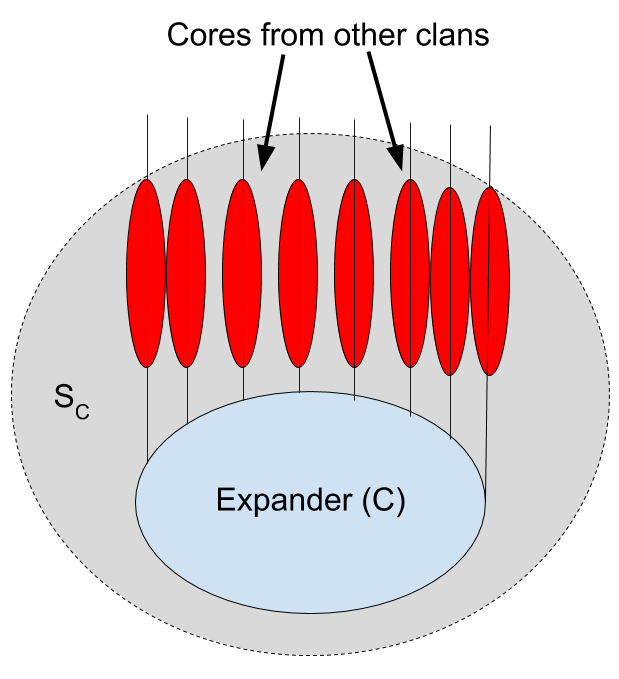}
\end{center}
\caption{The situation that $\SimpleConditioningVerts$ is trying to avoid. Suppose that the shortcutter $S_C$ makes up a large fraction of the graph and that $\SimpleConditioningVerts$ were to select the red cores to condition on first. Applying the algorithm in Lemma \ref{lem:holes} could cause the conductance of $S_C$ to increase untenably. Intuitively, this is because the shortcutter $S_C$ will take the random walk from $C$ to each of the deleted cores at least once. This is bad, though, as the deleted cores could correspond to very small shortcutters. As a result, there could be $\Theta(n)$ of them. $S_C$ contains $\Theta(m)$ edges, so the work just due to this shortcutter is $\Theta(mn)$, which is quadratic! If one conditions on parts in $C$ before the red cores, this issue does not come up because the large size of $S_C$ ensures that there are not many of them.}
\label{fig:quadratic-blowup}
\end{figure}

\begin{algorithm}[H]
\DontPrintSemicolon

    \KwData{an empire $\mc E$ consisting of one $\murad r_{min}\sqrt{\alpha}$-horde $\mc H$}

    \KwResult{a set of parts $\mc K$ to condition on}

    $\mc X\gets $ set of parts $P\in \mc P_1(\mc E)$ that have nonempty $E(P)$ (are active)\;

    $\mc K\gets $ set of parts $P\in \mc X$ whose shortcutters $S_P$ have size at least $m^{-1/\sigma_1} \max_{Q\in \mc X} |E(S_Q)\cup \partial S_Q|$\;\label{line:simple-size}

    \tcp{carving}
    \ForEach{active shortcutter $S_C$ in some clan of $\mc E$}{

        $\mc Z\gets $ parts in $\mc K$ with distance greater than $\mucarve r_{min}\sqrt{\alpha}$ from all vertices in $C$\;

        Remove all parts in $\mc Z$ from $S_C$\;\label{line:simple-deletion}

    }

    \Return{$\mc K$}

\caption{$\SimpleConditioningVerts(\mc E)$, never executed}
\end{algorithm}

We now analyze this algorithm. We need to show that

\begin{itemize}
\item conductivity does not increase much (Lemma \ref{lem:holes})
\item enough progress is made (Lemma \ref{lem:progress})
\end{itemize}

We start with Lemma \ref{lem:holes}. The key idea is that conditioning on the parts with largest shortcutters ensures that there are not many of them. Specifically, each of the parts $P$ assigned to a shortcutter $S_P$ is contained in $S_P$'s core $C_P$, which has low effective resistance diameter ($O(\sqrt{\alpha}r_{min})$). Since the shortcutters come from a small number of clans, each of which consists of disjoint shortcutters, the number of distinct shortcutters for chosen parts is at most $m |\mc E_1|/(\text{minimum size of shortcutter for a selected part})$. Since $\mc E_1$ only consists of $m^{o(1)}$ clans and all shortcutters for selected parts have size within an $m^{1/\sigma_1}$-factor of the maximum across all of $\mc E_1$, the number of distinct shortcutters for chosen parts is at most $m^{1+o(1)}m^{1/\sigma_1}s_{\mc C}$ for any clan $\mc C\in \mc H_1$. In particular, all parts chosen for conditioning are contained in a small number of low-radius clusters (the cores). The following key proposition finishes the proof:

\begin{proposition}\label{prop:delete-low-rad}
Let $H$ be a graph, $\mc E$ be an empire in this graph, and $\mc L$ be a set of clusters with $H$-effective resistance diameter at most $\psi \R{i}$ for some $i\in [\sigma_0]$. Consider a well-spaced clan $\mc C\in \mc E_i$. Let $S := \cup_{C\in \mc L} C$ be the set of vertices in clusters of $\mc L$.

Obtain a new clan $\mc C'$ by deleting all vertices $v\in S$ from shortcutters $S_C\in \mc C$ for which $v$ is not within $H$-effective resistance distance $\gammadel\psi\muapp \R{i}$ of any vertex in $C$, where $\gammadel = 1000$.

Then

$$\sum_{S_C\in \mc C'} c^{\mc C'}(S_C)\le \frac{\muapp(|\mc L| + |\texttt{deleted}(\mc C)|)}{\R{i}} + \sum_{S_C\in \mc C} c^{\mc C}(S_C)$$
\end{proposition}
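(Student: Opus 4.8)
The plan is to bound the conductance increase cluster-by-cluster. For each cluster $C'\in\mc L$, let $\Delta(C')$ be the set of shortcutters $S_C\in\mc C$ from which some vertex of $C'$ is deleted; equivalently, the shortcutters $S_C$ with $C'\cap S_C\neq\emptyset$ but with $C$ at $H_{\mc C}$-effective resistance distance greater than $\gammadel\psi\muapp\R{i}$ from all of $C'$. Because $\gammadel\psi\muapp\R{i}$ is far larger than the diameter $\murad\R{i}$ of the cores of $\mc C$ (which is at most $\psi\muapp\R{i}$ when $\psi\ge 1$ — and one may safely assume $\psi\ge 1$ since otherwise $\mc L$ consists of tiny clusters and the argument only gets easier), the cluster $C'$, having diameter at most $\psi\R{i}$, is \emph{tied} to each shortcutter in $\Delta(C')$ in the sense defined before Definition of well-spacedness (intersection holds by construction; well-separatedness holds because the separation threshold $\beta_0\beta R$ with $\beta=\psi\muapp/\murad$ or similar is dominated by $\gammadel\psi\muapp\R{i}$ once $\gammadel=1000>\beta_0=100$). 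By the well-spacedness of $\mc C$, no cluster is tied to more than one shortcutter of $\mc C$, so $|\Delta(C')|\le 1$. Hence each cluster of $\mc L$ is responsible for modifying at most one shortcutter.

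Next I would quantify, for a single shortcutter $S_C$ and a single deleted cluster $C'$, how much $c^{\mc C}(S_C)$ can grow. Deleting vertices from $S_C$ replaces $S_C$ by a smaller $S_C'$; the conductance $c^{\mc C'}(S_C')$ is the total conductance of edges from $C$ to $V(H_{\mc C})\setminus S_C'$ in $\texttt{Schur}(H_{\mc C}, C\cup(V(H_{\mc C})\setminus S_C'))$. The standard fact here (used throughout the paper, and the content behind Proposition~\ref{prop:separated-large}) is that shrinking a shortcutter whose removed vertices are well-separated from $C$ — at distance $\ge\gammadel\psi\muapp\R{i}$, which is $\gg R=\murad\R{i}$ — increases its Schur-complement conductance by at most $O(\muapp/\R{i})$, since the new boundary vertices sit at effective resistance $\Omega(\gammadel\psi\muapp\R{i})$ from $C$ and Foster-type / leverage-score accounting (as in Lemma~\ref{lem:well-sep-lev-score} and the $\Delta_{\mc F}$ bookkeeping in Proposition~\ref{prop:separated-large}) caps the number of such edges times their leverage. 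I would phrase this as: the incremental conductance charged to the pair $(S_C,C')$ is at most $\muapp/\R{i}$. Summing over the at most $|\mc L|$ such pairs gives an increase of at most $\muapp|\mc L|/\R{i}$ beyond $\sum_{S_C\in\mc C}c^{\mc C}(S_C)$.

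Finally there is the subtlety that $c^{\mc C'}$ is measured with respect to $H_{\mc C'}=H\setminus\texttt{deleted}(\mc C')$ and $\texttt{deleted}(\mc C')$ may differ from $\texttt{deleted}(\mc C)$ — indeed carving can force edges into the deletion set, and each such deleted edge can itself contribute $O(\muapp/\R{i})$ to the conductance when its endpoints are Schur-eliminated. This accounts for the $|\texttt{deleted}(\mc C)|$ term: charge $O(\muapp/\R{i})$ per previously-deleted edge incident to a shrunk shortcutter. Combining the two sources of increase yields
$$\sum_{S_C\in \mc C'} c^{\mc C'}(S_C)\le \frac{\muapp(|\mc L| + |\texttt{deleted}(\mc C)|)}{\R{i}} + \sum_{S_C\in \mc C} c^{\mc C}(S_C),$$
as claimed. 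The main obstacle I expect is the middle step: making precise the claim that a single well-separated deletion costs only $O(\muapp/\R{i})$ in Schur-complement conductance — this requires carefully invoking the potential-level-set / leverage-score machinery (Proposition~\ref{prop:separated-large}, Lemma~\ref{lem:well-sep-lev-score}) rather than a naive triangle-inequality estimate, because the Schur complement can in principle create many new small-resistance edges, and controlling their total leverage is exactly where well-separatedness (the factor $\gammadel=1000$) and the Davenport–Schinzel-type counting are doing the real work.
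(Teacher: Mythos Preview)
Your overall architecture---one cluster per shortcutter via well-spacedness, then a per-deletion conductance bound---matches the paper's, but there is a real gap in how the $|\texttt{deleted}(\mc C)|$ term arises, and you are reaching for the wrong tool in the middle step.

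The deletion set does not change here: carving only removes vertices from shortcutters, so $\texttt{deleted}(\mc C')=\texttt{deleted}(\mc C)$ and $H_{\mc C'}=H_{\mc C}$. The $|\texttt{deleted}(\mc C)|$ term has a different origin. The hypothesis gives each $C'\in\mc L$ small $H$-effective-resistance diameter, but both the ``tied'' relation and the conductance $c^{\mc C}(\cdot)$ are defined in $H_{\mc C}=H\setminus\texttt{deleted}(\mc C)$, where effective resistances can be arbitrarily larger. So you cannot directly conclude that $C'$ is tied to at most one shortcutter, nor apply any separation-based conductance bound. The paper handles this by first invoking Lemma~\ref{lem:ball-split}: the clusters of $\mc L$ can be covered by a family $\mc L'$ of clusters with small $H_{\mc C}$-diameter, at the cost of $|\mc L'|\le\muapp(|\mc L|+|\texttt{deleted}(\mc C)|)$. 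That blow-up is exactly where the $|\texttt{deleted}(\mc C)|$ term comes from; your accounting for it (``each such deleted edge can itself contribute $O(\muapp/\R{i})$'') is not the mechanism.

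For the per-deletion bound you invoke Proposition~\ref{prop:separated-large} and Lemma~\ref{lem:well-sep-lev-score}, but those are unnecessary and not obviously applicable in this form. The paper uses the elementary Lemma~\ref{lem:small-holes}: if $C'$ has diameter $\le R$ and sits at distance $\ge\beta_1 R$ from the core $C$, then removing $C'$ from $S_C$ increases $c^H(S_C)$ by at most $1/((\beta_1-4)R)$. This is a direct consequence of Lemma~\ref{lem:well-sep} (identifying two well-separated clusters gives effective resistance $\ge(\beta_1-4)R$ between them) plus Schur-complement monotonicity---no leverage-score or Davenport--Schinzel counting is involved. Once you have $\mc L'$ with $H_{\mc C}$-diameter control, well-spacedness gives one shortcutter per cluster of $\mc L'$, Lemma~\ref{lem:small-holes} gives an additive $1/\R{i}$ per deletion, and summing over $|\mc L'|$ finishes.
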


The proof of this proposition relies on the following lemmas, both of which are proven in the appendix:

\begin{restatable}{lemma}{lemballsplit}\label{lem:ball-split}
Consider a graph $H$ and a set of clusters $\mc D$, each with effective resistance diameter at most $R$. Let $F$ be a set of edges in $H$. Then there is a set of clusters $\mc D'$ with the following properties:

\begin{itemize}
\item (Covering) Each vertex in a cluster of $\mc D$ is in a cluster of $\mc D'$.
\item (Diameter) The effective resistance diameter of each cluster in $\mc D'$ is at most $\muapp R$ in the graph $H\setminus F$.
\item (Number of clusters) $|\mc D'|\le \muapp(|\mc D| + |F|)$.
\end{itemize}
\end{restatable}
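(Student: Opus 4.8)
The plan is to isolate the two distinct ways that deleting $F$ can hurt the effective-resistance geometry of a cluster---\emph{disconnection} and \emph{local metric distortion near the deleted edges}---and to pay for the first with the $|F|$ term and for the second with a small number of extra balls placed around the endpoints of $F$.

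\emph{Step 1: reduce to the connected case.} I would first refine $\mc D$ along the connected components of $H\setminus F$: replace each $C\in\mc D$ by its intersections with the components of $H\setminus F$ that it meets. Since $H\setminus F$ has at most $1+|F|$ components, a short argument (process the edges of $F$ one at a time and split a cluster only when a deletion actually disconnects it) bounds the total number of new pieces created this way by $O(|F|)$, so we obtain $\mc D_0$ with $|\mc D_0|\le|\mc D|+O(|F|)$, each cluster still of $H$-effective-resistance diameter at most $R$ (it is a subset of an original cluster) and now connected in $H\setminus F$. It therefore suffices to work inside one component of $H\setminus F$ and cover each $C\in\mc D_0$ by a bounded number of balls, allowing $O(|F|)$ extra balls globally near $W:=V(F)$.

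\emph{Step 2: the distortion is localized near $W$.} The heart of the proof is the statement that, for $u,v$ in a common cluster $C$, $\texttt{Reff}_{H\setminus F}(u,v)$ stays within an $m^{o(1)}$ factor of $R$ \emph{unless} $u$ or $v$ is close (in the $H\setminus F$ metric) to an endpoint of a deleted edge. To see why, take the unit electrical $u$--$v$ flow $f$ in $H$, of energy $\texttt{Reff}_H(u,v)\le R$, delete its flow on $E(F)$, and cancel the residual demand---which is supported on $W$---using flows inside $H\setminus F$. Equivalently, Sherman--Morrison gives the exact identity
\[\texttt{Reff}_{H\setminus F}(u,v)=\texttt{Reff}_H(u,v)+(z_u-z_v)^{\top}(I-\Pi)^{-1}(z_u-z_v),\]
where $z_x$ is a vector supported on $F$ with $\|z_u-z_v\|_2^2\le\texttt{Reff}_H(u,v)$ and $\Pi$ is the $F\times F$ principal submatrix of the leverage-score projection (eigenvalues in $[0,1)$). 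Writing $(I-\Pi)^{-1}=I+P$ with $P\succeq0$, the eigendirections of $P$ whose eigenvalue is bounded (equivalently, the edges of $F$ that are not near-critical) contribute only $O(R)$ in total, and there are at most $2|F|$ remaining directions, essentially one per near-critical deleted edge. Along each such direction the relevant coordinate of $z_x$ is a potential of a flow whose sources lie in $W$; by the maximum principle / harmonic-measure representation such a potential at a vertex $u$ is a convex combination of its values on $W$, and for $u$ far from $W$ in the $H\setminus F$ metric the random walk from $u$ ``forgets'' its start before reaching $W$, so these convex combinations---hence the dangerous coordinates---agree up to a small error for cluster-mates $u,v$. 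This is the step I expect to be the main obstacle: a single near-bridge edge in $F$ can inflate individual effective resistances by an unbounded factor, so one must use the small-diameter hypothesis on $C$ quantitatively (via an escape/coupling estimate phrased with effective resistances, in the spirit of Lemma~\ref{lem:walk-bound}) to rule out more than $O(|F|)$ vertices being pushed far apart, and then group the vertices that \emph{are} affected by the $\le2|F|$ near-critical directions into only $O(|F|)$ balls.

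\emph{Step 3: assemble $\mc D'$ and count.} For each $C\in\mc D_0$ let $C_{\mathrm{good}}$ be the vertices at $\texttt{Reff}_{H\setminus F}$-distance more than $\muapp R$ from every vertex of $W$; by Step 2, $C_{\mathrm{good}}$ lies in a single $\texttt{Reff}_{H\setminus F}$-ball of radius $\muapp R$, which contributes at most $|\mc D_0|\le|\mc D|+O(|F|)$ clusters to $\mc D'$. Every remaining vertex of every cluster is, by definition, in $\bigcup_{w\in W}B_{H\setminus F}(w,\muapp R)$, a union of at most $|W|\le2|F|$ balls of $\texttt{Reff}_{H\setminus F}$-diameter at most $2\muapp R$; adding these to $\mc D'$ covers everything left over. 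In total $|\mc D'|\le|\mc D|+O(|F|)\le\muapp(|\mc D|+|F|)$, every cluster of $\mc D'$ has $H\setminus F$-effective-resistance diameter $O(\muapp R)$, and the connectivity reduction of Step 1 is what guarantees all of these radii are finite. (For the algorithmic version one would run this using the Johnson--Lindenstrauss embedding of Theorem~\ref{thm:jl} applied to $H\setminus F$ together with approximate nearest-neighbor queries, but as stated the lemma is purely existential.)
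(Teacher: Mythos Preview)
Your Step 2 claim---that the set $C_{\mathrm{good}}$ of vertices far from $W$ in the $H\setminus F$ metric lies in a single $(H\setminus F)$-ball of radius $\muapp R$---is false, and this breaks the counting in Step 3. Here is a concrete counterexample. Take $k+1$ internally vertex-disjoint paths $P_1,\dots,P_{k+1}$ of length $L$ (unit resistances) from $u$ to $v$, let $C=\{u,v\}$, and let $F$ consist of the middle edge of each of $P_1,\dots,P_k$. Then $R=\texttt{Reff}_H(u,v)=L/(k+1)$, while in $H\setminus F$ the only $u$--$v$ connection is $P_{k+1}$, so $\texttt{Reff}_{H\setminus F}(u,v)=L=(k+1)R$. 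The nearest point of $W$ to $u$ in $H\setminus F$ is the $u$-side midpoint of some $P_i$, at effective resistance $L/2$. Thus if $k>2\muapp$ both $u$ and $v$ are at $H\setminus F$-distance more than $\muapp R$ from all of $W$, so $C_{\mathrm{good}}=\{u,v\}$; yet $u$ and $v$ do not fit in any single ball of radius $\muapp R$. Your Sherman--Morrison heuristic also misfires here: the coordinates $z_x[f]=c_f^{1/2}(L_H^+ b_f)[x]$ are potentials in $H$, not $H\setminus F$, so being far from $W$ in the $H\setminus F$ metric does not by itself force these coordinates to agree for $u$ and $v$---indeed in the example the $u$--$v$ flow puts $1/(k+1)$ on every edge of $F$, and the symmetric combination of these $k$ near-critical directions is exactly what produces the blowup.

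The paper's proof takes a completely different route. It first uses $\CoveringCommunity$ to organize the clusters of $\mc D$ into $m^{o(1)}$ well-separated families, then applies Lemma~\ref{lem:well-sep-lev-score} to show the intercluster edges in each family's Schur complement have total leverage score $m^{o(1)}|\mc D|$, then uses Proposition~\ref{prop:subset-sparsify} to sparsify those intercluster edges down to $m^{o(1)}|\mc D|$ many while preserving the metric, and finally reduces to a single-cluster statement (Proposition~\ref{prop:one-ball}) proved by repeatedly peeling off low-stretch spanners and tracking a shrinking dominating set. This machinery is what handles examples like the parallel-paths one: Proposition~\ref{prop:one-ball} allows the output cluster count to scale with $|F|\,\mathrm{polylog}(n)$ rather than forcing a single ``good'' ball per input cluster.
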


\begin{restatable}{lemma}{lemsmallholes}\label{lem:small-holes}
Consider a graph $H$ and two clusters $C$ and $S_C$, with $C\subseteq S_C$. Let $C'$ be disjoint from $C$. Additionally, suppose that

\begin{itemize}
\item The effective resistance diameters of $C$ and $C'$ in $H$ are both at most $R$.
\item The effective resistance distance between any pair of points in $C$ and $C'$ in $H$ is at least $\beta_1 R$.
\item $c^H(S_C) \le \frac{\tau}{R}$. 
\end{itemize}

Then $c^H(S_C\setminus C') \le \frac{\tau + 1/(\beta_1-4)}{R}$.
\end{restatable}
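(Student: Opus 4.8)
The plan is to peel off the contribution of the removed vertices $C'\cap S_C$ to the conductance $c^H(S_C)$ and reduce the whole statement to an estimate about contracting low‑diameter, well‑separated sets. \textbf{Step 1 (reduction via Schur‑complement associativity).} Write $A:=V(H)\setminus S_C$, $B:=S_C\cap C'$, and $W:=S_C\setminus(C\cup B)$, so that $S_C=C\sqcup B\sqcup W$ (using $C\subseteq S_C$ and $C\cap C'=\emptyset$). Then $C\subseteq S_C\setminus C'=C\cup W$ and $V(H)\setminus(S_C\setminus C')=A\cup B$. Put $J:=\texttt{Schur}(H,C\cup A\cup B)$. By Remark~\ref{rmk:schur}, $\texttt{Schur}(H,C\cup A)=\texttt{Schur}(J,C\cup A)$, so $c^H(S_C)$ is the total conductance of $C$--$A$ edges in $\texttt{Schur}(J,C\cup A)$, whereas $c^H(S_C\setminus C')$ is the total conductance of $C$--$(A\cup B)$ edges in $J$, i.e. the $C$--$A$ conductance of $J$ plus the $C$--$B$ conductance of $J$. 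Since passing from $J$ to $\texttt{Schur}(J,C\cup A)$ eliminates $B$, and eliminating a vertex only adds nonnegative conductances among its neighbors, the $C$--$A$ conductance of $J$ is at most $c^H(S_C)\le\tau/R$. It therefore remains to bound the $C$--$B$ conductance of $J$ by $\tfrac{1}{(\beta_1-4)R}$; if $B=\emptyset$ this is trivial, so fix $u_0\in C$ and $v_0\in B$.

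\textbf{Step 2 (reduce to a contraction estimate).} Because the Schur complement preserves effective resistances among kept vertices, in $J$ the sets $C$ and $B\subseteq C'$ still have effective‑resistance diameter $\le R$ and $\texttt{Reff}_J(u,v)\ge\beta_1R$ for all $u\in C$, $v\in B$. Identify $C$ to a single vertex $s$ and $B$ to a single vertex $t$ in $J$, obtaining $J'$; merging vertices merges parallel edges, so the $C$--$B$ conductance of $J$ equals the total conductance of $s$--$t$ edges in $J'$, which is at most the effective conductance $1/\texttt{Reff}_{J'}(s,t)$ (route all current along the direct edges). So it suffices to show $\texttt{Reff}_{J'}(s,t)\ge(\beta_1-4)R$.

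\textbf{Step 3 (the contraction estimate).} Let $\rho:=\texttt{Reff}_J(u_0,v_0)\ge\beta_1R$ and let $\psi^{*}$ be the normalized $u_0$--$v_0$ electrical potentials in $J$, i.e. $\psi^{*}_w=\Pr_w[t_{u_0}<t_{v_0}]$ by Theorem~\ref{thm:offline-prob}; thus $\psi^{*}_{u_0}=1$, $\psi^{*}_{v_0}=0$, $\psi^{*}\in[0,1]$, and $\psi^{*}$ has energy $\sum_e c_e(\psi^{*}_x-\psi^{*}_y)^2=1/\rho$. Using the standard identity $\Pr_w[t_a<t_b]=\tfrac{\texttt{Reff}(w,b)+\texttt{Reff}(a,b)-\texttt{Reff}(w,a)}{2\,\texttt{Reff}(a,b)}$ together with the triangle inequality for $\texttt{Reff}_J$, one gets the \emph{square‑root‑free} bounds $1-\psi^{*}_c\le\texttt{Reff}_J(c,u_0)/\rho\le R/\rho$ for every $c\in C$ and $\psi^{*}_b\le\texttt{Reff}_J(b,v_0)/\rho\le R/\rho$ for every $b\in B$; set $\epsilon:=R/\rho\le1/\beta_1<1/2$. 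Let $\psi'$ be obtained from $(\psi^{*}-\epsilon)/(1-2\epsilon)$ by truncating every coordinate into $[0,1]$. Then $\psi'\equiv1$ on $C$ and $\psi'\equiv0$ on $B$, so $\psi'$ descends to a potential on $J'$ with $\psi'_s=1$, $\psi'_t=0$; since truncation into $[0,1]$ is $1$‑Lipschitz coordinatewise and the energy is shift‑invariant and scales quadratically, the energy of $\psi'$ is at most $\tfrac{1}{(1-2\epsilon)^2}\cdot\tfrac1\rho$. By the Dirichlet principle, $\texttt{Reff}_{J'}(s,t)\ge(1-2\epsilon)^2\rho=\rho-4R+4R\epsilon\ge\beta_1R-4R=(\beta_1-4)R$, using $\epsilon\rho=R$ and $\rho\ge\beta_1R$. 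Combining with Step~1 gives $c^H(S_C\setminus C')\le\tfrac{\tau}{R}+\tfrac{1}{(\beta_1-4)R}$, as claimed.

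I expect Step~3 to be the only real obstacle: the crucial point is the square‑root‑free oscillation bound for the hitting‑probability potential over a set of effective‑resistance diameter $R$ — equivalently, that contracting such a set (and also contracting the target cluster) cannot decrease an effective resistance by more than $4R$ — which is precisely what delivers the clean factor $\beta_1-4$ rather than something like $(\sqrt{\beta_1}-2)^2$ that a naive Cauchy–Schwarz/flow‑superposition argument would give. Everything else is routine manipulation of Schur complements and Dirichlet energies.
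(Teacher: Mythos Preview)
Your proof is correct and follows essentially the same approach as the paper: the paper takes $H_0=\texttt{Schur}(H,C\cup C'\cup(V(H)\setminus S_C))$, which coincides with your $J$ (since $C'\cup A=B\cup A$), splits $c^H(S_C\setminus C')$ into the $C$--$A$ and $C$--$C'$ pieces, bounds the first by $c^H(S_C)$ via the same ``elimination only adds conductance'' observation (packaged there as Lemma~\ref{lem:outside-schur}), and bounds the second by $1/((\beta_1-4)R)$ via Lemma~\ref{lem:well-sep}. Your Step~3 is precisely an inline proof of Lemma~\ref{lem:well-sep} using truncated normalized potentials, which is equivalent to the paper's clamped-potential argument for that lemma.
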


\begin{proof}[Proof of Proposition \ref{prop:delete-low-rad}]
By Lemma \ref{lem:ball-split} applied to the clusters $\mc L$ with deleted edge set $\texttt{deleted}(\mc C)$, $S$ is the union of a set of clusters $\mc L'$ with $H_{\mc C}$-effective resistance diameter $\muapp \psi \R{i}$. Furthermore, $|\mc L'|\le \muapp (|\mc L| + |\texttt{deleted}(\mc C)|)$.

Now, consider a cluster $C\in \mc L'$. To delete all vertices in $S$ from shortcutters with far-away cores, it suffices to delete each of the clusters $C$ from any shortcutter $S_{C'}$ for which the minimum $H_{\mc C}$ distance between vertices in $C'$ and $C$ is at least $(\gammadel - 2) \muapp\psi \R{i} \ge \beta_0 \muapp\psi \R{i}$ by the triangle inequality. In this case, $C$ is tied to $S_{C'}$. Since $\mc C$ is a well-spaced clan, $C$ cannot be tied to any other shortcutter in $\mc C$. Therefore, deleting $C$ from all shortcutters for whom the cores are at least $H_{\mc C}$-distance $(\gammadel - 2) m^{o(1)}\psi \R{i}$ only modifies one shortcutter $S_{C'}$. Furthermore, the conductance of this shortcutter only increases additively by $1/((\gammadel - 2) \muapp\psi \R{i})\ge 1/\R{i}$ by Lemma \ref{lem:small-holes}.

Removing vertices from shortcutters in $\mc C$ does not destroy its well-spacedness. Therefore, we can apply this reasoning for each of the clusters in $\mc L'$; incurring a $1/\R{i}$ conductance increase per cluster deletion. After doing this, we obtain a clan $\mc C'$ for which

\begin{align*}
\sum_{S_C\in \mc C'} c^{\mc C'}(S_C)&\le \frac{|\mc L'|}{\R{i}} + \sum_{S_C\in \mc C} c^{\mc C}(S_C)\\
&\le \frac{\muapp(|\mc L| + |\texttt{deleted}(\mc C)|)}{\R{i}} + \sum_{S_C\in \mc C} c^{\mc C}(S_C)
\end{align*}

as desired.
\end{proof}

Now, we use Proposition \ref{prop:delete-low-rad} to prove Lemma \ref{lem:holes} for $\sigma_0 = 1$:

\begin{proof}[Proof of Lemma \ref{lem:holes}, with $\ConditioningVerts$ replaced by $\SimpleConditioningVerts$]

Let $R = \mucarve\sqrt{\alpha}r_{min}$.

\underline{Number of cores containing parts in $\mc K$.} Recall that each part $P\in \mc K$ is assigned to a shortcutter $S_P$ in the horde $\mc E_1$ of $\mc E$. Let $C_P$ denote this core of $S_P$ and let $\mc K' = \{C_P : \forall P\in \mc K\}$. We now bound $|\mc K'|$. Start by bounding the size of the intersection of $\mc K'$ with the coreset of an arbitrary clan $\mc C$. $\mc C$ consists of disjoint shortcutters, so they must have total size at most $m$. As a result, $\mc K'$ has size at most

\begin{align*}
|\mc K'|&\le |\mc E_1| \frac{m}{\min_{C\in \mc K'} |E(S_C)\cup \partial S_C|}\\
&\le \ellmax \frac{m^{1+1/\sigma_1}}{\max_{C\in \mc K'} |E(S_C)\cup \partial S_C|}\\
&= \ellmax \frac{m^{1+1/\sigma_1}}{\max_{Q\in \mc X} |E(S_Q)\cup \partial S_Q|}\\
&\le \ellmax m^{1/\sigma_1} s_{\mc C}\\
\end{align*}

for any clan $\mc C\in \mc E_1$. The first inequality follows from Line \ref{line:simple-size} of $\SimpleConditioningVerts$. This is the desired bound.

\textbf{Conductivity.} Apply Proposition \ref{prop:delete-low-rad} with $i\gets 1$ and $\mc L\gets \mc K'$ on each of the clans in $\mc E_1$. $\SimpleConditioningVerts$ does strictly fewer vertex removals than the procedure described in Proposition \ref{prop:delete-low-rad}. As a result, Proposition \ref{prop:delete-low-rad} implies that the conductance of the shortcutters in a clan $\mc C$ additively increases by at most $\frac{\muapp(|\mc K'| + |\texttt{deleted}(\mc C)|)}{\sqrt{\alpha}r_{min}}$. By the ``Number of cores containing parts in $\mc K$'' and $\tau$-modifiedness of $\mc C$, the conductance increase is at most $\frac{\muapp(\ellmax m^{1/\sigma_1} + \tau m^{1/\sigma_1})s_{\mc C}}{\sqrt{\alpha}r_{min}}$. Therefore, the conductivity of $\mc C$ in $\mc E'$ (at the end of $\SimpleConditioningVerts$) is at most $\zeta_1$ higher than it was in $\mc E$, as desired.

\textbf{Carving.} Line \ref{line:simple-deletion} of $\SimpleConditioningVerts$ ensures that the shortcutter $S_C$ has been carved with respect to $\mc K$. Therefore, all active shortcutters in $\mc E'$ are carved with respect to $\mc K$, which means that $\mc E'$ is carved with respect to $\mc K$.
\end{proof}

Now, we show that $\SimpleConditioningVerts$ conditions on a large enough set to make substantial progress. Progress is measured by the maximum size of an active shortcutter:

\begin{proof}[Proof of Lemma \ref{lem:progress}, with $\ConditioningVerts$ replaced by $\SimpleConditioningVerts$]

We show that the maximum size of an active shortcutter decreases by a factor of $m^{1/\sigma_1}$ in between applications of $\SimpleConditioningVerts$. If we do this, then no shortcutter is active after $\sigma_1$ iterations. Since each part is assigned to a shortcutter, each part $P$ must have $E(P) = \emptyset$ after $\sigma_1$ iterations.

Now, consider any part $P\in \mc P_1(\mc E)$ with an active shortcutter. By the ``Containment'' input condition, $P$ is contained in a unique $Q\in \mc P_1(\mc E_{prev})$. $S_P$ being active implies that $S_Q$ was active. By the ``Progress'' condition, $Q\notin \mc K_{prev}$. Therefore, by the conditioning choice that chose $\mc K_{prev}$,

$$|E(S_Q)\cup\partial S_Q|\le m^{-1/\sigma_1}\max_{\text{ previously active parts X }} |E(S_X)\cup\partial S_X|$$

By the ``Containment'' condition, $S_P$ is smaller than $S_Q$, so

$$\max_{\text{ currently active parts P }} |E(S_P)\cup\partial S_P|\le m^{-1/\sigma_1} \max_{\text{ previously active parts X }} |E(S_X)\cup\partial S_X|$$

In particular, the maximum active shortcutter size decreased by a factor of $m^{-1/\sigma_1}$, as desired.
\end{proof}

Now, consider what happens when all shortcutters are inactive. By the above proof, this happens after $\sigma_1$ conditioning rounds. Since $\mc E$ is $m^{o(1)}$-bounded at this point and all parts are empty (by inactivity), the total conductance of all edges left over in the graph is at most $m^{1+o(1)}/(\sqrt{\alpha}r_{min})$. Running Aldous-Broder therefore takes $O((m^{1+o(1)}/(\sqrt{\alpha}r_{min}))\alpha r_{min}) = O(m^{1+o(1)}\sqrt{\alpha})$ time. Therefore, the entire algorithm for sampling a random spanning tree from the original graph $G$ takes $O(m^{1+o(1)}\sqrt{\alpha})$ time, as desired.

\subsection{Generalizing to $\sigma_0 > 1$}\label{subsec:multiple-levels-intuition}

We start by reviewing how the main algorithm $\ExactTree$ with $\sigma_0 = 1$ uses $\SimpleConditioningVerts$. $\ExactTree$ starts by making arbitrary shortcutters using $\CreateEmpire$. Afterwards, it calls $\SimpleConditioningVerts$ followed by the routines $\PartialSample$, conditioning, $\FixShortcutters$, and $\CreateEmpire$ again. $\SimpleConditioningVerts$ chooses all of the parts with near-largest shortcutters for conditioning. Applying the routines $\PartialSample$, conditioning, $\FixShortcutters$, and $\CreateEmpire$ makes the induced subgraphs of the parts with the largest shortcutters empty (inactive), as described by the ``Progress'' condition. By the ``Containment'' condition, all remaining shortcutters are smaller than they were previously, so the empire supplied to the second call of $\SimpleConditioningVerts$ has no shortcutters with size greater than $m^{1 - 1/\sigma_1}$. Specifically, the quantity

$$s_1 := \lfloor \log_{m^{1/\sigma_1}}(\max_{\text{active parts $P\in \mc P_1(\mc E)$}} |E(S_P)\cup \partial S_P|) \rfloor$$

strictly decreases during each iteration of the while loop in the $\ExactTree$ algorithm. $s_1 \le \sigma_1$ initially, so $s_1 = 0$ after at most $\sigma_1$ while loop iterations. At this point, $\mc E$ has no active parts. As a result, the graph at this point just consists of boundary edges for parts in $\mc P_1(\mc E)$, at which point just running Aldous-Broder without shortcutting is efficient. Specifically, conditioning on all parts in $\mc E_1$ paved the way for $\ExactTree$ to be able to efficiently condition on the entire graph.

Now, we generalize this reasoning to the case in which $\sigma_0 > 1$. Specifically, we design a scheme that is built around the idea of conditioning on parts in $\mc E_i$ in order to make conditioning on $\mc E_{i+1}$ efficient:

\begin{keyidea}\label{keyidea:cond-hier}
$\ConditioningVerts$ maintains state across multiple calls. Specifically, for all $i\in [\sigma_0]$, it maintains a choice of parts $\mc Q_i\subseteq \mc P_i(\mc E)$ that it would like to be able to condition on and a set of parts $\mc R_i\subseteq P_i(\mc E)$ with $\mc Q_i\subseteq \mc R_i$ that are ``relevant'' to being able to condition on $\mc Q_{i+1}$. Specifically, if all of the parts in $\mc R_i$ are inactive, then $\ConditioningVerts$ can condition on $\mc Q_{i+1}$. Conditioning on at most $\sigma_1$ different choices of $\mc Q_i$ will make all of the parts in $\mc R_i$ irrelevant.
\end{keyidea}

We encourage the reader to delay trying to internalize how exactly the $\mc R_i$s are constructed. We discuss this in detail in Section \ref{subsec:two-level}. $\mc Q_i$ can almost be thought of as the set of parts in $\mc R_i$ with near-maximum shortcutters. We discuss the reasons for saying ``almost'' in Section \ref{subsec:two-level}. In particular, the $\mc Q_i$ parts have ``locally maximum'' shortcutters. We discuss this in more detail in Section \ref{subsec:cond-hierarchies}.

To get a feel for what one could expect $\mc R_i$ and $\mc Q_i$ to be, it helps to think about the $\sigma_0 = 1$ case. In this case, $\mc R_1 = \mc P_1(\mc E)$ during each call of $\SimpleConditioningVerts$. $\mc Q_1$ is the set of active parts in $\mc R_1$ with near-maximum shortcutters. After $\sigma_1$ iterations of the while loop of $\ExactTree$, all parts in $\mc R_1$ are inactive, which allows us to condition on $\mc Q_2$, which is defined to be just one part containing the entire graph. After doing this, all parts in $\mc R_2$ --- which is also defined to be just one part with the entire graph --- are inactive. In particular, the graph contains no more edges and we have sampled a complete random spanning tree.

\subsubsection{Making progress and intuition for the proof of Lemma \ref{lem:progress}}

Now that we have some idea for what the $\mc R_i$s and $\mc Q_i$s could be, we can talk about our notion of progress. We start by generalizing $s_1$ to a quantity $s_i$. Roughly speaking, $m^{s_i/\sigma_1}$ is the maximum size of a relevant shortcutter in a clan of $\mc E_i$:

$$s_i := \lfloor \log_{m^{1/\sigma_1}}(\max_{\text{active parts $P\in \mc R_i$}} |E(S_P)\cup \partial S_P|) \rfloor$$

This is not how we actually define $s_i$, but it is a good way to think about it. In particular, the function $\DSize$ is similar to $\max_{\text{active parts $P\in \mc R_i$}} |E(S_P)\cup \partial S_P|$ and we encourage a confused reader to mentally replace $\DSize$ with $\max_{\text{active parts $P\in \mc R_i$}} |E(S_P)\cup \partial S_P|$ with the exception of one place, which we point out in Section \ref{subsec:cond-digraphs}. Now, we discuss progress:

\begin{keyidea}\label{keyidea:progress}
Our notion of progress is that each iteration of the while loop in $\ExactTree$ lexicographically decreases the word $s_{\sigma_0+1}s_{\sigma_0}s_{\sigma_0-1}s_{\sigma_0-2}\hdots s_2 s_1$.
\end{keyidea}

Now, we understand how $\ConditioningVerts$ could be implemented to guarantee such a lexicographic decrease. Each call to $\ConditioningVerts$ returns $\mc Q_k$ for some $k\in [\sigma_0]$ with $s_k > 0$. By the ``Progress'' input condition in Definition \ref{def:cond-verts-input}, all parts in $\mc Q_k$ become inactive before the next call to $\ConditioningVerts$. Roughly speaking, as described earlier, $\mc Q_k$ contains all of the parts with largest shortcutters in $\mc R_k$. As long as $\mc R_k$ shrinks (which we show that it does in Section \ref{subsec:two-level}), $s_k$ strictly decreases. All $s_i$ for $i > k$ do not increase by the ``Containment'' input condition. Therefore, the word $s_{\sigma_0+1}s_{\sigma_0}s_{\sigma_0-1}\hdots s_2 s_1$ lexicographically decreased between conditioning rounds.

We now briefly discuss why $s_i$s with $i < k$ could increase. To make progress again, $\mc Q_k$ needs to be changed. Recall in Key Idea \ref{keyidea:cond-hier} that the only purpose of $\mc Q_i$ for $i < k$ is to make it possible to condition on parts in $\mc Q_k$. Therefore, once $\mc Q_k$ changes, the $\mc Q_i$s and $\mc R_i$s for $i < k$ become useless and need to be chosen anew. In particular, $s_i$ for $i < k$ could increase.

Eventually, $s_{\sigma_0+1}s_{\sigma_0}s_{\sigma_0-1}\hdots s_2 s_1 = 00\hdots 00$. At this point, because $\mc R_{\sigma_0+1}$ consists of one part which is the entire graph, the entire graph is empty and consists of just one vertex. In particular, we have completely sampled a random spanning tree. Since there are only $\sigma_1^{\sigma_0+1}$ possible words, this happens after $\sigma_1^{\sigma_0+1}$ conditioning rounds, suggesting a proof of Lemma \ref{lem:progress}.

\subsubsection{Conditioning on $\mc Q_k$}

In the previous subsubsection, we needed to find some $\mc Q_k$ with $s_k > 0$ that could actually be conditioned on. Specifically, we need to find some $\mc Q_k$ that can be carved out of \emph{all} shortcutters in \emph{all} hordes of $\mc E$ with increasing the conductivity of any clan too much. 

At the beginning of each call to $\ConditioningVerts$, the $\mc Q_{k'}$ selected for conditioning from the previous iteration consists of inactive parts by the ``Progress'' input condition. Let $i^*$ be the maximum $i$ for which $\mc Q_{i^*} = \emptyset$. This exists because $k'$ is a valid choice for $i^*$. If $s_{i^*} > 0$, then $\mc R_{i^*}$ still contains active parts, which means that we shoud reselect $\mc Q_{i^*}$ in order to continue rendering parts in $\mc R_{i^*}$ inactive. This is done using the routine $\TwoLevelConditioningVerts$. After reselecting $\mc Q_{i^*}$, $\ConditioningVerts$ reselects all $\mc Q_i$s for $i < i^*$ using $\TwoLevelConditioningVerts$. Then, $\ConditioningVerts$ returns the parts $\mc Q_{k^*+1}$, where $k^*$ is the maximum value for which $s_{k^*} = 0$.

While the previous subsubsection illustrated that this algorithm ($\ConditioningVerts$) makes enough progress (satisfies Lemma \ref{lem:progress}), we still need to demonstrate that $\mc E$ can be carved with respect to $\mc Q_{k^*+1}$ without increasing its conductivity too much (by more than $m^{o(1)}$ additively). To do this, it suffices to design the $\mc R_i$s in a way that respects the following property:

\begin{keyidea}\label{keyidea:relevance}
For each part $P\in \mc P_i(\mc E)\setminus \mc R_i$, $S_P$ does not intersect any part $P'\in \cup_{j\le i+1} \mc Q_j$. In particular, $S_P$ is carved with respect to $\cup_{j\le i+1} \mc Q_j$.
\end{keyidea}

Now, we see how this idea enables $\ConditioningVerts$ to choose $\mc Q_{k^*+1}$. Each part $P\in \mc P_{k^*}(\mc E)$ is either in $\mc R_{k^*}$ or not in $\mc R_{k^*}$. If $P\in \mc R_{k^*}$, then $P$ is inactive because $s_{k^*} = 0$. Therefore, its shortcutter $S_P$ does not need to be carved with respect to $\mc Q_{k^*+1}$ because $S_P$ does not need to be used. If $P\not\in \mc R_{k^*}$, then $S_P$ is carved with respect to $\mc Q_{k^*+1}$ by Key Idea \ref{keyidea:relevance}. For parts $Q\in \mc P_j(\mc E)$ for $j \ge k^*+1$, either $Q\notin \mc R_j(\mc E)$ in which case Key Idea \ref{keyidea:relevance} implies that $S_Q$ is carved, or $Q\in \mc R_j(\mc E)$ and $\mc Q_j$ can be carved out of $S_Q$ because $\mc Q_j$'s shortcutters are bigger than $S_Q$. In particular, applying Proposition \ref{prop:delete-low-rad} here implies that carving does not increase the conductivity of $S_Q$'s clan too much (see the $\MakeNonedgesPermanent$ routine). Therefore, all shortcutters are carved with respect to $\mc Q_{k^*+1}$. For more details on this argument, see Proposition \ref{prop:make-nonedges-permanent}.

\subsubsection{Choosing conditioning hierarchies that respect the key ideas}

While we have discussed most of the key properties of conditioning hierarchies that allow $\ConditioningVerts$ to establish Lemmas \ref{lem:holes} and \ref{lem:progress} (as summarized in the Key Ideas), it has not been made clear how to actually obtain any of them. We now discuss some insufficient approaches for defining the $\mc Q_i$s and $\mc R_i$s. Specifically, we attempt to design $\TwoLevelConditioningVerts$, which is called on level $i^*$ and below if $s_{i^*} > 0$ in order to get closer to making all of $\mc R_{i^*}$ inactive. For simplicity, we focus our discussion on the first call to $\ConditioningVerts$, in which case $\mc Q_i = \mc R_i = \emptyset$ for all $i\le \sigma_0$ and $\mc Q_{\sigma_0+1}$ and $\mc R_{\sigma_0+1}$ consist of just one part that contains the entire graph.

One natural attempt at designing $\TwoLevelConditioningVerts$ is to do the following:

\begin{itemize}
\item For each $i \gets \sigma_0,\sigma_0-1,\hdots,1$,
    \begin{itemize}
    \item Let $\mc R_i$ be the set of parts with parents in $\mc Q_{i+1}$.
    \item Let $\mc Q_i$ be the parts $P\in \mc R_i$ with near-maximum-size shortcutters.
    \end{itemize}
\end{itemize}

This approach is a simple generalization of the algorithm $\SimpleConditioningVerts$. Unfortunately, it does not respect Key Idea \ref{keyidea:relevance} because parts $P\in \mc P_i(\mc E)\setminus \mc R_i$ could have shortcutters $S_P$ that intersect some part in $\mc Q_{i+1}$. More concretely, this approach does not work because there may be a part $P\in \mc P_{\sigma_0-2}(\mc E)\setminus R_{\sigma_0-2}$ for which (a) $S_P$ intersects $\mc Q_{i+1}$ and (b) $S_P$ is very large compared to the shortcutters of the parts in $\mc Q_i$. In this case, carving would increase the conductance of $S_P$ too much (see Figure \ref{fig:quadratic-blowup}).

We deal with the above issue by allowing $\mc Q_i$ to not necessarily be contained in $\mc Q_{i+1}$ and picking $\mc Q_i$ before $\mc R_i$ rather than the other way around. In particular, if there is a $P\in \mc P_i(\mc E)$ with a parent in $\mc R_{i+1}$ for which (a) $S_P$ intersects $\mc Q_{i+1}$ and (b) $S_P$ is very large compared to the shortcutters for the default choice of $\mc Q_i$ described in the first attempt, switch $\mc Q_i$ to be all parts satisfying (a) that have near-maximum shortcutter size (similar to (b)). Of course, the new choice for $\mc Q_i$ may be bad for the exact same reasons as the original choice. Luckily, though, this switching procedure can only happen $\sigma_1$ times because each switch of $\mc Q_i$ increases the sizes of the shortcutters considered by an $m^{1/\sigma_1}$ factor.

The above approach also has the unfortunate property that chosen parts at level $j$ for $j < i$ could make choices at higher levels unusable. To deal with this problem, we would like all parts at lower levels to look as if they are part of $\mc Q_i$. One way of doing this is to require chosen parts at level $\mc Q_j$ to be close to $\mc Q_i$ (see the ``Vertical closeness'' condition in the definition of conditioning hierarchies). This suggests modifying the algorithm mentioned in the previous paragraph to discuss the effective resistance metric:

\begin{itemize}
\item For each $i = \sigma_0,\sigma_0-1,\hdots,1$,
    \begin{itemize}
    \item $\mc Q_i\gets$ the set of parts of $\mc P_i(\mc E)$ with parents in $\mc Q_{i+1}$
    \item While there is a part $P\in \mc P_i(\mc E)$ with parent in $\mc R_i$ for which (a) $S_P$ intersects $\mc Q_{i+1}$ (b) $S_P$ is much larger than shortcutters for parts in $\mc Q_i$ and (c) $P$ is not too much farther away (say no more than 7 times farther away) from $\mc Q_{i+1}$ than some part in the current value of $\mc Q_i$
        \begin{itemize}
        \item Replace $\mc Q_i$ with all parts $P$ with parent in $\mc R_i$ that satisfy (a) and (c) with near-maximum size.
        \end{itemize}
    \item $\mc R_i\gets$ all parts with parents in $\mc R_i$ that satisfy (a) and (c) for $\mc Q_i$
    \end{itemize}
\end{itemize}

The irrelevant parts are far away, so $\mc Q_{i+1}$ can be carved out of their shortcutters. While this algorithm works for call to $\ConditioningVerts$, it happens to violate Key Idea \ref{keyidea:progress}. Luckily, we can fix this issue by observing that the above strategy is part of a more general family of strategies based on a digraph $\CG$ at each level $i$. This digraph is a digraph on parts in $\mc P_i(\mc E)$ with parents in $\mc R_{i+1}$. An edge is present in this digraph from $P\rightarrow Q$ if and only if (1) $P$ intersects $S_Q$ and (2) $Q$ is not much farther from $\mc Q_{i+1}$ than $P$. (1) and (2) are similar to (a) and (c) respectively.

As a result, the second attempt given above can be viewed as doing a BFS in this digraph and making a histogram of sizes of shortcutters for parts at these distances. The above strategy is equivalent to letting $\mc Q_i$ be the set of parts corresponding to the closest local maximum to the source $\mc Q_{i+1}$ in this histogram. See Figure \ref{fig:two-level-diagram} for a visual on the construction of this histogram. This visual suggests a large family of strategies based on finding local maxima in histograms. In $\TwoLevelConditioningVerts$, we give a simple histogram strategy which picks local maxima each time. This allows us to satisfy Key Idea \ref{keyidea:relevance}. However, this strategy also has the property that $\mc R_i$ shrinks across multiple conditioning rounds. This ensures that Key Idea \ref{keyidea:progress} is also respected.

There are other complications that we have not adequately addressed in this summary. For example, it is a priori unclear if conditioning digraphs from one call to $\ConditioningVerts$ relate in any way to conditioning digraphs in previous calls. Luckily, later digraphs are subgraphs of earlier digraphs, thanks to the ``Containment'' input condition, after modification by the routine $\MakeNonedgesPermanent$ that does not increase conductivity much.

\begin{figure}
\includegraphics[width=1.0\textwidth]{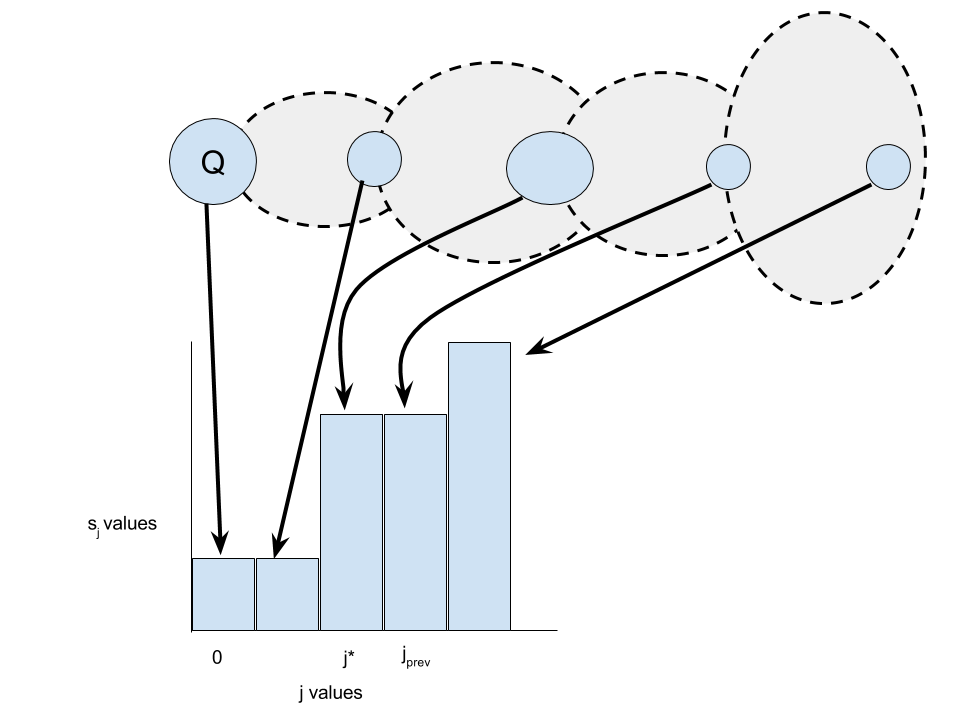}
\caption{Constructing $s_j$s from parts and their shortcutters in $\TwoLevelConditioningVerts$. Our choice of $j^*$ allows us to get rid of all parts that are closer to $\mc Q$ than $j^*$ that have near-maximum size, thus bringing us one step closer to being able to condition on $Q$ (one out of at most $\sigma_1$ steps.)}
\label{fig:two-level-diagram}
\end{figure}

\subsection{Conditioning hierarchies: the data structure for representing parts to condition on}\label{subsec:cond-hierarchies}

In this section, we formally define the data structure used to keep track of sets to condition on: the \emph{conditioning hierarchy}. We again encourage the reader to delay trying to understand exactly how the $\mc Q_i$s and $\mc R_i$s are constructed until Section \ref{subsec:two-level}, where $\TwoLevelConditioningVerts$ is introduced.

\begin{definition}[Conditioning hierarchies]
Consider the empire $\mc E$, a family of sets of \emph{chosen parts} $\{\mc Q_i\}_{i=1}^{\sigma_0}$, and a family of sets of \emph{relevant parts} $\{\mc R_i\}_{i=1}^{\sigma_0}$. The pair $\mc{CH} = (\{\mc Q_i\}_{i=1}^{\sigma_0}, \{\mc R_i\}_{i=1}^{\sigma_0})$ is called a \emph{conditioning hierarchy} if the following properties are true for all $i\in [\sigma_0]$:

\begin{itemize}
\item (Horizontal containment) $\mc Q_i\subseteq \mc R_i\subseteq \mc P_i(\mc E)$
\item (Vertical containment) For every part $P\in \mc R_i$, there is some $Q\in \mc R_{i+1}$ for which $P\subseteq Q$.
\item (Vertical closeness) Each vertex in a part of $\mc Q_i$ is within distance $(\mucarve/(100\gammatemp\gammaann^2))\R{i+1}$ of some vertex in a part of $\mc Q_{i+1}$, where $\gammatemp := 4\gammadel$.
\end{itemize}

\end{definition}

The ``Vertical closeness'' property is in some sense a weaker version of the ``Vertical containment'' property that applies to the chosen parts. While the chosen parts from different hordes are not truly nested, they are close to chosen parts from higher hordes. Recall from the end of the previous section that we do not want the $\mc Q_i$s to be truly nested, as there may be a much larger shortcutter that intersects $\mc Q_i$ but whose core is outside of $\mc Q_i$. 

As the conditioning algorithm proceeds, the ``Containment'' input condition allows us to argue that conditioning hierarchies are in some sense contained within one another:

\begin{definition}[Fitting of conditioning hierarchies] 
Consider two conditioning hierarchies $\mc{CH} = (\{\mc Q_k\}_{k=1}^{\sigma_0},\{\mc R_k\}_{k=1}^{\sigma_0})$ and $\mc{CH}' = (\{\mc Q_k'\}_{k=1}^{\sigma_0},\{\mc R_k'\}_{k=1}^{\sigma_0})$ for two (possibly different) empires $\mc E$ and $\mc E'$.

$\mc{CH}$ $i$-\emph{fits} within $\mc{CH}'$ if both of the following conditions hold for all $j\ge i$:

\begin{itemize}
\item ($\mc Q$-fitting) For all $P\in \mc Q_j$, there is a $Q\in \mc Q_j'$ for which $P\subseteq Q$.
\item ($\mc R$-fitting) For all $P\in \mc R_j$, there is a $Q\in \mc R_j'$ for which $P\subseteq Q$.
\end{itemize}

$\mc{CH}$ \emph{completely fits} in $\mc{CH}'$ if $\mc{CH}$ 1-fits in $\mc{CH}'$.
\end{definition} 

Next, we define a property of chosen parts that will allow us to carve without substantially increasing conductivity when combined with the ``Vertical closeness'' property of conditioning hierarchies:

\begin{definition}[Descendant size]
Given an empire $\mc E$ and a part $P\in \mc P_i(\mc E)$, the \emph{descendant size} of $P$, denoted $\DSize(\mc E, P)$, is the maximum number of edges incident with any shortcutter for a descendant part:

$$\DSize(\mc E, P) := \max_{Q\in \cup_{j\le i} \mc P_j(\mc E): Q\subseteq P} |E(S_Q)\cup \partial S_Q|$$
\end{definition}

When understanding these concepts for the first time, it is helpful to think of defining $\DSize(\mc E, P)$ as $|E(S_P)\cup \partial S_P|$ instead of the maximum over all descendant parts. $\DSize$ is used in place of $|E(S_P)\cup \partial S_P|$ for reasons discussed in Section \ref{subsec:cond-digraphs}.

\begin{definition}[Locally maximum conditioning hierarchies]
Consider a conditioning hierarchy $\mc{CH} = (\{\mc Q_k\}_{k=1}^{\sigma_0},\{\mc R_k\}_{k=1}^{\sigma_0})$ for an empire $\mc E$. $\mc{CH}$ is said to be \emph{locally maximum} if for all $i\in [\sigma_0]$ with $\mc Q_i\ne \emptyset$,

$$\min_{P\in \mc Q_j} \DSize(\mc E, P)\ge m^{-1/\sigma_1} \max_{Q\in \mc R_j} \DSize(\mc E, Q)$$
\end{definition}

$\ConditioningVerts$ keeps a conditioning hierarchy as continued state, along with a size and distance parameter associated with each $i\in [\sigma_0]$. The size parameter tracks the largest size of a shortcutter for a part in $\mc R_i$. We discuss the distance parameter in Section \ref{subsec:cond-digraphs}. The conditioning hierarchy is initialized with $\mc Q_i = \emptyset$ for all $i\in [\sigma_0]$ and $\mc R_i = \emptyset$ except for $i = \sigma_0 + 1$, for which $\mc Q_{\sigma_0+1} = \{V(G)\}$ and $\mc R_{\sigma_0+1} = \{V(G)\}$. On a particular call to $\ConditioningVerts$, it

\begin{itemize}
\item Picks the highest $i$ for which both (1) $\mc Q_i = \emptyset$ and (2) all parts in $\mc R_i$ have inactive shortcutters and calls this value $i^*$
\item If a part in $\mc R_{i^*}$ has an active shortcutter, $\ConditioningVerts$ uses a ball-growing strategy (see $\TwoLevelConditioningVerts$) to reselect the chosen and relevant parts for levels $i^*$ and below. After doing this, $\mc Q_1$ is selected for conditioning.
\item Otherwise, $\mc Q_{i^*+1}$ can be conditioned on (see Proposition \ref{prop:carve-hierarchy})
\end{itemize}

This process results in a conditioning hierarchy that $i^*+1$-fits in the previous conditioning hierarchy. As a result, all $s_j$ for $j > i^*$ do not increase.

\begin{figure}
\includegraphics[width=1.0\textwidth]{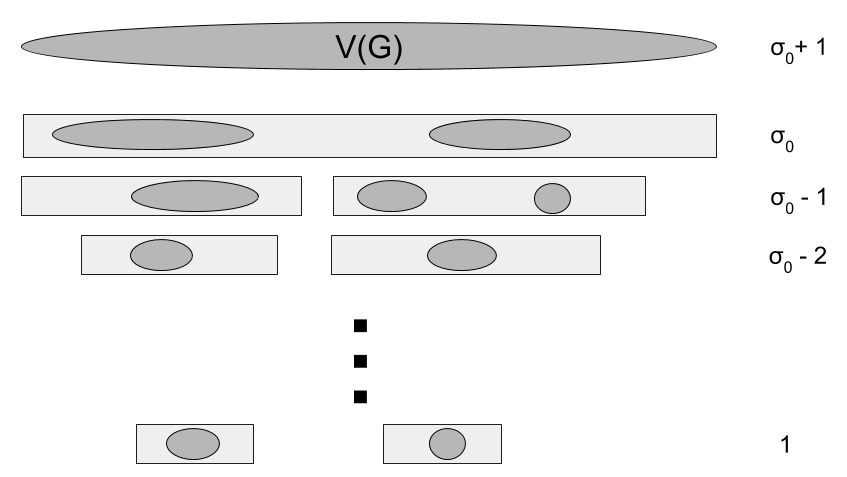}
\caption{Typical $\mc Q_i$s (dark gray ovals) and $\mc R_i$s (light gray rectangle). Notice that the $\mc R_i$s are laminar and that they contain the $\mc Q_i$ parts from the level above.}
\label{fig:multiple-levels}
\end{figure}

\subsection{Conditioning digraphs: the metric used for choosing parts to condition on}\label{subsec:cond-digraphs}

At the end of Section \ref{subsec:multiple-levels-intuition}, we discussed the fact that picking parts to condition on based on a plot of shortcutter $\DSize$s versus ``distance'' would be a good strategy. We now define $\CG$, from which that distance is defined.

The digraph $\CG(\mc E,i+1,\mc Q_{i+1},\mc R_{i+1})$ is used to obtain $\mc Q_i$ and $\mc R_i$ from $\mc Q_{i+1}$ from $\mc R_{i+1}$.

\begin{definition}[Conditioning Digraph]
Consider an empire $\mc E$ defined in a graph $H$, an index $i\in [\sigma_0]$, and a set of chosen parts $\mc Q\subseteq \mc P_i(\mc E)$, and a set of relevant parts $\mc R\subseteq \mc P_i(\mc E)$ with $\mc Q\subseteq \mc R$. For each part $P\in \mc R$, pick an arbitrary representative vertex $v_P\in P$. The digraph $I = \CG(\mc E,i,\mc Q,\mc R)$, called a \emph{temporary conditioning digraph}, is the digraph with vertex set $V(I) = \mc R$ and a directed edge $(P,Q)\in E(I)$ if and only if all of the following conditions hold, where $D\gets \PreprocANN(H,\{v_P\}_{P\in \mc Q})$:

\begin{itemize}
\item (Permanent condition) There exists some descendant $Q'\in \cup_{j\le i} \mc P_j(\mc E)$ of $Q$ for which $P$ intersects $S_{Q'}$.
\item (Temporary condition) $\texttt{JLReff}_H(v_Q,\ANN_D(v_P)) \le 8\gammatemp \gammaann (\texttt{JLReff}_H(v_P,\ANN_D(v_P)) + \R{i})$ where $\texttt{JLReff}_H(a,b)$ for $a,b\in V(H)$ denotes a multiplicative 2-approximation to $\texttt{Reff}_H(a,b)$ given using Theorem \ref{thm:jl}.
\end{itemize}

The digraph $J = \PCG(\mc E,i,\mc R)$, called a \emph{permanent conditioning digraph}, has vertex set $V(J) = \mc R$ and a directed edge $(P,Q)\in E(I)$ if and only if the ``Permanent Condition'' holds for the $(P,Q)$ pair.
\end{definition}

The ``Temporary condition'' implies the following relevant metric properties thanks to the ``Closeness'' guarantee of Theorem \ref{thm:ann}. We only use $\ANN$ in the definition of the temporary condition for runtime purposes:

\begin{corollary}[Temporary condition implications]\label{cor:ann}
Consider two different parts $P,Q\in \mc R$ and let $P_{\min} := \arg \min_{P'\in \mc Q} \texttt{Reff}_H(v_P,v_{P'})$. If the pair of parts $(P,Q)$ satisfies the ``Temporary condition,'' then

$$\texttt{Reff}_H(v_Q,v_{P_{\min}}) \le 34 \gammatemp \gammaann^2 (\texttt{Reff}_H(v_P,v_{P_{\min}}) + \R{i})$$

If the pair does not satisfy the ``Temporary condition,'' then

$$\texttt{Reff}_H(v_Q,v_{P_{\min}}) \ge \gammatemp (\texttt{Reff}_H(v_P,v_{P_{\min}}) + \R{i})$$
\end{corollary}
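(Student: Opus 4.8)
The statement to prove is Corollary~\ref{cor:ann}, which extracts metric implications of the ``Temporary condition'' from the definition of the conditioning digraph. The plan is to unwind the definition of $\texttt{JLReff}$ and the $\ANN$ guarantee, then chase inequalities using the triangle inequality for the effective resistance metric.

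First I would set up notation: write $w := \ANN_D(v_P)$ for the approximate nearest neighbor of $v_P$ among $\{v_{P'}\}_{P'\in\mc Q}$, and recall $P_{\min} := \arg\min_{P'\in\mc Q}\texttt{Reff}_H(v_P,v_{P'})$, so $v_{P_{\min}}$ is the true nearest such representative. The ``Closeness'' guarantee of Theorem~\ref{thm:ann} gives $\texttt{Reff}_H(v_P,w)\le \gammaann\,\texttt{Reff}_H(v_P,v_{P_{\min}})$. Also, $\texttt{JLReff}_H(a,b)$ is within a factor $2$ of $\texttt{Reff}_H(a,b)$ in both directions, by Theorem~\ref{thm:jl} with $\ep=1/2$.

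For the first (positive) implication: assume the ``Temporary condition'' holds, i.e. $\texttt{JLReff}_H(v_Q,w)\le 8\gammatemp\gammaann(\texttt{JLReff}_H(v_P,w)+\R{i})$. Converting $\texttt{JLReff}$ to $\texttt{Reff}$ costs a factor $2$ on each side, so $\texttt{Reff}_H(v_Q,w)\le 4\cdot 8\gammatemp\gammaann(\texttt{Reff}_H(v_P,w)+\R{i}) = 32\gammatemp\gammaann(\texttt{Reff}_H(v_P,w)+\R{i})$. Now apply the triangle inequality twice: $\texttt{Reff}_H(v_Q,v_{P_{\min}})\le \texttt{Reff}_H(v_Q,w)+\texttt{Reff}_H(w,v_{P_{\min}})$, and $\texttt{Reff}_H(w,v_{P_{\min}})\le \texttt{Reff}_H(w,v_P)+\texttt{Reff}_H(v_P,v_{P_{\min}})\le (\gammaann+1)\texttt{Reff}_H(v_P,v_{P_{\min}})$ using the $\ANN$ bound; similarly bound $\texttt{Reff}_H(v_P,w)\le\gammaann\texttt{Reff}_H(v_P,v_{P_{\min}})$ inside the first term. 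Collecting terms gives something like $\texttt{Reff}_H(v_Q,v_{P_{\min}})\le 32\gammatemp\gammaann(\gammaann\texttt{Reff}_H(v_P,v_{P_{\min}})+\R{i}) + (\gammaann+1)\texttt{Reff}_H(v_P,v_{P_{\min}})$, and since $\gammaann\ge 1$ this is bounded by $34\gammatemp\gammaann^2(\texttt{Reff}_H(v_P,v_{P_{\min}})+\R{i})$ (the constant $34$ versus $32$ absorbs the lower-order $(\gammaann+1)$ and the cross terms — this is the one spot where I would double-check the arithmetic carefully against the claimed constant). For the second (contrapositive) implication: assume the ``Temporary condition'' fails, $\texttt{JLReff}_H(v_Q,w) > 8\gammatemp\gammaann(\texttt{JLReff}_H(v_P,w)+\R{i})$. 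Converting to $\texttt{Reff}$ loses another factor (divide the left by $2$, and bound $\texttt{JLReff}_H(v_P,w)\ge \texttt{Reff}_H(v_P,w)$ up to the factor $2$), giving $\texttt{Reff}_H(v_Q,w)\ge 4\gammatemp\gammaann(\texttt{Reff}_H(v_P,w)+\R{i})$; then $\texttt{Reff}_H(v_Q,v_{P_{\min}})\ge \texttt{Reff}_H(v_Q,w)-\texttt{Reff}_H(w,v_{P_{\min}})\ge 4\gammatemp\gammaann(\texttt{Reff}_H(v_P,w)+\R{i}) - \gammaann\,\texttt{Reff}_H(v_P,v_{P_{\min}})$, and since $\texttt{Reff}_H(v_P,w)\ge 0$ and $\gammatemp$ is a large constant, absorbing the subtracted term into the slack (using $4\gammatemp\gammaann\R{i}$ to dominate and $\texttt{Reff}_H(v_P,w)$ being comparable to $\texttt{Reff}_H(v_P,v_{P_{\min}})$ up to the $\gammaann$ factor is not available in this direction — here I would instead only claim the weaker but sufficient bound $\texttt{Reff}_H(v_Q,v_{P_{\min}})\ge\gammatemp(\texttt{Reff}_H(v_P,v_{P_{\min}})+\R{i})$, which follows once the large constant $\gammatemp$ beats all the accumulated factors of $\gammaann$ and the additive slack).

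The main obstacle is bookkeeping the constants so that they match $34\gammatemp\gammaann^2$ and $\gammatemp$ exactly: the conversions between $\texttt{JLReff}$ and $\texttt{Reff}$ each cost factors of $2$, the $\ANN$ step costs factors of $\gammaann$, and in the negative direction one has to be careful that $\texttt{Reff}_H(v_P,w)$ only lower-bounds (not upper-bounds) a $\gammaann^{-1}$ multiple of $\texttt{Reff}_H(v_P,v_{P_{\min}})$, so the slack in the failing inequality must be large enough to swallow the triangle-inequality correction — this is exactly why the constant $8\gammatemp\gammaann$ appears in the definition rather than something tight. Everything else is a routine triangle-inequality chase in the effective resistance metric (Remark following Theorem~\ref{thm:jl}), which is a genuine metric.
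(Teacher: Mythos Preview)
Your upper bound argument is essentially the same as the paper's and is correct; the constant $34$ does work out once you combine $32\gammatemp\gammaann^2$ with the $(\gammaann+1)\texttt{Reff}_H(v_P,v_{P_{\min}})$ correction.

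The lower bound, however, has a real gap. After converting to $\texttt{Reff}$ you correctly reach something like
\[
\texttt{Reff}_H(v_Q,v_{P_{\min}}) \ge 2\gammatemp\gammaann\bigl(\texttt{Reff}_H(v_P,w)+\R{i}\bigr) - (\gammaann+1)\texttt{Reff}_H(v_P,v_{P_{\min}}),
\]
but then you hand-wave that ``the large constant $\gammatemp$ beats all the accumulated factors.'' This cannot work from $\texttt{Reff}_H(v_P,w)\ge 0$ alone: if $\texttt{Reff}_H(v_P,v_{P_{\min}})$ is large compared to $\R{i}$, the subtracted term $(\gammaann+1)\texttt{Reff}_H(v_P,v_{P_{\min}})$ swamps the remaining $2\gammatemp\gammaann\,\R{i}$, and you get nothing. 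The $\ANN$ guarantee does not give a lower bound on $\texttt{Reff}_H(v_P,w)$ in terms of $\texttt{Reff}_H(v_P,v_{P_{\min}})$, as you yourself note.

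The missing observation is simpler than anything you tried: $w=\ANN_D(v_P)$ is itself one of the representatives $\{v_{P'}\}_{P'\in\mc Q}$ (that is what $\ANN$ returns), and $v_{P_{\min}}$ is by definition the \emph{closest} such representative to $v_P$. Hence $\texttt{Reff}_H(v_P,w)\ge \texttt{Reff}_H(v_P,v_{P_{\min}})$ directly. Plugging this in gives
\[
\texttt{Reff}_H(v_Q,v_{P_{\min}}) \ge 2\gammatemp\gammaann\bigl(\texttt{Reff}_H(v_P,v_{P_{\min}})+\R{i}\bigr) - (\gammaann+1)\texttt{Reff}_H(v_P,v_{P_{\min}}) \ge \gammatemp\gammaann\bigl(\texttt{Reff}_H(v_P,v_{P_{\min}})+\R{i}\bigr),
\]
which is stronger than what is claimed. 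This is exactly the step the paper labels ``By definition of $P_{\min}$.''
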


\begin{proof}
\textbf{Upper bound.} Since the ``Temporary condition'' is satisfied,

$$\texttt{JLReff}_H(v_Q,\ANN_D(v_P)) \le 8\gammatemp \gammaann (\texttt{JLReff}_H(v_P,\ANN_D(v_P)) + \R{i})$$

Since $\texttt{JLReff}_H$ is a 2-approximation to $\texttt{Reff}_H$,

$$\texttt{Reff}_H(v_Q,\ANN_D(v_P)) \le 32\gammatemp \gammaann (\texttt{Reff}_H(v_P,\ANN_D(v_P)) + \R{i})$$

By Theorem \ref{thm:ann},

$$\texttt{Reff}_H(v_Q,\ANN_D(v_P)) \le 32\gammatemp \gammaann^2 (\texttt{Reff}_H(v_P,v_{P_{\min}}) + \R{i})$$

By the triangle inequality and Theorem \ref{thm:ann},

\begin{align*}
\texttt{Reff}_H(v_Q,v_{P_{\min}}) &\le \texttt{Reff}_H(v_Q,\ANN_D(v_P)) + \texttt{Reff}_H(\ANN_D(v_P),v_P) + \texttt{Reff}_H(v_P,v_{P_{\min}})\\
&\le 34\gammatemp \gammaann^2 (\texttt{Reff}_H(v_P,v_{P_{\min}}) + \R{i})\\
\end{align*}

as desired.

\textbf{Lower bound.} Since the ``Temporary condition'' is not satisfied,

$$\texttt{JLReff}_H(v_Q,\ANN_D(v_P)) \ge 8 \gammatemp \gammaann (\texttt{JLReff}_H(v_P,\ANN_D(v_P)) + \R{i})$$

By the approximation of $\texttt{JLReff}_H$ to $\texttt{Reff}_H$,

$$\texttt{Reff}_H(v_Q,\ANN_D(v_P)) \ge 2 \gammatemp \gammaann (\texttt{Reff}_H(v_P,\ANN_D(v_P)) + \R{i})$$

By definition of $P_{\min}$,

$$\texttt{Reff}_H(v_Q,\ANN_D(v_P)) \ge 2 \gammatemp \gammaann (\texttt{Reff}_H(v_P,v_{P_{\min}}) + \R{i})$$

By the triangle inequality and Theorem \ref{thm:ann},

\begin{align*}
\texttt{Reff}_H(v_Q,v_{P_{\min}}) &\ge \texttt{Reff}_H(v_Q,\ANN_D(v_P)) - \texttt{Reff}_H(\ANN_D(v_P),v_P) - \texttt{Reff}_H(v_P,v_{P_{\min}})\\
&\ge \gammatemp \gammaann (\texttt{Reff}_H(v_P,v_{P_{\min}}) + \R{i})\\
&\ge \gammatemp (\texttt{Reff}_H(v_P,v_{P_{\min}}) + \R{i})\\
\end{align*}

as desired.

\end{proof}

Notice that the ``Permanent condition'' is slightly different from the condition (a) discused in the last subsubsection of Section \ref{subsec:multiple-levels-intuition} in that we are interested in \emph{any} intersection of descendant shortcutters of $Q$ with $P$, not just $S_Q$ with $P$. This is done for the same reason that we use $\DSize$ in place of the actual size of $S_Q$. Specifically, we use these definitions because they guarantee that any edge in $\PCG$ for level $i$ is also an edge in $\PCG$ for higher levels:

\begin{proposition}[Vertical monotonicity]\label{prop:pcg-vertical-monotonicity}
Consider any empire $\mc E$, two indices $j < i\in [\sigma_0]$, and two sets of parts $\mc R_j\subseteq \mc P_j(\mc E)$ and $\mc R_i\subseteq \mc P_i(\mc E)$ with every part in $\mc R_j$ having an ancestor in $\mc R_i$. Then for any $(P,Q)\in \PCG(\mc E,j,\mc R_j)$,

$(P',Q')\in \PCG(\mc E,i,\mc R_i)$, where $P'$ and $Q'$ are the ancestors of $P$ and $Q$ respectively in $\mc P_i(\mc E)$.
\end{proposition}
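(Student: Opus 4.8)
The plan is to unwind the definitions of ``ancestor'' and ``descendant'' in terms of the nesting of the overlay partitions and observe that a single descendant part witnesses the permanent condition at every level above $j$. First I would record the structural fact that for $j<i$ the partition $\mc P_j(\mc E)$ refines $\mc P_i(\mc E)$: indeed $\mc P_j(\mc E)$ is obtained (Definition \ref{def:covering-hordes-overlays}) by refining the cores of all clans in $\cup_{k\ge j}\mc E_k$, a superset of $\cup_{k\ge i}\mc E_k$, which are exactly the clans used to form $\mc P_i(\mc E)$. Consequently each part of $\mc P_j(\mc E)$ is contained in a \emph{unique} part of $\mc P_i(\mc E)$ --- this is precisely the ``ancestor'' map --- and for $P\in \mc R_j$ the hypothesis that $P$ has an ancestor in $\mc R_i$, together with this uniqueness, forces that ancestor to equal $P'$ (and likewise for $Q$). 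Hence $P',Q'\in \mc R_i = V(\PCG(\mc E,i,\mc R_i))$, so it remains only to verify the permanent condition for the pair $(P',Q')$ at level $i$.

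For that, I would take the descendant $\wh Q\in \cup_{k\le j}\mc P_k(\mc E)$ of $Q$ with $P\cap S_{\wh Q}\ne \emptyset$ guaranteed by $(P,Q)\in \PCG(\mc E,j,\mc R_j)$, and show that $\wh Q$ also witnesses the permanent condition for $(P',Q')$: (i) $\wh Q\in \cup_{k\le j}\mc P_k(\mc E)\subseteq \cup_{k\le i}\mc P_k(\mc E)$ since $j<i$; (ii) $\wh Q$ is a descendant of $Q'$, because $\wh Q\subseteq Q$ and $Q\subseteq Q'$ (as $Q'\in \mc P_i(\mc E)$ is the ancestor of $Q\in \mc P_j(\mc E)$), whence $\wh Q\subseteq Q'$; and (iii) $P'\cap S_{\wh Q}\supseteq P\cap S_{\wh Q}\ne \emptyset$ because $P\subseteq P'$. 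Therefore $(P',Q')\in \PCG(\mc E,i,\mc R_i)$, as claimed.

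There is no computational content here; the only thing to be careful about --- and the one place the argument could break if the definitions were slightly different --- is matching the informal words ``ancestor'' and ``descendant'' with the containment relations induced by the overlay partitions, and checking that the ancestor guaranteed by hypothesis to lie in $\mc R_i$ is genuinely the unique $\mc P_i(\mc E)$-ancestor of $P$ (resp.\ $Q$). I expect this bookkeeping, rather than any estimate, to be the delicate part of writing the proof cleanly; I would also note in passing that this is exactly the reason the permanent condition quantifies over \emph{all} descendant shortcutters of $Q$ (and, correspondingly, why $\DSize$ is used in place of $|E(S_Q)\cup\partial S_Q|$), since that is what makes $\wh Q$ re-usable at higher levels.
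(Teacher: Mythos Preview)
Your proposal is correct and takes essentially the same approach as the paper: the paper's proof is a one-liner (``the permanence condition only gets more restrictive in lower levels''), and you have simply unpacked that sentence by exhibiting the same descendant witness $\wh Q$ at level $i$ and checking the three containments. Your added bookkeeping about the refinement of overlay partitions and the uniqueness of ancestors is accurate and makes explicit what the paper leaves implicit.
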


\begin{proof}
The permanence condition only gets more restrictive in lower levels. Therefore, if $(P,Q)\in \PCG(\mc E,j,\mc R_j)$,

the edge $(P',Q')\in \PCG(\mc E,i,\mc R_i)$, as desired.
\end{proof}

$\TwoLevelConditioningVerts$ uses $I \gets \CG(\mc E,i+1,\mc Q_{i+1},\mc R_{i+1})$ to define a directed metric on the set $\mc R_{i+1}'\subseteq \mc P_i(\mc E)$ of parts with ancestors in $\mc R_{i+1}$. Specifically, define a (distance) function $d_{dir}$ on $\mc R_{i+1}'$ by letting $d_{dir}(Q)$ be the distance from $\mc Q_{i+1}$ to the parent of $Q$ in $\mc R_{i+1}$ in the digraph $I$.

$\TwoLevelConditioningVerts$ could use $d_{dir}$ by picking some distance threshold $j^*+1$ from $\mc Q_{i+1}$, letting $\mc R_i$ be the set of all $P\in \mc R_{i+1}'$ with $d_{dir}(P)\le j^*+1$, and defining $\mc Q_i$ to be the set of parts $P\in \mc R_i$ with $d_{dir}(P)\le j^*$ with near-largest shortcutter size. If it did this, $\mc R_{i+1}'\setminus \mc R_i$ would consist of parts that at least distance $\R{i+1}$ from any of the chosen parts $\mc Q_i$. Even better, the $\mc Q_i$s are at least $\gammatemp$ times closer to $\mc Q_{i+1}$ in the $H$-effective resistance metric than any part in $\mc R_{i+1}'\setminus \mc R_i$. Therefore, the conditioning hierarchy being locally maximum at level $i+1$ shows that deleting all parts of $\mc Q_i$ from shortcutters for parts in $\mc Q_{i+1}$ that they have edges to does not increase the conductivity of $\mc E$ much. The upside of doing this is that all shortcutters for parts in $\mc R_{i+1}'\setminus \mc R_i$ have shortcutters that are carved with respect to $\cup_{j\le i+1} \mc Q_j$, thus respecting Key Idea \ref{keyidea:relevance}. This approach does not quite work across multiple conditioning rounds, but is easily fixed in Section \ref{subsec:two-level} by using $j^* + (2\sigma_1)^i$ instead of $j^*+1$.

The previous paragraph suggests that conditioning digraphs can be used to pick good parts to condition on once. To pick good parts to condition on multiple times, we need to show that (a) distances only grow in conditioning digraphs and (b) parts chosen at lower levels cannot intersect shortcutters for parts in higher levels that were declared irrelevant. We discuss concern (a) in Section \ref{subsec:maintained-state}, while we discuss concern (b) in Section \ref{subsec:two-level}.

\subsubsection{Fast computation with conditioning digraphs}

We now show that computations involving conditioning digraphs are cheap. These graphs may be very dense, but luckily they are not dense compared to the input graph $H$:

\begin{proposition}\label{prop:runtime-cg}
The conditioning digraphs has two nice computational properties:

\begin{itemize}
\item (Size) $\CG(\mc E,i,\mc Q,\mc R)$ and

$\PCG(\mc E,i,\mc R)$ have at most $m^{1 + o(1)}$ edges.
\item (Computation time) The graphs $\CG(\mc E,i,\mc Q,\mc R)$ and

$\PCG(\mc E,i,\mc R)$ can be computed in $m^{1 + o(1)}$ time.
\end{itemize}

\end{proposition}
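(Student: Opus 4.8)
The plan is to prove the edge‑count bound and the running‑time bound together, since an $m^{1+o(1)}$‑time construction automatically certifies an $m^{1+o(1)}$ edge bound. First observe that $\CG(\mc E,i,\mc Q,\mc R)$ and $\PCG(\mc E,i,\mc R)$ have the same vertex set $\mc R$, and that every edge of $\CG$ is an edge of $\PCG$ (the ``Temporary condition'' only deletes edges). So it suffices to (i) bound the number of pairs $(P,Q)$ satisfying the ``Permanent condition'' and produce them within budget, and (ii) show that the extra work for $\CG$ — one $\PreprocANN$ call, one Johnson–Lindenstrauss embedding, and a per‑candidate‑edge test — also fits in $m^{1+o(1)}$.

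For (i), the structural inputs are: the parts of $\mc P_i(\mc E)$ are pairwise disjoint, so $\sum_{P\in\mc P_i(\mc E)}|P|\le n\le m$; and \emph{bounded overlap of shortcutters} — since $\mc E$ is $\sigma_0$ hordes, each with at most $\ellmax\le m^{o(1)}$ clans (Invariant \ref{inv:numclans}), and each clan is a set of vertex‑disjoint shortcutters, every vertex of $H$ lies in at most $\sigma_0\ellmax=m^{o(1)}$ shortcutters of $\mc E$; hence $\sum_{S}|S|\le n\cdot m^{o(1)}=m^{1+o(1)}$, summing over all shortcutters $S$ of $\mc E$. I would then charge each edge $(P,Q)$ of $\PCG$ to a witnessing pair $(v,Q')$, where $Q'$ is a descendant part of $Q$ with $v\in P\cap S_{Q'}$ — choosing, say, $Q'$ of minimal level among witnesses and $v$ the least‑indexed vertex of $P\cap S_{Q'}$. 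Given $(v,Q')$ one recovers $P$ (the unique part of $\mc P_i(\mc E)$ containing $v$) and $Q$ (the unique level‑$i$ ancestor of $Q'$, using that the overlays are nested), so this charging is injective, and $|E(\PCG)|\le \sum_{v}\bigl|\{Q': v\in S_{Q'}\}\bigr|$. Since $v$ lies in only $m^{o(1)}$ shortcutters, the remaining point is that a single shortcutter is the assigned shortcutter of only $m^{o(1)}$ parts — equivalently, its core meets only $m^{o(1)}$ parts of the overlay at that level; granting this, the count is $n\cdot m^{o(1)}=m^{1+o(1)}$.

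To compute $\PCG$ in this budget I would: (a) build, for each vertex $v$, the list of shortcutters of $\mc E$ containing $v$, and with each such $S$ the parts assigned to $S$ — total work $\bigl(\sum_S|S|\bigr)\cdot m^{o(1)}=m^{1+o(1)}$ from the data structure's adjacency information; (b) for every vertex $v$, every shortcutter $S\ni v$, and every part $Q'$ assigned to $S$ whose level‑$i$ ancestor $Q$ lies in $\mc R$, and with the part $P$ of $\mc P_i(\mc E)$ containing $v$ also in $\mc R$, emit the directed edge $(P,Q)$; (c) deduplicate by hashing. By the charging bound of the previous paragraph this is $m^{1+o(1)}$ time. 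For $\CG$, additionally compute once the factor‑$2$ Johnson–Lindenstrauss embedding of $H$ (Theorem \ref{thm:jl}) and $D\gets\PreprocANN(H,\{v_P\}_{P\in\mc Q})$ (Theorem \ref{thm:ann}), each in $\tilde{O}(m)$ time; then for each candidate edge $(P,Q)$ already produced for $\PCG$ (these contain all edges of $\CG$), test the ``Temporary condition'' with two $\texttt{JLReff}$ evaluations and one $\ANN_D$ query, each $\tilde{O}(1)$, retaining the edge iff it passes. With $m^{1+o(1)}$ candidate edges this is $m^{1+o(1)}$ total.

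The main obstacle is the size bound, and inside it the structural claim that each shortcutter (equivalently its core) is relevant to only $m^{o(1)}$ parts of $\mc P_i(\mc E)$; the rest is bookkeeping. This should follow from the fact that each overlay $\mc P_j(\mc E)$ is obtained by refining the cores of only $m^{o(1)}$ clans, together with the ``Containment'' property carried across $\CreateEmpire$ calls, but pinning it down — and arranging the charging so the descendant/ancestor structure of $\PCG$ inflates the incidence count by at most an $m^{o(1)}$ factor — is the delicate step. If a clean per‑shortcutter bound resists, a fallback is to charge instead to $(P,S)$ incidences with $P\cap S\ne\emptyset$, of which there are at most $m^{1+o(1)}$ directly by disjointness of parts and bounded overlap, and to absorb the ancestor multiplicity into the $m^{o(1)}$ slack; this localizes the difficulty to exactly the same structural point.
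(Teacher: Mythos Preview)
Your approach is essentially the paper's: both charge each edge $(P,Q)$ to a vertex witness $v\in P\cap S_{Q'}$ for some descendant $Q'$ of $Q$, and invoke the bounded-overlap fact that $v$ lies in at most $m^{o(1)}$ shortcutters. The paper's proof is much terser --- it simply asserts ``$(m^{o(1)})^2$ pairwise intersections'' per vertex --- and, like your proposal, does not resolve the step you correctly flag as the obstacle: given a shortcutter $S$ containing $v$, one must still bound the number of distinct level-$i$ parts $Q$ having a descendant $Q'$ with $S_{Q'}=S$, and this ``ancestor multiplicity'' is not obviously $m^{o(1)}$. Your fallback of charging to $(P,S)$ incidences is exactly what the paper does, and it localizes the difficulty to the same point.

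The gap is genuine for empires satisfying only the stated hypotheses. Take $\sigma_0=2$, a level-$1$ horde with one clan consisting of a single core $V(H)$ and shortcutter $V(H)$ (valid whenever $r_{\max}\le\murad\R{1}$, i.e.\ $\alpha$ is small), and a level-$2$ horde with one clan of $n$ singleton cores and shortcutters. Every level-$1$ part is then assigned the unique shortcutter $V(H)$, so for \emph{every} ordered pair $(P,Q)$ in $\mc P_2(\mc E)$ the level-$1$ descendant of $Q$ witnesses the Permanent condition, and $\PCG(\mc E,2,\mc P_2)$ is the complete digraph on $n$ vertices --- $n^2$ edges. Hence a full proof must invoke additional structure of the empires actually produced by $\CreateEmpire$; your suggestion that each core meets only $m^{o(1)}$ overlay parts is the right shape of statement to aim for, but the observation that $\mc P_i$ is the common refinement of only $m^{o(1)}$ core families does not by itself yield it (a single cell can still shatter into many pieces). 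In short, your proposal matches the paper's proof in content and is more candid than the paper about where the missing step lies.
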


\begin{proof}

\textbf{Size.} It suffices to bound the number of possible intersections between a cluster $P$ and a shortcutter $S_Q$ (property (1) for edge existence). Each intersection must contain a vertex and each vertex is only in one shortcutter in each clan. Since there are only $m^{o(1)}$ clans in $\mc E$, only $(m^{o(1)})^2 \le m^{o(1)}$ pairwise intersections can involve this vertex. Therefore, there are at most $nm^{o(1)}\le m^{1 + o(1)}$ candidate pairs that satisfy (1).

\textbf{Computation time.} For each candidate edge described in the ``Size'' part, it takes two approximate nearest neighbor calls to check whether then ``Temporary condition'' is satisfied. By Theorems \ref{thm:ann} and \ref{thm:jl}, these queries take $\tilde{O}(1)$ time each, with $\tilde{O}(m)$ preprocessing time up front. Therefore, it only takes $m^{1 + o(1)}$ time to compute $\CG$.
\end{proof}

Since this graph has almost-linear size, $\CG(\mc E,i,\mc Q,\mc R)$

and $\PCG(\mc E,i,\mc R)$-distances from all parts in $\mc Q$ to each part in $\mc R$ can be computed in total time $m^{1 + o(1)}$.

\subsection{Maintained state across multiple conditioning rounds (and deleting far-away conditioned vertices) using $\MakeNonedgesPermanent$}\label{subsec:maintained-state}

In this section, we deal with issue (a) raised in the previous section by turning nonedges that do not satisfy the ``Temporary condition'' into ones that do not satisfy the ``Permanent condition.'' We start by showing that the permanent conditioning digraph at each level only gets smaller:

\begin{proposition}[Permanent horizontal monotonicity]\label{prop:pcg-horizontal-monotonicity}
Consider two empires $\mc E_{prev}'$ and $\mc E$ that satisfy the ``Containment'' input condition in Definition \ref{def:cond-verts-input}. Consider some $i\in [\sigma_0]$ and relevant sets $\mc R_{prev}'\subseteq \mc P_i(\mc E_{prev}')$ and $\mc R\subseteq \mc P_i(\mc E)$. Suppose that each part $P\in \mc R$ is contained in some (unique by ``Containment'') part $Q\in \mc R_{prev}'$. Let $I_{prev}' := \PCG(\mc E_{prev}',i,\mc R_{prev}')$ and $I := \PCG(\mc E,i,\mc R)$.

Consider any parts $P,P'\in V(I)$ and let $Q,Q'\in V(I_{prev}')$ be the unique parts that contain $P$ and $P'$ respectively. If $(P,P')\in E(I)$, then $(Q,Q')\in E(I_{prev}')$.
\end{proposition}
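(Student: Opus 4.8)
The plan is to unpack the definition of the permanent conditioning digraph and push the witnessing descendant through the containment map provided by the input condition. By definition of $\PCG$, the hypothesis $(P,P')\in E(I)=E(\PCG(\mc E,i,\mc R))$ means precisely that the ``Permanent condition'' holds for $(P,P')$ relative to $\mc E$: there is a descendant $P''\in \cup_{j\le i}\mc P_j(\mc E)$ of $P'$ (so $P''$ lies at some level $j\le i$ and $P''\subseteq P'$) with $P\cap S_{P''}\ne \emptyset$. To conclude $(Q,Q')\in E(I_{prev}')$ it suffices to exhibit a descendant $Q''\in \cup_{j\le i}\mc P_j(\mc E_{prev}')$ of $Q'$ with $Q\cap S_{Q''}\ne \emptyset$, since that is exactly the ``Permanent condition'' for $(Q,Q')$ relative to $\mc E_{prev}'$.

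First I would take $Q''$ to be the unique part of $\mc P_j(\mc E_{prev}')$ containing $P''$; this exists (and is unique) by the ``Containment'' input condition (Definition \ref{def:cond-verts-input}) applied to $\mc E$ and $\mc E_{prev}'$ at level $j$, and that same condition yields $S_{P''}\subseteq S_{Q''}$. It then remains to check two things. (ii) is easy: $\emptyset\ne P\cap S_{P''}\subseteq Q\cap S_{Q''}$, using $P\subseteq Q$ (as $Q$ is the unique part of $\mc P_i(\mc E_{prev}')$ containing $P$) together with $S_{P''}\subseteq S_{Q''}$. The only step that uses more than definition-chasing is (i): that $Q''$ is a descendant of $Q'$, i.e. $Q''\subseteq Q'$. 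Here I would invoke the laminar structure of overlay partitions: for $j\le i$, $\mc P_j(\mc E_{prev}')$ refines $\mc P_i(\mc E_{prev}')$ (Definition \ref{def:covering-hordes-overlays}, since $\chi_j$ refines cores of clans in $\cup_{k\ge j}\mc H_k\supseteq \cup_{k\ge i}\mc H_k$). Both $Q''\in \mc P_j(\mc E_{prev}')$ and $Q'\in \mc P_i(\mc E_{prev}')$ contain the nonempty set $P''$ — for $Q''$ by construction, and for $Q'$ because $P''\subseteq P'\subseteq Q'$ — so the refinement relation forces $Q''\subseteq Q'$.

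Assembling (i) and (ii): $Q''\in \cup_{j\le i}\mc P_j(\mc E_{prev}')$ is a descendant of $Q'$ and $Q$ intersects $S_{Q''}$, so the ``Permanent condition'' holds for the pair $(Q,Q')$ in $\PCG(\mc E_{prev}',i,\mc R_{prev}')$; since $Q,Q'\in V(I_{prev}')=\mc R_{prev}'$ by hypothesis, this gives $(Q,Q')\in E(I_{prev}')$, as desired. I expect the only potential friction to be step (i) — making sure the ancestor of the descendant witness stays inside the ancestor of $P'$ — which is handled cleanly by the cross-level refinement of the overlay partitions; everything else is a direct application of the ``Containment'' input condition.
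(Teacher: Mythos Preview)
Your proposal is correct and follows essentially the same route as the paper's proof: take the witnessing descendant $P''$ of $P'$, push it through the ``Containment'' condition to obtain $Q''\in\mc P_j(\mc E_{prev}')$ with $S_{P''}\subseteq S_{Q''}$, and then use the laminar/refinement structure of the overlay partitions together with $P''\subseteq P'\subseteq Q'$ to conclude $Q''\subseteq Q'$. The paper phrases step (i) via laminarity (either $Q''\subseteq Q'$ or $Q'\subseteq Q''$, and the lower level forces the former), whereas you phrase it directly via refinement of $\mc P_j(\mc E_{prev}')$ over $\mc P_i(\mc E_{prev}')$; these are the same argument.
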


\begin{proof}
Since $(P,P')\in E(I)$, there exists a descendant $W'\in \mc P_j(\mc E)$ of $P'$ for which $P$ intersects $S_{W'}$ for some $j \le i$. Let $U'\in \mc P_j(\mc E_{prev}')$ be the unique part for which $W'\subseteq U'$. This exists by the ``Containment'' input condition in Definition \ref{def:cond-verts-input}. Also by ``Containment,'' $S_{W'}\subseteq S_{U'}$. Therefore, $S_{U'}$ intersects $Q$ since $P\subseteq Q$.

Now, we just need to show that $U'$ is a descendant of $Q'$; i.e. that $U'\subseteq Q'$. $W'\subseteq P'$ since $W'$ is a descendant of $P'$. $P'\subseteq Q'$ by definition of $Q'$. $W'\subseteq U'$ by definition of $U'$, which means that $U'\cap Q' \ne \emptyset$. Since the overlay partitions of any empire form a laminar family of sets, this means that $U'\subseteq Q'$ or that $Q'\subseteq U'$. Since $U'$ is in a lower horde than $Q'$, $U'\subseteq Q'$. Therefore, $Q'$ has a descendant whose shortcutter intersects $Q$, so $(Q,Q')\in E(I_{prev}')$, as desired.
\end{proof}

Proposition \ref{prop:pcg-horizontal-monotonicity} only reasons about containment of permanent conditioning digraphs between applications of $\ConditioningVerts$. In particular, it does not say anything a priori about containment of temporary conditioning digraphs. In general, temporary conditioning digraphs may not be contained within prior conditioning digraphs because conditioning can substantially change the effective resistance metric of a graph. We show that the shortcutters in $\mc E$ can be modified to make temporary conditioning digraphs into permanent ones. As long as the $\mc Q$s defining each temporary conditioning digraph are part of a locally maximum conditioning hierarchy, this modification does not substantially increase the conductivity of $\mc E$.

\begin{algorithm}[H]
\SetAlgoLined
\DontPrintSemicolon
\caption{$\MakeNonedgesPermanent(\mc E, i, \mc Q, \mc R)$}

    \ForEach{clan $\mc C$ in the horde $\mc E_i$ of $\mc E$}{

        split $\mc C$ into clans $\mc C_0, \mc C_1, \hdots, \mc C_{\log m}$, where $\mc C_j$ consists of all shortcutters in $\mc C$ with between $2^j$ and $2^{j+1}$ incident edges\; \label{line:split-clans-mnp}

    }

    $I\gets \CG(\mc E,i,\mc Q,\mc R)$\;

    \ForEach{pair $(P,Q)$ with $P,Q\in \mc R$ and $P$ intersecting $S_{Q'}$ for some descendant $Q'$ of $Q$}{

        \If{$(P,Q)\notin E(I)$}{

            \ForEach{descendant $Q'$ of $Q$}{

                $S_{Q'}\gets S_{Q'}\setminus P$\; \label{line:remove-intersections}

            }

        }

    }

\end{algorithm}

\begin{proposition}\label{prop:make-nonedges-permanent}
$\MakeNonedgesPermanent(\mc E,i,\mc Q,\mc R)$ takes an empire $\mc E$, an index $i$, a set of parts $\mc R\subseteq \mc P_i(\mc E)$, and a set of parts $\mc Q\subseteq \mc R$.

Suppose that that the locally maximum condition holds; i.e. that

$$\min_{P\in \mc Q} \DSize(\mc E, P) \ge m^{-1/\sigma_1} \max_{P\in \mc R} \DSize(\mc E, P)$$

Then $\MakeNonedgesPermanent$ deletes vertices from shortcutters in $\mc E$ and splits clans to obtain an empire $\mc E'$ with the following guarantees:

\begin{itemize}
\item (Permanence) $\PCG(\mc E',i,\mc R)$ is a subdigraph of

$\CG(\mc E,i,\mc Q,\mc R)$.
\item (Conductivity and number of clans) If $\mc E$ is $\zeta$-conductive, then $\mc E'$ is $\zeta + (\log n)\muapp(\ellmax + \tau)$-conductive. Furthermore, $|\mc E_i'|\le O(\log n) |\mc E_i|$ for all $i\in [\sigma_0]$.
\end{itemize}

Furthermore, $\MakeNonedgesPermanent$ takes almost-linear time.
\end{proposition}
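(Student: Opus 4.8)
I would verify the three bullets of the conclusion separately, the conductivity bound being the only substantive one. \emph{Permanence} is essentially bookkeeping: once $I\gets\CG(\mc E,i,\mc Q,\mc R)$ is fixed, the second \texttt{foreach} loop ranges over exactly those pairs $(P,Q)$ with $P,Q\in\mc R$ satisfying the ``Permanent condition,'' and for each such pair with $(P,Q)\notin E(I)$ it deletes the vertex set $P$ from $S_{Q'}$ for every descendant $Q'$ of $Q$. Deleting vertices from shortcutters leaves all cores, hence all overlay partitions $\mc P_j(\cdot)$, unchanged, so ``descendant'' means the same in $\mc E$ and $\mc E'$; and deletions never create new intersections. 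Thus in $\mc E'$ a pair with $(P,Q)\notin E(I)$ either already violated the ``Permanent condition'' or no longer has $P$ meeting any descendant shortcutter of $Q$, so $(P,Q)\notin\PCG(\mc E',i,\mc R)$; contrapositively $\PCG(\mc E',i,\mc R)\subseteq E(I)$. For the \emph{number of clans}, Line \ref{line:split-clans-mnp} replaces each clan of $\mc E_i$ by at most $\log m+1=O(\log n)$ buckets and touches no other horde. For the \emph{running time}: $\CG$ has $m^{1+o(1)}$ edges and is built in $m^{1+o(1)}$ time (Proposition \ref{prop:runtime-cg}); by the same intersection count plus Key Idea \ref{keyidea:overlap} there are only $m^{1+o(1)}$ pairs to iterate over, each tested against $E(I)$ in $\tilde{O}(1)$ time; and each part $P$ is deleted from only $m^{o(1)}$ shortcutters overall, so all deletions together cost $\sum_P|P|\cdot m^{o(1)}=m^{1+o(1)}$.

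\textbf{Conductivity, setup.} First I would observe that the bucket split is harmless: since $\mc C_j\subseteq\mc C$ we have $s_{\mc C_j}\ge s_{\mc C}$, so each $\mc C_j$ (with $\texttt{deleted}(\mc C_j):=\texttt{deleted}(\mc C)$) inherits $\zeta$-conductivity from $\mc C$. The remaining task is to bound the conductance increase produced by Line \ref{line:remove-intersections}, and the key geometric fact is that every deleted cluster is far from the core it is removed from. Fix a deleted pair $(P,Q)$: since $P$ meets a descendant shortcutter of $Q$ the ``Permanent condition'' holds, so $(P,Q)\notin E(I)$ forces the ``Temporary condition'' to fail, and the lower bound of Corollary \ref{cor:ann} gives
$$\texttt{Reff}_H(v_Q,v_{P_{\min}})\ \ge\ \gammatemp\left(\texttt{Reff}_H(v_P,v_{P_{\min}})+\R{i}\right),$$
where $P_{\min}$ is the part of $\mc Q$ whose representative is closest to $v_P$. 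By the triangle inequality $\texttt{Reff}_H(v_P,v_Q)\ge(\gammatemp-1)\R{i}$ (and the gap grows like $\texttt{Reff}_H(v_P,v_{P_{\min}})$ when $v_P$ is far from $\mc Q$). Every descendant $Q'\subseteq Q$ has its core $C_{Q'}$ meeting $Q$, hence lying within $O(\murad\R{i})$ of $v_Q$, and $P$ has diameter $O(\murad\R{i})$; so $P$ sits at $H$-effective-resistance distance $\Omega(\gammatemp\R{i})$ from $C_{Q'}$, and this persists in $H_{\mc C}$ since deleting edges only raises resistances.

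\textbf{Conductivity, accounting.} Given this separation, on each clan the deletions are exactly of the type handled by Proposition \ref{prop:delete-low-rad} (with $\psi=m^{o(1)}$): each removed cluster is well-separated from the core of the shortcutter it leaves. Applying that proposition per clan bounds the total conductance increase by $\frac{\muapp(|\mc L|+|\texttt{deleted}(\mc C)|)}{\R{i}}$, where $\mc L$ is the set of clusters deleted from that clan. The term $|\texttt{deleted}(\mc C)|\le\tau m^{1/\sigma_1}s_{\mc C}$ by $\tau$-modifiedness accounts for the $\tau$ contribution; what remains is to show $|\mc L|\le m^{o(1)}m^{1/\sigma_1}s_{\mc C}$, which after summing over the $\ellmax$ clans gives the $\ellmax$ contribution, so that $\zeta$-conductivity degrades to $(\zeta+(\log n)\muapp(\ellmax+\tau))$-conductivity. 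Here is where the locally maximum hypothesis enters: every deleted part lies in $\mc R$ and therefore has all its descendant shortcutters of size at most $m^{1/\sigma_1}\min_{P\in\mc Q}\DSize(\mc E,P)$, while each vertex lies in only $m^{o(1)}$ shortcutters (Key Idea \ref{keyidea:overlap}); combining these size and overlap bounds with bucketing is what caps the number of clusters deleted from any one clan at $m^{o(1)}m^{1/\sigma_1}s_{\mc C}$. (For the lower hordes $\mc E_j$, $j<i$, the same estimate with $\R{j}\le\R{i}$ only improves the per-clan loss, so they are no harder.)

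\textbf{Main obstacle.} The delicate step is the last paragraph: reconciling the separation estimate with the cluster count so that Proposition \ref{prop:delete-low-rad} loses only an additive $(\log n)\muapp(\ellmax+\tau)$ rather than a polynomial factor. Concretely, one must argue — using the locally maximum condition together with the $m^{o(1)}$-overlap of shortcutters and bucketing, not the trivial ``one part per shortcutter-vertex'' bound, which is far too weak for the large-shortcutter buckets — that at most $m^{o(1)}m^{1/\sigma_1}s_{\mc C}$ clusters are ever removed from a clan, and one must keep the diameter constants ($\murad$ against $\gammatemp$, and the interplay of the $\R{i}$ and $\R{i+1}$ scales) bookkept carefully enough that the ``well-separated'' hypothesis of Proposition \ref{prop:delete-low-rad} (via Lemma \ref{lem:small-holes} after ball-splitting with Lemma \ref{lem:ball-split}) is genuinely satisfied. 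Everything else is routine.
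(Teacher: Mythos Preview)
Your treatment of permanence, the clan count, and the runtime is fine and matches the paper. The conductivity argument, however, has a real gap, and the paper's proof fills it with an idea you are missing.

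You take $\mc L$ to be the set of parts $P\in\mc R$ that get removed from some shortcutter in the clan $\mc C_k$, and you try to bound $|\mc L|$ via ``every deleted part lies in $\mc R$ and therefore has descendant shortcutters of size at most $m^{1/\sigma_1}\min_{P\in\mc Q}\DSize(\mc E,P)$.'' That upper bound on sizes is correct, but it says nothing about \emph{how many} parts of $\mc R$ get removed: $\mc R$ can contain $\Theta(n)$ parts, and nothing in the locally maximum hypothesis, the overlap bound, or bucketing prevents a large fraction of them from intersecting descendant shortcutters of far-away $Q\in\mc R$. Your ``Main obstacle'' paragraph correctly flags this step as delicate, but the route you sketch (counting removed parts via their descendant-shortcutter sizes) cannot close it. There is also a related separation issue: your estimate $\texttt{Reff}_H(v_P,v_Q)\ge(\gammatemp-1)\R{i}$ is dominated by the $\murad\R{i}$ diameter of parts and cores, so it does not by itself put $P$ outside the well-separated regime required by Proposition~\ref{prop:delete-low-rad}.

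The paper's fix is to stop counting removed parts and instead \emph{cover} them. For each dyadic scale $z$, it lets $\mc N_z$ be a set of radius-$2^z\R{i}$ clusters covering all vertices within distance $2^z\R{i}$ of $\cup_{P\in\mc Q}P$. The failure of the temporary condition (Corollary~\ref{cor:ann}) then says: every removal of a part $P$ from $S_{Q'}$ is subsumed by removing the $\mc N_z$-cluster containing $P$ from shortcutters whose cores are at distance $\ge\gammatemp\,2^z\R{i}$ from $\mc N_z$, where $z$ is chosen so that $\texttt{Reff}_H(v_P,v_{P_{\min}})\approx 2^z\R{i}$. This buys two things at once. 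First, the separation-to-diameter ratio is now a fixed constant $\gammatemp$ at every scale, so Proposition~\ref{prop:delete-low-rad} applies uniformly. Second, $|\mc N_z|$ is controlled not by $|\mc R|$ but by the number of low-diameter cores needed to cover $\mc Q$: assigning each $P'\in\mc Q$ to the core of its maximum-size descendant shortcutter gives $|\mc N_z|\le\ellmax\, m/\min_{P\in\mc Q}\DSize(\mc E,P)$, and then locally maximum plus the fact that $\mc C_k$ contains a descendant shortcutter of some $\mc R$-part plus bucketing turn this into $|\mc N_z|\le 2\ellmax m^{1/\sigma_1}s_{\mc C_k}$. Summing Proposition~\ref{prop:delete-low-rad} over the $\log m$ scales $z$ yields the additive $(\log n)\muapp(\ellmax+\tau)$ loss. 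The dyadic cover of the $\mc Q$-neighborhood is the missing idea; once you have it, the rest of your accounting goes through essentially as written.
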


\begin{figure}

\includegraphics[width=1.0\textwidth]{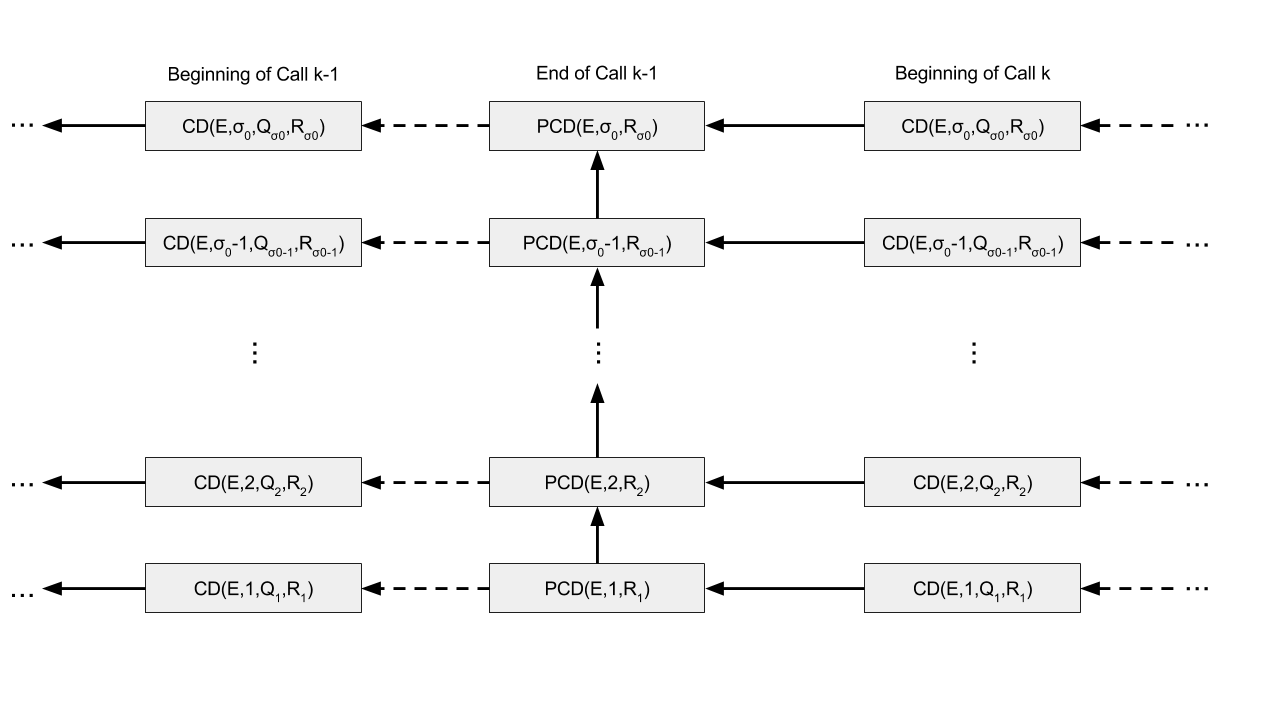}

\caption{Containment relationships between conditioning digraphs. $CD$ and $PCD$ abbreviate $\CG$ and $\PCG$ respectively. A conditioning digraph $I_0$ is \emph{contained} in another conditioning digraph $I_1$ if (a) each vertex (part) $P\in V(I_0)$ is entirely contained in a unique part $Q\in V(I_1)$ and (b) for any edge $(P,P')\in E(I_0)$, $(Q,Q')\in E(I_1)$, where $Q$ and $Q'$ are the unique containing parts for $P$ and $P'$. Arrows point towards the containing graph. The dashed horizontal edges are present due to Lemma \ref{lem:tcg-horizontal-monotonicity} and calls to $\MakeNonedgesPermanent$. The solid horizontal edges are present due to Proposition \ref{prop:pcg-horizontal-monotonicity}. The vertical edges are present thanks to Proposition \ref{prop:pcg-vertical-monotonicity}.}
\label{fig:containment-relationships}
\end{figure}

\begin{proof}
\textbf{Permanence.} Line \ref{line:remove-intersections} removes any intersection between two parts with no edge from one to the other, thus ensuring that any nonedge in $\CG(\mc E,i,\mc Q,\mc R)$, which may satisfy the ``Permanent condition,'' does not satisfy it in $\mc E'$.

\textbf{Conductivity and number of clans.} Removing shortcutters from a clan cannot decrease its effective size, so $s_{\mc C_k}\le s_{\mc C}$ for all $k\in \{0,1,\hdots,\log m\}$. Therefore, the number of clans grew by a factor of at most $O(\log n)$.

Now, we bound the increase in conductivity. It suffices to consider a clan $\mc C_k\in \mc E_j$ for some $j\le i$ that contains a shortcutter for a descendant of a part $P_*\in \mc R$. Otherwise, no shortcutter in $\mc C_k$ is modified by Line \ref{line:remove-intersections}, which means that its conductivity does not increase.

We start by characterizing the removals done by Line \ref{line:remove-intersections} in a simpler way. Let $\mc N_z$ be a set of clusters with radius $2^z\R{i}$ that contain all vertices with $H$-effective resistance distance at most $2^z\R{i}$ from $\mc Q$. Then all vertex removals done by Line \ref{line:remove-intersections} are also done by, for all $z\in \{0,1,\hdots,\log m\}$, removing the clusters in $\mc N_z$ from shortcutters for cores in $\mc C_k$ with $H$-distance greater than $\gammatemp 2^z\R{i}$ from $\mc N_z$. This holds by the ``Temporary condition'' not being satisfied for nonedges of $\CG(\mc E,i,\mc Q,\mc R)$ and Corollary \ref{cor:ann}.

Now, we bound the effect of deleting $\mc N_z$ for one $z\in \{0,1,\hdots,\log m\}$ from shortcutters for $\gammatemp$-separated cores in $\mc C_k$. We start by bounding $|\mc N_z|$ in terms of $s_{\mc C_k}$. Start by noticing to bound $|\mc N_z|$, it suffices to bound the number of $O(\R{i})$ $H$-effective resistance diameter clusters needed to cover all parts in $\mc Q$.
By definition of $\DSize(\mc E,P')$, there is a shortcutter with core in $\mc E_j$ for some $j \le i$ whose size is $\rho = \DSize(\mc E,P')$ that intersects $P'$. Each clan has at most $m/\rho$ such shortcutters. Furthermore, this core has $H$-effective resistance diameter $\R{j}\le \R{i}$. ``Assigning'' a part $P'\in \mc Q$ to the core of the maximum size descendant shortcutter therefore shows that,

\begin{align*}
|\mc N_z| &\le (\sum_{i'=1}^i |\mc E_{i'}|)\frac{m}{\min_{Q\in \mc Q} \DSize(\mc E,Q)}\\
&\le \frac{\ellmax m}{\min_{Q\in \mc Q} \DSize(\mc E,Q)}\\
\end{align*}

By the ``Locally maximum'' input condition,

$$\frac{\ellmax m}{\min_{Q\in \mc Q} \DSize(\mc E,Q)} \le \frac{\ellmax m^{1+1/\sigma_1}}{\max_{Q\in \mc R} \DSize(\mc E,Q)}$$

Since $\mc C_k$ contains a shortcutter for a descendant of a part in $\mc R$,

$$\frac{\ellmax m^{1+1/\sigma_1}}{\max_{Q\in \mc R} \DSize(\mc E,Q)} \le \frac{\ellmax m^{1+1/\sigma_1}}{\min_{C\in \mc C_k} |E(S_C)\cup \partial S_C|}$$

By definition of $\mc C_k$ on Line \ref{line:split-clans-mnp},

$$\frac{\ellmax m^{1+1/\sigma_1}}{\min_{C\in \mc C_k} |E(S_C)\cup \partial S_C|} \le \frac{2\ellmax m^{1+1/\sigma_1}}{\max_{C\in \mc C_k} |E(S_C)\cup \partial S_C|}$$

By definition of $s_{\mc C_k}$,

$$|\mc N_z|\le \frac{2\ellmax m^{1+1/\sigma_1}}{\max_{C\in \mc C_k} |E(S_C)\cup \partial S_C|} \le 2\ellmax m^{1/\sigma_1}s_{\mc C_k}$$

Therefore, by Proposition \ref{prop:delete-low-rad} applied to $\mc L\gets \mc N_z$ and $\mc C\gets \mc C_k$, Line \ref{line:remove-intersections} only additively increases the conductivity of $\mc C_k$ by $\muapp(2\ellmax + \tau)$. Therefore, summing over each $z\in \{0,1,2,\hdots,\log m\}$ shows that the total increase in conductivity of $\mc C_k$ due to Line \ref{line:remove-intersections} is at most $(\log m)\muapp(2\ellmax + \tau)$, as desired.

\textbf{Runtime.} Looping over intersecting pairs takes $m^{1+o(1)}$ time. Checking the presence of an edge takes constant time. Therefore, this procedure takes $m^{1+o(1)}$ time.
\end{proof}

Now, we are ready to prove the main result of this section. Specifically, applying $\MakeNonedgesPermanent$ to each temporary conditioning digraph ensures that temporary conditioning digraphs from later applications of $\ConditioningVerts$ are contained in earlier ones:

\begin{lemma}[Temporary horizontal monotonicity]\label{lem:tcg-horizontal-monotonicity}
Consider some $i\in [\sigma_0]$ and two conditioning hierarchies $\mc{CH}_{prev} = (\{\mc Q_i^{prev}\}_{i=1}^{\sigma_0},\{\mc R_i^{prev}\}_{i=1}^{\sigma_0})$ and $\mc{CH} = (\{\mc Q_i\}_{i=1}^{\sigma_0},\{\mc R_i\}_{i=1}^{\sigma_0})$ for the empires $\mc E_{prev}$ and $\mc E$ in graphs $H_{prev}$ and $H$ respectively.

Define an empire $\mc E_{prev}'$ in $H_{prev}$ that satisfies the following conditions:

\begin{itemize}
\item ($\mc E_{prev}'$ description) $\mc E_{prev}'$ is obtained by applying $\MakeNonedgesPermanent(\mc E_{prev},i,\mc Q_i^{prev},\mc R_i^{prev})$ for each $i$ and letting $\mc E_{prev}' \gets \mc E_{prev}$ afterwards.
\item (Containment) $\mc E_{prev}'$ and $\mc E$ satisfy the ``Containment'' input condition in Definition \ref{def:cond-verts-input}.
\item (Locally maximum) $\mc{CH}_{prev}$ is locally maximum.
\end{itemize}

For each $i$, let $I := \CG(\mc E,i,\mc Q_i,\mc R_i)$ and

$I_{prev} := \CG(\mc E_{prev},i,\mc Q_i^{prev},\mc R_i^{prev})$. Consider any parts $P,P'\in V(I)$ and let $Q,Q'\in V(I_{prev})$ be the unique parts that contain $P$ and $P'$ respectively. If $(P,P')\in E(I)$, then $(Q,Q')\in E(I_{prev})$.
\end{lemma}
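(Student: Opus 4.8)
The plan is to reduce this statement to the ``Permanence'' guarantee of Proposition \ref{prop:make-nonedges-permanent} by chasing a single witnessing descendant shortcutter through the ``Containment'' hypothesis. The point is that $\MakeNonedgesPermanent$ was designed so that the permanent conditioning digraph of $\mc E_{prev}'$ lies inside the temporary conditioning digraph $I_{prev}$ of $\mc E_{prev}$; so, after fixing $i\in[\sigma_0]$, it suffices to promote the edge $(P,P')\in E(I)$ to an edge of $\PCG(\mc E_{prev}',i,\mc R_i^{prev})$, which only involves the ``Permanent condition'' and not the metric ``Temporary condition.'' This mirrors the structure of the proof of Proposition \ref{prop:pcg-horizontal-monotonicity}, with $\mc E_{prev}'$ playing the role of the ``previous'' empire.

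Concretely, I would first unpack $(P,P')\in E(I)$ for $I=\CG(\mc E,i,\mc Q_i,\mc R_i)$: by the ``Permanent condition'' there is a descendant $W'\in\cup_{j\le i}\mc P_j(\mc E)$ of $P'$ whose shortcutter $S_{W'}$ in $\mc E$ intersects $P$. Applying the ``Containment'' hypothesis between $\mc E_{prev}'$ and $\mc E$ to $W'$ produces a unique $U'\in\mc P_j(\mc E_{prev}')$ with $W'\subseteq U'$ and, crucially, $S_{W'}\subseteq S_{U'}$ (this latter shortcutter in $\mc E_{prev}'$). Next I would check that $U'$ is a descendant of $Q'$: since $W'\subseteq P'\subseteq Q'$ and $W'\subseteq U'$, the parts $U'$ and $Q'$ overlap, and because the overlay partitions of the empire $\mc E_{prev}'$ form a laminar family in which lower-index partitions refine higher-index ones (and $j\le i$), this forces $U'\subseteq Q'$. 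Combining $S_{W'}\subseteq S_{U'}$, the intersection $P\cap S_{W'}\neq\emptyset$, and $P\subseteq Q$, we get that $S_{U'}$ intersects $Q$; together with $U'$ being a descendant of $Q'$, this is exactly the ``Permanent condition'' for the pair $(Q,Q')$ in $\mc E_{prev}'$, i.e. $(Q,Q')\in E(\PCG(\mc E_{prev}',i,\mc R_i^{prev}))$.

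Finally I would invoke the ``Permanence'' guarantee of Proposition \ref{prop:make-nonedges-permanent}, which is applicable because $\mc{CH}_{prev}$ is locally maximum and hence supplies the local-maximum hypothesis at level $i$, to conclude that $\PCG(\mc E_{prev}',i,\mc R_i^{prev})$ is a subdigraph of $\CG(\mc E_{prev},i,\mc Q_i^{prev},\mc R_i^{prev})=I_{prev}$; thus $(Q,Q')\in E(I_{prev})$, as desired. The one subtlety requiring care is that $\mc E_{prev}'$ is obtained by running $\MakeNonedgesPermanent$ at \emph{every} level, not just level $i$: I would remark that each such call only splits clans (leaving all cores, and therefore all overlay partitions, unchanged) and only deletes vertices from shortcutters, so shortcutters shrink monotonically; consequently no edge of the level-$i$ permanent conditioning digraph can be created by a later round, and the subdigraph relation established immediately after the level-$i$ round survives the remaining rounds. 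This monotonicity bookkeeping is the main (and only mildly delicate) obstacle; the rest is a direct chase through the ``Containment'' and ``Permanence'' guarantees.
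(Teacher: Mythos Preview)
Your proposal is correct and follows essentially the same route as the paper. The paper's proof is terser: it first observes that $I\subseteq\PCG(\mc E,i,\mc R_i)$ (since the temporary condition only adds a constraint), then invokes Proposition~\ref{prop:pcg-horizontal-monotonicity} directly to obtain $(Q,Q')\in E(\PCG(\mc E_{prev}',i,\mc R_i^{prev}))$, and finally applies the ``Permanence'' guarantee of Proposition~\ref{prop:make-nonedges-permanent} exactly as you do; your steps involving the witness $W'$, the containing part $U'$, and the laminarity argument are precisely an inline reproof of Proposition~\ref{prop:pcg-horizontal-monotonicity}, so the two arguments coincide. Your closing remark about the multi-level subtlety (that later rounds of $\MakeNonedgesPermanent$ only shrink shortcutters and hence cannot create new permanent edges) is a correct and worthwhile clarification that the paper leaves implicit.
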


\begin{proof}
Since edges in $\CG$ are more constrained then those in

$\PCG$, $I$ is a subdigraph of $\PCG(\mc E,i,\mc R_i)$. Therefore,

$$(P,P')\in E(\PCG(\mc E,i,\mc R_i))$$

By the ``Containment'' condition of this lemma, Proposition \ref{prop:pcg-horizontal-monotonicity} applies and shows that

$$(Q,Q')\in E(\PCG(\mc E_{prev}',i,\mc R_i^{prev}))$$

By the ``$E_{prev}'$ description'' and ``Locally maximum'' conditions, Proposition \ref{prop:make-nonedges-permanent} applies. By the ``Permanence'' guarantee of Proposition \ref{prop:make-nonedges-permanent}, $\PCG(\mc E_{prev}',i,\mc R_i^{prev})$ is a subdigraph of $I_{prev}$, which means that

$$(Q,Q')\in E(I_{prev})$$

as desired.
\end{proof}

\subsection{$\MakeNonedgesPermanent$ implies carving as a side effect}\label{subsec:carve-shortcutters}

In this section, we define \emph{carvable} conditioning hierarchies and show that the empire $\mc E$ associated with a conditioning hierarchy is carved with respect to the bottom of the hierarchy. In later sections, we will show that $\ConditioningVerts$ maintains a carvable conditioning hierarchy, which will make it easy to show Lemma \ref{lem:holes}.

\begin{definition}[Carvable conditioning hierarchies]
Consider an empire $\mc E$ and two families of sets $\{\mc Q_i\}_{i=1}^{\sigma_0}$ and $\{\mc R_i\}_{i=1}^{\sigma_0}$. We say that the pair $(\{\mc Q_i\}_{i=1}^{\sigma_0}, \{\mc R_i\}_{i=1}^{\sigma_0})$ is \emph{carvable} if satisfies the following properties for all $i\in [\sigma_0]$:

\begin{itemize}
\item (Horizontal containment) $\mc Q_i\subseteq \mc R_i\subseteq \mc P_i(\mc E)$
\item (Vertical containment) For every part $P\in \mc R_i$, there is some $Q\in \mc R_{i+1}$ for which $P\subseteq Q$.
\item (Strong vertical closeness) The $\mc R_{i+1}$ parent of each part in $\mc Q_i$ is within distance $(2\sigma_1)(4\sigma_1)^i$ of some part in $\mc Q_{i+1}$ in $\CG(\mc E,i+1,\mc Q_{i+1},\mc R_{i+1})$.
\item (Locally maximum) If $\mc Q_i\ne \emptyset$, $\min_{P\in \mc Q_i} \DSize(\mc E, P) \ge m^{-1/\sigma_1}\max_{Q\in \mc R_i} \DSize(\mc E, Q)$.
\item (Irrelevant parts are far away) Any part $P\in \mc P_i(\mc E)\setminus \mc R_i$ with parent $Q\in \mc R_{i+1}$ has the property that $Q$ is at least distance $(4\sigma_1)^i$ from any parent of a part in $\mc Q_i$ in the digraph $\CG(\mc E,i+1,\mc Q_{i+1},\mc R_{i+1})$.
\end{itemize}
\end{definition}

The above definition does not specifically call the $(\{\mc Q_i\}_{i=1}^{\sigma_0}, \{\mc R_i\}_{i=1}^{\sigma_0})$ pair a conditioning hierarchy. We now show that it is. This allows us to call such a pair of families of sets a \emph{carvable conditioning hierarchy} for $\mc E$:

\begin{remark}\label{rmk:carvable-ch}
A carvable pair of families of sets $(\{\mc Q_i\}_{i=1}^{\sigma_0},\{\mc R_i\}_{i=1}^{\sigma_0})$ for $\mc E$ is a locally maximum conditioning hierarchy.
\end{remark}

\begin{proof}

\textbf{Horizontal containment, vertical containment, and locally maximum.} All of these conditions are exactly the same as the definitions for conditioning hierarchies.

\textbf{Vertical closeness.} Consider the shortest path $P = P_0\rightarrow P_1\rightarrow \hdots\rightarrow P_t = Q$ in $\CG(\mc E,i+1,\mc Q_{i+1},\mc R_{i+1})$ from a part $P\in \mc Q_{i+1}$ to the parent of a part $Q\in \mc Q_i$. By Corollary \ref{cor:ann} and the triangle inequality, all vertices in $Q$ are at most distance $(34\gammatemp\gammaann^2)^t \R{i+1}$ from $\mc Q_{i+1}$ in the effective resistance metric of $H$, the graph associated with $\mc E$. By the ``Strong vertical closeness'' condition, $t\le (2\sigma_1)(4\sigma_1)^i$. Since $(34\gammatemp\gammaann^2)^{(2\sigma_1)(4\sigma_1)^i}\le (\mucarve/(100\gammatemp\gammaann^2))$, all vertices in $Q$ are at most $(\mucarve/(100\gammatemp\gammaann^2))\R{i+1}$-distance away from $P\in \mc Q_{i+1}$, as desired.
\end{proof}

Now, we show that applying $\MakeNonedgesPermanent$ for each $i\in [\sigma_0]$ does the required amount of carving:

\begin{proposition}\label{prop:carve-hierarchy}
Consider the empire $\mc E$ and a carvable conditioning hierarchy $\mc{CH} = (\{\mc Q_i\}_{i=1}^{\sigma_0}, \{\mc R_i\}_{i=1}^{\sigma_0})$ for $\mc E$. Obtain an empire $\mc E'$ by applying $\MakeNonedgesPermanent(\mc E,i,\mc Q_i,\mc R_i)$ for each $i\in [\sigma_0]$ and letting $\mc E'$ be $\mc E$ at the end. $\mc E'$ has the following property:

\begin{itemize}
\item (Carving) For each active part $P$, let $Q\in \mc P_{i^*}(\mc E)$ be the smallest ancestor of $P$ (possibly $P$) for which $Q\in \mc R_{i^*}$. Then $S_P$ is carved with respect to $\cup_{j\le i^*} \mc Q_j$.
\end{itemize}
\end{proposition}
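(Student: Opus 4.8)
The plan is to prove the "Carving" property by a downward induction argument on the level $i^*$ (or equivalently a structural case analysis on the active part $P$), leveraging the fact that $\MakeNonedgesPermanent$ has deleted from each shortcutter $S_{Q'}$ all parts $R$ with $(R, W) \notin E(\CG(\mc E,i,\cdot,\cdot))$ when $Q'$ is a descendant of some $W \in \mc R_i$. Fix an active part $P$, and let $Q \in \mc P_{i^*}(\mc E)$ be the smallest ancestor of $P$ with $Q \in \mc R_{i^*}$. I need to show every vertex $v \in S_P \cap (\cup_{j\le i^*} \mc Q_j)$ is within $H$-effective resistance distance $\mucarve \R{i}$ of all of $P$'s core $C_P$ (where $P \in \mc P_i(\mc E)$), i.e. $S_P$ is carved with respect to $\cup_{j \le i^*} \mc Q_j$.

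**First I would** handle the key case: suppose $v \in S_P$ lies in a part $K \in \mc Q_j$ for some $j \le i^*$. The goal is to use the "Permanence" guarantee of Proposition~\ref{prop:make-nonedges-permanent} applied at level $i^*$. Since $K \subseteq \mc Q_j$ and, by the "Strong vertical closeness" / "Vertical closeness" conditions of the carvable hierarchy (Remark~\ref{rmk:carvable-ch}), parts of $\mc Q_j$ are close to parts of $\mc Q_{i^*}$, it follows that $K$ is contained in (or has an ancestor that is) a part $K^* \in \mc R_{i^*}$ near $\mc Q_{i^*}$. Now $S_P$ (or a descendant shortcutter witnessing the intersection) intersecting $K$ — hence intersecting a part — means that if the corresponding edge were \emph{absent} from $\CG(\mc E, i^*, \mc Q_{i^*}, \mc R_{i^*})$, then $\MakeNonedgesPermanent$ would have deleted $K$ (or $v$) from $S_P$; since $v$ is still present, the edge $(K^*, \text{ancestor of }P)$ — or the appropriate pair — \emph{is} present, so the "Temporary condition" holds for it. By Corollary~\ref{cor:ann}, this gives a multiplicative-plus-additive bound on $\texttt{Reff}_H(v_{\text{anc}(P)}, v_{K^*_{\min}})$ in terms of distances to $\mc Q_{i^*}$, and then chaining along a short path (length at most $(2\sigma_1)(4\sigma_1)^{i^*}$ by the hierarchy conditions) and combining with the core-diameter bound $\R{i}$ of $P$'s own core and the $\ANN$ approximation factor yields $\texttt{Reff}_H(v, C_P) \le (34\gammatemp\gammaann^2)^{\text{(bounded exponent)}} \R{i} \le \mucarve \R{i}$, using that $\mucarve$ is chosen large enough relative to these $m^{o(1)}$ quantities.

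**The second case** is when $v \in S_P$ lies in a part $K \in \mc Q_j$ but $P$'s smallest relevant ancestor at the appropriate level does \emph{not} have an edge to $K$'s relevant ancestor in the conditioning digraph. Here I would argue this case cannot arise for \emph{active} $P$: if $P$ is active and $S_P$ still intersects $K$ after $\MakeNonedgesPermanent$, the "Permanence" guarantee forces the edge to be present, because the "Permanent condition" (which is implied by any descendant-shortcutter intersection) was turned into a genuine edge-presence test. The subtlety is that $\MakeNonedgesPermanent$ removes \emph{$P$} from $S_{Q'}$ for descendants $Q'$ of $Q$, not necessarily $K$ from $S_P$; so I need the symmetric direction. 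I would resolve this by noting $\MakeNonedgesPermanent(\mc E, i^*, \mc Q_{i^*}, \mc R_{i^*})$ iterates over \emph{all} pairs $(P', Q')$ with $P'$ intersecting a descendant shortcutter of $Q'$, and removes $P'$ from those descendant shortcutters whenever the edge is absent — so applying this with $P'$ being (the ancestor of) $K$ and $Q'$ being (the ancestor of) $P$, if no edge exists then $K$'s vertices were removed from $S_P$'s relevant descendant. Combined with Proposition~\ref{prop:pcg-vertical-monotonicity} (vertical monotonicity, so an edge at a lower level propagates up), this closes the gap.

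**The main obstacle** I anticipate is bookkeeping the quantifiers correctly: "carved with respect to $\cup_{j \le i^*} \mc Q_j$" is a statement about all such $j$ simultaneously and about all vertices of $C_P$, but $\MakeNonedgesPermanent$ was only run once per level with a \emph{specific} $(\mc Q_i, \mc R_i)$ pair, and the relevant ancestor level $i^*$ depends on $P$. I expect the cleanest route is to first prove the special case $j = i^*$ directly from Proposition~\ref{prop:make-nonedges-permanent} and Corollary~\ref{cor:ann}, then deduce $j < i^*$ using "Vertical closeness" of the carvable hierarchy (which says parts of $\mc Q_j$ are within $(\mucarve/(100\gammatemp\gammaann^2))\R{j+1}$ of parts of $\mc Q_{j+1}$, so inductively within $\mucarve/(\text{const}) \cdot \R{i^*}$ of $\mc Q_{i^*}$) plus the triangle inequality, absorbing all the $m^{o(1)}$ losses into the slack between $\mucarve/(100\gammatemp\gammaann^2)$-type bounds and the final $\mucarve R$. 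The exponent arithmetic — checking $(34\gammatemp\gammaann^2)^{(2\sigma_1)(4\sigma_1)^{i^*}} \le \mucarve/\text{poly}$ — is exactly the kind of parameter-chasing deferred to Appendix~\ref{sec:constants}, so I would cite that rather than grind it here.
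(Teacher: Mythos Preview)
Your argument is essentially the paper's argument for the case $i' = i^*$, i.e.\ when the part $P$ is itself at level $i^*$ and hence $P = Q \in \mc R_{i^*}$. In that case, intersection of $S_P$ with some $X \in \mc Q_j$ forces (via Permanence) the edge $(Y, P)$ to exist in $\CG(\mc E, i^*, \mc Q_{i^*}, \mc R_{i^*})$, where $Y$ is the $i^*$-ancestor of $X$; then Corollary~\ref{cor:ann} plus Vertical closeness gives $\texttt{Reff}_H(v_P, v_Y) \le (\mucarve/2)\R{i^*}$, which is the carving bound since here $\R{i^*} = \R{i'}$.

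The gap is the case $i' < i^*$, where $P$ lives at a strictly lower level than its smallest relevant ancestor. Here the carving bound you must establish is $\mucarve\,\murad\,\R{i'}$, but your Temporary-condition argument is run at level $i^*$ and only yields distances of order $\R{i^*}$. Since $\R{i^*} = \alpha^{(i^*-i')/(\sigma_0+1)}\R{i'}$, this is far too large. Moreover, to pass from a bound on $\texttt{Reff}_H(v_Q, v_Y)$ to one on $\texttt{Reff}_H(C_P, v)$ via the triangle inequality, you would pick up the diameter of $Q$, which is again of order $\R{i^*}$.

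The paper handles $i' < i^*$ by an entirely different mechanism that you never invoke: the ``Irrelevant parts are far away'' axiom of carvable hierarchies. Since the $(i^*-1)$-ancestor of $P$ is not in $\mc R_{i^*-1}$, that axiom places $Q$ at digraph-distance at least $(4\sigma_1)^{i^*-1}$ from the $\mc Q_{i^*-1}$-parents in $\CG(\mc E, i^*, \mc Q_{i^*}, \mc R_{i^*})$. Meanwhile ``Strong vertical closeness'' (summed geometrically via Proposition~\ref{prop:pcg-vertical-monotonicity}) bounds the digraph-distance from $\mc Q_{i^*}$ to the $i^*$-ancestor of any $\mc Q_j$-part by strictly less than $(4\sigma_1)^{i^*-1} - 1$. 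Hence no such ancestor has an edge to $Q$ in $\PCG(\mc E', i^*, \mc R_{i^*})$, and by Permanence this means $S_P$ does not intersect $\cup_{j \le i^*}\mc Q_j$ at all --- carving holds vacuously. Your case split on $j$ (the level of the intersecting $\mc Q_j$-part) and your reduction to ``edge present vs.\ absent'' both miss this: the point is not that the edge is present with good distance consequences, but that in this regime the edge \emph{cannot} be present, and it is the digraph-distance arithmetic using ``Irrelevant parts are far away'' that certifies this.
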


\begin{proof}

\textbf{Well-definedness.} We just need to make sure that the input conditions for

$\MakeNonedgesPermanent$ are satisfied. $\mc Q_i\subseteq \mc R_i$ because $\mc{CH}$ is a conditioning hierarchy. Since $\mc{CH}$ is locally maximum,

$$\min_{P\in \mc Q_i} \DSize(\mc E,P) \ge m^{-1/\sigma_1} \max_{P\in \mc R_i} \DSize(\mc E,P)$$

for all $i\in [\sigma_0]$. This is the remaining input condition to $\MakeNonedgesPermanent$.

\textbf{Carving.} Let $i'\le i^*$ be the value for which $P\in \mc P_{i'}(\mc E)$. Break the reasoning up into two cases:

\underline{$i' < i^*$.} By the ``Irrelevant parts are far away'' condition of carvable conditioning hierarchies, the distance from any part in $\mc Q_{i^*}$ to $Q$ in $\PCG(\mc E',i^*,\mc R_{i^*})$ is at least $(4\sigma_1)^{i^*-1}$. By ``Strong vertical closeness'' for all $j < i^*$ and Proposition \ref{prop:pcg-vertical-monotonicity}, the distance from $\mc Q_{i^*}$ to any $\mc P_{i^*}(\mc E)$-ancestor of a part in $\mc Q_j$ is at most

$$(2\sigma_1)((4\sigma_1)^{i^*-2} + (4\sigma_1)^{i^*-3} + \hdots + 1) < (4\sigma_1)^{i^*-1} - 1$$

In particular, no $\mc P_{i^*}(\mc E)$-ancestor of a part in $\mc Q_j$ for any $j \le i^*$ has an edge to $Q$ in

$\PCG(\mc E',i^*,\mc R_{i^*})$. Therefore, since the ``Permanent condition'' is not satisfied for nonedges of this graph and $P\subseteq Q$, $S_P$ does not intersect any part in $\cup_{j\le i^*} \mc Q_j$, which is the desired carving property.

\underline{$i' = i^*$.} In this case, $P = Q\in \mc R_{i^*}$. Consider any $X\in \mc Q_j$ for some $j\le i^*$ that intersects $S_P$ and let $Y$ be its ancestor in $\mc P_{i^*}(\mc E)$. By ``Vertical containment,'' $Y\in \mc R_{i^*}$. Let $Y_{\min} = \arg\min_{Y'\in \mc Q_{i^*}} \texttt{Reff}_H(v_Y,v_{Y'})$. By ``Vertical closeness'' and the triangle inequality,

$$\texttt{Reff}_H(v_Y,v_{Y_{\min}})\le (\mucarve/(100 \gammatemp \gammaann^2))\R{i^*}$$

By construction of $\mc E'$, Lemma \ref{lem:tcg-horizontal-monotonicity} applies. Therefore, edges in

$\PCG(\mc E',i^*,\mc R_{i^*})$ also satisfy the ``Temporary condition,'' which means that

$$\texttt{Reff}_H(v_P,v_{Y_{\min}})\le (34 \gammatemp \gammaann^2)(\texttt{Reff}_H(v_Y,v_{Y_{\min}}) + \R{i^*})$$

by Corollary \ref{cor:ann}. By the triangle inequality and the previous two inequalities,

\begin{align*}
\texttt{Reff}_H(v_P,v_Y)&\le \texttt{Reff}_H(v_P,v_{Y_{\min}}) + \texttt{Reff}_H(v_{Y_{\min}},v_Y)\\
&\le (\mucarve/2) \R{i^*}\\
\end{align*}

Therefore, by $\R{i^*}$-boundedness of $\mc E_{i^*}$ and the triangle inequality, all vertices of $X$ are within distance $\mucarve \R{i^*}$ of $P$, which means that $S_P$ is carved with respect to $\cup_{j\le i^*} \mc Q_j$, as desired.
\end{proof}

\subsection{Deciding on parts to condition on from $\mc P_i(\mc E)$ given a choice from $\mc P_{i+1}(\mc E)$}\label{subsec:two-level}

In the previous section, we defined carvable conditioning hierarchies. Proposition \ref{prop:carve-hierarchy} shows that $\MakeNonedgesPermanent$ makes $\mc E$ carved with respect to the lowest $\mc Q_i$ with active shortcutters. In this section, we give an algorithm, $\TwoLevelConditioningVerts$, which adds one level to the bottom of a carvable conditioning hierarchy. $\TwoLevelConditioningVerts$ extends carvable conditioning hierarchies while ensuring that if we can condition on $\mc Q_i$, we make a substantial amount of progress. We now define conditioning hierarchies for which substantial progress can be made at any level:

\begin{definition}[Advanceable conditioning hierarchies]
Consider an empire $\mc E$, a conditioning hierarchy $\mc{CH} = (\{\mc Q_i\}_{i=1}^{\sigma_0},\{\mc R_i\}_{i=1}^{\sigma_0})$ for $\mc E$, and a tuple of \emph{distance indices} $(j_i)_{i=1}^{\sigma_0}$. $\mc{CH}$ is \emph{advanceable for} $(j_i)_{i=1}^{\sigma_0}$ if the following holds for all $i\in [\sigma_0]$:

\begin{itemize}
\item (All local maxima) Let $s_i := \lfloor \log_{m^{1/\sigma_1}}(\max_{P\in \mc R_i} \DSize(\mc E,P)) \rfloor$. Then $\mc Q_i$ contains all parts $Q\in \mc R_i$ whose $\mc P_{i+1}(\mc E)$ parents are within distance $j_i(4\sigma_1)^i$ of some part in $\mc Q_{i+1}$ in

$\CG(\mc E,i+1,\mc Q_{i+1},\mc R_{i+1})$ with $\DSize(\mc E,Q)\ge m^{s_i/\sigma_1}$.
\item (Relevant set bound) When $\mc Q_{i+1}\ne \emptyset$, $\mc R_i$ contains the set of parts whose $\mc P_{i+1}(\mc E)$ parents are within distance $(j_i+1)(4\sigma_1)^i$ of $\mc Q_{i+1}$ in $\CG(\mc E,i+1,\mc Q_{i+1},\mc R_{i+1})$.
\end{itemize}

\end{definition}

The ``All local maxima'' condition, when coupled with the ``Progress'' input condition for $\ConditioningVerts$, shows that conditioning on $\mc Q_i$ replaces the word of sizes $s_{\sigma_0+1}s_{\sigma_0}\hdots s_1$ with a lexicographically smaller word, as desired in Key Idea \ref{keyidea:progress}. The ``Relevant set bound'' is used to show that relevant sets only get smaller in the output of $\TwoLevelConditioningVerts$.

We now give an algorithm $\TwoLevelConditioningVerts$ that adds an additional level to a carvable conditioning hierarchy that is advanceable for some tuple of indices that is lexicographically smaller than the previous tuple. The intuition behind $\TwoLevelConditioningVerts$ is that the distance index $j^*$ returned has the property that parts just beyond that distance (within distance $(j^*+1)(4\sigma_1)^i$) do not have larger shortcutters. As a result, after ``Temporary condition'' nonedges are converted into ``Permanent condition'' nonedges, the shortcutters with parts that are farther than distance $(j^*+1)(4\sigma_1)^i$ respect Key Idea \ref{keyidea:relevance}.

\begin{algorithm}[H]
\SetAlgoLined
\DontPrintSemicolon
\caption{$\TwoLevelConditioningVerts(\mc E,i,\mc Q_{i+1},\mc R_{i+1},j_{prev})$}

    $I\gets \CG(\mc E,i+1,\mc Q_{i+1},\mc R_{i+1})$\;

    \ForEach{$j\in \{0,1,2,\hdots,2\sigma_1\}$}{

        $\mc Q_j'\gets $ the set of parts in $\mc P_i(\mc E)$ with $\mc P_{i+1}(\mc E)$ parents that are in $\mc R_{i+1}$ and have distance at most $j(4\sigma_1)^i$ from $\mc Q_{i+1}$ in $I$\;

        $s_j\gets \lfloor\log_{m^{1/\sigma_1}}(\max_{Q \in \mc Q_j'} \DSize(\mc E,Q)\rfloor$\; \label{line:si-def}

    }

    \tcp{selection rule: pick the maximum local maximum closer than the previous one}

    $j^*\gets \arg\max_{j: (1) s_{j+1} = s_j \text{ and } (2) j \le j_{prev}} j$\; \label{line:j-opt}

    \tcp{condition on all parts with shortcutters with size in the $s_{j^*}$ bucket}

    \Return{($j^*$, $s_{j^*}$, all parts $Q$ in $\mc Q_{j^*}'$ with $\DSize(\mc E,Q)\ge m^{s_{j^*}/\sigma_1}$, all parts $Q$ in $\mc Q_{j^*+1}'$)}

\end{algorithm}

\begin{figure}

\includegraphics[width=1.0\textwidth]{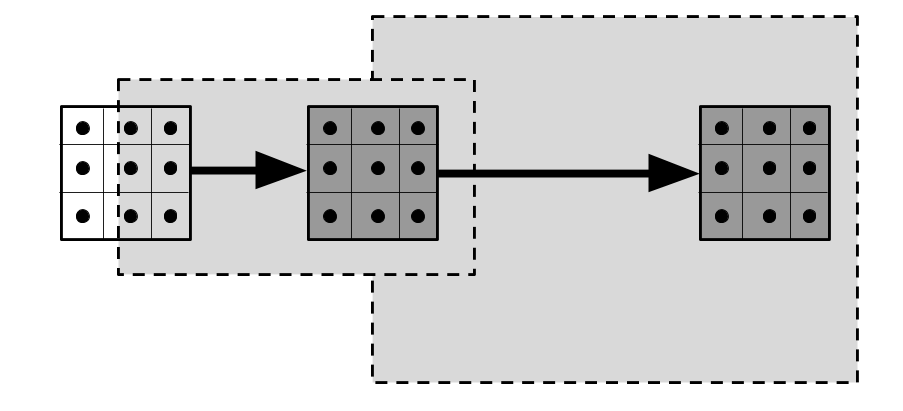}

\includegraphics[width=1.0\textwidth]{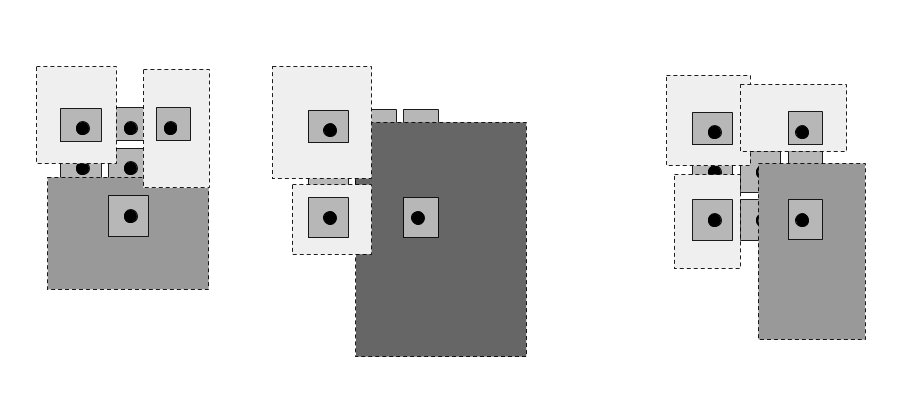}

\caption{Picking parts to condition on in $\TwoLevelConditioningVerts$. The top diagram depicts parts in $\mc P_{i+1}(\mc E)$ (thick outline) with their shortcutters. All of the thick-outlined parts are in $\mc R$ and the leftmost one is also in $\mc Q$. The thick arrows depict directed edges in the graph $\CG(\mc E, i+1, \mc Q, \mc R)$. The bottom diagram depicts parts in $\mc P_i(\mc E)$. The three thick-outlined parts from left to right contain the thin-outlined parts in $\mc Q_1',\mc Q_2'$, and $\mc Q_3'$ respectively. The bottom diagram depicts the shortcutters for some of those parts (the thin-outlined, light-gray dashed regions). The medium gray shortcutters determine $s_j$ for each $\mc Q_j'$. The darkest shortcutter motivates the choice of $j^* = 2$ in this example.}
\label{fig:two-level-conditioning-verts}
\end{figure}

We now state the properties of the output of $\TwoLevelConditioningVerts$. The ``All shortcutters with particular size'' property ensures that if all parts of $\mc Q''$ are conditioned on, the next $s_{j^*}$ will be smaller (thanks to the ``Progress'' input condition). The ``Larger shortcutters are far away'' condition ensures that all parts declared irrelevant can be carved and will remain carved regardless of what is chosen for conditioning in lower levels. The bound on $j_i'$ is used in the analysis of $\ConditioningVerts$ to show continued feasibility (i.e. that each $j_i\ge \sigma_1 + 2$).

\begin{proposition}\label{prop:two-level-holes}
$\TwoLevelConditioningVerts(\mc E,i,\mc Q_{i+1},\mc R_{i+1},j_{prev})$ takes in an empire $\mc E$, an index $i$, and $\mc Q_{i+1}$ and $\mc R_{i+1}$ for a carvable conditioning hierarchy $\mc{CH} = (\{\mc Q_k\}_{k=1}^{\sigma_0},\{\mc R_k\}_{k=1}^{\sigma_0})$ that is advanceable with respect to some distance indices $(j_k)_{k=1}^{\sigma_0}$ with $j_i = j_{prev}$. If

\begin{itemize}
\item $j_{prev}\ge \sigma_1$
\item $\mc Q_k = \emptyset$ for all $k\le i$
\end{itemize}

then $\TwoLevelConditioningVerts(\mc E,i,\mc Q_{i+1},\mc R_{i+1},j_{prev})$ returns $j_i'$, $s_{j_i'}$, $\mc Q_i$, and $\mc R_i$ that, when added to $\mc{CH}$, make it have the following properties:

\begin{itemize}
\item (Carvable) $\mc{CH}$ is a carvable conditioning hierarchy for $\mc E$.
\item (Advanceable) $\mc{CH}$ is advanceable for the distance indices $(j_k')_{k=1}^{\sigma_0}$, where $j_k' = j_k$ for all $k > i$ and

$$j_i - (s_{j_i} - s_{j_i'}) - 1 \le j_i' \le j_i$$
\item (Strong relevant set bound) $\mc R_i$ \textbf{is} the set of parts whose $\mc P_{i+1}(\mc E)$ parents are within distance $(j_i'+1)(4\sigma_1)^i$ of $\mc Q_{i+1}$ in $\CG(\mc E,i+1,\mc Q_{i+1},\mc R_{i+1})$.
\end{itemize}

\end{proposition}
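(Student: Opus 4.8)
The statement is a bookkeeping verification: everything is to be read off directly from the definition of \TwoLevelConditioningVerts, using only (i) the hypothesis that $\mc{CH}$ is already a carvable conditioning hierarchy that is advanceable for $(j_k)_{k=1}^{\sigma_0}$ with $j_i = j_{prev}$, and (ii) elementary facts about distances in the single digraph $I := \CG(\mc E,i+1,\mc Q_{i+1},\mc R_{i+1})$ that the algorithm actually builds. No new metric estimates are needed here; Corollary~\ref{cor:ann} only enters downstream (in Proposition~\ref{prop:carve-hierarchy}), when a carvable hierarchy is turned into genuine carving. I would first record that the sets $\mc Q_j'$ are nested ($\mc Q_j'\subseteq \mc Q_{j+1}'$), so the $s_j$ form a non-decreasing integer sequence with $0\le s_j\le\sigma_1$ (since every shortcutter has between $1$ and $m$ incident edges). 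Hence among $s_0,\dots,s_{\sigma_1+1}$ — that is $\sigma_1+2$ values in a range of $\sigma_1+1$ integers — two consecutive ones coincide, exhibiting a $j\le\sigma_1\le j_{prev}$ with $s_{j+1}=s_j$; so the $\arg\max$ on Line~\ref{line:j-opt} is over a nonempty set and $j^*$ is well-defined, with $j^*\le j_{prev}$ and (since $s_{j^*+1}$ must have been computed) $j^*\le 2\sigma_1-1$. Set $j_i':=j^*$, $\mc Q_i:=\{Q\in\mc Q_{j^*}' : \DSize(\mc E,Q)\ge m^{s_{j^*}/\sigma_1}\}$, $\mc R_i:=\mc Q_{j^*+1}'$.

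\textbf{Carvable.} I would check the five conditions at the new level $i$; the conditions at levels $>i$ are inherited verbatim (those objects are unchanged) and levels $<i$ are vacuous (all $\mc Q,\mc R$ empty there). Horizontal and vertical containment: $\mc Q_i\subseteq\mc Q_{j^*}'\subseteq\mc Q_{j^*+1}'=\mc R_i\subseteq\mc P_i(\mc E)$, and by construction every part of $\mc Q_{j^*+1}'$ has its $\mc P_{i+1}(\mc E)$-parent in $\mc R_{i+1}$. Strong vertical closeness: each $P\in\mc Q_i\subseteq\mc Q_{j^*}'$ has parent at $I$-distance $\le j^*(4\sigma_1)^i\le(2\sigma_1)(4\sigma_1)^i$ from $\mc Q_{i+1}$ (using $j^*\le 2\sigma_1-1$). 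Locally maximum: $\min_{P\in\mc Q_i}\DSize(\mc E,P)\ge m^{s_{j^*}/\sigma_1}$, while the floor in Line~\ref{line:si-def} together with $s_{j^*+1}=s_{j^*}$ gives $\max_{Q\in\mc R_i}\DSize(\mc E,Q)<m^{(s_{j^*}+1)/\sigma_1}$, so the required factor-$m^{-1/\sigma_1}$ inequality holds. Irrelevant parts are far away: if $P\in\mc P_i(\mc E)\setminus\mc R_i$ has parent $Q\in\mc R_{i+1}$, then $Q$ is a vertex of $I$ with $d_I(\mc Q_{i+1},Q)>(j^*+1)(4\sigma_1)^i$, whereas any parent $Q'$ of a part in $\mc Q_i$ satisfies $d_I(\mc Q_{i+1},Q')\le j^*(4\sigma_1)^i$; concatenating a shortest $\mc Q_{i+1}\!\to\! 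Q'$ walk with a $Q'\!\to\! Q$ walk forces $d_I(Q',Q)>(4\sigma_1)^i$.

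\textbf{Advanceable and Strong relevant set bound.} The bound $j_i'=j^*\le j_{prev}=j_i$ is immediate. For the lower bound, maximality of $j^*$ means $s_{j+1}>s_j$ for every $j$ with $j^*<j\le j_{prev}$, so $s_{j^*+1}<s_{j^*+2}<\cdots<s_{j_{prev}}$, giving $s_{j_{prev}}-s_{j^*}=s_{j_{prev}}-s_{j^*+1}\ge j_{prev}-j^*-1$, i.e. $j_i-(s_{j_i}-s_{j_i'})-1\le j_i'$. For "All local maxima" at level $i$: note $s_i = s_{j^*+1}=s_{j^*}$ (here $s_i$ is the quantity in the definition, taken over $\mc R_i=\mc Q_{j^*+1}'$); any $Q\in\mc R_i$ whose parent lies within $j^*(4\sigma_1)^i$ of $\mc Q_{i+1}$ in $I$ lies in $\mc Q_{j^*}'$, and if additionally $\DSize(\mc E,Q)\ge m^{s_i/\sigma_1}=m^{s_{j^*}/\sigma_1}$ then $Q\in\mc Q_i$ by definition. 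Finally, for the "Strong relevant set bound" (which subsumes "Relevant set bound"): $\mc R_i=\mc Q_{j^*+1}'$ consists of parts of $\mc P_i(\mc E)$ whose $\mc P_{i+1}(\mc E)$-parent lies in $\mc R_{i+1}$ \emph{and} within $I$-distance $(j^*+1)(4\sigma_1)^i$ of $\mc Q_{i+1}$; but since $V(I)=\mc R_{i+1}$, having finite $I$-distance from $\mc Q_{i+1}$ already forces the parent into $\mc R_{i+1}$, so $\mc R_i$ equals exactly the set of parts whose parents are within $(j_i'+1)(4\sigma_1)^i$ of $\mc Q_{i+1}$ in $I$.

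\textbf{Main obstacle.} None of the steps is hard in isolation; the real care is in keeping the level indices straight — deciding, for each clause of "carvable" and "advanceable", whether it must be checked against the newly produced $(\mc Q_i,\mc R_i,j_i')$ or is simply inherited from the input hierarchy — and in remembering that $I$ has vertex set $\mc R_{i+1}$, so that "$I$-distance from $\mc Q_{i+1}$" silently carries the membership-in-$\mc R_{i+1}$ constraint. The one genuinely computational point is the strict-monotonicity argument $s_{j^*+1}<\cdots<s_{j_{prev}}$ used for the advanceable lower bound; everything else is unwinding definitions.
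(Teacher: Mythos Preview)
Your proposal is correct and follows essentially the same approach as the paper's proof: verifying well-definedness via pigeonhole on the nondecreasing integer sequence $(s_j)$, then checking each clause of ``carvable'' and ``advanceable'' directly from the construction, with the only genuine computation being the strict-monotonicity chain $s_{j^*+1}<\cdots<s_{j_{prev}}$ for the lower bound on $j_i'$. Your write-up is in fact somewhat more careful than the paper's in a few places (the explicit observation that $V(I)=\mc R_{i+1}$ for the strong relevant set bound, and noting that the chain only needs $j^*<j<j_{prev}$ rather than $j\le j_{prev}$).
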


\begin{proof}

\textbf{Well-definedness.} It suffices to show that the optimization problem on Line \ref{line:j-opt} is feasible. Since $\mc Q_j'\subseteq \mc Q_{j+1}'$ for all $j$, $s_j\le s_{j+1}$ for all $j$. Since $j_{prev} > \sigma_1 + 2$ and the $s_j$s are integers between 0 and $\sigma_1$ inclusive, there must exist a $j\le j_{prev}$ that satisfies condition (1) of the optimization problem on Line \ref{line:j-opt}. Therefore, the algorithm is well-defined.

\textbf{Carvable conditioning hierarchy.}

\underline{Horizontal containment.} Follows from the fact that $\mc Q_j'\subseteq \mc Q_{j+1}'$ for all $j$.

\underline{Vertical containment.} All of the parts in $\mc Q_{j^*+1}'$ have parents in $\mc R_{i+1}$.

\underline{Strong vertical closeness.} The algorithm only considers $j\le 2\sigma_1$.

\underline{Locally maximum.} This is the definition of $s_{j^*}$ and $\mc Q_{j^*}$.

\underline{Irrelevant parts are far away.} All parts in $\mc P_i(\mc E)\setminus \mc R_i$ have parents in $\CG(\mc E,i+1,\mc Q_{i+1},\mc R_{i+1})$ with distance greater than $(j^*+1)(4\sigma_1)^i$ from $\mc Q_{i+1}$, while all parts in $\mc Q_i$ have distance at most $j^*(4\sigma_1)^i$ from $\mc Q_{i+1}$. Therefore, by the triangle inequality, the distance from parents of $\mc Q_i$ to any parent of a part in $\mc P_i(\mc E)\setminus \mc R_i$ is at least $(4\sigma_1)^i$ as long as the parent is in $\mc R_{i+1}$. Otherwise, the desired bound follows from the fact that $\mc{CH}$ satisfied ``Irrelevant parts are far away'' on higher levels along with Proposition \ref{prop:pcg-vertical-monotonicity}.

\textbf{Advanceable.}

\underline{$j_i'$ bound.} In the output, $j_i'\gets j^*$, so we just need to bound $j^*$. $j^*\le j_{prev}$ by condition (2) for $j^*$. For all $j$, $s_{j+1}\ge s_j$. This combined with condition (1) shows that for all $j > j^*$, $s_{j+1}\ge s_j + 1$. Therefore, $s_{j_{prev}} - s_{j^*}\ge j_{prev} - j^* - 1$. Rearrangement gives the desired lower bound.

\underline{All local maxima.} By definition in the output, $\mc Q_{j_i'}$ contains all parts $P$ with $\DSize(\mc E, P)\ge m^{s_{j_i'}/\sigma_1}$. This is the desired property by Line \ref{line:si-def}.

\underline{Strong relevant set bound.} This is the definition of $\mc Q_{j_i'+1}'$.
\end{proof}

\subsection{$\ConditioningVerts$}\label{subsec:cond-verts}

Now, we implement $\ConditioningVerts$, which returns parts to condition on in order to satisfy Lemmas \ref{lem:holes} and \ref{lem:progress}. $\ConditioningVerts$ maintains a carvable conditioning hierarchy that is advanceable for distance indices $(j_i)_{i=1}^{\sigma_0}$. While maintaining such a hierarchy is relatively simple (just get rid of anything outside of the promised distances/sizes), many lower level $\mc Q_i$s will become empty. To make more progress, $\ConditioningVerts$ applies $\TwoLevelConditioningVerts$ to refill the lower $\mc Q_i$s with parts for smaller shortcutters than before.

\newpage

\begin{algorithm}[H]
\SetAlgoLined
\DontPrintSemicolon
\caption{$\ConditioningVerts(\mc E)$ initialization}

    \tcp{Initialization; only occurs on first $\ConditioningVerts$ call}

    $\mc Q_{\sigma_0+1}\gets \{V(G)\}$\;

    $\mc R_{\sigma_0+1}\gets \{V(G)\}$\;

    $s_{\sigma_0+1}\gets \sigma_1$\;

    $j_{\sigma_0+1}\gets 2\sigma_1$\;

    \For{$i=\{1,2,\hdots,\sigma_0\}$}{

        $\mc Q_i\gets \emptyset$\;

        $\mc R_i\gets \emptyset$\;

        $s_i\gets \sigma_1$\;

        $j_i\gets 2\sigma_1$\;

    }

    \tcp{End initialization}
\end{algorithm}

\begin{algorithm}[H]
\SetAlgoLined
\DontPrintSemicolon
\caption{$\ConditioningVerts(\mc E)$ each execution}
    \For{$i=\{\sigma_0,\sigma_0-1,\hdots,2,1\}$}{

        $\mc Q_i\gets $ refinement of $\mc Q_i$ by $\mc P_i(\mc E)$, with all parts with shortcutters of size either (a) below $m^{s_i/\sigma_1}$ or (b) parent distance to $\mc Q_{i+1}$ in $\CG(\mc E,i+1,\mc Q_{i+1},\mc R_{i+1})$ farther than $j_i (4\sigma_1)^i$ removed\; \label{line:remove-small}

        $\mc R_i\gets $ refinement of $\mc R_i$ by $\mc P_i(\mc E)$\;

    }

    \tcp{claim: always exists}

    $i^*\gets $ maximum $i$ for which $\mc Q_i = \emptyset$\;

    \uIf{$s_{i^*} = 0$}{

        \tcp{can condition on $i^*+1$ directly, as no $i^*$ shortcutters are active}

        $\Postprocess(\mc E)$\;

        \Return $\mc Q_{i^*+1}$\;

    }\Else{

        Set all $j_i$s to $2\sigma_1$ for $i < i^*$\; \label{line:low-clear}

        \For{$i=i^*,i^*-1,\hdots,1$}{

            $(j_i,s_i,\mc Q_i,\mc R_i)\gets \TwoLevelConditioningVerts(\mc E,i,\mc Q_{i+1},\mc R_{i+1},j_i)$\; \label{line:two-level-cond}

        }

        $\Postprocess(\mc E)$\;

        \Return $\mc Q_1$\;

    }

\end{algorithm}

\begin{algorithm}[H]
\SetAlgoLined
\DontPrintSemicolon
\caption{$\Postprocess(\mc E)$, which shares state with $\ConditioningVerts$}

    \For{$i=\{1,2,\hdots,\sigma_0\}$}{

        $\MakeNonedgesPermanent(\mc E, i, \mc Q_i, \mc R_i)$\;

    }
    
\end{algorithm}

\begin{figure}
\includegraphics[width=1.0\textwidth]{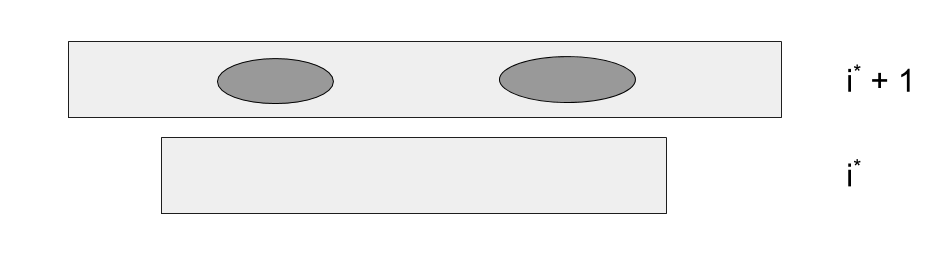}
\caption{In the above diagram, $\mc Q_{i^*}$ is empty. Since $s_{i^*} = 0$ and $\mc Q_{i^*}$ contains all of the parts in $\mc R_{i^*}$ with nearly-largest shortcutters, all parts in the relevant set have size 0 (inactive) shortcutters. Therefore, any shortcutter in level $i^*$ or below is either irrelevant or inactive. This means that carving is only required on levels $i^*+1$ and higher, which has already been achieved.}
\label{fig:two-level-irrelevance}
\end{figure}

We start by checking that this algorithm is well-defined. In doing so, we also describe the conditioning hierarchy throughout the algorithm:

\begin{proposition}\label{prop:cv-well-def}
$\ConditioningVerts$ is well-defined. More precisely, all of the input conditions for algorithms called are met.

Furthermore, $\mc{CH} = (\{\mc Q_i\}_{i=1}^{\sigma_0},\{\mc R_i\}_{i=1}^{\sigma_0})$ is a carvable conditioning hierarchy that is advanceable for the distance indices $(j_i)_{i=1}^{\sigma_0}$ between the end of the first for loop and the $\Postprocess$ call.
\end{proposition}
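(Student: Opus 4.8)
──────────────────────────────────────

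The plan is to verify Proposition~\ref{prop:cv-well-def} by a careful induction that tracks the conditioning hierarchy $\mc{CH} = (\{\mc Q_i\}_{i=1}^{\sigma_0},\{\mc R_i\}_{i=1}^{\sigma_0})$ through all three phases of an execution of $\ConditioningVerts$: the initial top-down refinement loop (Line~\ref{line:remove-small} and the $\mc R_i$ refinement), the branch on $s_{i^*}$, and the $\Postprocess$ call. The key claim in the statement is that after the first for loop and up to the $\Postprocess$ call, $\mc{CH}$ is a \emph{carvable} conditioning hierarchy that is \emph{advanceable} for $(j_i)_{i=1}^{\sigma_0}$; everything else (well-definedness of each subroutine call) follows once we know this, because the input conditions for $\MakeNonedgesPermanent$ (locally maximum, $\mc Q_i\subseteq\mc R_i$), for $\TwoLevelConditioningVerts$ (advanceable carvable hierarchy, $j_{prev}\ge\sigma_1$, $\mc Q_k=\emptyset$ for $k\le i$), and for $\Postprocess$ all read off directly from those two properties plus the loop structure. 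So the bulk of the proof is an invariant-maintenance argument.

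First I would establish the base case: on the first call, the initialization block sets $\mc Q_{\sigma_0+1}=\mc R_{\sigma_0+1}=\{V(G)\}$, all lower $\mc Q_i,\mc R_i$ empty, all $s_i=j_i=2\sigma_1$ (or $\sigma_1$), and it is routine to check the five carvable conditions and the two advanceable conditions hold vacuously for empty sets, with the single part $V(G)$ trivially satisfying containment and distance conditions in the one-vertex $\CG$. For the inductive step I would assume $\mc{CH}$ entering a given call satisfies the input conditions of Definition~\ref{def:cond-verts-input} (these are guaranteed by $\ExactTree$'s loop structure and Lemma~\ref{lem:maintain-shortcutters}'s ``Containment'' guarantee, together with $\FixShortcutters$ only adding to deletion sets and $\PartialSample$ not modifying $\mc E$), together with the inductive hypothesis that the previous call left $\mc{CH}$ carvable and advanceable. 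Then I would walk through the refinement for loop: Line~\ref{line:remove-small} refines each $\mc Q_i$ by $\mc P_i(\mc E)$ and deletes parts that are either too small ($\DSize < m^{s_i/\sigma_1}$) or too far ($>j_i(4\sigma_1)^i$ in $\CG$); refining $\mc R_i$ likewise. The point is that ``Containment'' from Definition~\ref{def:cond-verts-input} guarantees each refined part sits inside a unique old part, so ``Horizontal containment'' and ``Vertical containment'' are preserved; ``Strong vertical closeness'' and ``Irrelevant parts are far away'' are preserved because distances only grow under $\PCG$-shrinkage (Proposition~\ref{prop:pcg-horizontal-monotonicity}) and the temporary-digraph edges from the previous round were converted to permanent nonedges by the previous $\Postprocess$ (Lemma~\ref{lem:tcg-horizontal-monotonicity}); ``Locally maximum'' is preserved since we delete exactly the parts below the $s_i$ threshold; and the two advanceable conditions are preserved because the $s_i$ value unchanged and the cutoff distances unchanged, while refinement only shrinks the sets to exactly the parts meeting the promised size/distance constraints — here I would lean on Lemma~\ref{lem:tcg-horizontal-monotonicity}'s ``Temporary horizontal monotonicity'' to justify that the new $\CG$ edges are a subset of old ones so distance bounds carry over.

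Then I would handle the existence of $i^*$ (some $\mc Q_i=\emptyset$, which holds because $\mc Q_{k'}$ for the previously returned level $k'$ consists of inactive parts by the ``Progress'' input condition, and one can argue the refinement of an all-inactive $\mc Q_{k'}$ combined with the size threshold makes it empty, or more simply $i^*=\sigma_0$ is always at least a fallback in the very first iterations), and then split on $s_{i^*}$: in the $s_{i^*}=0$ branch no levels below $i^*$ are touched, so $\mc{CH}$ is unchanged from the end of the refinement loop up to $\Postprocess$ and we are done by the paragraph above; in the else branch, Line~\ref{line:low-clear} resets the $j_i$ for $i<i^*$ to $2\sigma_1$ and the for loop calls $\TwoLevelConditioningVerts$ at each level $i=i^*,\dots,1$, so I would invoke Proposition~\ref{prop:two-level-holes} inductively down the levels: at level $i$ the hypotheses $j_{prev}=j_i\ge\sigma_1$ (true because either $j_i$ was just reset to $2\sigma_1$, or — for $i=i^*$ — we need a separate argument that $j_{i^*}\ge\sigma_1$, which is exactly the ``$j_i$ bound'' bookkeeping the paper flags as used to show continued feasibility) and $\mc Q_k=\emptyset$ for $k\le i$ (true since $i^*$ was chosen maximal with $\mc Q_{i^*}=\emptyset$ and all lower $\mc Q_k$ are emptied/being-refilled bottom-up) both hold, and Proposition~\ref{prop:two-level-holes} outputs $(j_i,s_i,\mc Q_i,\mc R_i)$ that extend $\mc{CH}$ to remain carvable and advanceable for the updated indices. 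Chaining this from $i^*$ down to $1$ gives the claim just before $\Postprocess$.

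The main obstacle I expect is the feasibility bookkeeping on the distance indices $j_i$: specifically, showing that whenever $\TwoLevelConditioningVerts$ is invoked at level $i^*$ we genuinely have $j_{i^*}\ge\sigma_1$ (Proposition~\ref{prop:two-level-holes}'s hypothesis), despite the fact that $j_{i^*}$ may have been decremented across many prior calls by the ``$j_i$ bound'' $j_i - (s_{j_i}-s_{j_i'}) - 1 \le j_i'$. The resolution is to track the coupled decrease of $j_i$ and $s_i$: each decrement of $j_i$ is accompanied by a commensurate decrease in $s_i$ (or an increase from a reset), and since $s_i$ ranges over $\{0,\dots,\sigma_1\}$ and $j_i$ starts at $2\sigma_1$, the total decrement of $j_i$ between resets is bounded by $\sigma_1$, keeping $j_i\ge\sigma_1$. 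Making this precise — i.e. defining the right potential function on $(j_i,s_i)$ and showing it never drives $j_{i^*}$ below $\sigma_1$ before the next reset on Line~\ref{line:low-clear} — is the delicate part; the rest is mechanical verification of the seven hierarchy conditions against the (deliberately matched) definitions of the subroutines' outputs.
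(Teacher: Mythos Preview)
Your proposal is correct and follows essentially the same approach as the paper's proof: an induction over calls to $\ConditioningVerts$, with the base case trivial, the refinement loop preserving carvability and advanceability via temporary horizontal monotonicity (Lemma~\ref{lem:tcg-horizontal-monotonicity}), existence of $i^*$ via the ``Progress'' condition, and the $j_i\ge\sigma_1$ feasibility handled by exactly the coupled $(j_i,s_i)$ potential you describe --- the paper makes this explicit as the invariant $j_i\ge\sigma_1+s_i$. One small slip: the for loop invoking $\TwoLevelConditioningVerts$ runs top-down ($i^*,i^*-1,\ldots,1$), not bottom-up as you wrote, though this does not affect the argument.
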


\begin{proof}
We inductively show that call $k$ to $\ConditioningVerts$ is well-defined. In particular, we show that $i^*$ always exists and that the input conditions to $\TwoLevelConditioningVerts$, $\MakeNonedgesPermanent$, and Proposition \ref{prop:carve-hierarchy} are satisfied. In this induction, we also show that the carvable and advanceable conditions are maintained.

\textbf{Base case.} During the first call to $\ConditioningVerts$, $i^* = \sigma_0$ because everything besides $Q_{\sigma_0+1}$ is initialized to the empty set. The input conditions to $\TwoLevelConditioningVerts$ for each $i$ are satisfied because $j_i = 2\sigma_1 \ge \sigma_1$ and the ``Carvable'' and ``Advanceable'' guarantees from the level $i+1$ application of $\TwoLevelConditioningVerts$. The ``Carvable'' guarantee ensures that the input conditions for both $\MakeNonedgesPermanent$ and Proposition \ref{prop:carve-hierarchy} are satisfied. This completes the proof that the first call to $\ConditioningVerts$ is well-defined and that $\mc{CH}$ is carvable and advanceable in the desired line range.

\textbf{Call $k$ for $k > 1$ $i^*$ existence.} Consider the set $\mc Q_{i_{prev}}$ returned during the $(k-1)$th call to $\ConditioningVerts$. By the ``Progress'' input condition to $\ConditioningVerts$, every part $P\in \mc Q_{i_{prev}}$ has the property that $E(P) = \emptyset$ at the beginning of the $k$th call to $\ConditioningVerts$. In particular, Line \ref{line:remove-small} eliminates $P$ from $\mc Q_{i_{prev}}$. In particular, $\mc Q_{i_{prev}} = \emptyset$ after Line \ref{line:remove-small}. In particular, the optimization problem defining $i^*$ is feasible, so $i^*$ exists.

\textbf{Call $k$ for $k > 1$ $\TwoLevelConditioningVerts$ $j_{prev}$ feasibility.} It suffices to show that $j_i\ge \sigma_1 + s_i$ after each execution of $\ConditioningVerts$, because $s_i\ge 0$ for all $i$. When $j_i$ is set to $2\sigma_1$, $j_i\ge \sigma_1 + s_i$ holds trivially. Only $\TwoLevelConditioningVerts$ changes $j_i$ to some other value. When it does this during the $l$th call to $\ConditioningVerts$ for $l < k$,

$$j_i^l \ge j_i^{l-1} - ((s_i^{l-1}-1) - s_i^l) - 1 \ge j_i^{l-1} - (s_i^{l-1} - s_i^l)$$

by Proposition \ref{prop:two-level-holes} inductively working in earlier $\ConditioningVerts$ calls. The superscript denotes the value after the $l$th call to $\ConditioningVerts$. The $(s_i^{l-1} - 1)$ bound holds because when $\mc Q_i$ becomes empty, $s_i$ decreases. By the inductive hypothesis,

$$j_i^{l-1}\ge \sigma_1 + s_i^{l-1}$$

Conbining these two inequalities shows that $j_i^l \ge \sigma_1 + s_i^l$. This completes the inductive step and shows that $j_i \ge \sigma_1 + s_i\ge \sigma_1$ before the $k$th call to $\ConditioningVerts$. Therefore, the $j_{prev}$ argument supplied to $\TwoLevelConditioningVerts$ is always at least $\sigma_1 + s_i^{l-1}\ge \sigma_1$, as desired.

\textbf{Call $k$ for $k > 1$ carvable.} Suppose that during call $k-1$ for $k > 1$ to $\ConditioningVerts$, $\mc{CH}$ is a carvable conditioning hierarchy just before $\Postprocess$. After $\Postprocess$ from call $k-1$, $\mc{CH}$ has not changed, as Propositions \ref{prop:make-nonedges-permanent} and \ref{prop:carve-hierarchy} imply that $\Postprocess$ does not modify $\mc{CH}$. Therefore, $\mc{CH}$ is still a locally maximum conditioning hierarchy at this point that satisfies the ``Irrelevant parts are far away'' condition. The conditions of Lemma \ref{lem:tcg-horizontal-monotonicity} are satisfied, so distances in each temporary conditioning digraph do not decrease between the end of call $k-1$ and the beginning of call $k$. Therefore, the ``Irrelevant parts are far away'' condition still applies. Line \ref{line:remove-small} maintains ``Horizontal containment,''``Vertical containment,'' and ``Locally maximum'' while restoring ``Strong vertical closeness.''

Therefore, $(\{\mc Q_i\}_{i=1}^{\sigma_0},\{\mc R_i\}_{i=1}^{\sigma_0})$ is a carvable conditioning hiearchy at the end of the first for loop. Since $\TwoLevelConditioningVerts$ maintains carvable conditioning hierarchies, $(\{\mc Q_i\}_{i=1}^{\sigma_0},\{\mc R_i\}_{i=1}^{\sigma_0})$ is a carvable conditioning hierarchy when $\Postprocess$ is called during the $k$th call to $\ConditioningVerts$. This completes the inductive step.

\textbf{Call $k$ for $k > 1$ advanceable.} Suppose that during call $k-1$ for $k > 1$ to $\ConditioningVerts$, $\mc{CH}$ is an advanceable conditioning hierarchy for indices $(j_i)_{i=1}^{\sigma_0}$. By Lemma \ref{lem:tcg-horizontal-monotonicity}, conditioning digraph distances only increase, which means that the ``Relevant set bound'' continues to at the beginning of $\ConditioningVerts$ call $k$. Line \ref{line:remove-small} restores the ``All local maxima'' property. These are both of the conditions for $\mc{CH}$ being advanceable, so $\mc{CH}$ is advanceable for $(j_i)_{i=1}^{\sigma_0}$ after the first for loop. $\TwoLevelConditioningVerts$ maintains advanceability. This completes the inductive step.

\textbf{Call $k$ for $k > 1$ $\TwoLevelConditioningVerts$.} We have already shown that $\mc{CH}$ is carvable and advanceable before the for loop containing Line \ref{line:two-level-cond}. Furthermore, we have already shown that $j_i\ge \sigma_1$. These are all of the conditions for $\TwoLevelConditioningVerts$.

\textbf{Call $k$ for $k > 1$ $\MakeNonedgesPermanent$.} $\mc{CH}$ is carvable when it is supplied to $\Postprocess$. This is a sufficient input condition for $\MakeNonedgesPermanent$.
\end{proof}

\subsection{Proof of Lemmas \ref{lem:holes} (conductivity not increased much) and \ref{lem:progress} (enough progress is made)}\label{subsec:holes-and-progress}

Now, we are finally ready to show that $\ConditioningVerts$ satisfies Lemmas \ref{lem:holes} and \ref{lem:progress}.

\lemholes*

\begin{proof}[Proof of Lemma \ref{lem:holes}]

\textbf{Conductivity and clan count.} By Proposition \ref{prop:cv-well-def}, the input to both $\MakeNonedgesPermanent$ is a carvable conditioning hierarchy. Therefore, the ``Conductivity'' guarantee of Proposition \ref{prop:make-nonedges-permanent} shows the desired conductivity increase. These propositions also show that the number of levels only increases by a factor of $O(\log n)$. No other parameters increase, as desired.

\textbf{Carving.} Proposition \ref{prop:carve-hierarchy} implies that all shortcutters in $\mc E'$ are carved with respect to $\mc Q_1$. When $s_{i^*} = 0$, all parts in $\mc R_{i^*}$ have inactive shortcutters, so they do not need to be carved.

Now, consider any part $P\in \mc P_k(\mc E)\setminus \mc R_k$ for some $k\le i^*$ with $E(P)\ne\emptyset$. Let $Q$ be the ancestor of $P$ in $\mc R_j$ for minimum $j$. Since $E(P)\ne\emptyset$ and $s_{i^*} = 0$, $j > i^*$, which means that $S_P$ is carved with respect to $\mc Q_{i^*+1}$ by Proposition \ref{prop:carve-hierarchy}. 

The shortcutters in $\cup_{\ell\ge i^*+1} \mc E_{\ell}$ are carved with respect to vertices in parts of $\mc Q_{i^*+1}$ by Proposition \ref{prop:carve-hierarchy}. Therefore, all shortcutters are carved with respect to $\mc Q_{i^*+1}$ if $s_{i^*} = 0$, as desired.
\end{proof}

The proof of Lemma \ref{lem:progress} shows that enough progress is made by showing that each conditioning round lexicographically decreases the word $s_{\sigma_0}s_{\sigma_0-1}\hdots s_2s_1$:

\lemprogress*

\begin{proof}[Proof of Lemma \ref{lem:progress}]

By the ``Relevant set bound,'' Lemma \ref{lem:tcg-horizontal-monotonicity}, and the ``Strong relevant set bound,'' $\mc R_i$ decreases (all parts replaced with subsets) during an interval of iterations in which $i^*\le i$. If $s_{i^*} > 0$, then $s_{i^*}$ strictly decreases by the ``All local maxima'' condition defining the $s_i$s. Furthermore, the $s_i$s for $i > i^*$ do not increase by the ``Containment'' input condition. In particular, the new word is lexicographically smaller than the old one.

Therefore, we just need to consider the case in which $s_{i^*} = 0$. In this case, the ``Progress'' input condition implies that $s_{i^*+1}$ decreases in the next call to $\ConditioningVerts$. By ``Containment,'' $s_i$ for $i > i^*+1$ does not increase. Therefore, the new word is again lexicographically smaller than the old one. These are all of the cases. The desired bound on the number of conditioning rounds follows from the fact that there are only $\sigma_1^{\sigma_0}$ possible words.
\end{proof}

\newpage

\section{Fixing reduction}\label{sec:fix-reduction}

In this section, we reduce Lemma \ref{lem:clan-fix} to a simpler result (Lemma \ref{lem:fast-fix}).

\begin{definition}[Schur complement conductance and degree]
Consider a graph $I$. Let $J := \texttt{Schur}(I,S\cup S')$, $J' := \texttt{Schur}(I/S,S\cup S')$,

$$c^I(S,S') := c^J(E_J(S,S'))$$

and

$$\Delta^I(S,S') := \sum_{e\in E_{J'}(S,S')} \texttt{Reff}_{J'}(e)c_e^{J'}$$
\end{definition}

\begin{lemma}\label{lem:fast-fix}
There is an algorithm $F'\gets \FastFix(I,J,D,S,S',F,\ep)$ that takes as input a graph $I$, a random graph $J$ that is a valid sample from the distribution $I[F]$, a set $D\subseteq E(I)$, $S,S'\subseteq V(G)$, $F\subseteq E(I)\setminus D$, and an accuracy parameter $\ep \in (0,1)$. With high probability on $J$, it outputs a set $F'\subseteq F$ with two properties:

\begin{itemize}
\item (Conductance) $c^{J\setminus D\setminus F'}(S,S') \le (1 + \ep) c^{I\setminus D}(S,S')$
\item (Size) $|F'|\le \mucon(\Delta^{I\setminus D}(S,S') + \Delta^{I\setminus D}(S',S) + |D|)\ep^{-3}$
\end{itemize}

Furthermore, $\FastFix$ takes $m^{1+o(1)}$ time.
\end{lemma}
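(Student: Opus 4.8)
I would split the proof into the existential content — with high probability a small $F'$ meeting both bullets exists — and the algorithmic content that it can be found in $m^{1+o(1)}$ time; the former generalizes Lemma~\ref{lem:special-slow-fix} and the latter is where matrix sketching and localization come in. For the existence part, the plan is to sample $J$ sequentially: reveal the edges of $F$ one at a time in a uniformly random order and decide for each, in turn, whether it lies in the random spanning tree $T$ of $I$, which is a valid sampler for $J\sim I[F]$ by Theorem~\ref{thm:conditioning}. I would track the potential $\Phi_i := c^{H_i\setminus D}(S,S')$, where $H_i$ is the partial conditioning after $i$ reveals, so $\Phi_0 = c^{I\setminus D}(S,S')$ and $\Phi_k = c^{J\setminus D}(S,S')$; identifying $S$ to a vertex $s$ and $S'$ to a vertex $s'$ (the edges internal to $S$ or $S'$ affect neither side), one has $\Phi_i = 1/\texttt{Reff}_{(H_i\setminus D)/(S,S')}(s,s')$. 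The ``expectation preserved under conditioning'' phenomenon quoted in the introduction suggests $1/\Phi_i$ is a martingale; the catch — and the reason this is a genuine generalization — is that the conditioning is governed by the spanning-tree distribution of the whole graph rather than of $(I\setminus D)/(S,S')$, so what Section~\ref{sec:slow-fix} must supply is that $1/\Phi_i$ is a supermartingale up to a deficit accounted for by $\Delta^{I\setminus D}(S,S')$, $\Delta^{I\setminus D}(S',S)$ and $|D|$, the last paying for $F$-edges forced into or out of $T$ because $I\setminus D\setminus F$ may be disconnected.

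Granting such a statement, I would use the monotonicity that $\Phi_i$ can only increase when its revealed edge is contracted and only decrease when it is deleted. I would let $F'$ be the set of edges $f\in F\cap E(T)$ whose contraction at its reveal step multiplied $\Phi$ by a factor at least $1+\ep_1$, with $\ep_1 = \Theta(\ep/\log n)$. Deleting $F'$ from $J$ reverts precisely those contractions, and since the surviving contractions multiply $\Phi$ by less than $1+\ep_1$ each and deletions only shrink it, telescoping at most $|F|$ factors gives $c^{J\setminus D\setminus F'}(S,S')\le(1+\ep)\,c^{I\setminus D}(S,S')$ (with the routine point, handled by monotonicity of effective resistance in the conductances, that ``delete instead of contract'' dominates ``contract'' in every downstream graph). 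For the size bound, I would count the $(1+\ep_1)$-spikes: a process starting at $\Phi_0$, ending at $\Phi_k$, jumping up only on contractions and down on deletions has at most $O(\ep_1^{-1}(\log(\Phi_k/\Phi_0) + (\text{total multiplicative decrease across all deletions})))$ of them; the first term is $O(\ep_1^{-1}\log n)$ since the aspect ratio stays subpolynomial, and the second I would bound using Theorem~\ref{thm:edge-marginal} — identifying each deleted $F$-edge's relative decrease of $\Phi$ with the electrical-flow energy it carries, whose expectation over $T$ is controlled by how many tree edges are incident to $S$ or to $S'$ in the relevant Schur complement, i.e.\ by $\Delta^{I\setminus D}(S,S')+\Delta^{I\setminus D}(S',S)$ — together with the localization of electrical flows from \cite{SRS17}, which is exactly what stops the second term from being dominated by the total conductance of $I$. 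Markov plus independent repetition (keeping the smallest valid $F'$) turns the in-expectation bound into the stated high-probability bound; the extra $\ep^{-2}$ is slack absorbed because spikes are only detected approximately below, and $\log n$ and $\log(\Phi_k/\Phi_0)$ are absorbed into $\mucon$.

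For the algorithm, I would first build, in near-linear time, a data structure on $(I\setminus D)/(S,S')$ from the Johnson--Lindenstrauss sketch of Theorem~\ref{thm:jl} together with \cite{AMS96,I06}-style sketches (so it survives a batch of contractions), returning $(1\pm\ep)$-multiplicative estimates of $c(S,S')$ and of any edge's flow contribution in $\tilde{O}(1)$ time. Then, using localization \cite{SRS17}, I would enumerate the only $m^{1+o(1)}$ edges of $F\cap E(T)$ whose contribution to $c(S,S')$ is non-negligible — every other $F$-edge provably cannot trigger a spike and is contracted in bulk — process these in the random order above, query the estimator before and after each contraction to (approximately) detect spikes, and move the triggering edges into $F'$. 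The graph remaining at the end is $J\setminus D\setminus F'$ and each step costs $\tilde{O}(1)$ amortized against the sketch, for $m^{1+o(1)}$ total.

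The step I expect to be the main obstacle is the size bound. The potential $\Phi_i$ has no concentration, so one genuinely has to exhibit the reversing set, and the delicate point is that the number of $(1+\ep_1)$-spikes is controlled by the Schur complement degrees $\Delta^{I\setminus D}(S,S')$ and $\Delta^{I\setminus D}(S',S)$ rather than by the far larger total conductance — this is precisely where the localization inequality of \cite{SRS17} must be married to the tree-edge-incidence accounting coming from Theorem~\ref{thm:edge-marginal}, and where the additive $|D|$ correction for disconnection after deleting $D$ has to be tracked. A secondary difficulty is that the effective resistance metric drifts as edges are conditioned, so the localization-based enumeration of the ``relevant'' $F$-edges must be shown to stay valid throughout; making sketching and localization robust against this drift is the combination the paper flags as of independent interest.
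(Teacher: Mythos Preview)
Your size-bound argument has a genuine gap. You propose to count $(1+\ep_1)$-spikes via the telescoping inequality $\#\{\text{spikes}\}\cdot\log(1+\ep_1)\le \log(\Phi_k/\Phi_0)+D$, where $D$ is the total log-decrease across deletions, and then assert $\log(\Phi_k/\Phi_0)=O(\log n)$ ``since the aspect ratio stays subpolynomial.'' But $\Phi_k=c^{J\setminus D}(S,S')$ is precisely the quantity the lemma exists to control: if $F$ contains an $S$--$S'$ path lying entirely in $T$, contraction identifies $S$ with $S'$ and $\Phi_k=\infty$. There is no a~priori bound on $\Phi_k/\Phi_0$, so your spike count is unbounded and the argument is circular. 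A secondary issue: ``deleting $F'$ from $J$ reverts precisely those contractions'' is not what happens --- deleting $f$ from the \emph{final} graph $J$ is not equivalent to replaying the process with $f$ deleted at its reveal step, and Rayleigh monotonicity only gives $\Phi^{\text{fixed}}_i\le\Phi^{\text{orig}}_i$, not that the non-spike multiplicative factors are preserved, so the telescoping you need does not follow.

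The paper's route is structurally different and supplies the missing idea. The $\Fix$ algorithm never conditions on the edges that end up in $F'$. It buckets edges of $F$ by their normalized \emph{midpoint potential} in the $s$--$s'$ flow; once a bucket near $0$ or near $1$ is small enough it is \emph{deferred} (added to sets $A,B$), while a large bucket bounded away from both is fed to a \emph{stable oracle} (Definition~\ref{def:oracle}) that returns a subset on which conditioning changes several electrical functionals --- the main objective $b_{ss'}^TL^+b_{ss'}$, the normalized degrees $\delta_{S,S'},\delta_{S',S}$, and the sum of deferred endpoint potentials --- by only $O(\rho/|W|)$ per step (Propositions~\ref{prop:delta-stability}--\ref{prop:main-stability}). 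The expected change is small because Lemmas~\ref{lem:first-order-deg}--\ref{lem:second-order-deg} bound the aggregate $\texttt{levcng}$ and $\alpha$ terms over $W$ by $O(\Delta/p)$; this is where the $\Delta$'s enter, not via counting spikes. An Azuma-type argument (Proposition~\ref{prop:stable-concentration}) then keeps all four functionals within $1\pm\ep$ of their initial values throughout (Proposition~\ref{prop:end-state}). At the end only deferred edges remain; those with normalized potential in $[\ep/2,1-\ep/2]$ become $F'$, and Markov on the controlled sum of deferred potentials gives $|F'|=\tilde O(\rho^3\ep^{-3}\Delta)$. The remaining deferred edges have both endpoints within $\ep/2$ of $0$ or of $1$, so conditioning on them arbitrarily changes $c(S,S')$ by at most $1/(1-\ep)^2$ (Proposition~\ref{prop:cond-pot-bound}). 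Lemma~\ref{lem:fast-fix} itself just plugs $\FastOracle$ (Proposition~\ref{prop:fast-oracle}) into Lemma~\ref{lem:oracle-fix}; localization (Theorem~\ref{thm:localization}) is used inside $\FastOracle$ not to enumerate ``dangerous'' edges as you suggest, but --- via Proposition~\ref{prop:very-stable} and the flexible-function machinery --- to certify that a randomly subsampled stable set \emph{remains} stable for $|W|m^{-o(1)}$ iterations, so the oracle is called only $m^{o(1)}$ times.
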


When trying to understand the statement of Lemma \ref{lem:fast-fix} for the first time, it is helpful to think about the case when $D = \emptyset$. The $D = \emptyset$ case is very similar to the general case. It is also helpful to disregard the $J$ argument to $\FastFix$. This argument is only provided for runtime purposes, since sampling from a tree is only efficient in our case due to $\PartialSample$. Our reduction only uses $\ep = \Omega(1)$.

We prove this result in Section \ref{sec:slow-fix} and implement the almost-linear time algorithm for producing $F'$ in Section \ref{sec:fast-fix}. We illustrate many of the ideas behind the analysis of $\FastFix$ in Section \ref{sec:slow-warmup}.

\subsection{Reduction from Lemma \ref{lem:clan-fix} to Lemma \ref{lem:fast-fix}}

Let $\mc L_i$ be the set of clusters with $H$-effective resistance diameter at most $\R{i}\mucarve$ for which $S\subseteq \cup_{C\in \mc L_i} C$. These sets are guaranteed to exist by the ``Carved'' condition of Lemma \ref{lem:clan-fix}. Each of these sets of clusters maps to another set of clusters with $H_{\mc C}$-effective resistance diameter at most $\R{i}\mucarve$ for any clan $\mc C\in \mc E_i$. Call this set $\mc L_i^{\mc C}$. $\mc L_i^{\mc C}$ is not much larger than $\mc L_i$ by Lemma \ref{lem:ball-split}. We restate Lemma \ref{lem:ball-split} here to emphasize that one does not need to read Section \ref{sec:choose-parts} before reading this section. It is proven in the appendix:

\lemballsplit*

We can now focus on the clans $\mc C$ in $\mc E_i$ independently. Consider each shortcutter $S_C\in \mc C$. $S_C$ intersects some clusters in $\mc L_i^{\mc C}$. If clusters in $\mc L_i^{\mc C}$ contain a path from $C$ to $\partial S_C$ and that path is contracted after conditioning, the conductance of $S_C$ becomes infinite. We want to avoid this situation. To do this, we could try to apply Lemma \ref{lem:fast-fix} with $S\gets C$, $S'\gets \partial S_C$, $I\gets H$, $D\gets \texttt{deleted}(\mc C)$, $F\gets F$, and $\ep \gets 1/2$. While this ensures that the conductance of $S_C$ does not increase much after deleting $F'$, $\Delta^{H\setminus D}(S,S')$ could be very large. Our goal is for $F'$ to have average size $m^{o(1)}$, where the average is over the shortcutters in $\mc C$.

To achieve this, instead of directly tempering $S_C$'s conductance, it is useful to consider fixing the conductance between $C$ and all clusters in $\mc L_i^{\mc C}$ that intersect $S_C$. One slight issue with this is that some of these clusters may be very close to $C$, resulting in the initial $C-\mc L_i^{\mc C}$ conductance being much higher than $1/\R{i}$. This can be alleviated by fixing the conductance between $C$ and the subclusters in $\mc L_i^{\mc C}$ restricted to $S_C\setminus S_{\{C,V(H)\setminus S_C\}}(1/4,C)$. $S_{\{C,V(H)\setminus S_C\}}(1/4,C)$ serves as ``buffer space'' between $C$ and the restrictions of $\mc L_i^{\mc C}$, ensuring that the initial conductance is at most $\frac{m^{o(1)}}{\R{i}}$. The fact that $\mc L_i^{\mc C}$ consists of a relatively small set of clusters ensures that $\Delta^{H\setminus D}(C,\cup_{C'\in \mc L_i^{\mc C'}} C')$ is small. This ensures that there is a small set $F'$ that, when deleted, nearly reverses the increase in the $C-\mc L_i^{\mc C}$ conductance due to conditioning.

However, we need to reverse the increase in $S_C$'s conductance, not the $C-\mc L_i^{\mc C}$ conductance. $S_C$'s conductance, though, can essentially be upper bounded by the sum of the $C-\mc L_i^{\mc C}$ conductance and the conductance of edges in the Schur complement with respect to $C\cup \mc L_i^{\mc C}$ that go directly from the boundary of the buffer zone $S_{\{C,V(H)\setminus S_C\}}(1/4,C)$ and $\partial S_C$. The latter conductance is not affected by conditioning, because all edges of $F$ that are not in the buffer zone are covered by clusters in $\mc L_i^{\mc C}$. We have already argued that the former conductance is restored by deleting $F'$. Therefore, $S_C$'s conductance only increases by a small constant factor over what it used to be.

We formalize the above intuition by implementing the reduction in $\FixShortcutters$ using $\FastFix$.

\begin{algorithm}[H]
\DontPrintSemicolon
\caption{$\FixShortcutters(\mc E, H', \mc K)$}

    \ForEach{$i\in [\sigma_0]$}{

        \ForEach{clan $\mc C\in \mc E_i$}{

            $S'\gets \emptyset$\;

            $\mc K_{\mc C}\gets $ the parts in $\mc K$ that intersect some $S_C\in \mc C$\;\label{line:chosen-intersect}

            \ForEach{part $P\in \mc K_{\mc C}$}{

                $S'\gets S'\cup (P$ with all $S_{\{C,V(H)\setminus S_C\}}(1/4,C)$s for cores $C$ of shortcutters $S_C\in \mc C$ removed$)$\;

            }

            $\texttt{deleted}(\mc C)\gets \texttt{deleted}(\mc C)\cup \FastFix(H,H',\texttt{deleted}(\mc C),\cup_{S_C\in \mc C} C,S',F,1/4)$\;\label{line:aug-del}

        }

    }

    \Return $\mc E$\;
\end{algorithm}

\begin{figure}

\begin{center}
\includegraphics[width=0.6\textwidth]{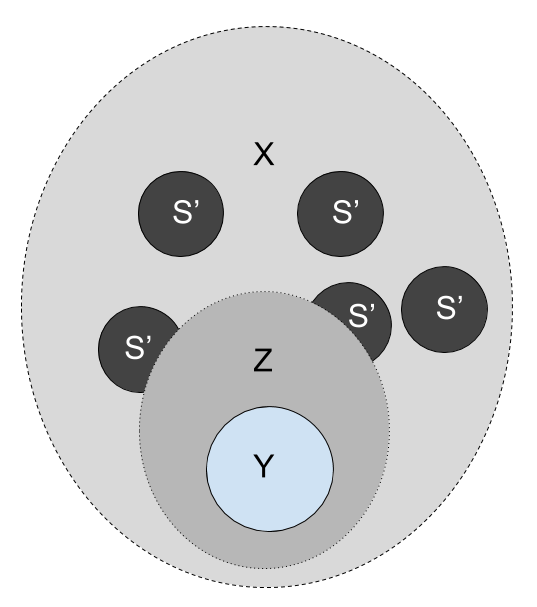}
\end{center}

\caption{The reduction for one shortcutter in some clan. All edges of $F$ are between two vertices of $S'$, two vertices of $Z$, or one vertex in $S'$ and one in $Z$. The edges between two vertices in $S'$ are irrelevant, as $Z$ is primarily responsible for certifying that $X$ has low conductance after conditioning. The edges between one vertex in $S'$ and one vertex in $Z$ do not affect the $(V(H)\setminus X)-Z$ direct conductance. The edges in $F$ with both edges in $Z$ can be in $F'$.}
\label{fig:fixing-reduction}
\end{figure}

Now, we prove that this algorithm has the desired effect. Before doing so, we state some useful facts about Schur complements and conductances that we prove in the appendix.

\begin{restatable}{lemma}{lemmonotoneschur}\label{lem:monotone-schur}
Consider any three disjoint sets of vertices $S_0,S_1,S_2\subseteq V(I)$ and $S_0'\subseteq S_0$. Let $J = \texttt{Schur}(I,S_0\cup S_1\cup S_2)$ and $J' = \texttt{Schur}(I,S_0'\cup S_1\cup S_2)$. Then

$$c^{J'}(E_{J'}(S_0',S_1)) \le c^J(E_J(S_0,S_1))$$
\end{restatable}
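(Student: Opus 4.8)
\textbf{Proof proposal for Lemma \ref{lem:monotone-schur}.}

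The plan is to reduce the three-set statement to a two-set statement and then exploit the probabilistic/combinatorial interpretation of Schur complement conductances via commute times, or alternatively argue directly with the variational (energy-minimization) characterization of Schur complements. First I would observe that $S_2$ plays no special role: by the associativity of Schur complements (Remark \ref{rmk:schur}), $\texttt{Schur}(I, S_0\cup S_1\cup S_2) = \texttt{Schur}(\texttt{Schur}(I, S_0\cup S_1\cup S_2\cup(\text{everything})), \dots)$, and more to the point $\texttt{Schur}(I, S_0\cup S_1) = \texttt{Schur}(\texttt{Schur}(I, S_0\cup S_1\cup S_2), S_0\cup S_1)$. So it suffices to prove the claim in the case where $V(I) = S_0\cup S_1\cup S_2$ with $S_2$ possibly empty — i.e., for a graph $J$ on vertex set $S_0\cup S_1\cup S_2$, show that eliminating $S_2$ together with $S_0\setminus S_0'$ cannot increase the total conductance of edges landing between $S_0'$ and $S_1$. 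Thus the real content is: \emph{contracting is monotone but elimination of a separating set is what we need}; actually the cleanest framing is that going from $\texttt{Schur}(J, S_0\cup S_1)$ to $\texttt{Schur}(J, S_0'\cup S_1)$ means eliminating the vertices $S_0\setminus S_0'$ (and, if we did not pre-Schur, also $S_2$), and we must show $c(S_0', S_1)$ does not increase under this elimination.

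The key step is then a lemma about a single vertex elimination: if $w\notin S_0'\cup S_1$ is eliminated from a graph $K$, then $c^{K/\!\!/w}(E(S_0',S_1))\le c^{K}(E(S_0',S_1))$, where $K/\!\!/w$ denotes the one-step Schur complement (Gaussian elimination of $w$). Eliminating $w$ replaces each pair of edges $\{a,w\},\{w,b\}$ by a new edge $\{a,b\}$ of conductance $c_{aw}c_{wb}/(\sum_x c_{wx})$ and deletes $w$ and its incident edges. The edges counted in $c(E(S_0',S_1))$ after elimination are: (i) old edges already between $S_0'$ and $S_1$ (unchanged), plus (ii) new edges $\{a,b\}$ with $a\in S_0'$, $b\in S_1$ created from a path $a - w - b$. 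So the increase is $\sum_{a\in S_0', b\in S_1} c_{aw}c_{wb}/c_w$ where $c_w = \sum_x c_{wx}$. But simultaneously we \emph{lose} all old edges incident to $w$ that went between $S_0'$ and $S_1$ — wait, those don't exist since $w\notin S_0'\cup S_1$. Hmm — so naively the count only goes up. The resolution: the edges from $w$ to $S_0'$ and from $w$ to $S_1$ were \emph{not} being counted before (neither endpoint condition "one in $S_0'$, one in $S_1$" is met when one endpoint is $w$), yet after elimination their "mass" $c_{aw}$ gets partially rerouted into $S_0'$–$S_1$ edges — so the statement as I've set it up for a single elimination with arbitrary $w$ is actually \emph{false} in general. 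This tells me the right approach must use the full structure: we are not eliminating an arbitrary vertex but passing from keeping $S_0\cup S_1$ to keeping $S_0'\cup S_1$, and the edges from $S_0\setminus S_0'$ to $S_1$ \emph{were} being counted in $c^J(E_J(S_0,S_1))$ (the larger quantity on the right-hand side is $c^J(E_J(S_0,S_1))$, counting \emph{all} of $S_0$), and it is their disappearance that compensates.

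So the corrected plan: work with $J = \texttt{Schur}(I, S_0\cup S_1)$ and $J' = \texttt{Schur}(I, S_0'\cup S_1) = \texttt{Schur}(J, S_0'\cup S_1)$, obtained by eliminating $W := S_0\setminus S_0'$ from $J$. I want $c^{J'}(E_{J'}(S_0',S_1))\le c^J(E_J(S_0,S_1)) = c^J(E_J(S_0',S_1)) + c^J(E_J(W,S_1))$. Identify $S_0'$ to a single vertex $s$ and $S_1$ to a single vertex $t$; then $c^J(E_J(S_0',S_1))$ is the conductance of the $s$–$t$ edge(s) in $\texttt{Schur}(J,\{s,t\}\cup W)$ — actually it's cleaner to note $c(E(S_0',S_1))$ in any graph equals $c_{st}$ in that graph with $S_0'\to s$, $S_1\to t$ identified, which is an operation that commutes with elimination of $W$ (disjoint from $S_0'\cup S_1$). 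Having reduced to $s,t$ singletons, the inequality becomes: in a graph on $\{s,t\}\cup W$, eliminating $W$ yields an $s$–$t$ conductance at most (original $s$–$t$ conductance) $+$ (original total conductance of $W$–$t$ edges). Now I would prove this by a flow/energy argument or by the random-walk identity $c_{st}^{\text{(eff between $s$ and $t$ after eliminating $W$)}}$... no — after eliminating $W$ the $s$–$t$ edge conductance is $c_{st} + \sum_{w,w'} (\text{sum over $W$-paths})$; this is exactly the total weight that a random walk from $s$ contributes toward reaching $t$ through $W$, and it is bounded by the "capacity" of the cut around $t$, namely $c^J(\partial_J t) = $ total conductance of edges at $t$ in $J$, which after the $S_1$-identification is $c^J(E_J(S_0,S_1)) = c^J(E_J(S_0',S_1)) + c^J(E_J(W,S_1))$ (every edge at $t$ has its other endpoint in $S_0' = \{s\}$ or in $W$, since $V = \{s,t\}\cup W$). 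The clean way to see $c^{J'}_{st}\le c^J(\partial_J^{} t)$: Schur-complementing everything except $t$ gives a star; the degree of $t$ is preserved under Gaussian elimination of non-$t$ vertices is \emph{false} in general, but the total conductance at $t$ is \emph{non-increasing} under elimination of a vertex $\ne t$ (eliminating $w$: $t$ loses $c_{tw}$, gains $\sum_{x\ne t} c_{tw}c_{wx}/c_w = c_{tw}(1 - c_{tw}/c_w)\le c_{tw}$ — net non-increase). Hence $c^{J'}_{st} \le c^{J'}(\partial t) \le c^J(\partial_J t) = c^J(E_J(S_0,S_1))$, which is exactly the claim.

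The main obstacle I anticipate is bookkeeping the identifications cleanly — verifying that "identify $S_0'$ to $s$, $S_1$ to $t$" commutes with "Schur-complement / eliminate $W$" (this should follow from Remark \ref{rmk:schur} together with the fact that identifying vertices is itself expressible via adding infinite-conductance edges, or just directly from the variational definition), and making sure the inequality $c^{J'}_{st}\le c^{J'}(\partial_{J'} t)$ and the monotonicity $c^{J'}(\partial t)\le c^J(\partial_J t)$ under single-vertex eliminations are stated at the right level of generality and composed over all of $W$. Everything else is a short induction on $|W|$ using the explicit Gaussian-elimination formula for edge conductances.
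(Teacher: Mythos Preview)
Your core idea---reduce to eliminating a single vertex $v\in S_0\setminus S_0'$ from $J$ and use the explicit Gaussian-elimination formula---matches the paper's proof exactly, and your degree-monotonicity computation at $t$ is essentially the same inequality the paper uses ($\sum_{w\in S_0'}c_{vw}\le c_v$). But your handling of $S_2$ has a genuine error. You claim it suffices to show that ``eliminating $S_2$ together with $S_0\setminus S_0'$ cannot increase the total conductance of edges landing between $S_0'$ and $S_1$,'' and correspondingly work with $\texttt{Schur}(I,S_0'\cup S_1)$ rather than $J'=\texttt{Schur}(I,S_0'\cup S_1\cup S_2)$. This stronger statement is false: take $S_0=S_0'=\{a\}$, $S_1=\{b\}$, $S_2=\{v\}$ with unit-conductance edges $ab,av,vb$; then $c^J(E_J(S_0,S_1))=c_{ab}=1$ but after eliminating $v$ the $a$--$b$ conductance is $3/2$. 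Equivalently, your final equality ``$c^J(\partial_J t)=c^J(E_J(S_0,S_1))$, since $V=\{s,t\}\cup W$'' fails precisely because $S_2$ is still there and contributes edges to $\partial_J t$ that are not in $E_J(S_0,S_1)$.

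The fix is simply not to eliminate $S_2$. After reducing to $V(I)=S_0\cup S_1\cup S_2$ (so $J=I$) and to $|S_0\setminus S_0'|=1$, the paper computes directly: eliminating $v$ replaces each edge $\{u,v\}$ with $u\in S_1$ (which contributed $c_{uv}$ to $c^J(E_J(S_0,S_1))$) by edges $\{u,w\}$, and only those with $w\in S_0'$ count toward $c^{J'}(E_{J'}(S_0',S_1))$, with total conductance $c_{uv}\sum_{w\in S_0'}c_{vw}/c_v\le c_{uv}$. The set $S_2$ plays no role in this inequality, so there is nothing to reduce away. If you insist on your degree-at-$t$ phrasing, you would additionally need to observe that eliminating $v\in S_0\setminus S_0'$ can only \emph{increase} $c(E(S_2,t))$, so that the drop in $c(\partial t)$ is entirely absorbed by the $S_0$--$S_1$ part; but at that point you have essentially reproduced the paper's direct argument.
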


\begin{restatable}{lemma}{lemmonotonecond}\label{lem:monotone-cond}
Consider any two disjoint sets $S_0,S_1\subseteq V(I)$ with $S_0'\subseteq S_0$. Then $c^I(S_0',S_1)\le c^I(S_0,S_1)$.
\end{restatable}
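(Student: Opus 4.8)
The plan is to read off Lemma~\ref{lem:monotone-cond} as the special case $S_2=\emptyset$ of Lemma~\ref{lem:monotone-schur}. Taking $S_2=\emptyset$ there, $J=\texttt{Schur}(I,S_0\cup S_1)$ and $J'=\texttt{Schur}(I,S_0'\cup S_1)$, so by the definition of Schur-complement conductance $c^J(E_J(S_0,S_1))=c^I(S_0,S_1)$ and $c^{J'}(E_{J'}(S_0',S_1))=c^I(S_0',S_1)$, and the conclusion $c^{J'}(E_{J'}(S_0',S_1))\le c^J(E_J(S_0,S_1))$ of Lemma~\ref{lem:monotone-schur} is then exactly the claimed inequality. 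If we are allowed to invoke Lemma~\ref{lem:monotone-schur}, the proof is one line.

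In case a self-contained argument is wanted (e.g.\ if Lemma~\ref{lem:monotone-cond} is to be proved before, or independently of, Lemma~\ref{lem:monotone-schur}), I would argue directly by eliminating vertices one at a time. By associativity of Schur complements (Remark~\ref{rmk:schur}), $\texttt{Schur}(I,S_0'\cup S_1)=\texttt{Schur}(J_0,S_0'\cup S_1)$, where $J_0:=\texttt{Schur}(I,S_0\cup S_1)$ is a graph whose vertex set is exactly $S_0\cup S_1$. Hence it suffices to show that eliminating the vertices of $S_0\setminus S_0'$ from $J_0$ one at a time never increases the total conductance of the edges running from the shrinking ``left part'' to $S_1$. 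For a single elimination step, suppose $K$ is a graph on vertex set $A\cup S_1$ with $S_0'\subseteq A$, let $w\in A\setminus S_0'$, let $d_w$ be the weighted degree of $w$ in $K$, and set $K':=\texttt{Schur}(K,(A\setminus\{w\})\cup S_1)$. Eliminating $w$ replaces each pairwise conductance $c^K_{uv}$ by $c^K_{uv}+c^K_{uw}c^K_{wv}/d_w$, so
\[
\sum_{u\in A\setminus\{w\},\,v\in S_1}c^{K'}_{uv}=\sum_{u\in A\setminus\{w\},\,v\in S_1}c^K_{uv}+\Bigl(\sum_{u\in A\setminus\{w\}}c^K_{uw}\Bigr)\Bigl(\sum_{v\in S_1}\frac{c^K_{wv}}{d_w}\Bigr)\le\sum_{u\in A,\,v\in S_1}c^K_{uv},
\]
using $\sum_{u\in A\setminus\{w\}}c^K_{uw}\le d_w$ (the left sum ranges over a subset of $w$'s neighbors) and $\sum_{u\in A,v\in S_1}c^K_{uv}=\sum_{u\in A\setminus\{w\},v\in S_1}c^K_{uv}+\sum_{v\in S_1}c^K_{wv}$. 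Chaining this bound over all vertices of $S_0\setminus S_0'$ yields $c^I(S_0',S_1)\le c^I(S_0,S_1)$.

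I do not expect any genuine obstacle here: this is a standard monotonicity property of Schur complements, and both routes above are short. The only points needing care are bookkeeping. In $c^I(S,S')$ the sets $S,S'$ are \emph{not} identified, so the relevant quantity is a double sum of individual edge conductances rather than an effective conductance. The graph $J_0$ has no vertices outside $S_0\cup S_1$, which is what makes $d_w$ the genuine full weighted degree and hence validates the step inequality $\sum_{u\in A\setminus\{w\}}c^K_{uw}\le d_w$. And the eliminations should be ordered so that the ``left part'' $A$ always still contains $S_0'$ (any order does). If instead one cites Lemma~\ref{lem:monotone-schur}, the only thing to check is that $S_2=\emptyset$ is admissible in its hypotheses, which it trivially is.
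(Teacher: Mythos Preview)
Your proposal is correct, but the paper takes a different and slightly slicker route. Rather than going through Lemma~\ref{lem:monotone-schur} or redoing the one-vertex elimination bookkeeping, the paper invokes Proposition~\ref{prop:lin-alg-total-cond} to rewrite $c^I(S_0,S_1)=1/\bigl(b_{s_0 s_1}^T L_{I/(S_0,S_1)}^+ b_{s_0 s_1}\bigr)$ and likewise for $S_0'$. Then $I/(S_0,S_1)$ is obtained from $I/(S_0',S_1)$ by identifying the remaining vertices of $S_0\setminus S_0'$ to $s_0$, and Rayleigh monotonicity says this identification can only decrease the $s_0$--$s_1$ effective resistance, hence increase its reciprocal. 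This avoids any explicit Schur-complement arithmetic. Your first route (cite Lemma~\ref{lem:monotone-schur} with $S_2=\emptyset$) is legitimate since that lemma is proved independently in the paper, and your second route is essentially reproving Lemma~\ref{lem:monotone-schur}; both are fine, but the paper's Rayleigh argument is a one-liner that sidesteps the elimination calculation entirely. Your caveat that ``the sets $S,S'$ are not identified'' is true of the raw definition, but Proposition~\ref{prop:lin-alg-total-cond} shows the quantity nonetheless equals an effective conductance after identification, which is what makes the Rayleigh shortcut available.
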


\begin{restatable}{lemma}{lempcond}\label{lem:p-cond}
For any cluster $S_C$ in a graph $I$ and any $p\in (0,1)$,

$$c^I(C,V(I)\setminus S_{\{C,V(I)\setminus S_C\}}(p,C))\le \frac{c^I(C,V(I)\setminus S_C)}{p}$$
\end{restatable}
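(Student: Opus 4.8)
The plan is to interpret both sides via electrical flows. Recall that $c^I(C, V(I)\setminus S_C)$ is the conductance, in the Schur complement $\texttt{Schur}(I, C\cup(V(I)\setminus S_C))$, of the edges between $C$ (identified to a vertex $c$) and $V(I)\setminus S_C$ (identified to a vertex, call it $t$). By Theorem~\ref{thm:schur-walk-equiv} (or Theorem~\ref{thm:edge-visits}), $c^I(C,V(I)\setminus S_C)$ equals the expected number of times the random walk started at $c$ crosses from $c$ directly to $t$ before returning to $c$ --- equivalently, it is the ``escape conductance'' $\pi_c \Pr_c[t_t < t_c^+]$ up to the usual normalization, i.e. the reciprocal of the effective resistance from $c$ to $t$ after collapsing all intermediate structure appropriately. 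The cleanest formulation: $c^I(C,V(I)\setminus S_C)$ is the amount of current that flows out of $c$ when $c$ is held at potential $1$ and $t$ at potential $0$, using the potentials of the $c$--$t$ electrical flow in $I/(C, V(I)\setminus S_C)$.

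First I would set up notation: let $I' := I/(C, \{C,V(I)\setminus S_C\}\text{-image})$; more concretely work in $I$ with $C$ identified to $c$ and $V(I)\setminus S_C$ identified to $t$. Let $\mathbf{p}$ be the $c$--$t$ electrical potentials normalized so $p_c = 1$, $p_t = 0$. By Theorem~\ref{thm:offline-prob}, $p_w = \Pr_w[t_c < t_t]$ for every vertex $w$, so by definition of the potential level set, $S_{\{C,V(I)\setminus S_C\}}(p,C)$ is exactly $\{w : p_w \ge 1-p\}$. Now $c^I(C, V(I)\setminus S_C)$ is the energy of this flow, which equals the total current crossing any ``level cut'' $\{w : p_w \ge 1-p\}$ versus its complement. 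The key physical fact: the current crossing the cut between $\{p_w \ge 1-p\}$ and $\{p_w < 1-p\}$ is the \emph{same} as the total current $I_{\text{tot}} := c^I(C,V(I)\setminus S_C)$ (current is conserved across every cut separating $c$ from $t$). On the other hand, $c^I(C, V(I)\setminus S_{\{C,V(I)\setminus S_C\}}(p,C))$ is, by Lemma~\ref{lem:monotone-cond}-type monotonicity and a direct Schur-complement argument, at most the conductance one computes by treating the set $A := S_{\{C,V(I)\setminus S_C\}}(p,C)$ as the ``source side''; eliminating everything inside $A$ except $C$ and everything outside $A$, the effective resistance from $C$ to $V(I)\setminus A$ is at least the potential drop across that cut divided by the current, i.e. $\ge (1-p-0)/I_{\text{tot}} = (1-p)/I_{\text{tot}}$...

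wait, that gives the wrong direction. Let me restate the real argument: I want an \emph{upper} bound $c^I(C, V(I)\setminus A) \le I_{\text{tot}}/p$. The point is that the effective resistance $\texttt{Reff}_I(C, V(I)\setminus A)$ is at least the resistance ``seen'' when driving current between $C$ and the cut $A$ vs.\ its complement within the original flow: the potential at every vertex of $A^c$ is $\le 1-p$ while $C$ is at $1$, so the potential difference sustaining the flow is $\ge p$; since the same current $I_{\text{tot}}$ flows, $\texttt{Reff}_I(C, V(I)\setminus A) \ge p / I_{\text{tot}}$, whence $c^I(C, V(I)\setminus A) = 1/\texttt{Reff}_I(C, V(I)\setminus A) \le I_{\text{tot}}/p = c^I(C, V(I)\setminus S_C)/p$. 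To make this rigorous I would (i) use Remark~\ref{rmk:schur} to reduce to the three-terminal graph on $\{C, A\setminus C, V(I)\setminus A\}$ (or rather $\{C, \text{rest of }S_C, V(I)\setminus S_C\}$ refined appropriately), (ii) invoke Theorem~\ref{thm:offline-prob} to identify $A$ as a superlevel set of $\mathbf p$, and (iii) apply Thomson's principle / the fact that restricting the potential function $\mathbf p$ to be a valid (sub-optimal) potential for the $C$-to-$(V(I)\setminus A)$ problem gives the resistance lower bound $p/I_{\text{tot}}$ directly, since the flow $\mathbf p$ induces is a unit-scaled $C$-to-$(V(I)\setminus A)$ flow of energy $I_{\text{tot}}/p^2 \cdot$ (appropriate normalization).

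The main obstacle I anticipate is bookkeeping around Schur complements and making the ``restricting a potential to a sublevel set gives a valid flow'' step fully precise --- in particular handling the fact that $S_{\{C,V(I)\setminus S_C\}}(p,C)$ may be a strict subset of $S_C$ and that we must Schur-complement out the buffer region $A\setminus C$ cleanly. I would handle this by working entirely in the graph $K := \texttt{Schur}(I, C\cup (A\setminus C)\cup (V(I)\setminus S_C))$ (with $C$ and $V(I)\setminus S_C$ each identified to single vertices), noting via Remark~\ref{rmk:com-schur}-style commutativity that conductances of the relevant edge sets are unchanged, and then the whole argument becomes a two-paragraph computation about a flow in $K$ with a superlevel cut. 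The energy/current conservation facts are standard (Foster-type / Thomson), and the normalization $p_c=1, p_t=0$ makes $I_{\text{tot}} = c^I(C,V(I)\setminus S_C)$ exactly, so no constants are lost and the clean bound $1/p$ drops out.
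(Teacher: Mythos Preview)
Your core strategy matches the paper's: work in $J := I/(C, V(I)\setminus S_C)$ with $C \to s$, $V(I)\setminus S_C \to t$; identify the level set $A := S_{\{C, V(I)\setminus S_C\}}(p,C)$ with $\{v : p_v \ge 1-p\}$ via the normalized $s$--$t$ potentials; and argue that the effective resistance from $C$ to $V(I)\setminus A$ is at least $p \cdot \texttt{Reff}_J(s,t)$. The paper makes the last step precise by subdividing every edge that crosses the $(1-p)$-potential threshold: the restricted flow is then \emph{exactly} the electrical flow in the sublevel-set graph $J'$, with energy $p \cdot b_{st}^T L_J^+ b_{st}$, and Rayleigh monotonicity (the half-edges to the subdivision points have smaller resistance than the full crossing edges to $V(I)\setminus A$) gives $\texttt{Reff}(C, V(I)\setminus A) \ge p\cdot\texttt{Reff}_J(s,t)$, hence the conductance bound via Proposition~\ref{prop:lin-alg-total-cond}.

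Your step (iii) is where the sketch wobbles. You invoke Thomson's principle on a flow, but Thomson gives an \emph{upper} bound on effective resistance---the wrong direction here. What you actually need is Dirichlet's principle applied to the truncated potential $\psi_v := \max(p_v, 1-p)$: since $\psi_s = 1$ and $\psi$ is constant $1-p$ on $V(I)\setminus A$, one gets $c^I(C, V(I)\setminus A) \le \mathcal{E}(\psi)/p^2$. The key missing ingredient is that $\mathcal{E}(\psi) \le p\, I_{\text{tot}}$ (so that the final bound is $I_{\text{tot}}/p$, not the weaker $I_{\text{tot}}/p^2$ one would get from the naive $\mathcal{E}(\psi)\le\mathcal{E}(\mathbf p)=I_{\text{tot}}$---which may be what your ``$I_{\text{tot}}/p^2$'' was hinting at). This sharp energy bound follows either from the paper's subdivision trick, or directly: since truncation is $1$-Lipschitz and sign-preserving on differences, $\mathcal{E}(\psi) \le \sum_e c_e(\psi_u-\psi_v)(p_u-p_v) = \sum_e f_e(\psi_u - \psi_v) = I_{\text{tot}}(\psi_s - \psi_t) = p\, I_{\text{tot}}$, where $f$ is the original $s$--$t$ electrical flow. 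With that inequality in hand, your argument and the paper's are equivalent.
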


\begin{restatable}{lemma}{lemwellsep}\label{lem:well-sep}
Consider a graph $I$ with two clusters $C_1$ and $C_2$ with two properties:

\begin{itemize}
\item The $I$-effective resistance diameters of $C_1$ and $C_2$ are both at most $R$.
\item The minimum effective resistance between a vertex in $C_1$ and a vertex in $C_2$ is at least $\gamma R$ for $\gamma > 4$.
\end{itemize}

Let $J$ be the graph with $C_1$ and $C_2$ identified to $s$ and $t$ respectively. Then $\texttt{Reff}_J(s,t)\ge (\gamma-4)R$.
\end{restatable}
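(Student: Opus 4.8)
The plan is to work with potentials for the $s$--$t$ electrical flow in $J$ and reduce the claim to a statement about the effective resistance between two well-separated \emph{small-diameter} sets. The key point is that identifying $C_1$ to a single vertex $s$ can only \emph{decrease} effective resistances, so $\texttt{Reff}_J(s,t)$ is at least $\texttt{Reff}_{I}(C_1,C_2)$ interpreted in the right way; but this by itself only gives the trivial lower bound $\gamma R$ minus slack. I want the cleaner bound $(\gamma-4)R$, so I would instead argue directly about energy. Let $f$ be the unit $s$--$t$ flow in $J$ of minimum energy, i.e. the electrical flow, and lift it to a flow $\tilde f$ in $I$ from $C_1$ to $C_2$ (this is possible since $J = I/(C_1)/(C_2)$ and flows pull back; the lift has the same energy). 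The energy of $\tilde f$ is $\texttt{Reff}_J(s,t)$. Now I would produce a \emph{potential function} $\phi$ on $V(I)$ certifying a lower bound on this energy via the dual characterization $\texttt{Reff}_{I}(C_1,C_2) = \max_\phi \frac{(\phi(C_2)-\phi(C_1))^2}{\sum_e c_e(\phi(u)-\phi(v))^2}$, where the max is over $\phi$ constant on $C_1$ and constant on $C_2$.

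The concrete choice of $\phi$: set $\phi(v) = \min(\texttt{Reff}_I(C_1, v), (\gamma-4)R)$ — essentially a truncated ``distance to $C_1$'' in the effective-resistance metric, clamped. First I would check $\phi$ is (a) constant ($=0$) on $C_1$; this is immediate. Then (b) I need $\phi \equiv (\gamma-4)R$ on $C_2$: for any $v\in C_2$ and any $u \in C_1$, the triangle inequality gives $\texttt{Reff}_I(C_1,v) \ge \texttt{Reff}_I(u,v) - \text{(diam of }C_1) \ge \gamma R - R = (\gamma - 1)R \ge (\gamma-4)R$, so the clamp activates and $\phi(v) = (\gamma-4)R$ for all $v \in C_2$. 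So $\phi(C_2) - \phi(C_1) = (\gamma-4)R$. Finally (c) the denominator: I claim $\sum_{e=\{u,v\}} c_e (\phi(u)-\phi(v))^2 \le (\gamma - 4)R$. This is the crux. Because $\phi$ is a clamp of the function $v \mapsto \texttt{Reff}_I(C_1,v)$, and clamping is $1$-Lipschitz, it suffices to prove the same bound for the unclamped function $\psi(v) := \texttt{Reff}_I(C_1, v)$; but for $\psi$ this is a known Rayleigh-type identity — it is exactly (a consequence of) the fact that the effective resistance from $C_1$ realized as an energy computation has $\sum_e c_e(\psi(u)-\psi(v))^2$ controlled. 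More carefully, I would use that $\psi$ is the potential of the electrical flow spreading current outward from $C_1$, or cite the standard fact that for the ``voltage = effective resistance to a fixed terminal set'' function the Dirichlet energy restricted to the ball of radius $(\gamma-4)R$ is at most $(\gamma-4)R$ (this is the electrical analogue of the coarea/ball-growing estimate, e.g. the same computation behind \cite{LR99}-style ball growing). Plugging (b) and (c) into the dual formula yields $\texttt{Reff}_J(s,t) \ge \frac{((\gamma-4)R)^2}{(\gamma-4)R} = (\gamma-4)R$.

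The main obstacle I anticipate is pinning down step (c) rigorously — making precise that the truncated effective-resistance-distance function has Dirichlet energy at most its truncation radius. The clean way is: let $A = \{v : \texttt{Reff}_I(C_1,v) < (\gamma-4)R\}$, route the unit electrical flow from $C_1$ to the complement-boundary, observe $\psi$ restricted to $\overline A$ is the associated potential and $\sum_{e \subseteq \overline A} c_e(\psi(u)-\psi(v))^2$ equals the ``work'' done, which telescopes to at most the maximum potential difference $= (\gamma-4)R$ since it is a unit flow; edges leaving $\overline A$ contribute zero after clamping since both endpoints get value $(\gamma-4)R$ or the clamp flattens the gradient. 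I would need to be slightly careful that $\psi$ is genuinely harmonic off $C_1$ and that effective-resistance-to-a-set behaves like a potential — this is standard but deserves a sentence. An alternative, if that is fiddly, is the slicker route: directly take $\phi(v) = $ the actual $s$--$t$ electrical potential in $J$ normalized so $\phi(s)=0$, $\phi(t)=\texttt{Reff}_J(s,t)$, lift to $I$, and lower bound $\phi(t)-\phi(s)$ by tracking $\phi$ along any path from $C_1$ to $C_2$ and using that $|\phi(u)-\phi(v)|\le \texttt{Reff}_I(u,v)$ for the electrical potential of a unit flow (a one-line energy/Cauchy–Schwarz argument) combined with the diameter-$R$ bound on $C_1, C_2$ to absorb the $-4R$; but that only cleanly gives $-2R$, not $-4R$, so I expect to need the energy/clamp argument above to hit the stated constant. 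I will go with the clamp argument and verify the energy bound via the telescoping-work computation.
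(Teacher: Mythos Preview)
Your step (c) contains a real gap: the function $\psi(v)=\texttt{Reff}_I(C_1,v)$ is \emph{not} harmonic off $C_1$, and is not the potential of any unit flow out of $C_1$. (A two-line sanity check: in a star with center $c$ and unit-resistance leaves $v_1,\dots,v_k$, one has $\psi(c)=0$ and $\psi(v_i)=1$; the potential of the unit electrical flow from $c$ to the identified leaf-set is $0$ at $c$ and $1/k$ at each leaf, not $1$.) So the ``telescoping work'' computation does not apply to $\psi$, and there is no reason the Dirichlet energy of your clamped distance function should be bounded by the clamp height. Since your Rayleigh-quotient argument needs exactly that energy bound to conclude, the proof as written does not close.

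The paper's argument is essentially your discarded alternative, executed with the right test function and the right form of the dual. Fix single vertices $s\in C_1$, $t\in C_2$ and take $p=L_I^+ b_{st}$, the genuine electrical potential in $I$; its Dirichlet energy is exactly $p^TL_Ip=\texttt{Reff}_I(s,t)\ge \gamma R$. Now clamp: set $q_v=\max\bigl(p_s+R,\min(p_t-R,\,p_v)\bigr)$. For $x\in C_1$ one has $p_x-p_s=b_{st}^TL_I^+b_{sx}\le b_{sx}^TL_I^+b_{sx}\le R$, so $q\equiv p_s+R$ on $C_1$; symmetrically $q\equiv p_t-R$ on $C_2$. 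Hence $q$ descends to $J$, and because clamping only shrinks edge-gradients, $q^TL_Jq\le p^TL_Ip$. Plugging $q$ into the dual form $\texttt{Reff}_J(s,t)\ge 2b_{st}^Tq-q^TL_Jq$ (not the Rayleigh quotient) gives
\[
\texttt{Reff}_J(s,t)\ \ge\ 2\bigl((p_t-R)-(p_s+R)\bigr)-(p_t-p_s)\ =\ (p_t-p_s)-4R\ \ge\ (\gamma-4)R.
\]
The idea you were missing is to pick a test function whose Dirichlet energy you already control---the true electrical potential---rather than the resistance-distance function, whose energy you would have to bound from scratch. With that substitution your outline goes through and yields the stated $-4R$.
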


\begin{figure}[H]

\begin{center}
\includegraphics[width=0.6\textwidth]{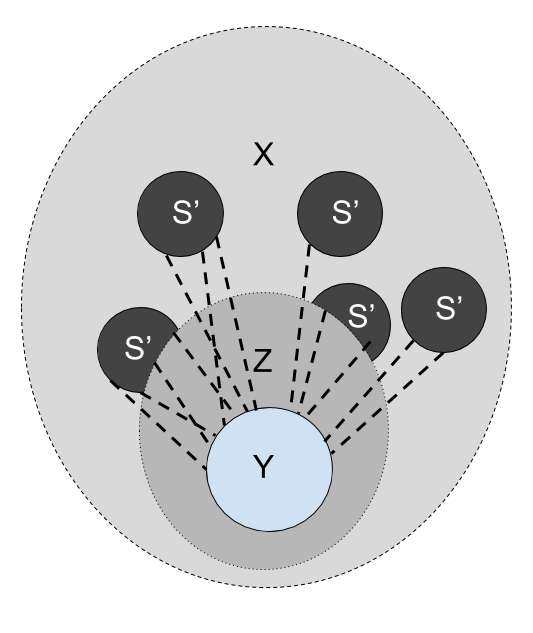}
\end{center}

\caption{The first step in bounding conductivity. To bound the $Y-S'$ conductance in $\texttt{Schur}(H_{\mc C},Y\cup S'\cup (V(H)\setminus X))$, it suffices to bound the $Y-(V(H)\setminus Z)$-conductance in $\texttt{Schur}(H_{\mc C},Y\cup (V(H)\setminus Z))$, which is bounded because $Z$ is 1/4th of the way from $Y$ to $\partial X$. More formally, we apply Lemma \ref{lem:p-cond} with $p = 1/4$.}

\label{fig:reduction-first-step}
\end{figure}

\begin{figure}[H]

\includegraphics[width=1.0\textwidth]{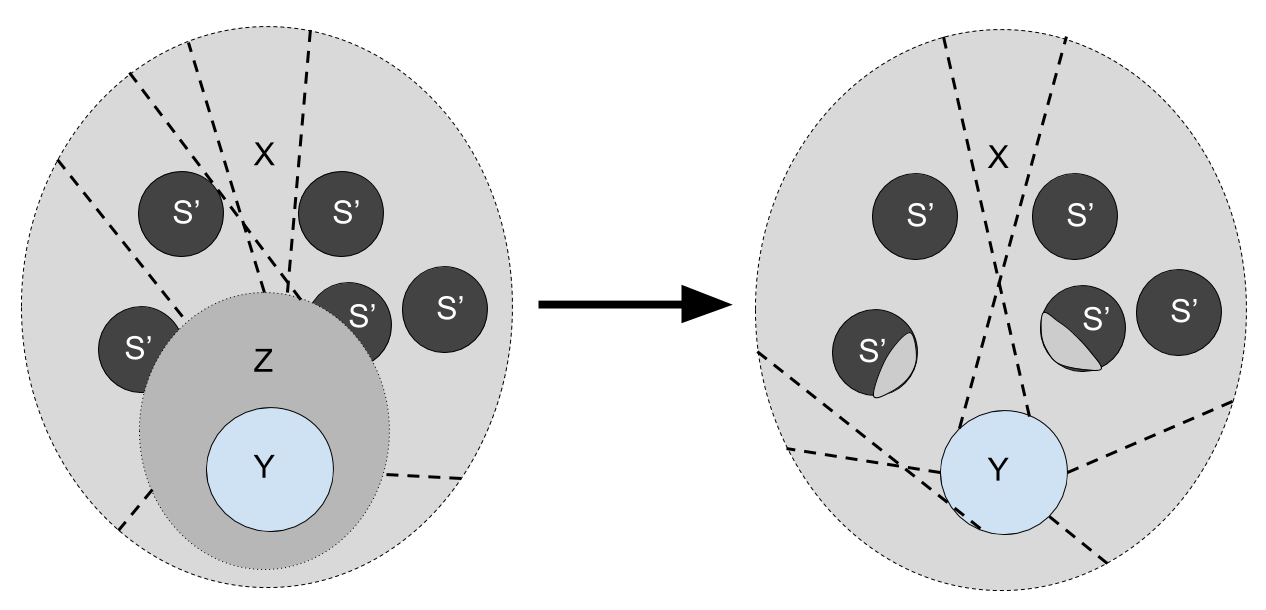}

\caption{The second step in bounding conductivity. To bound the direct $Y-V(H)\setminus X$ conductance, it suffices to bound the direct $Z-V(H)\setminus X$ conductance. This is bounded using Lemma \ref{lem:p-cond} on $X\setminus Z$ with $p = 3/4$.}

\label{fig:reduction-second-step}
\end{figure}

\begin{figure}[H]

\includegraphics[width=1.0\textwidth]{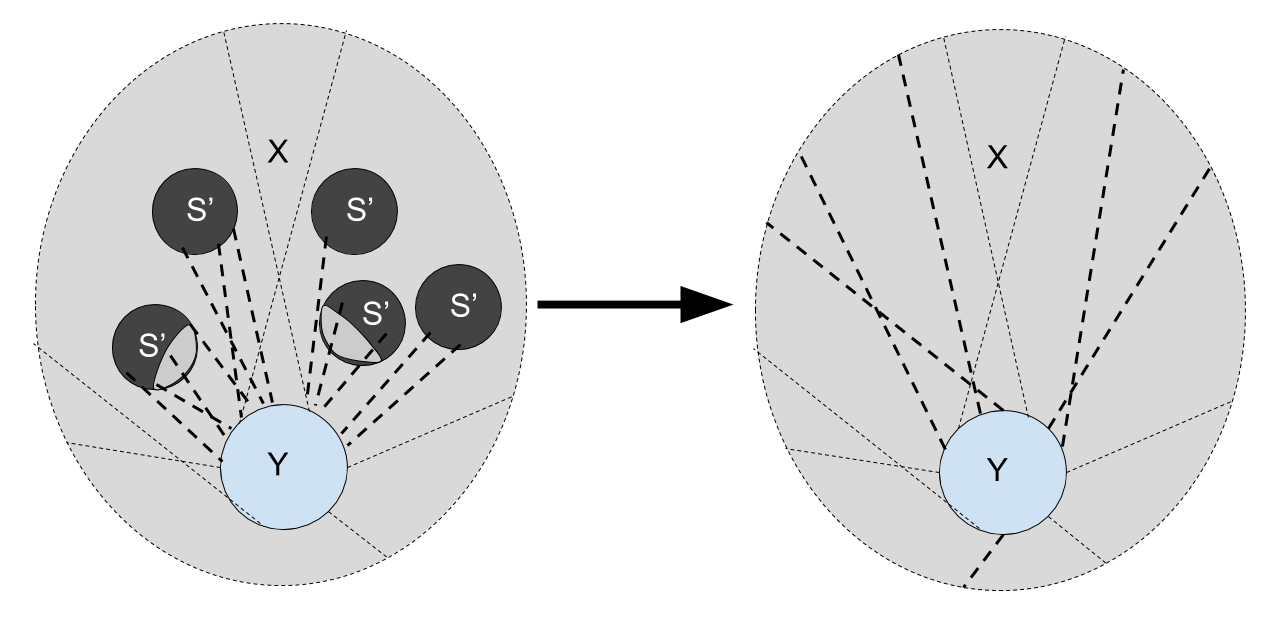}

\caption{The third step in bounding conductivity. The light dashed edges are the direct $Y-V(H)\setminus X$ edges, whose total conductance was bounded in the second step. The dark dashed edges obtained by eliminating $S'$ have smaller total conductance than the direct dark dashed edges to $S'$ by Lemma \ref{lem:monotone-schur}. The total conductance of these edges in $\texttt{Schur}(H_{\mc C}',Y\cup S'\cup V(H)\setminus X)$ is bounded by the first part and Lemma \ref{lem:fast-fix}. The end goal was to bound $Y-(V(H)\setminus X)$ conductance in the Schur complement $\texttt{Schur}(H_{\mc C}',Y\cup (V(H)\setminus X))$ obtained from the fixed graph $H_{\mc C}'$, so we are done.}

\label{fig:reduction-third-step}
\end{figure}

\lemclanfix*

\begin{proof}[Proof of Lemma \ref{lem:clan-fix} given Lemma \ref{lem:fast-fix}]

\textbf{Boundedness.} No part in $\mc K\cap (\cup_{k\le i} \mc P_k(\mc E))$ intersects the boundary of a core $C$ in some clan of $\mc E_i$, as $C$ is part of the refinement used to define $\mc P_k(\mc E)$. Therefore, each $H'$-boundary edge of the image $C'$ of $C$ in $H'$ is either a boundary edge of some part in $\mc K\cap (\cup_{k > i} \mc P_k(\mc E))$ or is a boundary edge of $C$. The total conductance of such edges is at most $\ell\kappa m/(\R{i})$, as desired.

\textbf{Conductivity.} Consider each clan $\mc C$ in isolation and consider the set $S'$ generated for the clan $\mc C$. Let $X = \cup_{S_{C'}\in \mc C} S_{C'}$ and $Y = \cup_{S_{C'}\in \mc C} C'$.

We start by bounding the $Y-S'$ conductance before conditioning. This is depicted in Figure \ref{fig:reduction-first-step}. By definition of $S'$ and the fact that the $S_C$s are vertex disjoint, $S'$ does not intersect $Z = S_{\{Y, V(H)\setminus X\}}(1/4,Y)$. Therefore, by Lemma \ref{lem:monotone-cond} with $S_0\gets S', S_0'\gets V(H)\setminus Z$, and $S_1\gets Y$,

$$c^{H_{\mc C}}(S',Y)\le c^{H_{\mc C}}(V(H)\setminus Z,Y)$$

By Lemma \ref{lem:p-cond} with $C\gets Y$, $S_C\gets X$ and $p\gets 1/4$ and the definition of $Z$,

$$c^{H_{\mc C}}(V(H)\setminus Z,Y) \le 4 c^{H_{\mc C}}(X,Y)$$

Since $\mc C$ was $\zeta$-conductive before conditioning,

$$c^{H_{\mc C}}(X,Y)\le \frac{\zeta m^{1/\sigma_1} s_{\mc C}}{\R{i}}$$

Combining these inequalities shows that

$$c^{H_{\mc C}}(S',Y)\le \frac{4\zeta m^{1/\sigma_1} s_{\mc C}}{\R{i}}$$

Let $H_{\mc C}'$ be the graph obtained by deleting $\texttt{deleted}(\mc C)$ from $H'$ after adding the $\FastFix$ edges. By the ``Conductance'' guarantee of Lemma \ref{lem:fast-fix},

$$c^{H_{\mc C}'}(S',Y)\le \frac{5\zeta m^{1/\sigma_1} s_{\mc C}}{\R{i}}$$

We have now finished bounding the $Y-S'$ conductance after conditioning. Now, we bound the $Y-(V(H)\setminus X)$ conductance after conditioning. This is depicted in Figure \ref{fig:reduction-second-step}. By definition, $Y\subseteq Z$. Notice that all edges in $F$ have endpoints in $S'\cup Z$. As a result, the direct $Z$ to $V(H)\setminus X$ conductance does not depend on $F$:

$$c^{\texttt{Schur}(H_{\mc C}',Z\cup S'\cup (V(H)\setminus X))}(E(Z,V(H)\cup X)) = c^{\texttt{Schur}(H_{\mc C},Z\cup S'\cup (V(H)\setminus X))}(E(Z,V(H)\cup X))$$

The graph subscript for $E$ is eliminated here for clarity, as the graph is the same as the Schur complement given in the superscript. Apply Lemma \ref{lem:monotone-schur} with $I\gets H_{\mc C}'$, $S_0\gets Z$, $S_0'\gets Y$, $S_1\gets V(H)\setminus X$, and $S_2\gets S'$ to conclude that

\begin{align*}
c^{\texttt{Schur}(H_{\mc C}',Y\cup S'\cup (V(H)\setminus X))}(E(Y,V(H)\cup X)) &\le c^{\texttt{Schur}(H_{\mc C}',Z\cup S'\cup (V(H)\setminus X))}(E(Z,V(H)\cup X))\\
&= c^{\texttt{Schur}(H_{\mc C},Z\cup S'\cup (V(H)\setminus X))}(E(Z,V(H)\cup X))\\
&\le \frac{4\zeta m^{1/\sigma_1} s_{\mc C}}{3\R{i}}\\
\end{align*}

where the last inequality follows from the fact that $X\setminus Z = S_{\{Y, V(H)\setminus X\}}(3/4,V(H)\setminus X)$, Lemma \ref{lem:p-cond}, and the $\zeta$-conductivity of $\mc C$ in the graph $H_{\mc C}$.

Now, we have bounds on the direct $Y-V(H)\setminus X$ and $Y-S'$ conductances in $H_{\mc C}'$. We now eliminate $S'$ to obtain the desired bound on the $Y-(V(H)\setminus X)$ conductance in $H_{\mc C}'$. This is depicted in Figure \ref{fig:reduction-third-step}. Start by applying Lemma \ref{lem:monotone-schur} with $I\gets H_{\mc C}'$, $S_0\gets (V(H)\setminus X)\cup S'$, $S_0'\gets V(H)\setminus X$, $S_1\gets Y$, and $S_2\gets \emptyset$ to show that

\begin{align*}
c^{\texttt{Schur}(H_{\mc C}',Y\cup (V(H)\setminus X))}(E(Y,V(H)\setminus X))&\le c^{\texttt{Schur}(H_{\mc C}',Y\cup S'\cup (V(H)\setminus X))}(E(Y,S'\cup (V(H)\setminus X)))\\
&\le c^{\texttt{Schur}(H_{\mc C}',Y\cup S'\cup (V(H)\setminus X))}(E(Y,S'))\\
&+ c^{\texttt{Schur}(H_{\mc C}',Y\cup S'\cup (V(H)\setminus X))}(E(Y,(V(H)\setminus X)))\\
&\le c^{H_{\mc C}'}(Y,S') + c^{\texttt{Schur}(H_{\mc C}',Y\cup S'\cup (V(H)\setminus X))}(E(Y,(V(H)\setminus X)))\\
&\le \frac{19\zeta m^{1/\sigma_1} s_{\mc C}}{3\R{i}}\\
\end{align*}

This is precisely saying that $\mc C$ is $\frac{19\zeta}{3}$-conductive after running $\FixShortcutters$, as desired.

\textbf{Modifiedness and deletion set condition.} We start with the deletion set condition. Consider any part $P\in \mc P_i(\mc E)$ for some $i\in [\sigma_0]$. By the laminarity of the overlay partition, either $P\subseteq Q$ for some $Q\in \mc K$ or $P$ does not intersect any part in $\mc K$. In the former case, conditioning on all parts in $\mc K$ makes $E_{H'}(P) = \emptyset$ and the deletion set condition does not need to apply for $P$. In the latter case, $F$ does not intersect $E_H(P)\cup \partial_H P$. Since $\FastFix$ adds a subset of $F$ to each $\texttt{deleted}(\mc C')$ for any clan $\mc C'$ in the empire $\mc E$, the deletion set condition remains satisfied for the part $P$, as desired.

Now, we move onto modifiedness. To apply Lemma \ref{lem:fast-fix}, we just need to bound $\Delta^{H_{\mc C}}(Y,S')$, $\Delta^{H_{\mc C}}(S',Y)$, and $|D|$. The reasoning for the first two bounds is very similar. By the modifiedness of the input, $|D|\le \tau m^{1/\sigma_1} s_{\mc C}$, so it suffices to bound the first two quantities.

We start with bounding $\Delta^{H_{\mc C}}(S',Y)$. Consider the graph $H_{\mc C}/S'$ with the vertices $S'$ identified to one vertex $s$. By Rayleigh monotonicity, cores of shortcutters in $\mc C$ have $H_{\mc C}/S'$ effective resistance diameter at most $\murad\R{i}\le \mucarve\muapp\R{i} := R$. Break the cores of shortcutters in $\mc C$ up into two sets $\mc C_{near}$ and $\mc C_{far}$, with these sets being the cores in $\mc C$ with and without respectively a vertex with $H_{\mc C}/S'$ effective resistance distance from $s$ at most $3R$. We bound the contribution to the degree of $\mc C_{near}$ and $\mc C_{far}$ independently.

First, we bound the $\mc C_{near}$ contribution. Every vertex in $\mc C_{near}$ clusters is within $H_{\mc C}/S'$ effective resistance distance $4R$ of $s$ by the triangle inequality. Recall that $Z$ separates the $\mc C_{near}$ clusters from $s$. Therefore, by Lemma \ref{lem:monotone-schur}, Lemma \ref{lem:p-cond} with $p = 1/4$, and $\zeta$-conductivity before conditioning,

$$c^{H_{\mc C}/S'}(s,\cup_{C\in \mc C_{near}} C)\le \frac{4\zeta m^{1/\sigma_1} s_{\mc C}}{\R{i}}$$

Let $K = \texttt{Schur}(H_{\mc C}/S',\{s\}\cup (\cup_{C\in \mc C} C))$. Then,

\begin{align*}
\sum_{e\in E_K(s,\cup_{C\in \mc C_{near}} C)} \texttt{Reff}_K(e) c_e^K &\le 4R c^K(E_K(s,\cup_{C\in \mc C_{near}} C))\\
&\le 16\mucarve \zeta m^{1/\sigma_1} s_{\mc C}\\
\end{align*}

Now, we bound the $\mc C_{far}$ contribution. To do this, we exploit Lemma \ref{lem:well-sep} on each cluster. In particular, for each cluster $C\in \mc C_{far}$ with a vertex at distance exactly $\gamma_C$ from $s$, Lemma \ref{lem:well-sep} implies that

$$c^K(s,C) \le \frac{1}{(\gamma_C-4)R}$$

By the triangle inequality, all vertices have $K$-effective resistance distance at most $(\gamma_C + 1)R$ from $s$. Therefore, they contribute at most $(\gamma_C+1)/(\gamma_C-4)\ge (5+1)/(5-4) = 6$ to the degree each. By bucketing, there are at most $4s_{\mc C}$ clusters in $\mc C$. Therefore,

$$\sum_{e\in E_K(s,\cup_{C\in \mc C_{far}} C)} \texttt{Reff}_K(e) c_e^K\le 24s_{\mc C}$$

We have now shown that

$$\Delta^{H_{\mc C}}(S',Y)\le (24 + 24\zeta \mucarve m^{1/\sigma_1}) s_{\mc C}$$

Next, we bound $\Delta^{H_{\mc C}}(Y,S')$. Since the shortcutters of $\mc C$ are carved with respect to $S'$ and all parts in $S'$ intersect some shortcutter of $\mc C$ (by Line \ref{line:chosen-intersect}), $S'$ can be written as a union of $|\mc C|$ clusters with $H$-effective resistance diameter at most $\mucarve\R{i}$ given by a $\mucarve\R{i}$ $H$-effective resistance ball around each core in $\mc C$. Let $\mc L$ be this set. By the bucketing of $\mc C$, $|\mc L|\le 4s_{\mc C}$. By Lemma \ref{lem:ball-split}, the clusters in $\mc L$ can be covered by a set $\mc L'$ with size $|\mc L'|\le \muapp(|\mc L| + |\texttt{deleted}(\mc C)|)\le \muapp (4 + \tau m^{1/\sigma_1})s_{\mc C}$ of clusters with $H_{\mc C}$ effective resistance diameter at most $\muapp\mucarve\R{i} \le R$.

Split $\mc L'$ into two sets $\mc L_{near}$ and $\mc L_{far}$, based on whether or not they contain a vertex within $H_{\mc C}/Y$ distance $3R$ of the identification $t$ of $Y$ in $H_{\mc C}/Y$. Applying the same argument as above with $s$ replaced with $t$ and the core set of $\mc C$ replaced with $\mc L'$ shows that

$$\Delta^{H_{\mc C}}(Y,S')\le 24 |\mc L_{far}| + 24 \muapp \mucarve (\zeta m^{1/\sigma_1} s_{\mc C})$$

As discussed after defining $\mc L'$,

$$|L_{far}|\le |\mc L'|\le \muapp (4 + \tau m^{1/\sigma_1})s_{\mc C}$$

Plugging this bound in shows that $\Delta^{H_{\mc C}/Y}(Y,S')\le 100\muapp (\mucarve \zeta + \tau) m^{1/\sigma_1} s_{\mc C}$. By Lemma \ref{lem:fast-fix}, $\mc C$ is $200\muapp\mucarve\mucon(\zeta + \tau) = \mumod(\zeta + \tau)$-modified after applying $\FixShortcutters$, as desired.

\textbf{Runtime.} $\FixShortcutters$ does a linear amount of work and one call to $\FastFix$ for each clan. Since there are at most $\ellmax \le m^{o(1)}$ clans, the total runtime is almost-linear.

\end{proof}

\newpage

\section{Conductance concentration inequality (fixing lemma) preliminaries}\label{sec:fix-preliminaries}

Now, we give preliminaries for Lemmas \ref{lem:slow-fix} and \ref{lem:fast-fix}.

\subsection{Linear algebraic descriptions of conductances and rank one updates}

For the rest of the paper, it will be helpful to think about the quantities $c^I(S,S')$ and $\Delta^I(S,S')$ linear-algebraically. We start with $c^I(S,S')$:

\begin{proposition}\label{prop:lin-alg-total-cond}
$$c^I(S,S') = \frac{1}{b_{ss'}^T L_{I/(S,S')}^+ b_{ss'}}$$
\end{proposition}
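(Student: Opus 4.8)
The plan is to reduce the statement to a two-terminal fact about Schur complements. Recall the definition $c^I(S,S') = c^J(E_J(S,S'))$ where $J = \texttt{Schur}(I, S \cup S')$. Since $J$ has no vertices outside $S \cup S'$, the graph $J/(S,S')$ obtained by identifying all of $S$ to a vertex $s$ and all of $S'$ to a vertex $s'$ is a (multi-)graph on exactly two vertices; all of its parallel edges correspond to edges of $E_J(S,S')$, and the edges of $J$ internal to $S$ or internal to $S'$ become self-loops, which do not affect the Laplacian. Hence the total conductance of the edges between $s$ and $s'$ in this two-vertex graph is exactly $c^J(E_J(S,S')) = c^I(S,S')$. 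For a graph on two vertices $s, s'$ with total parallel conductance $c$, the effective resistance between $s$ and $s'$ is $1/c$, i.e. $b_{ss'}^T L_{J/(S,S')}^+ b_{ss'} = 1/c^I(S,S')$. So it remains to show $L_{J/(S,S')}^+ b_{ss'}$ and $L_{I/(S,S')}^+ b_{ss'}$ give the same effective resistance, i.e. $\texttt{Reff}_{I/(S,S')}(s,s') = \texttt{Reff}_{J/(S,S')}(s,s')$.

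For that last identity I would invoke the two standard facts about Schur complements already available in the excerpt. First, Schur complementation commutes with identification/contraction of vertices in the kept set: identifying the vertices of $S$ (respectively $S'$) is a sequence of edge contractions among kept vertices, so by Remark \ref{rmk:com-schur} (applied to the relevant contractions) we get $\texttt{Schur}(I/(S,S'), \{s,s'\}) = \texttt{Schur}(I, S \cup S')/(S,S') = J/(S,S')$. Second, Schur complementation preserves effective resistances between kept vertices — this is the electrical content of Theorem \ref{thm:schur-walk-equiv} (random walks on a graph and its Schur complement agree on the kept set, hence hitting probabilities and effective resistances agree), or can be cited as the classical fact that $L_{\texttt{Schur}(G,X)}^+ = \big(L_G^+\big)[X,X]$ restricted appropriately. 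Applying this with $G \gets I/(S,S')$ and $X \gets \{s,s'\}$ yields $\texttt{Reff}_{I/(S,S')}(s,s') = \texttt{Reff}_{\texttt{Schur}(I/(S,S'),\{s,s'\})}(s,s') = \texttt{Reff}_{J/(S,S')}(s,s') = 1/c^I(S,S')$, which rearranges to the claimed formula
$$c^I(S,S') = \frac{1}{b_{ss'}^T L_{I/(S,S')}^+ b_{ss'}}.$$

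The one point that needs a little care — and the main obstacle, such as it is — is checking that the manipulations of identified/contracted vertices are legitimate: namely that $J/(S,S')$ really is a two-vertex graph (so that its $s$–$s'$ effective resistance is the reciprocal of the total conductance), and that contracting a set of vertices is the same as the sequence of edge contractions to which Remark \ref{rmk:com-schur} applies, including the degenerate cases where $S$ or $S'$ is already a singleton or where there happen to be no edges directly between $S$ and $S'$ (in which case $c^I(S,S') = 0$ and the effective resistance is $+\infty$, and the formula still holds under the usual pseudoinverse conventions). These are all routine once spelled out, so I would state them briefly and move on.
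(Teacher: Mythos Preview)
Your proposal is correct and follows essentially the same route as the paper's proof: reduce to the two-vertex graph $\texttt{Schur}(I/(S,S'),\{s,s'\}) = J/(S,S')$ via commutativity of Schur complements with identification, and then use that in a two-vertex graph the $s$--$s'$ conductance is the reciprocal of the effective resistance. The paper's argument is a terser version of exactly this; your write-up just makes the invocation of Remark~\ref{rmk:com-schur} and the edge cases more explicit.
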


\begin{proof}
Since all edges accounted for in $c^I(S,S')$ go directly between $S$ and $S'$ in $\texttt{Schur}(I,S\cup S')$, $c^{I/(S,S')}(s,s') = c^I(S,S')$. $c^{I/(S,S')}(s,s')$ is the conductance of the one edge between $s$ and $s'$ in the graph $\texttt{Schur}(I/(S,S'),\{s,s'\})$. The conductance of this edge is the reciprocal of its resistance. By commutativity of Schur complements with identification, this edge's resistance is $b_{ss'}^T L_{I/(S,S')}^+ b_{ss'}$, as desired.
\end{proof}

Now, we interpret $\Delta^I(S,S')$ linear-algebraically. As discussed in Section \ref{sec:preliminaries}, the effective resistance between vertices $x$ and $y$ in a graph $I$ is just $b_{xy}^T L_I^+ b_{xy}$. Therefore, we just need to interpret condunctances between a vertex $s$ and a set of vertices $S'$. Proposition \ref{prop:lin-alg-total-cond} gives us a way of thinking about the total conductance of edges between $s$ and $S'$, so it suffices to describe the normalized conductances. It turns out that these conductances can be computed using one electrical flow:

\begin{proposition}\label{prop:lin-alg-norm-cond}
Consider a graph $I$ with a vertex $s$ and a set $S'\subseteq V(I)$. Let $J\gets \texttt{Schur}(I,\{s\}\cup S')$. Let $K = I/S'$ with $S'$ identified to a vertex $s'$. For each vertex $w\in S'$,

$$\frac{c_{sw}^J}{c^J(s,S')} = \sum_{e\in\partial_I w} (b_{ss'}^T L_K^+ b_e)c_e^K$$ 
\end{proposition}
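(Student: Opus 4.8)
The plan is to prove the identity by giving an electrical-flow interpretation to both sides. Write $A := \{s\}\cup S'$, $Z := V(I)\setminus A$, and let $\pi : V(I)\to V(K)$ be the quotient map defining $K = I/S'$, so that $\pi$ collapses $S'$ to the single vertex $s'$ and is the identity elsewhere. Set $p := L_K^+ b_{ss'}$, the potential vector of the unit $s\to s'$ electrical flow in $K$, and define $\phi\in\mathbb{R}^{V(I)}$ by $\phi_v := p_{\pi(v)}$. Thus $\phi$ equals $p_s$ at $s$, agrees with $p$ on $Z$, and is constant equal to $p_{s'}$ on all of $S'$.

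First I would compute $L_I\phi$. For $v\in Z$ and for $v=s$, the neighbours of $v$ in $K$ are exactly the $\pi$-images of its neighbours in $I$ (with conductances added), so $(L_I\phi)_v = (L_K p)_v = (b_{ss'})_v$, which is $0$ on $Z$ and $1$ at $s$. For $w\in S'$, expand $(L_I\phi)_w = \sum_{u\sim_I w} c_{uw}^I(\phi_w-\phi_u)$: the terms with $u\in S'$ vanish because $\phi_u=\phi_w=p_{s'}$, and for $u\notin S'$ we have $\phi_u=p_u$, so
$$(L_I\phi)_w = \sum_{u\sim_I w,\ u\notin S'} c_{uw}^I(p_{s'}-p_u) = -\sum_{e\in\partial_I w}(b_{ss'}^T L_K^+ b_e)\,c_e^K =: -\rho_w .$$
The middle equality uses $b_{ss'}^T L_K^+ b_e = b_e^T p$ together with the fact that, orienting each $e=\{u,w\}\in\partial_I w$ towards $w$, one has $b_e^T p = p_u - p_{s'}$ when $u\notin S'$, while the image of $e$ in $K$ is a self-loop and hence $b_e=0$ (so the term drops out) when $u\in S'$. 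The quantity $\rho_w$ is exactly the right-hand side of the proposition, and since $L_I$ has zero row sums, $\sum_{w\in S'}\rho_w=1$. In particular the demand $d:=L_I\phi$ is supported on $A$, with $d_s=1$ and $d_w=-\rho_w$.

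Next, because $d$ is supported on $A$, block elimination of $Z$ from the system $L_I\phi=d$ gives $L_J(\phi|_A)=d|_A$, i.e. $\phi|_A$ is a potential vector for the flow in $J=\texttt{Schur}(I,A)$ with demand $d|_A$; this is the standard fact that the Schur complement is the effective Laplacian on the kept set, already invoked implicitly in Proposition~\ref{prop:lin-alg-total-cond}. In $J$ the potential $\phi|_A$ is constant on $S'$, so no current flows on any $J$-edge internal to $S'$. Therefore the net current entering a vertex $w\in S'$ — which equals $-d_w=\rho_w$ — is carried entirely by the edge $\{s,w\}$, giving $\rho_w = c_{sw}^J(\phi_s-\phi_w) = c_{sw}^J(p_s-p_{s'})$ (this reads $\rho_w=0=c_{sw}^J$ correctly when $s\not\sim_J w$). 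Summing over $w\in S'$ yields $1=\sum_{w}\rho_w=(p_s-p_{s'})\,c^J(s,S')$, hence $p_s-p_{s'}=1/c^J(s,S')$, and therefore $\rho_w = c_{sw}^J/c^J(s,S')$, which is exactly the claimed identity.

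The routine parts are the block-elimination step and the bookkeeping of which currents vanish. The one place to be careful is the orientation and degenerate terms in the definition of $\rho_w$: namely that each $e\in\partial_I w$ must be oriented into $w$ — matching the sign convention of Theorem~\ref{thm:online-prob}, where the analogous expression is a probability — and that edges of $\partial_I w$ with both endpoints in $S'$ silently disappear because their image in $K$ is a self-loop. One also tacitly assumes $s$ and $S'$ lie in the same connected component of $I$, so that $L_K^+ b_{ss'}$ carries the electrical-flow interpretation; otherwise $c^J(s,S')=0$ and both sides are zero.
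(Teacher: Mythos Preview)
Your argument is correct and self-contained: you pull back the $s$--$s'$ potential from $K$ to $I$, verify that the resulting vector is harmonic off $A=\{s\}\cup S'$, use the Schur complement to pass to $J$, and read off the currents into each $w\in S'$ from the single edge $\{s,w\}$. The orientation and self-loop bookkeeping are handled carefully.

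The paper's proof takes a different, probabilistic route. It identifies both sides with the hitting probability $\Pr_s[t_w < t_{S'\setminus\{w\}}]$ in $I$: the right-hand side is this probability by Theorem~\ref{thm:edge-visits} (more precisely its special case Theorem~\ref{thm:online-prob}), since summing the $s$--$s'$ electrical flow in $K$ over edges of $\partial_I w$ gives the probability that the walk from $s$ first enters $S'$ at $w$; the left-hand side is the same probability by Theorem~\ref{thm:schur-walk-equiv}, because the induced walk on $A$ is the walk in $J$, whose first step from $s$ goes to $w$ with probability $c_{sw}^J/c^J(s,S')$. So the paper's proof is two sentences once those random-walk facts are in hand, whereas your approach trades brevity for being purely linear-algebraic and not invoking any random-walk interpretation. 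Both compute the same current decomposition; you do it directly, the paper does it via its probabilistic meaning.
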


\begin{proof}
By Theorem \ref{thm:edge-visits},

$$\sum_{e\in \partial_I w} (b_{ss'}^T L_K^+ b_e)c_e^K = \Pr_s [t_w < t_{S'\setminus \{w\}}]$$

The random walk in the above expression is done in the graph $I$. By Theorem \ref{thm:schur-walk-equiv},

$$\Pr_s [t_w < t_{Y\setminus \{w\}}] = \frac{c_{sw}^J}{c^J(s,S')}$$

Combining these equalities yields the desired result.
\end{proof}

We now characterize how conductances and effective resistances change after deleting single edges.

\begin{definition}[Leverage and nonleverage scores]
For any graph $I$ and $e\in E(I)$, let

$$\texttt{lev}_I(e) := c_e^I\texttt{Reff}_I(e)$$

and

$$\texttt{nonlev}_I(e) := 1 - c_e^I\texttt{Reff}_I(e)$$

\end{definition}

The following bounds follow immediately from the Sherman-Morrison rank one update formula (\cite{SM50}).

\begin{proposition}\label{prop:deletion-update}
Let $I$ be a graph. Consider two vertices $s,t\in V(I)$, an edge $f$, and a demand vector $d\in \mathbb{R}^{V(I)}$. Then

$$b_{st}^T L_{I\backslash f}^+ d = b_{st}^T L_I^+ d + \frac{(b_{st}^T L_I^+ b_f)(b_f^T L_I^+ d)}{r_f(1 - \texttt{lev}_I(f))}$$
\end{proposition}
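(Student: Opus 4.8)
\textbf{Proof plan for Proposition \ref{prop:deletion-update}.}

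The statement is an instance of the Sherman--Morrison formula applied to the Laplacian. The plan is to recall that deleting the edge $f = \{a,b\}$ from $I$ changes the Laplacian by a rank-one update: $L_{I\setminus f} = L_I - c_f b_f b_f^T$, since $f$ contributes exactly $c_f b_f b_f^T$ to the edge sum defining $L_I$. The Sherman--Morrison identity for the Moore--Penrose pseudoinverse, valid here because $b_f$ lies in the column space of $L_I$ (as $b_f = b_{ab}$ is a difference of indicator vectors and $I$ is connected, or at least $a,b$ lie in the same component), gives
$$L_{I\setminus f}^+ = L_I^+ + \frac{c_f (L_I^+ b_f)(b_f^T L_I^+)}{1 - c_f b_f^T L_I^+ b_f}.$$
Then I would substitute $c_f = 1/r_f$ and note that $c_f b_f^T L_I^+ b_f = c_f \texttt{Reff}_I(f) = \texttt{lev}_I(f)$, so the denominator becomes $1 - \texttt{lev}_I(f)$, and the scalar factor $c_f/(1-\texttt{lev}_I(f))$ can be rewritten as $1/(r_f(1-\texttt{lev}_I(f)))$.

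The final step is to left-multiply by $b_{st}^T$ and right-multiply by $d$:
$$b_{st}^T L_{I\setminus f}^+ d = b_{st}^T L_I^+ d + \frac{(b_{st}^T L_I^+ b_f)(b_f^T L_I^+ d)}{r_f(1 - \texttt{lev}_I(f))},$$
using symmetry of $L_I^+$ so that $b_{st}^T L_I^+ b_f = (b_f^T L_I^+ b_{st})$ and the factorization of the rank-one term into the product of the two scalars. This is exactly the claimed formula.

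The main subtlety — not really an obstacle, but the point that needs care — is the pseudoinverse version of Sherman--Morrison: the classical formula is for genuine inverses, and one must check that $b_f$ stays in the range of $L_I$ after the update (equivalently, that $I\setminus f$ has the same connected components as $I$, so that the "all-ones per component" kernel is unchanged, except in the degenerate case where $f$ is a bridge). In the bridge case $1 - \texttt{lev}_I(f) = 0$ and the formula is vacuous/undefined, matching the denominator vanishing; so one restricts to $\texttt{lev}_I(f) < 1$, i.e. $f$ not a bridge, which is implicit in the statement. Under that restriction the pseudoinverse Sherman--Morrison identity holds (this is standard; it also follows from the cited \cite{SM50} applied within the column space), and the computation is then purely mechanical.
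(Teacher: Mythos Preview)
Your proposal is correct and matches the paper's approach exactly: the paper simply states that the proposition ``follows immediately from the Sherman--Morrison rank one update formula'' and cites \cite{SM50}, without spelling out the details you have filled in. Your treatment of the pseudoinverse subtlety (restricting to $\texttt{lev}_I(f)<1$, i.e.\ $f$ not a bridge) is more careful than the paper's one-line justification.
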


\begin{proposition}\label{prop:contraction-update}
Let $I$ be a graph. Consider two vertices $s,t\in V(I)$, an edge $f$, and a demand vector $d\in \mathbb{R}^{V(I)}$. Then

$$b_{st}^T L_{I/f}^+ d = b_{st}^T L_I^+ d - \frac{(b_{st}^T L_I^+ b_f)(b_f^T L_I^+ d)}{r_f\texttt{lev}_I(f)}$$
\end{proposition}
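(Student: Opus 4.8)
The plan is to realize edge contraction as a limit of conductance increases and then apply the Sherman--Morrison rank-one update to the Laplacian pseudoinverse, exactly as one would prove Proposition \ref{prop:deletion-update}. For $\gamma > 0$ let $I_\gamma$ denote the graph obtained from $I$ by increasing the conductance of $f$ by $\gamma$ (equivalently, adding a parallel copy of $f$ of conductance $\gamma$), so that $L_{I_\gamma} = L_I + \gamma b_f b_f^T$. Since $I$ is connected, $\ker L_I = \operatorname{span}(\mathbf 1)$ and $\operatorname{range}(L_I) = \mathbf 1^\perp$; as $b_f \perp \mathbf 1$ we have $b_f \in \operatorname{range}(L_I)$, and $1 + \gamma\, b_f^T L_I^+ b_f = 1 + \gamma\,\texttt{Reff}_I(f) > 0$. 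Under these hypotheses the Sherman--Morrison formula for the Moore--Penrose pseudoinverse applies (the kernel is unchanged), giving
\[
L_{I_\gamma}^+ \;=\; L_I^+ \;-\; \frac{\gamma\, L_I^+ b_f b_f^T L_I^+}{\,1 + \gamma\, b_f^T L_I^+ b_f\,}.
\]
Using $b_f^T L_I^+ b_f = \texttt{Reff}_I(f) = r_f\, \texttt{lev}_I(f)$ and pairing with $b_{st}$ on the left and $d$ on the right yields
\[
b_{st}^T L_{I_\gamma}^+ d \;=\; b_{st}^T L_I^+ d \;-\; \frac{\gamma\,(b_{st}^T L_I^+ b_f)(b_f^T L_I^+ d)}{\,1 + \gamma\, r_f\, \texttt{lev}_I(f)\,}.
\]

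Next I would let $\gamma \to \infty$. The scalar coefficient $\gamma/(1 + \gamma\, r_f\, \texttt{lev}_I(f))$ tends to $1/(r_f\, \texttt{lev}_I(f))$, which is finite and positive since $f$ is a genuine edge and hence $\texttt{lev}_I(f) \in (0,1]$; so the right-hand side converges to exactly the claimed expression $b_{st}^T L_I^+ d - (b_{st}^T L_I^+ b_f)(b_f^T L_I^+ d)/(r_f\, \texttt{lev}_I(f))$. It then remains to identify $\lim_{\gamma\to\infty} b_{st}^T L_{I_\gamma}^+ d$ with $b_{st}^T L_{I/f}^+ d$, where on the right the two endpoints of $f$ are merged (the demands at them are summed, and $b_{st}=0$ if $\{s,t\}$ are exactly those endpoints). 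I would argue this via the electrical interpretation: the potential vector $p^{(\gamma)} = L_{I_\gamma}^+ d$ solves $L_I p^{(\gamma)} + \gamma\,(p^{(\gamma)}_a - p^{(\gamma)}_b)\, b_f = d$ with $p^{(\gamma)} \perp \mathbf 1$, so boundedness forces $p^{(\gamma)}_a - p^{(\gamma)}_b \to 0$, whence $p^{(\gamma)} \to p^\star$ with $p^\star_a = p^\star_b$; the limit $p^\star$, projected to $V(I/f)$, is precisely $L_{I/f}^+$ applied to the projected demand, and $b_{st}^T p^{(\gamma)} \to b_{st}^T p^\star$ equals the corresponding pairing in $I/f$ since once $p^\star$ is constant on $\{a,b\}$ the vector $b_{st}$ pairs the same way. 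As a sanity check, when $st = f$ the right side of the identity collapses to $0$, matching $b_{st} = 0$ in $I/f$.

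I would also remark that the sibling Proposition \ref{prop:deletion-update} comes out identically but more directly, with no limit needed: $L_{I\setminus f} = L_I - c_f b_f b_f^T$ is an honest graph Laplacian provided $f$ is not a bridge (i.e. $\texttt{lev}_I(f) < 1$, which is exactly what keeps $I\setminus f$ connected), its kernel is again $\operatorname{span}(\mathbf 1)$, and Sherman--Morrison gives $L_{I\setminus f}^+ = L_I^+ + c_f L_I^+ b_f b_f^T L_I^+/(1 - c_f b_f^T L_I^+ b_f)$; pairing with $b_{st}$ and $d$ and using $c_f\,\texttt{Reff}_I(f) = \texttt{lev}_I(f)$ together with $c_f = 1/r_f$ produces the stated formula.

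The main (and essentially only) obstacle is the last identification step: showing that the $\gamma\to\infty$ limit of the pseudoinverse quadratic form of $I_\gamma$ genuinely computes the quadratic form of the contracted graph $I/f$, rather than differing by a term supported on $b_f$. This is where one must be careful because the two pseudoinverses live on different vertex sets, and the cleanest route is the electrical-network continuity argument above; alternatively one can write both sides in terms of effective resistances by polarization and invoke the standard fact that $\texttt{Reff}_{I_\gamma} \to \texttt{Reff}_{I/f}$ entrywise.
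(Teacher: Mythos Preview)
Your proposal is correct and aligns with the paper's approach: the paper simply states that this proposition ``follows immediately from the Sherman--Morrison rank one update formula'' without further elaboration, and your limit argument (sending the conductance of $f$ to infinity and applying Sherman--Morrison at each finite stage) is the standard way to make that statement rigorous. The identification of the limit with the contracted-graph pseudoinverse via the electrical-potential argument is fine, and your observation that both sides vanish when $\{s,t\}$ coincides with the endpoints of $f$ is a good consistency check.
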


\subsection{Splitting edges}

In Propositions \ref{prop:deletion-update} and \ref{prop:contraction-update}, the dependencies on $\texttt{nonlev}_I(f)$ and $\texttt{lev}_I(f)$ are inconvenient. One can mitigate this by \emph{splitting} edges in one of two ways. When an edge has low leverage score, it should be split \emph{in series} by replacing it with two edges that have half the resistance. When an edge has high leverage score, it should be split \emph{in parallel} by replacing it with two edges that have double the resistance. For an edge $e\in E(I)$, define $(J,\{e_1,e_2\})\gets \Split(I,e)$ to be the routine that does the following:

\begin{itemize}
\item If $\texttt{lev}_I(e)\le 1/2$, let $J$ be the graph with $e$ replaced by a path of two edges $e_1$ and $e_2$, each with conductance $2c_e$.
\item If $\texttt{lev}_I(e) > 1/2$, let $J$ be the graph with $e$ replaced by two parallel edges $e_1$ and $e_2$, each with conductance $c_e/2$.
\end{itemize}

Let $(I',F')\gets \Split(I,F)$ be the graph-set pair that results from splitting all edges in $F$. For any $F\subseteq E(I)$, the distribution over graphs $I'\sim I[[F]]$ is obtained by splitting all edges in $F$ and conditioning on an arbitrary copy for each edge. The purpose of doing this is as follows:

\begin{proposition}\label{prop:split-cond}
For an arbitrary subset $F\subseteq E(I)$, let $(I',F')\gets \Split(I,F)$. For all copies $f_i$, $i\in \{1,2\}$ of edges $f\in F$ in $I'$,

$$\texttt{lev}_{I'}(f_i)\in [1/4,3/4]$$

\end{proposition}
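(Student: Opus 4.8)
The statement \ref{prop:split-cond} is an elementary computation about how leverage scores change under the two splitting operations, so the proof will be a short case analysis on whether $\texttt{lev}_I(f) \le 1/2$ or $\texttt{lev}_I(f) > 1/2$. The key point is that for a single edge $f$ with conductance $c_f$ and effective resistance $\texttt{Reff}_I(f)$, when we modify only $f$ (replacing it by two edges in series or in parallel), the effective resistance \emph{between the endpoints of $f$ through the rest of the graph} is unchanged. Write $r_{\text{rest}}$ for the effective resistance between the two endpoints $a,b$ of $f$ in $I \setminus f$; then $\texttt{Reff}_I(f) = (r_f \cdot r_{\text{rest}})/(r_f + r_{\text{rest}})$ and $\texttt{lev}_I(f) = \texttt{Reff}_I(f)/r_f = r_{\text{rest}}/(r_f + r_{\text{rest}})$. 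The plan is to re-express everything in terms of the ratio $\rho := r_{\text{rest}}/r_f$, note that $\texttt{lev}_I(f) = \rho/(1+\rho)$, and then track what happens to this ratio for each copy $f_i$ after splitting.

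\textbf{Series case ($\texttt{lev}_I(f) \le 1/2$, i.e. $\rho \le 1$).} Here $f$ is replaced by a path $f_1, f_2$ each with conductance $2c_f$, hence resistance $r_f/2$. Fix the copy $f_1$. The effective resistance between the endpoints of $f_1$ in $I' \setminus f_1$ is $r_f/2$ (the other copy $f_2$) in series with... no — it is $f_2$ (resistance $r_f/2$) in \emph{series} with $r_{\text{rest}}$, since to get from one endpoint of $f_1$ back to the other avoiding $f_1$ you must traverse $f_2$ and then the rest. So $r_{\text{rest}}' = r_f/2 + r_{\text{rest}}$, and the new leverage is $\texttt{lev}_{I'}(f_1) = r_{\text{rest}}'/(r_f/2 + r_{\text{rest}}') = (r_f/2 + r_{\text{rest}})/(r_f + r_{\text{rest}})$. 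In terms of $\rho$ this is $(1/2 + \rho)/(1 + \rho)$, which for $\rho \in [0,1]$ ranges over $[1/2, 3/4]$. This lands inside $[1/4, 3/4]$, and by symmetry the same holds for $f_2$.

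\textbf{Parallel case ($\texttt{lev}_I(f) > 1/2$, i.e. $\rho > 1$).} Here $f$ is replaced by two parallel edges $f_1, f_2$ each with conductance $c_f/2$, hence resistance $2r_f$. Fix $f_1$. The effective resistance between its endpoints in $I' \setminus f_1$ is now $f_2$ (resistance $2r_f$) in \emph{parallel} with $r_{\text{rest}}$: $r_{\text{rest}}' = (2r_f \cdot r_{\text{rest}})/(2r_f + r_{\text{rest}})$. Then $\texttt{lev}_{I'}(f_1) = r_{\text{rest}}'/(2r_f + r_{\text{rest}}')$. Substituting $r_{\text{rest}} = \rho r_f$ and simplifying gives $\texttt{lev}_{I'}(f_1) = \rho/(1 + 2\rho)$, which for $\rho > 1$ lies in $(1/3, 1/2)$, again inside $[1/4, 3/4]$. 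By symmetry this also holds for $f_2$. Finally, observe that splitting the edges of $F$ one at a time is legitimate: splitting one edge $f$ does not change the leverage score of any other edge $g$ by more than... actually, it may change it, but the bound we need only concerns the copies of edges \emph{in $F$} after \emph{all} splits are performed, and the computation above applies verbatim to the final graph $I'$ with "$r_{\text{rest}}$" reinterpreted as the effective resistance between the endpoints of $f_i$ in $I' \setminus f_i$ — the case analysis only used the local structure at $f_i$ (one sibling copy plus an arbitrary rest-of-graph resistance $r_{\text{rest}} \ge 0$), so no independence between splits is needed.

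\textbf{Main obstacle.} There is no serious obstacle; the one thing to be careful about is the bookkeeping of which effective resistance is "in series" versus "in parallel" with the sibling copy, and confirming that the sibling's resistance ($r_f/2$ in the series case, $2r_f$ in the parallel case) combines with an \emph{arbitrary} $r_{\text{rest}} \in [0,\infty]$ to keep the ratio in range — including the degenerate endpoints $r_{\text{rest}} = 0$ (where the series-case leverage is $1/2$ and parallel-case leverage is $0$... wait, $\rho = 0$ is excluded in the parallel case since $\rho > 1$ there) and $r_{\text{rest}} = \infty$ (series-case leverage $\to 1$, excluded since $\rho \le 1$; parallel-case leverage $\to 1/2$). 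A clean way to present this is to state the two algebraic identities $\texttt{lev}_{I'}(f_i) = (1/2 + \rho)/(1 + \rho)$ (series) and $\texttt{lev}_{I'}(f_i) = \rho/(1+2\rho)$ (parallel) as one-line consequences of Rayleigh monotonicity / series-parallel reduction, then check the two monotone intervals, which takes two sentences each.
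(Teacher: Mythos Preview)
Your approach is the same as the paper's: a direct case split on whether $\texttt{lev}_I(f)\le 1/2$ (series) or $>1/2$ (parallel), computing the new leverage via series--parallel reduction. One arithmetic slip: in the parallel case the correct simplification is $\texttt{lev}_{I'}(f_1)=\rho/(2+2\rho)$, not $\rho/(1+2\rho)$ --- more directly, a parallel split leaves $\texttt{Reff}$ between the endpoints unchanged while doubling the edge resistance, so $\texttt{lev}_{I'}(f_i)=\texttt{lev}_I(f)/2\in[1/4,1/2]$. This does not affect your conclusion. For the multi-edge reduction, the paper's one-line version of your final paragraph is that splitting any other edge $e\ne f$ (in either direction) yields an electrically equivalent network from the viewpoint of the original vertices, so $\texttt{lev}(f)$ is unchanged and one may assume $|F|=1$.
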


\begin{proof}
Splitting an edge $e\ne f\in F$ does not change the leverage score of $f$, so it suffices to show the desired proposition when $F = \{f\}$. When $\texttt{lev}_I(f)\ge \frac{1}{2}$, splitting it into two parallel copies does not change its effective resistance but doubles its resistance. Therefore, $\frac{1}{4}\le \texttt{lev}_{I'}(f_i)\le \frac{1}{2}$. When $\texttt{lev}_I(e)\le \frac{1}{2}$, subdividing $f$ to a copy $f_i$ makes

$$\texttt{lev}_{I'}(f_i) = \frac{1}{2}\texttt{lev}_I(f) + \frac{1}{2}$$

Since $0\le \texttt{lev}_I(f)\le \frac{1}{2}$, the desired inequality follows.
\end{proof}

Furthermore, doing this does not change the random spanning tree distribution:

\begin{proposition}\label{prop:split-sample}
Sampling a random spanning tree $T\sim J$ for $(J,\{e_1,e_2\})\gets \Split(I,e)$ is equivalent to doing the following:

\begin{itemize}
\item Sample a random tree $T\sim I$.
\item If $\texttt{lev}_I(e)\le 1/2$, do the following:
    \begin{itemize}
    \item If $e\in T$, return a tree $T'$ obtained by replacing $e$ with $e_1$ and $e_2$.
    \item If $e\notin T$, return a tree $T'$ obtained by adding $e_1$ with probability $1/2$ and $e_2$ otherwise.
    \end{itemize}
\item If $\texttt{lev}_I(e)\ge 1/2$, do the following:
    \begin{itemize}
    \item If $e\in T$, return a tree $T'$ obtained by replacing $e$ with $e_1$ with probability $1/2$ and $e_2$ otherwise.
    \item If $e\notin T$, return $T'\gets T$.
    \end{itemize}
\end{itemize}
\end{proposition}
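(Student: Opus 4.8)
The plan is to argue directly from the definition of the weighted spanning tree measure. Write $w_I(T) := \prod_{f \in E(T)} c_f^I$ for a spanning tree $T$ of a graph $I$ and $Z_I := \sum_T w_I(T)$, so that $\Pr_{T \sim I}[T = T_0] = w_I(T_0)/Z_I$ and likewise for $J$. I would handle the two cases of $\Split(I,e)$ separately; in each case I exhibit an explicit correspondence between spanning trees of $J$ and spanning trees of $I$, track how weights transform, compute $Z_J$ in terms of $Z_I$, and then read off that the resulting conditional rules coincide with the stated procedure. Throughout, write $e = \{a,b\}$.

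First consider the series case $\texttt{lev}_I(e) \le 1/2$: $J$ has an extra vertex $w$, with $e$ replaced by the path $a - w - b$ along edges $e_1,e_2$ of conductance $2c_e$. Since $\deg_J(w) = 2$, every spanning tree of $J$ contains $e_1$ or $e_2$. I would split spanning trees of $J$ into (i) those containing both $e_1$ and $e_2$, which are exactly the trees gotten from a spanning tree $T$ of $I$ with $e \in E(T)$ by replacing $e$ by $e_1,e_2$; and (ii) those containing exactly one of $e_1,e_2$, which are exactly the trees gotten from a spanning tree $S$ of $I$ with $e \notin E(S)$ by attaching $w$ to $a$ via $e_1$ or to $b$ via $e_2$. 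A type-(i) tree over $T$ has $J$-weight $(2c_e)^2 \cdot w_I(T)/c_e = 4c_e\, w_I(T)$, and each of the two type-(ii) trees over $S$ has $J$-weight $2c_e\, w_I(S)$; summing over all spanning trees of $I$ gives $Z_J = 4c_e Z_I$. Dividing, the type-(i) tree over $T$ has $J$-probability $w_I(T)/Z_I$, and each type-(ii) copy over $S$ has $J$-probability $\tfrac12 w_I(S)/Z_I$ — precisely the two sub-cases of the claimed procedure ($e \in T$: replace $e$ by $e_1,e_2$; $e \notin T$: add $e_1$ or $e_2$ with probability $\tfrac12$ each).

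The parallel case $\texttt{lev}_I(e) > 1/2$ is analogous and slightly simpler since $V(J) = V(I)$: $e$ is replaced by parallel edges $e_1,e_2$ of conductance $c_e/2$. No spanning tree of $J$ contains both $e_1$ and $e_2$ (they form a cycle), so the spanning trees of $J$ are exactly (i) for each spanning tree $T$ of $I$ with $e \in E(T)$, the two trees replacing $e$ by $e_1$ or by $e_2$, each of $J$-weight $(c_e/2)\cdot w_I(T)/c_e = \tfrac12 w_I(T)$; and (ii) for each spanning tree $S$ of $I$ with $e \notin E(S)$, the tree $S$ itself, of $J$-weight $w_I(S)$. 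Hence $Z_J = Z_I$, so a type-(i) copy over $T$ has $J$-probability $\tfrac12 w_I(T)/Z_I$ and a type-(ii) tree over $S$ has $J$-probability $w_I(S)/Z_I$, matching the stated rule (split into $e_1$ or $e_2$ with probability $\tfrac12$ each when $e \in T$; return the tree unchanged when $e \notin T$).

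I do not anticipate a real obstacle; the content is combinatorial bookkeeping. The one point needing care is the series case, where introducing $w$ changes the vertex and edge counts: one must check that the listed families really are all spanning trees of $J$, each counted once — in particular the degree-$2$ argument ruling out trees avoiding both $e_1$ and $e_2$, and the fact that contracting the path $e_1e_2$ in a type-(i) tree yields a genuine spanning tree of $I$ (no repeated or missing edge). A shortcut worth noting is that in both cases the two new edges have combined conductance $c_e$ (harmonic sum $\tfrac{1}{2c_e} + \tfrac{1}{2c_e} = \tfrac{1}{c_e}$ in series, ordinary sum $\tfrac{c_e}{2} + \tfrac{c_e}{2} = c_e$ in parallel), so $J$ is electrically equivalent to $I$; but I would still prefer the self-contained weight computation above over invoking a black-box fact about series/parallel reductions of the uniform spanning tree distribution.
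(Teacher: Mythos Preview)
Your proposal is correct and follows essentially the same approach as the paper's proof: in each of the two cases you set up the bijection between spanning trees of $J$ and (decorated) spanning trees of $I$, compute the weight ratios, and observe they are proportional. The only cosmetic difference is that you explicitly compute the normalizing constants $Z_J = 4c_e Z_I$ and $Z_J = Z_I$, whereas the paper simply notes that the weights from the two procedures differ by the constant factor $4c_e^I$ (series) or $1$ (parallel) and concludes proportionality directly.
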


\begin{proof}
Consider each of the two cases separately:

\textbf{Leverage score below 1/2.} In this case, we need to show that (a) splitting $e$ in series into edges $e_1$ and $e_2$ and sampling a tree from the resulting graph $J$ is equivalent to (b) sampling a tree from $I$, splitting $e$ in series if $e$ is in the tree, and adding one of $e_1,e_2$ to the tree otherwise with probability $1/2$ for each. It suffices to show that each spanning tree $T$ of $J$ has the same probability of being generated through Procedures (a) and (b). Suppose first that $e_1,e_2\in E(T)$. $T$ is generated with probability proportional to $c_{e_1}^Jc_{e_2}^J\prod_{f\ne e_1,e_2\in E(T)} c_f^J = 4(c_e^I)^2\prod_{f\ne e_1,e_2\in E(T)} c_f^I$ using Procedure (a) and with probability proportional to $c_e^I\prod_{f\ne e_1,e_2\in E(T)} c_f^I$ using Procedure (b). Now, suppose that one of $e_1,e_2$ is not in $T$ and that $e_1\in E(T)$ without loss of generality. Procedure (a) generates $T$ with probability proportional to $(c_{e_1}^J)\prod_{f\ne e_1,e_2\in E(T)} c_f^J = 2c_e^I\prod_{f\ne e_1,e_2\in E(T)} c_f^I$. Procedure (b) generates $T$ with probability proportional to $(1/2)\prod_{f\ne e_1,e_2\in E(T)} c_f^I$. In both cases, Procedure (a) generates trees with weight $4c_e^I$ times the weight that Procedure (b) uses. Since the weights used for the Procedure (a) and (b) are proportional, Procedures (a) and (b) generate trees from the same distribution.

\textbf{Leverage score above 1/2.} We need to show that (c) splitting $e$ in parallel into edges $e_1$ and $e_2$ and sampling from the resulting graph $J$ is equivalent to (d) sampling a tree from $I$ and replacing $e$ with $e_1$ or $e_2$ with probability $1/2$ for each if $e$ is in the tree. First, consider the case in which one of $e_1,e_2$ is in $T$. Without loss of generality, suppose that $e_1\in E(T)$. $T$ comes from Procedure (c) with probability proportional to $c_{e_1}^J\prod_{f\ne e_1,e_2\in E(T)} c_f^J = (c_e^I/2)\prod_{f\ne e_1,e_2\in E(T)} c_f^I$. $T$ comes from Procedure (d) with probability proportional to $(1/2) c_e^I\prod_{f\ne e_1,e_2\in E(T)} c_f^I$. Now, suppose that $e_1,e_2\notin E(T)$. In both procedures, $T$ is generated with probability proportional to $\prod_{f\in E(T)} c_f^I$. The weights for Procedures (c) and (d) are equal, so Procedures (c) and (d) generate trees from the same distribution, as desired.
\end{proof}

We use the above proposition to show that computing a graph from the distribution $H[F]$ (the second algorithm below) is equivalent to computing a graph using a more incremental strategy that is particularly amenable to analysis using rank 1 updates (the first algorithm below):

\begin{proposition}\label{prop:precomp-partial}
Consider a graph $H$ and a set $F\subseteq E(H)$. Consider any algorithm of the following form:

\begin{itemize}
\item Initialize $H_0\gets H$ and $k \gets 0$
\item While $F$ is not empty
    \begin{itemize}
    \item Pick a random edge $f_k\sim \mc D_k$, where $\mc D_k$ is an arbitrary distribution over $F$ that only depends on the contractions/deletions of the previous edges $f_0,f_1,\hdots,f_{k-1}$
    \item Let $H_{k+1}\sim H_k[[f_k]]$
    \item Remove $f_k$ from $F$ if a self-loop/leaf is created; otherwise replace $f_k$ in $F$ with the remaining copy
    \item Increment $k$
    \end{itemize}
\end{itemize}

Then $H_k$ is equivalent in distribution to $H_k'$ in the following algorithm for all $k\ge 0$, which only requires sampling the intersection of $T_0$ with $F$, not all of $T_0$:

\begin{itemize}
\item Sample $T_0\sim H$, $H_0'\gets H$, and set $k \gets 0$
\item While $F$ is not empty
    \begin{itemize}
    \item Pick a random edge $f_k\sim \mc D_k$ and let $(J,\{f^{(0)},f^{(1)}\})\gets \Split(H_k',f_k)$.
    \item Let $T_{k+1}$ be the random spanning tree of $J$ obtained by setting $T\gets T_k$ and $e\gets f_k$ in Proposition \ref{prop:split-sample}
    \item Let $H_{k+1}'$ be the subgraph of $J$ with $f^{(0)}$ contracted if $f^{(0)}\in E(T_{k+1})$ and $f^{(0)}$ deleted otherwise
    \item Increment $k$
    \end{itemize}
\end{itemize}

\end{proposition}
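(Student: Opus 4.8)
The plan is to prove the equality of distributions by induction on $k$, coupling the two algorithms and propagating a uniform-spanning-tree invariant for $T_k$. Present the randomness of \emph{both} algorithms as a sequence $(f_0,\delta_0,f_1,\delta_1,\dots)$, where $f_j\sim\mc D_j$ is the edge split at step $j$ and $\delta_j\in\{\text{contract},\text{delete}\}$ records how the conditioned copy $f^{(0)}$ of $f_j$ ends up (with the ``remove from $F$'' bookkeeping determined by $\delta_j$ and $f_j$). The graph $H_k$ (resp.\ $H_k'$) is a deterministic function of $(f_0,\delta_0,\dots,f_{k-1},\delta_{k-1})$, and by hypothesis $\mc D_k$ depends only on this same prefix; so it suffices to show by induction that the joint law of $(f_0,\delta_0,\dots,f_{k-1},\delta_{k-1})$ is the same in both algorithms. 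I will strengthen this to: (i) that joint law agrees, and (ii) conditionally on the prefix, $T_k^{\downarrow}$ is a uniform random spanning tree of $H_k'$, where $T_k^{\downarrow}$ denotes $T_k$ reinterpreted as a spanning tree of $H_k'$ by contracting the conditioned copies that lie in $T_k$ and deleting the others.

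The base case $k=0$ is immediate: no choices have been made, $H_0=H=H_0'$, and $T_0^{\downarrow}=T_0\sim H$ by construction. For the inductive step, fix a value of the common prefix (hence $H_k=H_k'$ and the distribution $\mc D_k$), and sample $f_k\sim\mc D_k$. Apply Proposition~\ref{prop:split-sample} with $I\gets H_k'$ and $e\gets f_k$: since $T_k^{\downarrow}$ is, by the inductive hypothesis, a uniform random spanning tree of $H_k'$, the tree $T_{k+1}$ produced by the coin-flip rule of Proposition~\ref{prop:split-sample} is a uniform random spanning tree of $J=\Split(H_k',f_k)$. Now invoke Theorem~\ref{thm:conditioning} for the graph $J$ and the single-edge set $\{f^{(0)}\}$: for every spanning tree $T_\star$ and $F_0:=E(T_\star)\cap\{f^{(0)}\}$,
$$\Pr_{T\sim J}\big[\,T=T_\star \ \big|\ E(T)\cap\{f^{(0)}\}=F_0\,\big]=\Pr_{T'\sim H_{k+1}'}\big[\,T'=T_\star/F_0\,\big].$$
Marginalizing over $T_\star$, the left side is exactly the law of $\delta_k$ and of $H_{k+1}'$ as produced by algorithm~2, and by the definition of the double-bracket operation this is precisely $H_k'[[f_k]]$, matching the transition $H_{k+1}\sim H_k[[f_k]]$ of algorithm~1; reading the identity for a fixed value of $F_0$ (hence fixed $H_{k+1}'$), it says that $T_{k+1}^{\downarrow}$ is uniform on $H_{k+1}'$. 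Combined with the inductive hypothesis $H_k\overset{d}{=}H_k'$ and the fact that $\mc D_{k}$ is a function of the common prefix, this completes the induction and yields $H_k\overset{d}{=}H_k'$ for all $k\ge 0$.

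Finally, to justify the claim that algorithm~2 needs only $T_0\cap F$: every edge ever split or conditioned is a descendant (under $\Split$) of an original edge of $F$, and the coin-flip rule of Proposition~\ref{prop:split-sample} alters $T_k$ only on the copies of the edge $f_k$ currently processed; hence $T_k$ agrees with $T_0$ on all edges outside the $F$-descendants, so the algorithm never consults $T_0$ there.

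I expect the main obstacle to be the bookkeeping around the reinterpretation $T_k\mapsto T_k^{\downarrow}$ together with the ``remove from $F$'' cases, and making the coupling genuinely well-defined given that $\mc D_k$ may depend adaptively on everything done so far: one must check that reinterpreting a tree of $\Split(H_k',f_k)$ as a tree of $H_{k+1}'$ (via contraction or deletion of $f^{(0)}$) commutes with the split of $f_{k+1}$ at the next step, and that ``contractions/deletions of the previous edges'' is exactly the information $(f_0,\delta_0,\dots,f_{k-1},\delta_{k-1})$ shared by the two processes, so that no circularity creeps in. All the probabilistic content is already packaged in Proposition~\ref{prop:split-sample} and Theorem~\ref{thm:conditioning}; the remaining work is routine but must be written out carefully.
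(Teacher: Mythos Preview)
Your proposal is correct and follows the same inductive strategy as the paper, using Proposition~\ref{prop:split-sample} to pass from a uniform tree in $H_k'$ to a uniform tree in $\Split(H_k',f_k)$. You are in fact more careful than the paper: the paper's proof only carries the hypothesis $H_{k-1}\overset{d}{=}H_{k-1}'$ and then asserts that ``these replacements \dots\ yield the second algorithm,'' which tacitly requires that $T_{k-1}$ reinterpreted in $H_{k-1}'$ is uniform---precisely your invariant~(ii), which you state and propagate explicitly (and which your appeal to Theorem~\ref{thm:conditioning} cleanly verifies).
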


\begin{proof}
We prove this fact by induction on $k$. For $k = 0$, clearly $H_0 = H_0'$. $H_k$ is obtained from $H_{k-1}$ by sampling the intersection of a random spanning tree in the graph $J$ obtained by doing $(J,\{f^{(0)},f^{(1)}\})\gets \Split(H_{k-1},f_{k-1})$. By Proposition \ref{prop:split-sample}, this is equivalent to sampling a random tree in $H_{k-1}$ and splitting it randomly as described in Proposition \ref{prop:split-sample}.

By the inductive hypothesis, $H_{k-1}$ and $H_{k-1}'$ have the same distribution. Furthermore, for $k > 0$, notice that $f_{k-1}$ is obtained from the same distribution $\mc D_{k-1}$. These replacements, when coupled with the previous paragraph, yield the second algorithm. Therefore, $H_k$ and $H_k'$ are equivalent in distribution, completing the induction step.
\end{proof}

\begin{proposition}\label{prop:split-partial}
Consider a graph $H$ and a set $F\subseteq E(H)$. Sampling $H'\sim H[F]$ is equivalent to any strategy of the following form:

\begin{itemize}
\item Initialize $H_0\gets H$ and $k \gets 0$
\item While $F$ is not empty
    \begin{itemize}
    \item Pick a random edge $f_k\sim \mc D_k$, where $\mc D_k$ is an arbitrary distribution over $F$ that only depends on the contractions/deletions of the previous edges $f_0,f_1,\hdots,f_{k-1}$
    \item Let $H_{k+1}\sim H_k[[f_k]]$
    \item Remove $f_k$ from $F$ if a self-loop/leaf is created; otherwise replace $f_k$ in $F$ with the remaining copy
    \item Increment $k$
    \end{itemize}
\end{itemize}

\end{proposition}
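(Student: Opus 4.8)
The plan is to prove Proposition~\ref{prop:split-partial} by reducing it to Proposition~\ref{prop:precomp-partial}, which we may assume. The key observation is that the two displayed algorithms in Proposition~\ref{prop:precomp-partial} produce the same distribution over $H_k$ for every $k$, and in particular the first of those two algorithms is \emph{syntactically identical} to the algorithm in the statement of Proposition~\ref{prop:split-partial}. So it suffices to show that running that algorithm until $F$ becomes empty produces a graph distributed as $H[F]$ (equivalently, as the terminal graph $H_k'$ of the second algorithm of Proposition~\ref{prop:precomp-partial}).

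First I would argue that the loop in the second algorithm of Proposition~\ref{prop:precomp-partial} terminates and that its terminal graph $H_k'$ is exactly a sample from $H[F]$. For termination: each iteration picks a copy $f^{(0)}$ of a current edge of $F$, splits $f_k$ into $f^{(0)},f^{(1)}$ via $\Split$, and then either contracts or deletes $f^{(0)}$; an edge is removed from $F$ when a self-loop or leaf is created, and otherwise $f_k$ is replaced in $F$ by the single remaining copy $f^{(1)}$ --- so $|F|$ is non-increasing and, since repeatedly splitting and resolving a fixed edge must eventually contract or delete it (the resistance halves or doubles each time but the eventual in/out decision is forced once the edge becomes a leaf or self-loop after all other $F$-edges are resolved), $F$ empties after finitely many steps. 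For correctness of the terminal graph: maintain the invariant that $T_k$ is a uniformly random spanning tree of the current graph $H_k'$ together with its already-resolved split edges, which holds at $k=0$ by the choice $T_0 \sim H$ and is preserved at each step by Proposition~\ref{prop:split-sample} (the split-and-resample rule for $T_k \to T_{k+1}$ is precisely the rule that keeps $T$ uniform on $J = \Split(H_k',f_k)$). Once $F$ is empty, every edge of the original $F$ has been contracted (if it landed in $T_0$ after all the series/parallel bookkeeping) or deleted (if it did not), which by Theorem~\ref{thm:conditioning} --- or directly by the definition of $H[F]$ --- means the terminal $H_k'$ is distributed as $H[F]$.

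Then, since Proposition~\ref{prop:precomp-partial} gives $H_k \stackrel{d}{=} H_k'$ for all $k\ge 0$, and both loops use the same family of choice distributions $\mc D_k$ and the same termination condition ($F$ empty), the terminal graph of the first algorithm of Proposition~\ref{prop:precomp-partial} --- which is verbatim the algorithm in the statement of Proposition~\ref{prop:split-partial} --- has the same distribution as the terminal $H_k'$, namely $H[F]$. This completes the proof. One small technical point worth spelling out: Proposition~\ref{prop:precomp-partial} is stated for a fixed $k$, so to transfer the conclusion to the \emph{terminal} index one should note that the number of iterations is itself a stopping time determined entirely by the (shared) sequence of contractions/deletions, so the equality in distribution at the random terminal time follows from the equality at each fixed $k$ by conditioning on the common termination time.

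The main obstacle, and the only place any real care is needed, is the termination/finiteness argument: one must check that splitting an edge in series or in parallel and then resolving copies cannot loop forever, i.e. that each edge of $F$ is eventually contracted or deleted rather than repeatedly split. This is where the $\Split$ rule's dependence on the leverage score matters --- a series split lowers resistance toward $0$ and a parallel split raises it toward $\infty$, but in either case, once all \emph{other} $F$-edges have been resolved, the remaining single copy of the edge is either a bridge (forced into $T$, hence contracted) or a self-loop (deleted), so $|F|$ strictly decreases within a bounded number of steps. Everything else is routine once Propositions~\ref{prop:split-sample} and \ref{prop:precomp-partial} are in hand.
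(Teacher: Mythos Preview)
Your reduction to Proposition~\ref{prop:precomp-partial} is exactly the paper's approach: the paper's proof is just two sentences asserting that the algorithm is distributionally equivalent to the second algorithm in Proposition~\ref{prop:precomp-partial}, whose terminal output is $H[F]$. Your stopping-time remark for transferring the per-$k$ equivalence to the terminal index is a nice touch the paper leaves implicit.

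The one genuine flaw is your termination argument. The claim that ``once all other $F$-edges have been resolved, the remaining single copy of the edge is either a bridge or a self-loop'' is false: take $F$ to consist of a single non-bridge, non-self-loop edge in $H$. Your resistance-drift reasoning does not rescue this either, because the split direction is chosen according to the current leverage score, so the resistance can oscillate (series-split-then-contract halves it, parallel-split-then-delete doubles it) rather than march monotonically to $0$ or $\infty$. The correct termination argument is probabilistic: by Proposition~\ref{prop:split-cond}, after splitting, the copy $f^{(0)}$ has leverage score in $[1/4,3/4]$, so in a series split it is deleted (creating a leaf) with probability at least $1/4$, and in a parallel split it is contracted (creating a self-loop) with probability at least $1/4$. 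Hence each iteration removes $f_k$ from $F$ with probability at least $1/4$, and since $|F|$ is non-increasing the loop terminates almost surely. The paper does not spell this out, but if you are going to address termination you should use this argument rather than the incorrect bridge/self-loop one.
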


\begin{proof}
By Proposition \ref{prop:precomp-partial}, this algorithm is the same as the second algorithm in Proposition \ref{prop:precomp-partial}. The final output of the second algorithm given in Proposition \ref{prop:precomp-partial} is equivalent to sampling from $H[F]$ in distribution. Therefore, the algorithm given in Proposition \ref{prop:split-partial} is equivalent to sampling from $H[F]$ in distribution, as desired.
\end{proof}

\subsection{Concentration inequalities}

The following is relevant for the slow version of Lemma \ref{lem:fast-fix}:

\begin{theorem}[Theorem 16 of \cite{CL06}]\label{thm:azuma}
Let $X$ be the martingale satisfying $|X_{i+1} - X_i|\le c_i$ for all $i$ specifying martingale increments. Then

$$\Pr[|X_n - \textbf{E}[X_n]|\ge \lambda]\le 2e^{-\frac{\lambda^2}{2\sum_{i=1}^n c_i^2}}$$

where $n$ is the total number of martingale increments.
\end{theorem}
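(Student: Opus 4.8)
The plan is to prove this by the exponential moment method --- the standard martingale Chernoff (Azuma--Hoeffding) argument. Write $Y_i := X_i - X_{i-1}$ for the martingale increments and let $\mc F_{i}$ be the $\sigma$-algebra generated by the first $i$ steps, so that $\textbf{E}[Y_i \mid \mc F_{i-1}] = 0$ and $|Y_i| \le c_i$. Replacing $X$ by $X - \textbf{E}[X_n]$, we may assume $\textbf{E}[X_n] = 0$, and then $X_n = \sum_{i=1}^n Y_i$. It suffices to prove the one-sided bound $\Pr[X_n \ge \lambda] \le e^{-\lambda^2/(2\sum_{i=1}^n c_i^2)}$; the full statement then follows by applying this to the martingale $-X$ (whose increments $-Y_i$ satisfy the same bounds $|-Y_i| \le c_i$) and taking a union bound over the two one-sided events, which produces the factor $2$.

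For the one-sided bound, fix $t > 0$ and apply Markov's inequality to $e^{tX_n}$, reducing the task to bounding $\textbf{E}\big[e^{t\sum_{i=1}^n Y_i}\big]$. Peeling off the last increment and conditioning on $\mc F_{n-1}$,
\[
\textbf{E}\Big[e^{t\sum_{i=1}^n Y_i}\Big] = \textbf{E}\Big[e^{t\sum_{i=1}^{n-1} Y_i}\,\textbf{E}\big[e^{tY_n}\mid \mc F_{n-1}\big]\Big].
\]
The key step is a conditional Hoeffding bound: since $|Y_n|\le c_n$ and $e^{tz}$ is convex on $[-c_n,c_n]$, we have pointwise $e^{tY_n} \le \frac{c_n+Y_n}{2c_n}e^{tc_n} + \frac{c_n-Y_n}{2c_n}e^{-tc_n}$, and taking conditional expectations annihilates the $Y_n$ term because $\textbf{E}[Y_n\mid\mc F_{n-1}]=0$, leaving $\textbf{E}[e^{tY_n}\mid\mc F_{n-1}] \le \cosh(tc_n) \le e^{t^2 c_n^2/2}$. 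Substituting this and iterating over $i = n, n-1, \dots, 1$ gives $\textbf{E}[e^{tX_n}] \le e^{t^2\sum_{i=1}^n c_i^2/2}$, hence $\Pr[X_n\ge\lambda] \le e^{-t\lambda + t^2\sum_{i=1}^n c_i^2/2}$ for every $t>0$; choosing $t = \lambda/\sum_{i=1}^n c_i^2$ yields the claim.

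I do not expect a genuine obstacle here; the argument is entirely standard. The only point requiring a little care is the conditional Hoeffding step: the increment bounds $c_i$ should be treated as the deterministic constants the statement provides, so that the pointwise convexity inequality and the subsequent conditioning go through verbatim. (If one instead wanted $c_i$ to be $\mc F_{i-1}$-measurable random variables, the same iteration works but the final optimization would need $\sum_i c_i^2$ replaced by an almost-sure upper bound --- not needed for the statement as given.)
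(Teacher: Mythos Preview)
Your proof is correct and is the standard exponential-moment (Azuma--Hoeffding) argument. Note, however, that the paper does not supply its own proof of this statement: it is quoted verbatim as Theorem~16 of \cite{CL06} and used as a black box, so there is nothing in the paper to compare against. Your write-up is exactly the textbook proof one would expect to find in that reference.
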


To speed up the algorithm and prove Lemma \ref{lem:fast-fix}, we exploit the following additional concentration inequality:

\begin{theorem}[Theorem 18 in \cite{CL06}]\label{thm:martingale-2}
Let $X$ be a martingale satisfying:

\begin{itemize}
\item $\text{Var}(X_i|X_{i-1}) \le \sigma_i^2$ for all $i$.
\item $|X_i - X_{i-1}| \le M$ for all $i$.
\end{itemize}

Then

$$\Pr[|X_n - \textbf{E}[X_n]| \ge \lambda] \le e^{-\frac{\lambda^2}{2(\sum_{i=1}^n \sigma_i^2 + M\lambda / 3)}}$$
\end{theorem}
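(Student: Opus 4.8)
The plan is to prove this by the standard exponential-moment (Chernoff--Bernstein) method for martingales. Write $Y_i := X_i - X_{i-1}$ for the martingale differences relative to a filtration $\mc F_i$, so that $\textbf{E}[Y_i \mid \mc F_{i-1}] = 0$, $\textbf{E}[Y_i^2 \mid \mc F_{i-1}]\le \sigma_i^2$, and $|Y_i|\le M$ almost surely. After shifting we may assume $\textbf{E}[X_n] = X_0 = 0$, so $X_n = \sum_{i=1}^n Y_i$. Fix a parameter $t>0$ to be optimized later; the goal is to bound $\textbf{E}[e^{tX_n}]$ and then apply Markov's inequality.

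The key step is a per-increment exponential moment bound: for any mean-zero $Y$ with $|Y|\le M$ and $\textbf{E}[Y^2]\le \sigma^2$, one has $\textbf{E}[e^{tY}]\le \exp\!\big(\sigma^2(e^{tM}-1-tM)/M^2\big)$. This follows from the pointwise inequality $e^{ty}\le 1 + ty + \frac{e^{tM}-1-tM}{M^2}\,y^2$, valid for $y\in[-M,M]$ because $y\mapsto (e^{ty}-1-ty)/y^2$ is nondecreasing, together with $\textbf{E}[Y]=0$, $\textbf{E}[Y^2]\le\sigma^2$, and $1+u\le e^u$. Applying this conditionally on $\mc F_{i-1}$ and peeling off one factor at a time via the tower property gives, by induction on $i$,
$$\textbf{E}[e^{tX_n}] \le \exp\!\left(\frac{e^{tM}-1-tM}{M^2}\sum_{i=1}^n \sigma_i^2\right).$$

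To extract the stated form, use the elementary inequality $e^x - 1 - x \le \frac{x^2/2}{1 - x/3}$ for $0\le x < 3$ (which follows from $e^x-1-x=\sum_{k\ge 2}x^k/k!$ and $k!\ge 2\cdot 3^{k-2}$). Writing $\sigma^2 := \sum_{i=1}^n \sigma_i^2$, this yields $\textbf{E}[e^{tX_n}]\le \exp\!\big(\tfrac{t^2\sigma^2/2}{1-tM/3}\big)$ for $0\le tM<3$. Markov's inequality gives $\Pr[X_n\ge\lambda]\le \exp\!\big(-t\lambda + \tfrac{t^2\sigma^2/2}{1-tM/3}\big)$, and choosing the standard Bernstein value $t = \lambda/(\sigma^2 + M\lambda/3)$ — which one checks satisfies $tM<3$ and makes $1-tM/3 = \sigma^2/(\sigma^2+M\lambda/3)$ — collapses the exponent to $-\tfrac{\lambda^2}{2(\sigma^2+M\lambda/3)}$. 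Running the same argument on the martingale $-X_n$ handles the lower tail and produces the two-sided bound.

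I expect the only real friction to be in verifying the two elementary analytic inequalities (monotonicity of $(e^{ty}-1-ty)/y^2$ on $[-M,M]$, and $e^x-1-x\le (x^2/2)/(1-x/3)$ on $[0,3)$) and in confirming that the optimizing $t$ lies in the admissible range; the martingale bookkeeping via the tower property is routine, and no structural insight beyond the classical Freedman-type argument is needed.
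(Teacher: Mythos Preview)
Your proof is correct and is the standard Chernoff--Bernstein argument for Freedman-type martingale tails. The paper does not give its own proof of this statement: it is quoted as a preliminary (Theorem~18 of \cite{CL06}) and used as a black box, so there is nothing in the paper to compare against beyond noting that your argument is precisely the classical one underlying the cited result.
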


\subsection{\cite{HHNRR08} bound for source-sharing demand vectors}

We use the following bound to bound changes in normalized potentials of vertices due to conditioning:

\begin{theorem}[Lemma 3.1 of \cite{HHNRR08}, with $n$ replaced by $\tau$]\label{thm:harsha}
Let $G$ be a graph and consider three vertices $s_1,s_2,t\in V(G)$. Then for any positive integer $\tau$

$$\sum_{f\in E(G)} \frac{|b_{s_1t}^T L_G^+ b_f||b_f^T L_G^+ b_{s_2t}|}{r_f}\le (8\log \tau) |b_{s_1t}^T L_G^+ b_{s_2t}| + \frac{1}{\tau} (b_{s_1t}^T L_G^+ b_{s_1t} + b_{s_2t}^T L_G^+ b_{s_2t})$$
\end{theorem}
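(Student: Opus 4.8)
\textbf{Proof proposal for Theorem (Lemma 3.1 of \cite{HHNRR08}).}

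The plan is to reduce the ``source-sharing'' bound to a single application of the energy/triangle-inequality structure of electrical flows combined with a bound on nonleverage scores, following the localization philosophy of \cite{SRS17}. First I would set up notation: let $\phi_1 = L_G^+ b_{s_1 t}$ and $\phi_2 = L_G^+ b_{s_2 t}$ be the potential embeddings of the two flows, so that for an edge $f = \{a,b\}$ we have $b_f^T L_G^+ b_{s_i t} = \phi_i(a) - \phi_i(b)$, and $1/r_f$ is the conductance $c_f$. The left-hand side is then $\sum_f c_f |(\phi_1(a)-\phi_1(b))|\,|(\phi_2(a)-\phi_2(b))|$, i.e.\ an $\ell_1$-type inner product between the two electrical flows $g_1 = C_G B_G \phi_1$ and $g_2 = C_G B_G \phi_2$ weighted by $r_f$ per coordinate. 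By Cauchy--Schwarz on each edge this is at most $\sqrt{\sum_f c_f (\phi_1(a)-\phi_1(b))^2}\sqrt{\sum_f c_f(\phi_2(a)-\phi_2(b))^2} = \sqrt{\mathcal E(g_1)}\sqrt{\mathcal E(g_2)}$, but that only gives the geometric-mean-of-energies bound $\sqrt{\texttt{Reff}(s_1,t)\texttt{Reff}(s_2,t)}$, which is too weak (it has no $\log\tau$ savings and no cross term $b_{s_1 t}^T L_G^+ b_{s_2 t}$). So the crude bound must be refined.

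The key step is a dyadic decomposition of the edge set by the size of the flow $g_1$ (equivalently by $|\phi_1(a)-\phi_1(b)|$, rescaled). For a threshold level, I would split edges into buckets where $c_f(\phi_1(a)-\phi_1(b))^2$ lies in $[2^{-k-1}\texttt{Reff}(s_1,t), 2^{-k}\texttt{Reff}(s_1,t))$; since $\sum_f c_f(\phi_1(a)-\phi_1(b))^2 = \texttt{Reff}(s_1,t)$, bucket $k$ contains at most $2^{k+1}$ edges' worth of normalized weight. On each bucket, apply Cauchy--Schwarz \emph{only within the bucket}: the contribution is at most $\sqrt{(\text{bucket-}k\text{ energy of }g_1)}\sqrt{(\text{bucket-}k\text{ energy of }g_2)}$. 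Summing a geometric-type series over $k$ from $0$ to $\log\tau$ gives the $\log\tau$ factor multiplying a quantity controlled by the overlap of the two flows, and truncating the tail $k > \log\tau$ contributes the $\frac1\tau(\texttt{Reff}(s_1,t)+\texttt{Reff}(s_2,t))$ error term via the AM--GM inequality $\sqrt{xy}\le \tfrac12(x+y)$ applied to the residual energies. The overlap term should be identified with $|b_{s_1 t}^T L_G^+ b_{s_2 t}|$ by observing that $\langle g_1, g_2\rangle_{r}$ (the signed, $r$-weighted inner product, without absolute values inside) equals exactly $b_{s_1 t}^T L_G^+ b_{s_2 t}$ — this is just $\phi_1^T L_G \phi_2 = b_{s_1 t}^T L_G^+ L_G L_G^+ b_{s_2 t}$ — and the buckets where the signed inner product could differ in sign from the absolute-value sum are precisely where one of the flows is large on few edges, handled by the tail estimate.

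I expect the main obstacle to be making the bucketing bookkeeping produce the cross term $b_{s_1 t}^T L_G^+ b_{s_2 t}$ \emph{exactly} rather than some proxy like $\sqrt{\texttt{Reff}(s_1,t)\texttt{Reff}(s_2,t)}$; the naive dyadic argument gives the latter, and one needs to exploit that the flows share the sink $t$ (so both potential functions are monotone-ish along the same ``direction'' toward $t$) to convert the geometric mean into the genuine inner product plus a controlled error. This is where the hypothesis that $s_1, s_2, t$ are not arbitrary but share $t$ is essential; I would use the fact that $b_{s_1 t} = b_{s_1 s_2} + b_{s_2 t}$ to write $\phi_1 = L_G^+ b_{s_1 s_2} + \phi_2$ and expand the left-hand side, reducing to bounding $\sum_f c_f |b_{s_1 s_2}^T L_G^+ b_f|\,|b_f^T L_G^+ b_{s_2 t}|$ — a ``two different source pairs but overlapping'' quantity — plus the diagonal term $\sum_f c_f (b_f^T L_G^+ b_{s_2 t})^2 = \texttt{Reff}(s_2,t)$, and then iterating/symmetrizing. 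I would also double-check that all appeals to Cauchy--Schwarz are on finite sums and that the $8\log\tau$ constant is obtained by being slightly generous in the geometric series (ratio $2$ gives $\sum 2^{-k/2}\cdot 2^{k/2}$-type telescoping, and the factor $8$ absorbs the $\sqrt 2$'s and the two symmetric halves $s_1\leftrightarrow s_2$).
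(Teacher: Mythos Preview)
The paper does not prove this statement: it is imported verbatim as Lemma~3.1 of \cite{HHNRR08} and used as a black box throughout Sections~\ref{sec:slow-fix} and~\ref{sec:fast-fix}. There is therefore no proof in the paper to compare your proposal against.

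As for the proposal itself: the dyadic bucketing by the energy contribution of $g_1$ is a reasonable opening move, and you correctly identify that the crude Cauchy--Schwarz bound $\sqrt{\texttt{Reff}(s_1,t)\,\texttt{Reff}(s_2,t)}$ is too weak. But the proposal does not close the main gap you yourself flag: converting the bucket-wise Cauchy--Schwarz output into the \emph{signed} cross term $|b_{s_1t}^T L_G^+ b_{s_2t}|$ rather than a geometric mean of energies. Your suggested decomposition $b_{s_1t} = b_{s_1s_2} + b_{s_2t}$ just shifts the problem to bounding $\sum_f c_f\,|b_{s_1s_2}^T L_G^+ b_f|\,|b_{s_2t}^T L_G^+ b_f|$, which is the same shape of sum with a pair of demands that no longer share an endpoint, so the ``shared sink'' hypothesis has been spent without gain. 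The actual argument in \cite{HHNRR08} exploits the shared sink more directly: both potentials $\phi_1,\phi_2$ are nonnegative with minimum at $t$, so one can slice by level sets of (say) $\phi_1$ and use that the $s_2$--$t$ flow crossing any such level set is exactly $1$, which is what makes the cross term $b_{s_1t}^T L_G^+ b_{s_2t} = \int_0^{\phi_1(s_1)} (\text{flow of }g_2\text{ across the }\phi_1=\lambda\text{ cut})\,d\lambda$ appear naturally. Your bucketing is by edge energy rather than by potential level, and that choice of coordinates is what prevents the cross term from emerging.
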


\subsection{Matrix sketching}

The $\FastFix$ algorithm needs to compute $\ell_p$ norms of vectors whose entries would naively require as many as $\Theta(m)$ Laplacian solves to compute. We show that these vector norms can be approximated with $O(\text{polylog}(n))$ Laplacian solves using matrix sketching:

\begin{theorem}[\cite{I06}, Theorem 3 stated for $\ell_p$ rather than just $\ell_1$]\label{thm:linear-sketch}
An efficiently computable, $\text{polylog}(d)$-space linear sketch exists for all $\ell_p$ norms exists for all $p\in (0,2]$. That is, given a $d\in \mathbb{Z}_{\ge 1}$, $\delta\in (0,1)$, $p\in (0,2]$, and $\epsilon\in (0,1)$, there is a matrix $C = \SketchMatrix(d,\delta,p,\epsilon)\in \mathbb{R}^{l \times d}$ and an algorithm $\RecoverNorm(s,d,\delta,p,\epsilon)$ with the following properties:

\begin{itemize}
\item (Approximation) For any vector $v\in \mathbb{R}^d$, with probability at least $1 - \delta$ over the randomness of $\SketchMatrix$, the value $r = \RecoverNorm(Cv,d,\delta,p,\epsilon)$ is as follows:

$$(1 - \epsilon) ||v||_p\le r\le (1 + \epsilon) ||v||_p$$

\item $l = c/\epsilon^2 \log(1/\delta)$ for some constant $c > 1$

\item (Runtime) $\SketchMatrix$ and $\RecoverNorm$ take $O(\ell d)$ and $\text{poly}(\ell)$ time respectively.
\end{itemize}
\end{theorem}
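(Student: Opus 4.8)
The plan is to realize $C$ as a random matrix whose entries are drawn independently from a \emph{$p$-stable distribution}, and to extract the norm from the coordinates of the sketch $Cv$ via a median estimator. Recall that a distribution $\mathcal{D}$ on $\mathbb{R}$ is $p$-stable if, for i.i.d.\ copies $X_1,\dots,X_d\sim\mathcal{D}$ and any $a\in\mathbb{R}^d$, the variable $\sum_i a_i X_i$ is distributed exactly as $\|a\|_p\cdot X$ with $X\sim\mathcal{D}$; such distributions exist for every $p\in(0,2]$ (the Gaussian for $p=2$, the Cauchy for $p=1$, and L\'evy's construction in general). So I would let $\SketchMatrix(d,\delta,p,\ep)$ output $C\in\mathbb{R}^{\ell\times d}$ with i.i.d.\ $p$-stable entries and $\ell=c\ep^{-2}\log(1/\delta)$. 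Then, by stability, the $j$-th coordinate of $Cv$ is distributed as $\|v\|_p\cdot Z_j$ where $Z_1,\dots,Z_\ell$ are i.i.d.\ $p$-stable, and since $C$ is fixed before seeing $v$ and $v\mapsto Cv$ is linear, this is a genuine linear sketch of the required dimension.

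For recovery, let $\mu_p$ denote the median of $|Z|$ for $Z\sim\mathcal{D}$ --- a constant depending only on $p$, which $\RecoverNorm$ precomputes from its arguments --- and define $\RecoverNorm(s,d,\delta,p,\ep)$ to return $\widehat m/\mu_p$, where $\widehat m$ is the empirical median of $|s_1|,\dots,|s_\ell|$. One cannot use the sample mean of the $|s_j|$ here, because for $p<2$ the $p$-stable law has infinite variance (and infinite mean for $p\le 1$), so the mean does not concentrate; the median is the natural robust substitute. To prove the approximation guarantee, set $F(t)=\Pr[|Z|\le t]$, so that $F(\mu_p)=1/2$. The density of $|Z|$ is bounded above and away from zero on a neighborhood of $\mu_p$ (this is exactly where a $p$-dependent constant enters), whence $F(\mu_p(1\pm\ep))=1/2\pm\Theta(\ep)$. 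Consequently the events ``fewer than $\ell/2$ of the $|s_j|$ fall below $\mu_p(1-\ep)$'' and ``fewer than $\ell/2$ fall above $\mu_p(1+\ep)$'' each fail with probability at most $e^{-\Omega(\ep^2\ell)}$ by a Chernoff bound, and on their intersection $\widehat m\in[\mu_p(1-\ep),\mu_p(1+\ep)]$, i.e.\ $\widehat m/\mu_p\in[(1-\ep)\|v\|_p,(1+\ep)\|v\|_p]$. Taking $c$ large enough makes the total failure probability at most $\delta$, which gives the ``Approximation'' claim together with $\ell=c\ep^{-2}\log(1/\delta)$.

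For the runtime bounds I would sample each entry of $C$ in $O(1)$ time by the Chambers--Mallows--Stuck method: for $\theta$ uniform on $(-\pi/2,\pi/2)$ and $r$ uniform on $(0,1)$, an explicit elementary function of $(\theta,r)$ has the $p$-stable law, so filling $C$ costs $O(\ell d)$; and $\RecoverNorm$ only computes a median of $\ell$ reals and one division, which is $\text{poly}(\ell)$. The main obstacle --- the one genuinely analytic point --- is the quantitative control of $F$ near its median: one needs matching upper and lower bounds on the density of the $p$-stable distribution at $\mu_p$ in order to turn stability into the stated $(1\pm\ep)$, probability-$(1-\delta)$ guarantee with only $O(\ep^{-2}\log(1/\delta))$ rows, and these bounds, while classical, have no elementary closed form for general $p$. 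A secondary technicality is that $p$-stable variates and $\mu_p$ are irrational, so one must argue that truncating to $\text{polylog}$ bits perturbs every relevant quantity by $o(\ep)$ and hence leaves the analysis intact; since $p$ is a fixed parameter in all of the paper's uses, the $p$-dependence of the hidden constants is harmless.
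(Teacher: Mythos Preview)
The paper does not prove this theorem: it is stated as a cited external result from \cite{I06} (Indyk's $p$-stable sketching paper), and no proof is given anywhere in the manuscript. So there is no ``paper's own proof'' to compare against.

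That said, your proposal is essentially a correct sketch of Indyk's original argument --- $p$-stable entries, median-of-$|s_j|$ recovery scaled by the median of the absolute stable law, Chernoff on the empirical median via density bounds near $\mu_p$, and Chambers--Mallows--Stuck for sampling. The two technicalities you flag (quantitative density control at the median, and finite-precision truncation) are exactly the points that need care and are handled in the cited reference. Nothing is missing from your outline for the purposes of this paper, which only invokes the theorem as a black box.
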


\subsection{Localization bound}

The $\FastFix$ algorithm needs to be able to reuse low-contribution edges for many iterations. We exploit the following bound to do this:

\begin{theorem}[Theorem 1 of \cite{SRS17}]\label{thm:localization}
Let $G$ be a graph and $F\subseteq E(G)$. Then for any vector $w\in \mathbb{R}^{E(G)}_{\ge 0}$,

$$\sum_{e_1,e_2} w_e w_f\frac{|b_e^T L_G^+ b_f|}{\sqrt{r_e}\sqrt{r_f}}\le O(\log^2 n) ||w||_2^2$$
\end{theorem}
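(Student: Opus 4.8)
Theorem \ref{thm:localization} is imported verbatim as Theorem~1 of \cite{SRS17}, so within the present paper there is nothing to prove; the outline below is how one would establish it from scratch.

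\textbf{Reformulation as an operator-norm bound.} The first step is to recognize the double sum as a statement about a single explicit matrix. Let $C$ be the diagonal matrix of conductances and $B$ the edge–vertex incidence matrix, and set $\Pi := C^{1/2} B L_G^+ B^T C^{1/2} \in \mathbb{R}^{E(G)\times E(G)}$. Using $B^T C B = L_G$ one checks $\Pi^2 = \Pi = \Pi^T$, so $\Pi$ is the orthogonal projection onto the $(n-1)$-dimensional image of $C^{1/2}B$; its entries are $\Pi_{ef} = \frac{b_e^T L_G^+ b_f}{\sqrt{r_e r_f}}$, its diagonal is $\Pi_{ee} = \texttt{lev}_G(e)$, and $\sum_f \Pi_{ef}^2 = (\Pi^2)_{ee} = \texttt{lev}_G(e)\le 1$. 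The claimed inequality is precisely $w^T |\Pi| w \le O(\log^2 n)\,||w||_2^2$ for nonnegative $w$, where $|\Pi|$ is the entrywise absolute value of $\Pi$. Since $|\Pi|$ is symmetric and nonnegative, its $\ell_2\to\ell_2$ operator norm is bounded by its maximum row sum, so it suffices to prove $\sum_{f\in E(G)} |\Pi_{ef}| \le O(\log^2 n)$ for every fixed $e$.

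\textbf{Bounding a single row.} I would partition the edges $f$ into $O(\log n)$ dyadic classes according to the effective-resistance distance between the endpoints of $e$ and those of $f$ (equivalently, according to the size of $|\Pi_{ef}|$ relative to $\sqrt{\texttt{lev}_G(e)\texttt{lev}_G(f)}$). Two ingredients then control each class: (i) the projection identity $\sum_f \Pi_{ef}^2 \le 1$, which limits the total $\ell_2$ mass available; and (ii) the transfer-current/negative-correlation identity for uniform spanning trees, $\Pi_{ef}^2 = \texttt{lev}_G(e)\texttt{lev}_G(f) - \Pr_{T\sim G}[e,f\in T]$, combined with a random-walk argument showing that within one distance scale only an $O(\log n)$ amount of leverage-weighted mass can carry non-negligible transfer current out of $e$. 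Multiplying the per-scale bound $O(\log n)$ by the $O(\log n)$ scales gives the row-sum bound $O(\log^2 n)$, hence the theorem. (In practice the present paper just cites \cite{SRS17} as a black box and uses the inequality as stated.)

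\textbf{Main obstacle.} The hard part is ingredient (ii): the per-scale localization statement, i.e. that the electrical flow forced by a single edge $e$ cannot spread appreciable current across too many edges at any fixed resistance scale. This is the technical core of \cite{SRS17}; it is proved there through a delicate analysis of random walks and commute times rather than by purely linear-algebraic manipulation of $\Pi$, and a self-contained proof here would amount to reproducing essentially that entire argument.
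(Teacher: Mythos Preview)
You are correct that the paper supplies no proof of Theorem~\ref{thm:localization}: it is stated in the preliminaries (Section~\ref{sec:fix-preliminaries}) as Theorem~1 of \cite{SRS17} and used as a black box in Section~\ref{sec:fast-fix}. Your reformulation as an operator-norm bound on the entrywise absolute value of the transfer-current projection $\Pi$, together with the reduction to the uniform row-sum estimate $\sum_f |\Pi_{ef}|\le O(\log^2 n)$, is exactly the statement proved in \cite{SRS17}, so the outline is on target and there is nothing further to compare against within this paper.
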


\newpage

\section{Conductance concentration inequality}\label{sec:slow-fix}

In this section, we prove the following inefficient analogue of Lemma \ref{lem:fast-fix}:

\begin{lemma}\label{lem:slow-fix}
Given a graph $I$, a random graph $J$ that is a valid sample from the distribution $I[F]$, a set $D\subseteq E(I)$, $S,S'\subseteq V(G)$, $F\subseteq E(I)\setminus D$, and an accuracy parameter $\ep \in (0,1)$. With high probability on $J$, there is a set $F'\subseteq F$ with two properties:

\begin{itemize}
\item (Conductance) $c^{J\setminus D\setminus F'}(S,S') \le (1 + \ep) c^{I\setminus D}(S,S')$
\item (Size) $|F'|\le \tilde{O}(\ep^{-3}(\Delta^{I\setminus D}(S,S') + \Delta^{I\setminus D}(S',S) + |D|))$
\end{itemize}
\end{lemma}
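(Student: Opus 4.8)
The plan is to analyze how the conductance $c^{I\setminus D}(S,S')$ changes as we reveal the intersection of the random spanning tree with $F$ one edge at a time, via Proposition \ref{prop:split-partial}. By that proposition, sampling $J\sim I[F]$ is equivalent to repeatedly splitting an edge of $F$ (so that, by Proposition \ref{prop:split-cond}, every copy has leverage score in $[1/4,3/4]$), conditioning on one copy, and continuing. Write $c_k := c^{H_k\setminus D}(S,S')$ for the conductance after $k$ steps, where $H_k$ is the graph after $k$ conditioning steps. The key quantity to track is the reciprocal $\rho_k := 1/c_k = b_{ss'}^T L_{(H_k\setminus D)/(S,S')}^+ b_{ss'}$, using Proposition \ref{prop:lin-alg-total-cond}; this is the $s$-$s'$ effective resistance in the contracted, deletion-set-removed graph. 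By Propositions \ref{prop:deletion-update} and \ref{prop:contraction-update}, each conditioning step on an edge $f$ with leverage in $[1/4,3/4]$ updates $\rho_k$ by a rank-one correction of the form $\pm (b_{ss'}^T L^+ b_f)^2/(r_f \cdot \texttt{lev})$ or $/(r_f\cdot\texttt{nonlev})$; since the leverage score is bounded away from $0$ and $1$, the multiplicative factors are all in $[1/3, 4]$ or so, and the magnitude of the jump in $\rho_k$ is controlled by the energy that the $s$-$s'$ electrical flow sends across $f$.

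The main steps I would carry out: (1) Show that $\textbf{E}[\rho_{k+1} - \rho_k \mid H_k]$ has the right sign / is controlled — in fact the expectation of the effective resistance under conditioning on a random tree edge is exactly preserved when the underlying graph stays connected (this is the ``surprising'' fact mentioned in the introduction, which follows because $\Pr[f\in T] = \texttt{lev}(f)$ so the deletion and contraction updates cancel in expectation), so $\rho_k$ is essentially a martingale (up to the presence of $D$, which is why the $|D|$ term appears). (2) Bound the per-step increments $|\rho_{k+1}-\rho_k|$ and the conditional variances $\text{Var}(\rho_{k+1}\mid H_k)$ in terms of $\Delta^{I\setminus D}(S,S')$ and $\Delta^{I\setminus D}(S',S)$: here I would use Proposition \ref{prop:lin-alg-norm-cond} to interpret the normalized conductances $c^J_{sw}/c^J(s,S')$ as the flow fractions, recognize $\Delta^{I\setminus D}(S,S')$ as $\sum_e \texttt{Reff}(e) c_e$ over the relevant edges — i.e. a sum of leverage-type scores — and invoke Foster-type / localization bounds (Theorem \ref{thm:localization}) together with the \cite{HHNRR08} bound (Theorem \ref{thm:harsha}) to control the cross-terms. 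The point is that only the edges of $F$ that are ``close'' to carrying flow matter, and the total such contribution is $\tilde O(\Delta^{I\setminus D}(S,S') + \Delta^{I\setminus D}(S',S))$. (3) Define $F'$ to be the set of conditioned edges whose individual increment to $\rho_k$ was ``large'' (bigger than roughly $\ep \rho_0 / (\text{number of edges})$, or more precisely edges whose flow energy exceeded an $\ep$-fraction threshold); removing these from $J$ (i.e. not conditioning on them) can only decrease $c$, and by a counting/Markov argument there are at most $\tilde O(\ep^{-3}(\Delta^{I\setminus D}(S,S')+\Delta^{I\setminus D}(S',S)+|D|))$ of them. (4) For the remaining ``small'' edges, apply Azuma's inequality (Theorem \ref{thm:azuma}), using the per-step bound $c_i$ from step (2), to conclude that with high probability $\rho_k$ stays within $(1\pm\ep)$ of $\rho_0$ throughout, hence $c^{J\setminus D\setminus F'}(S,S') \le (1+\ep)c^{I\setminus D}(S,S')$.

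I would organize step (4) as: split the $|F|$ conditioning steps into the bad set $F'$ (removed) and the good steps; along the good steps $\rho_k$ is a martingale (by step 1) with bounded increments, and $\sum c_i^2$ is itself bounded by the $\Delta$ quantities times $\ep\rho_0$ (since each good increment is at most $\ep\rho_0$ times a small number and their total is bounded), giving a sub-Gaussian tail of width $O(\ep \rho_0)$, which is what we want. The subtlety with $D$ is that $I\setminus D$ may be disconnected in the relevant contracted graph, in which case $\rho_0 = \infty$ and the statement is vacuous; when $I\setminus D$ is connected, deleting $D$ versus keeping it changes the martingale by controlled rank-one updates whose count is $|D|$, contributing the $|D|$ term in the size bound.

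The hardest part will be step (2): getting the increments and conditional variances of $\rho_k$ bounded \emph{cleanly} in terms of $\Delta^{I\setminus D}(S,S')$ and $\Delta^{I\setminus D}(S',S)$ rather than some looser quantity, while accounting for the fact that after several conditioning steps the graph $H_k$, and hence the electrical flow and the leverage scores, have drifted from those in $I$. Controlling this drift — showing that the flow energies across $F$-edges in $H_k$ are still comparable to those in $I\setminus D$ with high probability, so that the localization/Harsha bounds computed in the original graph still govern the process — is the technical crux, and is presumably where the extra $\ep^{-1}$ factors (giving $\ep^{-3}$ rather than $\ep^{-1}$) and the $\log$ factors in the size bound come from. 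I expect the warmup in Section \ref{sec:slow-warmup} (referenced but not in this excerpt) to handle the $D = \emptyset$, crude-bound version, with the full argument layering on the localization bound to sharpen the count.
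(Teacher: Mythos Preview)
Your proposal has the right skeleton (track $\rho_k = b_{ss'}^T L_{(H_k\setminus D)/(S,S')}^+ b_{ss'}$, use Azuma, put bad edges in $F'$), but step~(1) contains a genuine error that propagates through the rest. You assert that $\rho_k$ is ``essentially a martingale up to the presence of $D$'', but this is false: the tree is sampled in $I_k$ while $\rho_k$ lives in $(I_k\setminus D)/(S,S')$, and the identifications $S\to s$, $S'\to s'$ already destroy the martingale property even when $D=\emptyset$. The paper makes this explicit (Section~\ref{subsec:slow-intuition}): the expected change in $\rho_k$ when conditioning on $f$ is $\texttt{levcng}_{I_k\to (I_k\setminus D)/(S,S')}(f)$ times the energy on $f$, and while Lemmas~\ref{lem:first-order-deg}--\ref{lem:del-first-order-deg} together with Proposition~\ref{prop:deg-pot-bound} bound $\sum_f |\texttt{levcng}(f)|$ by $O(\Delta/p + |D|)$ over edges whose midpoint potential lies in $[p,1-p]$, edges with an endpoint very close to $S$ or $S'$ can have $|\texttt{levcng}|$ of order~$1$ each, and there is no a priori bound on how many such edges $F$ contains.

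This is precisely why the paper's $\Fix$ algorithm is organized around \emph{potential buckets} $X_i,Y_i$ and a \emph{deferral} mechanism that your proposal lacks. Edges whose midpoints are within normalized potential $p$ of $s$ or $s'$ are not conditioned on during the main loop; they are moved to sets $A,B$ and handled at the very end via Proposition~\ref{prop:cond-pot-bound}, which says that arbitrarily contracting or deleting edges with both endpoints in $[0,\ep]\cup[1-\ep,1]$ changes the $s$--$s'$ conductance by at most a $(1-\ep)^{-2}$ factor. The set $F'$ is then the deferred edges whose potential has drifted into $[\ep/2,1-\ep/2]$; bounding $|F'|$ requires tracking not only $\rho_k$ but also the normalized degrees $\delta_{S,S'},\delta_{S',S}$ and the running sum of deferred-edge potentials (Propositions~\ref{prop:delta-stability}--\ref{prop:main-stability}), all of which must be controlled simultaneously via the stable-oracle framework and Proposition~\ref{prop:stable-concentration}. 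Your definition of $F'$ as ``edges whose individual increment was large'' misses the point: the obstruction is accumulated drift from many individually-small steps near $S$ and $S'$, not a few large jumps. Finally, Theorem~\ref{thm:localization} plays no role in Lemma~\ref{lem:slow-fix}; it enters only in $\FastOracle$ (Section~\ref{sec:fast-fix}) to make the oracle output reusable for $K\gg 1$ rounds.
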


\begin{figure}
\includegraphics[width=1.0\textwidth]{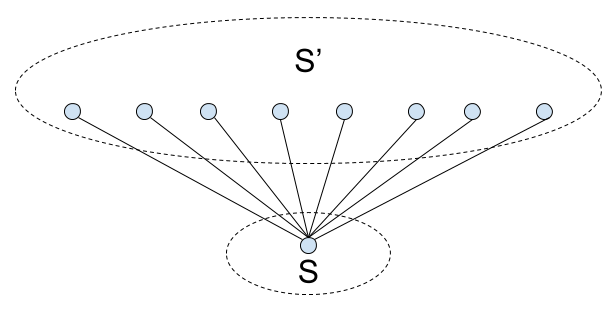}
\caption{Why Lemma \ref{lem:slow-fix} has a dependence on $\Delta^I(S,S')$. In this diagram, let $F$ be the set of all edges and $D = \emptyset$. Conditioning on a random spanning tree in $I$ contracts all of the edges in $F$. In order to make the $S-S'$ conductance finite, all of the edges in $F$ must be deleted. There are $\Delta^I(S,S')$ of these edges, no matter what the conductances of these edges are.}
\label{fig:degree-dependence}
\end{figure}

The proof conditions on one edge in $F$ at a time. In principle, doing this could drastically change the conductance between $S$ and $S'$. Luckily, it does not change much in expectation if one conditions on a random edge in $F$. While the conductance does not change much in expectation, it could change in absolute amount. In Proposition \ref{prop:stable-concentration}, we show that there always exists an edge an edge that, if conditioned on, does not change the $S-S'$ conductance much. Using these statements in concert with Azuma's Inequality shows that the $S-S'$ conductance does not change by more than a $(1 + \ep)$ factor with high probability if one conditions on all but a very small number of edges in $F$. Letting these edges be $F'$ proves the desired result.

To prove Lemma \ref{lem:slow-fix}, we will need to use Azuma's Inequality to control a number of other quantities besides the $S-S'$ conductance to show that one can always find the edges described in Proposition \ref{prop:stable-concentration}. As a result, we choose to discuss a much simpler result first, which may be of independent interest.

\subsection{Warmup: Controlling the effective resistance between two vertices}\label{sec:slow-warmup}

In this subsection, we prove Lemma \ref{lem:slow-fix} with $S = \{s\}$ and $S' = \{t\}$ each being one vertex and $D$ being empty. This is also Lemma \ref{lem:special-slow-fix}, which we restate here with notation that agrees with Lemma \ref{lem:slow-fix}:

\begin{lemma}[Restatement of Lemma \ref{lem:special-slow-fix}]\label{lem:restate-special-slow-fix}
Let $I$ be a graph, $F\subseteq E(I)$, $\ep\in (0,1)$, and $s,t\in V(I)$. Sample a graph $J\sim I[F]$. Then, with high probability, there is a set $F'\gets \SpecialFix(I,s,t,F,\ep)\subseteq F$ that satisfies both of the following guarantees:

\begin{itemize}
\item (Effective resistance) $b_{st}^T L_{J\setminus F'}^+ b_{st}\ge (1 - \ep)(b_{st}^T L_I^+ b_{st})$
\item (Size) $|F'|\le O((\log n)/\ep^2)$
\end{itemize}

\end{lemma}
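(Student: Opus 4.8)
The plan is to reduce the effective-resistance lower bound to a statement about a bounded martingale and then apply Azuma's inequality (Theorem~\ref{thm:azuma}). The key vehicle is Proposition~\ref{prop:split-partial}: sampling $J\sim I[F]$ is equivalent to processing the edges of $F$ one at a time, where at each step we split an edge $f_k$ (so by Proposition~\ref{prop:split-cond} its leverage score lies in $[1/4,3/4]$), then contract or delete one of the two copies according to the random spanning tree. This incremental view lets me track how the quantity $\phi_k := b_{st}^T L_{H_k}^+ b_{st}$ evolves. Using the rank-one update formulas (Propositions~\ref{prop:deletion-update} and \ref{prop:contraction-update}), a single contraction of $f$ \emph{decreases} $\phi$ by $(b_{st}^T L^+ b_f)^2/(r_f\,\texttt{lev}(f))$ while a deletion \emph{increases} it by $(b_{st}^T L^+ b_f)^2/(r_f\,\texttt{nonlev}(f))$; since $\texttt{lev},\texttt{nonlev}\in[1/4,3/4]$ after splitting, both per-step changes are within a constant factor of $(b_{st}^T L^+ b_f)^2/r_f$.

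The crucial observation is that $\phi_k$ is a \emph{supermartingale}: conditioned on $H_k$, a random copy of the split edge $f_k$ is in the tree with probability exactly $\texttt{lev}(f)$ (Theorem~\ref{thm:edge-marginal} applied to the copy), so the expected change is $\texttt{lev}(f)\cdot(-(b_{st}^T L^+ b_f)^2/(r_f\texttt{lev}(f))) + \texttt{nonlev}(f)\cdot((b_{st}^T L^+ b_f)^2/(r_f\texttt{nonlev}(f))) = 0$ — actually it is a martingale in the $\phi$ coordinate. So I should instead directly control $\phi_n$ (the value when $F$ is exhausted) via Azuma: the total number of increments is $|F|$, but the per-step bound $c_k$ is $O((b_{st}^T L_{H_k}^+ b_{f_k})^2/r_{f_k})$, which is small for ``typical'' edges but can be as large as $O(\phi_k)$ for a few bad edges. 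This is where $F'$ enters: I set aside the edges $f$ for which the normalized contribution $(b_{st}^T L^+ b_f)^2/(r_f\phi)$ exceeds some threshold $\theta = \Theta(\ep^2/\log n)$, never conditioning on them, and apply Azuma only to the remaining ``stable'' edges. For those, $\sum_k c_k^2 \le \theta\cdot \phi_{\max}\cdot \sum_k c_k$, and since $\sum_k c_k \lesssim \phi_{\max}$ by the localization-type bound (Theorem~\ref{thm:harsha} with $s_1=s_2=s$, or Foster's theorem on the relevant minor — $\sum_f (b_{st}^T L^+ b_f)^2/(r_f \phi) = \sum_f \texttt{lev}(f)\le n-1$), one gets $\sum_k c_k^2 = O(\theta\,\phi_{\max}^2)$. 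Plugging into Theorem~\ref{thm:azuma} with $\lambda = \ep\phi_0$ gives deviation probability $\le 2\exp(-\Omega(\ep^2/\theta)) = n^{-\Omega(1)}$ for appropriate $\theta$, so $\phi_n \ge (1-\ep)\phi_0$ with high probability.

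It remains to bound $|F'|$. The set of ``unstable'' edges is not fixed in advance — it is defined adaptively as the process runs — but at each step at most $1/\theta$ edges can have normalized contribution exceeding $\theta$ relative to the current $\phi$ (again by $\sum_f \texttt{lev}(f)\le n-1$, so at most $(n-1)/\theta$ total, which is too weak). The right argument: each time we \emph{skip} an unstable edge, we simply remove it from $F$ and continue; the point is that an edge becomes unstable only when its contribution is large, and a standard potential/amortization argument (bounding how $\phi$ can fluctuate) shows only $O(1/\theta) = O(\log n/\ep^2)$ edges are ever skipped. More carefully, one shows that between consecutive skips the stable portion behaves well, and a skipped edge of contribution $\ge \theta\phi$ contributes at least a $\theta$ fraction; since the martingale stays within $[(1-\ep)\phi_0, (1+O(1))\phi_0]$ w.h.p. and $\phi$ can only be driven down by a constant factor before the argument restarts, the number of such large-contribution skips is $O(1/\theta)$. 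I expect the main obstacle to be exactly this last point: making the adaptive definition of $F'$ rigorous, i.e., arguing that one can always find a stable edge to condition on (the analogue of Proposition~\ref{prop:stable-concentration} alluded to in the text) and that the set of skipped edges stays of size $O(\log n/\ep^2)$, since this requires simultaneously controlling $\phi$ along the whole trajectory rather than just at the endpoint — a chicken-and-egg issue resolved by a careful induction/union bound over the (few) skip events.
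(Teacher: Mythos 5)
Your skeleton matches the paper's: reduce $J\sim I[F]$ to one-edge-at-a-time conditioning via Proposition \ref{prop:split-partial}, split each edge so its leverage score lies in $[1/4,3/4]$, observe that $\phi_k=b_{st}^TL_{H_k}^+b_{st}$ is a martingale under the rank-one updates, and control it with Azuma. But the place where you diverge is exactly where your argument has gaps. First, your variance bound rests on $\sum_k c_k\lesssim\phi_{\max}$, where $c_k$ is the energy of the conditioned edge $f_k$ \emph{in the graph $H_k$ at the time it is conditioned on}. Your justification conflates two distinct quantities: $(b_{st}^TL^+b_f)^2/(r_f\phi)$ is the normalized $s$--$t$ flow energy on $f$ (these sum to $1$ over the edges of a \emph{fixed} graph), not the leverage score $\texttt{lev}(f)=b_f^TL^+b_f/r_f$ appearing in Foster's theorem. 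More importantly, neither identity bounds a sum that picks one edge per graph across a \emph{changing} trajectory: as conditioning concentrates the flow on fewer edges, the per-edge energies grow, and (e.g.\ for parallel edges between $s$ and $t$) the trajectory sum behaves like $\phi\cdot\log|F|$ rather than $O(\phi)$. Second, your bound on $|F'|$ is the part you yourself flag as the crux and it is not closed: the set of edges whose normalized energy exceeds $\theta$ is adaptive, so bounding the number of edges unstable \emph{at a fixed time} by $1/\theta$ does not bound the number of \emph{distinct} edges ever skipped, and the ``standard potential/amortization argument'' is precisely what would need to be supplied. There is also a circularity in using $\phi_{\max}=O(\phi_0)$ inside the Azuma bound that is supposed to establish it; this is resolvable by the usual bootstrap, but it is not done.

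The paper's $\SpecialFix$ dissolves both difficulties by a different algorithmic choice: at each step it conditions on the non-leaf/loop edge of $F$ \emph{minimizing} the energy $(b_{st}^TL_J^+b_f)^2c_f$. Since the minimum is at most the average and the energies in the current graph sum to at most $\phi_j$, this gives $c_j\le 4\phi_j/|F_j|$ with no thresholding and no skipping. The process is then cut into $O(\log n)$ phases in which the number of live edges drops by a constant factor (each split-and-condition step kills its edge with probability $\ge 1/4$, plus Chernoff); within a phase one first gets a crude $e$-factor bound on $\phi$, then applies Azuma with Lipschitz constants $O(\phi^i/|F^i|)$ to get a $(1\pm 10\sqrt{\log n/|F^i|})$ multiplicative change, and the geometric growth of these deviations across phases is dominated by the last phase. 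The algorithm simply stops when $|F|\le 9(\log n)/\ep^2$ and returns the leftovers as $F'$, so the size bound is immediate and Rayleigh monotonicity handles their deletion. If you want to salvage your threshold-based variant, you would need to (i) prove a trajectory bound on $\sum_kc_k$ (the localization machinery of Section \ref{sec:fast-fix} is what the paper uses for this kind of statement in the general case), and (ii) give the amortized count of skip events; the min-energy selection rule is the cheaper route.
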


While this special case is not directly relevant to the proof of Lemma \ref{lem:slow-fix}, it offers a simpler way of demonstrating most of the key ideas.

$\SpecialFix$ splits and conditions on the edge with lowest energy until there are $O((\log n)/\ep^2)$ edges left. At this point, the algorithm returns the remaining edges in $F$ to be in $F'$.

\begin{algorithm}[H]
\DontPrintSemicolon
\caption{$\SpecialFix(I,s,t,F,\ep)$, never executed}

    \KwData{the graph $I$, $s,t\in V(I)$, arbitrary $F\subseteq E(I)$, $\ep\in (0,1)$}

    \KwOut{the subset $F'\subseteq F$ of edges that should be deleted}

    $J\gets I$\;

    \While{$F$ has more than $9(\log n)/\ep^2$ non-leaf/loop edges}{

        $f\gets $ the non-leaf/loop edge in $E(J)$ that minimizes $(b_{st}^T L_J^+ b_f)^2c_f$\;

        $(J,\{f_1,f_2\})\gets \Split(J,f)$\;

        $J\sim J[f_1]$\;

        $F\gets (F\setminus \{f\})\cup \{f_2\}$\;

    }

    \Return{ the non-leaf/loop edges of $F$}
\end{algorithm}

\begin{figure}
\includegraphics[width=1.0\textwidth]{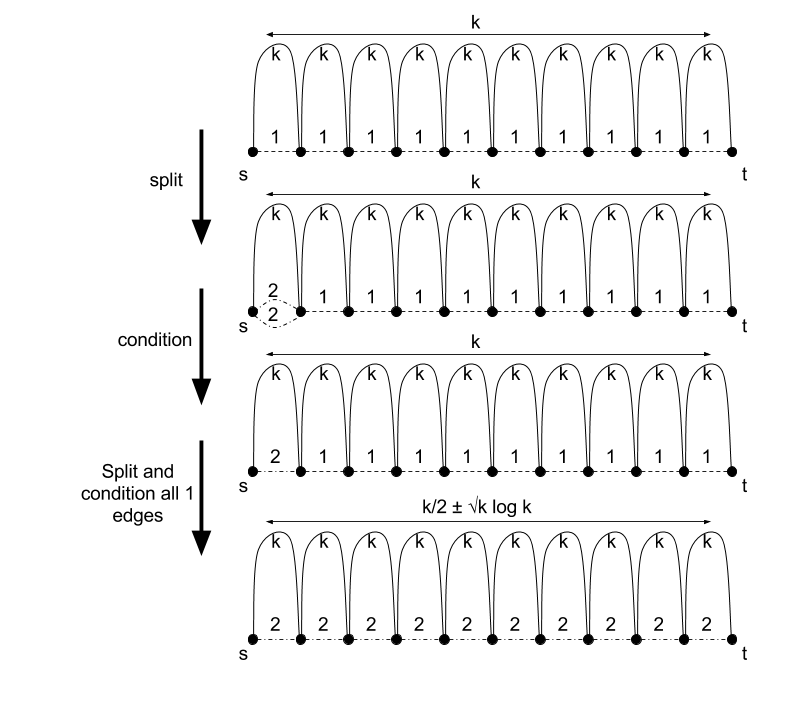}
\caption{The $\SpecialFix$ algorithm for $|F|/2$ iterations. In the above example, each of the dashed edges is in $F$. During each iteration, one edge is chosen, split, and conditioned on. During the first conditioning round in this example, the chosen edge is deleted. In the last diagram, all contracted edges yield self-loops with resistance $k$, which are irrelevant for the $s-t$ resistance. With high probability, roughly $k/2$ edges are left in $F$ after conditioning on each edge once. Furthermore, the $s-t$ resistance is extremely close to its original value by martingale concentration.}
\label{fig:simple-conditioning}
\end{figure}

\begin{proof}[Proof of Lemma \ref{lem:restate-special-slow-fix}]

Let $F^0 = F$ and for all $i \ge 0$, let $F^{i+1}$ be the set of non-leaf/loop edges left over after conditioning on $|F^i|/8$ edges in $F^i$. When a non-leaf/loop edge is split and conditioned on, it has a probability of at least $1/4$ of resulting in an additional loop or leaf. Furthermore, the number of non-leaf/loop edges in $F$ never increases. Therefore, since all conditionings are independent of each other, $|F^{i+1}|\le 63|F^i|/64$ with high probability as long as $|F^i|\ge \log n$ by Chernoff bounds. Therefore, there are only $\log n$ $F^i$s before the algorithm terminates.

Let $J^i$ be the graph $J$ with $F = F^i$. We now show that

$$b_{st}^T L_{J^{i+1}}^+ b_{st}\ge (1-\frac{10\sqrt{\log n}}{\sqrt{|F^i|}})(b_{st}^T L_{J^i}^+ b_{st})$$

with high probability. Let $J_j$ be the graph immediately before the $j$th iteration after $J^i$ and let $f_j$ be the edge conditioned on during this iteration to obtain $J_{j+1}$. Let $F_j$ be the value of $f$ before conditioning on $f_j$. By Propositions \ref{prop:deletion-update} and \ref{prop:contraction-update},

\begin{align*}
\textbf{E}_{J_{j+1}}[b_{st}^T L_{J_{j+1}}^+ b_{st}] &= \texttt{lev}_{J_j}(f_j)\left(b_{st}^T L_{J_j}^+ b_{st} - \frac{(b_{st}^T L_{J_j}^+ b_{f_j})^2}{r_{f_j}\texttt{lev}_{J_j}(f_j)}\right)\\
&+ \texttt{nonlev}_{J_j}(f_j)\left(b_{st}^T L_{J_j}^+ b_{st} + \frac{(b_{st}^T L_{J_j}^+ b_{f_j})^2}{r_{f_j}\texttt{nonlev}_{J_j}(f_j)}\right)\\
&= b_{st}^T L_{J_j}^+ b_{st}\\
\end{align*}

so $X_j := b_{st}^T L_{J_j}^+ b_{st}$ is a martingale. We now show that our choice of $f_j$ makes the martingale Lipschitz. Proposition \ref{prop:split-cond} implies that $\texttt{lev}_{J_j}(f_j)\ge 1/4$ and that $\texttt{nonlev}_{J_j}\ge 1/4$. By Propositions \ref{prop:deletion-update} and \ref{prop:contraction-update},

$$|b_{st}^T L_{J_{j+1}}^+ b_{st} - b_{st}^T L_{J_j}^+ b_{st}|\le 4 \frac{(b_{st}^T L_{J_j}^+ b_{f_j})^2}{r_{f_j}}$$

This is the energy contribution of $f_j$ to the overall $s-t$ energy. Therefore,

$$|b_{st}^T L_{J_{j+1}}^+ b_{st} - b_{st}^T L_{J_j}^+ b_{st}|\le \frac{4}{|F_j|}b_{st}^T L_{J_j}^+ b_{st}$$

Since $|F^i|/8$ conditionings occur during $J^i$ and $J^{i+1}$, $|F_j|\ge 7|F^i|/8$ for all $j$ and

$$b_{st}^T L_{J_{j+1}}^+ b_{st}\le \left(1 + \frac{32}{7|F^i|}\right)^{|F^i|/8}b_{st}^T L_{J_j}^+ b_{st}\le e b_{st}^T L_{J_j}^+ b_{st}$$

We now bootstrap this bound using Azuma's Inequality to obtain a much better bound on the change in the $s-t$ resistance. In particular,

$$|b_{st}^T L_{J_{j+1}}^+ b_{st} - b_{st}^T L_{J_j}^+ b_{st}|\le \frac{4e}{|F_j|}b_{st}^T L_{J^i}^+ b_{st}$$

Therefore, by Theorem \ref{thm:azuma},

$$\Pr[|b_{st}^T L_{J^{i+1}}^+ b_{st} - b_{st}^T L_{J^i}^+ b_{st}|\ge \frac{10\sqrt{\log n}}{\sqrt{|F^i|}} b_{st}^T L_{J^i} b_{st}]\le 1/\text{poly}(n)$$

Therefore, with high probability, the change is at most a $(1 - \ep)$ factor unless $|F^i|\le (\log n)/\ep^2$. Deleting these edges can only increase the $s-t$ resistance. Conditioning on $F'$ instead of deleting would have given a graph $J_{final}$ equivalent to sampling $J_{final}\sim I[F]$ by Proposition \ref{prop:split-sample}. This completes the proof.
\end{proof}

\subsection{Generalizing the warmup}\label{subsec:slow-intuition}

By Proposition \ref{prop:lin-alg-total-cond}, for any graph $H$ with disjoint sets of vertices $S,S'\subseteq V(H)$, $1/c^H(S,S') = b_{ss'}^T L_{H'}^+ b_{ss'}$, where $H'$ is a graph obtained from $H$ by identifying $S$ to $s$ and $S'$ to $s'$. Therefore, to show that $c^H(S,S')$ does not increase by more than a $(1 + \ep)$ factor, it suffices to show that $b_{ss'} L_{H'}^+ b_{ss'}$ does not decrease by more than a $(1 - \ep)$ factor, where $H$ is obtained from $I\setminus D$ by conditioning on a partial sample from a random spanning tree in $I$.

This quantity is similar to the quantity controlled during the warmup. There are two initial obstacles to directly using the approach in the warmup:

\begin{itemize}
\item The tree being sampled is sampled in a different graph ($H\cup D$) from the graph in which the quantity $b_{ss'}^T L_{H'}^+ b_{ss'}$ is defined. This discrepancy causes $b_{ss'}^T L_{H'}^+ b_{ss'}$ to not be a martingale under conditioning.

\item Conditioning could cause the quantity in the ``Size'' upper bound to change.
\end{itemize}

To get around these issues, we show that there exists an edge that changes $b_{ss'}^T L_{H'}^+ b_{ss'}$, $\Delta^H(S,S')$, and $\Delta^H(S',S)$ by a very small factor in expectation. Ideally, there would be an edge to condition on that changes $b_{ss'}^T L_{H'}^+ b_{ss'}$ and the ``Size'' bound in expectation by at most a $(1 + (\Delta^H(S,S') + \Delta^H(S',S) + |D|)/|F|^2)$ factor in expectation. If this is the case, then conditioning on all but $(\Delta^H(S,S') + \Delta^H(S',S) + |D|)\ep^{-1}$ edges changes the expectation by at most a $(1 + \ep)$ factor in expectation. When combined with martingale concentration, this shows that the size bound and the conductance change by at most a factor of $(1 + O(\ep))$ with high probability.

Unfortunately, such an increase is hard to obtain. For example, the expected change in $b_{ss'}^T L_{H'}^+ b_{ss'}$ due to conditioning on an edge $f$ can be written as a product of two terms:

\begin{itemize}
\item The change in $f$'s leverage score due to identifying $S$ to $s$, identifying $S'$ to $s'$, and deleting all edges in $D$.
\item The energy of the $b_{ss'}$ electrical flow in $H'$ on $f$.
\end{itemize}

The second quantity is at most $O(1/|F|)b_{ss'}^T L_{H'}^+ b_{ss'}$ for most edges in $F$ due to the fact that the average energy on an edge is small. The first quantity is trickier to bound and sometimes cannot be bounded at all. For example, if $F$ solely consists of edges with endpoints in $S$, the change in leverage score upon identifying $S$ to $s$ is for all edges in $F$ is high (constant). Luckily, though, the endpoints of these edges are very close to $s$ in the $L_{H'}^+ b_{ss'}$ potential embedding.

This phenomenon holds in general. In particular, Lemma \ref{lem:first-order-deg} can be used to show that the total leverage score decrease of all edges with an endpoint that has $s-s'$ normalized potential at least $p$ away from $s$ and $s'$ when $S$ and $S'$ are identified is at most $O((\Delta^H(S',S) + \Delta^H(S,S'))/p)$. The total leverage score increase is at most $|D|$, so the total leverage score change is at most the sum of these two quantities. Therefore, if a large fraction of the edges in $F$ have an endpoint with normalized $s-s'$ potential that is between $p$ and $1-p$, there is an edge $f\in F$, which, when conditioned on, causes an

$$1 + O\left(\left(\frac{|D| + (\Delta^H(S',S) + \Delta^H(S,S'))/p}{\ep}\right)\left(\frac{b_{ss'}^T L_{H'}^+ b_{ss'}}{|F|}\right)\right) \le 1 + \frac{(|D| + \Delta^H(S',S) + \Delta^H(S,S'))}{p|F|^2}$$

factor change in the ``Size'' bound and $b_{ss'}^T L_{H'}^+ b_{ss'}$ in expectation. This differs from the desired bound due to the factor of $1/p$.

This bound is good enough, though, to allow conditioning on all but roughly $\frac{(|D| + \Delta^H(S',S) + \Delta^H(S,S'))}{p\ep}$ edges before the ``Size'' bound or $b_{ss'}^T L_{H'}^+ b_{ss'}$ changes by more than an $(1 + \ep)$ factor. Once this many edges are left that have maximum normalized endpoint potential at most $O(p)$, defer conditioning on these edges until the very end of the algorithm.

Deferring conditioning on these edges works for the following reasons.

\begin{itemize}

\item The total normalized potential of such edges is at most $\frac{(|D| + \Delta^H(S',S) + \Delta^H(S,S'))}{p\ep} O(p) = O((|D| + \Delta^H(S',S) + \Delta^H(S,S'))\ep^{-1})$.

\item If we can show that the total normalized potential of deferred edges only increases by a $(1 + \ep)$ factor over the course of conditioning on the rest of $F$, the total normalized potential of all deferred edges is at most $O((|D| + \Delta^H(S',S) + \Delta^H(S,S'))\ep^{-1})$ at the end of the algorithm.

\item By Markov's inequality, at most $O((|D| + \Delta^H(S',S) + \Delta^H(S,S'))\ep^{-2})$ of the deferred edges have any endpoint with normalized potential in the interval $[\ep,1-\ep]$. Conditioning on the rest of the edges therefore decreases the $s-s'$ resistance in $H'$ by at most a factor of $(1-O(\ep))$ by Proposition \ref{prop:cond-pot-bound}. Letting $F'$ be the remaining $O((|D| + \Delta^H(S',S) + \Delta^H(S,S'))\ep^{-2})$ edges yields the desired result by Rayleigh monotonicity.

\end{itemize}

\subsection{Reducing the fixing lemma to the existence of stable oracles}

We prove Lemma \ref{lem:slow-fix} using the algorithm $\Fix$ with the oracle $\SlowOracle$. This algorithm picks edges to condition on in $F$ one by one in the graph $I$. It selects edges that do not change the following quantities very much:

\begin{itemize}
\item $\Delta^{I\setminus D}(S,S')$
\item $\Delta^{I\setminus D}(S',S)$
\item $c^{I\setminus D}(S,S')$
\item Various sums of potential drops
\end{itemize}

The key idea is that there always is an edge that does not change any of these quantities very much. Unlike in the special case when $S$ and $S'$ are single vertices, conditioning might decrease the effective resistance in expectation. We show that the same choice of edge that does not change these quantities also leads to a very small change in the expectation.

\subsubsection{Stability implies concentration}

We now formalize the concept of low-change edges using \emph{stable functions}:

\begin{definition}[Stable functions]
Call a function $g$ on graphs \emph{electrical} if $g$ is preserved under two operations:

\begin{itemize}
\item edge subdivision; that is replacing an edge $e$ with a path of length two consisting of edges with resistance $r_e/2$
\item edge splitting; that is replacing an edge $e$ with two parallel copies with resistance $2r_e$
\end{itemize}

A set of edges $X\subseteq F$ is called a \emph{$(\rho_L, \rho_E, \delta)$-stable} subset of $F$ for an electrical function $g$ if

\begin{itemize}
\item $|g(H/f) - g(H)|\le \frac{\rho_L}{|F|} g(H) + \delta$ for all edges $f\in X$
\item $|g(H\backslash f) - g(H)|\le \frac{\rho_L}{|F|} g(H) + \delta$ for all $f\in X$
\item $\frac{1}{|X|}\sum_{f\in X}|g(H) - \textbf{E}_{H'\sim H[f]}[g(H')]|\le \frac{\rho_E}{|F|^2} g(H) + \delta$
\end{itemize}

An set $X$ is called \emph{$(\rho_L, \rho_E)$-stable} for $g$ if it is $(\rho_L, \rho_E, 0)$-stable for $g$.
\end{definition}

\begin{definition}[Multiplicative functions]
A function $h: \mathbb{R}_{>0}^d\rightarrow \mathbb{R}_{>0}$ is called \emph{$\gamma$-multlipschitz} if the function $h'(x_1,\hdots,x_d) := \log h(\exp(x_1), \exp(x_2), \hdots, \exp(x_d))$ is $\gamma$-Lipschitz in the $\ell_1$ norm.
\end{definition}

We encapsulate our use of standard concentration inequalities into one proposition. This proposition is used to show that conditioning on a long sequence of edges that are stable for the functions $\{x_a\}_{a=1}^{\ell_1}$ and $\{y_b\}_{b=1}^{\ell_2}$ does not change the value of any of these functions by a large amount with high probability. This proposition allows the stability to depend on the $x_a$ functions, which one should think of as being related to $\Delta^{I\setminus D}(S,S')$.

\linestart

\begin{proposition}\label{prop:stable-concentration}
Consider two collections of electrical functions $\{x_a\}_{a=1}^{\ell_1}$ and $\{y_b\}_{b=1}^{\ell_2}$. Fix a graph $I_0 = I$ and a set $F_0 = F\subseteq E(I)$. Consider two sequences of random graphs $\{I_i\}_i$, $\{I_i'\}_i$, edges $\{f_i\}_i$, and edge sets $\{F_i\}_i$, $\{X_i\}_i$ with the following properties for all $i$:

\begin{itemize}
\item $I_{i+1} \sim I_i[[f_i]]$ for a uniformly random edge $f_i\in X_i$
\item $F_{i+1} := F_i\cap E(I_{i+1})$
\item $X_i\subseteq F_i$
\item $X_i$ is $(I_i',\rho_L,\rho_i,0)$-stable for all functions $x_a$ and $(I_i',F_i') := \texttt{Split}(I_i,F_i)$
\item $X_i$ is $(I_i',\rho_L,\rho_i,\delta)$-stable for all functions $y_b$
\item $|F_i|\ge \sigma$
\item $\rho_i := \rho_E(\{x_a(H_i)\}_a)$; that is $\rho_i$ is only a function of the values of the $x_a$s on $H_i$.
\item $\rho_E$ is a $\gamma$-multlipschitz function.
\end{itemize}

Let $\tau_E := \max_i (\rho_i/|X_i|)$. Then with high probability, all of the following bounds apply for all $i$:

\begin{itemize}
\item For all $w\in \{x_a\}_a$,

$$|w(I) - w(I_i)|\le \tilde{O}\left(\tau_E + \frac{\sqrt{\gamma\ell_1}\rho_L^{3/2}}{\sqrt{\sigma}}\right) w(H)$$

\item For all $w\in \{y_b\}_b$,

$$|w(I) - w(I_i)|\le \tilde{O}\left(\tau_E + \frac{\sqrt{\gamma\ell_1}\rho_L^{3/2}}{\sqrt{\sigma}}\right) w(H) + \tilde{O}(\rho_L\gamma\ell_1|F|\delta)$$

\end{itemize}
\end{proposition}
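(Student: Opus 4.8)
The plan is to run, simultaneously for all $\ell_1+\ell_2$ functions, the bootstrapped martingale argument used in the proof of Lemma~\ref{lem:restate-special-slow-fix}, with the twist that the stability parameters $\rho_i$ are themselves random quantities that must be kept under control along the way. Fix one function $w\in\{x_a\}_a\cup\{y_b\}_b$ and let $W_i:=w(I_i)$; since $w$ is electrical, $W_i=w(I_i')$ for the split graph $I_i'$, and a step $I_{i+1}\sim I_i[[f_i]]$ amounts, up to electrical invariance, to contracting or deleting one of the two copies (each of leverage in $[1/4,3/4]$ by Proposition~\ref{prop:split-cond}) of the uniformly random edge $f_i\in X_i$ inside $I_i'$. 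The first two stability hypotheses then give the per-step Lipschitz bound $|W_{i+1}-W_i|\le \frac{\rho_L}{|F_i|}W_i+\delta$, and the third hypothesis, averaged over $f_i\sim\mathrm{Unif}(X_i)$, gives the drift bound $\big|\mathbf{E}[W_{i+1}\mid\mathcal F_i]-W_i\big|\le \frac{\rho_i}{|F_i|^2}W_i+\delta$. Writing $W_i=M_i+D_i$ for its Doob decomposition, $M$ is a martingale and $D$ is predictable with per-step increments bounded by $\frac{\rho_i}{|F_i|^2}W_i+\delta$ (and $\delta=0$ for the $x_a$'s).

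Next I would set up the geometric ``scale'' structure of the warmup: let $F^{(0)}=F$ and let $F^{(\ell+1)}$ be the non-leaf/loop edges left after a fresh batch of $\Theta(|F^{(\ell)}|)$ conditionings; since each conditioning on a split edge creates a self-loop or leaf with probability $\ge 1/4$ (Proposition~\ref{prop:split-cond}) and these are independent, a Chernoff bound gives $|F^{(\ell+1)}|\le\tfrac{63}{64}|F^{(\ell)}|$ with high probability, so only $O(\log n)$ scales occur before $|F|$ drops below $\sigma$, and within scale $\ell$ one has $|F_j|=\Theta(\sigma_\ell)$ for $\sigma_\ell:=|F^{(\ell)}|$ and $O(\sigma_\ell)$ steps. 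The refinement needed here is to further chop each scale into sub-scales of length $\Theta(\sigma_\ell/(\gamma\ell_1\rho_L))$ — short enough that the crude Lipschitz bound lets each $W$ move by at most a $(1+O(1/(\gamma\ell_1)))$ factor across a sub-scale, so that $\rho_i=\rho_E(\{x_a(I_i)\}_a)$ stays within a constant factor of its value at the sub-scale's start by $\gamma$-multlipschitzness of $\rho_E$. This produces $O(\gamma\ell_1\rho_L)$ sub-scales per scale.

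The heart of the argument is a bootstrap run by induction over sub-scales, with the $x_a$'s treated as primary. Maintain the inductive hypothesis that every $x_a$ has so far stayed within relative error $\epsilon^\ast:=c/(\gamma\ell_1)$ of $x_a(I)$; this keeps $\rho_i\le(1+o(1))\rho_0$ with $\rho_0=\rho_E(\{x_a(I)\}_a)$, hence keeps every per-step bound proportional to $w(I)$. On a fixed sub-scale the crude bound already forces $|W_j|=O(w(I))$; Azuma's inequality (Theorem~\ref{thm:azuma}) applied to $M$ over that sub-scale, with increments bounded by $M_{\mathrm{inc}}=O(\rho_L/\sigma_\ell)\,w(I)+O(\delta)$ and sum of squares $\Theta(\sigma_\ell/(\gamma\ell_1\rho_L))\cdot M_{\mathrm{inc}}^2$, then shows $M$ — and hence $W$, after adding the deterministically-bounded drift — moves by at most $\tilde O\big(\rho_L^{1/2}/\sqrt{\gamma\ell_1\sigma_\ell}\big)w(I)+\tilde O\big(\sqrt{\sigma_\ell/(\gamma\ell_1\rho_L)}\,\delta\big)$ over the sub-scale, with probability $1-1/\mathrm{poly}(n)$ (the union bound over all $O(\gamma\ell_1\rho_L\log n)$ sub-scales and $\ell_1+\ell_2$ functions costs only a polylog in the Azuma threshold). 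Summing these over the $O(\gamma\ell_1\rho_L)$ sub-scales of a scale and over the $O(\log n)$ geometrically shrinking scales gives total martingale fluctuation $\tilde O(\sqrt{\gamma\ell_1}\,\rho_L^{3/2}/\sqrt\sigma)\,w(I)$; summing the drift $\sum_j\frac{\rho_j}{|F_j|^2}W_j$ over all steps, using $\sum_j 1/|F_j|^2=O(1/\sigma)$ and $\rho_0/\sigma=O(\tau_E)$ (because $\rho_i$ stays $\Theta(\rho_0)$ and $|X_i|\le|F_i|$ reaches $\Theta(\sigma)$), gives $\tilde O(\tau_E)\,w(I)$; and for the $y_b$'s the accumulated $\delta$-contributions (in the drift and, via the increment bounds, in the Azuma fluctuations) add the extra $\tilde O(\rho_L\gamma\ell_1|F|\,\delta)$. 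The inductive hypothesis closes as long as the resulting bound is below $\epsilon^\ast$; a standard stopping-time/continuity argument handles the degenerate regime where it is not (there the claimed bound exceeds $1$ and one falls back on a trivial estimate), and rescaling $\epsilon\to O(\epsilon)$ finishes.

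The main obstacle I expect is exactly this coupled bootstrap: the parameter $\rho_i$ governing the drift and increment bounds is a $\gamma$-multlipschitz function of the very quantities $\{x_a(I_i)\}$ being controlled, so the induction must simultaneously certify smallness of all $x_a$-errors and exploit it to keep $\rho_i$ near $\rho_0$; tuning the sub-scale length and the union-bound accounting so that this yields precisely $\tilde O(\tau_E+\sqrt{\gamma\ell_1}\rho_L^{3/2}/\sqrt\sigma)$ (rather than a worse dependence on $\gamma,\ell_1,\rho_L$) is where the care goes. A secondary nuisance is the bookkeeping translating between the distributions $I_i[[f_i]]$, the split graphs $I_i'$, and the graph $H[f]$ appearing in the stability definition, which must be done carefully just to apply the hypotheses and to see that electrical invariance makes $W_i$ a well-defined function of the process.
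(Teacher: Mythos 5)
Your proposal is correct and follows essentially the same route as the paper's proof: decompose the process into $O(\gamma\ell_1\rho_L\log n)$ intervals over which $|F|$ shrinks by only a $1/(\gamma\ell_1\rho_L)$ fraction, use the crude Lipschitz bound within each interval to keep every $x_a$ (and hence, by $\gamma$-multlipschitzness, $\rho_i$) within a constant factor, then apply Azuma to the drift-corrected martingale on each interval and sum. Your "scales $\times$ sub-scales" organization and the explicit bootstrap toward $\rho_0$ are cosmetic variants of the paper's direct interval definition and its use of the global maximum $\tau_E$.
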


\lineend

Before proving this, notice that $f_i\in F_{i+1}$ if and only if the edge set of $I_i$ doesn't change, i.e. one of the following two conditions holds:

\begin{itemize}
\item $\texttt{lev}_{I_i}(f_i)\le 1/2$ and $f_i$ is contracted to form $I_{i+1}$
\item $\texttt{lev}_{I_i}(f_i) > 1/2$ and $f_i$ is deleted to form $I_{i+1}$
\end{itemize}

The proof of Proposition \ref{prop:stable-concentration} uses Azuma's Inequality.

\begin{proof}[Proof of Proposition \ref{prop:stable-concentration}]
Break up the sequence of edges to condition on into a small number of subsequences with guaranteed small changes in $x_a(I)$ and $y_b(I)$ values. Then, apply Theorem \ref{thm:azuma} to each subsequence. Specifically, let $i_0 = 0$ and $i_{j+1}$ be the maximum value for which $|F_{i_{j+1}}| \ge |F_{i_j}|(1 - 1/(16\gamma\ell_1\rho_L))$. Concentration bounds will be shown for each subsequence of $i$s between $i_j$ and $i_{j+1}$.

We start with bounds on $|i_{j+1} - i_j|$. Consider the random variable $|F_i|$. By the fourth property of stable edges, $\textbf{E}[|F_{i+1}| | F_i]\le |F_i| - \frac{1}{4}$. In particular for any $z > i_j$,

$$\textbf{E}[|F_z| | F_{i_j}, i_j]\le |F_{i_j}| - \frac{1}{4}(z - i_j)$$

$X_i = |F_i| - \textbf{E}[|F_i|]$ is a martingale with $c_i = 1$ for all $i$, so by Theorem \ref{thm:azuma}

$$|F_z|\le |F_{i_j}| - \frac{1}{4}(z - i_j) + \sqrt{(z - i_j)\log n}$$

with probability at least $1 - \frac{1}{\text{poly(n)}}$. As long as $z - i_j > 8\log n$, the above inequality implies that

$$|z - i_j| \le 8(|F_{i_j}| - |F_z|)$$

For $z = i_{j+1}$, $|i_{j+1} - i_j|\le \frac{1}{2\gamma\ell_1\rho_L}|F_{i_j}|$ with high probability.

By the first two properties of stable edges and the fact that $|F_i|$ is nondecreasing,

\begin{align*}
x_a(I_i)&\le x_a(I_{i_j})\left(1 + \frac{\rho_L}{|F_{i_{j+1}}|}\right)^{i_{j+1} - i_j}\\
&\le \left(1 + \frac{1}{\gamma\ell_1}\right) x_a(I_{i_j})\\
\end{align*}

and $x_a(I_i)\ge (1 - \frac{1}{\gamma\ell_1}) x_a(I_{i_j})$ for all $i\in [i_j,i_{j+1}]$ and all $a$ with high probability. By the multlipschitzness of $\rho_E$,

$$\rho_i\le (1 + 1/(\gamma \ell_1))^{\gamma \ell_1}\rho_{i_j}\le e \rho_{i_j}$$

for all $i\in [i_j,i_{j+1}]$ with high probability. Similarly,

$$y_b(I_i)\le (1 + 1/(\gamma\ell_1))(y_b(I_{i_j}) + |F|\delta)$$

for all $i\in [i_j,i_{j+1}]$ and $b$ with high probability. 

We now use these worst-case bounds on $x_a$s and $y_b$s within each interval to get a better bound using the expectation. By the third property of stable edges and the definition of $\tau_E$,

\begin{align*}
|\textbf{E}[w(I_{i_{j+1}})|I_{i_j},i_j] - w(I_{i_j})|&\le (i_{j+1} - i_j) \frac{\tau_E}{|F_{i_{j+1}}|} e (w(I_{i_j}) + |F|\delta) + |F|\delta\\
&\le 2e\frac{\tau_E}{\gamma\ell_1\rho_L} w(I_{i_j}) + O(|F|\delta)\\
\end{align*}

for functions $w\in \{y_b\}_b$. For functions $w\in \{x_a\}_a$, a similar bound holds:

\begin{align*}
|\textbf{E}[w(I_{i_{j+1}})|I_{i_j},i_j] - w(I_{i_j})|&\le (i_{j+1} - i_j) \frac{\tau_E}{|F_{i_{j+1}}|} e w(I_{i_j})\\
&\le 2e\frac{\tau_E}{\gamma\ell_1\rho_L} w(I_{i_j})\\
\end{align*}

Therefore, by Theorem \ref{thm:azuma},

$$|w(I_{i_{j+1}}) - w(I_{i_j})|\le 2e\frac{\tau_E}{\gamma\ell_1\rho_L} w(I_{i_j}) + 2e\sqrt{(\log n)\frac{|F_{i_j}|}{2\gamma\ell_1\rho_L}}\frac{\rho_L}{|F_i|} w(I_{i_j}) + \tilde{O}(|F|\delta)$$

with high probability for $w\in \{y_b\}_b$, with analogous statements for $x_a$ with $\delta = 0$ (see proposition guarantees). Since $|F_{i_{j+1}+1}|\le |F_{i_j}|(1 - \frac{1}{16 \rho_L\gamma\ell_1})$, $j\le 16 \rho_L\gamma\ell_1 (\log n)$. Since $|F_i|\ge \sigma$ for all $i$, the total change is

$$\tilde{O}\left(\tau_e + \frac{\rho_L^{3/2}\sqrt{\gamma\ell_1}}{\sqrt{\sigma}}\right)w(I) + \tilde{O}(\rho_L\gamma\ell_1|F|\delta)$$

as desired.
\end{proof}

\subsubsection{Stable oracles and their use in proving the fixing lemmas}

In this section, we exploit Proposition \ref{prop:stable-concentration} to reduce Lemmas \ref{lem:slow-fix} and \ref{lem:fast-fix} to the construction of a certain oracle. This oracle outputs a set of edges remains stable for the main objective $b_{ss'}^T L_H^+ b_{ss'}$, the ``Size'' bound, and an auxilliary objective that bounds the endpoint potentials of two sets of ``deferred'' edges.

Before defining the oracle, we define a proxy for the degree function, called the \emph{normalized degree}, that will be more convenient to control directly than $\Delta^H(X,Y)$:

\begin{definition}[Normalized degree]
For a graph $H$ with disjoint vertex sets $X,Y$, let

$$\delta_{X,Y}(H) := \frac{\Delta^H(X,Y)}{c^H(X,Y)}$$
\end{definition}

It is easier to control due to the following representation, which involves quantities whose change can be easily analyzed using rank 1 updates:

\begin{remark}
$\delta_{X,Y}(H)$ can be written in a different way:

$$\delta_{X,Y}(H) = \sum_{w\in Y} \sum_{e\in \partial_H w}(b_{xw}^T L_{H/X}^+ b_{xw})\frac{b_{xy}^T L_{H/(X\cup Y)}^+ b_e}{r_e}$$

where $x$ and $y$ are the identifications of $X$ and $Y$ respectively in $H/(X\cup Y)$ and $e$ is oriented towards $w$.
\end{remark}

\begin{proof}
By Proposition \ref{prop:lin-alg-norm-cond}, for each $w\in Y$,

$$\sum_{e\in \partial_H w} \frac{b_{xy}^T L_{H/(X\cup Y)}^+ b_e}{r_e} = \frac{c_{xw}^{\texttt{Schur}(H/X,X\cup Y)}}{\sum_{w'\in Y} c_{xw'}^{\texttt{Schur}(H/X,X\cup Y)}}$$

Since Schur complement conductances do not depend on edges with endpoints that were not eliminated,

$$\sum_{w'\in Y} c_{xw'}^{\texttt{Schur}(H/X,X\cup Y)} = c^H(X,Y)$$

Combining these equalities shows that

$$\sum_{w\in Y} (b_{xw}^T L_{H/X}^+ b_{xw})\sum_{e\in \partial_H w}\frac{b_{xy}^T L_{H/(X\cup Y)}^+ b_e}{r_e} = \frac{\Delta^H(X,Y)}{c^H(X,Y)}$$

as desired.
\end{proof}

Now, we define an oracle that generalizes the concept of picking the minimum energy edge that was so important to $\SpecialFix$:

\linestart

\begin{definition}[Stable oracles]\label{def:oracle}
An $(\rho,K(z))$-\emph{stable oracle}, is a function

$Z\gets \Oracle(I,S,S',D,A,B,W)$ that takes in a graph $I$, two disjoint sets of vertices $S,S'\subseteq V(I)$, a set of \emph{deleted} edges $D\subseteq E(I)$, two sets of \emph{deferred} edges $A,B\subseteq E(I)$, and a set of edges $W$ to condition on. $K$ is allowed to be a function of some parameter $z$. This oracle is given inputs that satisfy the following conditions:

\begin{itemize}
\item (Bounded leverage score difference) For all $e\in W$, $|\texttt{lev}_{I\setminus D}(e) - \texttt{lev}_{I/(S,S')}(e)|\le 1/16$
\item (Narrow potential neighborhood) There is a $p\le 1/4$ for which \textbf{one} of the following conditions holds:
    \begin{itemize}
    \item ($s$ narrow potential neighborhood) $\frac{b_{ss'}^T L_{(I\setminus D)/(S,S')}^+ ((b_{su} + b_{sv})/2)}{b_{ss'}^T L_{(I\setminus D)/(S,S')}^+ b_{ss'}}\in [p,2p]$ for all $\{u,v\}\in W$.
    \item ($s'$ narrow potential neighborhood) $\frac{b_{ss'}^T L_{(I\setminus D)/(S,S')}^+ ((b_{us'} + b_{vs'})/2)}{b_{ss'}^T L_{(I\setminus D)/(S,S')}^+ b_{ss'}}\in [p,2p]$ for all $\{u,v\}\in W$.
    \end{itemize}
\end{itemize}

It outputs a set $Z\subseteq E(I)$ of edges to condition on.

Let $I_0 = I$ and for each $i > 0$, obtain $I_i$ by picking a uniformly random edge $f_{i-1}\in Z$, letting $I_i\gets I_{i-1}[[f_{i-1}]]$, and removing $f_{i-1}$ from $Z$.

$\Oracle$ satisfies the following stability-related properties with high probability for all $i < K(|W|)$:

\begin{itemize}
\item (Size of $Z$) $|Z|\ge |W|/(\log^4 n)$
\item (Leverage score stability)
    \begin{itemize}
    \item (Upper leverage score stability) $|\texttt{lev}_{I_i\setminus D}(f_i) - \texttt{lev}_{I\setminus D}(f_i)|\le 1/16$
    \item (Lower leverage score stability) $|\texttt{lev}_{I_i/(S,S')}(f_i) - \texttt{lev}_{I/(S,S')}(f_i)|\le 1/16$
    \end{itemize}

\item (Midpoint potential stability)
    \begin{itemize}
    \item ($s$ midpoint potential stability) Let $f_i = \{u_i,v_i\}$. Then

    $$\frac{b_{ss'}^T L_{(I_i\setminus D)/(S,S')}^+ (b_{su_i} + b_{sv_i})/2}{b_{ss'}^T L_{(I_i\setminus D)/(S,S')}^+ b_{ss'}} \ge p/2$$

    \item ($s'$ midpoint potential stability) Let $f_i = \{u_i,v_i\}$. Then 

    $$\frac{b_{ss'}^T L_{(I_i\setminus D)/(S,S')}^+ (b_{u_is'} + b_{v_is'})/2}{b_{ss'}^T L_{(I_i\setminus D)/(S,S')}^+ b_{ss'}} \ge p/2$$

    \end{itemize}

\item ($S-S'$-normalized degree change stability)
    \begin{itemize}
    \item ($S-S'$ conductance term stability)

    $$\sum_{w\in S'} (b_{sw}^T L_{(I_i\setminus D)/S}^+ b_{sw}) \sum_{e\in \partial_{I_i}w} \frac{|b_{ss'}^T L_{(I_i\setminus D)/(S,S')}^+ b_{f_i}| |b_{f_i}^T L_{(I_i\setminus D)/(S,S')}^+ b_e|}{r_{f_i}r_e} \le \frac{\rho}{|W|}\delta_{S,S'}(I\setminus D)$$

    \item ($S'-S$ conductance term stability)

    $$\sum_{w\in S} (b_{s'w}^T L_{(I_i\setminus D)/S'}^+ b_{s'w}) \sum_{e\in \partial_{I_i}w} \frac{|b_{ss'}^T L_{(I_i\setminus D)/(S,S')}^+ b_{f_i}| |b_{f_i}^T L_{(I_i\setminus D)/(S,S')}^+ b_e|}{r_{f_i}r_e} \le \frac{\rho}{|W|}\delta_{S',S}(I\setminus D)$$

    \item ($S-S'$ energy term stability)

    $$\sum_{w\in S'} \frac{(b_{sw}^T L_{(I_i\setminus D)/S}^+ b_{f_i})^2}{r_{f_i}} \sum_{e\in \partial_{I_i}w} \frac{b_{ss'}^T L_{(I_i\setminus D)/(S,S')}^+ b_e}{r_e}\le \frac{\rho}{|W|}\delta_{S,S'}(I\setminus D)$$

    \item ($S'-S$ energy term stability)

    $$\sum_{w\in S} \frac{(b_{s'w}^T L_{(I_i\setminus D)/S'}^+ b_{f_i})^2}{r_{f_i}} \sum_{e\in \partial_{I_i}w} \frac{b_{ss'}^T L_{(I_i\setminus D)/(S,S')}^+ b_e}{r_e}\le \frac{\rho}{|W|}\delta_{S',S}(I\setminus D)$$
    \end{itemize}

\item (Deferred endpoint potential change stability)

\begin{align*}
&\left(\sum_{\{u,v\}\in A} \frac{|b_{ss'}^T L_{(I_i\setminus D)/(S,S')}^+ b_{f_i}| |b_{f_i}^T L_{(I_i\setminus D)/(S,S')}^+ (b_{su} + b_{sv})|}{r_{f_i}}\right)\\
&+ \left(\sum_{\{u,v\}\in B} \frac{|b_{ss'}^T L_{(I_i\setminus D)/(S,S')}^+ b_{f_i}| |b_{f_i}^T L_{(I_i\setminus D)/(S,S')}^+ (b_{us'} + b_{vs'})|}{r_{f_i}}\right)\\
&\le \frac{\rho}{|W|}\left(\sum_{\{u,v\}\in A} b_{ss'}^T L_{(I\setminus D)/(S,S')}^+ (b_{su} + b_{sv}) + \sum_{\{u,v\}\in B} b_{ss'}^T L_{(I\setminus D)/(S,S')}^+ (b_{us'} + b_{vs'})\right) + \frac{r_{min}}{n^4}
\end{align*}

\item (Main objective change stability)

$$\frac{(b_{ss'}^T L_{(I_i\setminus D)/(S,S')}^+ b_{f_i})^2}{r_{f_i}}\le \frac{\rho}{|W|} b_{ss'}^T L_{(I\setminus D)/(S,S')}^+ b_{ss'}$$

\end{itemize}

$|W|$ in each of the above guarantees refers to the original size of $W$.
\end{definition}

\lineend

We now use this oracle to prove the following result, which will later be used to show Lemmas \ref{lem:slow-fix} and \ref{lem:fast-fix}:

\linestart

\begin{lemma}\label{lem:oracle-fix}
Given a $(\rho,K)$-stable oracle $\Oracle$, there is an algorithm $\Fix(I,J,S,S',\ep,F,D)$ that takes in a graph $I$, a random graph $J$ that is a valid sample from the distribution $I[F]$, a set $D\subseteq E(I)$ of deleted edges, $S,S'\subseteq V(G)$, $F\subseteq E(I)\setminus D$, and an accuracy parameter $\ep\in (0,1)$. With high probability on $J$, there is a set $F'\gets \Fix(I,J,S,S',\ep,F,D) \subseteq F$ with two properties:

\begin{itemize}
\item (Conductance) $c^{J\setminus D\setminus F'}(S,S')\le (1 + \ep)c^{I\setminus D}(S,S')$
\item (Size) $|F'|\le \tilde{O}(\rho^3\ep^{-3}(\Delta^{I\setminus D}(S,S') + \Delta^{I\setminus D}(S',S) + |D|))$
\end{itemize}

Futhermore, $\Fix$ takes $\tilde{O}(m + (\max_{z\le |F|} \frac{z \rho^3}{K(z)} + \log^2 n)(m + \mc T(\Oracle)))$ time, where $\mc T(\Oracle)$ is the runtime of $\Oracle$.
\end{lemma}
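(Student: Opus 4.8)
## Proof proposal for Lemma \ref{lem:oracle-fix}

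The plan is to run the incremental conditioning strategy from Proposition \ref{prop:split-partial} with the edge-selection distribution $\mc D_k$ driven by repeated calls to $\Oracle$, and then to wrap the whole process in Proposition \ref{prop:stable-concentration}. First I would set up the algorithm $\Fix$: maintain a current graph $I_i$ (together with its partially-split edge set), the deleted set $D$, and two growing sets of \emph{deferred} edges $A$ (with $s$-side small midpoint potential) and $B$ (with $s'$-side small midpoint potential). Repeatedly, bucket the non-deferred edges of $F$ by (a) whether $|\texttt{lev}_{I\setminus D}(e)-\texttt{lev}_{I/(S,S')}(e)|\le 1/16$, and (b) a dyadic scale $p\in\{2^{-1},2^{-2},\hdots\}$ of the $s$-side or $s'$-side normalized midpoint potential. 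Edges failing (a) or having midpoint potential below roughly $\ep/\log^c n$ on one side get moved to $A$ or $B$; these are deferred. On the remaining (large) bucket $W$, call $\Oracle(I_i,S,S',D,A,B,W)$ to get $Z$, condition on $K(|W|)$ uniformly random edges from $Z$ (splitting each first via $\texttt{Split}$, as in Proposition \ref{prop:split-sample}), and iterate. Stop once the number of non-deferred edges drops below $\sigma := \tilde O(\rho^3\ep^{-2}(\Delta^{I\setminus D}(S,S')+\Delta^{I\setminus D}(S',S)+|D|))$, and output $F'$ equal to the union of all non-deferred survivors with the subset of $A\cup B$ having a midpoint potential in $[\ep,1-\ep]$ on the relevant side.

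The core of the argument is to feed the right collection of electrical functions into Proposition \ref{prop:stable-concentration}. I would take the $\{y_b\}$ to include the main objective $b_{ss'}^T L_{(I\setminus D)/(S,S')}^+ b_{ss'}$ (which by Proposition \ref{prop:lin-alg-total-cond} is $1/c^{I\setminus D}(S,S')$), the normalized degrees $\delta_{S,S'}$ and $\delta_{S',S}$, and the two deferred-endpoint-potential sums $\sum_{\{u,v\}\in A} b_{ss'}^T L^+ (b_{su}+b_{sv})$ and $\sum_{\{u,v\}\in B} b_{ss'}^T L^+ (b_{us'}+b_{vs'})$; the $\{x_a\}$ would be $\delta_{S,S'}(I\setminus D)$ and $\delta_{S',S}(I\setminus D)$ since $\rho_i$ (and hence the oracle's stability parameter) is allowed to scale with these. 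The eight "stability" bullets in Definition \ref{def:oracle}, combined with the rank-1 update formulas Propositions \ref{prop:deletion-update} and \ref{prop:contraction-update} and the expectation-is-a-martingale computation from the warmup (Lemma \ref{lem:restate-special-slow-fix}), are exactly what is needed to certify that $W$ (hence $Z$) is $(\rho_L,\rho_i,\delta)$-stable for each of these functions with $\rho_L = O(\rho)$, $\delta = r_{min}/n^4$ — one checks that the expected one-step change of each $y_b$ is a product of a midpoint-potential/energy term (bounded by $\rho/|W|$ times the function, per the oracle) and a leverage-score-difference term (bounded by $O(1)$ via the "bounded leverage score difference" and "leverage score stability" conditions, so that the denominators $\texttt{nonlev},\texttt{lev}$ in the rank-1 formulas are bounded below). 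Proposition \ref{prop:stable-concentration} then yields, with high probability, that over the entire run every $y_b$ changes by at most a $(1+\tilde O(\rho^{3/2}/\sqrt\sigma))(1) + \tilde O(\rho^2|F|\delta)$ factor; since $|F|\delta \le r_{min}/n^3$ and the functions in question are $\ge r_{min}$, the additive term is negligible, and choosing $\sigma$ as above makes the multiplicative factor $1+O(\ep)$.

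From there the conclusions follow in the style of Section \ref{subsec:slow-intuition}. The main objective is preserved up to $(1\pm O(\ep))$, so $c^{J_{\mathrm{final}}\setminus D}(S,S')\le(1+\ep)c^{I\setminus D}(S,S')$; by Proposition \ref{prop:split-sample}/Proposition \ref{prop:split-partial}, the distribution of $J_{\mathrm{final}}$ together with the yet-unconditioned edges $F'$ equals the distribution $I[F]$ restricted suitably, so the bound transfers to $J$ and $F'$, using Rayleigh monotonicity to pass from "condition on $F'$" to "delete $F'$". For the size bound: the non-deferred survivors number $\le\sigma = \tilde O(\rho^3\ep^{-3}(\Delta^{I\setminus D}(S,S')+\Delta^{I\setminus D}(S',S)+|D|))$ (I'd actually take $\sigma$ with an $\ep^{-3}$ to absorb all logarithmic and $\rho$ losses cleanly). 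For the deferred edges, the key point is that each deferred edge was deferred with small midpoint potential $\le O(\ep/\log^c n)$ on one side, and the total number of deferral rounds is controlled because each round either shrinks $W$ or is charged against a potential-scale bucket; so $|A|+|B| = \tilde O(\rho^3\ep^{-2}(\dots))$, and their total endpoint potential is $\tilde O(\rho^3\ep^{-1}(\dots))$ at the start, hence (being a preserved $y_b$) at the end too; Markov's inequality then bounds by $\tilde O(\rho^3\ep^{-2}(\dots))$ the count of deferred edges with any endpoint midpoint-potential in $[\ep,1-\ep]$ — only these need to go into $F'$, and by Proposition \ref{prop:cond-pot-bound} conditioning on the rest moves the resistance by at most $(1-O(\ep))$. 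The runtime bound is bookkeeping: $\tilde O(m)$ preprocessing plus, per "super-round," one $\Oracle$ call and $\tilde O(m)$ Laplacian-solve work to recompute the bucketing, with the number of conditioning steps per super-round being $K(|W|)$, giving $\tilde O(m + (\max_{z\le|F|}\tfrac{z\rho^3}{K(z)} + \log^2 n)(m + \mc T(\Oracle)))$.

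The step I expect to be the main obstacle is verifying that the oracle's eight stability guarantees, which are stated for a \emph{single} step $I_i$, combine correctly through the rank-1 update identities to give genuine $(\rho_L,\rho_E,\delta)$-stability of the \emph{set} $Z$ for each chosen electrical function — in particular, bounding the expected change $\frac1{|Z|}\sum_{f\in Z}|g(I_i)-\mathbf E_{I_{i+1}\sim I_i[[f]]}[g(I_{i+1})]|$ for $g\in\{\delta_{S,S'},\delta_{S',S}\}$ and for the deferred-potential sums requires expanding $\delta_{X,Y}$ via its Remark \ref{rmk:carvable-ch}-style representation as $\sum_w (b_{xw}^T L_{H/X}^+ b_{xw})\sum_{e\in\partial w} b_{xy}^T L_{H/(X\cup Y)}^+ b_e/r_e$, applying rank-1 updates to \emph{both} the per-vertex resistance factor and the flow factor, and matching the resulting cross-terms exactly against the "conductance term" and "energy term" stability bullets. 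Getting the $D$-dependence right (the leverage scores are taken in $I\setminus D$ while the tree is sampled in $I$, so $|D|$ extra leverage-score drift must be absorbed) and confirming that the deferral bookkeeping closes the $\ep^{-2}$ vs $\ep^{-3}$ gap are the other delicate points.
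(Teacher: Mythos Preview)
Your high-level architecture is correct and matches the paper: bucket by midpoint potential, defer small buckets to $A$/$B$, call the oracle on the largest remaining bucket, feed the main objective, the normalized degrees $\delta_{S,S'},\delta_{S',S}$, and the deferred-potential sums into Proposition~\ref{prop:stable-concentration}, and finish with Markov plus Proposition~\ref{prop:cond-pot-bound}. The runtime bookkeeping is also right.

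The genuine gap is in your $\rho_E$ bound. You claim the expected one-step change of each $y_b$ factors as (oracle-bounded term)$\times$(leverage-score-difference term), with the latter $O(1)$ because the $\texttt{lev}$/$\texttt{nonlev}$ denominators are bounded. That is only the Lipschitz ($\rho_L$) analysis. For the \emph{expected} change, the rank-1 update produces a factor $\texttt{levcng}_{I_k\to (I_k\setminus D)/(S,S')}(f_k)$ (and, for $\delta_{S,S'}$, also $\texttt{levcng}_{I_k\to (I_k\setminus D)/S}(f_k)$ and a second-order $\alpha_{s,S'}^{(I_k\setminus D)/S}(f_k)$ term), and these are \emph{not} $O(1)$ per edge. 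The paper controls them by averaging over the uniformly random choice of $f_k$ from $Z$ and invoking Lemmas~\ref{lem:first-order-deg}, \ref{lem:del-first-order-deg}, and \ref{lem:second-order-deg}, which bound $\sum_{f\in W}|\texttt{levcng}(f)|$ and $\sum_{f\in W}\alpha(f)$ by $O(\Delta^{I_k\setminus D}(U_S,S')+\Delta^{I_k\setminus D}(U_{S'},S)+|D|)$. Combined with Proposition~\ref{prop:deg-pot-bound} and the oracle's ``Midpoint potential stability'' (which keeps the midpoints inside $U_S\cap U_{S'}$), this gives $\rho_E=\tilde O(\rho\Delta_k/p)$ --- and this $p$-dependence is precisely why the deferral threshold in the paper is $|W|\le \tilde O(\rho^3\ep^{-2}\Delta)\cdot 2^i$ per bucket, not a fixed global $\sigma$ or a fixed potential cutoff as you wrote. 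Without this piece you cannot get $\tau_E\le \ep/\text{polylog}$ in Proposition~\ref{prop:stable-concentration}, and the concentration does not close.

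A second, smaller point: your deferral rule ``midpoint potential below $\ep/\log^c n$'' is not what the paper does and would not give the right accounting for $|A|+|B|$; the paper defers the entire bucket $W$ whenever $|W|$ falls below the $p$-scaled threshold above, which is what makes the total deferred potential $\tilde O(\rho^3\ep^{-2}\Delta)\cdot b_{ss'}^TL^+b_{ss'}$ and hence (after Markov at threshold $\ep$) yields the $\ep^{-3}$ size bound.
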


\lineend

The runtime in the above guarantee corresponds to running $\Oracle$ $\frac{|F|\rho^3}{K}$ times. Later in this section, we implement a simple $(O(\text{polylog}(n)),1)$-stable oracle called $\SlowOracle$. This oracle is sufficient to prove Lemma \ref{lem:slow-fix}. However, $\mc T(\SlowOracle) = \Theta(m^2)$. Furthermore, $K = 1$. As a result, the runtime of the algorithm could be as high as $\Theta(m^3)$. To reduce this runtime, we exploit Theorem \ref{thm:localization} and sketching techniques to obtain $\FastOracle$. Sketching allows us to make $\mc T(\FastOracle) = m^{1 + o(1)}$ and Theorem \ref{thm:localization} allows us to reuse $Z$ for many iterations. Specifically, $\FastOracle$ is an $(m^{o(1)},z m^{-o(1)})$-stable oracle. Therefore, plugging in $\FastOracle$ to $\Fix$ proves Lemma \ref{lem:fast-fix}.

To prove Lemma \ref{lem:oracle-fix}, proceed as described in Section \ref{subsec:slow-intuition}. At any given point, bucket edges in $F$ by their maximum normalized endpoint $S-S'$ potential. Let $[p,2p]$ be the potential range with the most edges of $F$ in the corresponding bucket. If this bucket $W$ has more than $\rho^3((\Delta^{I\setminus D}(S,S') + \Delta^{I\setminus D}(S',S))/p + |D|)$, call $\Oracle$ to produce a subset $Z\subseteq W$ and split and condition on a uniformly random sample of $K$ elements from $Z$. Otherwise, ``defer'' conditioning on all edges of $Z$ by adding them to $A$ or $B$, depending on whether they are closer to $S$ or $S'$ respectively in the potential embedding. Once all remaining edges of $F$ are in either $A$ or $B$, delete the ones with an endpoint with normalized potential in the interval $[\ep,1-\ep]$ and condition on all others.

We now give pseudocode that implements the above description:

\begin{algorithm}[H]
\SetAlgoLined
\DontPrintSemicolon
\caption{$\Fix(I,J,S,S',\ep,F,D)$}

    \KwData{two disjoint sets of vertices $S$ and $S'$, the graph $I$, the random sample $J\sim I[F]$, $\epsilon\in (0,1)$, $F,D\subseteq E(G)$}

    \KwResult{the set $F'$}

    \tcp{edges whose endpoint potentials we want to control}
    $A,B\gets \emptyset$\;

    \While{$|F|$ is not empty}{

        \tcp{identifications $s,s'$}

        $I'\gets (I\setminus D)/(S,S')$\;

        \ForEach{$i \gets 0,1,\hdots$}{

            $X_i\gets \{\{u,v\}\in F: \frac{b_{ss'}^T L_{I'}^+ (b_{su} + b_{sv})/2}{b_{ss'}^T L_{I'}^+ b_{ss'}}\in (2^{-i-2}, 2^{-i-1}]\}$\;

            $Y_i\gets \{\{u,v\}\in F: \frac{b_{ss'}^T L_{I'}^+ (b_{us'} + b_{vs'})/2}{b_{ss'}^T L_{I'}^+ b_{ss'}}\in [2^{-i-2}, 2^{-i-1})\}$\;

        }

        $i_{\max}\gets 2\log n$\;

        $X_{\text{low}}\gets \cup_{i > i_{\max}} X_i$\;

        $Y_{\text{low}}\gets \cup_{i > i_{\max}} Y_i$\;

        $W \gets \arg\max_{W\in \{X_0,Y_0,\hdots,X_{i_{\max}},Y_{i_{\max}},X_{\text{low}},Y_{\text{low}}\}} |W|$\;

        $i\gets $ index of $W$ as $X_i$ or $Y_i$, with $i\gets i_{\max}+1$ if $W$ is low\;

        $p\gets 2^{-i-2}$\;

        \uIf{$|W|\le 10000(\log n)\xibuc^2\rho^3\epsilon^{-2}(\Delta^{I\setminus D}(S,S') + \Delta^{I\setminus D}(S',S) + |D|)2^i$}{

            \tcp{defer all edges in $W$}

            $F\gets F\setminus W$\;\label{line:remove-w}

            Add edges of $W$ to $A$ if $W = X_i$, otherwise add $W$ to $B$\;

        }\Else{

            Remove all edges $e$ from $W$ for which $\texttt{lev}_{I\setminus D}(e) - \texttt{lev}_{I/(S,S')}(e) > 1/32$ using Johnson-Lindenstrauss with $\ep = 1/64$\;\label{line:remove-high-levcng}

            $Z\gets \Oracle(I,S,S',D,A,B,W)$\;

            $Z'\gets$ a uniformly random subset of $Z$ with size $K(|W|)$\; \label{line:select-1}

            \tcp{can be implemented using $J$ in $O(|Z'|)$ time}
            $I\gets I[[Z']]$\;\label{line:split-condition}

            $F\gets F\cap E(I)$\; \label{line:select-2}

        }

    }

    $F'\gets $ set of edges in $A\cup B$ that have some endpoint with normalized $I'$ potential between $\epsilon/2$ and $1 - \epsilon/2$\; \label{line:contract-remaining}

    \Return $F'$\;
\end{algorithm}

Our analysis of this algorithm centers around applying Proposition \ref{prop:stable-concentration}. Specifically, Proposition \ref{prop:stable-concentration} is applied to show that the electrical functions

\begin{itemize}
\item $\delta_{S,S'}(H\setminus D)$
\item $\delta_{S',S}(H\setminus D)$
\item $b_{ss'}^T L_{(H\setminus D)/(S,S')}^+ b_{ss'}$
\item $\left(\sum_{u\in V(A)} b_{su}^T L_{(H\setminus D)/(S,S')}^+ b_{ss'}\right) + \left(\sum_{v\in V(B)} b_{vs'}^T L_{(H\setminus D)/(S,S')}^+ b_{ss'}\right)$
\end{itemize}

do not change much over the course of conditioning on many edges. We focus our discussion of intuition on the first two functions, since the others are easier. The first two functions are very similar, so we focus our discussion on the first one. Furthermore, bounding $\rho_E$ is in Proposition \ref{prop:stable-concentration} is more difficult than bounding $\rho_L$, so we focus on that.

To bound expected changes in $\delta_{S,S'}(H\setminus D)$, it helps to define a quantity related to the discrepancy between leverage scores in $H$ (the graph a tree is being sampled in) and $(H\setminus D)/(S,S')$ (the graph in which quantities of interest are defined):

\begin{definition}[Leverage score change]
Consider an edge $e$ in a graph $H$ obtained by identifying vertices and deleting edges in a graph $G$. Define

$$\texttt{levcng}_{G\rightarrow H}(e) = \frac{r_e - b_e^T L_G^+ b_e}{r_e - b_e^T L_H^+ b_e} - \frac{b_e^T L_G^+ b_e}{b_e^T L_H^+ b_e} = \frac{r_e(b_e^T L_H^+ b_e - b_e^T L_G^+ b_e)}{(r_e - b_e^T L_H^+ b_e) b_e^T L_H^+ b_e}$$
\end{definition}

It is also helpful to define the following maximum energy fraction, which is used to define a generalization of the fact that the effective resistance is the sum of energies on all edges:

\begin{definition}[Maximum energy fraction]
For a graph $H$, a set $Y\subseteq V(H)$, a vertex $x_0\notin Y$, and some edge $f\in E(X)$, let

$$\alpha_{x_0,Y}^H(f) = \max_{w\in Y} \frac{(b_{x_0w}^T L_H^+ b_f)^2}{(b_{x_0w}^T L_H^+ b_{x_0w})r_f}$$
\end{definition}

\begin{remark}\label{rmk:lev-cng}
When $\gamma\le \texttt{lev}_H(e)\le 1 - \gamma$ for some $\gamma\in [0,1/2]$,

$$|\texttt{lev}_G(e) - \texttt{lev}_H(e)|\le |\texttt{levcng}_{G\rightarrow H}(e)|\le 2|\texttt{lev}_G(e) - \texttt{lev}_H(e)|/\gamma$$
\end{remark}

Consider any edge $f\in E(H)\setminus D$. One can write the expected change in $\delta_{S,S'}(H\setminus D)$ after conditioning using rank-one updates:

\begin{align*}
&\textbf{E}_{H'\sim H[f]}[\delta_{S,S'}(H'\setminus D)] - \delta_{S,S'}(H\setminus D)\\
&= -\texttt{levcng}_{H\rightarrow (H\setminus D)/S}(f)\sum_{w\in S'} \sum_{e\in \partial_H w} \left(\frac{(b_{sw}^T L_{(H\setminus D)/S}^+ b_f)^2}{r_f}\right)\left(\frac{b_{ss'}^T L_{(H\setminus D)/(S,S')}^+ b_e}{r_e}\right)\\
&- \texttt{levcng}_{H\rightarrow (H\setminus D)/(S,S')}(f) \sum_{w\in S'} \sum_{e\in \partial_H w} \left(b_{sw}^T L_{(H\setminus D)/S}^+ b_{sw}\right)\left(\frac{(b_{ss'}^T L_{(H\setminus D)/(S,S')}^+ b_f)(b_f^T L_{(H\setminus D)/(S,S')}^+ b_e)}{r_fr_e}\right)\\
&+ \left(\frac{\texttt{lev}_H(f)}{\texttt{lev}_{(H\setminus D)/S}(f)\texttt{lev}_{(H\setminus D)/(S,S')}(f)} + \frac{1 - \texttt{lev}_H(f)}{(1 - \texttt{lev}_{(H\setminus D)/S}(f))(1 - \texttt{lev}_{(H\setminus D)/(S,S')}(f))}\right)\\
&\sum_{w\in S'} \sum_{e\in \partial_H w} \left(\frac{(b_{sw}^T L_{(H\setminus D)/S}^+ b_f)^2}{r_f}\right)\left(\frac{(b_{ss'}^T L_{(H\setminus D)/(S,S')}^+ b_f)(b_f^T L_{(H\setminus D)/(S,S')}^+ b_e)}{r_fr_e}\right)\\
\end{align*}

By the triangle inequality and the definition of $\alpha_{s,S'}^{(H\setminus D)/S}(f)$,

\begin{align*}
&|\textbf{E}_{H'\sim H[f]}[\delta_{S,S'}(H'\setminus D)] - \delta_{S,S'}(H\setminus D)|\\
&\le |\texttt{levcng}_{H\rightarrow (H\setminus D)/S}(f)|\sum_{w\in S'} \sum_{e\in \partial_H w} \left(\frac{(b_{sw}^T L_{(H\setminus D)/S}^+ b_f)^2}{r_f}\right)\left(\frac{b_{ss'}^T L_{(H\setminus D)/(S\cup S')}^+ b_e}{r_e}\right)\\
& |\texttt{levcng}_{H\rightarrow (H\setminus D)/(S,S')}(f)| \sum_{w\in S'} \sum_{e\in \partial_H w} \left(b_{sw}^T L_{(H\setminus D)/S}^+ b_{sw}\right)\left(\frac{|b_{ss'}^T L_{(H\setminus D)/(S,S')}^+ b_f| |b_f^T L_{(H\setminus D)/(S,S')}^+ b_e|}{r_fr_e}\right)\\
&+ \left(\frac{\texttt{lev}_H(f)}{\texttt{lev}_{(H\setminus D)/S}(f)\texttt{lev}_{(H\setminus D)/(S,S')}(f)} + \frac{1 - \texttt{lev}_H(f)}{(1 - \texttt{lev}_{(H\setminus D)/S}(f))(1 - \texttt{lev}_{(H\setminus D)/(S,S')}(f))}\right)\\
&\alpha_{s,S'}^{(H\setminus D)/S}(f)\sum_{w\in S'} \sum_{e\in \partial_H w} \left(b_{sw}^T L_{(H\setminus D)/S}^+ b_{sw}\right)\left(\frac{|b_{ss'}^T L_{(H\setminus D)/(S,S')}^+ b_f| |b_f^T L_{(H\setminus D)/(S,S')}^+ b_e|}{r_fr_e}\right)\\
\end{align*}

The stable oracle guarantees can be used on all of the quantities in the above sum. The ``$S-S'$ normalized degree change stability'' guarantees bound the double sums in each of the three above terms. The ``Leverage score stability'' guarantees bound the leverage score quantity in the second order term. Initially, the $\texttt{levcng}$ and $\alpha$ quantities may seem trickier to bound. Luckily, we can prove bounds on their average values in terms of $\Delta^{H\setminus D}(U_S,S') + \Delta^{H\setminus D}(U_{S'},S)$, where $U_S$ and $U_S'$ are the sets of vertices with normalized potential less than $1-p$ and greater than $p$ respectively, with $s$ and $s'$ assigned to 0 and 1 respectively. These $\Delta$ quantities are in turn bounded using the ``Midpoint potential stability'' guarantee, along with the following:

\begin{restatable}{proposition}{propdegpotbound}\label{prop:deg-pot-bound}
Consider two disjoint sets of vertices $X$ and $Y$ in a graph $G$. Let $G' = G/(X\cup Y)$, with $x$ and $y$ the identifications of $X$ and $Y$ respectively. Let $Z$ be the set of vertices $v$ with electrical potential $p_v \le \gamma$ for some $\gamma\in (0,1)$ with boundary conditions $p_x = 0$ and $p_y = 1$. Then

$$\Delta^G(Z,Y)\le \frac{1}{1-\gamma}\Delta^G(X,Y)$$

where $z$ is the identification of $Z$ in $G/Z$.
\end{restatable}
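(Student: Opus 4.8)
The plan is to convert $\Delta^G(\cdot,Y)$ into a sum of random‑walk hitting probabilities and then compare the two sums by a one‑step strong Markov argument; the loss factor $\frac{1}{1-\gamma}$ will come directly from the defining property of the sublevel set $Z$.

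First I would prove a reformulation valid for any $S\subseteq V(G)$ disjoint from $Y$: letting $A_w^S$ be the event that a random walk started at $w$ makes its last visit to $Y$ strictly before hitting $S$ at the vertex $w$, one has $\Delta^G(S,Y)=\sum_{w\in Y}\Pr_w^G[A_w^S]$. To get this, unfold the definition of $\Delta^G(S,Y)$ into $\sum_{w\in Y}\texttt{Reff}_{J'}(s,w)\,c_{sw}^{J'}=\sum_{w\in Y}\texttt{lev}_{J'}(\{s,w\})$, where $J'=\texttt{Schur}(G/S,\{s\}\cup Y)$ and $s$ is the image of $S$; rewrite $\texttt{lev}_{J'}(\{s,w\})$, via Theorem \ref{thm:edge-visits} applied in $J'$ with source $w$, target $s$, and edge $\{w,s\}$ oriented toward $s$ (this edge being crossed toward $s$ at most once before $t_s$), as the probability that a walk in $J'$ from $w$ first reaches $s$ along $\{w,s\}$; and then invoke Theorem \ref{thm:schur-walk-equiv} together with the fact that a walk in $G/S$ coincides with a walk in $G$ until $S$ is first hit to identify that probability with $\Pr_w^G[A_w^S]$. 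Applying this with $S=X$ and with $S=Z$ (which, pulled back to $V(G)$, contains $X$ and is disjoint from $Y$) reduces the proposition to showing $\Pr_w^G[A_w^Z]\le\frac{1}{1-\gamma}\Pr_w^G[A_w^X]$ for each $w\in Y$.

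Second, I would record the electrical input: by Theorem \ref{thm:offline-prob} the normalized potential of a vertex $v$ equals $\Pr_v^G[t_Y<t_X]$, so every $v\in Z$ satisfies $\Pr_v^G[t_X<t_Y]=1-p_v\ge 1-\gamma$. Then, for fixed $w\in Y$, I would stop a walk from $w$ at the first time $\tau$ it hits $Z$ (a.s.\ finite, and $\tau>0$ since $w\notin Z$). Because $X\subseteq Z$, the walk meets no vertex of $X$ before $\tau$; hence on the event ``$A_w^Z$ holds and the post‑$\tau$ walk reaches $X$ before $Y$'' every visit to $Y$ occurring before $t_X$ occurs before $\tau$, so the last such visit is a visit to $w$, i.e.\ this event is contained in $A_w^X$. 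By the strong Markov property at $\tau$ and the escape bound from the previous step, $\Pr_w^G[A_w^X]\ge(1-\gamma)\Pr_w^G[A_w^Z]$; summing over $w\in Y$ and using the reformulation finishes the proof.

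The only genuinely delicate step is the first one: the chain of identities is short, but it is easy to misidentify the event encoded by $\texttt{lev}_{J'}(\{s,w\})$. It is precisely the ``last visit to $Y$ before $t_S$ is at $w$'' event, which is strictly weaker than ``$t_S<t_{Y\setminus\{w\}}$''; the latter is a tempting first guess but yields a false identity (it already fails on a three‑vertex example). Everything else is routine, modulo the usual connectedness/recurrence caveats, which hold in the setting where $\Delta^G$ and $c^G$ are defined.
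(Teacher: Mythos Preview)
Your proof is correct and takes a genuinely different route from the paper's. The paper argues termwise on the decomposition $\Delta^G(S,Y)=\sum_{w\in Y}\texttt{Reff}_{G/S}(s,w)\,c_{sw}^{H_s}$: it bounds the effective-resistance factor by Rayleigh monotonicity (since $X\subseteq Z$) and bounds the conductance factor via the exact flow-scaling statement Proposition~\ref{prop:elec-flow-cont}, which says that grounding the $\gamma$-sublevel set multiplies the inflow at each $w\in Y$ by exactly $\tfrac{1}{1-\gamma}$.

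You instead collapse each product $\texttt{Reff}\cdot c=\texttt{lev}_{J'}(\{s,w\})$ into the single random-walk probability $\Pr_w[A_w^S]$, which is a clean and correct identification (and you are right that the tempting event $\{t_S<t_{Y\setminus\{w\}}\}$ is the wrong one). Your strong-Markov coupling at $\tau=t_Z$ then gives the termwise inequality $(1-\gamma)\Pr_w[A_w^Z]\le\Pr_w[A_w^X]$ directly from the potential bound on $Z$. This bypasses both Proposition~\ref{prop:elec-flow-cont} and Proposition~\ref{prop:lin-alg-norm-cond}, is self-contained, and yields a pleasant probabilistic interpretation of $\Delta^G(S,Y)$. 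The paper's argument, by contrast, isolates the two mechanisms (resistance shrinks, conductance grows by at most $\tfrac{1}{1-\gamma}$), which makes the source of the $1-\gamma$ factor more transparent from the electrical side and reuses lemmas needed elsewhere in the paper.
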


This bound on $\Delta$ is then used in conjunction with the following to bound $\texttt{levcng}$ and $\alpha$:

\begin{restatable}[Bounding first order terms, part 1]{lemma}{lemfirstorderdeg}\label{lem:first-order-deg}
Consider a graph $G$ and two sets $X,Y\subseteq V(G)$ with $X\cap Y = \emptyset$. Let $H = G/Y$ with $y$ the identification of $Y$ in $G$. Let $\Delta = \Delta^G(X,Y)$. Then

$$\sum_{f\in G[X]\cup \partial_G X: 1/4\le \texttt{lev}_G(f)\le 3/4} |\texttt{levcng}_{G\rightarrow H}(f)|\le 32\Delta$$
\end{restatable}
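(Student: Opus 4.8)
The plan is to reduce the claimed bound to a pure effective-resistance inequality and then identify the resulting sum with $\Delta := \Delta^G(X,Y)$ through the random-walk interpretation of both sides.

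\textbf{Step 1: algebraic reduction.} Write $R_G(f):=\texttt{Reff}_G(f)=b_f^T L_G^+ b_f$ and $R_H(f):=\texttt{Reff}_H(f)=b_f^T L_H^+ b_f$, where $H=G/Y$. Rearranging the definition of $\texttt{levcng}$ gives
$$|\texttt{levcng}_{G\to H}(f)| = \frac{r_f\,(R_G(f)-R_H(f))}{(r_f-R_H(f))\,R_H(f)} = \frac{1}{\texttt{nonlev}_H(f)}\cdot\frac{R_G(f)-R_H(f)}{R_H(f)} .$$
Since contracting $Y$ can only decrease effective resistances (Rayleigh monotonicity), $R_H(f)\le R_G(f)$, so this is nonnegative, and moreover $\texttt{nonlev}_H(f)\ge\texttt{nonlev}_G(f)=1-\texttt{lev}_G(f)\ge 1/4$ on the summation range $\texttt{lev}_G(f)\le 3/4$. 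Hence it suffices to prove $\sum_{f}\frac{R_G(f)-R_H(f)}{R_H(f)}\le 8\Delta$, the sum being over $f\in G[X]\cup\partial_G X$ with $1/4\le\texttt{lev}_G(f)\le 3/4$. (This is also where the lower bound $\texttt{lev}_G(f)\ge 1/4$ will earn its keep in Step~3: it forces each surviving summand to genuinely ``cost'' a $\Theta(1)$ amount of closeness to $Y$.)

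\textbf{Step 2: a closed form for the resistance drop.} Identifying $Y$ to a single vertex is a limit of a rank-$(|Y|-1)$ update of $L_G$ (adjoin a spanning star of conductance-$c$ edges inside $Y$ and let $c\to\infty$), so iterating Propositions \ref{prop:deletion-update}--\ref{prop:contraction-update} (equivalently Woodbury in the limit) yields
$$R_G(f)-R_H(f) = v_f^T M^+ v_f, \qquad v_f:=(L_G^+ b_f)\big|_Y,\quad M:=(L_G^+)[Y,Y],$$
and $M^+$ behaves like the Laplacian of a graph on $Y$ -- in fact it is $L_{\texttt{Schur}(G,Y)}$ under the standard identification -- so with $G_Y:=\texttt{Schur}(G,Y)$ we get $R_G(f)-R_H(f)=\sum_{\{w,w'\}\in E(G_Y)} c^{G_Y}_{ww'}(\,(v_f)_w-(v_f)_{w'}\,)^2$. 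The $Y=\{s,t\}$ case of this is exactly the Sherman--Morrison limit $R_G(f)-R_H(f)=(b_f^T L_G^+ b_{st})^2/\texttt{Reff}_G(s,t)$, and the general statement telescopes over vertices of $Y$. Finally, using Theorem \ref{thm:offline-prob} to rewrite normalized potentials as hitting probabilities, $(v_f)_w = R_G(f)\cdot q_w$ where $q_w:=\Pr_w[t_a<t_b]$ for $f=\{a,b\}$ with $a\in X$ (so $q_a=1,q_b=0$); thus $R_G(f)-R_H(f)=R_G(f)^2\sum_{\{w,w'\}}c^{G_Y}_{ww'}(q_w-q_{w'})^2$.

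\textbf{Step 3: summing against $\Delta$ (the crux).} Recall from Proposition \ref{prop:lin-alg-norm-cond} and the remark following it that $\Delta^G(X,Y)=\sum_{w\in Y}\texttt{lev}_{J'}(x,w)$ where $J':=\texttt{Schur}(G/X,\{x\}\cup Y)$, which by Aldous--Broder on $J'$ is the expected number of vertices of $Y$ whose first-visit edge, for a walk from $x$, comes directly from $x$. The remaining task is to show that $\sum_f \frac{R_G(f)-R_H(f)}{R_H(f)}$ -- a sum over edges incident with $X$ -- is $\le 8\Delta$, an $Y$-indexed quantity. The mechanism is: a large relative drop $R_G(f)/R_H(f)$ for a constant-leverage edge $f=\{a,b\}$, $a\in X$, forces $a$ (hence $x$ after contracting $X$) to be electrically close to the portion of $Y$ doing the shorting, and no vertex of $Y$ can be charged this way more than $O(1)$ times in aggregate because a random walk from $x$ in $J'$ cannot first-hit many vertices directly. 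Concretely I would (i) use the Step 2 expansion together with $0\le q_w\le 1$ and the fact that $R_H(f)=R_G(f)\cdot\big(1-R_G(f)\sum c^{G_Y}_{ww'}(q_w-q_{w'})^2\big)$ to bound $\frac{R_G(f)-R_H(f)}{R_H(f)}$ by (a constant times) the probability that a walk from $a$ reaches $Y$ before completing its $a$--$b$ excursion; (ii) apply Theorem \ref{thm:edge-visits}/Proposition \ref{prop:lin-alg-norm-cond} in $G/X$ to turn $\sum_f(\cdot)$ -- restricted to edges incident with $X$, with $\texttt{lev}_G(f)\ge 1/4$ ensuring the excursion across $f$ carries $\Omega(1)$ of the current -- into an expected visit-count statement for the walk on $J'$; and (iii) recognize that visit-count as $O(\Delta)$, either via a direct random-walk/commute-time bound or via a Foster/Davenport--Schinzel-type double-counting argument in the spirit of Lemma \ref{lem:well-sep-lev-score}. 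The main obstacle is precisely this step: Steps 1--2 are routine linear algebra, but matching the left-hand sum to $\Delta$ with an absolute constant requires consolidating electrical quantities living in three different graphs ($G$, $G/X$, $G/Y$) into one random-walk picture on $\texttt{Schur}(G/X,\{x\}\cup Y)$ and then a careful combinatorial count; the leverage window $[1/4,3/4]$ is what keeps both $\texttt{nonlev}_H(f)$ (Step 1) and the proportionality constant between $|\texttt{levcng}(f)|$ and $R_G(f)/R_H(f)$ under control, so it should not need to be weakened.
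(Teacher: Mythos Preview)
Your Steps 1--2 are fine as algebra, but Step 3 is not a proof: you yourself flag it as ``the main obstacle'' and describe only what you ``would'' do, with no concrete bound established. The random-walk/Davenport--Schinzel route you sketch is both vague and far more elaborate than what is needed, and it is not clear that the intermediate target $\sum_f (R_G(f)-R_H(f))/R_H(f)\le 8\Delta$ is even the right thing to aim for, since this quantity weights each summand by $1/\texttt{lev}_H(f)$, which the hypothesis $\texttt{lev}_G(f)\in[1/4,3/4]$ does not directly control.

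The paper sidesteps all of this with a short global counting argument you have missed entirely. After subdividing the edges of $\partial_G X$ (so every relevant edge lies inside an enlarged set $X_0$), it passes to the Schur complement $H_0=\texttt{Schur}(G_0,X_0\cup Y)$ and uses Foster's identity $\sum_e \texttt{lev}(e)=|V|-1$ (Remark~\ref{rmk:lev-sum}) twice: once in $H_0$ and once in $H_0/Y$. Comparing the two, together with the observation that $\sum_{f\in H_0[Y]}\texttt{lev}_{H_0}(f)\ge |Y|-\Delta_0$ (Rayleigh monotonicity applied after identifying $X_0$), immediately yields $\sum_{f}(\texttt{lev}_{G_0}(f)-\texttt{lev}_{G_0/Y}(f))\le \Delta_0$, summing over all $f\in G_0[X_0]$. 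No per-edge random-walk interpretation is needed; the entire crux is a two-line count of total leverage score. The boundary-edge subdivision and Proposition~\ref{prop:deg-split} then relate $\Delta_0$ to $\Delta$, and Remark~\ref{rmk:lev-cng} converts the leverage-difference bound into the stated $\texttt{levcng}$ bound with the $[1/4,3/4]$ window supplying the constant. Your Step~2 Woodbury expansion and Step~3 hitting-probability machinery are simply not on the path.
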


\begin{restatable}[Bounding first order terms, part 2]{lemma}{lemdelfirstorderdeg}\label{lem:del-first-order-deg}

Consider a graph $G$ and a set of edges $D\subseteq E(G)$. Then

$$\sum_{e\in E(G)\setminus D: 1/4\le \texttt{lev}_G(e)\le 3/4} |\texttt{levcng}_{G\rightarrow G\setminus D}(e)|\le 4|D|$$
\end{restatable}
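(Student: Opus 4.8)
The plan is to reduce the lemma to a statement about the total increase in leverage scores under deletion, which Foster's theorem controls cheaply. The key inputs are two elementary facts. First, deleting edges only increases effective resistances (Rayleigh monotonicity), so for every surviving edge $e\in E(G)\setminus D$ we have $\texttt{lev}_G(e)\le\texttt{lev}_{G\setminus D}(e)$ and hence $\texttt{levcng}_{G\to G\setminus D}(e)\ge 0$. Second, the closed form in the definition of $\texttt{levcng}$ rewrites it as
$$\texttt{levcng}_{G\to G\setminus D}(e)=\frac{\texttt{lev}_{G\setminus D}(e)-\texttt{lev}_G(e)}{\texttt{lev}_{G\setminus D}(e)\,\texttt{nonlev}_{G\setminus D}(e)}.$$
For an edge counted in the sum, $\texttt{lev}_{G\setminus D}(e)\ge\texttt{lev}_G(e)\ge 1/4$, and — using the standing invariant of the paper that edges are split (Proposition \ref{prop:split-cond}) so that their leverage stays in $[1/4,3/4]$, equivalently that $G\setminus D$ remains well enough connected for $\texttt{nonlev}_{G\setminus D}(e)$ to be bounded below — the denominator is $\Omega(1)$. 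This yields a pointwise bound $\texttt{levcng}_{G\to G\setminus D}(e)\le C\,(\texttt{lev}_{G\setminus D}(e)-\texttt{lev}_G(e))$ for an absolute constant $C$; sharpening the estimate of the denominator using $\texttt{lev}_{G\setminus D}(e)\ge 1/4$ and $\texttt{nonlev}_{G\setminus D}(e)\ge 1/4$ directly, rather than through Remark \ref{rmk:lev-cng}, is what lets one take $C=4$.

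Next I would sum the pointwise bound. By Foster's theorem (Remark \ref{rmk:lev-sum}), $\sum_{e\in E(G)}\texttt{lev}_G(e)=|V(G)|-1$, while $\sum_{e\in E(G\setminus D)}\texttt{lev}_{G\setminus D}(e)=|V(G\setminus D)|-(\#\text{components})\le|V(G)|-1$. Subtracting the first identity from the second gives
$$\sum_{e\in E(G)\setminus D}\big(\texttt{lev}_{G\setminus D}(e)-\texttt{lev}_G(e)\big)\le\sum_{e\in D}\texttt{lev}_G(e)\le|D|,$$
where the last step uses $\texttt{lev}_G(g)\le 1$. Since every term of the left-hand sum is nonnegative, restricting to edges with $\texttt{lev}_G(e)\in[1/4,3/4]$ only decreases it, and combining with the pointwise bound produces the claimed $\sum_{e\notin D:\ 1/4\le\texttt{lev}_G(e)\le 3/4}|\texttt{levcng}_{G\to G\setminus D}(e)|\le 4|D|$. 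If one prefers, the leverage-increase bound can instead be obtained by deleting the edges of $D$ one at a time: the rank-one deletion formula (Proposition \ref{prop:deletion-update}) together with the identity $\sum_e c_e(b_e^TL_G^+b_g)^2=\texttt{Reff}_G(g)$ shows that removing a single edge $g$ spreads exactly $\texttt{lev}(g)$ worth of leverage onto the other edges and never decreases any of them, so telescoping over $D$ recovers the same bound and makes nonnegativity of each increment manifest.

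The main obstacle is the pointwise step, and more precisely the behavior of $\texttt{nonlev}_{G\setminus D}(e)$: the formula for $\texttt{levcng}$ has $\texttt{lev}_{G\setminus D}(e)\,\texttt{nonlev}_{G\setminus D}(e)$ in the denominator, so a single edge can contribute an unbounded amount once deleting $D$ drives $\texttt{lev}_{G\setminus D}(e)$ toward $1$ (i.e.\ nearly makes $e$ a bridge), and the leverage hypothesis on $G$ alone does not preclude this. Everything else — Rayleigh monotonicity, Foster's theorem, the telescoping — is routine; essentially all of the content is in invoking the paper's splitting invariant (or the connectivity of $G\setminus D$) to keep $\texttt{nonlev}_{G\setminus D}(e)$ bounded below, after which the constant $4$ comes out of a direct denominator estimate.
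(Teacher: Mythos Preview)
Your approach is the paper's: reduce $|\texttt{levcng}_{G\to G\setminus D}(e)|$ pointwise to a multiple of $\texttt{lev}_{G\setminus D}(e)-\texttt{lev}_G(e)$, then bound $\sum_{e\notin D}(\texttt{lev}_{G\setminus D}(e)-\texttt{lev}_G(e))\le|D|$. The paper proves the latter by Sherman--Morrison one edge at a time, exactly your alternative route; your Foster's-theorem argument is equally valid and arguably cleaner. You also correctly pinpoint the one real issue: the pointwise reduction needs $\texttt{nonlev}_{G\setminus D}(e)$ bounded away from $0$, and the hypothesis $\texttt{lev}_G(e)\in[1/4,3/4]$ alone does not give this---two parallel unit-resistance edges with $D$ one of them already yield $\texttt{lev}_G(e)=1/2$ but $\texttt{lev}_{G\setminus D}(e)=1$ and $\texttt{levcng}=\infty$.

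Your proposed patch, however, does not close this gap. Proposition~\ref{prop:split-cond} only bounds leverage scores in the split graph $I'$, which is precisely the hypothesis $\texttt{lev}_G(e)\in[1/4,3/4]$ you already have; it says nothing about $G\setminus D$. And ``connectivity of $G\setminus D$'' is insufficient: the parallel-edge counterexample has $G\setminus D$ connected. The paper's own proof shares this gap---it invokes Remark~\ref{rmk:lev-cng} without checking its hypothesis on $\texttt{lev}_{G\setminus D}$---but at every call site the missing bound is supplied externally (e.g.\ Proposition~\ref{prop:common-split} gives $\texttt{lev}_{I_k''\setminus D}(e)\le 7/8$, and the ``Bounded leverage score difference'' condition on the oracle's input plays a similar role). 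So the lemma as stated is slightly too strong; what is actually used and provable is the version with an added upper bound on $\texttt{lev}_{G\setminus D}(e)$. Finally, even granting $\texttt{lev}_{G\setminus D}(e),\texttt{nonlev}_{G\setminus D}(e)\ge 1/4$, the denominator $\texttt{lev}_{G\setminus D}(e)\,\texttt{nonlev}_{G\setminus D}(e)$ is only $\ge 3/16$, so you get $C=16/3$, not $4$; this constant is immaterial.
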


\begin{figure}
\includegraphics[width=1.0\textwidth]{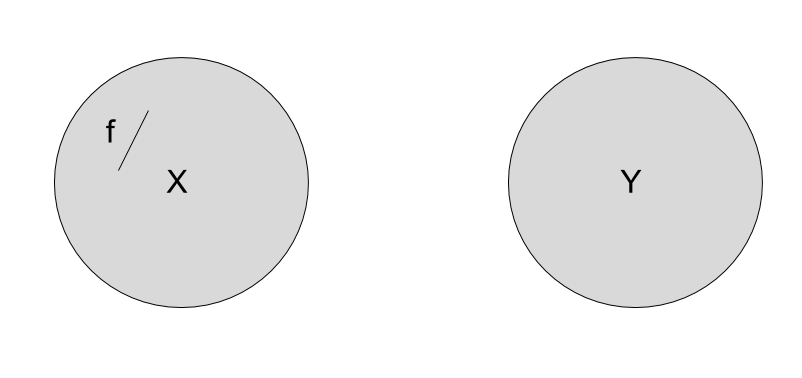}
\caption{Depiction of Lemma \ref{lem:first-order-deg}. When $Y$ is identified, the expected number of random spanning tree edges decreases.}
\label{fig:first-order-bound}
\end{figure}

\begin{restatable}[Bounding the second order term]{lemma}{lemsecondorderdeg}\label{lem:second-order-deg}
Consider a graph $G$ and two disjoint sets of vertices $X,Y\subseteq V(G)$. For any $s\in X$,

$$\sum_{e\in E_G(X)\cup \partial_G X} \alpha_{s,Y}^G(e) = \sum_{e\in E_G(X)\cup \partial_G X} \max_{t\in Y} \frac{(b_{st}^T L_G^+ b_e)^2}{(b_{st}^T L_G^+ b_{st})r_e} \le 24\xibuc^2\Delta^G(X,Y)$$

where $\xibuc = \log (m\alpha)$.
\end{restatable}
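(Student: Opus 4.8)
For Lemma~\ref{lem:second-order-deg}, the plan is to reduce the statement to a weighted version of Foster's theorem (the identity $\sum_{e} \texttt{lev}_G(e) = n-1$) applied in cleverly chosen graphs, bucketed by the size of the relevant energy contributions. Recall that $\Delta^G(X,Y) = \sum_{e\in E_{J'}(X,Y)} \texttt{Reff}_{J'}(e)c_e^{J'}$ where $J' = \texttt{Schur}(G/X, X\cup Y)$; by the combinatorial interpretation (Theorem~\ref{thm:edge-visits} and Proposition~\ref{prop:lin-alg-norm-cond}), this is exactly the expected number of edges of a random spanning tree incident to $Y$ when $X$ is identified to a single vertex $x$, restricted to edges going to $Y$. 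So $\Delta^G(X,Y)$ is a robust ``number of edges'' quantity, and the target bound says the sum of max-energy-fractions is at most $O(\log^2(m\alpha))$ times this count.

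First I would fix $s\in X$ and, for each vertex $t\in Y$, consider the graph $G_t := G/(X,Y\setminus\{t\})$ where $X$ is identified to $s$ and $Y\setminus\{t\}$ is identified to a single vertex — so $G_t$ has a distinguished $s$--$t$ pair and the $b_{st}$ electrical flow in $G$ restricted appropriately relates to the flow in $G_t$ by Rayleigh-type monotonicity. The quantity $\frac{(b_{st}^T L_G^+ b_e)^2}{(b_{st}^T L_G^+ b_{st}) r_e}$ is the fraction of the total $s$--$t$ energy carried by edge $e$; summing this over all edges $e$ gives exactly $1$ for each fixed $t$ (energy decomposition). The difficulty is that we are taking a \emph{max} over $t\in Y$, not a sum, and naively $\sum_e \max_t(\cdots)$ could be as large as $\sum_e \sum_t (\cdots) = |Y|$, which is far too weak. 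The key idea to beat this is to bucket: partition the edges $e\in E_G(X)\cup\partial_G X$ according to the dyadic scale of $\max_t \alpha$-value, i.e. $e$ is in bucket $k$ if $\max_t \frac{(b_{st}^T L_G^+ b_e)^2}{(b_{st}^T L_G^+ b_{st})r_e}\in (2^{-k-1}, 2^{-k}]$. Since all these quantities lie in $[\Omega(1/(m\alpha)^{c}), 1]$ — the lower bound coming from the fact that effective resistances lie in $[r_{min}, r_{max}]$ and conductances are bounded, so there are only $O(\log(m\alpha)) = O(\xibuc)$ relevant buckets.

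Within bucket $k$, each edge $e$ has a witness $t_e\in Y$ with energy fraction $\ge 2^{-k-1}$. The plan is to show that for a fixed target $t$, the number of edges $e$ in bucket $k$ with $t_e = t$ is at most roughly $2^{k}$ times the ``amount of random-tree mass'' near $t$; more precisely, the edges carrying an $\Omega(2^{-k})$ fraction of the $s$--$t$ energy, when we also use that high leverage contributes to tree edges, can be charged to $t$'s contribution to $\Delta^G(X,Y)$. The cleanest route: for each $t$, $\sum_{e: t_e = t, e \text{ in bucket } k} 1 \le 2^{k+1}\sum_{e} \frac{(b_{st}^T L_G^+ b_e)^2}{(b_{st}^T L_G^+ b_{st}) r_e} = 2^{k+1}$, so bucket $k$ has at most $2^{k+1}$ edges \emph{per target} — but again we must not sum over all targets. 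Instead, I would argue that if a target $t$ is the witness for \emph{any} edge in bucket $k$ with nonzero energy fraction at scale $2^{-k}$, then $t$ must be ``reached'' by the random spanning tree with probability $\gtrsim 2^{-k}$ in the graph $G/(X,\cdot)$, hence the number of distinct witness targets in bucket $k$ is at most $O(2^k \Delta^G(X,Y))$ (each contributing tree-edge-mass $\gtrsim 2^{-k}$, and total mass is $\Delta^G(X,Y)$). Combining: bucket $k$ contains at most $O(2^k\Delta^G(X,Y))$ witnesses, each accounting for at most $2^{k+1}$ edges, giving $O(2^{2k}\Delta)$ edges — that's still too many. The correct pairing must be: number of edges in bucket $k$ is $\le 2^{k+1}\cdot(\text{number of witness targets in bucket }k)$ is wrong in the other direction too. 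I would instead directly bound $\sum_{e\in \text{bucket }k}\max_t(\cdots) \le 2^{-k}\cdot |\text{bucket }k|$ and separately bound $|\text{bucket }k| \le O(2^k\xibuc\,\Delta^G(X,Y))$ by a Foster/Davenport–Schinzel-style argument in the Schur complement $J' = \texttt{Schur}(G/X, X\cup Y)$: an edge $e$ with energy-fraction-to-some-$t$ at scale $2^{-k}$ corresponds, after identifying, to an edge of $J'$ whose leverage score (probability of being in the random tree) is $\gtrsim 2^{-k}$ times something, and Foster's theorem $\sum_{e\in J'}\texttt{lev}_{J'}(e) = |V(J')| - 1$ bounds the count of such heavy edges. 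This gives $\sum_{e\in\text{bucket }k}\max_t(\cdots)\le 2^{-k}\cdot O(2^k\xibuc\Delta) = O(\xibuc\Delta)$ per bucket, and summing over the $O(\xibuc)$ buckets yields $O(\xibuc^2\Delta^G(X,Y))$, matching the claimed bound with the constant $24$.

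The main obstacle is making precise the charging argument in the middle paragraph — connecting the \emph{max over $t\in Y$} of the energy fraction of edge $e$ to a statement about a single fixed Schur-complemented graph $J'$ in which Foster's theorem applies uniformly. The subtlety is that the witness target $t_e$ varies with $e$, so a priori we are working in a different graph $G_t$ for each edge; the resolution I expect to work is that $J' = \texttt{Schur}(G/X, X\cup Y)$ is a single graph on vertex set $X\cup Y$ (with $X$ identified to $x$), the energy fraction $\frac{(b_{st}^T L_G^+ b_e)^2}{(b_{st}^T L_G^+ b_{st})r_e}$ is controlled by $\texttt{lev}$-type quantities of the ``projection'' of $e$ into $J'$ uniformly over $t$, and Foster applied once to $J'$ bounds $\sum_{e' \in E(J')}\texttt{lev}_{J'}(e')\le |Y|$ — no wait, that reintroduces $|Y|$. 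The genuine fix is that we only count edges whose \emph{projected leverage is large} ($\gtrsim 2^{-k}$), and the sum of \emph{all} leverages in $J'$ restricted to edges incident to $Y$ is exactly $\Delta^G(X,Y)$ (this is the definition!), so the count of edges with projected leverage $\ge 2^{-k}$ incident to $Y$ is at most $2^k\Delta^G(X,Y)$. That is the crux, and the remaining work — checking that large energy fraction of $e$ in $G$ to some $t$ implies large leverage of the projected edge incident to $Y$ in $J'$, with only an $O(\xibuc)$ loss — is where I would spend the most care, likely invoking Rayleigh monotonicity and the fact that going from $G$ to the Schur complement only increases effective resistances/leverages of kept edges.
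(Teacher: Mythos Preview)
Your approach has a genuine gap at exactly the step you flag as the main obstacle. The edges you need to bound lie in $E_G(X)\cup\partial_G X$, but in the Schur complement $J' = \texttt{Schur}(G/X,\{x\}\cup Y)$ with $X$ identified to a single vertex $x$, every edge internal to $X$ collapses to a self-loop and disappears. There is thus no ``projection'' of an interior edge $e\in E_G(X)$ to an edge of $J'$ incident to $Y$, and no leverage score in $J'$ that your energy-fraction bound on $e$ could control. The Foster-type counting you propose --- ``the count of edges with projected leverage $\ge 2^{-k}$ incident to $Y$ is at most $2^k\Delta^G(X,Y)$'' --- has no object to apply to for interior edges, which are most of the edges in the sum.

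The paper's proof takes a completely different route. After a preliminary subdivision pushing all relevant edges strictly inside an enlarged $X_0$ (Proposition~\ref{prop:deg-split}), the key step is Proposition~\ref{prop:edge-cut-bound}, which decomposes the flow value on an interior edge as a sum over \emph{cut edges} $g\in E_H(X_0,Y)$:
\[
|b_{st}^T L_H^+ b_e| \le \sum_{g=\{p,q\}\in E_H(X_0,Y)} |b_{sq}^T L_H^+ b_e|\cdot\frac{|b_{st}^T L_{H/X_0}^+ b_g|}{r_g}.
\]
This replaces the $\max_t$ by a sum over cut edges, each term involving only a \emph{fixed} reference point $x_g\in X_0$ independent of $t$. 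The bucketing is then by the $b_{sx_g}$-potential level of $e$'s endpoints (not by energy fraction as you propose). Two further ingredients close the argument: Proposition~\ref{prop:high-drop-flow} bounds the total $s$--$t$ flow through cut edges $g$ whose $X_0$-endpoint has high $b_{su}$-potential --- this is where the dependence on $t$ is finally controlled --- and Proposition~\ref{prop:pot-energy-ub} bounds the energy in each potential bucket. The two factors of $\xibuc$ arise from the number of potential buckets and a Cauchy--Schwarz over them, not from anything resembling a Foster count.
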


\begin{figure}
\includegraphics[width=1.0\textwidth]{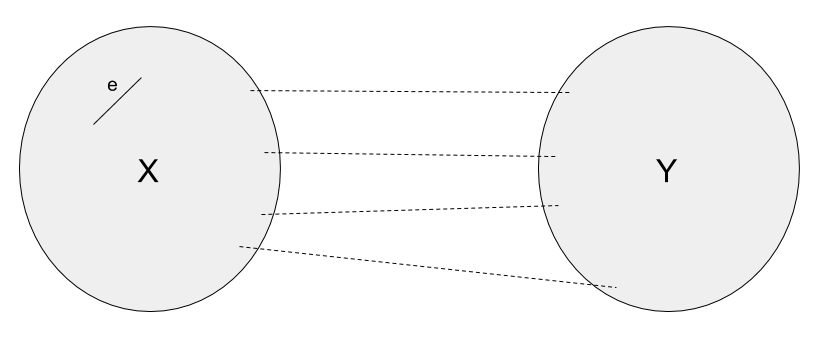}
\caption{Depiction of Lemma \ref{lem:second-order-deg}. When $\Delta^G(X,Y)$ is low, the vertices in $Y$ are similar to each other from the perspective of $s-Y$ demand vectors.}
\label{fig:second-order-bound}
\end{figure}

Appendix \ref{sec:stable-appendix} is dedicated towards proving these bounds. Notice that in Line \ref{line:select-1}, the algorithm chooses edges to condition on randomly. This allows us to exploit the above bounds in order to eliminate the $\texttt{levcng}$ and $\alpha$ dependencies. This completes the outline of the proof of the bound on $\rho_E$ for the normalized degree and illustrates how we prepare to apply Proposition \ref{prop:stable-concentration} in general.

We now state all of the stability results that a stability oracle implies in the algorithm $\Fix$. These are shown in Appendix \ref{sec:alg-stable-appendix}. These propositions analyze of Lines \ref{line:select-1}-\ref{line:select-2} by conditioning on a uniformly random sample one edge at a time.

\linestart

\begin{definition}[Stability propositions setup]\label{def:stability-input}

Let $Z_0 = Z\gets \Oracle(I,S,S',D,A,B,W)$ and $I_0 = I$, where $\Oracle$ is $(\rho,K)$-stable. Assume that the input to $\Oracle$ satisfies the conditions described in Definition \ref{def:oracle}.

Obtain $Z_k$ and $I_k$ for $k > 0$ by choosing a uniformly random edge $f_{k-1}\in Z_{k-1}$, letting $Z_k \gets Z_{k-1}\setminus \{f_{k-1}\}$, and letting $I_k\gets I_{k-1}[[f_{k-1}]]$. Let $\Delta_k := \Delta^{I_k\setminus D}(S,S') + \Delta^{I_k\setminus D}(S',S) + |D|$.
\end{definition}

\lineend

Notice that the set $Z'$ defined in the algorithm $\Fix$ could analogously be defined by letting $Z'\gets Z\setminus Z_K$. We now show that the following stability properties hold:

\begin{restatable}[Stability with respect to $\Delta$]{proposition}{propdeltastability}\label{prop:delta-stability}
For all $k\in \{0,1,\hdots,K(|W|)-1\}$, the set $Z_k$ is a $(\tilde{O}(\rho),\tilde{O}(\rho \Delta_k/p),0)$-stable subset of $W$ for the electrical functions

$$\delta_{S,S'}(H\setminus D)$$

and

$$\delta_{S',S}(H\setminus D)$$

of $H$ with high probability.
\end{restatable}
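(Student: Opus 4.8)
The plan is to establish Proposition~\ref{prop:delta-stability} by verifying, one at a time, the four defining conditions of a $(\tilde O(\rho),\tilde O(\rho\Delta_k/p),0)$-stable subset for each of the two electrical functions $g_1 := \delta_{S,S'}(H\setminus D)$ and $g_2 := \delta_{S',S}(H\setminus D)$. Since the two are symmetric under swapping $S\leftrightarrow S'$, it suffices to treat $g_1$. First I would confirm that $g_1$ is electrical: $\delta_{S,S'}$ is a ratio of two Schur-complement conductance/degree quantities, and both subdivision and parallel-splitting of an edge preserve all effective resistances and Schur complements restricted to $S\cup S'$, hence preserve $\Delta^{H\setminus D}(S,S')$ and $c^{H\setminus D}(S,S')$ separately and thus their ratio. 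The reader should also note that $\texttt{Split}$ of the edges in $F$ only touches leverage scores of those edges (Proposition~\ref{prop:split-cond}), so working in $(I_i',F_i') = \texttt{Split}(I_i,F_i)$ as in the stability definition is harmless.

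The core work is the three quantitative bounds. For the two Lipschitz-type bounds (change of $g_1$ under a single contraction $H/f$ and a single deletion $H\setminus f$, for $f\in Z_k$), I would expand $g_1(H/f)-g_1(H)$ and $g_1(H\setminus f)-g_1(H)$ using the rank-one update formulas (Propositions~\ref{prop:deletion-update}, \ref{prop:contraction-update}) applied to the representation of $\delta_{S,S'}$ in the Remark following Definition~\ref{def:oracle}. This produces exactly the first- and second-order terms displayed in Section~\ref{subsec:slow-intuition}: a first-order term carrying $\texttt{levcng}_{H\to(H\setminus D)/S}(f)$ times an ``$S$--$S'$ energy term'', a first-order term carrying $\texttt{levcng}_{H\to(H\setminus D)/(S,S')}(f)$ times an ``$S$--$S'$ conductance term'', and a second-order term carrying a leverage-score ratio times $\alpha^{(H\setminus D)/S}_{s,S'}(f)$ times an ``$S$--$S'$ conductance term''. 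Now the hypotheses kick in: the ``$S$--$S'$ normalized degree change stability'' guarantees of the stable oracle bound each of the double sums by $\tfrac{\rho}{|W|}\delta_{S,S'}(I\setminus D)$ (after dividing by $c^{H\setminus D}(S,S')$; I would need to track that the oracle's guarantees are stated in terms of $\delta$ directly, which is why $\delta$ rather than $\Delta$ is the right proxy), the ``Leverage score stability'' guarantees control the ratio $\tfrac{\texttt{lev}_H}{\texttt{lev}\cdot\texttt{lev}} + \tfrac{1-\texttt{lev}_H}{(1-\texttt{lev})(1-\texttt{lev})}$ by an absolute constant, and the $\texttt{levcng}$ and $\alpha$ factors are at most $1$-ish in the worst case but crucially have controlled averages. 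Since $f$ ranges over $Z_k\subseteq W$, one can use the worst-case value of $\texttt{levcng}$, $\alpha$ on a single edge; these can be as large as $\Theta(1)$, giving the per-edge Lipschitz bound $|g_1(H/f)-g_1(H)|\le \tfrac{\tilde O(\rho)}{|W|}g_1(H)$, i.e. $\rho_L = \tilde O(\rho)$ (using $|W|\ge|F|/\log^4 n$ and the oracle's ``Size of $Z$'' guarantee to pass between $|W|$ and $|F|$).

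For the third (expectation) condition I would average the rank-one expansion over a uniformly random $f\in Z_k$ and use the linearity to pull the average inside. This is where the $\Delta_k/p$ factor enters: by Lemmas~\ref{lem:first-order-deg} and~\ref{lem:del-first-order-deg}, the sum of $|\texttt{levcng}|$ over the relevant edges is $O(\Delta^{(H\setminus D)}(U_S,S') + |D|)$ where $U_S$ is the sublevel set of vertices with normalized potential $<1-p$; and by Proposition~\ref{prop:deg-pot-bound} together with the ``narrow potential neighborhood''/``midpoint potential stability'' hypotheses, $\Delta^{(H\setminus D)}(U_S,S')\le O(\Delta^{(H\setminus D)}(S,S')/p)$, and similarly $\sum_e \alpha_{s,S'}(e) \le \tilde O(\Delta^{(H\setminus D)}(S,S'))$ by Lemma~\ref{lem:second-order-deg}. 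Multiplying these ``total'' bounds by the per-edge bound $\tfrac{1}{|W|}g_1(H)$ on the accompanying energy/conductance double sum and dividing by $|Z_k|$ (which is $\ge |W|/\log^4 n$) yields an average change of at most $\tfrac{\tilde O(\rho\,\Delta_k/p)}{|F|^2}\,g_1(H)$, i.e. $\rho_E = \tilde O(\rho\Delta_k/p)$ as claimed, with $\delta = 0$ since every oracle guarantee used here has no additive slack for these particular sums (the $r_{\min}/n^4$ additive term appears only in the ``deferred endpoint potential'' guarantee, not the degree-change ones).

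The main obstacle I anticipate is bookkeeping the relationship between the quantities the oracle controls (which are phrased in terms of $\delta_{S,S'}(I\setminus D)$ for the \emph{original} graph $I$, and in terms of $|W|$) and the quantities appearing in the stability definition (phrased for the \emph{current} graph $H = I_k$ and for $|F|$): one must invoke the oracle's own internal high-probability guarantees that $\delta_{S,S'}(I_k\setminus D)$ and $|F_k|$ stay within constant factors of their initial values for all $k<K(|W|)$ — which is exactly what the ``stability-related properties'' in Definition~\ref{def:oracle} assert for the random sequence $I_0,I_1,\dots$ — so that $\Delta_k$ on the current graph is comparable to $\Delta$ on $I$ and the substitution $|W|\asymp|F|$ is legitimate. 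A secondary subtlety is making sure the rank-one expansion is done in the correct ``double'' Schur complement ($(H\setminus D)/S$ for the outer potential factor, $(H\setminus D)/(S,S')$ for the inner flow), matching precisely the form in which the oracle's ``$S$--$S'$ normalized degree change stability'' bullets are stated; this alignment is the reason those four bullets were written the way they were, so once the expansion is set up carefully the application should be mechanical. Everything else is a routine assembly of the cited lemmas.
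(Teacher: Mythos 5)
Your proposal matches the paper's proof in all essentials: verify electricality, expand the contraction/deletion and expectation changes via Sherman--Morrison into the energy/conductance double sums that the oracle's ``$S$--$S'$ normalized degree change stability'' bullets control, bound denominators by constants via the leverage-score guarantees (Proposition~\ref{prop:common-split}), and for the expectation condition average the $\texttt{levcng}$ and $\alpha$ factors over the random edge using Lemmas~\ref{lem:first-order-deg}, \ref{lem:del-first-order-deg}, \ref{lem:second-order-deg} together with Proposition~\ref{prop:deg-pot-bound} and midpoint-potential stability to produce the $\Delta_k/p$ factor. The only cosmetic slip is attributing $\texttt{levcng}$ coefficients to the individual contraction/deletion expansions (they appear only in the expectation expansion; the Lipschitz bounds instead carry $1/\texttt{lev}$ and $1/\texttt{nonlev}$ denominators bounded by $8$), which does not affect the argument.
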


\begin{restatable}[Stability with respect to sums of deferred potentials]{proposition}{propdeferredstability}\label{prop:deferred-stability}
For all $k\in \{0,1,\hdots,K(|W|)-1\}$, the set $Z_k$ is a $(\tilde{O}(\rho),\tilde{O}(\rho \Delta_k/p),r_{min}/n^4)$-stable subset of $W$ for the electrical function

$$\left(\sum_{\{u,v\}\in A} b_{ss'}^T L_{(H\setminus D)/(S,S')}^+ (b_{su} + b_{sv})\right) + \left(\sum_{\{u,v\}\in B} b_{ss'}^T L_{(H\setminus D)/(S,S')}^+ (b_{us'} + b_{vs'})\right)$$

of $H$ with high probability.
\end{restatable}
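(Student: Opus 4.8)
The plan is to follow the same recipe as the proof of Proposition~\ref{prop:delta-stability}, but applied to the single electrical function
$$g(H) := \left(\sum_{\{u,v\}\in A} b_{ss'}^T L_{(H\setminus D)/(S,S')}^+ (b_{su} + b_{sv})\right) + \left(\sum_{\{u,v\}\in B} b_{ss'}^T L_{(H\setminus D)/(S,S')}^+ (b_{us'} + b_{vs'})\right).$$
First I would verify that $g$ is electrical in the sense of Definition~\ref{def:oracle}: subdividing or splitting an edge preserves all effective-resistance-type quantities $b_{xy}^T L^+ b_{zw}$, and the contraction/deletion bookkeeping of $D$ and of the identifications $(S,S')$ commutes with these operations (Remarks~\ref{rmk:schur}, \ref{rmk:com-schur}). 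Since $g$ is a fixed positive linear combination of such quantities over the frozen deferred sets $A$ and $B$, it is electrical. Next, for the Lipschitz-type bounds I would apply the rank-one update formulas (Propositions~\ref{prop:deletion-update} and \ref{prop:contraction-update}) to each term $b_{ss'}^T L_{(H\setminus D)/(S,S')}^+ (b_{su}+b_{sv})$ after conditioning on a single edge $f\in Z_k$. The change in one term is controlled by
$$\frac{|b_{ss'}^T L_{(H\setminus D)/(S,S')}^+ b_f|\,|b_f^T L_{(H\setminus D)/(S,S')}^+ (b_{su}+b_{sv})|}{r_f(1-\texttt{lev}_{(H\setminus D)/(S,S')}(f))}$$
and the analogous contraction expression; the ``Lower leverage score stability'' guarantee of the oracle keeps the leverage-score denominators bounded away from $0$ and $1$, so up to constants these are exactly the quantities appearing in the ``Deferred endpoint potential change stability'' guarantee of Definition~\ref{def:oracle}, summed over $\{u,v\}\in A$ and $\{u,v\}\in B$. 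That guarantee directly yields the per-edge bound $|g(I_{k+1}) - g(I_k)| \le \frac{\tilde O(\rho)}{|W|} g(I) + r_{min}/n^4$, which gives both the contraction and deletion clauses of $(\tilde O(\rho),\cdot,r_{min}/n^4)$-stability (noting $|W| = \Theta(|F|)$ up to the $\log^4 n$ factor in the ``Size of $Z$'' guarantee, which gets absorbed into the $\tilde O$).

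The main work is the expectation (third) clause of stability. Here I would write the expected change $\textbf{E}_{H'\sim H[f]}[g(H'\setminus D)] - g(H\setminus D)$ using the same expansion that appears in the discussion preceding Proposition~\ref{prop:deg-pot-bound} for $\delta_{S,S'}$: the zeroth-order (martingale) term cancels, leaving a first-order term proportional to $\texttt{levcng}_{H\rightarrow (H\setminus D)/(S,S')}(f)$ times a product of flow quantities, plus a second-order term involving $\alpha_{s,S'}^{(H\setminus D)/S}(f)$-type factors. Because the deferred edges in $A$ (resp.\ $B$) all have small normalized endpoint potential on the $S$ (resp.\ $S'$) side, the relevant sum over $\{u,v\}\in A$ of $b_f^T L^+_{(H\setminus D)/(S,S')}(b_{su}+b_{sv})$ is itself bounded by a degree-type quantity $\Delta^{(H\setminus D)}(U_S, S')$ where $U_S$ is the low-potential vertex set; Proposition~\ref{prop:deg-pot-bound} then bounds this by $O(\Delta^{H\setminus D}(S,S')/(1-p)) = O(\Delta_k)$ since $p\le 1/4$, and similarly for $B$. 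Averaging over the uniformly random choice $f\in Z_k$ (as done in Line~\ref{line:select-1} of $\Fix$) converts the $\texttt{levcng}$ and $\alpha$ factors into their averages, which Lemmas~\ref{lem:first-order-deg}, \ref{lem:del-first-order-deg}, and \ref{lem:second-order-deg} bound by $O(\Delta_k)$ (plus the $|D|$ contribution already folded into $\Delta_k$). Combining, the average expected change is at most $\frac{\tilde O(\rho\,\Delta_k/p)}{|W|^2} g(I) + r_{min}/n^4$, which is exactly the asserted $(\tilde O(\rho), \tilde O(\rho\Delta_k/p), r_{min}/n^4)$-stability.

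I expect the subtlest point to be the same one that is subtle throughout Section~\ref{sec:slow-fix}: carefully tracking which graph each quantity lives in — the tree is sampled in $H$ (equivalently $I_k$), but $g$ is defined in $(H\setminus D)/(S,S')$ — so that the rank-one updates are applied to the correct Laplacian pseudoinverses and the $\texttt{levcng}$ terms (quantifying the discrepancy between $\texttt{lev}_H$ and $\texttt{lev}_{(H\setminus D)/(S,S')}$) are the right ones. The additive error $r_{min}/n^4$ is genuinely needed (unlike in Proposition~\ref{prop:delta-stability} where it is $0$) because $g$ is a sum of absolute potentials rather than a ratio, so when all deferred edges have collapsed to near-zero potential there is a tiny residual that cannot be made exactly multiplicatively small; this exactly matches the $\frac{r_{min}}{n^4}$ slack already present in the ``Deferred endpoint potential change stability'' line of Definition~\ref{def:oracle}, so no new ideas are required beyond bookkeeping. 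One should also double-check that $V(A)$ and $V(B)$ being frozen (not changing as edges are conditioned) is consistent with the algorithm — indeed $\Fix$ only adds to $A,B$ in the deferral branch and never in the conditioning branch, so throughout the run of Lines~\ref{line:select-1}–\ref{line:select-2} the sets $A,B$ are fixed, validating the use of $g$ as a single fixed electrical function in Proposition~\ref{prop:stable-concentration}.
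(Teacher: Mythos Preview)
Your Lipschitz argument is essentially the paper's: Sherman--Morrison plus the leverage-score bound from Proposition~\ref{prop:common-split} reduces the per-edge change to exactly the quantity in the oracle's ``Deferred endpoint potential change stability'' guarantee, giving the $\tilde O(\rho)/|W|$ bound with additive $r_{min}/n^4$.

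The expectation clause, however, is misassembled in two places. First, there is no second-order $\alpha$ term here: unlike $\delta_{S,S'}$, the function $g$ is \emph{linear} in $L_{(H\setminus D)/(S,S')}^+$ (it has the form $b_{ss'}^T L^+ d$ for a fixed vector $d$), so the Sherman--Morrison expansion of $\textbf{E}_{H'\sim I_k[f_k]}[g(H')]-g(I_k)$ produces a single term
\[
\texttt{levcng}_{I_k\rightarrow (I_k\setminus D)/(S,S')}(f_k)\cdot\frac{(b_{ss'}^T L^+ b_{f_k})(b_{f_k}^T L^+ d)}{r_{f_k}},
\]
and Lemma~\ref{lem:second-order-deg} is never invoked. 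Second, you have the roles of the oracle guarantee and the degree bound swapped. The factor multiplying $\texttt{levcng}(f_k)$ is not bounded by any degree quantity on $A,B$ (indeed the deferred edges carry no potential hypothesis at this point in the argument); it is bounded \emph{directly} by the oracle's ``Deferred endpoint potential change stability'' line, giving $\frac{\rho}{|W|}g(I)+r_{min}/n^4$. What the degree argument (Lemmas~\ref{lem:first-order-deg} and \ref{lem:del-first-order-deg} together with the ``Degree of midpoints'' step from Proposition~\ref{prop:delta-stability}) actually bounds is $\sum_{f\in W}|\texttt{levcng}_{I_k\rightarrow (I_k\setminus D)/(S,S')}(f)|$, using that the edges of $W$ --- not of $A,B$ --- have midpoints in the narrow potential neighborhood. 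That sum is $O(\Delta_k/p)$ (the correct application of Proposition~\ref{prop:deg-pot-bound} uses $\gamma=1-p/2$, producing a $1/p$ factor, not $1/(1-p)$), and this is where the $\rho\Delta_k/p$ in the stability parameter comes from. Once you reassign these two bounds to the correct factors, the argument goes through exactly as in the paper.
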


\begin{restatable}[Stability with respect to the main objective]{proposition}{propmainstability}\label{prop:main-stability}

For all $k\in \{0,1,\hdots,K(|W|)-1\}$, the set $Z_k$ is a $(\tilde{O}(\rho),\tilde{O}(\rho \Delta_k/p),0)$-stable subset of $W$ for the electrical function

$$b_{ss'}^T L_{(H\setminus D)/(S,S')}^+ b_{ss'}$$

of $H$ with high probability.
\end{restatable}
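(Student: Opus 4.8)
The plan is to verify the three clauses of the $(\rho_L,\rho_E,\delta)$-stable-subset definition for the electrical function $g(H):=b_{ss'}^T L_{(H\setminus D)/(S,S')}^+ b_{ss'}$, evaluated at the current graph $I_k$ (after splitting, so that every copy of an $F$-edge has leverage score in $[1/4,3/4]$ by Proposition~\ref{prop:split-cond}), with $\rho_L=\tilde O(\rho)$, $\rho_E=\tilde O(\rho\Delta_k/p)$, and $\delta=0$. The backbone is a pair of exact rank-one identities. By Propositions~\ref{prop:deletion-update} and~\ref{prop:contraction-update}, writing $E_f:=\tfrac{(b_{ss'}^T L_{(H\setminus D)/(S,S')}^+ b_f)^2}{r_f}$ for the $f$-energy of the $s$--$s'$ flow in the measurement graph, $$g(H/f)-g(H) = -\tfrac{E_f}{\texttt{lev}_{(H\setminus D)/(S,S')}(f)},\qquad g(H\setminus f)-g(H) = +\tfrac{E_f}{\texttt{nonlev}_{(H\setminus D)/(S,S')}(f)},$$ and weighting these by the tree-sampling probabilities $\texttt{lev}_H(f)$ and $\texttt{nonlev}_H(f)$ in $H$ collapses to the clean identity $$\textbf{E}_{H'\sim H[f]}[g(H')]-g(H) = E_f\cdot\texttt{levcng}_{H\rightarrow (H\setminus D)/(S,S')}(f).$$ In contrast with the $\texttt{Reff}(s,t)$ martingale of Section~\ref{sec:slow-warmup} and with the normalized-degree functions, there is no second-order term: $g$ is a single quadratic form, so each rank-one update is exact and the mixture over ``contract''/``delete'' telescopes to the $f$-energy times the leverage-score discrepancy between sampling and measurement graphs.

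For the contraction and deletion clauses I would combine the oracle's ``Main objective change stability'' guarantee, which controls $E_f$ for each edge surviving in $Z_k$, with leverage-score pinning: the split puts leverage scores in $[1/4,3/4]$, and the ``Bounded leverage score difference'' input condition together with the ``Lower leverage score stability'' output guarantee keep both $\texttt{lev}_{(I_k\setminus D)/(S,S')}(f)$ and its complement in $[\tfrac18,\tfrac78]$; since $W$ is the largest of $\tilde O(1)$ potential buckets, $|W|\ge|F|/\tilde O(1)$. These together give $|g(I_k/f)-g(I_k)|,\,|g(I_k\setminus f)-g(I_k)|\le \tfrac{\tilde O(\rho)}{|F|}g(I_k)$, i.e.\ $\rho_L=\tilde O(\rho)$.

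The expectation clause is the substantive step. By the identity, $\tfrac1{|Z_k|}\sum_{f\in Z_k}\big|\textbf{E}_{H'\sim H[f]}[g(H')]-g(H)\big| \le \big(\max_{f\in Z_k}E_f\big)\cdot\tfrac1{|Z_k|}\sum_{f\in Z_k}\big|\texttt{levcng}_{I_k\rightarrow (I_k\setminus D)/(S,S')}(f)\big|$, and the first factor is $\le\tfrac{\rho}{|W|}g(I)$ by the oracle. For the second factor I would split $\texttt{levcng}_{I_k\rightarrow (I_k\setminus D)/(S,S')}$ into the three changes delete-$D$, identify-$S$, identify-$S'$ along the chain $I_k\to I_k\setminus D\to (I_k\setminus D)/S\to (I_k\setminus D)/(S,S')$; by Remark~\ref{rmk:lev-cng} --- legitimate once one checks, using the split, the leverage-stability guarantees, and the fact that $Z_k$-edges are far from whichever of $S,S'$ is not ``narrow,'' that every $f\in Z_k$ has leverage bounded away from $0$ and $1$ in each graph of the chain --- this is bounded term-by-term by leverage-score differences, which in turn are controlled by Lemma~\ref{lem:del-first-order-deg} ($\le 4|D|$ over all of $Z_k$ for the deletion step) and Lemma~\ref{lem:first-order-deg} for the two identification steps. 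The ``narrow potential neighborhood'' input condition and the ``Midpoint potential stability'' guarantee place every endpoint of every $f\in Z_k$ at normalized $s$--$s'$ potential in $[p/2,2p]$ (in the ``$s$-narrow'' case; symmetrically otherwise), so the relevant vertex set $X$ in Lemma~\ref{lem:first-order-deg} is $\{v:\text{$s$-potential}\in[p/2,2p]\}$, and Proposition~\ref{prop:deg-pot-bound} converts $\Delta^{I_k\setminus D}(X,S')$ and $\Delta^{I_k\setminus D}(X,S)$ into $O(\Delta^{I_k\setminus D}(S,S'))$ and $O(\tfrac1p\Delta^{I_k\setminus D}(S',S))$ respectively. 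Summing, $\sum_{f\in Z_k}|\texttt{levcng}(f)|\le O(\Delta_k/p)$; combined with $\max_f E_f\le\tfrac{\rho}{|W|}g(I)$, the ``Size of $Z$'' bound $|Z_k|\ge|W|/\tilde O(1)$, and $|W|\ge|F|/\tilde O(1)$ (plus $g(I)=\Theta(g(I_k))$ over the window one oracle call is responsible for), this yields $\tfrac1{|Z_k|}\sum_{f\in Z_k}|\textbf{E}[g(H')]-g(H)|\le \tfrac{\tilde O(\rho\Delta_k/p)}{|F|^2}g(H)$ --- exactly $\rho_E=\tilde O(\rho\Delta_k/p)$, $\delta=0$.

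The hard part is this last decomposition: one must track, through the split and the run of conditionings covered by a single oracle call, which endpoints of $Z_k$-edges are genuinely far from $S$ and $S'$ and which are only $\Omega(p)$-far --- handling the case split between ``$s$-narrow'' and ``$s'$-narrow'' inputs and confirming that ``Midpoint potential stability'' really prevents edges from drifting to within $o(p)$ of $S$ --- so that Proposition~\ref{prop:deg-pot-bound} is invoked at the right threshold and the unavoidable $1/p$ loss lands precisely inside the $\Delta_k/p$ the statement allows. The remaining pieces (rank-one identities, leverage pinning, bucket accounting) are routine, and the ``with high probability'' qualifier is inherited from the oracle's guarantees by a union bound over the at most $K(|W|)$ conditioning steps.
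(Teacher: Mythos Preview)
Your proposal is correct and follows essentially the same route as the paper: both proofs use the rank-one identities (Propositions~\ref{prop:deletion-update} and~\ref{prop:contraction-update}) to express the contraction/deletion change as $E_f/\texttt{lev}$ or $E_f/\texttt{nonlev}$, invoke Proposition~\ref{prop:common-split} to pin the leverage scores away from $0$ and $1$, apply the oracle's ``Main objective change stability'' bound on $E_f$ for the Lipschitz clauses, and for the expectation clause write $\textbf{E}[g(H')]-g(H)=E_f\cdot\texttt{levcng}$ and control $\sum_f|\texttt{levcng}|$ via Lemmas~\ref{lem:first-order-deg} (twice) and~\ref{lem:del-first-order-deg} (once) together with Proposition~\ref{prop:deg-pot-bound}. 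One cosmetic point: since the proposition asserts that $Z_k$ is a stable subset \emph{of $W$}, your denominators should be $|W|$ and $|W|^2$ rather than $|F|$ and $|F|^2$; the detour through $|W|\ge|F|/\tilde O(1)$ is unnecessary (the oracle already delivers bounds in terms of $|W|$).
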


Now, we use Proposition \ref{prop:stable-concentration} to show the following result:

\linestart

\begin{proposition}\label{prop:end-state}
Immediately before Line \ref{line:contract-remaining} of the algorithm $\Fix$, the graph $I$ and the sets $A$ and $B$ have the following properties with high probability:

\begin{itemize}
\item (Main objective) Let $I_0$ be the graph supplied as input to $\Fix$. Then $b_{ss'}^T L_{I\setminus D}^+ b_{ss'}\ge (1 - \ep) b_{ss'}^T L_{I_0\setminus D}^+ b_{ss'}$
\item (Normalized potentials of deferred edges are not too high on average)

\begin{align*}
&\frac{1}{b_{ss'}^T L_{I\setminus D}^+ b_{ss'}}\left(\left(\sum_{\{u,v\}\in A} b_{ss'}^T L_{I\setminus D}^+ (b_{su} + b_{sv})\right) + \left(\sum_{\{u,v\}\in B} b_{ss'}^T L_{I\setminus D}^+ (b_{us'} + b_{vs'})\right)\right)\\
&\le \tilde{O}(\rho^3\ep^{-2}(\Delta^{I_0\setminus D}(S,S') + \Delta^{I_0\setminus D}(S',S) + |D|))
\end{align*}

\end{itemize}
\end{proposition}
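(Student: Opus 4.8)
The plan is to track four electrical functions through the execution of $\Fix$: $g_1(H) := \delta_{S,S'}(H\setminus D)$, $g_2(H) := \delta_{S',S}(H\setminus D)$, $g_3(H) := b_{ss'}^T L_{(H\setminus D)/(S,S')}^+ b_{ss'}$, and $g_4(H) := \sum_{\{u,v\}\in A} b_{ss'}^T L_{(H\setminus D)/(S,S')}^+ (b_{su}+b_{sv}) + \sum_{\{u,v\}\in B} b_{ss'}^T L_{(H\setminus D)/(S,S')}^+ (b_{us'}+b_{vs'})$, where $A,B$ are read off at the moment of evaluation. The ``Main objective'' is exactly the statement that $g_3$ decreases by at most a $(1-\ep)$ factor, and since $\Delta^{H\setminus D}(S,S')=g_1/g_3$ and $\Delta^{H\setminus D}(S',S)=g_2/g_3$ (Proposition \ref{prop:lin-alg-total-cond} and the definition of the normalized degree), once we also control $g_1,g_2$ up to constant factors the quantity $\Delta_k:=\Delta^{I_k\setminus D}(S,S')+\Delta^{I_k\setminus D}(S',S)+|D|$ stays within a constant of $\Delta_0:=\Delta^{I_0\setminus D}(S,S')+\Delta^{I_0\setminus D}(S',S)+|D|$ throughout; the ``Normalized potentials'' bound then reduces to showing $g_4\le\tilde{O}(\rho^3\ep^{-2}\Delta_0)\,g_3$ at the end.

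For the conditioning steps (Lines \ref{line:select-1}--\ref{line:select-2}) I would apply Proposition \ref{prop:stable-concentration} with the three functions $x_a\in\{g_1,g_2,g_3\}$ (these are the functions the stability parameter depends on, since $\Delta_k=(g_1+g_2)/g_3+|D|$ is an $O(1)$-multilipschitz function of them, so $\rho_i=\tilde{O}(\rho\Delta_i/p)$ is multilipschitz in $\{x_a(H_i)\}$ once the scale $p=2^{-i-2}$ is fixed) and the four functions $y_b\in\{g_1,g_2,g_3,g_4\}$. I would cut the run of $\Fix$ into $O(\log n\cdot\log m)$ segments, one per choice of potential scale $i$ (so $p$ is constant on the segment) and per dyadic range of $|F|$, take the proposition's sets $X_i$ to be the sets $Z_k$ arising from successive single-edge conditionings inside $\Oracle$'s output (so $X_i=Z_k$, $k<K(|W|)$), and invoke Propositions \ref{prop:delta-stability}, \ref{prop:main-stability}, \ref{prop:deferred-stability} to see that each $Z_k$ is $(\tilde{O}(\rho),\tilde{O}(\rho\Delta_k/p),\delta)$-stable for all of $g_1,\dots,g_4$, with $\delta=0$ for $g_1,g_2,g_3$ and $\delta=r_{min}/n^4$ for $g_4$; the input conditions for $\Oracle$ (bounded leverage-score difference, narrow potential neighborhood) are supplied by Line \ref{line:remove-high-levcng} and the definition of the buckets $X_i,Y_i$. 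The key numerical point is that a conditioning step is taken only when $|W|>10000(\log n)\xibuc^2\rho^3\ep^{-2}\Delta_k\,2^i=\tilde{O}(\rho^3\ep^{-2})\Delta_k/p$, and $|W|$ is the largest of $O(\log n)$ buckets so $|F|\ge|W|$; plugging these in makes $\tau_E=\max_i\rho_i/|X_i|$ and $\rho_L^{3/2}\sqrt{\gamma\ell_1}/\sqrt{\sigma}$ each at most $\ep/\mathrm{polylog}(n)$ (here one uses the polylog headroom in the threshold constant). Summing over the $\tilde{O}(1)$ segments then yields, with high probability, that $g_1,g_2,g_3$ change by a $(1\pm\ep)$ factor — in particular $g_3^{\mathrm{final}}\ge(1-\ep)g_3^0$, the ``Main objective'' — and that $g_4$ changes by a $(1\pm\ep)$ multiplicative factor plus an additive $\tilde{O}(\rho\gamma\ell_1|F|\cdot r_{min}/n^4)=\tilde{O}(r_{min}/n^2)$ that is negligible against $g_3$.

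What remains — and what I expect to be the main obstacle — is to bound the cumulative additive increase of $g_4$ caused by the deferral steps (Line \ref{line:remove-w}), where a whole bucket $W$ is absorbed into $A$ or $B$. When the deferred bucket is $X_i$ (resp.\ $Y_i$), each of its edges $\{u,v\}$ has midpoint potential $b_{ss'}^T L_{I'}^+(b_{su}+b_{sv})\le 2^{-i}g_3$, and since we defer only when $|W|\le 10000(\log n)\xibuc^2\rho^3\ep^{-2}\Delta_k\,2^i$, a single deferral raises $g_4$ by at most $|W|\cdot 2^{-i}g_3\le\tilde{O}(\rho^3\ep^{-2}\Delta_0)g_3$, a bound independent of $i$ and of $|W|$; moreover the two ``low'' buckets $X_{\text{low}},Y_{\text{low}}$ contribute only $O(g_3)$ in total because each of their $\le m$ edges has midpoint potential $\le 2^{-i_{\max}}g_3=O(g_3/n^2)$. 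The delicate point is to control the number of deferral steps: an emptied scale can be repopulated (hence re-deferred) only through conditioning steps, and the while loop runs $\tilde{O}(|F_0|/K)$ times with high probability because each conditioning step removes an edge of $F$ with constant probability (Proposition \ref{prop:split-cond}). The cleanest way to finish, which is the one step I would need to nail down carefully rather than hand-wave, is an amortization/potential-function argument showing that each potential scale is effectively deferred only $\tilde{O}(1)$ times, so that summing the per-scale contribution over the $O(\log n)$ scales (a geometric-type sum) gives total deferral increase $\tilde{O}(\rho^3\ep^{-2}\Delta_0)g_3$. Combined with the $(1\pm\ep)$ multiplicative drift from the conditioning steps and $g_3^{\mathrm{final}}\ge(1-\ep)g_3^0$, this yields $g_4^{\mathrm{final}}/g_3^{\mathrm{final}}\le\tilde{O}(\rho^3\ep^{-2}\Delta_0)$; all stated bounds hold with high probability after a union bound over the $\tilde{O}(1)$ applications of Proposition \ref{prop:stable-concentration} and the $\mathrm{poly}(n)$ oracle guarantees.
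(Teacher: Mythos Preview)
Your plan is essentially the paper's plan: track the same four electrical functions, feed the stability propositions into Proposition~\ref{prop:stable-concentration} on the conditioning steps, and bound the additive contribution of each deferral to $g_4$ by $|W|\cdot 2^{-i}g_3\le\tilde{O}(\rho^3\ep^{-2}\Delta_0)\,g_3$. The one place you flag as ``delicate'' and leave open---counting the number of deferral steps---is the only real gap, and its resolution is much simpler than the amortization you propose.

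You worry that a potential bucket, once emptied by deferral, can be repopulated by later conditioning and deferred again. That is true, but irrelevant: the right invariant is the total size of $F$, not the per-bucket occupancy. Line~\ref{line:remove-w} removes $W$ from $F$, and since $W$ is the largest of at most $2i_{\max}+2\le 2\log n+2$ buckets partitioning $F$, we have $|W|\ge |F|/(2\log n)$. Hence every execution of the ``If'' block shrinks $|F|$ by a factor of at least $1-1/(2\log n)$, so the ``If'' block (and thus the deferral) executes at most $(2\log n)(\log|F_0|)\le 2\log^2 n$ times in total. No per-scale bookkeeping is needed. With this count in hand, the total additive increase to $g_4$ from deferrals is at most $(2\log^2 n)\cdot\tilde{O}(\rho^3\ep^{-2}\Delta_0)\,g_3=\tilde{O}(\rho^3\ep^{-2}\Delta_0)\,g_3$, which is exactly what you need. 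The paper also uses this $2\log^2 n$ bound to organize the applications of Proposition~\ref{prop:stable-concentration}: one application per ``Else''-only interval between consecutive ``If'' blocks, rather than your finer segmentation by scale and dyadic $|F|$; both work, but the paper's is cleaner.

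One smaller point you elide: before invoking the oracle you must check that Line~\ref{line:remove-high-levcng} does not gut $W$. The paper handles this by summing the leverage-score differences via Lemmas~\ref{lem:first-order-deg} and~\ref{lem:del-first-order-deg} (together with Proposition~\ref{prop:deg-pot-bound} to pass to $U_S,U_{S'}$) and comparing against the ``If'' threshold; you should include this step.
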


\lineend

The first condition of the above proposition states that the value that $\Fix$ needs to preserve ($b_{ss'}^T L_{I\setminus D}^+ b_{ss'} = \frac{1}{c^{I\setminus D}(S,S')}$) is in fact similar to its value at the beginning of the algorithm. This is not enough, however, because the deferred edges have not been conditioned on before Line \ref{line:contract-remaining}. The second condition of Proposition \ref{prop:end-state} ensures that contracting or deleting all but $\tilde{O}((\Delta^{I\setminus D}(S,S') + \Delta^{I\setminus D}(S',S) + |D|)\ep^{-1})$ of the deferred edges does not decrease the main objective by more than a factor of $1 - \ep$. Setting the remaining edges to be $F'$ shows the desired result.

We now show Proposition \ref{prop:end-state}:

\begin{proof}[Proof of Proposition \ref{prop:end-state}]

\underline{Number of ``If'' block visits.} Start by bounding the number of times the ``If'' block in the algorithm $\Fix$ is entered. Each time the ``If'' statement is entered, Line \ref{line:remove-w} removes $W$ from $F$. By construction, $|W| \ge |F|/(2i_{\max}) = |F|/(2\log n)$. Therefore, this removal can only take place $(\log |F|)(2\log n)\le 2\log^2 n$ times, so the ``If'' statement only executes $2\log^2 n$ times over the course of $\Fix$.

\underline{Verifying stable oracle input conditions and size of $W$.} After Line \ref{line:remove-high-levcng}, $W$ only contains edges for which $|\texttt{lev}_{I\setminus D}(e) - \texttt{lev}_{I/(S,S')}(e)|\le 1/32 + 2(1/64)\le 1/16$. Therefore, $W$ satisfies the ``Bounded leverage score difference'' condition. $W$ satisfies one of the ``Narrow potential neighborhood'' conditions by definition of the $X_i$s and $Y_i$s.

Now, we lower bound the size of $W$ after Line \ref{line:remove-high-levcng}. Let $U_S$ and $U_{S'}$ be the vertices with $(I\setminus D)/(S,S')$-normalized potentials greater than $p$ and less than $1 - p$ respectively, with $s\gets 0$ and $s'\gets 1$. By Proposition \ref{prop:deg-pot-bound} with $\gamma\gets 1 - p$,

$$\Delta^{I\setminus D}(U_S,S')\le \Delta^{I\setminus D}(S,S')/p$$

and

$$\Delta^{(I\setminus D)/S'}(U_{S'},S)\le \Delta^{I\setminus D}(U_{S'},S)\le \Delta^{I\setminus D}(S',S)/p$$

Every edge in $W$ has an endpoint in $U_S\cap U_{S'}$ by the bucketing definition. Applying Lemma \ref{lem:first-order-deg} twice and the first inequality of Remark \ref{rmk:lev-cng} shows that

$$\sum_{e\in W} \texttt{lev}_{I\setminus D}(e) - \texttt{lev}_{(I\setminus D)/(S,S')}(e) \le 32(\Delta^{I\setminus D}(S,S') + \Delta^{I\setminus D}(S',S))/p$$

By Lemma \ref{lem:del-first-order-deg} and Remark \ref{rmk:lev-cng},

$$\sum_{e\in W} \texttt{lev}_{(I\setminus D)/(S,S')}(e) - \texttt{lev}_{I/(S,S')}(e) \le |D|$$

Therefore, by Rayleigh monotonicity,

$$\sum_{e\in W} |\texttt{lev}_{I\setminus D}(e) - \texttt{lev}_{I/(S,S')}(e)| \le 32(\Delta^{I\setminus D}(S,S') + \Delta^{I\setminus D}(S',S) + |D|)/p$$

By the ``If'' condition,

$$|W|\ge 4096(\Delta^{I\setminus D}(S,S') + \Delta^{I\setminus D}(S',S) + |D|)/p$$

before Line \ref{line:remove-high-levcng}. In particular,

$$\sum_{e\in W} |\texttt{lev}_{I\setminus D}(e) - \texttt{lev}_{I/(S,S')}(e)| \le |W|/128$$

which means that Line \ref{line:remove-high-levcng} can only remove $(|W|/128)/(1/32) \le |W|/4$ edges from $W$. Therefore, Line \ref{line:remove-high-levcng} only decreases the size of $W$ by a factor of $3/4$. In particular, $|W| \ge \tilde{\Omega}(\rho^3 \ep^{-2}(\Delta^{I\setminus D}(S,S') + \Delta^{I\setminus D}(S',S) + |D|)/p)$ after Line \ref{line:remove-high-levcng}.

\underline{Concentration in ``Else''-only intervals.} Now, define the functions

$$x_0(H) := b_{ss'}^T L_{(H\setminus D)/(S,S')} b_{ss'} = 1/c^{H\setminus D}(S,S')$$

$$x_1(H) := \delta_{S,S'}(H\setminus D)$$

$$x_2(H) := \delta_{S',S}(H\setminus D)$$

and

$$y_0(H) := \left(\sum_{\{u,v\}\in A} b_{ss'}^T L_{H\setminus D}^+ (b_{su} + b_{sv})\right) + \left(\sum_{\{u,v\}\in B} b_{ss'}^T L_{H\setminus D}^+ (b_{us'} + b_{vs'})\right)$$

All of these functions are electrical functions for the graph $H$, as they are preserved under splitting edges in series and in parallel.

We now check that Proposition \ref{prop:stable-concentration} can be applied between ``If'' block executions to show that $x_0$, $x_1$, $x_2$, and $y_0$ do not change by more than a $(1 + \ep/(2\log^2 n))$-factor over each ``Else''-only interval. Line \ref{line:select-1} is equivalent picking uniformly random edges from $Z'$ without replacement $K$ times. Let $\Delta := \Delta^{I\setminus D}(S,S') + \Delta^{I\setminus D}(S',S) + |D|$. By Propositions \ref{prop:delta-stability},\ref{prop:deferred-stability}, and \ref{prop:main-stability}, each of the selected edges is sampled from a set that is a $(\tilde{O}(\rho),\tilde{O}(\rho (\Delta/p)),0)$-stable subset of $W$ for the $x_i$s and $(\tilde{O}(\rho),\tilde{O}(\rho (\Delta/p)),r_{min}/n^4)$-stable for $y_0$. Therefore, we may apply Proposition \ref{prop:stable-concentration} with

\begin{align*}
\rho_E(\{x_a(H)\}_a) &\gets \tilde{O}(\rho (\Delta^{H\setminus D}(S,S') + \Delta^{H\setminus D}(S',S) + |D|)/p))\\ 
&= \tilde{O}(\rho (\delta_{S,S'}(H\setminus D) + \delta_{S',S}(H\setminus D))/(p b_{ss'}^T L_{(H\setminus D)/(S,S')}^+ b_{ss'}))\\
&= \tilde{O}(\rho (x_1(H) + x_2(H))/(p x_0(H)))\\
\end{align*}

$$\rho_L \gets \tilde{O}(\rho)$$

$$\delta \gets r_{min}/n^4$$

and

$$\sigma \gets \tilde{\Omega}(\rho^3/\ep^2)$$

because $\rho_E$ is a $1$-multlipschitz function in its inputs. In order to apply Proposition \ref{prop:stable-concentration}, we need to bound $\tau_E$. By the ``If'' statement and the ``Size of $W$'' bound earlier in this proof, $\tau_E\le \ep/(100\log^2 n)$ and $\sqrt{\gamma \ell_1}\rho_L^{3/2}/\sqrt{\sigma}\le \ep/(100\log^2 n)$. Therefore, Proposition \ref{prop:stable-concentration} implies that each $x_i$ function changes by at most a factor of $(1 + \ep/(8\log^2 n))$ during each ``Else''-only interval. Furthermore, $y_0$ changes by at most an $(1 + \ep/(8\log^2 n))$ factor during each interval, along with an additive change of at most $\tilde{O}(\rho r_{min}|F|/n^4)\le r_{min}/n^3$. This is the desired change in each ``Else''-interval.

\textbf{Main objective.} Each ``Else'' interval causes $x_0(H) = b_{ss'}^T L_{H\setminus D}^+ b_{ss'}$ to change by a factor of at most $(1 + \ep/(8\log^2 n))$. By ``Number of If block visits,'' there are at most $2\log^2 n$ of these intervals. Therefore, $x_0(I)\ge (1 - \ep/(8\log^2 n))^{2\log^2 n}x_0(I_0)\ge (1 - \ep)x_0(I_0)$. In particular, $b_{ss'}^T L_{I\setminus D}^+ b_{ss'} \ge (1 - \ep) b_{ss'}^T L_{I_0\setminus D}^+ b_{ss'}$, as desired.

\textbf{Normalized potentials are not too high on average.} Each ``Else''interval causes the quantity $\Delta^{I\setminus D}(S,S') + \Delta^{I\setminus D}(S',S)$ to increase by a factor of at most $(1 + \ep/(8\log^2 n))$ by Proposition \ref{prop:stable-concentration} applied to all of the $x_i$s. Therefore, the total increase over the course of the entire algorithm is at most a factor of $(1 + \ep)\le 2$, since there are at most $2\log^2 n$ ``Else'' intervals. Therefore, $I$ in the bound of the ``If'' statement can be replaced with $I_0$ with at most a factor of 2 increase in the value.

Each ``If'' statement adds at most $\tilde{O}(\rho^3\ep^{-2}(\Delta^{I_0\setminus D}(S,S') + \Delta^{I_0\setminus D}(S',S) + |D|)2^i)$ edges to $A$ or $B$. Each of these edges contributes at most $2^{-i-2}(b_{ss'}^T L_{I\setminus D}^+ b_{ss'})\le 2^{-i-1}(b_{ss'}^T L_{I_0\setminus D}^+ b_{ss'})$ (by Proposition \ref{prop:stable-concentration}) to the sum, so each ``If'' block increases $y_0$ by at most 

\begin{align*}
&\tilde{O}(\rho^3\ep^{-2}(\Delta^{I_0\setminus D}(S,S') + \Delta^{I_0\setminus D}(S',S) + |D|)2^i) 2^{-i-1}(b_{ss'}^T L_{I_0\setminus D}^+ b_{ss'})\\
&\le \tilde{O}(\rho^3\ep^{-2}(\Delta^{I_0\setminus D}(S,S') + \Delta^{I_0\setminus D}(S',S) + |D|))(b_{ss'}^T L_{I_0\setminus D}^+ b_{ss'})\\
\end{align*}

additively. Each ``Else'' statement increases the value of $y_0$ by at most a factor of $(1 + \ep/(8\log^2 n))$ along with an additive increase of at most $r_{min}/n^3$. Therefore, the value of $y_0$ immediately before Line \ref{line:contract-remaining} is at most $\tilde{O}(\rho^3\ep^{-2}(\Delta^{I_0\setminus D}(S,S') + \Delta^{I_0\setminus D}(S',S) + |D|))(b_{ss'}^T L_{I_0\setminus D}^+ b_{ss'})$, as desired. Dividing both sides by $(b_{ss'}^T L_{I_0\setminus D}^+ b_{ss'})$ and using the concentration of $x_0$ gives the desired result.
\end{proof}

We now use this proposition, along with the following result:

\begin{restatable}{proposition}{propcondpotbound}\label{prop:cond-pot-bound}
Consider two disjoint sets of vertices $X$ and $Y$ in a graph $G$. Let $G' = G/(X,Y)$, with $x$ and $y$ the identifications of $X$ and $Y$ respectively. Let $A$ and $B$ be sets of edges for which both endpoints have normalized $L_{G'}^+ b_{xy}$ potential at most $\gamma$ and at least $1-\gamma$ respectively for some $\gamma\in (0,1/2)$. Arbitrarily contract and delete edges in $A$ and $B$ in $G$ to obtain the graph $H$. Then

$$c^H(X,Y)\le \frac{1}{(1-\gamma)^2}c^G(X,Y)$$
\end{restatable}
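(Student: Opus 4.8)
The plan is to reduce everything to a single monotonicity statement about the effective resistance $b_{xy}^TL^+b_{xy}$ (whose reciprocal is $c^\cdot(X,Y)$ by Proposition \ref{prop:lin-alg-total-cond}), and to handle the contractions and deletions in $A$ and $B$ one edge at a time using the rank-one update formulas (Propositions \ref{prop:deletion-update} and \ref{prop:contraction-update}). Concretely, work in the graph $G'=G/(X,Y)$ with identifications $x,y$; since Schur complements commute with identification and with edge deletions/contractions in the kept set, it suffices to show $b_{xy}^TL_{H'}^+b_{xy}\ge (1-\gamma)^2\, b_{xy}^TL_{G'}^+b_{xy}$, where $H'$ is $G'$ with the edges of $A$ contracted/deleted and the edges of $B$ contracted/deleted. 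Normalize the $x$–$y$ electrical flow in $G'$ so the potential at $x$ is $0$ and at $y$ is $1$; write $\phi_v\in[0,1]$ for the potential of vertex $v$, so every edge $\{u,v\}\in A$ has $\phi_u,\phi_v\le\gamma$ and every edge of $B$ has $\phi_u,\phi_v\ge 1-\gamma$.

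First I would dispatch the deletions: deleting any edge only increases $b_{xy}^TL^+b_{xy}$ by Rayleigh monotonicity, so deletions never hurt us and can be ignored. So assume $H'$ is obtained from $G'$ purely by contracting a subset of the edges of $A\cup B$. The key computation is the effect of contracting one edge $f=\{u,v\}$ with both endpoints having potential $\le\gamma$ (the $B$ case is symmetric, replacing $\gamma$ by $1-\gamma$ and swapping the roles of $x$ and $y$). By Proposition \ref{prop:contraction-update} with $d=b_{xy}$,
\[
b_{xy}^TL_{G'/f}^+b_{xy}=b_{xy}^TL_{G'}^+b_{xy}-\frac{(b_{xy}^TL_{G'}^+b_f)^2}{r_f\,\texttt{lev}_{G'}(f)}.
\]
Now $b_{xy}^TL_{G'}^+b_f=\phi_u-\phi_v$ is the potential drop across $f$, while $\texttt{lev}_{G'}(f)=c_f(b_f^TL_{G'}^+b_f)$, and by the same rank-one reasoning the energy contribution of $f$, namely $(\phi_u-\phi_v)^2/r_f$, is at most $\texttt{lev}_{G'}(f)\cdot b_{xy}^TL_{G'}^+b_{xy}$ — more precisely $(\phi_u-\phi_v)^2/(r_f\,\texttt{lev}_{G'}(f))\le b_{xy}^TL_{G'}^+b_{xy}$ with the stronger bound that the decrease is controlled by how much ``$x$–$y$ flow'' passes through $f$. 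The cleanest way to package this is via potentials: one shows directly that for a contraction of $f=\{u,v\}$,
\[
b_{xy}^TL_{G'/f}^+b_{xy}\ \ge\ \Bigl(\tfrac{\min(\phi_u,\phi_v)}{?}\Bigr)^2\cdots
\]
— so the main step I would actually carry out is to prove the clean inequality that contracting an edge whose far endpoint (relative to $x$) has potential $\le\gamma$ multiplies $b_{xy}^TL^+b_{xy}$ by a factor at least $(1-\gamma)$, and that these factors compose. Intuitively, contracting all the low-potential edges is dominated by collapsing the entire potential-$\le\gamma$ level set $Z_x$ to $x$; by Proposition \ref{prop:deg-pot-bound}-style reasoning (or directly: the effective resistance from the level set $\{\phi\le\gamma\}$ to $y$ is at least $(1-\gamma)$ times the $x$–$y$ resistance, since the potential only has range $[\gamma,1]$ left to traverse), identifying $Z_x$ with $x$ replaces $b_{xy}^TL_{G'}^+b_{xy}$ by something $\ge(1-\gamma)\,b_{xy}^TL_{G'}^+b_{xy}$, and symmetrically identifying the high-potential level set $Z_y$ with $y$ costs another factor $(1-\gamma)$. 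Since contracting a sub-collection of edges inside $Z_x$ (resp. $Z_y$) is dominated by contracting all of $Z_x$ into a single vertex (Rayleigh monotonicity: contracting more edges only decreases resistance), we get $b_{xy}^TL_{H'}^+b_{xy}\ge(1-\gamma)^2\,b_{xy}^TL_{G'}^+b_{xy}$, and taking reciprocals via Proposition \ref{prop:lin-alg-total-cond} finishes the proof.

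The main obstacle I anticipate is making the ``identifying a low-potential level set with $x$ only costs a factor $(1-\gamma)$'' step fully rigorous, since the level set $Z_x$ is not a single vertex and the edges of $A$ need not cover all of $G'[Z_x]$. The right tool is a short electrical argument: in $G'$, the potential function restricted to $\{\phi\ge\gamma\}$ with boundary values $\gamma$ on $\partial Z_x$ and $1$ on $y$ is, after affine rescaling by $1/(1-\gamma)$, a valid unit $x$–$y$ potential in $G'/Z_x$, whose energy is therefore at most $1/(1-\gamma)^2$ times the energy of a scaled flow — equivalently $\texttt{Reff}_{G'/Z_x}(x,y)\ge(1-\gamma)\texttt{Reff}_{G'}(x,y)$ after accounting for one factor; combined with the symmetric statement for $Z_y$ this gives the $(1-\gamma)^2$. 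I would state this as a small lemma (it is essentially Lemma \ref{lem:well-sep} / Proposition \ref{prop:deg-pot-bound} in spirit) and then note that arbitrary contraction/deletion of edges within $A\subseteq G'[Z_x]$ and $B\subseteq G'[Z_y]$ only helps by Rayleigh monotonicity, so the worst case is exactly full identification of the two level sets.
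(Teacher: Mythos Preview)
Your approach is in the same spirit as the paper's: the paper simply invokes Proposition~\ref{prop:cond-pot} twice (once for $A$, once for $B$), and you are essentially re-deriving that proposition inline via level sets and Rayleigh monotonicity. The reductions ``deletions only help'' and ``arbitrary contractions within $A$ are dominated by identifying the full level set $Z_x$ to $x$'' are correct and are exactly the content of Proposition~\ref{prop:cond-pot}.

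There is, however, a genuine gap in the step ``symmetrically identifying $Z_y$ with $y$ costs another factor $(1-\gamma)$,'' and the paper's two-line proof glosses over the same point. After identifying $Z_x$ to $x$, the normalized potentials shift: a vertex with original normalized potential $\phi_v$ now has normalized potential $(\phi_v-\gamma)/(1-\gamma)$, so the original $Z_y=\{\phi\ge 1-\gamma\}$ is only the set $\{\phi'\ge (1-2\gamma)/(1-\gamma)\}$ in the new graph, and identifying it costs a factor $(1-2\gamma)/(1-\gamma)$ rather than $(1-\gamma)$. The product is $1-2\gamma$, not $(1-\gamma)^2$, and in fact the stated constant is not achievable: on the unit-resistance path $x\,a\,b\,y$ with $\gamma$ just above $1/3$, one has $(x,a)\in A$ and $(b,y)\in B$, and contracting both gives $c^H(x,y)=1$ while $\tfrac{1}{(1-\gamma)^2}\,c^G(x,y)\approx\tfrac{9}{4}\cdot\tfrac{1}{3}=\tfrac{3}{4}<1$. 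Your argument, carried out carefully, yields the correct bound $c^H(X,Y)\le\tfrac{1}{1-2\gamma}\,c^G(X,Y)$, which is tight on this example and is all that the paper actually uses: in Lemma~\ref{lem:oracle-fix} the proposition is applied with $\gamma=\epsilon/2$ and only the weaker consequence $(1-\gamma)^2\ge 1-2\gamma=1-\epsilon$ is needed.
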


to show Lemma \ref{lem:oracle-fix} using the intuition given earlier:

\begin{proof}[Proof of Lemma \ref{lem:oracle-fix}]

\underline{Correctness and runtime of conditioning.} The conditioning process given in $\Fix$ can be written in the form described in Proposition \ref{prop:split-partial}. Therefore, by Proposition \ref{prop:split-partial}, its output is equivalent in distribution to sampling from $I[F]$. Furthermore, Line \ref{line:split-condition} can be implemented using the sample $J$ in constant time per edge by Proposition \ref{prop:precomp-partial}.

\textbf{Conductance.} By the ``Main objective'' condition, the $s-s'$ effective resistance in $I\setminus D$ is at least $(1 - \ep)$ times its original value right before Line \ref{line:contract-remaining}. Contracting or deleting $(A\cup B)\setminus F'$ in $I$ only decreases the resistance by a factor of at most $(1 - \ep/2)^2\ge (1 - \ep)$ by Proposition \ref{prop:cond-pot-bound}. By Rayleigh monotonicity, deleting $F'$ only increases the $s-s'$ resistance. The total decrease due to all of these changes is at most  a factor of $(1 - \ep)^2\ge 1 - O(\ep)$. Taking reciprocals and using the fact that $c^{J\setminus D\setminus F'}(S,S') = 1/(b_{ss'}^T L_{J\setminus D\setminus F'}^+ b_{ss'})$ yields the desired result.

\textbf{Size.} By Markov's Inequality and the ``Normalized potentials of deferred edges are not too high on average'' guarantee of Proposition \ref{prop:end-state}, only $\tilde{O}(\rho^3\ep^{-3}(\Delta^{I\setminus D}(S,S') + \Delta^{I\setminus D}(S',S) + |D|))$ edges in $A\cup B$ can have any endpoint with normalized potential in the interval $[\ep/2, 1-\ep/2]$. Therefore, $|F'|\le \tilde{O}(\rho^3\ep^{-3}(\Delta^{I\setminus D}(S,S') + \Delta^{I\setminus D}(S',S) + |D|))$, as desired.

\textbf{Runtime.} Each ``While'' loop iteration takes $\tilde{O}(m + \mc{T}(\Oracle))$ time, so it suffices to bound the number of ``While'' loop iterations. Each ``If'' block ``While'' loop iteration reduces the size of $|F|$ by a factor of $1 - 1/(i_{\max})\le 1 - 1/(2\log n)$, so only $O(\log^2 n)$ such iterations can occur. Each ``Else'' block ``While'' loop iteration decreases the size of $F$ in expectation by at least $K(|F|)/4$ by Proposition \ref{prop:split-cond}. By Chernoff bounds, for $K(|F|)\ge \text{polylog}(n)$, $F$ decreases in size by at least $K(|F|)/8$ with high probability. Therefore, each ``Else'' iteration reduces the size of $|F|$ by a factor of $(1 - \frac{K(|F|)}{|F|})$ with high probability. Therefore, only $\tilde{O}(|F|/K(|F|))$ ``Else'' iterations can occur. All ``While'' loop iterations either call ``If'' or ``Else,'' so we have finished the runtime analysis.
\end{proof}

\subsection{Warmup: A $(\text{polylog}(n),1)$-stable oracle}

We now give a stable oracle $\SlowOracle$ that suffices for proving Lemma \ref{lem:slow-fix}. $\SlowOracle$ is similar in concept to picking the minimum energy edge in $W$. However, it needs to do so for multiple functions simultaneously and for more complicated functions than effective resistances. To cope with these complexities, we exploit Theorem \ref{thm:harsha} in place of the fact that the sum of the energies on edges is the effective resistance.

\begin{algorithm}[H]
\SetAlgoLined
\DontPrintSemicolon
\caption{$\SlowOracle(I,S,S',D,A,B,W)$, never executed}

    Return all edges in $W$ that satisfy all of the inequalities in the ``$S-S'$-normalized degree change stability,'' ``Deferred endpoint potential change stability,'' and ``Main objective change stability'' guarantees of $\Oracle$ with $\rho = 400 \log (\beta n)$\;

\end{algorithm}

\begin{proposition}\label{prop:slow-oracle}
There is a $(400\log (\beta n),1)$-stable oracle $\SlowOracle$.
\end{proposition}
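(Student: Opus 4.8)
The plan is to verify the definition of a $(\rho, K)$-stable oracle (Definition \ref{def:oracle}) directly with $\rho = 400\log(\beta n)$ and $K(z) = 1$. When $K = 1$, the stability properties that must hold ``for all $i < K(|W|)$'' reduce to the single case $i = 0$, where $I_0 = I$ and $f_0$ is a uniformly random edge of $Z$; this trivializes the ``Leverage score stability'' and ``Midpoint potential stability'' conclusions, since they are literally the input hypotheses (``Bounded leverage score difference'' and ``Narrow potential neighborhood'') restated for $f_0 \in Z \subseteq W$. So the real content is: (a) the returned set $Z$ satisfies $|Z| \ge |W|/\log^4 n$, and (b) every edge of $Z$ satisfies the ``$S-S'$-normalized degree change stability,'' ``Deferred endpoint potential change stability,'' and ``Main objective change stability'' inequalities — but that is exactly how $\SlowOracle$ defines $Z$, so (b) is immediate. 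Hence the entire proof collapses to establishing (a): that the number of edges in $W$ satisfying all of these inequalities simultaneously is at least $|W|/\log^4 n$.

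First I would prove (a) by an averaging (Markov) argument: for each of the seven or eight inequalities appearing across the three cited guarantees, I will bound the \emph{sum} over $f \in W$ of the left-hand side and show it is at most $\frac{\rho}{|W|} \cdot (\text{polylog}(n)) \cdot (\text{the normalizing quantity on the right})$ times $|W|$ — i.e., each inequality fails for at most an $O(1/\log^4 n)$ fraction of $W$ once $\rho$ is chosen large enough. Concretely, for the ``Main objective change stability'' inequality $\frac{(b_{ss'}^T L_{(I\setminus D)/(S,S')}^+ b_{f})^2}{r_f} \le \frac{\rho}{|W|} b_{ss'}^T L_{(I\setminus D)/(S,S')}^+ b_{ss'}$, the key fact is that $\sum_{f} \frac{(b_{ss'}^T L^+ b_f)^2}{r_f} = b_{ss'}^T L^+ b_{ss'}$ (energy decomposition of an electrical flow), so the average LHS is $\frac{1}{|W|} b_{ss'}^T L^+ b_{ss'}$ and Markov kills all but a $1/\rho'$ fraction for any slack $\rho'$. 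For the ``$S-S'$-normalized degree change stability'' and ``Deferred endpoint potential change stability'' inequalities, the left-hand sides are more complex bilinear-in-$f$ expressions, and here I would invoke Theorem \ref{thm:harsha} (the \cite{HHNRR08} bound) in place of the naive energy decomposition: summing over $f$ the quantities of the form $\sum_{f}\frac{|b_{xy}^T L^+ b_f||b_f^T L^+ b_z|}{r_f}$ is bounded by $(8\log\tau)|b_{xy}^T L^+ b_z| + \frac{1}{\tau}(\cdots)$, and choosing $\tau = n^4$ (or similar) makes the $1/\tau$ term negligible against $r_{min}/n^4$, exactly matching the additive slack allowed in the Definition \ref{def:oracle} inequalities. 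The factors $(b_{sw}^T L_{(I\setminus D)/S}^+ b_{sw})$ and $(b_{sw}^T L^+ b_{sw})$ multiplying the inner sums, and the sum over $w \in S'$, are handled by recognizing that $\sum_{w \in S'}(b_{sw}^T L_{(H\setminus D)/S}^+ b_{sw}) \cdot (\text{normalized inner conductance quantity})$ is precisely (a rank-1-update piece of) $\delta_{S,S'}(I\setminus D)$ via the Remark following Definition (Normalized degree) — so the right-hand side normalizer $\frac{\rho}{|W|}\delta_{S,S'}(I\setminus D)$ falls out with $\rho = \Theta(\log(\beta n))$ absorbing the $8\log\tau = \Theta(\log n)$ from Theorem \ref{thm:harsha} and the bucketing factors.

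Then I would combine the failure-probability bounds by a union bound over the constant number ($O(1)$) of inequalities: each fails for at most a $c/\rho$ fraction of $W$ for a fixed constant $c$ independent of $n$, so choosing $\rho = 400\log(\beta n) \ge$ (number of inequalities) $\cdot c \cdot \log^4 n$ — which holds for $\rho = 400\log(\beta n)$ as long as we are in the regime where the oracle is invoked (recall $\Fix$ only calls the oracle when $|W|$ is already polynomially large in the relevant quantities, and $n$ is large) — guarantees that the set $Z$ of edges surviving all inequalities has $|Z| \ge |W|(1 - O(1/\log^4 n)) \ge |W|/\log^4 n$. This establishes the ``Size of $Z$'' guarantee; since the high-probability statements in Definition \ref{def:oracle} for $i < K = 1$ are deterministic facts about $f_0 \in Z$ (no randomness in a single step beyond the split/condition, which does not affect the stated inequalities as they are about $I_0 = I$), nothing further is needed.

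The main obstacle I anticipate is the bookkeeping in step (b)/(a) above: correctly matching each of the several left-hand side expressions in the ``$S-S'$-normalized degree change stability'' and ``Deferred endpoint potential change stability'' guarantees to an application of Theorem \ref{thm:harsha} (which is stated for source-\emph{sharing} demand vectors $b_{s_1 t}, b_{s_2 t}$), since some of the LHS terms involve products like $|b_{ss'}^T L^+ b_f||b_f^T L^+ b_e|$ where $b_e = b_{uv}$ is \emph{not} of the source-sharing form — these need to be split via $b_{uv} = b_{us'} - b_{vs'}$ (or $b_{su} - b_{sv}$) and the triangle inequality before Theorem \ref{thm:harsha} applies, and one must check that the quadratic ``self-energy'' terms $\frac{(b_{sw}^T L^+ b_f)^2}{r_f}$ appearing in the ``energy term stability'' inequalities are handled by the pure energy-decomposition bound rather than Theorem \ref{thm:harsha}. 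Getting all the normalizers and polylog factors to line up so that $\rho = 400\log(\beta n)$ genuinely suffices (rather than, say, $\log^5 n$) is the delicate part, but it is purely a matter of being careful with constants — there is no conceptual difficulty, since averaging plus Theorem \ref{thm:harsha} is exactly the right tool.
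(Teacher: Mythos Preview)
Your proposal is correct and follows the same approach as the paper: the $K=1$ case trivializes all stability guarantees except ``Size of $Z$,'' which is then established by summing each left-hand side over $f\in W$ (using the energy decomposition for the main-objective term and Theorem~\ref{thm:harsha} for the rest) and applying Markov's inequality. One simplification you can make: the splitting $b_e = b_{us'} - b_{vs'}$ you anticipate is unnecessary, since in the conductance-term sums the edges $e\in\partial_{I}w$ for $w\in S'$ already have $s'$ as an endpoint in $(I\setminus D)/(S,S')$, and in the deferred-potential sums the vectors $b_{su}+b_{sv}$ and $b_{us'}+b_{vs'}$ already share an endpoint with $b_{ss'}$---so Theorem~\ref{thm:harsha} applies directly, and the Markov argument yields $|Z|\ge |W|/2$ (not merely $|W|/\log^4 n$).
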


\begin{proof}
\textbf{Everything besides size of $Z$.} Let $Z\gets \SlowOracle(I,S,S',D,A,B,W)$. By construction, all edges in $Z$ (in particular $f_0$) satisfy the ``$S-S'$-normalized degree change stability,'' ``Deferred endpoint potential change stability,'' and ``Main objective change stability'' guarantees of $\Oracle$. The ``Midpoint potential stability'' and ``Leverage score stability'' guarantees follow from the fact that $K = 1$.

\textbf{Size of $Z$.} In each of the ``Conductance term stability'' quantities, the edge $e$ has $s'$ as an endpoint in the graph $(I_0\setminus D)/(S,S')$. Furthermore, in the ``Deferred endpoint potential change stability'' quantity, $b_{su} + b_{sv}$ has one source that is the same as $b_{ss'}$'s source. Similarly, $b_{us'} + b_{vs'}$ and $b_{ss'}$ have the same sink. Therefore, Theorem \ref{thm:harsha} applies with $\tau = \beta^{12} n^{12}$ and shows that

\begin{align*}
&\sum_{f\in W} \sum_{w\in S'} (b_{sw}^T L_{(I_0\setminus D)/S}^+ b_{sw}) \sum_{e\in \partial_{I_0}w} \frac{|b_{ss'}^T L_{(I_0\setminus D)/(S,S')}^+ b_f| |b_f^T L_{(I_0\setminus D)/(S,S')}^+ b_e|}{r_fr_e}\\
&= \sum_{w\in S'} (b_{sw}^T L_{(I_0\setminus D)/S}^+ b_{sw}) \sum_{e\in \partial_{I_0}w} \sum_{f\in W} \frac{|b_{ss'}^T L_{(I_0\setminus D)/(S,S')}^+ b_f| |b_f^T L_{(I_0\setminus D)/(S,S')}^+ b_e|}{r_fr_e}\\
&\le \sum_{w\in S'} (b_{sw}^T L_{(I_0\setminus D)/S}^+ b_{sw}) \sum_{e\in \partial_{I_0}w} \left((12\log (n\beta))\frac{b_{ss'}^T L_{(I_0\setminus D)/(S,S')}^+ b_e}{r_e} + \frac{n^2r_{max}}{r_e\beta^{12} n^{12}}\right)\\
&\le \sum_{w\in S'} (b_{sw}^T L_{(I_0\setminus D)/S}^+ b_{sw}) \sum_{e\in \partial_{I_0}w} \left((24\log (n\beta))\frac{b_{ss'}^T L_{(I_0\setminus D)/(S,S')}^+ b_e}{r_e}\right)\\
&= 24\log(n\beta) \delta_{S,S'}(I\setminus D)\\
\end{align*}

where the second-to-last inequality follows from the fact that effective resistances in any graph are within a factor of $n^2\beta$ of one another. Similarly,

$$\sum_{f\in W} \sum_{w\in S} (b_{s'w}^T L_{(I_0\setminus D)/S}^+ b_{s'w}) \sum_{e\in \partial_{I_0}w} \frac{|b_{ss'}^T L_{(I_0\setminus D)/(S,S')}^+ b_f| |b_f^T L_{(I_0\setminus D)/(S,S')}^+ b_e|}{r_fr_e}\le 24\log(n\beta)\delta_{S',S}(I\setminus D)$$

$$\sum_{f\in W} \sum_{w\in S'}\frac{(b_{sw}^T L_{(I_0\setminus D)/S}^+ b_f)^2}{r_f} \sum_{e\in \partial_{I_0}w} \frac{b_{ss'}^T L_{(I_0\setminus D)/(S,S')}^+ b_e}{r_e}\le \delta_{S,S'}(I\setminus D)$$

$$\sum_{f\in W} \sum_{w\in S} \frac{(b_{s'w}^T L_{(I_0\setminus D)/S'}^+ b_f)^2}{r_f} \sum_{e\in \partial_{I_0}w} \frac{b_{ss'}^T L_{(I_0\setminus D)/(S,S')}^+ b_e}{r_e}\le \delta_{S',S}(I\setminus D)$$

\begin{align*}
&\sum_{f\in W}\left(\sum_{\{u,v\}\in A} \frac{|b_{ss'}^T L_{(I_0\setminus D)/(S,S')}^+ b_f| |b_f^T L_{(I_0\setminus D)/(S,S')}^+ (b_{su} + b_{sv})|}{r_f}\right)\\
&+ \sum_{f\in W} \left(\sum_{\{u,v\}\in B} \frac{|b_{ss'}^T L_{(I_0\setminus D)/(S,S')}^+ b_f| |b_f^T L_{(I_0\setminus D)/(S,S')}^+ (b_{us'} + b_{vs'})|}{r_f}\right)\\
&\le 12\log(\beta n)\sum_{\{u,v\}\in A} b_{ss'}^T L_{(I\setminus D)/(S,S')}^+ (b_{su} + b_{sv}) + 12\log(\beta n)\sum_{\{u,v\}\in B} b_{ss'}^T L_{(I\setminus D)/(S,S')}^+ (b_{us'} + b_{vs'}) + \frac{r_{min}}{n^6}
\end{align*}

$$\sum_{f\in W}\frac{(b_{ss'}^T L_{(I_0\setminus D)/(S,S')}^+ b_f)^2}{r_f}\le b_{ss'}^T L_{(I\setminus D)/(S,S')}^+ b_{ss'}$$

By Markov's Inequality, only $|W|/16$ edges in $W$ can violate any one of the six conditions tested in $\SlowOracle$. Therefore, $|Z|\ge |W| - 6|W|/16\ge |W|/2$, as desired.

\end{proof}

\begin{proof}[Proof of Lemma \ref{lem:slow-fix}]
Follows directly from Lemma \ref{lem:oracle-fix}, with $\SlowOracle$ substituted in for $\Oracle$ by Proposition \ref{prop:slow-oracle}.
\end{proof}

\newpage

\section{Efficient construction for the conductance concentration inequality}\label{sec:fast-fix}

To accelerate $\Fix$, we need to construct an almost-linear time $(m^{o(1)},|W|m^{-o(1)})$-stable oracle. To do this, we need to do the following:

\begin{itemize}
\item Compute a large subset of $W$ consisting of edges that respect all of the conditions of stable oracles in almost-linear time.
\item Show that this large subset of $W$ continues to satisfy the stable conditions even after conditioning on a significant fraction of $W$.
\end{itemize}

The first objective boils down to computing approximations to moments of Laplacian inner products. This can be done using techniques from streaming algorithms; for example \cite{AMS96,I06}. Specifically, we use Theorem \ref{thm:linear-sketch}. The second objective boils down to showing that a set of stable edges remains stable for many iterations. To show this, we use Theorem \ref{thm:localization}.

\subsection{Exploiting localization}

\subsubsection{Concentration preliminaries}

In this subsection, it is helpful to have a few concentration inequalities that allow us to control the $\ell_{\infty}$ norm of certain vectors. We apply Theorem \ref{thm:martingale-2} to obtain two concentration inequalities that will be applied directly:

\begin{restatable}{proposition}{proprandommatixcontrol}\label{prop:random-matrix-control}
Let $\{M^{(k)}\}_k\in \mathbb{R}^{n\times n}$ be a sequence of symmetric, nonnegative random matrices, $\{Z^{(k)}\}_k\in \{0,1\}^n$, and $S^{(k)}\subseteq [n]$ with the following properties:

\begin{itemize}
\item For all $i\in [n]$, $\sum_{j=1,j\ne i}^n M_{ij}^{(0)} \le \sigma$ where $\sigma\le \sigma_0$. Furthermore, $S^{(0)} = \emptyset$.
\item The random variables $\{Z^{(k)}\}_k$ are defined by making $Z^{(k+1)}$ the indicator of a uniformly random choice $w^{(k+1)}\in [n]\setminus S^{(k)}$. Let $S^{(k+1)} := S^{(k)}\cup \{w^{(k+1)}\}$.
\item For all $i,j\in [n]$ and $k$, $M_{ij}^{(k+1)} \le M_{ij}^{(k)} + \gamma \sum_{l=1}^n M_{il}^{(k)} Z_l^{(k+1)} M_{lj}^{(k)}$.
\end{itemize}

With probability at least $1 - 1/n^8$,

$$\sum_{j\ne i, j\notin S^{(k)}} M_{ij}^{(k)} \le \sigma_1$$

for all $i\notin S^{(k)}$ and all $k\le n/2$.
\end{restatable}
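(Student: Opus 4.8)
The plan is to fix a row index $i\in[n]$ and track the random quantity $d_i^{(k)} := \sum_{j\ne i,\,j\notin S^{(k)}} M^{(k)}_{ij}$, the sum of the ``alive'' off-diagonal entries of row $i$, and to show it stays $\le\sigma_1$ for all $k\le n/2$ with the stated probability. The first step is to extract a one-step recursion. Writing $w:=w^{(k+1)}$ for the uniformly random alive index chosen at step $k+1$, the update hypothesis gives $M^{(k+1)}_{ij}\le M^{(k)}_{ij}+\gamma M^{(k)}_{iw}M^{(k)}_{wj}$, and since $S^{(k+1)}=S^{(k)}\cup\{w\}$ removes the term $j=w$, summing over alive $j\ne i$ yields
\[
d_i^{(k+1)} \;\le\; d_i^{(k)} \;-\; M^{(k)}_{iw} \;+\; \gamma\,M^{(k)}_{iw}\,d_w^{(k)},
\]
using $\sum_{j\ne i,\,j\notin S^{(k)},\,j\ne w}M^{(k)}_{wj}\le d_w^{(k)}$ and $M^{(k)}_{iw}\le d_i^{(k)}$ (as $w$ is alive and $w\ne i$). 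So a single step changes $d_i$ by $M^{(k)}_{iw}(\gamma d_w^{(k)}-1)$, a drift-plus-noise increment whose sign is controlled by whether the other rows' masses $d_w^{(k)}$ stay below $1/\gamma$.

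Next I would introduce the global stopping time $\kappa$, the first $k$ at which $d_j^{(k)}$ exceeds $\sigma_1$ (or a suitable constant fraction of it) for some alive $j$; before $\kappa$ every alive row sum is at most $\sigma_1$, so in the recursion $M^{(k)}_{iw}\le\sigma_1$ and $\gamma d_w^{(k)}-1\le\gamma\sigma_1-1$, which gives the stopped process $d_i^{(k\wedge\kappa)}$ clean increment bounds. I would Doob-decompose $d_i^{(k\wedge\kappa)}=\sigma+A_i^{(k)}+N_i^{(k)}$ into its predictable drift part $A_i$ and a martingale $N_i$. The per-step drift is
\[
\mathbf{E}\!\left[\, d_i^{(k+1)}-d_i^{(k)} \,\middle|\, \mathcal F_k \right] \;=\; \frac{1}{n-k}\sum_{w\ \mathrm{alive}} M^{(k)}_{iw}\bigl(\gamma d_w^{(k)}-1\bigr),
\]
which for $k<\kappa$ is bounded using the parameter relations under which the proposition is invoked ($\gamma$ small enough that $\gamma\sigma_1=O(1)$, and $\sigma_1$ a large enough multiple of $\sigma_0$); since $\sum_{k<n/2}\tfrac{1}{n-k}=O(1)$ this gives $A_i^{(k)}\le\sigma_1/2-\sigma$ for all $k\le n/2$. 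The whole burden then falls on controlling $N_i^{(k)}$.

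For the martingale part I would apply Theorem \ref{thm:martingale-2} rather than Azuma (Theorem \ref{thm:azuma}): the worst-case increment of $N_i$ is of order $\max_w M^{(k)}_{iw}$, so an Azuma bound is far too weak, whereas the conditional variance of the increment is only $\tfrac{O(1)}{n-k}\sum_w (M^{(k)}_{iw})^2$. The main obstacle is controlling $\sum_w(M^{(k)}_{iw})^2$ throughout: I would carry an auxiliary $\ell_2$-type row functional $e_i^{(k)}:=\sum_j(M^{(k)}_{ij})^2$ through its own analogous recursion (one step changes it by $\gamma(M^{(k)}_{iw})^2$-type terms) and show it stays $\le\mathrm{polylog}(n)\cdot\sigma_1$ with high probability, invoking the localization bound Theorem \ref{thm:localization} to see that these Laplacian-inner-product matrices have small $\ell_2$ row mass relative to their $\ell_1$ row mass. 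Granting this, $\sum_k\sigma_k^2=\tilde O(\sigma_1^2)$ with a polylog improvement, and Theorem \ref{thm:martingale-2} with $\lambda=\sigma_1/4$ gives $\Pr[\exists k\le n/2:N_i^{(k)}>\sigma_1/4]\le n^{-10}$. A union bound over the $n$ rows then says that with probability $\ge 1-n^{-9}$ every stopped martingale stays below $\sigma_1/4$, so $d_i^{(k\wedge\kappa)}\le\sigma+A_i^{(k)}+N_i^{(k)}<\sigma_1$ for all alive $i$ and all $k\le n/2$; if $\kappa\le n/2$ then by definition some alive $j$ has $d_j^{(\kappa)}>\sigma_1$, yet $d_j^{(\kappa\wedge\kappa)}=d_j^{(\kappa)}<\sigma_1$ on the good event, a contradiction, so $\kappa>n/2$ and we are done. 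I expect the two delicate points to be (i) the row coupling, which forces the global stopping time $\kappa$ and this ``contradiction at $\kappa$'' device instead of independent per-row martingales, and (ii) pushing the conditional-variance bound below the crude estimate $\sum_w(M^{(k)}_{iw})^2\le\sigma_1^2$, since only a polylog improvement there lets Theorem \ref{thm:martingale-2} beat the required $n^{-8}$ failure probability.
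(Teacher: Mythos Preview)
Your high-level plan matches the paper's: fix a row $i$, track $d_i^{(k)}=\sum_{j\ne i,\,j\notin S^{(k)}}M_{ij}^{(k)}$, work under the inductive hypothesis (your stopping time $\kappa$) that all alive row sums stay below $\sigma_1$, and close with a martingale bound via Theorem~\ref{thm:martingale-2}. Your one-step recursion $d_i^{(k+1)}\le d_i^{(k)}-M_{iw}^{(k)}+\gamma M_{iw}^{(k)}d_w^{(k)}$ is also correct.

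The gap is in your concentration step, and your proposed fix does not work. Keeping the $-M_{iw}^{(k)}$ removal term means the martingale increment has magnitude up to $\max_w M_{iw}^{(k)}$, which can be as large as $\sigma_1$. In Theorem~\ref{thm:martingale-2} with $\lambda=\Theta(\sigma_1)$ and $M=\Theta(\sigma_1)$, the $M\lambda/3$ term alone forces the exponent to be $O(1)$, regardless of how small you make $\sum_k\sigma_k^2$. So an $\ell_2$ improvement on the variance cannot rescue the argument; the max-change term is the bottleneck. Separately, invoking Theorem~\ref{thm:localization} here is a category error: the proposition is stated for abstract nonnegative matrix sequences, and localization is a property of Laplacian inner-product matrices, not something you may assume.

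The paper's move is the opposite of refining the variance: it simply \emph{drops} the helpful $-M_{iw}^{(k)}$ term, bounding the increment by $\gamma\sigma_1 M_{iw}^{(k)}$. This trades away the negative drift (the drift becomes positive but tiny, totalling $\le\gamma\sigma_1^2\le\sigma_1/4$) in exchange for shrinking every increment by a factor of $\gamma\sigma_1$. In the regime where this proposition is used one has $\gamma\sigma_1\lesssim 1/\log n$, so now the max change is $\gamma\sigma_1^2$, the total variance is $\gamma^2\sigma_1^4$ (the crude bound $\sum_l(M_{il}^{(k)})^2\le\sigma_1^2$ suffices), and Theorem~\ref{thm:martingale-2} with $\lambda=(8\log n)\gamma\sigma_1^2\le\sigma_1/4$ gives the $n^{-8}$ failure probability directly. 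No auxiliary $\ell_2$ functional is needed.
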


We also need a bound on how $M$ affects a random vector $v$:

\begin{restatable}{proposition}{proprandomvectorcontrol}\label{prop:random-vector-control}
Let $\{M^{(k)}\}_k \in \mathbb{R}^{n\times n}$ be a sequence of symmetric, nonnegative random matrices, $\{v^{(k)}\}_k \in \mathbb{R}^n$ be a sequence of nonnegative random vectors, $\{Z^{(k)}\}_k \in \{0,1\}^n$, and $\{S^{(k)}\}_k \subseteq [n]$ with the following properties:

\begin{itemize}
\item For all $i\in [n]$ and all $k$, $\sum_{j\ne i, j\notin S^{(k)}} M_{ij}^{(k)} \le \sigma_1$.
\item The random variables $\{Z^{(k)}\}_k$ are defined by making $Z^{(k+1)}$ the indicator of a uniformly random choice $w^{(k+1)}\in [n]\setminus S^{(k)}$. Let $S^{(k+1)} := S^{(k)}\cup \{w^{(k+1)}\}$.
\item For all $i\in [n]$, $v_i^{(0)} \le \tau$.
\item For all $i\in [n]$ and $k$, $v_i^{(k+1)} \le v_i^{(k)} + \gamma \sum_{l=1}^n M_{il}^{(k)}Z_l^{(k+1)}(v_l^{(k)} + v_i^{(k)})$.
\end{itemize}

With probability at least $1 - 1/n^8$,

$$v_i^{(k)} \le \tau_1$$

for all $i\notin S^{(k)}$ and all $k\le n/2$.
\end{restatable}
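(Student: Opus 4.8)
\textbf{Proof plan for Proposition \ref{prop:random-vector-control}.}

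The plan is to mimic the structure of the proof of Proposition \ref{prop:random-matrix-control} (which I would carry out first, as the matrix bound is the prerequisite hypothesis here): fix a coordinate $i$, track the scalar random process $v_i^{(k)}$, and apply the Freedman-type martingale inequality of Theorem \ref{thm:martingale-2} to the centered version, then union bound over $i$ and $k$. The key point is that although the recursion for $v_i^{(k+1)}$ couples $v_i$ to all other coordinates through $M^{(k)}$, the first hypothesis (the row-sum bound $\sum_{j\ne i,\, j\notin S^{(k)}} M_{ij}^{(k)} \le \sigma_1$, which is exactly the conclusion of Proposition \ref{prop:random-matrix-control}) lets me control both the one-step drift and the one-step variance of $v_i^{(k)}$ in terms of $\sigma_1$, $\gamma$, and the running bound $\tau_1$ I am trying to establish.

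The concrete steps I would carry out, in order, are: (1) Set a target value $\tau_1$; a bound like $\tau_1 = \tau \cdot n^{o(1)}$ or $\tau_1 = \tau\cdot\mathrm{poly}(\sigma_1,\gamma,\log n)$ should emerge from the calculation, and I would leave its precise form to be read off at the end. (2) Condition on the (high-probability) event from Proposition \ref{prop:random-matrix-control} that all row sums stay $\le \sigma_1$; work inside that event. (3) For a fixed $i$, and working up to the first time $v_i$ would exceed $\tau_1$ (a stopping time — this is the standard trick to make the increments a priori bounded so Theorem \ref{thm:martingale-2} applies), compute $\mathbf{E}[v_i^{(k+1)} - v_i^{(k)} \mid \mathcal{F}_k]$. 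Since $Z^{(k+1)}$ picks $w^{(k+1)}$ uniformly from $[n]\setminus S^{(k)}$, the expectation of $\sum_l M_{il}^{(k)} Z_l^{(k+1)}(v_l^{(k)}+v_i^{(k)})$ is $\frac{1}{n-k}\sum_{l\notin S^{(k)}} M_{il}^{(k)}(v_l^{(k)}+v_i^{(k)}) \le \frac{1}{n-k}\sigma_1(\tau_1 + \tau_1) = \frac{2\sigma_1\tau_1}{n-k}$ on the stopped process (using $v_l^{(k)}\le\tau_1$ by the stopping rule, for $l\ne i$, and $M_{ii}$ contributes $0$ to the off-diagonal row sum but its diagonal term needs separate handling — I would absorb the diagonal into the bound or note $\gamma\sum_l$ ranges over $l$ including $i$, contributing a $\gamma M_{ii}^{(k)}\cdot 2v_i^{(k)}$ term that I bound crudely). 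Summing the drift over $k\le n/2$ gives a total expected increase of order $\sigma_1\tau_1\gamma \sum_{k} \frac{1}{n-k} = O(\sigma_1\tau_1\gamma\log n)$. (4) Bound the conditional variance $\mathrm{Var}(v_i^{(k+1)}\mid\mathcal{F}_k) \le \mathbf{E}[(v_i^{(k+1)}-v_i^{(k)})^2\mid\mathcal{F}_k]$; since the increment is $\gamma\sum_l M_{il}^{(k)}Z_l^{(k+1)}(v_l^{(k)}+v_i^{(k)})$ and $Z^{(k+1)}$ is a single indicator, the square is $\gamma^2 (M_{i,w}^{(k)})^2 (v_w^{(k)}+v_i^{(k)})^2$ for the chosen $w$, whose expectation is $\le \frac{\gamma^2}{n-k}\sigma_1 \cdot (2\tau_1)\cdot(2\tau_1)$ after using $M_{il}^{(k)}\le \sigma_1$ entrywise to pull one factor out of the sum — giving $\sigma_k^2 = O(\gamma^2\sigma_1\tau_1^2/(n-k))$, and also the a priori increment bound $M = O(\gamma\sigma_1\tau_1)$. (5) Apply Theorem \ref{thm:martingale-2} with these $\sigma_k^2$ and $M$: $\sum_k \sigma_k^2 = O(\gamma^2\sigma_1\tau_1^2\log n)$, and choosing $\lambda$ of order $\tau_1$ (to be compared against the drift bound) forces $\tau_1$ to satisfy an inequality roughly of the form $\tau_1 \gtrsim \tau + \gamma\sigma_1\tau_1\log n + \sqrt{\gamma^2\sigma_1\log^2 n}\cdot\tau_1 + \cdots$, which (for the parameter regime in which this is applied, where $\gamma\sigma_1 = n^{-o(1)}$ or similar) is solvable with $\tau_1 = \tau\cdot n^{o(1)}$. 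The deviation probability is $\le e^{-\Omega(\lambda^2/(\sum\sigma_k^2 + M\lambda/3))} \le 1/n^9$ after plugging in, and (6) a union bound over the $n$ coordinates and $\le n$ time steps gives the $1 - 1/n^8$ probability, which also lets me discard the stopping time (the stopped and unstopped processes agree on the good event).

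The main obstacle, and the part requiring the most care, is the self-reference in the drift/variance estimates: the bounds on $\mathbf{E}[v_i^{(k+1)}-v_i^{(k)}]$ and on the variance both involve $\tau_1$, the very quantity being bounded. This is handled cleanly by the stopping-time device — one runs the martingale argument only until $v_i$ first exceeds $\tau_1$, so that $v_l^{(k)}\le\tau_1$ is available by fiat in all the one-step estimates, then shows via the concentration bound that with high probability this stopping time never triggers before step $n/2$, closing the loop. A secondary nuisance is correctly accounting for the diagonal term $l=i$ in the recursion (which contributes a $v_i^{(k)}$-proportional term that is not controlled by the off-diagonal row-sum hypothesis); I would handle it by noting $M_{ii}^{(k)}$ is also bounded — either by $\sigma_1$ via monotonicity of the recursion, or by treating it as a benign multiplicative drift $v_i^{(k+1)} \le (1 + 2\gamma M_{ii}^{(k)})v_i^{(k)} + \cdots$ and absorbing the $\prod_k(1+2\gamma M_{ii}^{(k)}) = n^{o(1)}$ factor into $\tau_1$. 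Everything else is the routine Freedman-inequality bookkeeping already exemplified in Section \ref{sec:slow-fix}.
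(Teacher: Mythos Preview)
Your proposal is essentially correct and follows the same approach as the paper: assume the bound $v_l^{(k)}\le\tau_1$ via an inductive/stopping-time device, then bound the one-step drift, variance, and maximum increment of $v_i^{(k)}$ and apply Theorem~\ref{thm:martingale-2}, closing the loop with a union bound. Two minor simplifications relative to your outline: (i) the diagonal worry is vacuous --- conditioned on $i\notin S^{(k+1)}$ you have $w^{(k+1)}\ne i$, so $Z_i^{(k+1)}=0$ and the $l=i$ term in the recursion contributes nothing; (ii) since $k\le n/2$ you can replace every $\tfrac{1}{n-k}$ by $\tfrac{2}{n}$, so the total drift is $O(\gamma\sigma_1\tau_1)$ rather than $O(\gamma\sigma_1\tau_1\log n)$, and the variance sum is $O(\gamma^2\sigma_1^2\tau_1^2)$ (you dropped a $\sigma_1$ in your variance estimate). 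With these the paper takes $\lambda=(16\log n)\gamma\sigma_1\tau_1\le\tau_1/4$ and the arithmetic closes directly.
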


We prove both of these propositions in Appendix \ref{sec:fast-concentration-appendix}.

\subsubsection{Flexible functions}

We now discuss how to exploit the concentration inequalities from the previous subsubsection to make stable sets of edges remain stable for an almost-linear number of iterations. The bounds below are motivated by applications of Sherman-Morrison.

\begin{definition}[Flexible families of functions]\label{def:flexible-families}
Let $G'$ be a graph and consider any minor $H$ of $G'$. Let $X\subseteq E(G')$ and consider a family of electrical functions $\{g_e\}_{e\in X}$ on minors of $G'$ along with a function $\phi$ that maps minors $H$ of $G'$ to graphs with a subset of the edges in $H$. This family is called \emph{flexible} if for any graph $H$ and any edges $e,f\in X\cap E(H)$, all of the following hold:

\begin{itemize}
\item ($\phi$ commutes with modifications) For any minor $H$ of $G'$ and any edge $f\in E(H)$, $\phi(H/f) = \phi(H)/f$ and $\phi(H\backslash f) = \phi(H)\backslash f$.
\item (Contractions) $|g_e(H/f) - g_e(H)| \le \frac{|b_e^T L_{\phi(H)}^+ b_f|}{\sqrt{r_e}\sqrt{r_f}} \frac{3}{(\texttt{lev}_{\phi(H)}(f))^2} (g_f(H) + g_e(H))$
\item (Deletions) $|g_e(H\backslash f) - g_e(H)| \le \frac{|b_e^T L_{\phi(H)}^+ b_f|}{\sqrt{r_e}\sqrt{r_f}} \frac{3}{(1 - \texttt{lev}_{\phi(H)}(f))^2} (g_f(H) + g_e(H))$
\end{itemize}

\end{definition}

\subsubsection{Crudely controlling flexible functions}

We now make use of Theorem \ref{thm:localization} to find a set of edges for which flexible functions do not increase much if one splits and conditions for a linear number of iterations.

In the $\Fix$ algorithm, edges were split in one of two ways before conditioning on them. This ensured that they had a leverage score that was bounded away from both 0 and 1. In order to decide which way to split an edge, the algorithm must approximately know its effective resistance. Naively, this requires recomputing approximate effective resistances during each iteration. One can avoid recomputation for a linear number of iterations, however, by showing the following:

\begin{proposition}\label{prop:very-stable}
Given any set $X\subseteq E(G')$ in some graph $G'$ along with a graph $G''$ on a subset of the edges of $G'$ with the following property:

\begin{itemize}
\item Each edge $e\in X$ has $\texttt{lev}_{G''}(e)\in [1/4,3/4]$
\end{itemize}

there is a set $Y\gets \VeryStable_{G'}(G'',X)$ with $Y\subseteq X$ and the following additional properties:

\begin{itemize}
\item (Size) $|Y| \ge |X|/\text{polylog}(n)$ with probability at least $\sigma/(16O(\log^2 n))$.

\item (Leverage score bound) Pick a subset $Y_0\subseteq Y$ and sample a random sequence of edges $f_0,f_1,\hdots,f_{|Y_0|/2}$ without replacement. For any integer $0\le i\le |Y_0|/2$, let $G_i''$ denote the graph obtained by arbitrarily deleting/contracting the edges $f_0,f_1,\hdots,f_{i-1}$ in $G''$. Then, with probability at least $1 - 1/n^6$,

$$|\texttt{lev}_{G_i''}(e) - \texttt{lev}_{G''}(e)|\le \frac{1}{8}$$

for all $e\in Y_0\setminus \{f_0,f_1,\hdots,f_{i-1}\}$.

\item (Flexible function bound) For any flexible family of functions $(\{g_e\}_{e\in X},\phi)$ with $G'' = \phi(G')$,

$$\max_{e\in Y_0\setminus \{f_0,f_1,\hdots,f_{i-1}\}} g_e(G_i') \le 2\max_{e\in Y_0} g_e(G')$$

with probability at least $1 - 1/n^6$.

\item (Runtime) The algorithm takes $\tilde{O}(m + \mc T(\ColumnApxPreproc) + |X|\mc T(\ColumnApx))$ time.
\end{itemize}

\end{proposition}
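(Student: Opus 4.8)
The plan is to select $Y$ by a localization argument and then establish the stability guarantees by reducing the per-iteration evolution of leverage scores and of the flexible functions to recursions of exactly the form handled by Propositions \ref{prop:random-matrix-control} and \ref{prop:random-vector-control}.

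\textbf{Selecting $Y$.} First I would apply Theorem \ref{thm:localization} to $G''$ with weight vector $w = \mathbbm{1}_X$, which yields $\sum_{e,f\in X}\frac{|b_e^T L_{G''}^+ b_f|}{\sqrt{r_e}\sqrt{r_f}}\le O(\log^2 n)|X|$. By Markov's inequality at least $|X|/2$ edges $e\in X$ have column sum $\sum_{f\in X}\frac{|b_e^T L_{G''}^+ b_f|}{\sqrt{r_e}\sqrt{r_f}}\le \sigma_0 := O(\log^2 n)$; let $Y$ be (a large subset of) this set. The column sum is an $\ell_1$ norm of the vector $\big(\frac{b_e^T L_{G''}^+ b_f}{\sqrt{r_e}\sqrt{r_f}}\big)_{f\in X}$, which cannot be computed exactly in almost-linear time; instead $\VeryStable$ uses $\ColumnApxPreproc$ together with the $\ell_1$ linear sketch of Theorem \ref{thm:linear-sketch} (realized by $\ColumnApx$) to estimate this norm within a constant factor for every $e\in X$ using $O(1)$ Laplacian solves apiece, giving runtime $\tilde{O}(m + \mc T(\ColumnApxPreproc) + |X|\mc T(\ColumnApx))$; the randomness of the sketch (and the fact that one may have to re-run if the returned set is small) is what makes the ``Size'' guarantee hold only with the stated probability. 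I would also record the exact identity $\sum_{f}\big(\frac{b_e^T L_{G''}^+ b_f}{\sqrt{r_e}\sqrt{r_f}}\big)^2 = \texttt{lev}_{G''}(e)\in[1/4,3/4]$, which will be what keeps the leverage-score drift down to a small constant below.

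\textbf{Setting up the recursions.} For $e,f\in Y_0$ and $k\ge 0$ set $M^{(k)}_{ef} := \frac{|b_e^T L_{G_k''}^+ b_f|}{\sqrt{r_e}\sqrt{r_f}}$, where $G_k'' = \phi(G_k')$ is $G''$ with the first $k$ conditioned edges deleted/contracted (using that $\phi$ commutes with edge modifications). Sherman--Morrison, in the packaged forms of Propositions \ref{prop:deletion-update} and \ref{prop:contraction-update}, shows that modifying the $(k{+}1)$-st edge $w = f_k$ changes $b_e^T L^+ b_f$ by $\pm\frac{(b_e^T L_{G_k''}^+ b_w)(b_w^T L_{G_k''}^+ b_f)}{r_w\,\mu_k(w)}$, where $\mu_k(w)$ equals $\texttt{lev}_{G_k''}(w)$ for a contraction and $1-\texttt{lev}_{G_k''}(w)$ for a deletion. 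Since conditioned edges are first split (Proposition \ref{prop:split-cond}) we have $\mu_0(w)\in[1/4,3/4]$, and as long as this stays bounded away from $0$ and $1$ along the process we get $\mu_k(w)\ge 1/8$, hence $M^{(k+1)}_{ef}\le M^{(k)}_{ef} + 8\sum_l M^{(k)}_{el}Z^{(k+1)}_l M^{(k)}_{lf}$ with $Z^{(k+1)}$ the indicator of $f_k$ — exactly the hypothesis of Proposition \ref{prop:random-matrix-control} with $\gamma = 8$ and $\sigma = \sigma_0$ (the initial row sums restricted to $Y_0\subseteq Y$). The same rank-one updates show that the accumulated change of $\texttt{lev}_{G_k''}(e)$ is bounded by $8\sum_{k'}(M^{(k')}_{ef_{k'}})^2$, and that for a flexible family $(\{g_e\},\phi)$ the Definition-\ref{def:flexible-families} bounds give $v^{(k+1)}_e\le v^{(k)}_e + O(1)\sum_l M^{(k)}_{el}Z^{(k+1)}_l(v^{(k)}_l + v^{(k)}_e)$ for $v^{(k)}_e := g_e(G_k')/\max_{e'\in Y_0}g_{e'}(G')$, which matches Proposition \ref{prop:random-vector-control} with $\tau = 1$.

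\textbf{Assembling and the main obstacle.} Invoking Proposition \ref{prop:random-matrix-control} (for the $M^{(k)}$ row sums) and Proposition \ref{prop:random-vector-control} (for the leverage-score drift and for each flexible $g_e$), each up to $k\le|Y_0|/2$, and taking a union bound over their $1/n^8$ failure events gives: the row sums of $M^{(k)}$ stay $\tilde{O}(1)$, every $\texttt{lev}_{G_k''}(e)$ stays within $1/8$ of $\texttt{lev}_{G''}(e)$, and every flexible function obeys $g_e(G_k')\le 2\max_{e'\in Y_0}g_{e'}(G')$, where the small constants $1/8$ and $2$ are forced by the initial column-sum threshold $\sigma_0$, the sketch accuracy, and the $\ell_2$ identity $\sum_f(M^{(0)}_{ef})^2=\texttt{lev}_{G''}(e)$. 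The main obstacle is the apparent circularity already visible above: the constant $\gamma$ in the $M$-recursion needs leverage scores to stay bounded away from $0$ and $1$, while the leverage-score bound itself is extracted from controlling $M$. I would resolve this exactly as Propositions \ref{prop:random-matrix-control} and \ref{prop:random-vector-control} are designed to be used — by arguing at the martingale stopping time at which any of the invariants (row sum $>$ bound, leverage drift $>1/8$, or $g_e > 2\max$) is first violated, and showing this stopping time exceeds $|Y_0|/2$ with probability $1-1/n^6$ — so that all the bounds become simultaneously self-consistent, and secondarily by checking that the multiplicative error of the $\ell_1$ sketch and the $O(1/\text{polylog})$ loss in $|Y|$ can be absorbed into these constants.
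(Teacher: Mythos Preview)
Your high-level plan is right, but there is a genuine gap in how you select $Y$. You take $Y$ to be the edges whose $\ell_1$ column sum over all of $X$ is at most $\sigma_0 := O(\log^2 n)$, and then feed this $\sigma_0$ into Proposition \ref{prop:random-matrix-control}. But look at what the downstream leverage-score argument actually needs: the per-step change in $\texttt{lev}_{G_k''}(e)$ is at most $8(M^{(k)}_{ef_k})^2$, and after invoking Proposition \ref{prop:random-matrix-control} the paper bounds the accumulated drift via Theorem \ref{thm:martingale-2}, which requires $160(\log n)\sigma_1^2 < 1/8$. In other words $\sigma_1$ --- and hence your initial row-sum bound $\sigma_0$ --- must be $o(1/\sqrt{\log n})$, not $O(\log^2 n)$. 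With your $\sigma_0$, the concentration step produces a drift of order $\text{polylog}(n)$, not $1/8$, and the whole inductive/stopping-time loop collapses. The $\ell_2$ identity $\sum_f (M^{(0)}_{ef})^2 = \texttt{lev}_{G''}(e)$ you invoke does not rescue this: it holds only at $k=0$, and there is no mechanism in your argument to propagate an $\ell_2$ row bound through the Sherman--Morrison recursion (Proposition \ref{prop:random-matrix-control} controls $\ell_1$ row sums, not $\ell_2$).

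The paper fixes this by an additional \emph{random subsampling} step you omitted: it first samples $Z\subseteq X$ independently with rate $\sigma/(8\,O(\log^2 n))$ for a small $\sigma$ (of order $1/\text{polylog}(n)$), and then lets $Y$ be the edges of $Z$ whose column sum \emph{over $Z$} is at most $\sigma$. The localization bound on $X$ plus linearity of expectation makes the expected column sum over $Z$ be $\le \sigma/4$, so Markov keeps a constant fraction of $Z$; this is also why the ``Size'' guarantee is $|X|/\text{polylog}(n)$ with only the stated probability (it is the subsampling rate, not just sketch randomness, that drives both the size loss and the success probability). A smaller point: Proposition \ref{prop:random-vector-control} is the right tool for the flexible-family bound, but it does not fit the leverage-score update $|\texttt{lev}_{k+1}-\texttt{lev}_k|\le 8(M^{(k)}_{ef_k})^2$, which is quadratic in $M$ rather than of the form $M^{(k)}_{el}(v_l+v_e)$; the paper handles the leverage score directly with Theorem \ref{thm:martingale-2} once the $\ell_1$ row sums are under control.
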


To apply this proposition, one just needs to run it $O((\log^3 n)/\sigma)$ times to obtain the desired set with high probability. We encourage the reader to ignore the approximation aspect of the $v_e$s this is only included for efficiency purposes later on:

\begin{algorithm}[H]
\SetAlgoLined
\DontPrintSemicolon

    \KwData{An ambient graph $G'$, a graph $G''$ that stable functions are ``defined'' in (is the image of $\phi$), and a set of edges $X\in E(G')$ for possible conditioning}

    \KwResult{The set $Y\subseteq X$}

    $Z\gets $ subset of edges $e\in X$, with each edge independently added to $Z$ with probability $\sigma/(8O(\log^2 n))$\;

    $\ColumnApxPreproc(G'',Z)$\;

    $Y\gets \emptyset$\;

    \ForEach{$e\in Z$}{

        \tcp{multiplicative 2-approximation to the quantity $\sum_{f\in Z, f\ne e} \frac{|b_e^T L_{G''}^+ b_f|}{\sqrt{r_e}\sqrt{r_f}}$}

        $v_e\gets \ColumnApx(e)$\;

        Add $e$ to $Y$ if $v_e \le \sigma$\;

    }

    \Return{$Y$}

\caption{$\VeryStable_{G'}(G'',X)$}
\end{algorithm}

\begin{proof}
\textbf{Size.} By Markov's Inequality and Theorem \ref{thm:localization} applied to the vector $w = 1_X$, the subset $X_0\subseteq X$ consisting of edges $e\in X$ with $\sum_{f\in X} \frac{|b_e^T L_{G''}^+ b_f|}{\sqrt{r_e}\sqrt{r_f}} \le 2O(\log^2 n)$ has size at least $|X|/2$. For any edge $e\in X_0$,

\begin{align*}
\textbf{E}_Z\left[\sum_{f\in Z, f\ne e} \frac{|b_e^T L_{G''}^+ b_f|}{\sqrt{r_e}\sqrt{r_f}}\mid e\in Z\right] &= \textbf{E}_Z\left[\sum_{f\in Z, f\ne e} \frac{|b_e^T L_{G''}^+ b_f|}{\sqrt{r_e}\sqrt{r_f}}\right]\\
&= \frac{\sigma}{8O(\log^2 n)} \sum_{f\in Z, f\ne e} \frac{|b_e^T L_{G''}^+ b_f|}{\sqrt{r_e}\sqrt{r_f}}\\
&\le \frac{\sigma}{4}\\
\end{align*}

where the first equality follows from the fact that edges in $X$ are added to $X_0$ independently. Therefore, for $e\in X_0$, $\textbf{E}[v_e|e\in Z]\le \sigma/2$ and

\begin{align*}
\Pr_Z[e\in Y] &= \Pr_Z[e\in Y, e\in Z]\\
&= (1 - \Pr_Z[e\notin Y\mid e\in Z])\Pr_Z[e\in Z]\\
&= \left(1 - \Pr_Z\left[v_e > \sigma\mid e\in Z\right]\right)\Pr_Z[e\in Z]\\
&\ge \frac{1}{2}\Pr_Z[e\in Z]\\
&= \frac{\sigma}{16O(\log^2 n)}
\end{align*}

where the inequality follows from Markov. Since $|Y\cap X_0|\le |X_0|$,

$$\Pr_Z\left[|Y\cap X_0| > \frac{\sigma|X_0|}{(32O(\log^2 n))}\right]|X_0| + \frac{\sigma|X_0|}{(32O(\log^2 n))} \ge \textbf{E}_Z[|Y\cap X_0|] \ge \frac{\sigma|X_0|}{(16O(\log^2 n))}$$

and

$$\Pr_Z\left[|Y| > \frac{\sigma|X|}{(64O(\log^2 n))}\right]\ge \frac{\sigma}{(32O(\log^2 n))}$$

thus completing the size bound.

\underline{\textbf{Leverage score bound} and continued electrical flow sum bound.} We now show inductively that for all $k\le |Y_0|/2$ and $e\in Y_0\setminus \{f_0,f_1,\hdots,f_{k-1}\}$, both

\begin{align}
\sum_{f\in Y_0\setminus \{e,f_0,f_1,\hdots,f_{k-1}\}} \frac{|b_e^T L_{G_k''}^+ b_f|}{\sqrt{r_e}\sqrt{r_f}} \le 4\sigma \label{eqn:local}
\end{align}

and

\begin{align}
|\texttt{lev}_{G_k''}(e) - \texttt{lev}_{G''}|\le \frac{1}{8} \label{eqn:lev}
\end{align}

where $G_k''$ is a graph obtained by contracting or deleting the edges $f_0,f_1,\hdots,f_{k-1}$ in $G''$.

We start by checking the base case. The base case for (\ref{eqn:local}) follows immediately from the approximation lower bound on $v_e$. The base case for (\ref{eqn:lev}) follows from the input condition to Proposition \ref{prop:very-stable}.

Now, we continue on to the inductive step. By the input condition,

$$\frac{1}{4}\le \texttt{lev}_{G''}(e) \le \frac{3}{4}$$

for all $e\in Y_0$. Let $M_{ef}^{(k)} \gets \frac{|b_e^T L_{G_k''}^+ b_f|}{\sqrt{r_e}\sqrt{r_f}}$, $\gamma \gets 8$, and $\sigma\gets 2\sigma$ (last one uses the lower bound for $v_e$). Notice that

$$M_{ef}^{(k+1)} \le M_{ef}^{(k)} + \sum_{g\in Y_0} \frac{1}{\min(\texttt{lev}_{G_k''}(g),\texttt{nonlev}_{G_k''}(g))}Z_g^{(k+1)} M_{eg}^{(k)}M_{gf}^{(k)}$$

by Sherman-Morrison and the triangle inequality. $Z^{(k+1)}$ is the indicator for the edge $f_{k+1}$. By the inductive assumption and the triangle inequality, $\min(\texttt{lev}_{G_k''}(g),\texttt{nonlev}_{G_k''}(g)) \ge 1/8 = 1/\gamma$. Therefore, Proposition \ref{prop:random-matrix-control} applies. It shows that for all $k\le |Y_0|/2$ and all $e\ne f_0,\hdots,f_{k-1}$,

$$\sum_{f\in Y_0\setminus\{e,f_0,\hdots,f_{k-1}\}} \frac{|b_e^T L_{G_k''}^+ b_f|}{\sqrt{r_e}\sqrt{r_f}} \le \sigma_1$$

with probability at least $1 - 1/n^8$. This is (\ref{eqn:local}). Now, we bound how much $e$'s effective resistance can change using martingale concentration. The above inequality implies that

$$\left|\textbf{E}_{G_{k+1}''}\left[\frac{b_e^T L_{G_{k+1}''}^+ b_e}{r_e} \mid G_k''\right] - \frac{b_e^T L_{G_k''}^+ b_e}{r_e}\right|\le 8 \textbf{E}_{G_{k+1}''}\left[\frac{(b_e^T L_{G_k''}^+ b_{f_k})^2}{r_e r_{f_k}}\right] \le \frac{16}{|Y_0|}\sigma_1^2$$

\begin{align*}
\textbf{Var}_{G_{k+1}''}\left[\frac{b_e^T L_{G_{k+1}''}^+ b_e}{r_e} \mid G_k''\right] &= \textbf{Var}_{G_{k+1}''}\left[\frac{b_e^T L_{G_{k+1}''}^+ b_e}{r_e} - \frac{b_e^T L_{G_k''}^+ b_e}{r_e} \mid G_k'\right]\\
&\le 64\textbf{E}_{G_{k+1}''}\left[\frac{(b_e^T L_{G_k''}^+ b_{f_k})^4}{r_e^2 r_{f_k}^2}\mid G_k''\right]\\
&\le \frac{128\sigma_1^4}{|Y_0|}
\end{align*}

$$\left|\frac{b_e^T L_{G_{k+1}''}^+ b_e}{r_e} - \frac{b_e^T L_{G_k''}^+ b_e}{r_e}\right| \le 8\sigma_1^2$$

if $e\ne f_k$. By Theorem \ref{thm:martingale-2}, $\Pr[|\texttt{lev}_{G_k''}(e) - \texttt{lev}_{G''}(e)| > 160(\log n)\sigma_1^2]\le 1/n^{10}$. Since $\texttt{lev}_{G''}(e)\in [1/4,3/4]$ and $160(\log n)\sigma_1^2 < 1/8$, $|\texttt{lev}_{G_k''}(e) - \texttt{lev}_{G''}(e)|\le 1/8$ for all $k$ and all $e\ne f_0,f_1,\hdots,f_{k-1}$ with probability at least $1 - n^2/n^{10} = 1 - 1/n^8$. This verifies the inductive hypothesis and proves (\ref{eqn:lev}).

\textbf{Flexible function bound.} Set $M_{ef}^{(k)} \gets \frac{|b_e^T L_{G_k''}^+ b_f|}{\sqrt{r_e}\sqrt{r_f}}$, $v_e^{(k)} \gets g_e(G_k')$, $\gamma\gets 300$, $\sigma_1\gets \sigma_1$ (using (\ref{eqn:local})), and $\tau\gets \max_{e\in Y_0} g_e(G')$. Notice that

$$M_{ef}^{(k)} = \frac{|b_e^T L_{\phi(G_k')}^+ b_f|}{\sqrt{r_e}\sqrt{r_f}}$$

as well by the ``$\phi$ commutes with modifications'' property of $\phi$. By definition of flexibility and (\ref{eqn:lev}),

\begin{align*}
v_e^{(k+1)} &\le v_e^{(k)} + \sum_{f\in Y_0} \frac{3}{\min(\texttt{lev}_{G_k''}(f), 1 - \texttt{lev}_{G_k''}(f))}M_{ef}^{(k)} Z_f^{(k)} (v_f^{(k)} + v_e^{(k)})\\
&\le v_e^{(k)} + \sum_{f\in Y_0} \gamma M_{ef}^{(k)} Z_f^{(k)} (v_f^{(k)} + v_e^{(k)})\\
\end{align*}

In particular, Proposition \ref{prop:random-vector-control} applies and shows that

$$v_e^{(k)}\le \tau_1$$

for all $e\in Y_0\setminus \{f_0,f_1,\hdots,f_{k-1}\}$, as desired.

\textbf{Runtime.} $\ColumnApx$ is called at most $|X|$ times and $\ColumnApxPreproc$ is called once.
\end{proof}

\subsection{A $(m^{o(1)},|W|m^{-o(1)})$-stable oracle that runs in almost-linear time ($\FastOracle$) given fast approximations to certain quantities}

Now, we implement $\FastOracle$ modulo some subroutines that efficiently return approximations to certain quantities. We start by intuitively discussing how to make each of the quantities that $\FastOracle$ needs to control not change for multiple iterations.

The ``Bounded leverage score difference'' condition ensures that when an edge is split, it can be split in the same direction for all graphs whose edge resistances are between those in $I\setminus D$ and $I/(S,S')$. Specifically, in all graphs between $I\setminus D$ and $I/(S,S')$, splitting an edge $e$ in one particular direction ensures that its leverage score is bounded away from 0 and 1.

\subsubsection{Leverage score stability intuition}

This bound follows immediately from returning a subset of $\VeryStable(I\setminus D,\VeryStable(I/(S,S'),W))$. The ``Leverage score bound'' of Proposition \ref{prop:very-stable} yields the desired result.

\subsubsection{Midpoint potential stability intuition}\label{subsubsec:midpoint}

By Theorem \ref{thm:harsha}, for any $\{u,v\}\in W$,

$$\sum_{f\in W} \frac{|b_{ss'}^T L_{(I\setminus D)/(S,S')}^+ b_f| |b_f^T L_{(I\setminus D)/(S,S')}^+ (b_{su} + b_{sv})/2|}{r_f}\le b_{ss'}^T L_{(I\setminus D)/(S,S')}^+ (b_{su} + b_{sv})/2 \le O(\log (n/p)) p$$

In particular, using the birthday paradox (as in $\VeryStable$), one can find a $1/\text{polylog(n/p)}$ fraction $W'\subseteq W$ for which

$$\frac{|b_{ss'}^T L_{(I\setminus D)/(S,S')}^+ b_f| |b_f^T L_{(I\setminus D)/(S,S')}^+ (b_{su} + b_{sv})/2|}{r_f} \le \frac{p}{\log n}$$

for all pairs of distinct edges $e = \{u,v\}, f\in W'$. Passing $W'$ through $\VeryStable$ makes it so that the following flexible functions of $H$ do not change by more than a factor of 2 over the course of many edge contractions or deletions:

$$g_f^{(0)}(H) := \frac{|b_{ss'}^T L_{(H\setminus D)/(S,S')}^+ b_f|}{\sqrt{r_f}}$$

$$\phi^{(0)}(H) := (H\setminus D)/(S,S')$$

$$g_f^{(1),X,s}(H) := \sum_{\{u,v\}\in X\setminus f} \frac{|b_f^T L_{(H\setminus D)/(S,S')}^+ (b_{su} + b_{sv})/2|}{\sqrt{r_f}}$$

$$g_f^{(1),X,s'}(H) := \sum_{\{u,v\}\in X\setminus f} \frac{|b_f^T L_{(H\setminus D)/(S,S')}^+ (b_{us'} + b_{vs'})/2|}{\sqrt{r_f}}$$

$$\phi^{(1)}(H) := (H\setminus D)/(S,S')$$

for some set of edges $X\subseteq W$.

We have to bucket edges by similar value of these functions in order to fully exploit this (since the bound at the end is in terms of the maximum). Therefore, we can find a relatively large subset of $W$ for which random sampling will on average causes a $(1 - \tilde{O}(1/|W|))$-factor decrease in the function $ b_{ss'}^T L_{(H\setminus D)/(S,S')}^+ (b_{su} + b_{sv})/2$ and in the worst case only decreases it by a $(1 - 1/(\log n))$ factor. Therefore, by Theorem \ref{thm:martingale-2}, $ b_{ss'}^T L_{(I\setminus D)/(S,S')}^+ (b_{su} + b_{sv})/2 $ only decreases by a small constant factor over the course of conditioning on this relatively large subset of $W$.

\subsubsection{Main objective change stability intuition}

$g_f^{(0)}(H)$ is the square root of the quantity that needs to be bounded here. Therefore, we are done by the discussion of the previous subsubsection.

\subsubsection{Deferred endpoint potential change stability intuition}

Exploit bucketing and Theorem \ref{thm:harsha}, as discussed in Subsubsection \ref{subsubsec:midpoint}. In particular, we will control the function

$$g_f^{(2)}(H) := \sum_{\{u,v\}\in A} \frac{|b_f^T L_{(H\setminus D)/(S,S')}^+ (b_{su} + b_{sv})|}{\sqrt{r_f}} + \sum_{\{u,v\}\in B} \frac{|b_f^T L_{(H\setminus D)/(S,S')}^+ (b_{us'} + b_{vs'})|}{\sqrt{r_f}}$$

$$\phi^{(2)}(H) := (H\setminus D)/(S,S')$$

\subsubsection{$S-S'$ normalized degree change stability intuition}

The only complication over the previous subsubsection is that the coefficients $b_{sw}^T L_{(I_i\setminus D)/S}^+ b_{sw}$ and $\frac{b_{ss'}^T L_{(I_i\setminus D)/(S,S')}^+ b_e}{r_e}$ can increase. We show using Theorem \ref{thm:harsha} that stopping after $|W|m^{-o(1)}$ samples does not result in an overly large increase in the coefficients. This is the only part of this section that requires $\rho = m^{o(1)}$.

Specifically, the following proposition will help us. Think of $v^{(k)} = b_{sw}^T L_{(I_k\setminus D)/S}^+ b_{sw}$ and $v^{(k)} = b_{ss'}^T L_{(I_k\setminus D)/(S,S')}^+ b_e$ in two different applications of Proposition \ref{prop:random-sequence-control}. It is proven in Appendix \ref{sec:fast-concentration-appendix}:

\linestart

\begin{restatable}{proposition}{proprandomsequencecontrol}\label{prop:random-sequence-control}
Consider a random sequence $\{v^{(k)}\}_{k\ge 0}$ generated as follows. Given $v^{(k)}$,

\begin{itemize}
\item Pick $\{u_i^{(k)}\}_{i=1}^{\ell_k}$ and $\{w_i^{(k)}\}_{i=1}^{\ell_k}$, with $\sum_{i=1}^{\ell_k} u_i^{(k)}w_i^{(k)} \le \eta v^{(k)}$
\item Let $Z^{(k+1)}\in \{0,1\}^{\ell_k}$ denote the indicator of a uniformly random choice over $[\ell_k]$
\item Pick $v^{(k+1)}\le v^{(k)} + \gamma \sum_{i=1}^{\ell_k} u_i^{(k)} Z_i^{(k+1)} w_i^{(k)}$
\end{itemize}

Let $m_0 = \min_k \ell_k$ and $M_0 = \max_k \ell_k$. Then with probability at least $1 - 2\tau$,

$$v^{(k')} \le (2\gamma \eta)^{\rho} v^{(0)}$$

for all $k'\le m_0 \min(\frac{1}{(\log (M_0^2/\tau))\eta^2\gamma^2},\frac{1}{200\eta \gamma^2 \log (M_0^2/\tau)}(\tau/M_0^2)^{1/\rho})$ 
\end{restatable}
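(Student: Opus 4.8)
The plan is to reduce this to a martingale concentration argument on the quantity $v^{(k)}$ itself, run in $\rho$ nested phases. The key structural fact is that the ``feedback'' term $\sum_i u_i^{(k)} w_i^{(k)} \le \eta v^{(k)}$ controls the conditional expectation of the increment: since $Z^{(k+1)}$ selects a uniformly random index $i\in[\ell_k]$, we have $\mathbf{E}[v^{(k+1)} - v^{(k)} \mid v^{(k)}] \le \frac{\gamma}{\ell_k}\sum_{i=1}^{\ell_k} u_i^{(k)}w_i^{(k)} \le \frac{\gamma\eta}{\ell_k} v^{(k)} \le \frac{\gamma\eta}{m_0} v^{(k)}$. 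So over an interval of $L$ steps, the expectation of $v$ grows by a factor of at most $(1 + \gamma\eta/m_0)^L \le e^{\gamma\eta L/m_0}$, which is $O(1)$ as long as $L \le m_0/(\gamma\eta)$. This is the ``drift'' control; the difficulty is controlling the increment size (the Lipschitz/variance parameters in Theorem \ref{thm:martingale-2}), because a single term $\gamma u_i^{(k)} w_i^{(k)}$ could in principle be as large as $\gamma\eta v^{(k)}$ itself.

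First I would set up the phase structure. Fix a phase length $L = m_0\min\!\left(\frac{1}{(\log(M_0^2/\tau))\eta^2\gamma^2}, \frac{1}{200\eta\gamma^2\log(M_0^2/\tau)}(\tau/M_0^2)^{1/\rho}\right)$; the proposition claims the bound for all $k' \le L$, but I will actually prove a bound that degrades by a factor $2\gamma\eta$ per phase, so phase $j$ runs over steps in $[jL_0, (j+1)L_0)$ for an appropriately chosen sub-length $L_0$, and there are $\rho$ phases covering $[0,L]$. Within a single phase, starting from value $v^{(jL_0)}$, I claim $v^{(k)} \le 2\gamma\eta\, v^{(jL_0)}$ for all $k$ in the phase, with failure probability at most $2\tau/\rho$ (or $2\tau/(\rho M_0^2)$ after a union bound over the at most $M_0^2$ indices being tracked — hence the $\log(M_0^2/\tau)$ factors). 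Iterating over $\rho$ phases gives the $(2\gamma\eta)^\rho$ blow-up in the conclusion. To prove the single-phase claim I would apply Theorem \ref{thm:martingale-2} to the stopped martingale $X_k := v^{(k\wedge T)}$ where $T$ is the first time $v$ exceeds $2\gamma\eta v^{(jL_0)}$; on the event $\{k < T\}$ we have $v^{(k)} \le 2\gamma\eta v^{(jL_0)}$, so each increment is bounded by $M := \gamma\eta \cdot 2\gamma\eta v^{(jL_0)} = 2\gamma^2\eta^2 v^{(jL_0)}$ and the conditional variance by $\sigma_i^2 := \frac{\gamma^2}{m_0}\sum_i (u_i^{(k)}w_i^{(k)})^2 \le \frac{\gamma^2}{m_0}(\max_i u_i^{(k)}w_i^{(k)})\sum_i u_i^{(k)}w_i^{(k)} \le \frac{\gamma^2\eta}{m_0}(\max_i u_i^{(k)}w_i^{(k)}) v^{(k)} \le \frac{2\gamma^4\eta^3}{m_0}(v^{(jL_0)})^2$ (using the crude bound $\max_i u_iw_i \le \eta v^{(k)}$ again). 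Plugging $\lambda = \gamma\eta v^{(jL_0)}$, $n = L_0$, $\sum\sigma_i^2 \le L_0 \cdot \frac{2\gamma^4\eta^3}{m_0}(v^{(jL_0)})^2$, and $M\lambda/3 = \frac{2}{3}\gamma^3\eta^3 (v^{(jL_0)})^2$ into Theorem \ref{thm:martingale-2} gives a tail of roughly $\exp\!\left(-\frac{\gamma^2\eta^2 (v^{(jL_0)})^2}{2(2L_0\gamma^4\eta^3 (v^{(jL_0)})^2/m_0 + \frac{2}{3}\gamma^3\eta^3(v^{(jL_0)})^2)}\right) = \exp\!\left(-\Omega\!\left(\frac{1}{L_0\gamma^2\eta/m_0 + \gamma\eta}\right)\right)$.

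For this to be at most $\tau/(M_0^2\rho)$ we need both $L_0\gamma^2\eta/m_0 \lesssim 1/\log(M_0^2\rho/\tau)$ and $\gamma\eta \lesssim 1/\log(M_0^2\rho/\tau)$; the first gives $L_0 \lesssim m_0/(\gamma^2\eta\log(M_0^2/\tau))$, and since we want $\rho$ phases summing to $L$ we get $L_0 = L/\rho$, which after rearranging is exactly why the two terms in the $\min$ defining $L$ appear — the first term in the $\min$ handles the regime where $\gamma\eta$ is already small, and the $(\tau/M_0^2)^{1/\rho}$ factor in the second term is what makes $L/\rho$ small enough when it is not. I would then union-bound the single-phase failure probabilities over the $\rho$ phases and over the at most $M_0^2$ tracked coordinates (the $\ell_k \le M_0$ indices, squared, accounting for the applications of this proposition to $b_{sw}^TL^+b_{sw}$ and $b_{ss'}^TL^+b_e$ over pairs). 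The main obstacle I anticipate is bookkeeping the constants so that the phase length $L_0$ extracted from the Theorem \ref{thm:martingale-2} tail bound matches, up to the stated constants, the $\min$ expression in the proposition statement — in particular tracking where the factor $200$ and the $(\tau/M_0^2)^{1/\rho}$ come from requires carefully balancing the variance term against the Lipschitz term in the denominator of the Bernstein-type bound, rather than being able to drop either. A secondary subtlety is that the crude bound $\max_i u_i^{(k)}w_i^{(k)} \le \eta v^{(k)}$ is the only handle we have on the increment magnitude (no per-term bound is assumed), so it must be used consistently both for $M$ and inside $\sigma_i^2$, and this is ultimately what forces the exponential-in-$\rho$ loss $(2\gamma\eta)^\rho$ rather than a polynomial bound.
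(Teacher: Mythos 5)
Your proposal has a genuine gap in the single-phase concentration step, and it surfaces exactly where you flag the condition $\gamma\eta \lesssim 1/\log(M_0^2/\tau)$. In the regimes where this proposition is actually invoked ($\gamma = 8$, $\eta \ge 1$), that condition fails, and your argument does not recover. The problem is that you feed the worst-case single-step increment $M \approx \gamma^2\eta^2 v^{(jL_0)}$ into Theorem \ref{thm:martingale-2} as the Lipschitz constant for \emph{every} step. The Bernstein-type denominator then contains $M\lambda/3 \approx \gamma^3\eta^3 (v^{(jL_0)})^2$ against a numerator $\lambda^2 = \gamma^2\eta^2 (v^{(jL_0)})^2$, so the tail bound is at best $\exp(-O(1/(\gamma\eta)))$, which is a constant bounded away from $0$ when $\gamma\eta \ge 1$. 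Shortening the phase length $L_0$ (via the $(\tau/M_0^2)^{1/\rho}$ factor) only shrinks the variance sum $\sum_k \sigma_k^2$; it does nothing to the $M\lambda$ term, because a \emph{single} step can already multiply $v$ by $1+\gamma\eta$. So the per-phase claim ``$v$ grows by at most $2\gamma\eta$ with failure probability $\tau/(M_0^2\rho)$'' cannot be extracted from your stopped-martingale argument, and without it the $\rho$-phase iteration collapses.

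The missing idea is to separate rare large increments from common small ones, which is what the paper does. By Markov's inequality applied to $\sum_i u_i^{(k)}w_i^{(k)} \le \eta v^{(k)}$, at most $O(\eta\gamma^2\log(M_0^2/\tau))$ of the $\ell_k \ge m_0$ indices can have $\gamma u_j^{(k)}w_j^{(k)}$ exceed $v^{(k)}/(100\log(M_0^2/\tau))$; since $Z^{(k+1)}$ picks an index uniformly at random, the per-step probability of a ``large jump'' is $O(\eta\gamma^2\log(M_0^2/\tau)/m_0)$, and a binomial tail bound over the $K$ steps shows at most $\rho$ large jumps occur except with probability $2K^\rho(200\eta\gamma^2\log(M_0^2/\tau)/m_0)^\rho \le 2\tau/M_0^2$ --- this is precisely where the $(\tau/M_0^2)^{1/\rho}$ term in the definition of $K$ is consumed, not in shortening a concentration window. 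Each large jump multiplies $v$ by at most $1+\gamma\eta \le 2\gamma\eta$ deterministically. Between consecutive (random) jump times, every increment is at most $v^{(k_{i-1})}/(100\log(M_0^2/\tau))$ by definition, so Theorem \ref{thm:martingale-2} applies with that much smaller Lipschitz constant and yields only a factor-$2$ growth per inter-jump interval with probability $1-\tau/M_0^2$. Multiplying at most $\rho$ jump factors by at most $\rho+1$ interval factors gives $(2\gamma\eta)^\rho$. Your phase boundaries are deterministic and your Lipschitz constant is the worst case; both need to be replaced by this random decomposition for the proof to go through.
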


\lineend

We use this proposition to show that controlling the following functions with constant coefficients suffices:

$$g_f^{(3),s}(H) = \sum_{w\in S'} b_{sw}^T L_{(I\setminus D)/S}^+ b_{sw} \sum_{e\in \partial_H w} \frac{|b_f^T L_{(H\setminus D)/(S,S')}^+ b_e|}{\sqrt{r_f}r_e}$$

$$g_f^{(3),s'}(H) = \sum_{w\in S} b_{s'w}^T L_{(I\setminus D)/S'}^+ b_{s'w} \sum_{e\in \partial_H w} \frac{|b_f^T L_{(H\setminus D)/(S,S')}^+ b_e|}{\sqrt{r_f}r_e}$$

$$\phi^{(3)}(H) := (H\setminus D)/(S,S')$$

$$g_f^{(4),s}(H) = \sum_{w\in S'} \left(\frac{|b_{sw}^T L_{(H\setminus D)/S}^+ b_f|}{\sqrt{r_f}}\right)^2 \sum_{e\in \partial_I w} \frac{b_{ss'}^T L_{(I\setminus D)/(S,S')}^+ b_e}{r_e}$$

$$\phi^{(4),s}(H) := (H\setminus D)/S$$

$$g_f^{(4),s'}(H) = \sum_{w\in S} \left(\frac{|b_{s'w}^T L_{(H\setminus D)/S'}^+ b_f|}{\sqrt{r_f}}\right)^2 \sum_{e\in \partial_I w} \frac{b_{ss'}^T L_{(I\setminus D)/(S,S')}^+ b_e}{r_e}$$

$$\phi^{(4),s'}(H) := (H\setminus D)/S'$$

\subsubsection{Tying the parts together}

Now, we implement $\FastOracle$. This oracle is similar to $\SlowOracle$ but restricts the set $W$ up front using $\VeryStable$ in order to take care of the flexibility of the $g^{()}$ functions.

\begin{algorithm}[H]
\caption{$\FastOracle(I,S,S',D,A,B,W)$ part 1 (everything but return statement)}
\SetAlgoLined
\DontPrintSemicolon

    \KwData{graph $I$, sets $S,S'\subseteq V(I)$ for identification, deleted edges $D\subseteq E(I)$, deferred edges $A,B\subseteq E(I)$, input edges $W\subseteq E(I)$}

    \KwResult{a relatively large subset $Z\subseteq W$ for which objectives remain stable}

    $I'\gets $ graph obtained by splitting each edge of $W$\;

    $W'\gets $ arbitrary copy of each edge in $W$ in $I'$\;

    \tcp{leverage scores}

    $W'\gets \VeryStable_{I'}(I'\setminus D,W')$\;

    $W'\gets \VeryStable_{I'}(I'/(S,S'),W')$\;

    \tcp{controlling flexible functions}

    $W'\gets \VeryStable_{I'}((I'\setminus D)/(S,S'),W')$\;

    $W'\gets \VeryStable_{I'}((I'\setminus D)/S,W')$\;

    $W'\gets \VeryStable_{I'}((I'\setminus D)/S',W')$\;

    $\ApxPreproc(I',S,S',D,A,B,W')$\;

    \ForEach{$e\in W'$}{

        \tcp{$\ApxQuery$ returns multiplicative 2-approximations $h_e^{()}$ to each $g_e^{()}$}

        $h_e^{()}(I')\gets \ApxQuery(g_e^{()}(I'))$

    }

    \tcp{bucketing for deferred, degree, and midpoint objectives}

    \ForEach{$i\in \{0,1,\hdots,i_{max} := \log (n^8\alpha^4)\}$}{

        $W_i'\gets $ edges $f\in W'$ for which

        $$\frac{h_f^{(0)}(I')}{\sqrt{b_{ss'}^T L_{(I'\setminus D)/(S,S')}^+ b_{ss'}}}\in \left[2^{-i-1}, 2^{-i}\right]$$\;

        with no lower bound for $i = i_{max}$

    }

    $W'\gets $ the $W_i'$ with maximum size\; \label{line:energy-bucket}

    \tcp{midpoint objective downsampling}

    $W'\gets $ uniform random sample of $1/(1000 \log^2 (n/p))$ fraction of $W'$\;\label{line:midpoint-downsampling}
\end{algorithm}

\newpage

\begin{algorithm}[H]
\caption{$\FastOracle$ part 2 (return statement)}
\SetAlgoLined
\DontPrintSemicolon

    \tcp{final output}

    $Z\gets $ edges $e\in W'$ with all of the following properties:

    \begin{itemize}
    \item (Midpoint $s$) $h_e^{(0)}(I')h_e^{(1),W',s}(I')\le p (b_{ss'}^T L_{(I\setminus D)/(S,S')}^+ b_{ss'})/(100\log n)$ if ``$s$ narrow potential neighborhood'' input condition is satisfied

    \item (Midpoint $s'$) $h_e^{(0)}(I')h_e^{(1),W',s'}(I')\le p (b_{ss'}^T L_{(I\setminus D)/(S,S')}^+ b_{ss'})/(100\log n)$ if ``$s'$ narrow potential neighborhood'' input condition is satisfied

    \item (Conductance $s-s'$) $h_e^{(0)}(I')h_e^{(3),s}(I')\le \frac{100 (\log (n\alpha))}{|W'|} \delta_{S,S'}(I'\setminus D)$

    \item (Conductance $s'-s$) $h_e^{(0)}(I')h_e^{(3),s'}(I')\le \frac{100 (\log (n\alpha))}{|W'|} \delta_{S',S}(I'\setminus D)$

    \item (Energy $s-s'$) $h_e^{(4),s}(I')\le \frac{100}{|W'|} \delta_{S,S'}(I'\setminus D)$

    \item (Energy $s'-s$) $h_e^{(4),s'}(I')\le \frac{100}{|W'|} \delta_{S',S}(I'\setminus D)$

    \item (Deferred) $h_e^{(0)}(I')h_e^{(2)}(I')\le \frac{100(\log (\alpha n))}{|W'|}\left(\sum_{\{u,v\}\in A} b_{ss'}^T L_{(I'\setminus D)/(S,S')}^+ (b_{su} + b_{sv}) + \sum_{\{u,v\}\in B} b_{ss'}^T L_{(I'\setminus D)/(S,S')}^+ (b_{us'} + b_{vs'})\right) + r_{min}/n^4$

    \item (Main) $(h_e^{(0)}(I'))^2 \le \frac{100}{|W'|} b_{ss'}^T L_{(I\setminus D)/(S,S')}^+ b_{ss'}$
    \end{itemize}

    \Return $Z$\;

\end{algorithm}

Now, we outline the analysis of this algorithm. Simple calculations involving Sherman-Morrison show the following proposition:

\linestart

\begin{restatable}{proposition}{propflexibleobjectives}\label{prop:flexible-objectives}

For any graph $J$ with $S,S'\subseteq V(J)$ and $A,B,D,X\subseteq E(J)$, the families of functions

\begin{itemize}
\item $\mc F_0 := (\{g_e^{(0)}(H)\}_{e\in X},\phi^{(0)})$

\item $\mc F_{1,s} := (\{g_e^{(1),X\cap E(H),s}(H)\}_{e\in X},\phi^{(1)})$

\item $\mc F_{1,s'} := (\{g_e^{(1),X\cap E(H),s'}(H)\}_{e\in X},\phi^{(1)})$

\item $\mc F_2 := (\{g_e^{(2)}(H)\}_{e\in X},\phi^{(2)})$

\item $\mc F_{3,s} := (\{g_e^{(3),s}(H)\}_{e\in X},\phi^{(3)})$

\item $\mc F_{3,s'} := (\{g_e^{(3),s'}(H)\}_{e\in X},\phi^{(3)})$

\item $\mc F_{4,s} := (\{g_e^{(4),s}(H)\}_{e\in X},\phi^{(4),s})$

\item $\mc F_{4,s'} := (\{g_e^{(4),s'}(H)\}_{e\in X},\phi^{(4),s'})$
\end{itemize}

are flexible for the graph $J$.

\end{restatable}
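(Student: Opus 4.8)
\textbf{Proof plan for Proposition \ref{prop:flexible-objectives}.}

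The plan is to verify the three defining conditions of flexibility (Definition \ref{def:flexible-families}) for each of the eight families, reducing all eight to a single prototype computation. First I would observe that the ``$\phi$ commutes with modifications'' condition is immediate in every case: each $\phi^{(\cdot)}$ is a fixed composition of edge deletions (deleting $D$) and vertex identifications (identifying $S$, $S'$, or both), and for any edge $f$ not among the deleted/identified structure, deleting or contracting $f$ commutes with deleting a disjoint edge set and with identifying vertex sets. (For edges in $D$ or incident to $S\cup S'$ the statement is vacuous since such $f$ are not in $X$ after splitting, or the operations trivially agree.) So the real content is the ``Contractions'' and ``Deletions'' bounds on $|g_e(H/f)-g_e(H)|$ and $|g_e(H\setminus f)-g_e(H)|$.

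The key step is a single rank-one-update estimate. Each $g_e^{(\cdot)}(H)$ is a sum (or single term) of quantities of the shape $|b_{a}^T L_{\phi(H)}^+ b_{c}|/\sqrt{r_e}$ where one of $a,c$ is the edge $e$ (so that $b_e$ appears) — more precisely, every summand is of the form $|b_e^T L_{\phi(H)}^+ d|/\sqrt{r_e}$ times a coefficient that does \emph{not} depend on $H$ (coefficients like $b_{sw}^T L_{(I\setminus D)/S}^+ b_{sw}$ and $b_{ss'}^T L_{(I\setminus D)/(S,S')}^+ b_e/r_e$ are evaluated at the fixed input graph $I$, not the running graph $H$, precisely so they behave as constants here). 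By Propositions \ref{prop:deletion-update} and \ref{prop:contraction-update} applied in the graph $\phi(H)$ (using that $\phi(H/f)=\phi(H)/f$ and $\phi(H\setminus f)=\phi(H)\setminus f$ from the first bullet),
\begin{align*}
|b_e^T L_{\phi(H)/f}^+ d - b_e^T L_{\phi(H)}^+ d| &= \frac{|b_e^T L_{\phi(H)}^+ b_f||b_f^T L_{\phi(H)}^+ d|}{r_f\,\texttt{lev}_{\phi(H)}(f)},\\
|b_e^T L_{\phi(H)\setminus f}^+ d - b_e^T L_{\phi(H)}^+ d| &= \frac{|b_e^T L_{\phi(H)}^+ b_f||b_f^T L_{\phi(H)}^+ d|}{r_f\,(1-\texttt{lev}_{\phi(H)}(f))}.
\end{align*}
Then I would bound the two scalar factors involving $f$: by Cauchy--Schwarz in the $L_{\phi(H)}^+$ inner product, $|b_f^T L_{\phi(H)}^+ d|/\sqrt{r_f} \le \sqrt{b_f^T L_{\phi(H)}^+ b_f/r_f}\cdot\sqrt{d^T L_{\phi(H)}^+ d} \le \sqrt{\texttt{lev}_{\phi(H)}(f)}\cdot \sqrt{g_f(H)\,/\,(\text{same-shape factor})}$; the clean way to package this is to note $|b_f^T L_{\phi(H)}^+ d| \le \sqrt{r_f}\,(g_f(H)+g_e(H))$ up to the $1/\texttt{lev}$ or $1/(1-\texttt{lev})$ factors, exactly matching the stated $\frac{3}{(\texttt{lev}_{\phi(H)}(f))^2}$ and $\frac{3}{(1-\texttt{lev}_{\phi(H)}(f))^2}$ forms after absorbing a factor of $\texttt{lev}^{-1/2}$ or $(1-\texttt{lev})^{-1/2}$ and the constant. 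Summing these per-summand bounds over the (fixed, $H$-independent index set of) terms defining $g_e^{(\cdot)}$, and using that each coefficient is nonnegative and bounded, yields the required inequalities with the factor $3$; the constant $3$ is generous enough to cover the $|X\cap E(H)|$-many summands after the Cauchy--Schwarz split, because the sum $\sum_{\{u,v\}} |b_f^T L^+ (b_{su}+b_{sv})|/\sqrt{r_f}$ is itself one of the $g_f$ quantities.

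I expect the main obstacle to be bookkeeping rather than mathematics: one must check, family by family, that the ``coefficient'' pieces (the $b_{sw}^T L_{(I\setminus D)/S}^+ b_{sw}$ weights in $g^{(3)}$, the $b_{ss'}^T L_{(I\setminus D)/(S,S')}^+ b_e/r_e$ weights in $g^{(4)}$, the membership of $e$ in the outer index set $X\cap E(H)$ in $g^{(1)}$) really are invariant under modifying $H$ at $f\ne e$, so that the rank-one update only touches the $b_e^T L_{\phi(H)}^+ (\cdot)$ factor, and that when $f=e$ the corresponding $g_e$ term simply drops out (an edge contracted or deleted is removed from the outer sums, consistent with $g_e$ being defined on $E(H)$). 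A secondary subtlety is the $g^{(4),s}$ and $g^{(4),s'}$ families, where the squared factor $(|b_{sw}^T L_{(H\setminus D)/S}^+ b_f|/\sqrt{r_f})^2$ means $g_e$ depends on $H$ through $b_e^T L^+ b_f$ squared; there I would write $|x^2 - y^2| = |x-y||x+y|$ and bound $|x+y|$ crudely by $2\max$, which again lands in the $(g_f+g_e)$ form at the cost of the constant $3$. None of these steps requires new ideas beyond Sherman--Morrison (Propositions \ref{prop:deletion-update}, \ref{prop:contraction-update}) and the triangle/Cauchy--Schwarz inequalities, so the proof is routine; I would present it for $\mc F_0$ in full and then remark that $\mc F_{1,s},\dots,\mc F_{4,s'}$ follow by the identical argument applied termwise.
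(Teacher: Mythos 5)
Your plan is essentially the paper's proof: both arguments verify the ``Contractions'' and ``Deletions'' bounds termwise via Sherman--Morrison (Propositions \ref{prop:deletion-update} and \ref{prop:contraction-update}) applied in $\phi(H)$, use that the coefficients evaluated at the fixed input graph are $H$-independent (and that the index set $X\cap E(H)$ only shrinks by $\{f\}$), and handle the squared terms in $\mc F_{4,s},\mc F_{4,s'}$ by expanding the square and applying AM--GM together with $|b_e^T L_{\phi(H)}^+ b_f|/(\sqrt{r_e}\sqrt{r_f})\le 1$ to land in the $(g_e+g_f)$ form with the constant $3$ and the $(\texttt{lev}_{\phi(H)}(f))^{2}$ denominator. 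The only cosmetic difference is that your Cauchy--Schwarz step for the linear families is unnecessary: the rank-one correction factors \emph{exactly} as $\frac{|b_e^T L_{\phi(H)}^+ b_f|}{\sqrt{r_e}\sqrt{r_f}}\cdot\frac{1}{\texttt{lev}_{\phi(H)}(f)}\cdot g_f(H)$, which is how the paper obtains the bounds directly.
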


\lineend

We prove the above proposition in Appendix \ref{sec:fast-concentration-appendix}. Proposition \ref{prop:very-stable} therefore implies that the maximum values of the functions in each family do not change by more than a factor of 2 with high probability over the course of $|W'|/2$ edge contractions and deletions. The bucketing lower bound for $h_e^{(0)}(I')$ together with the upper bounds given in bullets implies an upper bound on the values of all of the $h_e^{()}$ functions, which in turn gives upper bounds on the $g_e^{()}$ functions. This essentially completes all bounds besides the ``Midpoint potential stability'' bounds. We prove these bounds using a simple application of Theorem \ref{thm:martingale-2}.

Now, we just have to check the ``Size of $Z$'' guarantee. This proof of this guarantee is similar to the analysis of $\SlowOracle$.

\begin{proposition}\label{prop:fast-oracle}
There is an $(m^{o(1)},|W|m^{-o(1)})$-stable oracle $\FastOracle$ with runtime

$$\tilde{O}(m + \mc T(\VeryStable) + \mc T(\ApxPreproc) + |W|\mc T(\ApxQuery))$$
\end{proposition}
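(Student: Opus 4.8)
The plan is to prove Proposition~\ref{prop:fast-oracle} by verifying, one by one, that the set $Z$ returned by $\FastOracle$ satisfies every requirement in Definition~\ref{def:oracle}, then separately bounding $|Z|$ and the runtime. First I would establish the structural backbone: the repeated calls to $\VeryStable$ at the start of the algorithm, together with Proposition~\ref{prop:flexible-objectives} (which guarantees that $\mc F_0,\mc F_{1,s},\mc F_{1,s'},\mc F_2,\mc F_{3,s},\mc F_{3,s'},\mc F_{4,s},\mc F_{4,s'}$ are all flexible), let me invoke the ``Flexible function bound'' and ``Leverage score bound'' guarantees of Proposition~\ref{prop:very-stable}. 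These say that, with high probability, over the course of conditioning on any $|W'|/2$ edges sampled without replacement from $Z$, every $g_e^{()}$ value stays within a factor of $2$ of its value in $I'$ and every relevant leverage score moves by at most $1/8$ (when combined with the $1/16$-slack in the input conditions, this gives the ``Leverage score stability'' and ``Bounded leverage score difference'' facts). Since the $h_e^{()}$ in the algorithm are multiplicative $2$-approximations to the $g_e^{()}$, the bucketing lower bound on $h_e^{(0)}(I')/\sqrt{b_{ss'}^T L_{(I'\setminus D)/(S,S')}^+ b_{ss'}}$ coming from Line~\ref{line:energy-bucket}, combined with the eight explicit upper bounds in the return statement, pins down the absolute size of each $g_e^{()}(I_i)$ for all $i < K(|W|) = |W| m^{-o(1)}$. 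This is exactly what is needed to certify ``Main objective change stability,'' ``Deferred endpoint potential change stability,'' and both halves of ``$S-S'$-normalized degree change stability'' after translating the $g^{()}$ bounds back into the electrical-inner-product sums they encode.

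Next I would handle the ``Midpoint potential stability'' guarantees, which are genuinely martingale statements rather than deterministic stability bounds. Here the plan is exactly the sketch in Subsubsection~\ref{subsubsec:midpoint}: the ``Midpoint $s$'' (resp.\ ``Midpoint $s'$'') constraint defining $Z$, together with the downsampling on Line~\ref{line:midpoint-downsampling}, ensures that for each $f_i$ conditioned on, the change in $b_{ss'}^T L_{(H\setminus D)/(S,S')}^+ (b_{su}+b_{sv})/2$ is at most a $(1-1/\log n)$ factor in the worst case (from the per-edge bound $h_e^{(0)} h_e^{(1),W',s} \le p(\cdots)/(100\log n)$ and flexibility preservation of $g^{(0)},g^{(1)}$), while on average over a uniformly random $f_i \in Z$ it is a $(1-\tilde O(1/|W|))$ factor by Theorem~\ref{thm:harsha} applied with the source $s$ shared between $b_{ss'}$ and $b_{su}+b_{sv}$. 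Feeding these two facts into Theorem~\ref{thm:martingale-2} gives that this midpoint potential, and hence also $b_{ss'}^T L^+ b_{ss'}$ itself (which is controlled by $g^{(0)}$), stays $\ge p/2$ of its original normalized value throughout, as required.

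For ``Size of $Z$'' I would mimic the $\SlowOracle$ analysis in Proposition~\ref{prop:slow-oracle}: apply Theorem~\ref{thm:harsha} eight times --- once per family of constraints defining $Z$ --- to show that the sum over $f\in W'$ of each constrained quantity is at most a $\tilde O(1)$ factor times the corresponding bound times $|W'|$, so by Markov's inequality at most a $1/16$ fraction of $W'$ violates any one constraint; hence at least $|W'|/2$ edges survive into $Z$. Tracking the fractional losses through the five $\VeryStable$ calls (each $\ge |W'|/\polylog(n)$ by its ``Size'' guarantee, amplified to high probability by repeating $O((\log^3 n)/\sigma)$ times), through the bucketing on Line~\ref{line:energy-bucket} (factor $1/i_{max} = 1/\polylog(n\alpha)$), and through the downsampling on Line~\ref{line:midpoint-downsampling} (factor $1/(1000\log^2(n/p)) = m^{-o(1)}$), I get $|Z| \ge |W| m^{-o(1)} \ge |W|/(\log^4 n)$. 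The extra $\rho$-factor slack in the degree-stability bounds, which is where $\rho = m^{o(1)}$ rather than $\polylog(n)$ is needed, comes from invoking Proposition~\ref{prop:random-sequence-control} with $v^{(k)} = b_{sw}^T L_{(I_k\setminus D)/S}^+ b_{sw}$ and with $v^{(k)} = b_{ss'}^T L_{(I_k\setminus D)/(S,S')}^+ b_e$ to show the coefficients in $\delta_{S,S'}(I_i\setminus D)$ grow by at most $(2\gamma\eta)^\rho = m^{o(1)}$ over $|W| m^{-o(1)}$ samples; this simultaneously forces the stopping time $K(|W|) = |W| m^{-o(1)}$.

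The main obstacle I anticipate is the bookkeeping in the degree-stability step: unlike the effective-resistance warmup, the quantity $\delta_{S,S'}(H\setminus D)$ is a sum of products of three electrical inner products (the coefficient $b_{sw}^T L_{(H\setminus D)/S}^+ b_{sw}$, the energy/conductance term on $f$, and the incidence term on boundary edges $e$ of $w$), and I must simultaneously control the flexible functions $g^{(3)},g^{(4)}$ encoding the $f$- and $e$-dependence \emph{and} argue via Proposition~\ref{prop:random-sequence-control} that the ``frozen'' coefficients do not blow up. Reconciling the $\rho$-factor loss from Proposition~\ref{prop:random-sequence-control} against the $m^{-o(1)}$-factor stopping time, so that the product still yields an $(m^{o(1)}, |W| m^{-o(1)})$-stable oracle rather than something weaker, is the delicate quantitative point; everything else is a careful but routine transcription of the $\SlowOracle$ argument with ``one edge'' replaced by ``a random prefix of $Z$'' and with $\VeryStable$/flexibility doing the work that minimum-energy selection did before. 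Finally the runtime is immediate: each $\VeryStable$ call costs $\tilde O(m + \mc T(\ColumnApxPreproc) + |X|\mc T(\ColumnApx))$, $\ApxPreproc$ is called once, $\ApxQuery$ is called $|W'|\le |W|$ times, and all bucketing and thresholding is linear, giving the stated bound $\tilde O(m + \mc T(\VeryStable) + \mc T(\ApxPreproc) + |W|\mc T(\ApxQuery))$, which is $m^{1+o(1)}$ once the subroutine runtimes are supplied.
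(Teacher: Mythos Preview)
Your plan is correct and tracks the paper's own proof closely: the structural backbone via $\VeryStable$ plus Proposition~\ref{prop:flexible-objectives}, the bucketing-plus-threshold translation of the $h^{()}$ bounds into the oracle stability guarantees, the use of Proposition~\ref{prop:random-sequence-control} to control the growing coefficients in the degree terms (which is indeed exactly where $\rho=m^{o(1)}$ and $K=|W|m^{-o(1)}$ are forced), the martingale argument via Theorem~\ref{thm:martingale-2} for midpoint potentials, and the Markov-plus-Theorem~\ref{thm:harsha} analysis for $|Z|$ are all as in the paper.

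Two small points to tighten. First, the chain ``$|Z|\ge |W|m^{-o(1)}\ge |W|/(\log^4 n)$'' has the second inequality backwards; what you actually want is that each loss factor ($\VeryStable$, bucketing, the $1/(1000\log^2(n/p))$ downsampling with $p\ge 2^{-O(\log n)}$) is $1/\mathrm{polylog}(n)$, so the product is $1/\mathrm{polylog}(n)$, not merely $m^{-o(1)}$. Second, the ``Flexible function bound'' of Proposition~\ref{prop:very-stable} controls $\max_{e\in Z\setminus\{f_0,\dots,f_{i-1}\}} g_e(G_i')$ in terms of the initial maximum, not each $g_e$ individually; this is what the paper uses (it bounds $g_{f_i}^{()}(I_i')$ by $2\max_{e\in Z} g_e^{()}(I')$), and your bucketing argument goes through the same way once you phrase it via the max. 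Also, for the Midpoint constraint in the $|Z|$ bound, the paper's argument is not a single Markov step but a two-stage probabilistic argument over the random downsampling $W'\to W''$ (first bound $\mathbf{E}[g_e^{(0)}g_e^{(1),W'',s}\mid e\in W'']$, then Markov); you allude to the downsampling but should make that explicit when you write it out.
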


In the following proof, notice that some statements have been reordered from their appearance in the definition of stable oracles. The statements are roughly stated in order of difficulty.

\begin{proof}
\underline{Algorithm well-definedness.} By Proposition \ref{prop:apx-correctness}, the $h$ functions are indeed approximations to the $g$ functions. Therefore, to check that this algorithm is well-defined, it suffices to check that the input condition of $\VeryStable$ described in Proposition \ref{prop:very-stable} is satisfied. The ``Bounded leverage score difference'' input condition and Rayleigh monotonicity imply that all leverage scores $\texttt{lev}_H(e)$ in graphs $H$ used in calls to $\VeryStable$ are within $1/16$ additively of $\texttt{lev}_I(e)$. Therefore, by Proposition \ref{prop:split-cond}, after splitting, all leverage scores are within $1/8$ additively of $\texttt{lev}_{I'}(e)$. Also by Proposition \ref{prop:split-cond}, $\texttt{lev}_{I'}(e)\in [1/4,3/4]$. Therefore, $\texttt{lev}_{H'}(e)\in [1/8,7/8]$, where $H'$ is the graph with $W$ split for each graphs $H$ supplied as an argument to $\VeryStable$. In particular, splitting using $I$ satisfies all of the input conditions to the $\VeryStable$ calls.

\partspace

\textbf{Runtime.} $\ApxQuery$ is called $\tilde{O}(|W|)$ times while $\ApxPreproc$ and $\VeryStable$ are only called a constant number of times. The rest of the algorithm takes $\tilde{O}(m)$ time, as desired.

\textbf{Leverage score stability.} Notice that $\VeryStable$ is applied with both $I'\setminus D$ and $I'/(S,S')$ as arguments. Therefore, the desired result follows directly from the ``Leverage score bound'' guarantee of Proposition \ref{prop:very-stable} applied to $I'\setminus D$ and $I'/(S,S')$.

\partspace

\textbf{Main objective change stability.} By the ``Flexible function bound'' on the family $\{g_e^{(0)}\}_{e\in Z}$,

\begin{align*}
(g_{f_i}^{(0)}(I_i'))^2 &\le 4\max_{e\in Z} (g_e^{(0)}(I'))^2\\
&\le 16\max_{e\in Z} (h_e^{(0)}(I'))^2\\
&\le \frac{1600}{|Z|} (b_{ss'}^T L_{(I\setminus D)/(S,S')}^+ b_{ss'})\\
\end{align*}

for any $i\le |Z|/2$ with high probability, where $I_i'$ is the graph $I_i$ with edges in $Z$ split. By definition of $g_{f_i}^{(0)}(I_i')$ and the fact that the $g$ functions are electrical,

$$g_{f_i}^{(0)}(I_i') = \frac{b_{ss'}^T L_{(I_i'\setminus D)/(S,S')}^+ b_{f_i}}{\sqrt{r_{f_i}}} = \frac{b_{ss'}^T L_{(I_i\setminus D)/(S,S')}^+ b_{f_i}}{\sqrt{r_{f_i}}}$$

so plugging this in completes the stability proof since $1600 < \rho$.

\partspace

\underline{Main objective change.} By ``Main objective change stability,''``Leverage score stability,'' and the fact that $f_i$ is split before contraction or deletion, each contraction or deletion causes a maximum change of

$$|b_{ss'}^T L_{(I_{i+1}\setminus D)/(S,S')}^+ b_{ss'} - b_{ss'}^T L_{(I_i\setminus D)/(S,S')}^+ b_{ss'}| \le 8\frac{1600 b_{ss'}^T L_{(I_i\setminus D)/(S,S')}^+ b_{ss'}}{|Z|}$$

for all $i$. Therefore, since $K = |W|m^{-o(1)}\le |W|/12800$, the main objective can only increase by a factor of 2 over the course of conditioning on the $f_i$s.

\partspace

\textbf{Deferred endpoint potential change stability.} Suppose that $W_a'$ is the bucket in $W'$ that is chosen in Line \ref{line:energy-bucket} in Part 1 of $\FastOracle$. For edges $e\in Z$, $h_e^{(0)}(I') \ge 2^{-a-1}\sqrt{b_{ss'}^T L_{(I'\setminus D)/(S,S')}^+ b_{ss'}}$ by the bucketing lower bound. Therefore, by the ``Deferred'' condition on edges in $Z$,

\begin{align*}
g_e^{(2)}(I')&\le 2h_e^{(2)}(I') \le \frac{2^{a+2}}{\sqrt{b_{ss'}^T L_{(I'\setminus D)/(S,S')}^+ b_{ss'}}}\frac{100(\log(\alpha n))}{|Z|}\\
&\left(\sum_{\{u,v\}\in A} b_{ss'}^T L_{(I'\setminus D)/(S,S')}^+ (b_{su} + b_{sv}) + \sum_{\{u,v\}\in B} b_{ss'}^T L_{(I'\setminus D)/(S,S')}^+ (b_{us'} + b_{vs'})\right)
\end{align*}

By the bucketing upper bound,

$$g_e^{(0)}(I') \le 2^{-a+1}\sqrt{b_{ss'}^T L_{(I'\setminus D)/(S,S')}^+ b_{ss'}}$$

These bounds hold for all $e\in Z$. Therefore, by the ``Flexible function bound'' of Proposition \ref{prop:very-stable} applied to both of the flexible families $\{g_e^{(0)}\}_{e\in Z}$ and $\{g_e^{(2)}\}_{e\in Z}$,

$$g_{f_i}^{(0)}(I_i') \le \max_{e\in Z} 2g_e^{(0)}(I')\le 2^{-a+2}\sqrt{b_{ss'}^T L_{(I'\setminus D)/(S,S')}^+ b_{ss'}}$$

and

\begin{align*}
g_{f_i}^{(2)}(I_i') &\le \max_{e\in Z} 2g_e^{(2)}(I')\\
&\le \frac{2^{a+3}}{\sqrt{b_{ss'}^T L_{(I'\setminus D)/(S,S')}^+ b_{ss'}}}\frac{100(\log(\alpha n))}{|Z|}\\
&\left(\sum_{\{u,v\}\in A} b_{ss'}^T L_{(I'\setminus D)/(S,S')}^+ (b_{su} + b_{sv}) + \sum_{\{u,v\}\in B} b_{ss'}^T L_{(I'\setminus D)/(S,S')}^+ (b_{us'} + b_{vs'})\right)
\end{align*}

with high probability for all $i\le |Z|/2$. Therefore, since the $g$s are electrical functions,

\begin{align*}
&\left(\sum_{\{u,v\}\in A} \frac{|b_{ss'}^T L_{(I_i\setminus D)/(S,S')}^+ b_{f_i}| |b_{f_i}^T L_{(I_i\setminus D)/(S,S')}^+ (b_{su} + b_{sv})|}{r_{f_i}}\right)\\
&+ \left(\sum_{\{u,v\}\in B} \frac{|b_{ss'}^T L_{(I_i\setminus D)/(S,S')}^+ b_{f_i}| |b_{f_i}^T L_{(I_i\setminus D)/(S,S')}^+ (b_{us'} + b_{vs'})|}{r_{f_i}}\right)\\
&= g_{f_i}^{(0)}(I_i) g_{f_i}^{(2)}(I_i)\\
&\le \frac{3200(\log(\alpha n))}{|Z|} \left(\sum_{\{u,v\}\in A} b_{ss'}^T L_{(I\setminus D)/(S,S')}^+ (b_{su} + b_{sv}) + \sum_{\{u,v\}\in B} b_{ss'}^T L_{(I\setminus D)/(S,S')}^+ (b_{us'} + b_{vs'})\right)
\end{align*}

as desired, since $K < |Z|/2$.

\partspace

\textbf{$S-S'$-normalized degree change stability.} We omit the $S'-S$ guarantees, as their proofs are the same with $s$ and $S$ swapped for $s'$ and $S'$ respectively in all places.

\partspace

\textbf{$S-S'$-conductance term stability.} We break this analysis up into two parts.

\underline{Bounding increases in effective resistances.} We show that with high probability, for all $w\in S'$ and all $i\le m^{-o(1)} |Z|$,

$$b_{sw}^T L_{(I_i\setminus D)/S}^+ b_{sw}\le m^{o(1)} b_{sw}^T L_{(I\setminus D)/S}^+ b_{sw}$$

We use Proposition \ref{prop:random-sequence-control} with

\begin{itemize}
\item $\rho\gets \sqrt{\log n}$
\item $\tau\gets 1/n^6$
\item $\gamma\gets 8$
\item $u_e^{(i)} = w_e^{(i)}\gets |b_{sw}^T L_{(I_i\setminus D)/S}^+ b_e|/\sqrt{r_e}$
\item $v^{(i)} \gets b_{sw}^T L_{(I_i\setminus D)/S}^+ b_{sw}$
\item $\eta\gets 1$ (which works because the sum of energies on edges is at most the overall energy)
\item $\ell_i\gets$ number of remaining edges in $Z$ in $I_i$
\end{itemize}

By electrical functions, Sherman-Morrison, and the triangle inequality,

\begin{align*}
v^{(i+1)} &= b_{sw}^T L_{(I_{i+1}'\setminus D)/S}^+ b_{sw}\\
&\le b_{sw}^T L_{(I_i'\setminus D)/S}^+ b_{sw} + \frac{(b_{sw}^T L_{(I_i'\setminus D)/S}^+ b_{f_i})^2/r_{f_i}}{\min(\texttt{lev}_{(I_i'\setminus D)/S}(f_i),1-\texttt{lev}_{(I_i'\setminus D)/S}(f_i))}\\
&\le b_{sw}^T L_{(I_i'\setminus D)/S}^+ b_{sw} + 8(b_{sw}^T L_{(I_i'\setminus D)/S}^+ b_{f_i})^2/r_{f_i}\\
&\le v^{(i)} + \gamma \sum_e u_e^{(i)} Z_e^{(i+1)} w_e^{(i)}\\
\end{align*}

Therefore, Proposition \ref{prop:random-sequence-control} applies, which means that

$$v^{(i)} = b_{sw}^T L_{(I_i\setminus D)/S}^+ b_{sw}\le 16^{\sqrt{\log n}} b_{sw}^T L_{(I\setminus D)/S}^+ b_{sw}$$

for all $i\le |Z|n^{-6/\sqrt{\log n}} = |Z|2^{-6\sqrt{\log n}}$ with high probability (at least $1 - 1/n^6$). This is the desired result for this part since $K < |Z|2^{-6\sqrt{\log n}}$.

\underline{Bounding increases in flows with constant coefficients.} We now show that

$$\sum_{w\in S'} (b_{sw}^T L_{(I\setminus D)/S}^+ b_{sw}) \sum_{e\in \partial_{I_i}w} \frac{|b_{ss'}^T L_{(I_i\setminus D)/(S,S')}^+ b_{f_i}| |b_{f_i}^T L_{(I_i\setminus D)/(S,S')}^+ b_e|}{r_{f_i}r_e} \le \tilde{O}\left(\frac{1}{|Z|}\right)\delta_{S,S'}(I\setminus D)$$

with high probability for all $i\le |Z|/2$. Assume that $W_a'$ is the bucket chosen on Line \ref{line:energy-bucket}. Then, by the lower bound on $h_e^{(0)}(I')$,

$$g_e^{(3),s}(I')\le 2h_e^{(3),s}(I')\le \frac{2^{a+2}}{\sqrt{b_{ss'}^T L_{(I'\setminus D)/(S,S')}^+ b_{ss'}}}\frac{100 \log(\alpha n)}{|Z|}\delta_{S,S'}(I'\setminus D)$$

for any $e\in Z$. By the bucketing upper bound,

$$g_e^{(0)}(I')\le 2h_e^{(0)}(I')\le 2^{-a+1} \sqrt{b_{ss'}^T L_{(I'\setminus D)/(S,S')}^+ b_{ss'}}$$

By the ``Flexible function bound'' of Proposition \ref{prop:very-stable},

\begin{align*}
g_{f_i}^{(0)}(I_i') g_{f_i}^{(3),s}(I_i')&\le 4(\max_{e\in Z} g_e^{(0)}(I'))(\max_{e\in Z} g_e^{(3),s}(I'))\\
&\le \frac{3200}{\log(\alpha n)}{|Z|}\delta_{S,S'}(I'\setminus D)\\
\end{align*}

for all $i\le |Z|/2$ with high probability. Substituting in the definitions of $g_{f_i}^{(0)}(I_i')$ and $g_{f_i}^{(3),s}(I_i')$ yields the desired result for this part.

\underline{Combining the parts.} By the first part,

\begin{align*}
&\sum_{w\in S'} (b_{sw}^T L_{(I_i\setminus D)/S}^+ b_{sw}) \sum_{e\in \partial_{I_i}w} \frac{|b_{ss'}^T L_{(I_i\setminus D)/(S,S')}^+ b_{f_i}| |b_{f_i}^T L_{(I_i\setminus D)/(S,S')}^+ b_e|}{r_{f_i}r_e}\\
&\le m^{o(1)} \sum_{w\in S'} (b_{sw}^T L_{(I\setminus D)/S}^+ b_{sw}) \sum_{e\in \partial_{I_i}w} \frac{|b_{ss'}^T L_{(I_i\setminus D)/(S,S')}^+ b_{f_i}| |b_{f_i}^T L_{(I_i\setminus D)/(S,S')}^+ b_e|}{r_{f_i}r_e}\\
\end{align*}

By the second part,

$$m^{o(1)} \sum_{w\in S'} (b_{sw}^T L_{(I\setminus D)/S}^+ b_{sw}) \sum_{e\in \partial_{I_i}w} \frac{|b_{ss'}^T L_{(I_i\setminus D)/(S,S')}^+ b_{f_i}| |b_{f_i}^T L_{(I_i\setminus D)/(S,S')}^+ b_e|}{r_{f_i}r_e} \le \frac{3200 m^{o(1)}}{|Z|}\delta_{S,S'}(I\setminus D)$$

The desired result follows from substitution since $3200 m^{o(1)} < \rho$.

\partspace

\textbf{$S-S'$ energy term stability.} As in the previous bound, we break the analysis up into two parts:

\underline{Bounding increases in flows to $S'$ vertices.} We show that with high probability, for all $w\in S'$ and all $i\le m^{-o(1)} |Z|$,

$$\sum_{e\in \partial_{I_i}w} \frac{b_{ss'}^T L_{(I_i\setminus D)/(S,S')}^+ b_e}{r_e}\le m^{o(1)} \sum_{e\in \partial_I w} \frac{b_{ss'}^T L_{(I\setminus D)/(S,S')}^+ b_e}{r_e}$$

We use Proposition \ref{prop:random-sequence-control} with

\begin{itemize}
\item $\rho\gets \sqrt{\log n}$
\item $\tau\gets 1/n^6$
\item $\gamma\gets 8$
\item $u_f^{(i)} \gets \frac{|b_{ss'}^T L_{(I_i\setminus D)/(S,S')}^+ b_f|}{\sqrt{r_f}}$
\item $w_f^{(i)}\gets \sum_{e\in \partial_{I_i} w} \frac{|b_f^T L_{(I_i\setminus D)/(S,S')}^+ b_e|}{\sqrt{r_f}r_e}$
\item $v^{(i)} \gets \sum_{e\in \partial_{I_i} w} \frac{b_{ss'}^T L_{(I_i\setminus D)/(S,S')}^+ b_e}{r_e}$
\item $\eta\gets O(\log (n\alpha))$ (which works by Theorem \ref{thm:harsha})
\item $\ell_i\gets$ number of remaining edges in $Z$ in $I_i$
\end{itemize}

By electrical functions, Sherman-Morrison, and the triangle inequality,

\begin{align*}
v^{(i+1)} &= \sum_{e\in \partial_{I_{i+1}'} w} \frac{b_{ss'}^T L_{(I_{i+1}'\setminus D)/(S,S')}^+ b_e}{r_e}\\
&\le \left(\sum_{e\in \partial_{I_i'} w} \frac{b_{ss'}^T L_{(I_i'\setminus D)/(S,S')}^+ b_e}{r_e}\right) + \frac{1}{\min(\texttt{lev}_{(I_i'\setminus D)/(S,S')}(f_i),1-\texttt{lev}_{(I_i'\setminus D)/(S,S')}(f_i))}\\
&\left(\sum_{e\in \partial_{I_i} w} \frac{|b_{ss'}^T L_{(I_i\setminus D)/(S,S')}^+ b_{f_i}|}{\sqrt{r_{f_i}}}\frac{|b_{f_i}^T L_{(I_i\setminus D)/(S,S')}^+ b_e|}{\sqrt{r_{f_i}}r_e}\right)\\
&\le v^{(i)} + \gamma \sum_e u_e^{(i)} Z_e^{(i+1)} w_e^{(i)}\\
\end{align*}

Therefore, Proposition \ref{prop:random-sequence-control} applies, which means that,

$$v^{(i)} = \sum_{e\in \partial_{I_i} w} \frac{b_{ss'}^T L_{(I_i\setminus D)/(S,S')}^+ b_e}{r_e}\le (16\log n)^{\sqrt{\log n}} \sum_{e\in \partial_I w} \frac{b_{ss'}^T L_{(I\setminus D)/(S,S')}^+ b_e}{r_e}$$

for all $i\le |Z|n^{-6/\sqrt{\log n}} = |Z|2^{-6\sqrt{\log n}}$ with high probability (at least $1 - 1/n^6$). This is the desired result for this part since $K < |Z|2^{-6\sqrt{\log n}}$.

\underline{Bounding increases in energies with constant coefficients.} By the ``Flexible function bound'' of Proposition \ref{prop:very-stable} applied to $g^{(4),s}$, with high probability for all $i$,

\begin{align*}
g_{f_i}^{(4),s}(I_i')&\le 2\max_{e\in Z} g_e^{(4),s}(I')\\
&\le 4\max_{e\in Z} h_e^{(4),s}(I')\\
&\le \frac{400}{|Z|}\delta_{S,S'}(I'\setminus D)\\
\end{align*}

as desired.

\underline{Combining the parts.} By the first part,

$$\sum_{w\in S'} \frac{(b_{sw}^T L_{(I_i\setminus D)/S}^+ b_{f_i})^2}{r_{f_i}} \sum_{e\in \partial_{I_i}w} \frac{b_{ss'}^T L_{(I_i\setminus D)/(S,S')}^+ b_e}{r_e}\le m^{o(1)} \sum_{w\in S'} \frac{(b_{sw}^T L_{(I_i\setminus D)/S}^+ b_{f_i})^2}{r_{f_i}} \sum_{e\in \partial_{I_i}w} \frac{b_{ss'}^T L_{(I\setminus D)/(S,S')}^+ b_e}{r_e}$$

By the second part,

$$m^{o(1)} \sum_{w\in S'} \frac{(b_{sw}^T L_{(I_i\setminus D)/S}^+ b_{f_i})^2}{r_{f_i}} \sum_{e\in \partial_{I_i}w} \frac{b_{ss'}^T L_{(I\setminus D)/(S,S')}^+ b_e}{r_e}\le \frac{400 m^{o(1)}}{|Z|}\delta_{S,S'}(I\setminus D)$$

The desired result follows from substitution since $400 m^{o(1)} < \rho$.

\textbf{Midpoint potential stability.} We assume that the ``$s$ narrow potential neighborhood'' input condition is satisfied. The ``$s'$ narrow potential neighborhood'' case is the same with $s$ and $s'$ swapped.

Suppose that bucket $W_a'$ was chosen on Line \ref{line:energy-bucket}. Then for all $e\in Z$,

$$g_e^{(0)}(I')\le 2h_e^{(0)}(I') \le 2^{-a+1} \sqrt{b_{ss'}^T L_{(I'\setminus D)/(S,S')}^+ b_{ss'}}$$

and

\begin{align*}
g_e^{(1),Z,s}(I')&\le 2h_e^{(1),W',s}(I')\\
&\le \frac{2^{a+2}}{\sqrt{b_{ss'}^T L_{(I'\setminus D)/(S,S')}^+ b_{ss'}}}\frac{p(b_{ss'}^T L_{(I\setminus D)/(S,S')}^+ b_{ss'})}{100\log n} + \frac{p}{200(\log n) h_e^{(0)}(I')}
\end{align*}

by the bucketing upper and lower bounds respectively. By the ``Flexible function bound'' of Proposition \ref{prop:very-stable} and the ``Midpoint $s$'' bound, for all $i\le |Z|/2$, with high probability

$$g_{f_i}^{(0)}(I_i')g_{f_i}^{(1),Z\setminus \{f_0,f_1,\hdots,f_{i-1}\},s}(I_i')\le \frac{32}{100(\log n)}p(b_{ss'}^T L_{(I\setminus D)/(S,S')}^+ b_{ss'})$$

For any $\{u,v\} \in Z\setminus \{f_0,f_1,\hdots,f_{i-1}\}$,

\begin{align*}
\frac{|b_{ss'}^T L_{(I_i'\setminus D)/(S,S')}^+ b_{f_i}| |b_{f_i}^T L_{(I_i'\setminus D)/(S,S')} (b_{su} + b_{sv})/2|}{r_{f_i}} &\le g_{f_i}^{(0)}(I_i')g_{f_i}^{(1),Z\setminus \{f_0,f_1,\hdots,f_{i-1}\},s}(I_i')\\
&\le \frac{32}{100(\log n)}p(b_{ss'}^T L_{(I\setminus D)/(S,S')}^+ b_{ss'})\\
\end{align*}

Therefore, Theorem \ref{thm:martingale-2} applies with expectation change $O((\log (n/p)) (3p)(b_{ss'}^T L_{(I\setminus D)/(S,S')}^+ b_{ss'}))/|Z|$, stepwise variance $(O((\log (n/p))(3p)(b_{ss'}^T L_{(I\setminus D)/(S,S')}^+ b_{ss'})))^2/|Z|$ (both by Theorem \ref{thm:harsha} and an upper bound inductive hypothesis with base case that is the upper bound of ``$s$ narrow potential neighborhood''), and maximum change $\frac{32}{100(\log n)}p(b_{ss'}^T L_{(I\setminus D)/(S,S')}^+ b_{ss'})$. Theorem \ref{thm:martingale-2} shows that

$$\Pr[|(b_{ss'}^T L_{(I_i\setminus D)/(S,S')}^+ (b_{su_i} + b_{sv_i})/2) - (b_{ss'}^T L_{(I\setminus D)/(S,S')}^+ (b_{su_i} + b_{sv_i})/2)| > p/8 b_{ss'}^T L_{(I_i\setminus D)/(S,S')}^+ b_{ss'}]\le 1/n^6$$

Therefore, since $2p + p/8 < 3p < 1 - (p/2)$ (because $p\le 1/4$ by the ``Narrow potential neighborhood'' condition), the inductive hypothesis is verified and the ``$s'$ midpoint potential stability'' guarantee is satisfied. Furthermore, by the ``$s$ narrow potential neighborhood'' lower bound, the normalized potential of the midpoint of $f_i$ is at least $p - p/8 > p/2$, as desired.

\partspace

\textbf{Size of $Z$.} We show that $|Z|\ge |W|/\text{polylog}(n)$ with constant probability. This can be boosted to high probability by running $\FastOracle$ $\Theta(\log n)$ times, since all other guarantees are satisfied with high probability.

By the ``Size'' guarantee of Proposition \ref{prop:very-stable}, $|W'|\ge |W|/(\text{polylog}(n))^5$ immediately after all applications of $\VeryStable$. After Line \ref{line:energy-bucket}, 

$$|W'|\ge |W|/((\text{polylog}(n))^5 i_{max})$$

Let $W''$ be the version of $W'$ immediately after Line \ref{line:midpoint-downsampling}. Then

$$|W''| \ge |W|/((\text{polylog}(n))^5 i_{max} 1000\log^2(n/p))$$

Now, we just need to upper how many edges in $W''$ are not in $Z$. We bound this on a constraint-by-constraint basis.

\underline{Midpoint constraints with the ``$s$ narrow potential neighborhood'' input condition.}This part is very similar in spirit to the ``Size'' bound proof in Proposition \ref{prop:very-stable}. This part under the ``$s'$ narrow potential neighborhood'' assumption is the same with $s$ and $s'$ swapped, so we focus on the $s$ case.

Start by exploiting how $W''$ was sampled from $W'$ on Line \ref{line:midpoint-downsampling}, which we denote by $W_{orig}'$ for clarity. By Theorem \ref{thm:harsha}, for any $e\in W_{orig}'$

\begin{align*}
\sum_{e\in W_{orig}'} g_e^{(0)}(I')g_e^{(1),W_{orig}',s}(I') &= \sum_{e\in W_{orig}'} \sum_{f = \{u,v\}\in W_{orig}'\setminus \{e\} } \frac{|b_{ss'}^T L_{(I'\setminus D)/(S,S')}^+ b_e| |b_e^T L_{(I'\setminus D)/(S,S')}^+ (b_{su} + b_{sv})/2|}{r_e}\\
&\le \sum_{f = \{u,v\}\in W_{orig}'} O(\log (n/p))b_{ss'}^T L_{(I'\setminus D)/(S,S')}^+ (b_{su} + b_{sv})/2\\
\end{align*}

By the ``$s$ narrow potential neighborhood'' condition,

$$\sum_{f = \{u,v\}\in W_{orig}'} O(\log (n/p))b_{ss'}^T L_{(I'\setminus D)/(S,S')}^+ (b_{su} + b_{sv})/2 \le O(\log (n/p))(2p b_{ss'}^T L_{(I'\setminus D)/(S,S')}^+ b_{ss'}) |W_{orig}'|$$

Therefore, there is a set $W_{low}'\subseteq W_{orig}'$ with $|W_{low}'|\ge |W_{orig}'|/2$ with the property that

$$g_e^{(0)}(I')g_e^{(1),W_{orig}',s}(I') \le O(\log (n/p))(4p b_{ss'}^T L_{(I'\setminus D)/(S,S')}^+ b_{ss'})$$

for all $e\in W_{low}'$. We now upper bound the expected number of edges in $W_{low}'$ that violate the ``Midpoint $s$'' condition. First, for any $e\in W_{low}'$

\begin{align*}
\textbf{E}_{W''}[g_e^{(0)}(I')g_e^{(1),W'',s}(I') | e\in W''] &= g_e^{(0)}(I')\textbf{E}_{W''}[g_e^{(1),W'',s}(I')]\\
&= \frac{1}{1000\log^2(n/p)} g_e^{(0)}(I')g_e^{(1),W_{orig}',s}(I')\\
&\le \frac{p}{400(\log n)} (b_{ss'}^T L_{(I'\setminus D)/(S,S')}^+ b_{ss'})\\
\end{align*}

The first equality follows from the fact that the sum in $g_e$ does not include $e$. The second equality follows from the definition of $W''$ on Line \ref{line:midpoint-downsampling}. The last line follows from the definition of $W_{low}'$. By Markov's Inequality, for any $e\in W_{low}'$,

\begin{align*}
&\Pr_{W''}[e \text{ satisfies ``Midpoint $s$'' condition and } e\in W'']\\
&\Pr_{W''}[e \text{ satisfies ``Midpoint $s$'' condition} | e\in W'']\Pr_{W''}[e\in W'']\\
&\ge 1 - \Pr_{W''}[g_e^{(0)}(I')g_e^{(1),W'',s}(I') > \frac{p}{200(\log n)} (b_{ss'}^T L_{(I'\setminus D)/(S,S')}^+ b_{ss'}) | e\in W''] \frac{1}{1000\log^2(n/p)}\\
&\ge \frac{1}{2000\log^2(n/p)}\\
\end{align*}

Let $Z_{mid}$ be the set of edges in $W_{orig}'$ satisfying the ``Midpoint $s$'' condition. Then

\begin{align*}
\textbf{E}_{W''}[|Z_{mid}|] &\ge \sum_{e\in W_{low}'}\Pr_{W''}[e\in Z_{mid}]\\
&= \sum_{e\in W_{low}'} \Pr_{W''}[e \text{ satisfies ``Midpoint $s$'' condition and } e\in W'']\\
&\ge \frac{|W_{low}'|}{2000\log^2(n/p)}\\
&\ge \frac{|W_{orig}'|}{4000\log^2(n/p)}\\
\end{align*}

Since $|Z_{mid}|\subseteq |W_{orig}'|$, $|Z_{mid}|\ge \frac{|W_{orig}'|}{8000\log^2(n/p)}$ with probability at least $\frac{1}{8000\log^2(n/p)}$, as desired.

\underline{Conductance $s-s'$ constraint.} We upper bound the number of elements in $Z_{mid}$ that do not satisfy this. By Theorem \ref{thm:harsha},

$$\sum_{e\in Z_{mid}} g_e^{(0)}(I')g_e^{(3),s}(I')\le (\log (n\alpha))\delta_{S,S'}(I'\setminus D)$$

By the approximation lower bound, for all edges $e$ with $h_e^{(0)}(I')h_e^{(3),s}(I')\ge \frac{100(\log (n\alpha))}{|Z_{mid}|}\delta_{S,S'}(I'\setminus D)$,

$$g_e^{(0)}(I')g_e^{(3),s}(I')\ge \frac{25(\log (n\alpha))}{|Z_{mid}|}\delta_{S,S'}(I'\setminus D)$$

By the previous inequality, only $|Z_{mid}|/25$ edges in $Z_{mid}$ can satisfy the above inequality, as desired.

\underline{Conductance $s'-s$ constraint.} Same as above, but with $s$ and $S$ swapped for $s'$ and $S'$ and vice versa.

\underline{Energy $s-s'$ constraint.} Since the sum of energies on edges is at most the overall energy,

$$\sum_{e\in Z_{mid}} g_e^{(4),s}(I') \le \delta_{S,S'}(I'\setminus D)$$

For all edges with $h_e^{(4),s}(I')\ge \frac{100}{|Z_{mid}|}\delta_{S,S'}(I'\setminus D)$,

$$g_e^{(4),s}(I')\ge \frac{50}{|Z_{mid}|}\delta_{S,S'}(I'\setminus D)$$

By the previous inequality, only $|Z_{mid}|/50$ edges in $Z_{mid}$ can satisfy the above inequality, as desired.

\underline{Energy $s'-s$ constraint.} Same as above, but with $s$ and $S$ swapped for $s'$ and $S'$ and vice versa.

\underline{Deferred constraint.} By Theorem \ref{thm:harsha},

\begin{align*}
\sum_{e\in Z_{mid}} g_e^{(0)}(I')g_e^{(2)}(I')&\le (\log (n\alpha))\left(\sum_{\{u,v\}\in A} b_{ss'}^T L_{(I'\setminus D)/(S,S')}^+ (b_{su} + b_{sv}) + \sum_{\{u,v\}\in B} b_{ss'}^T L_{(I'\setminus D)/(S,S')}^+ (b_{us'} + b_{vs'})\right)\\
&+ r_{min}/n^6
\end{align*}

Therefore, only $|Z_{mid}|/25$ edges $e\in Z_{mid}$ can have

\begin{align*}
h_e^{(0)}(I')h_e^{(2)}(I')&\ge \frac{100(\log (\alpha n))}{|W'|}\left(\sum_{\{u,v\}\in A} b_{ss'}^T L_{(I'\setminus D)/(S,S')}^+ (b_{su} + b_{sv}) + \sum_{\{u,v\}\in B} b_{ss'}^T L_{(I'\setminus D)/(S,S')}^+ (b_{us'} + b_{vs'})\right)\\
&+ r_{min}/n^4
\end{align*}

as desired.

\underline{Main constraint.} Since the sum of energies on edges is at most the overall energy,

$$\sum_{e\in Z_{mid}} (g_e^{(0)}(I'))^2 \le b_{ss'}^T L_{(I\setminus D)/(S,S')}^+ b_{ss'}$$

For edges $e$ with $(h_e^{(0)}(I'))^2 \ge \frac{100}{|Z_{mid}|} b_{ss'}^T L_{(I\setminus D)}^+ b_{ss'}$,

$$(g_e^{(0)}(I'))^2 \ge \frac{25}{|Z_{mid}|} b_{ss'}^T L_{(I\setminus D)/(S,S')}^+ b_{ss'}$$

By the previous inequality, this can only occur for $|Z_{mid}|/25$ edges in $Z_{mid}$, as desired.

\underline{Combining the parts.} Adding all of the guarantees from the previous parts about $Z_{mid}$ shows that at most $\frac{6|Z_{mid}|}{25}$ edges in $Z_{mid}$ are removed by those constraints. This leaves a set $Z$ with $|Z|\ge 19|Z_{mid}|/25$. Therefore, with probability at least $1/\text{polylog}(n)$, $|Z|\ge |W|/\text{polylog}(n)$. Repeating $\FastOracle$ $\text{polylog}(n)$ times therefore finds a set $Z$ with the desired size lower bound with high probability.

\end{proof}

\subsection{Efficient approximations to required quantities}

Now, we just need to efficiently compute approximations $h_e^{()}$ to the functions $g_e^{()}$ used in $\FastOracle$. We do this by observing that

\begin{itemize}
\item Each of the functions are weighted $\ell_1$ or $\ell_2$ norms of columns of a matrix $M$ with entries $M_{ij} := v_i^T L_H^+ v_j$.
\item $\ell_1$ and $\ell_2$ norms of columns can be approximated using $O(\log n)$ ``linear queries'' of the form $v^T L_H^+ v_j$ by Theorem \ref{thm:linear-sketch}. All linear queries for all $v_j$s can be computed using one Laplacian solve with demand vector $v$ (preprocessing) and a constant amount of work for $v_j$ (query). One can extract the desired information from the queries by taking the median, which takes $\tilde{O}(1)$ time per query.
\end{itemize}

While this is good enough for some of the functions we need to compute, we also need to leave out elements on the diagonal in some cases. To do this, we use the above idea to efficiently compute row sums for entries in some off-diagonal rectangle in the matrix. We show that all off-diagonal row sums can be approximated using $O(\log n)$ off-diagonal rectangle queries.

\subsubsection{Full row norm approximations}

In this subsection, we describe how to obtain 2-approximations for the functions $g_e^{(0)}$, $g_e^{(2)}$, $g_e^{(3),s}$, and $g_e^{(4),s}$ for arbitrary query edges $e$. $g_e^{(0)}(H)$ for all $e$ can be computed using one Laplacian solve for the demand $b_{ss'}$ on the graph $(H\setminus D)/(S,S')$. $g_e^{(2)}(H)$ and $g_e^{(3),s}(H)$ are each weighted $\ell_1$ norms, while $g_e^{(4),s}(H)$ is a weighted $\ell_2$ norm.

We now implement the first parts of $\ApxPreproc$ and $\ApxQuery$ that compute these functions:

\begin{algorithm}[H]
\caption{$\ApxPreproc(H,S,S',D,A,B,X)$ part 1 (functions with diagonal terms)}
\DontPrintSemicolon
\SetAlgoLined

    \KwIn{a graph $H$, $S,S'\subseteq V(H)$, $D,A,B,X\subseteq E(H)$}

    \KwResult{an implicit data structure for use by $\ApxQuery$}

    \tcp{main terms}

    $X^{(0)}\gets L_{(H\setminus D)/(S,S')}^+ b_{ss'}$\;

    \vspace{.4 in}

    \tcp{deferred terms}

    $C^{(2)}\gets \SketchMatrix(|A|+|B|,1/n^6,1,1/2)$\;

    $D^{(2)}\gets $ the $n\times (|A| + |B|)$ matrix with columns $b_{su} + b_{sv}$ for $\{u,v\}\in A$ and $b_{us'} + b_{vs'}$ for $\{u,v\}\in B$\;

    $X^{(2)}\gets L_{(H\setminus D)/(S,S')}^+ D^{(2)}(C^{(2)})^T$\;

    \vspace{.4 in}

    \tcp{conductance terms}

    $C^{(3),s}\gets \SketchMatrix(|\partial_H S'|,1/n^6,1,1/2)$\;

    $D^{(3),s}\gets $ the $n\times |\partial_H S'|$ matrix with columns $b_{sw}^T L_{(H\setminus D)/S}^+ b_{sw} \sum_{e\in \partial_H w} \frac{b_e}{r_e}$ for edges $e\in \partial_H w$ for some $w\in S'$, where the effective resistance value is $1.1$-approximated using Johnson-Lindenstrauss\;

    $X^{(3),s}\gets L_{(H\setminus D)/(S,S')}^+ D^{(3),s}(C^{(3),s})^T$\;

    $C^{(3),s'}\gets \SketchMatrix(|\partial_H S'|,1/n^6,1,1/2)$\;

    Compute $X^{(3),s'}$ in the same way as $X^{(3),s}$ with $s$,$S$ and $s'$,$S'$ interchanged\;

    \vspace{.4 in}

    \tcp{energy terms}

    $C^{(4),s}\gets \SketchMatrix(|S'|,1/n^6,2,1/2)$\;

    $D^{(4),s}\gets $ the $n\times |S'|$ matrix with columns $\left(\sqrt{\sum_{e\in \partial_H w} \frac{b_{ss'}^T L_{(H\setminus D)/(S,S')}^+ b_e}{r_e}} \right)b_{sw}$ for $w\in S'$, where the flow values are computed using the vector $L_{(H\setminus D)/(S,S')}^+ b_{ss'}$ with one Laplacian solve\;

    $X^{(4),s}\gets L_{(H\setminus D)/S}^+ D^{(4),s} (C^{(4),s})^T$\;

    $C^{(4),s'}\gets \SketchMatrix(|S'|,1/n^6,2,1/2)$\;

    Compute $X^{(4),s'}$ in the same way as $X^{(4),s}$, with $s$,$S$ and $s'$,$S'$ swapped\; 

\end{algorithm}

\newpage

\begin{algorithm}[H]
\caption{$\ApxQuery(g_f^{()}(H))$ for non-$g^{(1)}$ functions}
\DontPrintSemicolon
\SetAlgoLined

    \KwIn{a function $g_f^{()}(H)$}

    \KwOut{a 2-approximation to the value of the function}

    $d\gets $ number of columns of $C^{()}$\;

    $p\gets $ 1 or 2, depending on whether $g^{(4)}$ is being used\;

    \Return{$\RecoverNorm((X^{()})^T (b_f/\sqrt{r_f}),d,1/n^6,p,1/2)$ or this value squared if $p = 2$}
\end{algorithm}

\subsubsection{Row with diagonal element left out}

Now, we compute $g^{(1)}$ and implement $\ColumnApx$. These quantities differ from the ones discussed in the previous subsubsection in that they leave out one ``diagonal'' element in each row. One could try to subtract out this element, but this destroys multiplicative approximation.

Instead, we use sketching to approximate sums of rows of random off-diagonal submatrices. Specifically, the algorithm does the following for $\ColumnApxPreproc$:

\begin{itemize}
\item Do $\Theta(\log n)$ times:
    \begin{itemize}
    \item Pick a random balanced partition $(Z_0,Z_1)$ of $Z$
    \item For each $e\in Z_0$, approximate $a_e\gets \sum_{f\in Z_1} \frac{|b_e^T L_{G'}^+ b_f|}{\sqrt{r_e}\sqrt{r_f}}$ using sketching
    \end{itemize}
\item For each $e\in Z$, average the $a_e$s together and scale up the average by a factor of 4 to obtain an estimate for $\sum_{f\ne e\in Z} \frac{|b_e^T L_{G'}^+ b_f|}{\sqrt{r_e}\sqrt{r_f}}$
\end{itemize}

This algorithm takes $\tilde{O}(m)$ time thanks to the sketching step. To show correctness, think about the indicator variable $Y_{ef}$ that is 1 if and only if $e\in X_0$ and $f\in X_1$. This event happens with probability 1/4 in each of the $\Theta(\log n)$ trials if $e \ne f$ and with probability 0 otherwise. Since the trials are independent, the weight in front of each off-diagonal term concentrates around 1/4. Scaling up the average by a factor of 4 yields the desired result.

Now, we formally implement this idea. We analyze these implementations in the proof of Proposition \ref{prop:apx-correctness}.

\begin{algorithm}[H]
\caption{$\ColumnApxPreproc(G',Z)$}
\DontPrintSemicolon
\SetAlgoLined

    \KwIn{a graph $G'$ and $Z\subseteq E(G')$}

    \KwResult{nothing; implicit data structure}

    $K\gets 100\log n$\;

    $a_e\gets 0$ for each $e\in Z$\;

    \ForEach{$k \gets 1,2,\hdots,K$}{

        $Z_0,Z_1\gets $ uniformly random partition of $Z$ with $|Z_0| - |Z_1| \le 1$\;

        $C\gets \SketchMatrix(|Z_1|,1/n^6,1,1/4)$\;

        $D\gets $ $n\times |Z_1|$ matrix of columns $b_f/\sqrt{r_f}$ for $f\in Z_1$\;

        $U \gets L_{G'}^+ DC^T$\;

        \ForEach{$e\in Z_0$}{

            Increment $a_e$ by $\RecoverNorm(U^T (b_e/\sqrt{r_e}),|Z_1|,1/n^6,1,1/4)$\;

        }

    }

\end{algorithm}

\begin{algorithm}[H]
\caption{$\ColumnApx(e)$}
\DontPrintSemicolon
\SetAlgoLined

    \KwIn{an edge $e\in Z$}

    \KwResult{a 2-approximation to the value of $\sum_{f\ne e\in Z} \frac{|b_e^T L_{G'}^+ b_f|}{\sqrt{r_e}\sqrt{r_f}}$}

    \Return{$4a_e/K$}

\end{algorithm}

\newpage

\begin{algorithm}[H]
\caption{$\ApxPreproc(H,S,S',D,A,B,X)$ part 2 (functions without diagonal terms)}
\DontPrintSemicolon
\SetAlgoLined

    \KwIn{a graph $H$, $S,S'\subseteq V(H)$, $D,A,B,X\subseteq E(H)$}

    \KwResult{an implicit data structure for use by $\ApxQuery$}

    \tcp{code for $s$ only given for clarity; swap $s,S$ for $s',S'$}

    $K\gets 100\log n$\;

    $a_e\gets 0$ for each $e\in X$\;

    \ForEach{$k\gets 1,2,\hdots,K$}{

        $X_0,X_1\gets $ uniformly random partition of $X$ with $|X_0|-|X_1|\le 1$\;

        $C\gets \SketchMatrix(|X_1|,1/n^6,1,1/4)$\;

        $D\gets n\times |X_1|$ matrix of columns $(b_{su} + b_{sv})/2$ for $\{u,v\}\in X_1$\;

        $U\gets L_{(H\setminus D)/(S,S')}^+ DC^T$\;

        \ForEach{$e\in X_0$}{

            Increment $a_e$ by $\RecoverNorm(U^T(b_e/\sqrt{r_e}),|X_1|,1/n^6,1,1/4)$

        }

    }
    
\end{algorithm}

\begin{algorithm}[H]
\caption{$\ApxQuery(g_e^{(1),X,s})$}
\DontPrintSemicolon
\SetAlgoLined

    \KwIn{the function $g_e^{(1),X,s}$}

    \KwResult{a 2-approximation to its value on $H$}

    \tcp{code for $s$ only given for clarity; swap $s,S$ for $s',S'$}

    \Return{$4a_e/K$}
    
\end{algorithm}

\subsubsection{Combining the parts}

\begin{proposition}\label{prop:apx-correctness}

The following guarantees hold for $\ApxQuery$, $\ColumnApx$, $\ApxPreproc$, and $\ColumnApxPreproc$:

\begin{itemize}
\item (Correctness) Each call to $\ApxQuery$ and $\ColumnApx$ returns a 2-approximation to the correct value.
\item (Query runtime) Each call to $\ApxQuery$ and $\ColumnApx$ takes $\tilde{O}(1)$ time.
\item (Preprocessing runtime) Each call to $\ApxPreproc$ and $\ColumnApxPreproc$ takes $\tilde{O}(m)$ time.
\end{itemize}
\end{proposition}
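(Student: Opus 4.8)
The plan is to prove Proposition~\ref{prop:apx-correctness} by handling each family of functions in turn, reducing every approximation to one of two primitives: (a) a weighted $\ell_p$-norm of a column of a matrix of the form $M_{ij} = v_i^T L_H^+ v_j$, which we estimate via Theorem~\ref{thm:linear-sketch}, and (b) an off-diagonal row sum of such a matrix, which we estimate by the random-balanced-partition trick implemented in $\ColumnApxPreproc$ and the part-2 code of $\ApxPreproc$. First I would check correctness for the ``full row'' functions $g^{(0)}, g^{(2)}, g^{(3),s}, g^{(3),s'}, g^{(4),s}, g^{(4),s'}$. For $g^{(0)}$ this is immediate: one Laplacian solve produces $X^{(0)} = L_{(H\setminus D)/(S,S')}^+ b_{ss'}$, and $g_e^{(0)}(H) = |(X^{(0)})^T b_e|/\sqrt{r_e}$ exactly. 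For the others, observe that $g_e^{(j)}(H)$ is exactly the weighted $\ell_1$ (for $j=2,3$) or $\ell_2$ (for $j=4$) norm of the vector whose entries are $b_e^T L_{\phi^{(j)}(H)}^+ (\text{scaled column of } D^{(j)})/\sqrt{r_e}$. By the linearity of the sketch, $(X^{(j)})^T(b_e/\sqrt{r_e}) = C^{(j)} (D^{(j)})^T L_{\phi^{(j)}(H)}^+ b_e/\sqrt{r_e}$ is exactly the sketch of that vector, so $\RecoverNorm$ returns a $(1\pm 1/2)$-approximation with probability $1 - 1/n^6$, i.e. a $2$-approximation; a union bound over the $\tilde{O}(|X|)$ queries keeps the failure probability polynomially small. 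The only subtlety is that $D^{(3),s}$ and $D^{(4),s}$ use Johnson--Lindenstrauss / single-solve approximations to the scalar coefficients $b_{sw}^T L_{(H\setminus D)/S}^+ b_{sw}$ and $\sum_{e\in\partial_H w} (b_{ss'}^T L^+ b_e)/r_e$; since these are computed to multiplicative $1.1$ (resp.\ exactly via one solve) and only scale fixed columns, the composed error is still a small constant factor, which I would fold into the $2$-approximation guarantee (or note that one can tighten the sketch accuracy to $1/4$ to absorb it).

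Next I would handle $g^{(1),X,s}$ and $g^{(1),X,s'}$, and the analogous quantity $\sum_{f\ne e\in Z}|b_e^T L_{G'}^+ b_f|/(\sqrt{r_e}\sqrt{r_f})$ computed by $\ColumnApx$; these are the ``leave-one-out'' quantities. The key observation is the indicator argument already sketched in the text: in each of the $K = 100\log n$ rounds, a fixed off-diagonal pair $(e,f)$ with $e\ne f$ lands in $(Z_0,Z_1)$ (resp.\ $(X_0,X_1)$) with probability exactly $1/4$ (up to the $O(1/|Z|)$ discrepancy from the near-balanced split, which is negligible), while the diagonal term $f=e$ never contributes because $e$ and $f$ are on opposite sides. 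Hence the expectation of $4a_e/K$ equals $\sum_{f\ne e} w_{ef}$ exactly (modulo the split discrepancy), and independence across rounds plus a Chernoff/Hoeffding bound on the $K$ independent contributions (each itself a $(1\pm 1/4)$-approximate sketch value, so a bounded nonnegative random variable) gives concentration to within a factor of $2$ with probability $1 - 1/n^6$. I would make this precise by writing $4a_e/K = \frac{4}{K}\sum_{k=1}^K \widehat{R}_k$ where $\widehat{R}_k \in [(1-1/4),(1+1/4)]\cdot \sum_{f\in Z_1^{(k)}} w_{ef}$, taking expectations over the partition, and applying a concentration inequality to the rescaled sum; the $K = \Theta(\log n)$ choice is exactly what drives the failure probability below $1/\text{poly}(n)$ while keeping the variance of the average small relative to its mean (using nonnegativity of all $w_{ef}$).

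For the runtime claims: each $\ApxQuery$ and $\ColumnApx$ call is $\tilde{O}(1)$ because it is either a single $\RecoverNorm$ on an $O(\log n)$-dimensional sketch (by Theorem~\ref{thm:linear-sketch}, $\text{poly}(\ell) = \text{polylog}(n)$ time, plus an $O(\log n)$-dimensional inner product against the precomputed $(X^{()})^T$ once we note $(X^{()})^T(b_f/\sqrt{r_f})$ extracts two precomputed rows and combines them in $O(\ell)$ time) or a simple average $4a_e/K$ of a precomputed scalar. Each $\ApxPreproc$ and $\ColumnApxPreproc$ call does $O(\log n)$ rounds, each round performing one batched Laplacian solve $L^+ D C^T$ with $D C^T$ having $\tilde{O}(1)$ columns (by the sketch dimension bound), hence $\tilde{O}(m)$ per round by Theorem~\ref{thm:jl}-style fast solvers (the $O(m\sqrt{\log n}\log(n/\ep))$ solver of \cite{CKMPPRX14}), plus $\tilde{O}(|X|)$ work to read off $\RecoverNorm$ values and accumulate the $a_e$'s; total $\tilde{O}(m)$. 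The main obstacle I anticipate is purely bookkeeping rather than conceptual: correctly accounting for the \emph{composition} of approximation errors --- the JL error in the coefficient vectors feeding $D^{(3)}$ and $D^{(4)}$, the $(1\pm 1/4)$ sketch error inside each round of the leave-one-out estimator, and the $O(1/|Z|)$ split-imbalance bias --- and verifying that their product still fits inside the claimed factor-$2$ window with the stated $1/n^6$ failure probability; this is handled by choosing each individual accuracy parameter to be a small enough constant and each repetition count $\Theta(\log n)$ large enough, and then a final union bound over all $\tilde{O}(|X|) = \tilde{O}(m)$ queries and all $O(\log n)$ rounds.
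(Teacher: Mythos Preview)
Your proposal is correct and matches the paper's approach: direct sketching via Theorem~\ref{thm:linear-sketch} for the full-row norms $g^{(0)}, g^{(2)}, g^{(3)}, g^{(4)}$, and the random balanced-partition trick over $K=\Theta(\log n)$ rounds for the leave-one-out sums in $g^{(1)}$ and $\ColumnApx$. The one place where the paper organizes the concentration step slightly differently is that, instead of applying Hoeffding to the aggregate $\sum_k \widehat R_k$ as you suggest, it applies Chernoff to each per-pair indicator sum $\sum_{k=1}^K Y^{(k)}_{ef}$ (a Binomial$(K,1/4)$) and union-bounds over all pairs $(e,f)$ to conclude every coefficient lies in $[3K/16,5K/16]$; this makes the factor-$2$ bound immediate without tracking the range of the aggregate, but both arguments go through.
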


\begin{proof}

\textbf{Correctness for $\ApxQuery$ for all but $g^{(1)}$.} Follows directly from the ``Approximation'' guarantee of Theorem \ref{thm:linear-sketch}.

\textbf{Correctness for $\ApxQuery$ for $g^{(1)}$ and $\ColumnApx$.} We focus on $g_e^{(1),X,s}$, since an intuitive overview for $\ApxQuery$ was given earlier and the proof is very similar in both cases. Let $Y_{ef}^{(k)}$ be the indicator variable of the event $\{e\in X_0\text{ and } f\in X_1\}$. By the ``Approximation'' guarantee of Theorem \ref{thm:linear-sketch}, at the end of the outer foreach loop in $\ApxPreproc$,

$$a_e\in [3/4,5/4] \left(\sum_{f = \{u,v\}\in X} \frac{|b_e^T L_{(H\setminus D)/(S,S')}^+ (b_{su} + b_{sv})/2|}{\sqrt{r_e}}\left(\sum_{k=1}^K Y_{ef}^{(k)}\right)\right)$$

for each $e\in X$. Since $Y_{ee}^{(k)} = 0$ for all $k$ and $e\in X$,

\begin{align*}
&\left(\sum_{f = \{u,v\}\in X} \frac{|b_e^T L_{(H\setminus D)/(S,S')}^+ (b_{su} + b_{sv})/2|}{\sqrt{r_e}}\left(\sum_{k=1}^K Y_{ef}^{(k)}\right)\right)\\
&= \left(\sum_{f = \{u,v\}\in X\setminus e} \frac{|b_e^T L_{(H\setminus D)/(S,S')}^+ (b_{su} + b_{sv})/2|}{\sqrt{r_e}}\left(\sum_{k=1}^K Y_{ef}^{(k)}\right)\right)\\
\end{align*}

Notice that $\textbf{E}[Y_{ef}^{(k)}] = 1/4$ if $e \ne f$ and that for any fixed $e,f$, the collection of random variables $\{Y_{ef}^{(k)}\}_{k=1}^K$ is independent. Therefore, by Chernoff bounds (using the value of $K$) and a union bound over all pairs $e\ne f\in X$,

$$3K/16 \le \sum_{k=1}^K Y_{ef}^{(k)} \le 5K/16$$

Therefore,

$$a_e\in [9K/64,25K/64] \left(\sum_{f = \{u,v\}\in X\setminus e} \frac{|b_e^T L_{(H\setminus D)/(S,S')}^+ (b_{su} + b_{sv})/2|}{\sqrt{r_e}}\right)\\$$

which means that $4a_e/K$ is a 2-approximation, as desired.

\textbf{Query runtime.} $\ColumnApx$ and $\ApxQuery$ for $g^{(1)}$ just return precomputed values, so they both take constant time. $\ApxQuery$ for other functions computes a matrix-vector product with a vector that is supported on only two entries. Therefore, this product only takes time proportional to the number of rows of the matrix, which is $\ell = O(\log n)$ by Theorem \ref{thm:linear-sketch}. $\RecoverNorm$ only take $\text{poly}(\ell) = \text{polylog}(n)$ time by the ``Runtime'' guarantee of Theorem \ref{thm:linear-sketch}. Therefore, all queries take $\text{polylog}(n)$ time, as desired.

\textbf{Preprocessing runtime.} Each $X$ matrix computation takes near-linear time, as it involves a $\ell$ sparse matrix-vector products to compute $DC^T$ and $\ell$ Laplacian solves to compute $L^+ DC$. Each $\SketchMatrix$ call takes $\tilde{O}(m)$ time by Theorem \ref{thm:linear-sketch}. All other operations take $\tilde{O}(m)$ time, so the total precomputation runtime is $\tilde{O}(m)$, as desired.

\end{proof}

\subsection{Proof of Lemma \ref{lem:fast-fix} (an algorithm for the fixing lemma with almost-linear runtime)}

Now, we combine all of the results of this section to prove Lemma \ref{lem:fast-fix}:

\begin{proof}[Proof of Lemma \ref{lem:fast-fix}]
The result follows immediately from Lemma \ref{lem:oracle-fix} and Proposition \ref{prop:fast-oracle} with $\FastOracle$ substituted in for $\Oracle$ in the $\Fix$ algorithm. Call this algorithm $\FastFix$. $\FastOracle$'s runtime is bounded using Proposition \ref{prop:apx-correctness}.
\end{proof}

\newpage

\appendix
\section{Facts about electrical potentials}

\subsection{Bounding potentials using effective resistances}

\begin{lemma}\label{lem:pot-bound}
Consider a graph $I$ with three vertices $u,v,w\in V(I)$. Then

$$\Pr_v[t_u > t_w]\le \frac{\texttt{Reff}_I(u,v)}{\texttt{Reff}_I(u,w)}$$
\end{lemma}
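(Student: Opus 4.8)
The plan is to express the escape probability $\Pr_v[t_u > t_w]$ in terms of electrical potentials and then bound the resulting potential using a triangle-inequality-type argument on effective resistances. First I would set up the electrical flow from $u$ to $w$: let $p$ be the potential vector with $p_u = 0$ and $p_w = \texttt{Reff}_I(u,w)$ (equivalently, normalize so $p$ equals $L_I^+ b_{wu}$ shifted to make $p_u=0$). By Theorem \ref{thm:offline-prob} (or its source Theorem \ref{thm:edge-visits}), the probability that a random walk started at $v$ reaches $w$ before $u$ is exactly the normalized potential $p_v/p_w = p_v/\texttt{Reff}_I(u,w)$. So it suffices to show $p_v \le \texttt{Reff}_I(u,v)$.

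The key step is bounding $p_v$, the potential at $v$ in the unit-current $u$-to-$w$ flow (scaled so $p_w = \texttt{Reff}_I(u,w)$), by $\texttt{Reff}_I(u,v)$. The cleanest route is the standard fact that potentials are $1$-Lipschitz with respect to the effective-resistance metric in the following one-sided sense: for the unit $u$-$w$ flow with $p_u = 0$, one has $0 \le p_v \le \texttt{Reff}_I(u,v)$. This itself follows from the probabilistic interpretation: $p_v/\texttt{Reff}_I(u,w) = \Pr_v[t_w < t_u] \le \Pr_v[t_w < t_u \text{ or hits } u\text{-side first is irrelevant}]$ — more carefully, consider instead the $u$-$v$ flow and use the identity $p_v = \texttt{Reff}_I(u,w)\cdot\Pr_v[t_w<t_u]$ together with the series/monotonicity law. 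Alternatively, and perhaps most directly, I would use Lemma \ref{lem:harmonic-style} reasoning: the quantity $p_v$ in the $u$-$w$ flow equals $b_{wu}^T L_I^+ b_{vu}$, and by Cauchy–Schwarz (or the fact that this is a flow-energy inner product) $|b_{wu}^T L_I^+ b_{vu}| \le \sqrt{(b_{wu}^T L_I^+ b_{wu})(b_{vu}^T L_I^+ b_{vu})} = \sqrt{\texttt{Reff}_I(u,w)\,\texttt{Reff}_I(u,v)}$; combined with the probabilistic identity this gives $\Pr_v[t_w < t_u] = \frac{b_{wu}^T L_I^+ b_{vu}}{\texttt{Reff}_I(u,w)} \le \sqrt{\frac{\texttt{Reff}_I(u,v)}{\texttt{Reff}_I(u,w)}}$, which is actually stronger than claimed when the ratio is small but we only need the linear bound. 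To get exactly the stated linear bound $\texttt{Reff}_I(u,v)/\texttt{Reff}_I(u,w)$, I would instead use the monotonicity argument: $\Pr_v[t_u > t_w] = \Pr_v[t_w < t_u]$, and by the "series" decomposition of effective resistance (commute-time / hitting-time identities), $\Pr_v[t_w < t_u] \le \texttt{Reff}_I(u,v)/\texttt{Reff}_I(u,w)$ because reaching $w$ before $u$ requires the $u$-$v$ "resistance budget" to be at most the full $u$-$w$ resistance in the appropriate sense.

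Concretely, here is the argument I'd write out. Let $G' = I/\{$identify nothing$\}$ and consider potentials $\varphi = L_I^+ b_{uv}$ normalized so $\varphi_u = \texttt{Reff}_I(u,v)$ and $\varphi_v = 0$; then for any vertex $w$, $\Pr_w[t_v < t_u] = \varphi_w/\texttt{Reff}_I(u,v)$ by Theorem \ref{thm:offline-prob}. Swapping roles ($v \leftrightarrow w$), I want $\Pr_v[t_w < t_u]$. Use the identity (a consequence of Theorem \ref{thm:edge-visits} / reversibility): $\Pr_v[t_w < t_u]\cdot\texttt{Reff}_I(u,w) = \Pr_w[t_v < t_u]\cdot\texttt{Reff}_I(u,v)$ — this is the symmetry of $b_{wu}^T L_I^+ b_{vu}$. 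Since $\Pr_w[t_v<t_u] \le 1$, we get $\Pr_v[t_w < t_u] \le \texttt{Reff}_I(u,v)/\texttt{Reff}_I(u,w)$, which is exactly the claim.

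The main obstacle will be getting the symmetry identity $b_{wu}^T L_I^+ b_{vu} = b_{vu}^T L_I^+ b_{wu}$ interpreted correctly in probabilistic terms — it is trivially true as a scalar (the Laplacian pseudoinverse is symmetric), but I need to be careful that the probabilistic interpretation via Theorem \ref{thm:offline-prob} matches: specifically that $\Pr_v[t_w < t_u] = (b_{wu}^T L_I^+ b_{vu})/(b_{wu}^T L_I^+ b_{wu})$ with the correct sign and orientation of the $b$ vectors. Once that bookkeeping is pinned down, the inequality $\Pr_w[t_v < t_u] \le 1$ immediately yields the result, so the proof is short. I'd double-check the orientation by testing on a path $u - v - w$, where $\Pr_v[t_w < t_u] = r_{uv}/(r_{uv}+r_{vw}) = \texttt{Reff}(u,v)/\texttt{Reff}(u,w)$, confirming the bound is tight there.
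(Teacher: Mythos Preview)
Your final argument is correct and is essentially the paper's proof. Both write $\Pr_v[t_u>t_w]=\dfrac{b_{uw}^T L_I^+ b_{uv}}{b_{uw}^T L_I^+ b_{uw}}$ and then bound the numerator by $b_{uv}^T L_I^+ b_{uv}=\texttt{Reff}_I(u,v)$; the paper phrases this as ``potentials are maximized on the support of the demand vector'' (i.e.\ in the $u$--$v$ flow the minimum potential is at $v$, so $\psi_u-\psi_w\le \psi_u-\psi_v$), while you phrase the same inequality as $\Pr_w[t_v<t_u]\le 1$ after invoking symmetry of $L_I^+$. These are the same step.

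One small correction to your digression: the Cauchy--Schwarz bound $\Pr_v[t_w<t_u]\le\sqrt{\texttt{Reff}_I(u,v)/\texttt{Reff}_I(u,w)}$ is \emph{weaker}, not stronger, than the linear bound when the ratio is below $1$ (since $\sqrt{x}>x$ for $0<x<1$), so it would not have sufficed. You correctly abandon it, but the parenthetical remark is backwards.
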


\begin{proof}
This probability can be written in terms of normalized potentials:

$$\Pr_v[t_u > t_w] = \frac{b_{uw}^T L_I^+ b_{uv}}{b_{uw}^T L_I^+ b_{uw}}$$

Since electrical potentials are maximized on the support of the demand vector,

$$\frac{b_{uw}^T L_I^+ b_{uv}}{b_{uw}^T L_I^+ b_{uw}}\le \frac{b_{uv}^T L_I^+ b_{uv}}{b_{uw}^T L_I^+ b_{uw}} = \frac{\texttt{Reff}_I(u,v)}{\texttt{Reff}_I(u,w)}$$

as desired.
\end{proof}

\lemwellsep*

To prove this, we take advantage of the dual formulation of electrical flow given in \cite{KOSZ13}:

\begin{remark}[Equation (2) of \cite{KOSZ13}]
The solution $x$ for the optimization problem $Lx = b$ is also the vector that optimizes

$$\max_{p\in \mathbb{R}^n} 2b^T p - p^T L p$$
\end{remark}

\begin{proof}[Proof of Lemma \ref{lem:well-sep}]
Pick arbitrary $s\in C_1$, $t\in C_2$ in $I$ and use the electrical potentials $p = L_I^+ b_{st}$ to construct a good solution $q$ to the dual formulation of electrical flows in $J$. Construct $J$ from $I$ by adding 0-resistance edges from $s$ and $t$ to all $C_1$ and $C_2$ vertices respectively. Notice that $p_s = \min_v p_v$ and $p_t = \max_v p_v$. For all $v$, let $q_v = \max(p_s + R, \min(p_t - R, p_v))$.

$q$ is feasible, so by the above remark,

$$\texttt{Reff}_J(s,t)\ge 2b_{st}^T q - q^T L_J q$$

Notice that $2b_{st}^T q = 2(p_t - p_s - 2R)$, so we just need to upper bound $q^T L_J q$. Notice that for any $x\in C_1$, $p_x - p_s = b_{st}^T L_I^+ b_{sx}\le b_{sx}^T L_I^+ b_{sx}\le R$. Similarly, for any $y\in C_2$, $p_t - p_y\le R$. Therefore, all vertices $x\in C_1$ and $y\in C_2$ have $q_x = p_s + R$ and $q_y = p_t - R$ respectively. This means that all 0-resistance edges in $J$ have potential drop 0 across them, which means that $q^T L_J q$ is defined.

No other edges were added to create $J$ and potential drops are only smaller using $q$ instead of $p$, so $q^T L_J q\le p^T L_I p = p_t - p_s$. Therefore,

$$\texttt{Reff}_J(s,t)\ge p_t - p_s - 4R\ge (\gamma - 4)R$$

as desired.
\end{proof}

\subsection{Schur complement facts}

\lemmonotoneschur*

\begin{proof}
It suffices to show this when $S_0 = S_0'\cup \{v\}$ for some vertex $v\in V(I)$. By Remark \ref{rmk:schur}, $J' = \texttt{Schur}(J,V(J)\setminus \{v\})$. By the formula in Definition \ref{def:schur}, each edge $\{u,v\}$ for $u\in S_1$ is mapped to a set of edges $\{u,w\}$ with $c_{uw} = c_{uv}^Jc_{vw}^J/c_v^J$, where $c_v^J$ is the total conductance of edges incident with $v$ in $J$. In particular, summing over all $w$ and using the fact that

$$\sum_{w\in S_0'} c_{vw}^J\le c_v^J$$

shows that the sum of the conductances of the edges created when $v$ is eliminated is less than the original $uv$ conductance. This is the desired result.
\end{proof}

\lemmonotonecond*

\begin{proof}
By Proposition \ref{prop:lin-alg-total-cond}, $1/c^I(S_0',S_1) = b_{s_0's_1}^T L_{I/(S_0',S_1)}^+ b_{s_0's_1}$ and $1/c^I(S_0,S_1) = b_{s_0s_1}^T L_{I/(S_0,S_1)}^+ b_{s_0s_1}$. $I/(S_0,S_1)$ can be obtained from $I/(S_0',S_1)$ by identifying vertices. This only reduces the quadratic form by Rayleigh monotonicity, as desired.
\end{proof}

\lempcond*

\begin{proof}
Let $J = I/(C,V(I)\setminus S_C)$, with $C$ and $V(I)\setminus S_C$ identified to $s$ and $t$ respectively. Let $J'$ be the graph obtained by subdividing every edge that crosses the $p$ normalized $st$-potential threshold in $J$, identifying the vertices created through subdivision, and making $J'$ the induced subgraph on vertices with normalized $st$-potential at most $p$. The $st$ electrical flow on $J$ restricted to $J'$ is also an electrical flow in $J'$ with energy exactly $p (b_{st}^T L_J^+ b_{st})$. $S_{C,V(I)\setminus S_C}(p,C)$ contains all edges of $J'$, so by Proposition \ref{prop:lin-alg-total-cond} and Rayleigh monotonicity,

$$c^I(C,V(I)\setminus S_{C,V(I)\setminus S_C}(p,C))\le \frac{1}{p b_{st}^T L_J^+ b_{st}}$$

By \ref{prop:lin-alg-total-cond} again,

$$b_{st}^T L_J^+ b_{st} = \frac{1}{c^I(C,V(I)\setminus S_C)}$$

Substitution yields the desired result.
\end{proof}

\subsection{Carving clusters from shortcutters does not increase conductance much}

\begin{lemma}\label{lem:outside-schur}
Consider a graph $I$ and three disjoint sets $S_0,S_1,S_2\subseteq V(I)$ with $S_2' \subseteq S_2$. Let $J = \texttt{Schur}(I,S_0\cup S_1\cup S_2)$ and $J' = \texttt{Schur}(I,S_0\cup S_1\cup S_2')$. Then

$$c^J(E_J(S_0,S_1))\le c^{J'}(E_{J'}(S_0,S_1))$$
\end{lemma}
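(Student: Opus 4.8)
The plan is to prove the statement by a reduction to the case where $S_2 \setminus S_2'$ is a single vertex, and then to understand the effect of eliminating one extra vertex on the direct $S_0$--$S_1$ conductance. First I would invoke the associativity of Schur complements (Remark~\ref{rmk:schur}) to write $J' = \texttt{Schur}(J, S_0 \cup S_1 \cup S_2')$, so that passing from $J$ to $J'$ amounts to eliminating the vertices of $S_2 \setminus S_2'$ one at a time from $J$. By induction it therefore suffices to handle the elimination of a single vertex $v \in S_2$; write $K := \texttt{Schur}(J, V(J) \setminus \{v\})$ and show $c^J(E_J(S_0,S_1)) \le c^K(E_K(S_0,S_1))$.

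For the single-vertex step, I would use the explicit star-to-clique formula from Definition~\ref{def:schur}: eliminating $v$ leaves all edges not incident to $v$ untouched and, for each pair of neighbors $u,w$ of $v$, adds an edge of conductance $c_{uv}^J c_{vw}^J / c_v^J$, where $c_v^J$ is the total conductance at $v$. In particular, every edge of $E_J(S_0,S_1)$ that is not incident to $v$ survives into $K$ with unchanged conductance. The edges of $E_J(S_0,S_1)$ \emph{incident} to $v$ are impossible, since $v \notin S_0 \cup S_1$ (as $S_0, S_1, S_2$ are disjoint and $v \in S_2$), so in fact $E_J(S_0,S_1) \subseteq E_K(S_0,S_1)$ with matching conductances on the old edges, and $E_K(S_0,S_1)$ may additionally contain new edges created by the elimination. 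Hence $c^J(E_J(S_0,S_1)) \le c^K(E_K(S_0,S_1))$, completing the induction and the proof.

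This argument is essentially the same bookkeeping already carried out in the proof of Lemma~\ref{lem:monotone-schur} in the excerpt (there, the set being shrunk is $S_0$ rather than $S_2$), so I expect no genuine obstacle; the only point requiring a word of care is confirming that no edge of $E_J(S_0,S_1)$ can be incident to an eliminated vertex, which follows immediately from the disjointness hypothesis. One could also note the cleaner alternative: $c^J(E_J(S_0,S_1))$ and $c^K(E_K(S_0,S_1))$ are total conductances of all direct $S_0$--$S_1$ edges in two Schur complements, and since eliminating an additional vertex never \emph{removes} a direct $S_0$--$S_1$ edge (it only possibly adds some), the inequality is monotone in the set of eliminated vertices. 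I would present the star-clique version since it is self-contained and matches the style of Lemma~\ref{lem:monotone-schur}.
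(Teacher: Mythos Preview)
Your proposal is correct and follows essentially the same approach as the paper: both reduce to eliminating a single vertex $v \in S_2 \setminus S_2'$ via associativity of Schur complements (Remark~\ref{rmk:schur}), then use the explicit elimination formula from Definition~\ref{def:schur} to see that conductances between $S_0$ and $S_1$ can only increase. The paper phrases the single-vertex step in matrix form (off-diagonal entries of $L_{J'}$ are at least as negative as those of $L_J$), while you phrase it combinatorially via star-to-clique, but these are the same observation.
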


\begin{proof}
It suffices to show this result when $S_2 = S_2'\cup \{v\}$ for some vertex $v\in V(I)$, since one can eliminate the vertices of $S_2$ one at a time to get to $S_2'$. In this case, $J' = \texttt{Schur}(J,V(J)\setminus \{v\})$ by Remark \ref{rmk:schur}. By Definition \ref{def:schur},

$$L_{J'} = L_J[V(J)\setminus \{v\},V(J)\setminus \{v\}] - L_J[V(J)\setminus \{v\},\{v\}]1/c_v^J L_J[\{v\},V(J)\setminus \{v\}]$$

where $c_v^J$ is the total conductance of edges incident with $v$ in the graph $J$. $L_J[V(J)\setminus \{v\},\{v\}]$ only consists of nonpositive numbers, so conductances in $J'$ are only larger than they are in $J$. In particular, summing over all edges in $E_J(S_0,S_1)$ shows that

$$c^J(E_J(S_0,S_1))\le c^{J'}(E_{J'}(S_0,S_1))$$

as desired.
\end{proof}

\lemsmallholes*

\begin{proof}
Let $H_0 = \texttt{Schur}(H,C\cup C'\cup (V(H)\setminus S_C))$. By Lemma \ref{lem:well-sep} and Proposition \ref{prop:lin-alg-total-cond},

$$c^H(C,C') \le \frac{1}{(\beta_1 - 4)R}$$

By Lemma \ref{lem:outside-schur} with $I\gets H$, $S_0\gets C$, $S_1\gets C'$, $S_2' \gets \emptyset$, and $S_2\gets V(H)\setminus S_C$,

$$c^{H_0}(E_{H_0}(C,C'))\le c^H(C,C')$$

By Lemma \ref{lem:outside-schur} with $I\gets H$, $S_0\gets C$, $S_1\gets V(H)\setminus S_C$, $S_2' \gets \emptyset$ and $S_2\gets C'$

$$c^{H_0}(E_{H_0}(C,V(H)\setminus S_C))\le c^H(S_C)$$

Therefore,

\begin{align*}
c^H(S_C\setminus C') &= c^H(C,C'\cup (V(H)\setminus S_C))\\
&= c^{H_0}(E_{H_0}(C,C')) + c^{H_0}(E_{H_0}(C,V(H)\setminus S_C))\\
&\le \frac{1}{(\beta_1 - 4)R} + c^H(S_C)\\
\end{align*}

as desired.
\end{proof}

\section{Deferred proofs for Section \ref{sec:compute-shortcutters}}

\subsection{Efficient construction of covering communities ($\CoveringCommunity$ implementation)}\label{sec:covering-community}

We now prove Lemma \ref{lem:covering-community}. Our construction is similar to the construction of sparse covers given in \cite{ABCP99} and is made efficient for low-dimensional $\ell_2^2$ metrics using locality-sensitive hashing (\cite{AI06}). Our use of locality-sensitive hashing is inspired by \cite{HIS13}.

The algorithm maintains a set of uncovered vertices $S$ and builds families of clusters one at a time. We build each family by initializing a set $S'\gets S$ and repeatedly building clusters one-by-one. Each cluster is built by picking a vertex in $S'$ and building a effective resistance ball around it. We repeatedly consider growing the effective resistance radius of the ball by a factor of $\gamma$ if it could contain a much larger number of edges (an $m^{1/z_0}$ factor). This neighborhood can be generated efficiently (in time comparable to its size) using locality-sensitive hashing. After generating this neighborhood, one can grow it slightly in a way that decreases its boundary size using ball-growing (for example \cite{LR99}).

Before giving the $\CoveringCommunity$ algorithm, we start by giving the ball-growing algorithm $\BallGrow$. Ideally, $\BallGrow$ would do standard ball-growing done by sorting vertices $x\in X$ with respect to the values $||D(x_0) - D(x)||_2^2$. Unfortunately, $||D(x) - D(y)||_2^2$ is not necessarily a metric on $X$, despite the fact that it approximates the metric $\texttt{Reff}(x,y)$. Luckily, though, we only need to preserve distances to some vertex $x_0$ along with distances between endpoints of edges. This can be done by modifying $D$ slightly, as described in the first line of $\BallGrow$:

\begin{algorithm}[H]
\DontPrintSemicolon
\SetAlgoLined

    $J\gets $ the graph with $V(J) = V(I)$ and $E(J) = E_I(X)\cup$ edges from $x_0$ to each edge in $X$\;

    $d_J\gets $ the shortest path metric of $J$ with $\{x,y\}$ edge weight $||D(x) - D(y)||_2^2$\;

    $x_0,x_1,\hdots,x_k\gets $ vertices in $X$ in increasing order of $d_J(x_0,x_i)$ value\;

    $i^*\gets $ the value for which $c^I(\partial \{x_0,x_1,\hdots,x_{i^*}\})$ is minimized subject to the constraint that $d_J(x_0,x_i)\in (R_1,R_2]$ or $d_J(x_0,x_{i+1})\in (R_1,R_2]$\;

    \Return{ $\{x_0,x_1,\hdots,x_{i^*}\}$}

\caption{$\BallGrow_D(x_0,X,I,R_1,R_2)$}
\end{algorithm}

\begin{proposition}\label{prop:ball-grow}
$\BallGrow_D(x_0,X,I,R_1,R_2)$, given the Johnson-Lindenstrauss embedding $D$ of the effective resistance metric of $I$ with $\ep = 1/2$, returns a cluster $C$ with the following properties:

\begin{itemize}
\item (Subset of input) $C\subseteq X$.
\item (Large enough) $C$ contains the subset of $X$ with effective resistance distance $2R_1/3$ of $x_0$.
\item (Not too large) The $I$-effective resistance diameter of $C$ is at most $4R_2$.
\item (Boundary size) $c^I(\partial C)\le \frac{2|E_I(X)|}{R_2 - R_1} + c^I(\partial C\cap \partial X)$.
\end{itemize}

Furthermore, it does so in $\tilde{O}(|E_I(X)\cup \partial X|)$ time.
\end{proposition}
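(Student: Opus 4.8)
The plan is to analyze $\BallGrow_D$ as a standard ball-growing argument, but with the metric $d_J$ replacing the (non-metric) squared-distance embedding $\|D(\cdot)-D(\cdot)\|_2^2$. The key observation is that $d_J$ is a genuine metric (it is a shortest-path metric by construction), and along the edges of $J$ it agrees with $\|D(\cdot)-D(\cdot)\|_2^2$, which is a factor-$2$ approximation to $\texttt{Reff}_I$. Since $J$ contains all of $E_I(X)$ plus an auxiliary star from $x_0$, the distances $d_J(x_0,x)$ are $2$-approximations to $\texttt{Reff}_I(x_0,x)$ (using that any $I$-path inside $X$ has its $\texttt{Reff}$-length upper-bounded by the sum of edge resistances, hence its $d_J$-length is within a factor $2$, and conversely $d_J$ can only be smaller), and for an edge $\{x,y\}\in E_I(X)$ we have $d_J(x_0,y)\le d_J(x_0,x)+2\texttt{Reff}_I(x,y)$, so the sorted order is ``metrically coherent'' in the sense needed for ball-growing.

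First I would establish the four output properties one at a time. For ``Subset of input,'' the returned set is $\{x_0,\dots,x_{i^*}\}\subseteq X$ by construction. For ``Large enough,'' if $\texttt{Reff}_I(x_0,x)\le 2R_1/3$ then $d_J(x_0,x)\le 2\cdot(2R_1/3) = 4R_1/3$... here I need to be slightly careful about constants; the intended argument is that the Johnson–Lindenstrauss distortion plus the $R_1$-vs-$R_2$ slack forces any such vertex to appear before index $i^*$, because the threshold interval $(R_1,R_2]$ is crossed at or after it. For ``Not too large,'' every vertex in $C$ has $d_J(x_0,x)\le R_2$ (the cut is chosen with $d_J(x_0,x_{i^*})$ or $d_J(x_0,x_{i^*+1})$ in $(R_1,R_2]$, and all earlier vertices have smaller $d_J$-value), hence $\texttt{Reff}_I(x_0,x)\le 2R_2$ by the JL bound, and then the triangle inequality for $\texttt{Reff}_I$ gives diameter at most $4R_2$. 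For ``Boundary size,'' I would use the classic averaging argument: the total conductance of all edges of $J$ with one endpoint having $d_J$-value in $(R_1,R_2]$, weighted appropriately, telescopes; more precisely, $\sum_{i:\,d_J(x_0,x_i)\in(R_1,R_2]} c^I(\partial\{x_0,\dots,x_i\}\setminus\partial X)$ is bounded in terms of $|E_I(X)|$ because each internal edge contributes to the boundary of at most one ``level set'' as the radius sweeps from $R_1$ to $R_2$, and there are effectively $(R_2-R_1)$ worth of distinct radii (scaled by the edge resistances $\ge 1$ after normalization); taking the minimizing $i^*$ over that interval yields the $\frac{2|E_I(X)|}{R_2-R_1}$ bound, and boundary edges of $X$ itself contribute the additive $c^I(\partial C\cap\partial X)$ term.

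For the runtime, the point is that $J$ has $O(|E_I(X)\cup\partial X|)$ edges and vertices (the star from $x_0$ adds one vertex per edge of $X$, and the embedding coordinates $D(x)$ are $O((\log n)/\ep^2)=\tilde O(1)$-dimensional by Theorem \ref{thm:jl}), so Dijkstra on $J$ with the edge weights $\|D(x)-D(y)\|_2^2$ runs in $\tilde O(|E_I(X)\cup\partial X|)$ time, sorting is $\tilde O(|E_I(X)\cup\partial X|)$, and scanning the sorted list while maintaining $c^I(\partial\{x_0,\dots,x_i\})$ incrementally (each edge enters and leaves the boundary a bounded number of times) is linear. I expect the main obstacle to be the ``Boundary size'' step: getting the constant $2$ and the precise form $\frac{2|E_I(X)|}{R_2-R_1}+c^I(\partial C\cap\partial X)$ right requires care in handling the discrepancy between the metric $d_J$ used for the sweep and the true effective-resistance metric, and in ensuring that the edge-resistance normalization (all $r_e\ge 1$, or rather the scaling implicit in $r_{min}$) makes the ``number of distinct radii in $(R_1,R_2]$'' at least $(R_2-R_1)$ so the averaging produces the claimed bound; the ``Large enough'' constant $2R_1/3$ similarly has to absorb the factor-$2$ JL distortion, and I would double-check that $2R_1/3$ (rather than something closer to $R_1/2$) is indeed what the distortion budget allows.
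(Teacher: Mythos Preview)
Your overall approach matches the paper's: use the shortest-path metric $d_J$ on the auxiliary graph $J$ as a surrogate for effective resistance, sort by $d_J(x_0,\cdot)$, and do a ball-growing sweep. The first three properties go through essentially as you describe, with two caveats. First, the JL guarantee with $\ep=1/2$ gives $\|D(u)-D(v)\|_2^2\in[\tfrac12\texttt{Reff}(u,v),\tfrac32\texttt{Reff}(u,v)]$, not a symmetric factor-$2$ distortion; the direct star edge then yields $d_J(x_0,x)\le \tfrac32\texttt{Reff}_I(x_0,x)$, which is exactly why $2R_1/3$ is the right constant in ``Large enough.'' Second, for ``Not too large'' you need $\texttt{Reff}_I(x_0,x)\le 2d_J(x_0,x)$, and this does \emph{not} follow ``by the JL bound'' on the single pair $(x_0,x)$, since $d_J$ is a shortest-path distance and could be much smaller than $\|D(x_0)-D(x)\|_2^2$. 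The paper takes an arbitrary $J$-path $x_0=y_0,\dots,y_k=x$, applies the JL lower bound edge-by-edge to get $\sum_i\|D(y_i)-D(y_{i+1})\|_2^2\ge\tfrac12\sum_i\texttt{Reff}_I(y_i,y_{i+1})\ge\tfrac12\texttt{Reff}_I(x_0,x)$ (triangle inequality for $\texttt{Reff}$), and infimizes over paths.

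The real gap is in ``Boundary size.'' The claim that ``each internal edge contributes to the boundary of at most one level set'' is false: an edge $\{x,y\}$ with $d_J(x_0,x)<d_J(x_0,y)$ lies on the boundary of \emph{every} prefix $\{x_0,\dots,x_i\}$ between the inclusion of $x$ and that of $y$. There is also no ``edge resistances $\ge 1$'' normalization anywhere in the statement or proof; resistances are arbitrary. The paper's argument is a genuine averaging: it sets
\[
Q\;=\;\sum_{e=\{x,y\}\in E_I(X)} c_e^I\bigl(d_J(x_0,y)-d_J(x_0,x)\bigr),
\]
shows $(R_2-R_1)\cdot\min_i c^I\bigl(\partial\{x_0,\dots,x_i\}\cap E_I(X)\bigr)\le Q$ by summing clamped $d_J$-gaps, and then bounds
\[
Q\;\le\;\sum_{e=\{x,y\}\in E_I(X)} c_e^I\,d_J(x,y)\;\le\;\sum_{e\in E_I(X)} c_e^I\cdot\tfrac32\,\texttt{Reff}_I(e)\;\le\;2|E_I(X)|,
\]
using the triangle inequality for $d_J$, the JL upper bound on the edge $e\in E(J)$, and finally that every leverage score $c_e^I\,\texttt{Reff}_I(e)$ is at most $1$. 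That last inequality is the key ingredient missing from your sketch; without it there is no reason the conductance-weighted $d_J$-gaps should sum to anything comparable to $|E_I(X)|$.
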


\begin{proof}
\textbf{Subset of input.} The $x_i$s are all in $X$ and $C$ only consists of $x_i$s.

\textbf{Large enough.} By the lower bound constraint on $i^*$, $C$ contains all vertices $x\in X$ with $d_J(x_0,x)\le R_1$. Since the edge $\{x_0,x\}\in E(J)$, $d_J(x_0,x)\le ||D(x_0) - D(x)||_2^2\le (3/2)\texttt{Reff}_I(x_0,x)$ by the upper bound of Theorem \ref{thm:jl}. For any vertex $x$ with effective resistance distance at most $2R_1/3$ from $x_0$, $d_J(x_0,x)\le R_1$. Therefore, any such $x$ is in $C$, as desired.

\textbf{Not too large.} Consider any vertex $x\in C$. By the upper bound constraint on $i^*$, $d_J(x_0,x)\le R_2$. By the triangle inequality for the effective resistance metric, to get a diameter bound of $4R_2$ on $C$, it suffices to show that for any $x\in X$, $\texttt{Reff}_I(x_0,x) \le 2d_J(x_0,x)$.

Consider any path $\{y_0 = x_0,y_1,y_2,\hdots,y_k = x\}$ in $J$. The length of this path is $\sum_{i=0}^{k-1} ||D(y_i) - D(y_{i+1})||_2^2$ by definition of $J$. By the lower bound of Theorem \ref{thm:jl},

$$\sum_{i=0}^{k-1} \texttt{Reff}_I(y_i,y_{i+1})/2 \le \sum_{i=0}^{k-1} ||D(y_i) - D(y_{i+1})||_2^2$$

By the triangle inequality for the effective resistance metric,

$$\texttt{Reff}_I(x_0,x)/2 = \texttt{Reff}_I(y_0,y_k)/2\le \sum_{i=0}^{k-1} \texttt{Reff}_I(y_i,y_{i+1})/2$$

so all paths from $x_0$ to $x$ in $J$ have length at least $\texttt{Reff}_I(x_0,x)/2$. Therefore, $\texttt{Reff}_I(x_0,x)/2 \le d_J(x_0,x)$, as desired.

\textbf{Boundary size.} It suffices to bound the conductance of edges in $\partial C$ that have both endpoints in $X$. Consider the quantity

$$Q = \sum_{e = \{x,y\}\in E_I(X): d_J(x_0,x) \le d_J(x_0,y)} c_e^I (d_J(x_0,y) - d_J(x_0,x))$$

with $x$ closer to $x_0$ than $y$ in $d_J$-distance. We start by showing that there is a $d_J$ distance threshold cut with conductance at most $\frac{Q}{R_2 - R_1}$. For any number $a\in \mathbb{R}$, let $\Clamp(a) = \max(\min(a,R_2),R_1)$. Notice that

\begin{align*}
&(R_2 - R_1)\min_{i: d_J(x_0,x_i)\in (R_1,R_2] \text{ or } d_J(x_0,x_{i+1})\in (R_1,R_2]} \sum_{e\in E_I(X)\cap \partial\{x_0,x_1,\hdots,x_i\}} c_e^I\\
&\le\sum_{i=0}^{k-1} (\Clamp(d_J(x_0,x_{i+1})) - \Clamp(d_J(x_0,x_i)))\sum_{e\in E_I(X)\cap \partial\{x_0,x_1,\hdots,x_i\}} c_e^I\\
&= \sum_{e = \{x,y\}\in E_I(X): d_J(x_0,x) \le d_J(x_0,y)} c_e^I(\Clamp(d_J(x_0,y)) - \Clamp(d_J(x_0,x)))\\
&\le Q
\end{align*}

Dividing both sides by $(R_2 - R_1)$ shows that the minimizing cut (for defining $i^*$) has total conductance at most $Q/(R_2 - R_1)$. Now, we upper bound $Q$. By the upper bound of Theorem \ref{thm:jl} and the triangle inequality for $d_J$,

\begin{align*}
Q &\le \sum_{e = \{x,y\}\in E_I(X)} c_e^I d_J(x,y)\\
&\le \sum_{e = \{x,y\}\in E_I(X)} c_e^I ||D(x) - D(y)||_2^2\\
&\le \sum_{e \in E_I(X)} c_e^I (3/2)\texttt{Reff}_I(e)\\
&\le 2|E_I(X)|\\
\end{align*}

This yields the desired conductance bound.

\textbf{Runtime.} Only $|X|-1$ edges are added to $I$ to make $J$. It only takes $\tilde{O}(|E_I(X)\cup \partial X|)$ time to compute all of the distances to $x_0$, which are the only ones that are queried. Finding $i^*$ only takes one linear sweep over the $x_i$s, which takes $\tilde{O}(|E_I(X)\cup \partial X|)$ time.
\end{proof}

Now, we implement $\CoveringCommunity$:

\begin{algorithm}[H]
\DontPrintSemicolon
\SetAlgoLined

    \tcp{the community that we output}
    $\mc F\gets \emptyset$\; 

    \tcp{the set of uncovered vertices}
    $S\gets X$\;

    \While{$S\neq\emptyset$}{

        \tcp{set to form $\mc F_i$s from}
        $S'\gets S$\;

        \For{$i = 0,\hdots,2z_0$}{

            $\mc F_i\gets \emptyset$\;

            $\mc{H}_i\gets (R\gamma^i,R\gamma^{i+1},1/n^{1/\gamma},1/n)$-sensitive hash family for the $\ell_2^2$ metric $||D(x) - D(y)||_2^2$\;

            $\mc{H}_i'\gets $ a sample of $(\log n)n^{1/\gamma}$ hash functions from $\mc{H}_i$\;

            bucket all vertices by hash values for each function in $\mc H_i'$

        }

        \While{$S'\neq\emptyset$}{

            $C\gets \{$ arbitrary vertex $v_0$ in $S'\}$\;

            $C'\gets X$\;

            $C''\gets X$\;

            $i\gets 0$\;

            \While{$|E_I(C')\cup \partial C'| > m^{1/z_0}|E_I(C)\cup \partial C|$}{

                \tcp{update $C'$ to be the set of nearby vertices to $C$}

                $C'\gets $ the set of vertices $v\in X$ with $h(v) = h(v_0)$ for some $h\in \mc{H}_{i+1}'$\; \label{line:hash-lookup}

                $C''\gets $ subset of $C'$ within $D$-distance $(2R/3)\gamma^{i+1}$ of $v_0$ found using a scan of $C'$\;

                $C\gets \BallGrow_D(v_0,C'',I,(3/2)R\gamma^i,2R\gamma^i)$\;

                $i\gets i+1$\;

            }

            $\mc F_i\gets \mc F_i\cup \{C\}$\; \label{line:add-cover}

            $S'\gets S'\setminus C'$\; \label{line:disjoint}

            $S\gets S\setminus C$\; \label{line:sub-cover}

        }

        add all $\mc F_i$s to $\mc C$\;

    }

    \Return $\mc{C}$\;
\caption{$\CoveringCommunity_D(X,I,R)$}
\end{algorithm}

\begin{figure}
\includegraphics[width=1.0\textwidth]{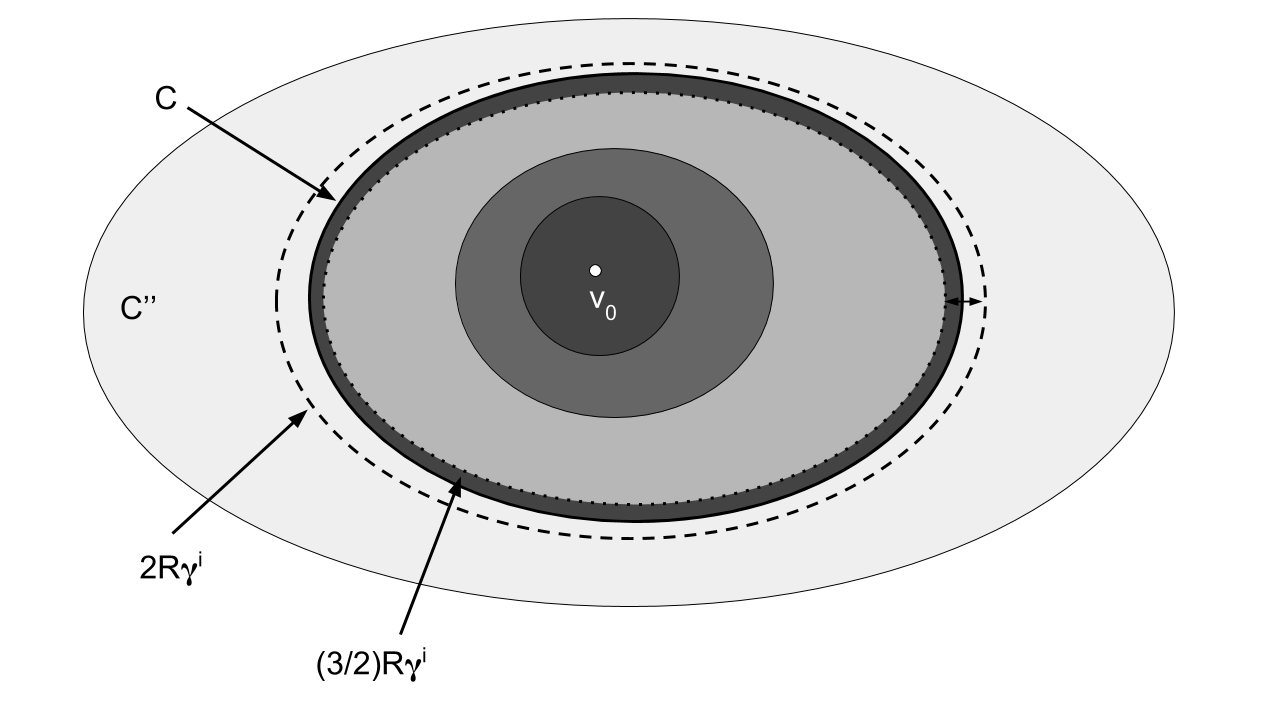}
\caption{How $\CoveringCommunity$ constructs one core. When clustering stops at some $C''$, a ball $C$ with small boundary between the $(3/2)R\gamma^i$ and $2R$ radii balls centered around $C_0$ is found using $\BallGrow$. $C''$ is removed from the set $S'$ of candidates that are well-separated from $v_0$ and $C$ is removed from the set of vertices that need to be covered.}
\label{fig:covering-community}
\end{figure}

\lemcoveringcommunity*

\begin{proof}[Proof of Lemma \ref{lem:covering-community}]

\textbf{$\murad R$-community.} By the ``Not too large'' condition of Proposition \ref{prop:ball-grow}, each cluster added to some family has diameter at most $8R\gamma^{i_{max}}\le \murad R$, where $i_{max}$ is the maximum value of $i$ over the course of the algorithm. To show that $\mc C$ is an $\murad R$-community, we therefore just have to show that $$i_{max}\le 2z_0$$, where $z_0 = (1/10)\log_{\gamma} \murad$.

To do this, we show that the innermost while loop executes at most $2z_0$ times. Let $C_i,C_i',$ and $C_i''$ be the values of $C,C'$, and $C''$ set for the value of $i$ in the innermost while loop. By the second (high distance) condition of locality-sensitive families, $C_i'$ only consists of vertices in $X$ with $D$-distance at most $R\gamma^{i+2}$ from $v_0$ with high probability. By the first (low distance) condition of locality-sensitive families, $C_i''$ contains all of the vertices in $X$ with $D$-distance at most $(2R/3)\gamma^{i+1}$ from $v_0$. Therefore, by the ``Large enough'' condition of Proposition \ref{prop:ball-grow}, $C_i$ consists of all vertices in $X$ with $D$-distance at most $R\gamma^i$ from $v_0$. Therefore, $C_i'\subseteq C_{i+2}$ for all $i$. By the inner while loop condition, $|E(C_{i+2})\cup \partial C_{i+2}|\ge m^{1/z_0}|E(C_i)\cup \partial C_i|$ for all $i$ for which iteration $i+2$ occurs. But $|E(C_{i+2})\cup \partial C_{i+2}|\le n$ for all $i$. This means that $i_{max}\le 2z_0$, as desired.

\textbf{Input constraint.} Each cluster consists of vertices in $S$, which is initialized to $X$ and only decreases in size. Therefore, $\mc C$ is $X$-constrained at the end of the algorithm.

\textbf{Covering.} Vertices are only removed from $S$ in Line \ref{line:sub-cover}. When they are removed from $S$, they are added to some cluster in $\mc F$ in Line \ref{line:add-cover}. The algorithm only terminates when $S$ is empty. Therefore, if $\CoveringCommunity$ terminates, it outputs a covering community.

\textbf{Boundedness.} Let $C$ be a cluster and consider the $C'$ and $C''$ used to generate it. By the ``Boundary size'' condition of Proposition \ref{prop:ball-grow},

$$c^I(\partial C)\le \frac{4|E_I(C'')|}{R\gamma^i} + c^I(\partial C\cap \partial C'')$$

Now, we bound the conductance of the set $\partial C\cap \partial C''$. By the first (upper bound) condition of locality-sensitive families, $C''$ contains all vertices in $X$ within $(4R/9)\gamma^{i+1}\ge 16 R\gamma^i$ $I$-effective resistance distance of $v_0$. By the ``Not too large'' condition of Proposition \ref{prop:ball-grow}, $C$ contains vertices with $I$-effective resistance distance $8R\gamma^i$ of $v_0$. Therefore, by the triangle inequality for effective resistance, each edge in $\partial C\cap \partial C''$ with both endpoints in $X$ has conductance at most $1/(8R\gamma^i)$. This means that

$$c^I(\partial C\cap \partial C'')\le \frac{|\partial C''|}{8R\gamma^i} + c^I(\partial C\cap \partial X)$$

The $C''$s for clusters $C$ in the same family are disjoint by Line \ref{line:disjoint}. Therefore, the total boundary size of all clusters $C$ in one family $\mc F$ is at most

$$\sum_{C\in \mc F} \left(\frac{4|E_I(C'')|}{R} + \frac{4|\partial C''|}{R} + c^I(\partial X\cap \partial C)\right) = c^I(\partial X) + \frac{4|E_I(X)\cup \partial X|}{R}$$

since $\gamma > 1$. This is the desired result.

\textbf{Well-separatedness.} By the ``Not too large'' condition of Proposition \ref{prop:ball-grow} and Theorem \ref{thm:jl}, $C'$ contains all vertices with $I$-effective resistance distance at most $(2R/3)\gamma^{i+1}$ from $v_0$. The corresponding cluster $C$ has $I$-effective resistance diameter at most $8R\gamma^i$. Therefore, if $C$ is added to a family $\mc F_{i+1}$, it is $\gamma/12$-well-separated from any cluster added to $\mc F_{i+1}$ in the future.

Since $\mc F_{i+1}$ only consists of clusters that were added to $\mc F_{i+1}$ with separation at least $(2R/3)\gamma^{i+1}$ from the remaining vertices in $S'$, each cluster in $\mc F_{i+1}$ is also $\gamma/12$-well-separated from any cluster added to $\mc F_{i+1}$ in the past. Therefore, each $F_i$ is a well-separated family.

\textbf{Number of families.} When a cluster $C$ is added to a family $\mc F_i$, the number of edges incident with $S'$ decreases by at most $|E_I(C')\cup \partial C'|\le m^{1/z_0}|E_I(C)\cup \partial C|$. The number of edges incident with $S$ decreases by at least $|E_I(C)\cup \partial C|$. Therefore, when $S'$ is empty, $S$ must have had at least $m^{1 - 1/z_0}$ incident edges removed. Therefore, the outer while loop can only execute $m^{1/z_0}$ times. Each outer while loop iteration adds $2z_0$ families to $\mc C$, so the total number of families is at most $2z_0m^{1/z_0}$, as desired.

\textbf{Runtime.} The innermost while loop runs for $2z_0$ iterations. Line \ref{line:hash-lookup} takes time proportional to the number of elements in the bin $h(v_0)$ for the $n^{1/\gamma} = m^{o(1)}$ hash functions in $\mc H_{i+1}'$. The runtime of the innermost while loop is therefore at most $O(n^{1/\gamma}z_0 |E_I(C')\cup \partial C'|)$ for the final $C'$ that it outputs by the ``Runtime'' condition of Proposition \ref{prop:ball-grow}. This runtime can be charged to the removal of $C'$ (and its incident edges) from $S'$. Therefore, each middle while loop takes $O(n^{1/\gamma} z_0 |E_I(X)\cup \partial X|)$ time. As described in the ``Number of families'' condition, the outermost while loop only executes $2z_0 m^{1/z_0}$ times. Therefore, the total runtime is $O(m^{1/z_0}n^{1/\gamma} z_0^2 |E_I(X)\cup \partial X|)\le |E_I(X)\cup \partial X| m^{o(1)}$, as desired.
\end{proof}

\subsection{Efficient construction of Voronoi diagrams ($\Voronoi$ implementation)}\label{sec:voronoi}

We now prove Lemma \ref{lem:voronoi}. Ideally, the algorithm would just compute the set of vertices from which a random walk has a probability of $7/8$ of hitting one cluster in $\mc F$ before any other. While this satisfies the correctness constraints, doing this would take too long. Computing the set with probability $7/8$ of hitting one cluster before another would take a Laplacian solve on $Z$ for each cluster. This is prohibitive, as we have no bound on the number of clusters in $\mc F$. This algorithm does not take advantage of slack on the lower bound for $S_C$.

Instead, we arbitrarily split the clusters into two groups $(X_i,Y_i)$ in $\log n$ different ways, generate $S_{\{X_i,Y_i\}}(1/(8\log n),X_i)$ and $S_{\{X_i,Y_i\}}(1/(8\log n),Y_i)$, and refine the resulting clusters. This only requires $O(\log n)$ Laplacian solves on $Z$. The lower bound on $S_C$ follows immediately from implication. The upper bound on $S_C$ follows from thinking in terms of random walks. The event in which a random walk hitting $C$ before any other cluster in $\mc F$ is equivalent to the conjunction of the $\log n$ events in which the random walk hits $C$'s half in each partition before the other half. The probability that any of these events can bound bounded satisfactorially with a union bound.

\begin{algorithm}[H]
    \SetAlgoLined
    \DontPrintSemicolon

    $C_1,C_2,\hdots,C_k\gets $ arbitrary ordering of clusters in $\mc{F}$\;

    $S_{C_1},\hdots,S_{C_k}\gets V(I)$\;

    \For{$i=0,1,\hdots,\log k$}{
        $X_i\gets $ the union of clusters $C_j\in \mc F$ with $i$th index digit 0 in binary\;

        $Y_i\gets \{C_1,\hdots,C_k\}\backslash X_i$\;

        \For{$j=1,\hdots,k$}{

            $Z_{ij}\gets $ the member of $\{X_i,Y_i\}$ that contains $C_j$;

            $S_{C_j}\gets S_{C_j}\cap S_{\{X_i,Y_i\}}(1/(8\log k),Z_{ij})$\;
        }
    }

    \Return $\{S_{C_1},\hdots,S_{C_k}\}$\;

\caption{$\Voronoi(I,\mc F)$}
\end{algorithm}

\begin{figure}
\includegraphics[width=1.0\textwidth]{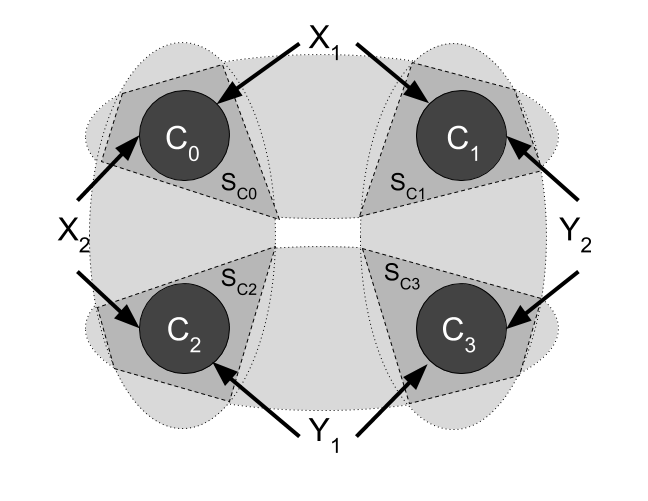}
\caption{How $\Voronoi$ constructs shortcutters. The horizontal and vertical dotted boundary sets are the $S_{\{X_1,Y_1\}}(1/(8\log k),Z_{i1})$ and $S_{\{X_2,Y_2\}}(1/(8\log k),Z_{i2})$ respectively.}
\label{fig:voronoi}
\end{figure}

\lemvoronoi*

\begin{proof}[Proof of Lemma \ref{lem:voronoi}]

\textbf{Lower bound.} We show that $S_{\mc F}(1/(8\log k),C)\subseteq S_C$ for each $C\in \mc F$. Consider a vertex $x\in S_{\mc F}(1/(8\log k),C)$. This vertex has the property that $\Pr_x[t_C > t_{\mc F\setminus C}] \le (1/8\log k)$. Let $C_j := C$. Since $C\subseteq Z_{ij}$, $x\in S_{\{X_i,Y_i\}}(1/(8\log k),Z_{ij})$ for all $i$. Therefore, it is in the intersection of them as well. Since $S_C$ is this intersection, $x\in S_C$, as desired.

\textbf{Upper bound.} We show that $S_C\subseteq S_{\mc F}(1/8,C)$. Consider a vertex $x\in S_C$. Let $C_j := C$. Since some $\{X_i,Y_i\}$ partition separates $C$ from each other cluster in $\mc F$,

$$\Pr_x[t_C > t_{\mc F\setminus C}] = \Pr_x[\exists i \text{ } t_{Z_{ij}} > t_{\mc F\setminus Z_{ij}}]$$

By a union bound,

$$\Pr_x[\exists i \text{ } t_{Z_{ij}} > t_{\mc F\setminus Z_{ij}}]\le \sum_{i=1}^{\log k} \Pr_x[t_{Z_{ij}} > t_{\mc F\setminus Z_{ij}}]\le \frac{\log k}{8\log k}\le 1/8$$

Therefore, $x\in S_{\mc F}(1/8,C)$, as desired.

\textbf{Runtime.} Each $S_{\{X_i,Y_i\}}(1/(8\log k),Z_{ij})$ can be computed in near-linear time in the number of edges incident with $Z$ by Remark \ref{rmk:level-set-lsolve}. Since there are only $\log k$ $i$s and only 2 possible $Z_{ij}$s for each $i$, the algorithm takes near-linear time.
\end{proof}

\subsection{A generalization of $\sum_{e\in E(G)} \texttt{lev}_G(e) = n-1$ to edges between well-separated clusters in the effective resistance metric (Lemma \ref{lem:well-sep-lev-score})}

In this section, we prove Lemma \ref{lem:well-sep-lev-score}:

\lemwellseplevscore*

We will use the Aldous-Broder algorithm for sampling random spanning trees (see Theorem \ref{thm:aldous-broder}) to reduce proving this result to a combinatorial problem. This problem is very closely related to Davenport-Schinzel strings from computational geometry:

\begin{definition}[Davenport-Schinzel strings \cite{DS65}]
Consider a string $S = s_1s_2\hdots s_y$ from an alphabet of size $n$. $S$ is called an $(n,s)$-\emph{Davenport-Schinzel} string if the following two properties hold:

\begin{itemize}
\item $s_i\ne s_{i+1}$ for all $i$
\item There is no subsequence of $S$ of the form $ABA\hdots BA$ of length $s+2$ for two distinct letters $A$ and $B$. We call such a subsequence an \emph{alternating subsequence} with length $s+2$.
\end{itemize}
\end{definition}

Davenport-Schinzel strings are useful because they cannot be long. Here is a bound that is good for nonconstant $s$, but is not optimal for constant $s$:

\begin{theorem}[Theorem 3 and Inequality (34) of \cite{DS65}]\label{thm:davenport-schinzel}
$(n,s)$-Davenport-Schinzel strings cannot be longer than $nC_0(n,s) := n (3s)! \exp(10\sqrt{s\log s \log n})$.
\end{theorem}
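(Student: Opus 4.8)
The plan is to reprove the classical Davenport--Schinzel length bound by the standard double induction --- on the order $s$ and on the alphabet size $n$ --- driven by a divide-and-conquer on the alphabet. Write $\lambda_s(n)$ for the maximum length of an $(n,s)$-Davenport--Schinzel string. The base cases are elementary: for $s=1$ the forbidden pattern $aba$, together with the ban on $aa$, forces every symbol to occur exactly once, so $\lambda_1(n)=n$; for $s=2$ the usual argument (peel off the last maximal block and the last remaining symbol) gives $\lambda_2(n)=2n-1$. Both already lie inside the target $n\,C_0(n,s)=n\,(3s)!\,\exp(10\sqrt{s\log s\log n})$, so the real content is a recurrence that pushes $s$ and $n$ down at the same time.

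First I would fix a parameter $m$ (optimized at the end, roughly $m\approx 2^{\sqrt{\log n}}$) and split the alphabet into $m$ groups $G_1,\dots,G_m$ of size at most $\lceil n/m\rceil$. Given an extremal $(n,s)$-DS string $U$, cut it into maximal \emph{chunks}, each chunk a maximal run of symbols all in one group. Two accounting steps bound $|U|$. On one hand, for each $i$ the subsequence $U_i$ of $U$ on the symbols of $G_i$, after deleting consecutive duplicates, is an $(\lceil n/m\rceil,s)$-DS string, so $\sum_i |\tilde U_i|\le m\,\lambda_s(\lceil n/m\rceil)$. On the other hand, $|U|=\sum_i |U_i|$, and $|U|-\sum_i|\tilde U_i|$ is exactly the total number of \emph{returns}: chunk-boundary coincidences where the last symbol of a $G_i$-chunk equals the first symbol of the next $G_i$-chunk (any symbol strictly between them in $U$ must lie outside $G_i$, which is what makes them boundary events). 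The crux is to bound the number of returns by a \emph{lower-order} DS quantity: grouping consecutive chunks and contracting each to a marker recording its entry/exit symbols yields, within each $G_i$, an auxiliary string in which a long two-symbol alternation would reconstruct a forbidden alternating subsequence of $U$ --- but one loses two alternations to the ``out-and-back'' shape of a return, so the order drops from $s$ to $s-2$. This gives a bound of the form $m\,\lambda_{s-2}(\lceil n/m\rceil)+O(n)$ on the returns, hence the recurrence
$$\lambda_s(n)\;\le\;m\,\lambda_s\!\bigl(\lceil n/m\rceil\bigr)\;+\;m\,\lambda_{s-2}\!\bigl(\lceil n/m\rceil\bigr)\;+\;O(n).$$

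With the recurrence in hand the rest is bookkeeping. Unroll on $n$ in base $m$ (about $\log n/\log m$ levels, each multiplying the overhead by $m$ and feeding an order-$(s-2)$ instance), then unroll on $s$ (about $s/2$ times down to the base cases). The order induction contributes the factorial-type factor --- a product of $\sim s/2$ terms each of size $O(s)$, i.e.\ $(3s)!$-scale growth --- while balancing $m$ against the number of levels produces the subpolynomial factor $\exp(O(\sqrt{s\log s\log n}))$, and tuning the absolute constant to $10$ absorbs the slack. I expect the genuinely hard step to be the middle one: formalizing ``returns are an order-$(s-2)$ phenomenon'' with the correct drop in order and the correct count of symbols/blocks so that the recurrence closes cleanly; the base cases and the unrolling are careful but routine.
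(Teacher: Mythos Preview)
The paper does not prove this statement; it is quoted from \cite{DS65} and invoked as a black box in the proof of Lemma~\ref{lem:well-sep-lev-score}, so there is no in-paper argument to compare against.

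On your sketch: the overall plan---a recurrence that drops the order by two, then unrolled and optimized over a splitting parameter---is how Davenport--Schinzel upper bounds are obtained, but your specific decomposition does not deliver the order drop. You partition the \emph{alphabet}; the workable argument partitions the \emph{sequence} into $m$ contiguous blocks. With alphabet groups, a return in $G_i$ is a pattern $\ldots a\,X\,a\ldots$ with $X$ disjoint from $G_i$, and the auxiliary string of chunk entry/exit symbols you form is literally a subsequence of $U$---hence at best $(\cdot,s)$-DS, not $(\cdot,s{-}2)$-DS. The ``out-and-back'' intuition buys no extra $a$--$b$ alternations for $b\in G_i$, since the detour $X$ contains neither symbol. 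In the block-partition argument, the order drop comes from \emph{middle} symbols: if $a$ and $b$ each occur both before and after a given block, any length-$L$ alternation between them inside the block pads to length $L{+}2$ in $U$ by adding one occurrence on each side, which is what forces the middle portion to be $(\cdot,s{-}2)$-DS. That padding step---which needs the block to be a contiguous interval of the sequence so that ``before'' and ``after'' are meaningful---is the missing idea.
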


We now outline the proof of Lemma \ref{lem:well-sep-lev-score}. Recall that the leverage score of an edge is the probability that the edge is in a random spanning tree (see Theorem \ref{thm:edge-marginal}). Therefore, it suffices to bound the expected number of edges in a random spanning tree between clusters of $\mc F$ in $H$.

We bound this by reasoning about Aldous-Broder. Start by considering the case in which each cluster in $\mc F$ is a single vertex. Then $H$ is a graph with $k$ vertices. All spanning trees of this graph have at most $k$ edges, so the total leverage score of edges between clusters is at most $k$.

When $\mc F$ consists of (possibly large) well-separated clusters, we exploit Lemma \ref{lem:pot-bound} to show that a random walk that goes between any two clusters more than $\log_{\gammads} n$ times covers both clusters with high probability. This ensures that the sequence of destination clusters $(s_i)_i$ for edges added to the tree by Aldous-Broder satisfies the second condition of Davenport-Schnizel sequences for $s = \log_{\gammads} n = (\log n)^{1/3}$ with high probability. If two clusters did alternate more than $s$ times in this sequence, both would be covered, which means that Aldous-Broder would not add any more edges with a desination vertex in either cluster.

The sequence $(s_i)_i$ does not necessarily satisfy the first definition of Davenport-Schinzel because the random walk could go from an uncovered cluster $C$ to a covered one $C'$ and back. This motivates all of the pseudocode in the if statement on Line \ref{line:repeat-visits} of $\InterclusterEdges$. If $C$ is visited more than $\zeta$ times in a row, the random walk must visit many clusters that are very ``tied'' to $C$ in the sense that they have a high probability of returning to $C$ before hitting any other unvisited cluster. These ties are represented in a forest $T$ on the clusters of $\mc F$. Each time a sequence of $\zeta$ $C$s occurs, one can add another edge to $T$. $T$ can only have $k$ edges because it is a forest. These ideas show that $(s_i)_i$ is only $\zeta$ times longer than an $(k,\log_{\gammads} n)$-Davenport-Schinzel after removing $\zeta k$ additional letters that are charged to edges of $T$. Theorem \ref{thm:davenport-schinzel} applies and finishes the proof.

Now, we discuss this intuition more formally. Let $k = |\mc F|$. It suffices to show that a random spanning tree in $H$ contains $O(kC_0(k,\log_{\gammads} n))$ edges between the clusters of $\mc{F}$ with high probability. Generate a random spanning tree using Aldous-Broder and organize the intercluster edges using the following algorithm:

\begin{algorithm}[H]
\SetAlgoLined
\DontPrintSemicolon

    \KwData{a family of disjoint clusters $\mc F$ and a graph $H$ on the vertices in $\cup_{C\in \mc F} C$}

    \KwResult{A sequence of visited clusters $(s_i)_{i=0}^{\ell}$ and a rooted forest $T$ with $V(T) = \mc{F}$ with all edges pointing towards from the root}

    $i\gets 0$\;

    $\zeta \gets 100\log^2 (n\alpha)$\;

    $\beta \gets 1/(2\log n)$\;

    \tcp{start Aldous-Broder}
    $u_0 \gets $ arbitrary vertex of $H$\;

    $c\gets 1$\;

    $T\gets (\mc{F}, \emptyset)$\;

    \While{$H$ is not covered}{

        $u_c\gets $ random neighbor of $u_{c-1}$ in $H$\;

        $u\gets u_{c-1}, v\gets u_c$\;

        $C_u\gets $ cluster in $\mc{F}$ containing $u$\;

        $C_v\gets $ cluster in $\mc{F}$ containing $v$\;

        \If{$C_u\ne C_v$ and $C_v$ is not covered}{

            $s_i \gets C_v$\; \label{line:add}

            $i \gets i + 1$\;

            \label{line:deletions-start}

            \If{$s_j = C_v$ for all $j \in [i - \zeta, i - 1]$, and $i\ge \zeta$}{\label{line:repeat-visits}

                \tcp{in a multiset, members can be counted multiple times}

                $\mc{E}\gets $ the multiset of clusters $A\in \mc F$ for which (1) $A$ is visited immediately after one of the last $\zeta$ occurrences of $C_v$, (2) The root cluster $B_A$ of $A$'s tree in $T$ is visited before the next visit to $C_v$ and (3) $\Pr_x[\text{ hit } C_v \text{ before any other uncovered cluster}] \ge 1 - \beta$ for any $x\in B_A$\; \label{line:e-properties}

                \While{$\mc{E}$ contains a cluster $C$ whose tree in $T$ contains more than half of the clusters in the new tree formed by attaching $\mc{E}$'s trees to $C_v$}{

                    Remove all clusters in $C$'s arborescence that are also in $\mc E$ from $\mc E$\;

                }\label{line:cutting}

                \For{clusters $A\in \mc{E}$}{

                    add an edge from $B_A$ to $C_v$ in $T$ if one does not exist\; \label{line:forest-building}               

                }

                remove the last $\zeta$ occurrences of $C_v$ from $(s_j)_j$\;\label{line:deletions}

                $i\gets i - \zeta$\;

            }

            \label{line:deletions-end}

        }

        $c\gets c + 1$\;

    }

    \Return $(s_i)_i, T$\;

\caption{$\InterclusterEdges(H,\mc{F})$, never executed}
\end{algorithm}

Let $(s_i')_i$ be the sequence $(s_i)_i$ obtained by removing lines \ref{line:deletions-start} to \ref{line:deletions-end} from $\InterclusterEdges$ and removing the statement ``and $C_v$ is not covered'' from the outer if statement; that is $s_i'$ is the list of all cluster visits. We break up the analysis of this algorithm into a few subsections.

\subsubsection{Invariants of the forest $T$}

We start by discussing some invariants that hold for the directed graph $T$ over the entire course of the algorithm:

\begin{invariant}\label{inv:forest}
$T$ is a forest of directed arborescences with all edges pointing towards the root.
\end{invariant}

\begin{proof}
Initially, $T$ just consists of isolated vertices, which trivially satisfy the invariant. Line \ref{line:forest-building} is the only line of $\InterclusterEdges$ that adds edges to $T$. When Line \ref{line:forest-building} adds an edge, it adds an edge from a root of one arborescence to the root of another arborescence. This preserves the invariant, so we are done.
\end{proof}

\begin{invariant}\label{inv:tree-diam}
Each arborescence with size $\kappa$ in $T$ has maximum leaf-to-root path length $\log \kappa$.
\end{invariant}

\begin{proof}
Initially, $T$ consists of isolated vertices, each of which has diameter $\log 1 = 0$. Line \ref{line:cutting} ensures that the arborescence produced by the for loop containing Line \ref{line:forest-building} contains at least twice as many vertices as any of the constituent child arborescences. Furthermore, the length of the path to the root from each leaf increases by at most 1. Therefore, the new maximum path length is at most $1 + \log \kappa\le \log(2\kappa)\le \log \kappa'$, where $\kappa$ is the maximum size of a child arborescence and $\kappa'$ is the size of the combined arborescence created by Line \ref{line:forest-building}. Therefore, Line \ref{line:forest-building} maintains the invariant.
\end{proof}

\begin{invariant}\label{inv:edge-ties}
For each edge $(C_0,C_1)\in E(T)$ and any $x\in C_0$,

$$\Pr_x[\text{hits $C_1$ before any uncovered cluster}] \ge 1 - \beta$$
\end{invariant}

\begin{proof}
Initially, there are no edges, so the invariant is trivially satisfied. When an edge $(C_0,C_1)$ is added to $T$, property (3) on Line \ref{line:e-properties} ensures that the invariant is satisfied. As $\InterclusterEdges$ progresses, clusters that are covered remain covered, so

$\Pr_x[\text{hits $C_1$ before any uncovered cluster}]$ only increases. In particular, the invariant remains satisfied.
\end{proof}

These invariants imply the following proposition, which will effectively allow us to replace occurences of a cluster $A\in \mc F$ in the sequence $\{s_j'\}_j$ with the root $B_A$ of $A$'s arborescence in $T$:  

\begin{proposition}\label{prop:root-ties}
For each cluster $A\in \mc F$ in a component of $T$ with root $B_A\in \mc F$,

$$\Pr_x[\text{hits $B_A$ before any uncovered cluster}] \ge 1/2$$

for any $x\in A$.
\end{proposition}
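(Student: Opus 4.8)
The plan is to walk up the arborescence containing $A$ one edge at a time, using Invariant \ref{inv:edge-ties} at each step together with a union bound over the path length, which is controlled by Invariant \ref{inv:tree-diam}. Concretely, let $A = C_0 \to C_1 \to \cdots \to C_t = B_A$ be the unique directed path from $A$ to the root $B_A$ in its arborescence of $T$. By Invariant \ref{inv:tree-diam}, if the arborescence has size $\kappa \le k \le n$, then $t \le \log \kappa \le \log n$. For each $i$, Invariant \ref{inv:edge-ties} gives that for any vertex $y \in C_i$, a random walk started at $y$ hits $C_{i+1}$ before hitting any uncovered cluster with probability at least $1 - \beta$.

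First I would chain these bounds: starting at $x \in A = C_0$, condition on successively hitting $C_1$, then $C_2$, and so on, each time before any uncovered cluster is hit. By the strong Markov property (restarting the walk at the first vertex of $C_{i+1}$ it reaches), the probability that the walk hits $B_A = C_t$ before it hits any uncovered cluster is at least $(1-\beta)^t$. Since $t \le \log n$ and $\beta = 1/(2\log n)$, we get $(1-\beta)^t \ge (1 - 1/(2\log n))^{\log n} \ge 1/2$ (using $(1-1/(2m))^{m}\ge 1/2$ for $m \ge 1$, e.g. from $1 - 1/(2m) \ge e^{-1/m + O(1/m^2)}$, or just $(1-x)^{1/x}\ge 1/4$ and refining, or directly that $(1-1/(2\log n))^{\log n}$ is increasing toward $e^{-1/2} > 1/2$). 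Hence $\Pr_x[\text{hits } B_A \text{ before any uncovered cluster}] \ge 1/2$, which is exactly the claim.

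The one point that needs care — and which I expect to be the main (though minor) obstacle — is making the conditioning argument rigorous: each application of Invariant \ref{inv:edge-ties} requires that the event "$C_{i+1}$ is hit before any uncovered cluster" be nested inside the previous such event, so that hitting $C_i$ occurs strictly before any uncovered cluster and we may legitimately restart the walk from $C_i$ and apply the invariant for the edge $(C_i, C_{i+1})$. This works because "before any uncovered cluster" is a stopping-time condition that composes: if the walk reaches $C_1$ before any uncovered cluster, then from that point it reaches $C_2$ before any uncovered cluster with probability $\ge 1-\beta$, and the two events together imply $C_2$ is reached before any uncovered cluster; iterating gives the product bound. I would also note that Invariants \ref{inv:forest}, \ref{inv:tree-diam}, \ref{inv:edge-ties} are invariants maintained throughout $\InterclusterEdges$, so in particular they hold at the moment of interest, and that covered clusters stay covered, so the bounds are only strengthened over time. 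No heavy computation is involved; the whole argument is the path-walk plus the elementary inequality $(1 - 1/(2\log n))^{\log n} \ge 1/2$.
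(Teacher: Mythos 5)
Your proposal is correct and follows essentially the same route as the paper: walk the arborescence path of length at most $\log n$ (Invariant \ref{inv:tree-diam}), apply Invariant \ref{inv:edge-ties} at each edge via the Markov property, and conclude $(1-\beta)^{\log n}\ge 1/2$ since $\beta = 1/(2\log n)$. The paper finishes with Bernoulli's inequality $(1-\beta)^\ell \ge 1-\ell\beta$ rather than your direct estimate, but this is an immaterial difference.
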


\begin{proof}
By Invariant \ref{inv:tree-diam}, there is a path from $A$ to $B_A$ in $T$ with length at most $\log\kappa\le \log n$. Let $C_0 = A,C_1,\hdots,C_{\ell} = B_A$ with $\ell\le \log n$ be the path in $A$'s arborescence from $A$ to $B_A$. Notice that

\begin{align*}
&\Pr_x[\text{hits $B_A$ before any uncovered cluster}]\\
&\ge \Pr_x[\wedge_{p=0}^{\ell} \text{hits $C_p$ after $C_{p-1},C_{p-2},\hdots,C_1,C_0$ and before any uncovered cluster}]\\
&= \prod_{p=0}^{\ell-1} \Pr_x[\text{hits $C_{p+1}$ after $C_p,C_{p-1},\hdots,C_0$ and before any uncovered cluster}\\
&| \text{hits $C_p$ after $C_{p-1}$ after $\hdots$ after $C_0$ and before any uncovered cluster}]\\
&= \prod_{p=0}^{\ell-1} \Pr_x[\text{hits $C_{p+1}$ after $C_p$ and before any uncovered cluster}\\
&| \text{hits $C_p$ after $C_{p-1}$ after $\hdots$ after $C_0$ and before any uncovered cluster}]
\end{align*}

By the Markov property,

\begin{align*}
&\prod_{p=0}^{\ell-1} \Pr_x[\text{hits $C_{p+1}$ after $C_p$ and before any uncovered cluster}\\
&| \text{hits $C_p$ after $C_{p-1}$ after $\hdots$ after $C_0$ and before any uncovered cluster}]\\
&= \prod_{p=0}^{\ell-1} \Pr_x[\text{hits $C_{p+1}$ after $C_p$ and before any uncovered cluster} | \text{hits $C_p$ before any uncovered cluster}]
\end{align*}

Furthermore, by Invariant \ref{inv:edge-ties},

\begin{align*}
&\Pr_x[\text{hits $C_{p+1}$ after $C_p$ and before any uncovered cluster} | \text{hits $C_p$ before any uncovered cluster}]\\
&\ge \max_{y\in C_p} \Pr_y[\text{hits $C_{p+1}$ before any uncovered cluster}]\\
&\ge 1-\beta
\end{align*}

Combining these inequalities shows that

\begin{align*}
\Pr_x[\text{hits $B_A$ before any uncovered cluster}]&\ge (1 - \beta)^\ell\\
&\ge (1 - \ell\beta)\\
&\ge (1 - (\log n)\beta)\\
&\ge 1/2
\end{align*}

as desired.
\end{proof}

\subsubsection{Number of possible alternations before coverage}

Next, we show that any two clusters can alternate a small number of times before both being covered. This is useful for bounding both the length of $(s_i)_i$ at the end of the algorithm and the number of edges added to $T$.

Specifically, we show the following:

\begin{proposition}\label{prop:coverage-alternation}
Consider two clusters $C$ and $C'$ in some graph $G$. Suppose that the $C$ and $C'$ have effective resistance diameter $\gamma R$ and that $\min_{u\in C,v\in C'} \texttt{Reff}_G(u,v)\ge \gamma R$. Then, for any $x\in C$ or $C'$ and any $z > 0$,

$$\Pr_x[\text{$C$ or $C'$ is not covered after alternating between them more than $T$ times}] \le n\left(\frac{1}{\gamma-4}\right)^{\tau}$$
\end{proposition}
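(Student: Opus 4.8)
The plan is to reduce the statement to a union bound over the vertices of $C\cup C'$: if the walk alternates between $C$ and $C'$ more than $\tau$ times and yet fails to cover, say, $C$, then some fixed vertex $v\in C$ is never visited, and symmetrically for $C'$. So I would fix $v\in C$ (the case $v\in C'$ being symmetric) and bound the probability, starting from $x$, that the walk makes more than $\tau$ full excursions from $C$ to $C'$ without ever visiting $v$ by $\left(\tfrac{1}{\gamma-4}\right)^{\tau}$; summing over the at most $n$ choices of $v\in C\cup C'$ yields the claim. We may assume $\gamma>4$, since otherwise the right-hand side is trivial; I would also read the hypothesis as ``$C$ and $C'$ have effective resistance diameter at most $R$ and are at pairwise distance at least $\gamma R$,'' which is the form in which Lemma~\ref{lem:well-sep} applies with the stated constant.

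The core estimate is a single-excursion bound: whenever the walk sits at some vertex $w\in C$ and subsequently reaches $C'$, it visits $v$ before reaching $C'$ with probability at least $1-\tfrac{1}{\gamma-4}$. If $w=v$ this is immediate, so assume $w\neq v$, and pass to $J:=G/C'$, with $C'$ identified to a vertex $t$; the walk in $G$ up to hitting $C'$ is distributed as the walk in $J$ up to hitting $t$. By Lemma~\ref{lem:pot-bound} in $J$ with start $w$, avoided vertex $v$, and target $t$,
$$\Pr_w[t_v > t_t] \le \frac{\texttt{Reff}_J(v,w)}{\texttt{Reff}_J(v,t)}.$$
Since $v,w\in C$, Rayleigh monotonicity gives $\texttt{Reff}_J(v,w)\le \texttt{Reff}_G(v,w)\le R$. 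For the denominator, identify $C$ as well to get $J':=G/(C,C')$ with $C$ contracted to $s$; Rayleigh monotonicity gives $\texttt{Reff}_J(v,t)\ge \texttt{Reff}_{J'}(s,t)$, and Lemma~\ref{lem:well-sep} applied to $C_1=C$, $C_2=C'$ yields $\texttt{Reff}_{J'}(s,t)\ge(\gamma-4)R$. Hence $\Pr_w[t_v>t_t]\le \tfrac{1}{\gamma-4}$, proving the single-excursion bound.

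Finally I would chain these bounds with the strong Markov property. Alternating between $C$ and $C'$ more than $\tau$ times supplies more than $\tau$ disjoint time intervals, each beginning with the walk in $C$ and ending at its first subsequent visit to $C'$; conditioning on the whole history up to the start of the $i$-th such excursion, the single-excursion bound applied at the (random) starting vertex $w_i\in C$ shows the conditional probability of not visiting $v$ during that excursion is at most $\tfrac{1}{\gamma-4}$. Multiplying over the excursions gives that $v$ is missed throughout with probability at most $\left(\tfrac{1}{\gamma-4}\right)^{\tau}$, and the union bound over $v\in C\cup C'$ (using the $C\to C'$ excursions for $v\in C$ and the symmetric $C'\to C$ excursions for $v\in C'$) finishes the argument.

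The main obstacle I anticipate is the bookkeeping around ``alternations'': one must define the alternation count precisely enough that $\tau$ alternations genuinely furnish $\tau$ (or at least $\Omega(\tau)$, which would only change the constant in the exponent) complete excursions in the required direction, and one must verify that the strong Markov property applies cleanly at the start of each excursion, even though the excursions are delimited by the hitting times of $C$ and $C'$ rather than by deterministic times and even though ``covered'' is a time-dependent condition. A secondary, minor point is reconciling the apparent typo in the diameter hypothesis so that the constant $\gamma-4$ comes out exactly as stated.
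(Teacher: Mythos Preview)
Your proposal is correct and follows essentially the same approach as the paper: a union bound over the vertices of $C\cup C'$, the single-excursion estimate $\Pr_w[t_v>t_{C'}]\le \tfrac{1}{\gamma-4}$ obtained by combining Lemma~\ref{lem:pot-bound} with Lemma~\ref{lem:well-sep}, and then chaining via the (strong) Markov property. Your reading of the diameter hypothesis as ``at most $R$'' is also the intended one, and your explicit passage to $G/C'$ and use of Rayleigh monotonicity makes the application of Lemma~\ref{lem:pot-bound} cleaner than the paper's somewhat terse invocation.
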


\begin{proof}
Consider two vertices $y_0,y_1\in C$. By Lemmas \ref{lem:pot-bound} and \ref{lem:well-sep}

$$\Pr_{y_0}[\text{hit $C'$ before visiting $y_1$}]\frac{R}{(\gamma - 4)R} = \frac{1}{\gamma - 4}$$

In particular, by the Markov property,

\begin{align*}
&\Pr_x[\text{$y$ not visited before alternating between $C$ and $C'$ $T$ times}]\\
&\le \prod_{t=0}^{T-1} \max_{y'\in C}\Pr_{y'}[\text{$y$ not visited before alternating between $C$ and $C'$ once}]\\
&\le \left(\frac{1}{\gamma - 4}\right)^{\tau}\\
\end{align*}

If $C$ or $C'$ has not been covered, then there exists a vertex in $C$ or $C'$ that has not yet been visited. Therefore, by a union bound, the probability that there exists an unvisited vertex in $C$ or $C'$ after $T$ alternations is at most $(1/(\gamma-4))^{\tau}$, as desired.
\end{proof}

For all subsequent sections, set $\tau = 5(\log (\alpha m))/(\log (\gamma - 4))$. This ensure that any pair of clusters is covered with probability at least $1 - n(1/(m\alpha)^5) \ge 1 - (m\alpha)^4$ after the random walk alternates between both of them $T$ times. In particular, the following property holds:

\begin{definition}[Fresh sequences]
Consider the two sequences $(s_i)_i$ and the supersequence $(s_j')_j$ defined using the algorithm $\InterclusterEdges$. This pair of sequences is called \emph{fresh} if any cluster $C\in \mc F$'s appearances in $(s_i)_i$ do not alternate with some cluster $C'$ more than $\tau$ times in the supersequence $(s_j')_j$.
\end{definition}

We exploit Proposition \ref{prop:coverage-alternation} to show the following:

\begin{proposition}\label{prop:fresh-hp}
The pair of sequences $(s_i)_i$ and $(s_j')_j$ is fresh with probability at least $1 - 1/(m\alpha)^2$.
\end{proposition}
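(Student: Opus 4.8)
The goal is to show \texttt{Proposition \ref{prop:fresh-hp}}: the pair of sequences $(s_i)_i$ and $(s_j')_j$ produced by $\InterclusterEdges$ is \emph{fresh} with probability at least $1 - 1/(m\alpha)^2$. The plan is to reduce this to \texttt{Proposition \ref{prop:coverage-alternation}} applied pairwise, and then take a union bound over all pairs of clusters. I would first recall what being fresh means: for every cluster $C \in \mc F$, the appearances of $C$ in the (edited) sequence $(s_i)_i$ do not alternate more than $\tau$ times with any single other cluster $C'$ in the full visit supersequence $(s_j')_j$. Since $(s_i)_i$ is a subsequence of $(s_j')_j$ (deletions on Line \ref{line:deletions} only remove occurrences), it suffices to control alternation counts directly in $(s_j')_j$. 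Moreover, by \texttt{Line \ref{line:add}} a cluster is only added to $(s_i)_i$ when it is \emph{not covered} at the time of the visit, so every appearance of $C$ in $(s_i)_i$ corresponds to a moment when $C$ is not yet covered; similarly the definition only cares about the alternation with a cluster $C'$ up to the point where one of $C, C'$ becomes covered.

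The core step is to apply \texttt{Proposition \ref{prop:coverage-alternation}} to a fixed unordered pair $\{C, C'\}$ of clusters in $\mc F$. That proposition requires the two clusters to have effective resistance diameter at most $\gamma R$ and pairwise effective resistance distance at least $\gamma R$; here $\mc F$ is a $\gammads$-well-separated $R$-family in $H$ (the Schur complement), so with $\gamma = \gammads = 2^{(\log n)^{2/3}}$ these hypotheses hold by definition of well-separatedness (diameter $\le R \le \gamma R$, inter-cluster distance $\ge \gamma R$). Fixing the Aldous--Broder walk's starting vertex, \texttt{Proposition \ref{prop:coverage-alternation}} gives that the probability that $C$ or $C'$ is still uncovered after the walk has alternated between them more than $\tau$ times is at most $n\left(\tfrac{1}{\gamma - 4}\right)^{\tau}$. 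With the chosen value $\tau = 5(\log(\alpha m))/(\log(\gamma - 4))$, we have $(\gamma-4)^{-\tau} = (\alpha m)^{-5}$, so this bad-event probability is at most $n (\alpha m)^{-5} \le (\alpha m)^{-4}$ (using $n \le m$; if one wants to be careful about $n$ versus $m$, note $n \le m+1$ and absorb the constant into the exponent, still giving a polynomially small bound). On the complement of this bad event, $C$ and $C'$ get covered by the time the walk has alternated $\tau$ times, and since appearances of $C$ in $(s_i)_i$ (indeed in $(s_j')_j$ restricted to the relevant window) only occur while $C$ is uncovered, the number of $C$--$C'$ alternations that matter is at most $\tau$ — which is exactly the freshness condition for this pair.

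Finally I would union-bound over all $\binom{|\mc F|}{2} \le |\mc F|^2 \le n^2 \le m^2$ unordered pairs of clusters. The total failure probability is at most $m^2 \cdot (\alpha m)^{-4} \le (\alpha m)^{-2}$ (here $\alpha \ge 1$, so $m^2 (\alpha m)^{-4} = \alpha^{-4} m^{-2} \le \alpha^{-2} m^{-2} = (\alpha m)^{-2}$). Hence with probability at least $1 - 1/(m\alpha)^2$, for every pair $\{C, C'\}$ the alternation count in $(s_j')_j$ among uncovered-at-the-time appearances is at most $\tau$, which is precisely the statement that $(s_i)_i$ and $(s_j')_j$ are fresh.

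The main obstacle — and the one place the argument needs genuine care rather than bookkeeping — is making precise the claim that ``appearances of $C$ in $(s_i)_i$ correspond to uncovered moments, so only pre-coverage alternations count.'' One has to be a little careful because the deletions on Line \ref{line:deletions} remove occurrences of $C_v$ from $(s_i)_i$, and the supersequence $(s_j')_j$ is defined by \emph{both} removing Lines \ref{line:deletions-start}--\ref{line:deletions-end} \emph{and} dropping the ``$C_v$ not covered'' clause, so $(s_j')_j$ records \emph{all} cluster-to-cluster walk transitions, covered or not. I would argue that freshness is a statement purely about the subsequence $(s_i)_i$ of a cluster's appearances as they sit inside $(s_j')_j$, and that an alternation pattern $C C' C C' \cdots$ of length $\tau + 2$ among $(s_i)_i$-appearances forces, in the underlying walk, at least $\tau$ genuine alternations between $C$ and $C'$ all occurring before either is covered (since each $(s_i)_i$-appearance of $C$ or $C'$ happens while that cluster is uncovered, by the outer \texttt{if} on Line \ref{line:add}); this is the event bounded by \texttt{Proposition \ref{prop:coverage-alternation}}. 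Once this correspondence is nailed down, the rest is the routine union bound above.
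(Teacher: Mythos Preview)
Your proposal is correct and follows essentially the same approach as the paper: observe that $(s_i)_i$ records only uncovered clusters, apply Proposition~\ref{prop:coverage-alternation} to each pair $\{C,C'\}$ with the chosen $\tau$ to get failure probability at most $(m\alpha)^{-4}$ per pair, and union bound over at most $m^2$ pairs. The paper's proof is a terse three-line version of exactly this argument; your extra care about the relationship between $(s_i)_i$ and $(s_j')_j$ is sound and more explicit than what the paper writes.
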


\begin{proof}
$(s_i)_i$ is a sequence of uncovered clusters. Therefore, by Proposition \ref{prop:coverage-alternation}, the appearances of a cluster $C$ in $(s_i)_i$ cannot alternate with the appearances of some other cluster $C'$ in $(s_j')_j$ more than $\tau$ times with probability at least $1 - 1/(m\alpha)^4$. Union bounding over all pairs of clusters $C$ and $C'$ yields the desired result.
\end{proof}

\subsubsection{Hitting probabilities are similar within a cluster}

We now show the following fact, which will allow us to turn a maximum probability lower bound into a bound on the minimum. In the following proposition, think of $\mc F'$ as being the set of uncovered clusters besides $C_v$:

\begin{proposition}\label{prop:prob-max-to-min}
Let $C,C'\in \mc F$ and $\mc F'\subseteq F$, with $C,C'\notin \mc F'$. Then

$$(1 - 1/(\gamma - 4))\max_{y\in C} \Pr_y[\text{hit $C'$ before any cluster in $\mc F'$}] \le \min_{y\in C} \Pr_y[\text{hit $C'$ before any cluster in $\mc F'$}]$$
\end{proposition}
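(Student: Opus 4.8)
The plan is to prove the statement using the standard interpretation of these hitting probabilities as normalized electrical potentials, together with the fact that the two clusters $C$ and $C'$ are well-separated in the effective resistance metric. For $y \in C$, write $p_y := \Pr_y[\text{hit } C' \text{ before any cluster in } \mc F']$. By Theorem~\ref{thm:offline-prob} (or directly by the potential-theoretic description of hitting probabilities), $p_y$ is the value at $y$ of the harmonic function on the graph obtained by identifying $C'$ to a vertex $t$ (with potential $1$) and identifying all of $\mc F'$ to a vertex $u$ (with potential $0$). So the quantity I must control is the oscillation of this harmonic function $p$ over the set $C$, relative to its values on $C$.

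First I would let $G'$ be the graph with $C'$ identified to $t$ and $\cup_{\mc F'}$ identified to $u$, and let $p = L_{G'}^+$-potential for the $u$-$t$ flow, normalized so $p_t = 1$, $p_u = 0$. Then for $y_0, y_1 \in C$, I want to bound $p_{y_0} - p_{y_1}$ from below (equivalently, bound $\max_{y\in C} p_y - \min_{y \in C} p_y$) in terms of $\min_{y\in C} p_y$. The key observation is that the potential drop within $C$ is small compared to the potential drop from $C$ to $t$, precisely because $C$ has small effective resistance diameter while $C$ and $C'$ are far apart. Concretely, I would invoke Lemma~\ref{lem:pot-bound}: for $y_0, y_1 \in C$, the probability that a walk from $y_0$ reaches $t$ (i.e. $C'$) before returning to $y_1$ is at most $\texttt{Reff}_{G'}(y_0,y_1)/\texttt{Reff}_{G'}(y_1,t)$. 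Since $C$ has diameter at most $\gamma R$ in $G$ (hence at most $\gamma R$ in $G'$ by Rayleigh monotonicity after identification — identifications only decrease effective resistances) and by Lemma~\ref{lem:well-sep} $\texttt{Reff}_{G'}(y_1,t) \ge (\gamma-4)R$, this probability is at most $\gamma R / ((\gamma-4)R) \cdot (1/\gamma)$... more carefully, $\texttt{Reff}_{G'}(y_0,y_1) \le \gamma R$ and $\texttt{Reff}_{G'}(y_1, t) \ge (\gamma - 4) R$ after identifying $C'$, giving the bound $\gamma R/((\gamma-4)R)$; I should instead use that $\texttt{Reff}$ within $C$ is actually at most $R$ times the relevant small constant, but the cleanest route is the one used in Lemma~\ref{lem:small-holes}: treat the within-$C$ hop as costing at most a $1/(\gamma-4)$ factor.

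The main step is then the following coupling/Markov-property argument. Fix $y^* \in C$ achieving $\min_{y\in C} p_y$ and any $y \in C$. Starting a walk at $y$, either it hits $y^*$ before hitting $C' \cup (\cup\mc F')$, in which case by the strong Markov property its success probability is exactly $p_{y^*}$; or it hits $C' \cup (\cup\mc F')$ first. By the previous paragraph, the chance it hits $C'$ (rather than $y^*$ or $\mc F'$) in this second case is small — at most $1/(\gamma-4)$ — so $p_y \le p_{y^*} + 1/(\gamma-4) \cdot 1$... and to get the multiplicative form I need $p_y \le p_{y^*} + \frac{1}{\gamma-4} p_y$ type reasoning, i.e. conditioning on hitting $C'$ or $\mc F'$ before $y^*$ and noting the walk from such a boundary point has success probability at most... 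Actually the right decomposition is: $p_y = \Pr_y[\text{hit } y^* \text{ first}] \cdot p_{y^*} + \Pr_y[\text{hit } C' \text{ before } y^* \text{ and before } \mc F']$, and the second term is at most $\Pr_y[\text{hit } C' \text{ before returning to } y^*]$ which is $\le \frac{1}{\gamma-4}$ by Lemma~\ref{lem:pot-bound} and Lemma~\ref{lem:well-sep}; meanwhile $\Pr_y[\text{hit } y^* \text{ first}] \ge 1 - \frac{1}{\gamma - 4}$. Combining, $p_y \le (1 - \text{tiny}) p_{y^*} + \text{tiny}$, but I actually want the reverse direction. Rearranging: $\min_{y} p_y = p_{y^*} \ge \Pr_{y}[\text{hit } y^* \text{ first}] \cdot \max_{y'} p_{y'}$ is not quite it either. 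The clean statement to aim for: for any $y$, $p_{y^*} \ge p_y - \Pr_y[\text{hit } C' \text{ before } y^*] \ge p_y - \frac{1}{\gamma-4}$, and since also $p_y \le 1$ we can't directly divide; instead I note $\Pr_y[\text{hit }C'\text{ before }y^*\text{ before }\mc F'] \le \frac{1}{\gamma-4} p_y$ is FALSE in general. So the honest approach: $p_y \le p_{y^*} + \Pr_y[\text{reach } C' \text{ before } y^* \text{ and before } \mc F']$. Bound that last probability by $\Pr_y[\text{reach } C' \text{ before } y^*] \le \frac{\texttt{Reff}_{G'}(y,y^*)}{\texttt{Reff}_{G'}(y^*,t)} \le \frac{\gamma R}{(\gamma - 4) R}$... which is bigger than $1$, useless. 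The correct bound uses that $C$ has diameter $\le R$ in a sharper sense, or we use Lemma~\ref{lem:well-sep}: after identifying $C$ to $s$, $\texttt{Reff}(s,t) \ge (\gamma-4) R$, and within the ball $C$ the max potential minus min potential is at most the energy times... I expect the intended proof is short: the max-minus-min potential over $C$ is at most $\texttt{Reff}_{G'/C}$-related quantity over $\texttt{Reff}_{G'}(C, C')$, giving a $\frac{1}{\gamma-4}$ factor, and then $\max_C p - \min_C p \le \frac{1}{\gamma-4}\max_C p$, which rearranges to $\min_C p \ge (1 - \frac{1}{\gamma-4})\max_C p$, exactly the claim. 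The hard part will be pinning down this last inequality rigorously — showing the potential oscillation over $C$ is at most $\frac{1}{\gamma-4}$ times the potential level at $C$ — which I would do by: embedding potentials, using that $p$ restricted to $C$ is harmonic except where flow enters/leaves $C$, and applying Lemma~\ref{lem:pot-bound} to compare $\texttt{Reff}$ within $C$ to $\texttt{Reff}(C,C')$ via Lemma~\ref{lem:well-sep}, exactly mirroring the argument in Lemma~\ref{lem:small-holes}.
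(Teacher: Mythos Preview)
You have the right ingredients---Lemma~\ref{lem:pot-bound}, Lemma~\ref{lem:well-sep}, and the strong Markov property---but two concrete errors derail the execution.

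First, you take the diameter of $C$ to be $\gamma R$. In the ambient $\gamma$-well-separated $R$-family $\mc F$, each cluster has effective-resistance diameter at most $R$; the quantity $\gamma R$ is the \emph{inter}-cluster separation. With the correct diameter $R$, Lemma~\ref{lem:pot-bound} together with Lemma~\ref{lem:well-sep} gives $\Pr_y[t_{C'} < t_{y_0}] \le \texttt{Reff}(y,y_0)/\texttt{Reff}(y_0,C') \le R/((\gamma-4)R) = 1/(\gamma-4)$ for any $y,y_0 \in C$, which is exactly the small quantity you were looking for. Your computation of $\gamma R/((\gamma-4)R)$, which you correctly flag as useless, comes entirely from this slip.

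Second, you anchor at the minimizer $y^\ast$ and try to upper-bound $p_y$ by $p_{y^\ast}+\text{small}$; this is the wrong direction, and you notice this but do not recover. The paper anchors at the \emph{maximizer} $y_0 := \arg\max_{y\in C} p_y$ and lower-bounds $p_y$ for arbitrary $y\in C$ directly: since the event $\{t_{y_0} < t_{C'}\}\cap\{t_{C'} < t_{\mc F'}\}$ implies $t_{y_0} < t_{C'} < t_{\mc F'}$, the strong Markov property at the first visit to $y_0$ gives
\[
p_y \;\ge\; \Pr_y\big[\{t_{y_0}<t_{C'}\}\cap\{t_{C'}<t_{\mc F'}\}\big] \;=\; \Pr_y[t_{y_0}<t_{C'}]\cdot p_{y_0} \;\ge\; \Big(1-\tfrac{1}{\gamma-4}\Big)\,p_{y_0},
\]
and you are done in three lines. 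Your late paragraph about bounding the potential oscillation over $C$ by $\frac{1}{\gamma-4}\max_C p$ is morally the same statement, but you never arrive at a mechanism to prove it; the Markov decomposition above \emph{is} that mechanism.
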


\begin{proof}
Let $y_0\in C$ be the maximizer of $\Pr_y[\text{hit $C'$ before any cluster in $\mc F'$}]$. By Lemmas \ref{lem:pot-bound} and \ref{lem:well-sep}, $\Pr_y[\text{hit $C'$ before $y_0$}] \le \frac{R}{(\gamma - 4)R} = \frac{1}{\gamma - 4}$ for any $y\in C$. Furthermore,

\begin{align*}
\Pr_y[\text{hit $C'$ before any cluster in $\mc F'$}] &\ge \Pr_y[(\text{hit $C'$ before any cluster in $\mc F'$})\wedge(\text{hit $y_0$ before $C'$})]\\
&= \Pr_y[\text{hit $C'$ before any cluster in $\mc F'$} | \text{hit $y_0$ before $C'$}]\\
&\Pr_y[\text{hit $y_0$ before $C'$}]\\
&= \Pr_{y_0}[\text{hit $C'$ before any cluster in $\mc F'$}]\Pr_y[\text{hit $y_0$ before $C'$}]\\
&\ge \Pr_{y_0}[\text{hit $C'$ before any cluster in $\mc F'$}](1 - 1/(\gamma - 4))\\
\end{align*}

as desired.
\end{proof}

\subsubsection{Line \ref{line:forest-building} always adds an edge}

Here, we assume that the pair of sequences $(s_i)_i,(s_j')_j$ is fresh in light of Proposition \ref{prop:fresh-hp}. Subject to this assumption, we examine the effect of Lines \ref{line:e-properties} and \ref{line:cutting} on $\mc E$. Specifically, we show that passing $\mc E$ through these lines does not make it empty in a sequence of four propositions.

\begin{proposition}\label{prop:e1-size}
Let $\mc E_1$ be the multiset of clusters $A\in \mc F$ for which Property (1) on Line \ref{line:e-properties} is satisfied. Then $|\mc E_1|\ge \zeta - 1$.
\end{proposition}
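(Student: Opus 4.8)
The plan is to analyze the effect of Line \ref{line:add} of $\InterclusterEdges$ over the last $\zeta$ consecutive occurrences of $C_v$ in $(s_i)_i$. By the triggering condition of the if-statement on Line \ref{line:repeat-visits}, we have $s_j = C_v$ for all $j\in [i-\zeta,i-1]$. Each such occurrence was added to $(s_i)_i$ on Line \ref{line:add} during a distinct step of the Aldous-Broder random walk in which the walk moved from some vertex $u$ in a cluster $C_u\ne C_v$ into $C_v$, and $C_v$ was uncovered at that time. Between two consecutive occurrences of $C_v$ in $(s_i)_i$ there must be a step of the random walk in which it leaves $C_v$ (otherwise no edge with destination $\ne C_v$ would be recorded before the next $C_v$, but also $C_v$ can only be recorded once per entry, so the walk has to exit and re-enter). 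So immediately after each of the last $\zeta$ occurrences of $C_v$ in $(s_i)_i$ (except possibly the final one, which is why we only claim $\zeta-1$), the walk visits \emph{some} cluster $A\in\mc F$ with $A$ visited immediately after that occurrence of $C_v$. This is exactly Property (1) on Line \ref{line:e-properties}. Hence $\mc E_1$, counted with multiplicity, receives one cluster $A$ for each of the first $\zeta-1$ of these $\zeta$ occurrences, giving $|\mc E_1|\ge \zeta-1$.

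First I would make precise the claim that between consecutive occurrences of $C_v$ in $(s_i)_i$, and after the last one among a block of $\zeta$, the random walk leaves $C_v$ and then visits a cluster that is recorded in $(s_i')_i$ (and, being uncovered, in $(s_i)_i$ — though for Property (1) we only need that it is \emph{visited}, not that it is uncovered). Concretely: an entry $s_j = C_v$ is created only when the walk enters $C_v$ from a different cluster. For the next entry $s_{j+1}$ to exist, the walk must later cross from some cluster into a different, uncovered cluster; but before that crossing it must be somewhere, and having just entered $C_v$, the immediate next cluster it enters after leaving $C_v$ is the relevant $A$. The only subtle point is making sure that this ``next cluster after $C_v$'' is well-defined for the first $\zeta-1$ of the last $\zeta$ occurrences; for the $j$-th occurrence with $j < i-1$, the $(j{+}1)$-st occurrence exists (it equals $C_v$ by hypothesis), so there is genuinely a later crossing and hence a first cluster visited after the walk leaves $C_v$. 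The $i{-}1$-st occurrence (the last one) may be the current step of the walk, with no guaranteed subsequent visit yet, which accounts for the ``$-1$'' in the bound.

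The main obstacle, and the only place requiring care, is the bookkeeping tying occurrences in the \emph{recorded sequence} $(s_i)_i$ to \emph{steps of the underlying random walk} $(u_c)_c$, since $(s_i)_i$ is a heavily post-processed object (lines \ref{line:deletions-start}--\ref{line:deletions-end} delete entries). I would handle this by working at the moment Line \ref{line:repeat-visits} is evaluated — at that instant $(s_i)_i$ has not yet been truncated, so the last $\zeta$ entries $s_{i-\zeta},\dots,s_{i-1}$ all equal $C_v$ and were each produced by a genuine walk step of the form $C_u\to C_v$ with $C_u\ne C_v$ and $C_v$ uncovered. Between the walk step producing $s_{j}$ and the walk step producing $s_{j+1}$ (for $i-\zeta\le j\le i-2$), the walk is at a $C_v$-vertex and must subsequently be at a non-$C_v$-vertex; the first non-$C_v$ cluster it reaches is a valid witness $A$ for Property (1) associated to the $j$-th occurrence. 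This yields $\zeta - 1$ (occurrence, witness) pairs, and since $\mc E_1$ collects witnesses with multiplicity, $|\mc E_1|\ge \zeta-1$. No properties of the forest $T$, the effective resistance metric, or hitting probabilities are needed here — only Property (1) is at stake — so the argument is purely combinatorial and short.
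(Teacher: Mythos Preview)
Your proposal is correct and follows essentially the same argument as the paper: each of the last $\zeta$ recorded visits to $C_v$ is a genuine entry of the walk into $C_v$ from a different cluster, and all but the most recent one must be followed by an exit to some cluster $A$, yielding $\zeta-1$ witnesses for Property~(1). The paper's proof is a single sentence to this effect; your version is the same idea with the bookkeeping spelled out.
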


\begin{proof}
The number of clusters visited immediately after each of the last $\zeta$ visits to $C_v$ is at least $\zeta-1$, since all prior visits to $C_v$ were followed immediately by a visit to another cluster. Therefore, $|\mc E_1|\ge \zeta - 1$, as desired.
\end{proof}

We use the following supermartingale concentration inequality to bound the drop from $\mc E_1$ to $\mc E_2$:

\begin{theorem}[Theorem 29 in \cite{CL06} with $\phi_i = 0$ for all $i$ and $M = 0$]\label{thm:supermartingale}
Suppose a supermartingale $X$ associated with filter $\textbf{F}$ satisfies, for all $1\le i\le n$,

$$\textbf{Var}(X_i|\mc F_{i-1})\le \sigma_i^2$$

and

$$\textbf{E}[X_i|\mc F_{i-1}]\le a_i$$

Then

$$\Pr[X_n\le X_0 - \lambda]\le e^{\frac{-\lambda^2}{2\sum_{i=1}^n (\sigma_i^2 + a_i^2)}}$$
\end{theorem}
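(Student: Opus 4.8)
The plan is to prove Theorem~\ref{thm:supermartingale} by the standard exponential-moment (Chernoff) method for supermartingales, specialized to the degenerate parameters $\phi_i=0$ (no positive drift) and $M=0$ (no positive-increment Bernstein correction). This is precisely the regime in which the tail bound collapses to a purely sub-Gaussian estimate with variance proxy $V:=\sum_{i=1}^n(\sigma_i^2+a_i^2)$ and no $\lambda$-dependent term in the denominator. After the harmless reduction to $X_0=0$ (replace $X_i$ by $X_i-X_0$ and the filtration accordingly), it suffices to show that for every $t>0$,
$$\mathbb{E}\!\left[e^{-tX_n}\right]\le \exp\!\left(\tfrac{t^2}{2}V\right).$$
Given this, Markov's inequality yields $\Pr[X_n\le -\lambda]=\Pr[e^{-tX_n}\ge e^{t\lambda}]\le \exp(-t\lambda+\tfrac{t^2}{2}V)$, and optimizing with $t=\lambda/V$ produces exactly $\exp(-\lambda^2/(2V))$, as claimed.

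The second step is to peel off increments by iterated conditioning. Writing $Y_i:=X_i-X_{i-1}$, we have $\mathbb{E}[e^{-tX_n}]=\mathbb{E}\!\big[e^{-tX_{n-1}}\,\mathbb{E}[e^{-tY_n}\mid\mathcal F_{n-1}]\big]$, so by induction it is enough to establish the one-step bound $\mathbb{E}[e^{-tY_i}\mid\mathcal F_{i-1}]\le\exp\!\big(\tfrac{t^2}{2}(\sigma_i^2+a_i^2)\big)$ for all $t>0$ and then multiply over $i$. For the one-step bound I would use the elementary inequality $e^u\le 1+u+\tfrac{u^2}{2}$, which holds on the half-line cut out by the $M=0$ hypothesis, to get $\mathbb{E}[e^{-tY_i}\mid\mathcal F_{i-1}]\le 1-t\,\mathbb{E}[Y_i\mid\mathcal F_{i-1}]+\tfrac{t^2}{2}\,\mathbb{E}[Y_i^2\mid\mathcal F_{i-1}]$. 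The linear term is controlled (and essentially killed, modulo the $a_i$ bookkeeping) by the supermartingale property combined with $\mathbb{E}[X_i\mid\mathcal F_{i-1}]\le a_i$; the quadratic term is handled by $\mathbb{E}[Y_i^2\mid\mathcal F_{i-1}]=\mathrm{Var}(Y_i\mid\mathcal F_{i-1})+(\mathbb{E}[Y_i\mid\mathcal F_{i-1}])^2\le\sigma_i^2+a_i^2$. Finally $1+v\le e^v$ converts the resulting bound into exponential form, and the product over $i$ gives $\mathbb{E}[e^{-tX_n}]\le\exp(\tfrac{t^2}{2}V)$.

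Since this statement is a verbatim specialization of Theorem~29 of~\cite{CL06}, the most economical route is simply to invoke that reference; if one reproves it, the only genuinely delicate point is the per-step moment-generating-function estimate — specifically, checking that the $M=0$ and $\phi_i=0$ specializations are exactly what make the truncated-exponential inequality valid on the relevant half-line, so that the MGF bound holds for \emph{every} $t>0$ rather than only for $t$ below some threshold (which is what would otherwise introduce the $M\lambda/3$-type correction seen in Theorem~\ref{thm:martingale-2}). Everything else — the normalization $X_0=0$, the tower-property peeling, and the final optimization over $t$ — is routine.
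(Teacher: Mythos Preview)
The paper does not prove this theorem; it is stated purely as a citation of Theorem~29 in \cite{CL06} (specialized to $\phi_i=0$ and $M=0$) and invoked as a black box in the proof of Proposition~\ref{prop:e2-size}. Your recommendation to simply invoke the reference matches the paper's treatment exactly.
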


\begin{proposition}\label{prop:e2-size}
Let $\mc E_2$ be the submultiset of $\mc E_1$ for which Property (2) on Line \ref{line:e-properties} is satisfied. Suppose that $|\mc E_1| > 256\log (m\alpha)$. Then $|\mc E_2|\ge |\mc E_1|/4$ with probability at least $1 - 1/(m\alpha)^3$.
\end{proposition}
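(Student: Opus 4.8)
Write $n := |\mc E_1|$ and enumerate the occurrences of $C_v$ among the last $\zeta$ that feed $\mc E_1$ as $t=1,\dots,n$, in the order in which they happen along the Aldous--Broder walk (we work throughout conditioned on the event that this run of $C_v$'s reaches length $\zeta$ and on $n>256\log(m\alpha)$, which by Proposition~\ref{prop:e1-size} is in any case automatic for large $m\alpha$). For occurrence $t$, let $A_t\in\mc F$ be the cluster the walk enters immediately upon first leaving $C_v$, let $x_t\in A_t$ be that first vertex visited, let $\mathcal F_{t-1}$ be the $\sigma$-algebra generated by the walk up to and including that step, and let $Y_t\in\{0,1\}$ be the indicator that $A_t$ satisfies Property~(2) at occurrence $t$, i.e. that $B_{A_t}$ is visited before the walk next returns to $C_v$. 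Then $|\mc E_2|=\sum_{t=1}^n Y_t$, each $Y_t$ is $\mathcal F_t$-measurable, and the crux is to show $p_t:=\mathbf E[Y_t\mid\mathcal F_{t-1}]\ge 1/2$ for every $t$.

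To get $p_t\ge 1/2$ I would argue as follows. The forest $T$ does not change between the start of this run of $C_v$'s and the moment $\mc E_1,\mc E_2$ are formed: the inner if-statement at Line~\ref{line:repeat-visits} only fires once $\zeta$ consecutive equal entries of $(s_i)_i$ have accumulated, which first occurs exactly at the end of this run, and during a run of $C_v$'s no other cluster is repeated in $(s_i)_i$. Hence $B_{A_t}$ is well defined and identical at time $\mathcal F_{t-1}$ and at formation time. Invariants~\ref{inv:forest}, \ref{inv:tree-diam} and \ref{inv:edge-ties} hold at all times, so Proposition~\ref{prop:root-ties} applies to $A_t$ and yields $\Pr_{x_t}[\text{hit }B_{A_t}\text{ before any uncovered cluster}]\ge 1/2$, where the walk is the fresh Aldous--Broder continuation from $x_t$ (legitimate, since Aldous--Broder is a free random walk and $x_t\notin C_v$). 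Since $C_v$ is uncovered at time $\mathcal F_{t-1}$ — it is being recorded into $(s_i)_i$, which contains only uncovered clusters — the event ``hit $B_{A_t}$ before every uncovered cluster'' is contained in ``hit $B_{A_t}$ before returning to $C_v$'', which (starting from $x_t$, just after leaving $C_v$) is exactly $\{Y_t=1\}$. Thus $p_t\ge 1/2$. In the degenerate case $B_{A_t}=C_v$, Property~(2) holds trivially, so $Y_t=1$ and the bound holds a fortiori.

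Finally I would invoke the supermartingale bound. Put $X_k:=\sum_{t=1}^k(Y_t-p_t)$, a martingale (hence a supermartingale) with $X_0=0$, increments bounded by $1$ in absolute value and conditional variance $p_t(1-p_t)\le 1/4$. By Theorem~\ref{thm:supermartingale} (with $\sigma_i^2=1/4$ and zero drift, $a_i=0$) we get $\Pr[X_n\le -\lambda]\le\exp(-2\lambda^2/n)$; choosing $\lambda=n/4$ and using $n=|\mc E_1|>256\log(m\alpha)$ gives $\Pr[X_n\le -n/4]\le\exp(-n/8)\le(m\alpha)^{-32}\le 1/(m\alpha)^3$. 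On the complementary event, $|\mc E_2|=\sum_{t=1}^n Y_t=X_n+\sum_{t=1}^n p_t\ge -n/4+n/2=n/4=|\mc E_1|/4$, which is the desired conclusion. (Theorem~\ref{thm:azuma} or \ref{thm:martingale-2} applied to the same martingale works equally well.) The concentration step is routine; the real work is the bookkeeping of the second paragraph — verifying that at each occurrence $t$ the relevant data (the root $B_{A_t}$, the set of uncovered clusters, and the invariants underlying Proposition~\ref{prop:root-ties}) are exactly those still in force when $\mc E_1$ and $\mc E_2$ are assembled, and that ``hits $B_{A_t}$ before any uncovered cluster'' genuinely implies Property~(2).
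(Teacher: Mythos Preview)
Your proof is correct and follows essentially the same approach as the paper: define indicator variables $Y_t$ (the paper's $X_i$), use Proposition~\ref{prop:root-ties} to get $\mathbf{E}[Y_t\mid\mathcal F_{t-1}]\ge 1/2$, then apply the supermartingale bound of Theorem~\ref{thm:supermartingale}. Your version is more careful than the paper's about the bookkeeping (that $T$ is frozen during the run of $C_v$'s, that $C_v$ is uncovered so root-ties implies Property~(2)), and you obtain slightly sharper constants in the concentration step ($\sigma_i^2=1/4$ rather than the paper's cruder $\sigma_i^2=1$), but the skeleton is identical.
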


\begin{proof}
Let $A_1,A_2,\hdots,A_{|\mc E_1|}$ be the (possibly nondistinct) clusters in $\mc E_1$ listed in visitation order. For each $i\in \{1,2,\hdots,|E_1|\}$, let $X_i$ be the indicator variable of the event $$\{B_{A_i} \text{ is visited between $A_i$ and the next visit to $C_v$}\}$$ and let $Z_i = (\sum_{j\le i} X_i) - i/2$. By Proposition \ref{prop:root-ties} applied to $A\gets A_i$, $\textbf{E}[X_i]\ge 1/2$ for all $i$. This means that $\{Z_i\}$ is a supermartingale with stepwise variance 1 and change in mean at most 1. By Theorem \ref{thm:supermartingale},

$$\Pr[Z_i = Z_i - Z_0\le -i/4]\le \exp(-(i/4)^2/(4i)) = \exp(-i/64)$$

Since $|\mc E_1| > 256\log (m\alpha)$, at least $|\mc E_1|/4$ $A_i$s visit $B_{A_i}$ before returning to $C_v$ with probability at least $1 - 1/(m\alpha)^4$. Union bounding over all $C_v$s gives the desired result that $|\mc E_2|\ge |\mc E_1|/4$ with high probability.
\end{proof}

\begin{proposition}\label{prop:e3-size}
Let $\mc E_3$ be the submultiset of $\mc E_2$ for which Property (3) on Line \ref{line:e-properties} is satisfied. Then $|\mc E_3|\ge |\mc E_2| - 10(\log^2 (m\alpha))$ with probability at least $1 - 1/(m\alpha)^4$.
\end{proposition}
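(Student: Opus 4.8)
\textbf{Proof proposal for Proposition \ref{prop:e3-size}.}

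The plan is to show that an $A\in\mc E_2$ whose root $B_A$ fails Property (3) forces a walk event that has conditional probability $\gtrsim\beta$ of \emph{not} occurring, and then use martingale concentration to conclude that at most $10\log^2(m\alpha)$ such $A$ can occur with high probability. The starting observation is a structural fact about $(s_i)_i$: at the moment Line \ref{line:repeat-visits} triggers, the last $\zeta$ entries of $(s_i)_i$ are all $C_v$, so between the $C_v$-occurrence at position $j$ and the one at position $j+1$ (for any $j$ among the last $\zeta-1$ of them) \emph{no} cluster is added to $(s_i)_i$; equivalently, on the realized walk every cluster visited strictly between those two visits to $C_v$, other than $C_v$ itself, is covered at the time it is visited, hence covered as of the current time $i$. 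Consequently, if $A\in\mc E_2$ follows the $C_v$-occurrence at position $j(A)$ and $x_A\in B_A$ is the first vertex of $B_A$ the walk reaches after visiting $A$ (which exists and lies strictly before position $j(A)+1$ by Property (2)), then on the realized walk, starting from $x_A$ the walk reaches $C_v$ before it reaches any cluster that is still uncovered at time $i$. A short argument using the same "nothing is added" fact also shows $B_A$ is itself covered as of time $i$: the first time the walk enters $B_A$ in the window it comes from a different cluster and $B_A$ is not added, so $B_A$ was already covered by then.

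Now suppose $A\in\mc E_2$ but Property (3) fails for $A$, i.e.\ there is some $x\in B_A$ with $\Pr_x[t_{C_v}<t_{\mc F'}]<1-\beta$, where $\mc F'$ is the set of uncovered-at-time-$i$ clusters other than $C_v$. Since $B_A$ is covered as of time $i$ we have $B_A\notin\mc F'$ and $C_v\notin\mc F'$, so Proposition \ref{prop:prob-max-to-min} applies with $C\gets B_A$, $C'\gets C_v$ and gives
$$\Pr_{x_A}[t_{C_v}<t_{\mc F'}]\le \max_{y\in B_A}\Pr_y[t_{C_v}<t_{\mc F'}]\le \frac{\min_{y\in B_A}\Pr_y[t_{C_v}<t_{\mc F'}]}{1-1/(\gammads-4)}<\frac{1-\beta}{1-1/(\gammads-4)}.$$
Because $\gammads=2^{(\log n)^{2/3}}$ is superpolynomially large while $\beta=1/(2\log n)$, the right-hand side is at most $1-\beta(1-o(1))$, so the event $\mathcal B_A$ that the walk started at $x_A$ hits $\mc F'$ before $C_v$ satisfies $\Pr[\mathcal B_A\mid \text{walk history up to }x_A]\ge \beta(1-o(1))$. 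By the structural fact of the previous paragraph, $\mathcal B_A$ \emph{does not occur} on the realized walk, for every $A\in\mc E_2$, whether or not (3) fails.

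Finally I would order the bad $A$'s ($A\in\mc E_2\setminus\mc E_3$) by the walk-time of their visit to $x_A$; the "look-ahead" window of $A$ (from $x_A$ up to the $C_v$-occurrence at position $j(A)+1$) ends no later than the start of the next bad $A$'s window because the positions $j(A)$ are distinct and consecutive $C_v$-occurrences bound each window, so these windows are disjoint and ordered. This makes the indicators of the $\mathcal B_A$'s a sequence with conditional means $\ge\beta(1-o(1))$ that are all realized as $0$; by the standard nonnegative-martingale (or Theorem \ref{thm:supermartingale}) argument for $\Pr[\text{all }N\text{ such events fail}]\le(1-\beta(1-o(1)))^N$, if there were more than $10\log^2(m\alpha)$ bad $A$'s this probability would be at most $(1-\beta(1-o(1)))^{10\log^2(m\alpha)}\le \exp(-5\log^2(m\alpha)/(2\log n)(1-o(1)))\le 1/(m\alpha)^4$ (using $\log n\le\log(m\alpha)$). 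Hence $|\mc E_2\setminus\mc E_3|\le 10\log^2(m\alpha)$ with probability at least $1-1/(m\alpha)^4$, which is the claim. The main obstacle I anticipate is the measurability bookkeeping: the root $B_A$ in the forest $T$ and the coverage status of clusters both evolve during the run, so care is needed to define the "bad event'' $\mathcal B_A$ relative to a filtration to which it is adapted while keeping it deterministically false on the realized walk; reconciling "uncovered at time $i$'' (used by Property (3) in $\InterclusterEdges$) with "already visited by time $x_A$'' (the quantity naturally $\mathcal G$-measurable) is the delicate point, and is where I expect to spend most of the effort.
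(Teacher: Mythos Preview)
Your proposal is correct and follows essentially the same approach as the paper's proof: both arguments rest on the structural fact that, on the realized walk, the segment from each $x_{A}$ (first hit of $B_A$ in the window) reaches $C_v$ before any cluster that is still uncovered, and both convert this via Proposition~\ref{prop:prob-max-to-min} plus a product/Markov-property bound into the statement that at most $10\log^2(m\alpha)$ of the $A\in\mc E_2$ can fail Property~(3). The paper phrases the last step as ``the product of the maxima is at least $1/(m\alpha)^4$, so at most $10\log^2(m\alpha)$ factors can be below $1-1/(4\log n)$'', and then applies max-to-min in the forward direction, whereas you phrase it as ``each bad $A$ contributes a $(1-\beta(1-o(1)))$ factor to the conditional probability of the realized event'', applying max-to-min in the contrapositive direction; these are the same computation read two ways.

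Two small remarks. First, your verification that $B_A$ is covered (or equals $C_v$) is a detail the paper's proof uses implicitly but does not spell out; your argument for it is correct and worth keeping. Second, the measurability concern you flag at the end is real, but the paper's proof is equally informal on this point: both versions implicitly use that the set ``uncovered at time $i$'' contains the set ``uncovered at the time the walk reaches $x_A$'', so the event you call $\mathcal B_A$ (defined with the time-$i$ uncovered set) is \emph{contained} in the analogous event defined with the time-$x_A$ uncovered set, and the latter is adapted to the natural filtration with the right conditional probability lower bound. That monotonicity is the fix you are looking for, and it resolves the bookkeeping without further work.
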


\begin{proof}
Suppose that

$$\Pr[\text{ all roots for clusters in $\mc E_2$ hit $C_v$ before another uncovered cluster }]\ge 1/(m\alpha)^3$$

Let $x_i$ be a random variable denoting the first vertex that the random walk visits in $B_{A_i}$. Then

\begin{align*}
&\Pr[\text{ all roots for clusters in $\mc E_2$ hit $C_v$ before another uncovered cluster }]\\
&= \Pr[\wedge_{i:A_i\in \mc E_2} \text{random walk starting at $x_i$ hits $C_v$ before another uncovered cluster}]\\
\end{align*}

By the Markov property applied after writing the above wedge as a product of conditional probabilities,

\begin{align*}
&\prod_{i:A_i\in \mc E_2} \max_{y\in B_{A_i}} \Pr_y[\text{hits $C_v$ before another uncovered cluster}]\\
&\ge \Pr[\text{ all clusters in $\mc E_2$ hit $C_v$ before another uncovered cluster }]\\
&\ge 1/(m\alpha)^4\\
\end{align*}

Therefore, for all but $10 \log^2 (m\alpha)$ of the $A_i$s,

$$\max_{y\in B_{A_i}} \Pr_y[\text{hits $C_v$ before another uncovered cluster}] \ge 1 - 1/(4\log n)$$

For each of these $A_i$s, apply Proposition \ref{prop:prob-max-to-min} with $C\gets B_{A_i}$, $C'\gets C_v$,

and $\mc F'\gets \{\text{uncovered clusters besides $C_v$}\}$ to show that

$$\min_{y\in B_{A_i}} \Pr_y[\text{hits $C_v$ before another uncovered cluster}] \ge (1 - 1/(4\log n))(1 - 1/(\gamma - 4)) \ge 1 - \beta$$

Therefore, each of these $A_i$s is also in $\mc E_3$. As a result, $|\mc E_3|\ge |\mc E_2| - 10(\log^2 (m\alpha))$ if

$$\Pr[\text{ all roots for clusters in $\mc E_2$ hit $C_v$ before another uncovered cluster }]\ge 1/(m\alpha)^4$$

The other case happens with probability at most $1/(m\alpha)^4$ for each $C_v$. Union bounding over all clusters in $\mc F$ gives the desired result.
\end{proof}

\begin{proposition}\label{prop:leftover-size}
Let $\mc E_{final}$ be the submultiset of $\mc E_3$ that remains in $\mc E$ after Line \ref{line:cutting}. Suppose that the pair $(s_i)_i,(s_j')_j$ is fresh. Then $|\mc E_{final}|\ge |\mc E_3| - \tau (\log n)$.
\end{proposition}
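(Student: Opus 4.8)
Proposition \ref{prop:leftover-size} asserts that Line \ref{line:cutting} removes at most $\tau(\log n)$ clusters from $\mc E_3$, i.e., that the "big-child" pruning in the while loop of Line \ref{line:cutting} does not destroy too much of the multiset. The plan is to bound the number of \emph{iterations} of that while loop, and show each iteration removes at most a controlled amount. Each iteration of the while loop picks a cluster $C\in\mc E$ whose arborescence in $T$ contains more than half of the clusters in the candidate new tree (the tree obtained by attaching the $\mc E$-trees to $C_v$), and then removes from $\mc E$ all clusters lying in $C$'s arborescence. The key observation is that the arborescences of distinct clusters remaining in $\mc E$ after such a removal are \emph{disjoint} from the removed one, and that each removal step is associated with an edge of $T$ incident to $C_v$, or more precisely with one of the $\le\zeta$ root clusters $B_A$ being considered for attachment.

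First I would make precise what "more than half" buys us: since each pruning step removes an arborescence containing a strict majority of the vertices of the current candidate combined tree, the while loop on Line \ref{line:cutting} can execute at most a logarithmic number of times before $\mc E$ is exhausted or no such majority child remains — but this is the wrong accounting for the stated bound. Instead, the right way to count is to bound the number of \emph{distinct root clusters} $B_A$ for $A\in\mc E_3$ that ever get pruned. The clusters in $\mc E_3$ all satisfy property (3): their roots $B_A$ return to $C_v$ before any other uncovered cluster with probability $\ge 1-\beta$. I would argue that all the $B_A$ for $A \in \mc E_3$ that get pruned on Line \ref{line:cutting} are roots whose arborescences collectively must, because of freshness, be limited in number. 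Specifically, each time a distinct root $B_A$'s arborescence is pruned, that arborescence's root cluster $B_A$ was visited immediately after one of the $\le\zeta$ recent visits to $C_v$ (or is an ancestor-root of such); combining with freshness of $(s_i)_i, (s_j')_j$, the number of distinct clusters that can alternate with $C_v$ in this way is at most $\tau$ per "slot", across $\log n$ levels of the arborescence structure captured by Invariant \ref{inv:tree-diam}, giving the $\tau\log n$ bound.

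Concretely, the steps in order: (1) Identify that the while loop on Line \ref{line:cutting} removes, in each iteration, an entire arborescence rooted at some $C$, and that these removed arborescences are pairwise disjoint (once a cluster's arborescence is removed, no future iteration can reference it). (2) Observe that the root $C$ of each removed arborescence, and hence $B_C$, must be among (or an ancestor in $T$ of) the $\le \zeta$ candidate roots $\{B_A : A \text{ visited after one of the last }\zeta\text{ occurrences of }C_v\}$; using Invariant \ref{inv:tree-diam} each such candidate root lies within $\log n$ tree-edges of a cluster visited in the relevant window. (3) Invoke freshness (Proposition \ref{prop:fresh-hp}): the appearances of $C_v$ in $(s_i)_i$ alternate with any fixed cluster at most $\tau$ times in $(s_j')_j$, so the number of distinct clusters that can be "adjacent to $C_v$ in the visitation sense" and hence serve as a pruned root is at most $\tau$, and accounting for the $\log n$ tree-depth multiplier yields at most $\tau\log n$ pruned clusters total. (4) Conclude $|\mc E_{final}| \ge |\mc E_3| - \tau(\log n)$.

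The main obstacle I anticipate is step (2)–(3): carefully relating "arborescence pruned on Line \ref{line:cutting}" to "alternation between $C_v$ and some fixed cluster in the full visitation sequence $(s_j')_j$." The subtlety is that the pruned arborescence's root $C$ need not itself have been visited in the recent $\zeta$-window — it is merely an \emph{ancestor in $T$} of some cluster $A\in\mc E$ that was. So I need Invariant \ref{inv:tree-diam} to control the depth, and I need to argue that each ancestor root along a root-to-leaf path, when it was \emph{added} to $T$ via Line \ref{line:forest-building}, was itself in some earlier window; this chains back through the history of $T$'s construction. Making the charging scheme (each pruned cluster $\mapsto$ a distinct (cluster, alternation-slot) pair, of which there are $\le \tau$ per cluster and the relevant clusters number $\le \log n$ many by tree depth) fully rigorous is the delicate part; the rest is bookkeeping with the invariants already established.
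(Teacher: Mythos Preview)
Your proposal has the right ingredients but assembles them incorrectly, and you explicitly dismiss the one step that actually makes the argument work.

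The paper's proof is short: (a) each iteration of the while loop removes one arborescence of $T$, and that arborescence can contain at most $\tau$ members of the multiset $\mc E_3$; (b) each iteration at least halves the size of the combined tree formed by attaching $\mc E$'s trees to $C_v$, so there are at most $\log n$ iterations. Multiplying gives $\tau\log n$.

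For (a), the point you are missing is that all clusters $A$ lying in a single arborescence share \emph{the same root} $B$. By Property~(2), for each such $A\in\mc E_3$ the root $B=B_A$ is visited between that occurrence of $C_v$ and the next one. Since distinct members of $\mc E_3$ correspond to distinct ones of the last $\zeta$ occurrences of $C_v$ (Property~(1)), having more than $\tau$ of them in one arborescence forces $C_v$ and the fixed cluster $B$ to alternate more than $\tau$ times in $(s_j')_j$, contradicting freshness. Your step~(3) inverts this: freshness does \emph{not} bound the number of distinct clusters that can alternate with $C_v$; it bounds the number of alternations between $C_v$ and any \emph{single} fixed cluster. The common-root observation is what converts the latter into the per-arborescence bound.

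For (b), you wrote that the halving argument ``is the wrong accounting for the stated bound'' and then reached for Invariant~\ref{inv:tree-diam} instead. In fact the halving argument is exactly right and is what the paper uses: the while condition guarantees the removed arborescence contains more than half of the current combined tree, so the combined tree shrinks by a factor of at least two per iteration, giving at most $\log n$ iterations. Invariant~\ref{inv:tree-diam} plays no role here.
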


\begin{proof}
We start by noticing that no arborescence in $T$ can contain more than $\tau$ clusters (possibly repeated) in the multiset $\mc E_3$. Otherwise, by Property (2), $C_v$ would alternate with the root of that arborescence more than $\tau$ times, which contradicts the freshness of the pair $(s_i)_i,(s_j')_j$. Therefore, each iteration of the while loop on Line \ref{line:cutting} can only remove $\tau$ elements from $\mc E_3$. Each iteration reduces the size of the arborescence formed by joining $\mc E$'s arborescences to $C_v$ by at least a factor of 2. Therefore, the while loop can only execute for $\log n$ iterations. Therefore, $|\mc E_{final}|\ge |\mc E_3| - \tau(\log n)$, as desired.
\end{proof}

We now combine all of these propositions into one key observation:

\begin{corollary}\label{cor:e-size}
$\mc E_{final}\ne \emptyset$.
\end{corollary}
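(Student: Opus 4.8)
The plan is to combine the four size-drop propositions just proven (Propositions \ref{prop:e1-size}, \ref{prop:e2-size}, \ref{prop:e3-size}, and \ref{prop:leftover-size}) into a single chain of inequalities, working under the assumption that the pair of sequences $(s_i)_i, (s_j')_j$ is fresh, which holds with high probability by Proposition \ref{prop:fresh-hp}. The statement $\mc E_{final}\ne\emptyset$ only needs to be established whenever Line \ref{line:forest-building} is reached, i.e. whenever the \texttt{if} condition on Line \ref{line:repeat-visits} fires; in that case $i\ge\zeta$, so there genuinely are $\zeta$ consecutive occurrences of $C_v$ to work with, and the four propositions apply.

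First I would lay out the telescoping bound. By Proposition \ref{prop:e1-size}, $|\mc E_1|\ge \zeta-1$; by our choice $\zeta = 100\log^2(n\alpha)$, this is comfortably larger than $256\log(m\alpha)$, so the hypothesis of Proposition \ref{prop:e2-size} is met and $|\mc E_2|\ge |\mc E_1|/4 \ge (\zeta-1)/4$. Then Proposition \ref{prop:e3-size} gives $|\mc E_3|\ge |\mc E_2| - 10\log^2(m\alpha)$, and Proposition \ref{prop:leftover-size}, using freshness, gives $|\mc E_{final}|\ge |\mc E_3| - \tau\log n$. Chaining these: $|\mc E_{final}| \ge (\zeta-1)/4 - 10\log^2(m\alpha) - \tau\log n$. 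Since $\tau = 5(\log(\alpha m))/(\log(\gamma-4))$ with $\gamma = \gammads = 2^{(\log n)^{2/3}}$ a superpolylogarithmic quantity, the term $\tau\log n$ is of order $(\log^2(m\alpha))/(\log n)^{2/3}$, which is $o(\log^2(m\alpha))$; and $10\log^2(m\alpha)$ is a lower-order term compared to $(\zeta-1)/4 = 25\log^2(n\alpha) - 1/4$. Hence the right-hand side is strictly positive for $n$ large, so $|\mc E_{final}|\ge 1$, i.e. $\mc E_{final}$ is nonempty.

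The main point to be careful about — and the one place the plan is more than a mechanical sum — is that Propositions \ref{prop:e2-size} and \ref{prop:e3-size} are probabilistic statements with failure probabilities $1/(m\alpha)^3$ and $1/(m\alpha)^4$ respectively, and these are applied once per distinct cluster $C_v$ over the course of the algorithm. I would union-bound over all clusters in $\mc F$ (at most $n$ of them) and over all invocations to conclude that with probability at least $1 - 1/(m\alpha)^2$ (say), every application of all four propositions succeeds simultaneously, and that the freshness hypothesis of Proposition \ref{prop:leftover-size} also holds. On that high-probability event, the deterministic arithmetic above goes through every time Line \ref{line:forest-building} is reached, so $\mc E_{final}\ne\emptyset$ always. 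I expect no serious obstacle here: the real work was already done in the four propositions, and the corollary is essentially their bookkeeping combination, the only subtlety being to verify that the constants and the growth rate of $\gammads$ make $(\zeta-1)/4$ dominate the two subtracted error terms.
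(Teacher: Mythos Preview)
Your proposal is correct and follows exactly the paper's approach: chain Propositions \ref{prop:e1-size}--\ref{prop:leftover-size} and verify that $\zeta$ dominates the subtracted terms. The paper's proof is a single sentence invoking the same four propositions; you supply the explicit arithmetic and the union-bound discussion (which the paper actually absorbs into the statements of Propositions \ref{prop:e2-size} and \ref{prop:e3-size} themselves). One small imprecision: $10\log^2(m\alpha)$ is not a \emph{lower-order} term relative to $(\zeta-1)/4\approx 25\log^2(n\alpha)$ --- they are the same order, and the inequality relies on the constant comparison (the paper writes $\zeta\ge 100\log^2(m\alpha)$, treating $\log m$ and $\log n$ as interchangeable up to constants).
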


\begin{proof}
Since $\zeta\ge 100\log^2(m\alpha)$, the size reductions due to Propositions \ref{prop:e1-size}, \ref{prop:e2-size}, \ref{prop:e3-size}, and \ref{prop:leftover-size} ensure that $|\mc E_{final}| > 0$, as desired.
\end{proof}

As a result, $\InterclusterEdges$ adds an edge to $T$ every time the if statement on Line \ref{line:repeat-visits} is triggered. Therefore, the $\zeta$ elements that are removed from $(s_i)_i$ can be charged to the addition of one edge to $T$. By Invariant \ref{inv:forest}, $T$ can only have $|\mc F| - 1$ edges, which means that only $(|\mc F|-1)\zeta$ letters can be removed from $(s_i)_i$ over the course of the algorithm. 

\subsubsection{Tying the parts together}

The upside to removing elements of $(s_i)_i$ is that the resulting sequence does not contain more than $\zeta$ consecutive occurrences of any cluster. Since $(s_i)_i,(s_j')_j$ is a fresh pair, $(s_i)_i$ with all consecutive duplicates removed is a $(|\mc F|,\tau)$-Davenport-Schinzel string. Therefore, its length can be bounded using Theorem \ref{thm:davenport-schinzel}. Since each edge in a random spanning tree can be charged to a specific visit in $s_i$ or a visit removed from $s_i$, we are done.

\begin{proof}[Proof of Lemma \ref{lem:well-sep-lev-score}]
We start with the high probability regime in which all of the above propositions hold. By Theorem \ref{thm:edge-marginal},

$$\sum_{e\in E(C,C'), C\ne C'\in \mc F} \frac{\texttt{Reff}_H(e)}{r_e^H} = \textbf{E}[\text{random spanning tree edges in $H$ between clusters in $\mc F$}]$$

By Theorem \ref{thm:aldous-broder}, the number of spanning tree edges between clusters in $\mc F$ is at most the number of clusters appended to the sequence $(s_i)_i$. The number of clusters appended to $(s_i)_i$ is equal to the number removed over the course of the algorithm plus the number left at the end of the algorithm. We bound these two quantities separately:

\underline{Clusters removed from $(s_i)_i$.} Only Line \ref{line:deletions} removes elements from $(s_i)_i$. By Corollary \ref{cor:e-size}, every $\zeta$ deletions from $(s_i)_i$ result in the addition of at least one edge to $T$. By Invariant \ref{inv:forest}, only $|\mc F|-1$ edges can be added to $T$, so at most $\zeta |\mc F|$ elements of $(s_i)_i$ are removed over the course of the algorithm.

\underline{Clusters remaning in $(s_i)_i$ at the end.} At the end of $\InterclusterEdges$, no cluster appears more than $\zeta$ times consecutively in $(s_i)_i$ by the if condition on Line \ref{line:repeat-visits}. Therefore, the sequence $(s_i'')_i$ obtained by removing consecutive duplicates of clusters in $(s_i)_i$ is at most $\zeta$ times shorter than $(s_i)_i$. Furthermore, if $(s_i)_i,(s_j')_j$ is a fresh pair (which by Proposition \ref{prop:fresh-hp} occurs with high probability), $(s_i'')_i$ is an $(|\mc F|,\tau)$-Davenport-Schinzel string. Therefore, by Theorem \ref{thm:davenport-schinzel}, the length of $(s_i'')_i$ is at most $|\mc F|C_0(m,\tau)\le |\mc F|(\muapp/(2\zeta))$. Therefore, the length of $(s_i)_i$ at the end of the algorithm is at most $\zeta|\mc F|m^{o(1)}$.

\underline{Combining the parts.} Therefore, with probablity at least $1 - 1/(m\alpha)^2$, the number of random spanning tree edges in $H$ between clusters in $\mc F$ is at most

$$\zeta |\mc F|(\muapp/(2\zeta)) + \zeta |\mc F|$$

The maximum number of edges that can be added is at most $n^2$, as there are at most $n$ vertices in $H$. Therefore, the expected number of edges is at most

$$\zeta |\mc F|(\muapp/(2\zeta)) + \zeta |\mc F| + n^2/(m\alpha)^2\le \muapp |\mc F|$$

as desired.
\end{proof}

\subsection{Proof of Proposition \ref{prop:separated-large}}

\propseparatedlarge*

\begin{proof}
Let $J := \texttt{Schur}(H_{\mc C},(V(H_{\mc C})\setminus X)\cup(\cup_{C\in \mc F} C))$. By Lemma \ref{lem:monotone-schur} with $I\gets H_{\mc C}$, $S_0\gets Y$, $S_0'\gets \cup_{C\in \mc F} C$, and $S_1\gets V(H_{\mc C})\setminus X$, and $S_2\gets \emptyset$,

$$\sum_{C\in \mc F} c^J(E_J(C,(V(H_{\mc C})\setminus X)))\le \xi$$

By Lemma \ref{lem:outside-schur} with $I\gets H_{\mc C}$, $S_0\gets C$, $S_1\gets C'$, $S_2\gets (V(H_{\mc C})\setminus X)\cup(\cup_{C''\in \mc F, C''\ne C,C'} C'')$, and $S_2'\gets S_2\setminus (V(H_{\mc C})\setminus X)$ and Lemma \ref{lem:well-sep}, for any $C\in \mc F$

\begin{align*}
\sum_{C'\in \mc F\setminus \{C\}} c^J(E_J(C,C')) &\le \sum_{C'\in \mc F\setminus \{C\}} c^I(E_I(C,C'))\\
&= \sum_{e\in E_I(C,C')} c_e^I\\
&\le \frac{1}{(\gamma - 4)R} \sum_{e\in E_I(C,C')} \texttt{Reff}_I(e) c_e^I\\
&= \frac{\Delta_{\mc F}(C)}{(\gamma - 4)R}
\end{align*}

By definition of $S_C$, $S_{\mc F'}(p,C)\subseteq S_C$ for all $C\in \mc F$. Therefore, by Lemma \ref{lem:p-cond},

\begin{align*}
\sum_{C\in \mc F} c^{\mc C}(S_C) &\le \sum_{C\in \mc F} c^{\mc C}(S_{\mc F'}(p,C))\\
&\le \sum_{C\in \mc F} \frac{c^J(E_J(C,\cup_{C'\in \mc F', C'\ne C} C'))}{p}
\end{align*}

Substitution shows that

\begin{align*}
\sum_{C\in \mc F} \frac{c^J(E_J(C,\cup_{C'\in \mc F', C'\ne C} C'))}{p} &= \sum_{C\in \mc F} \frac{c^J(E_J(C,\cup_{C'\in \mc F, C'\ne C} C'))}{p} + \sum_{C\in \mc F} \frac{c^J(E_J(C,V(H_{\mc C})\setminus X))}{p}\\
&\le \left(\sum_{C\in \mc F}\frac{\Delta_{\mc F}(C)}{p(\gamma-4) R}\right) + \frac{\xi}{p}\\
\end{align*}

as desired.
\end{proof}

\subsection{Proof of Proposition \ref{prop:well-spaced}}

\propwellspaced*

\begin{proof}
Consider a vertex $x\in C'\cap S_C$. This vertex exists by the ``Intersection'' condition of ties. Let $J$ be the graph obtained by identifying $C$ to a vertex $s$ and all vertices in clusters of $\mc F\setminus \{C\}$ to a vertex $t$. Then $b_{st}^T L_J^+ b_{sx}\le pb_{st}^T L_J^+ b_{st}$ since $S_C\subseteq S_{\mc F}(p,C)$.

Let $y$ be any vertex in $C'$. Let $H_0 = \texttt{Schur}(H,(\cup_{C''\in \mc F} C'')\cup \{x,y\})$ and let $H_1$ be the graph obtained by identifying $C$ to $s$. By Lemma \ref{lem:well-sep} and the ``Well-Separatedness'' condition of ties, $b_{sx}^T L_{H_1}^+ b_{sx}\ge \beta_0 R = 100 R$ and $b_{sy}^T L_{H_1}^+ b_{sy}\ge 90R$. This means that $r_{sx}^{H_1}\ge 100R$ and that $r_{sy}^{H_1}\ge 90R$.

Let $J_0 = \texttt{Schur}(J,\{s,x,y,t\})$. We now reason about the $s-t$ electrical flow on the edges $\{s,x\}$ and $\{x,y\}$. Since $J$ is obtained by identifying $\mc F\setminus \{C\}$ to $t$, the $\{s,x\}$ and $\{s,y\}$ conductances are uneffected in obtaining $J_0$ from $H_1$, so $r_{sx}^{J_0} = r_{sx}^{H_1}\ge 100R$ and $r_{sy}^{J_0} = r_{sy}^{H_1} \ge 90R$. Let $f\in \mathbb{R}^{E(J_0)}$ be the $s-t$ unit electrical flow vector for $J_0$. Since flow times resistance is the potential drop,

$$f_{sx} = \frac{b_{st}^T L_{J_0}^+ b_{sx}}{r_{sx}^{J_0}}\le \frac{b_{st}^T L_{J_0}^+ b_{st}}{100R}$$

By Rayleigh monotonicity, $b_{xy}^T L_{J_0}^+ b_{xy}\le 10R$. Therefore, either $r_{xy}^{J_0}\le 30R$ or $r_{xt}^{J_0} + r_{ty}^{J_0}\le 30R$ since $r_{sx}^{J_0},r_{sy}^{J_0}\ge 90R$. We start by showing that the latter case does not happen. In the latter case, the potential drop from $s$ to $t$ is

\begin{align*}
b_{st}^T L_J^+ b_{st} &= b_{st}^T L_{J_0}^+ b_{st}\\
&\le b_{st}^T L_{J_0}^+ b_{sx} + r_{xt}^{J_0}f_{xt}\\
&\le pb_{st}^T L_J^+ b_{st} + r_{xt}^{J_0}f_{sx}\\
&\le (1/2 + 3/10)b_{st}^T L_J^+ b_{st}\\
&< b_{st}^T L_J^+ b_{st}\\
\end{align*}

which is a contradiction. Therefore, $r_{xy}^{J_0}\le 30R$. In this case, if $y$'s potential is less than $x$'s, we are done. Otherwise, by flow conservation,

$$f_{xy}\le f_{sx}\le \frac{b_{st}^T L_{J_0}^+ b_{st}}{100R}$$

and the $x-y$ potential drop is therefore at most

$$r_{xy}^{J_0}f_{xy}\le (30R)\frac{b_{st}^T L_{J_0}^+ b_{st}}{100R}\le (3/10)b_{st}^T L_{J_0}^+ b_{st}$$

In particular, $y$'s normalized potential is at most $3/10$ greater than $x$'s, as desired.
\end{proof}

\section{Deferred proofs for Section \ref{sec:random-walk}}

\subsection{Bound on the number of random walk steps before covering a neighborhood}\label{sec:cover-time}

We prove Lemma \ref{lem:walk-bound} in this section. We exploit the Matthews trick in this proof \cite{M88}:

\lemwalkbound*

\begin{proof}[Proof of Lemma \ref{lem:walk-bound}]
Let $B$ be the set of vertices in $I$ with $I$-effective resistance distance at most $R$ of $u$. For any two vertices $x,y\in B$, the random walk from $x$ to $y$ in $I$ traverses $f$ from $u$ to $v$ at most $\texttt{Reff}_I(x,y)c_f^I$ times by Theorem \ref{thm:edge-visits}. By the triangle inequality, this is at most $2Rc_f^I$.

Pick a random bijection $\pi:\{1,2,\hdots,|B|-1\}\rightarrow B\setminus \{u\}$ of the vertices in $B\setminus \{u\}$. All traversals across $f$ from $u\rightarrow v$ within distance $R$ of $S$ occur between the first visit to $u$ and the last first visit to any vertex in $B$. Let $\tau_i$ be the random variable denoting the number of $u\rightarrow v$ $f$ traversals before the last first visit to $\{\pi_1,\hdots,\pi_i\}$. Let $\tau_0$ be the the first visit to $u$. Notice that for all $i > 0$,

\begin{align*}
\textbf{E}[\tau_i - \tau_{i-1}]&\le \Pr_{\pi}[t_{\pi_i} < t_{\{\pi_{i-1},\hdots,\pi_1\}}] \max_{x,y\in B} \textbf{E}[\text{ $f$ traversals from $x\rightarrow y$ }]\\
&\le \Pr_{\pi}[t_{\pi_i} < t_{\{\pi_{i-1},\hdots,\pi_1\}}] 2Rc_f^I\\
&\le \frac{1}{i} 2Rc_f^I\\
\end{align*}

Summing over all $i$ shows that

$$\textbf{E}[K] = \textbf{E}[\tau_i] = \tau_0 + \sum_{i=1}^{|B|-1} \textbf{E}[\tau_i - \tau_{i-1}] \le 0 + O(\log n) Rc_f^I$$

as desired.
\end{proof}

\section{Deferred proofs for Sections \ref{sec:choose-parts} and \ref{sec:fix-reduction}}

\subsection{Ball splitting after deletions}

We now prove the following:

\lemballsplit*

Our proof uses Lemma \ref{lem:well-sep-lev-score} to reduce to the case where both (1) $\mc D$ just consists of one cluster $C$ and (2) $F = (\partial C)\cup F'$, where $F'\subseteq E(C)$. We deal with this case using a sparsification technique that peels many low-stretch spanners off of the graph to show that the effective resistance between two particular vertices is low \cite{KAP15}.

\subsubsection{The case where $\mc D$ just contains one cluster and $F = (\partial C)\cup F'$}

Before proving this result, we discuss low-stretch sparse spanners \cite{ADDJS93}:

\begin{theorem}[\cite{ADDJS93}]\label{thm:sp-spanner}
Any unweighted graph $G$ has a subgraph $H$ with the following two properties:

\begin{itemize}
\item (Distortion) For any two vertices $u,v\in V(G)$, $d_H(u,v)\le 2(\log n)d_G(u,v)$, where $d_I$ is the shortest path metric on the vertices of the  graph $I$.
\item (Sparsity) $H$ has at most $100 n\log n$ edges.
\end{itemize}
\end{theorem}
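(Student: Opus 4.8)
The final statement is the Alon--Karp--Peleg--West spanner theorem (Theorem~\ref{thm:sp-spanner}), stated here as a black box: every unweighted graph $G$ has a subgraph $H$ with additive-multiplicative distortion $d_H(u,v) \le 2(\log n) d_G(u,v)$ and sparsity $|E(H)| \le 100 n \log n$. The plan is to prove this via the standard greedy-spanner / cluster-growing construction, since the paper only needs the $O(\log n)$-stretch, $O(n\log n)$-edge regime and does not need the optimal constants.

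First I would present the clustering-based construction. Partition $V(G)$ into clusters of radius at most $\log n$ by a BFS-style process: repeatedly pick an unclustered vertex $v$, grow a BFS ball around $v$ one layer at a time, and stop expanding at the first radius $t$ at which the number of edges leaving the current ball (into still-unclustered vertices) is less than the number of edges inside the ball. Add a BFS tree of the final ball to $H$, together with, for each other cluster, a single edge connecting it to the current cluster if one exists. A volume-doubling argument shows the process stops within $\log_2 m \le 2\log n$ layers (each layer that fails to stop at least doubles the edge count), so each cluster has radius $\le 2\log n$ in $H$; and the stopping condition guarantees the number of inter-cluster edges retained is at most the number of intra-cluster edges, which telescopes to $O(n)$ total, while the BFS trees contribute $\le n - 1$ edges total. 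This gives sparsity $O(n \log n)$ after the (at most $\log n$) rounds. Actually a cleaner route, which I would use to get the clean constants in the theorem as stated, is the greedy spanner: process edges of $G$ in arbitrary order and add $\{u,v\}$ to $H$ iff the current $H$ has no $u$--$v$ path of length $\le 2\log n - 1$.

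The key steps for the greedy version are: (1) \emph{Distortion.} By construction, if $\{u,v\}$ is discarded then $H$ already contains a $u$--$v$ path of length $\le 2\log n - 1 < 2(\log n)\cdot 1$; chaining this over the edges of a shortest $G$-path between arbitrary $u,v$ gives $d_H(u,v) \le (2\log n - 1) d_G(u,v) \le 2(\log n) d_G(u,v)$. (2) \emph{Sparsity.} The graph $H$ has girth $> 2\log n$: any cycle in $H$ would give, for its last-added edge $\{u,v\}$, a $u$--$v$ path of length $\le$ (cycle length $-1$) using earlier-added edges, contradicting the admission rule unless the cycle has length $> 2\log n$. A standard Moore-bound argument then bounds the number of edges of a graph on $n$ vertices with girth $> g$: iteratively deleting vertices of degree $\le 2$ (which changes neither the girth bound violation nor reduces $|E| - |V|$) reduces to a graph of minimum degree $\ge 3$, on which a BFS from any vertex to depth $\lfloor g/2 \rfloor$ must reach more than $n$ vertices if the average degree were too large; quantitatively this yields $|E(H)| \le n + n^{1 + O(1/\log n)} = O(n\log n)$, and tracking constants gives the claimed $100 n \log n$ bound with room to spare.

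The main obstacle is getting the constants and the exact form of the Moore-type bound right: the girth-versus-edges inequality $|E| \le n^{1 + 1/\lfloor g/2\rfloor}$ holds only after the degree-reduction step, and one must check that plugging $g = 2\log n + 1$ gives $n^{1 + 1/\log n} = n \cdot 2 = O(n)$, so that even multiplying by the spanner construction's slack the total stays below $100 n \log n$. I would also need to state the convention $\log = \log_2$ (or $\log_e$) consistently so that $n^{1/\log_2 n} = 2$, and handle small $n$ separately where the bound is trivial. Once the girth bound and the Moore bound are pinned down, both the distortion and sparsity claims follow immediately from the admission rule, and no further graph-theoretic machinery is needed.
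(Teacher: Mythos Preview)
The paper does not prove Theorem~\ref{thm:sp-spanner}; it is quoted from \cite{ADDJS93} and used as a black box inside the proof of Proposition~\ref{prop:one-ball}. So there is no ``paper's own proof'' to compare against. Your sketch is essentially the standard greedy-spanner argument from that reference and is correct in outline: the admission rule immediately gives the distortion bound, and the girth lower bound $> 2\log n$ combined with a Moore-type extremal bound gives sparsity.

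One small imprecision worth tightening: the reduction ``iteratively delete vertices of degree $\le 2$'' is not quite the right device for the Moore bound. The clean statement is that any graph with $|E| > n^{1+1/k}$ has average degree $> 2n^{1/k}$, hence contains a subgraph of minimum degree $> n^{1/k}$; a BFS to depth $k$ in that subgraph then reaches more than $n$ vertices provided the girth exceeds $2k$, a contradiction. With $k = \lfloor \log_2 n \rfloor$ this yields $|E(H)| \le n^{1+1/\lfloor \log_2 n\rfloor} \le 2n$, which is in fact stronger than the $100\,n\log n$ stated in the theorem (the paper's constants are deliberately loose). Your computation $n^{1/\log_2 n} = 2$ is correct once $\log = \log_2$ is fixed, and the small-$n$ caveat is harmless.
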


We use this as a tool to sparsify $G$ in a way that ignores the deleted edges in the deleted set $F$.

\begin{proposition}\label{prop:one-ball}
Consider a resistance-weighted graph $G$. Let $C$ be a cluster with effective resistance diameter $R$ in $G$. Let $F\subseteq E(G)$ and suppose that $k = |F|$. Furthermore, suppose that $F$ is the union of $\partial C$ and a set of edges $F'$ with both endpoints in $C$.

Let $H = G\backslash F$. Then $C$ is the union of at most $O(k\log^6 n)$ clusters in $H$ that have effective resistance diameter $O(R\log^6 n)$. 
\end{proposition}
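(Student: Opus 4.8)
The plan is to replace the ambient graph of $C$ by the Schur complement $J := \texttt{Schur}(G,C)$, which has $V(J) = C$ and (by the defining property of Schur complements, as used throughout Section~\ref{sec:fix-preliminaries}) satisfies $\texttt{Reff}_J(u,v) = \texttt{Reff}_G(u,v)$ for all $u,v \in C$; hence $J$ is a graph on at most $n$ vertices with effective resistance diameter $R$. The key structural observation is that, although $H$ (which equals $(G\setminus F)[C] = G[C]\setminus F'$) is obtained from $J$ by removing the ``extra'' edges $G_{\mathrm{ext}}$ created when eliminating $V(G)\setminus C$, together with the at most $k$ internal edges of $F'$, this perturbation is \emph{low rank}: an eliminated outside vertex can only create an edge touching a vertex of $C$ that is itself an endpoint of a boundary edge, so $L_J - L_{G[C]}$ is a graph Laplacian supported on the set $V_\partial$ of the $\le|\partial C|$ endpoints of $\partial C$ lying in $C$. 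Writing $L_J = L_H + M$ with $M := L_{G_{\mathrm{ext}}} + L_{F'}\succeq 0$, we get $\operatorname{rank}(M)\le |V_\partial| + |F'| \le |\partial C| + |F'| = |F| = k$. It therefore suffices to prove: if a graph $J$ on $N\le n$ vertices has effective resistance diameter $R$ and $L_J = L_H + M$ with $M\succeq 0$ of rank $r=O(k)$, then $V(J)$ is a union of $O(r\log^6 n)$ clusters, each of $H$-effective resistance diameter $O(R\log^6 n)$.

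To prove this, I would first reduce the number of resistance scales: edges of $J$ of resistance larger than $n^4R$ may be deleted up front (by a flow-energy argument they change every within-$R$ effective resistance by a factor $1+o(1)$, and they cannot disconnect a graph of diameter $R$), and pairs joined by resistance smaller than $R/n^4$ may be pre-clustered, so that all remaining edges lie in $L = O(\log n)$ geometric weight classes $W_1,\dots,W_L$. Treating each $W_\ell$ as an unweighted graph and applying Theorem~\ref{thm:sp-spanner}, I would iteratively peel off spanners: run $T := \Theta(r\log^4 n)$ rounds, each extracting one spanner of each surviving weight class, yielding pairwise edge-disjoint subgraphs $S_1,\dots,S_T\subseteq J$ and a leftover graph. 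By first contracting the $\le k$ vertices $V_\partial$ that support $G_{\mathrm{ext}}$ and deleting $F'$, then pulling back, one can arrange that each $S_t$ is an edge-disjoint subgraph of $H$. Removing $M$ destroys at most $O(r)$ of these $T$ spanners in the sense of cutting a given $u$--$v$ connection, so for any pair $u,v$ not genuinely separated by $M$, at least $T - O(r) = \Omega(r\log^4 n)$ of the $S_t$ each contribute an edge-disjoint low-stretch $u$--$v$ path; routing one unit of $u$--$v$ flow evenly across them and composing in parallel bounds $\texttt{Reff}_H(u,v)$ by $O(R\log^{O(1)}n)$, with the exponent accumulated from the $L=O(\log n)$ scales and the $O(\log n)$ spanner stretch.

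Given this, the clusters are built greedily: repeatedly pick an uncovered vertex $v_0$, use a $\BallGrow$-style procedure to declare the $H$-effective resistance ball of radius $\Theta(R\log^6 n)$ around $v_0$ a new cluster, and remove it. By the previous step, such a ball fails to absorb the rest of its component only when removing $M$ has (nearly) electrically disconnected two regions, and every such near-disconnection is charged to one of the $O(r)=O(k)$ rank-one pieces of $M$ --- a single vertex of $V_\partial$ or a single edge of $F'$, each behaving like the deletion of a near-bridge, which across the $L$ weight classes can split off at most $\operatorname{polylog}(n)$ additional pieces. Summing the $O(k)$ charges gives $O(k\log^6 n)$ clusters, while the ball radius gives the $O(R\log^6 n)$ diameter bound.

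The main obstacle is making the spanner step honest: low-stretch spanners preserve hop distance, not effective resistance, so a single surviving spanner yields only a $u$--$v$ path of resistance (stretch)$\times$(shortest resistance path), which can far exceed $R$; the argument must genuinely exploit having $\Omega(k\operatorname{polylog}n)$ edge-disjoint such paths in parallel, and must carefully track how deleting the $O(k)$ high-leverage (near-bridge) pieces of $M$ interacts with the spanner decomposition, so that exactly the pairs separated by a real near-bridge end up in different clusters while every other pair lands inside one $O(R\operatorname{polylog}n)$-radius ball. Closing the count with only $O(k)$ bad charges --- rather than one per edge of the possibly $\binom{k}{2}$-edge graph $G_{\mathrm{ext}}$ --- is precisely what forces the whole reduction to go through $\operatorname{rank}(M)$ and the $\le k$ boundary vertices instead of through a naive edge count.
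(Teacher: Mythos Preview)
Your initial reduction is sound and matches the paper: passing to $J=\texttt{Schur}(G,C)$ and observing that $L_J-L_{G[C]\setminus F'}$ is supported on the $O(k)$ vertices $V_\partial\cup V(F')$ is exactly the structure the paper exploits (it keeps $V(F)$ in every Schur complement and works in $I_i=\texttt{Schur}(G,V(F)\cup K)$). The gap is in the spanner step.

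Your central claim, that ``removing $M$ destroys at most $O(r)$ of these $T$ spanners in the sense of cutting a given $u$--$v$ connection,'' does not follow from rank and is in general false. The edge set of $M$ can have $\Theta(k^2)$ edges (a clique on $V_\partial$), and nothing prevents a \emph{single} $M$-edge from lying on the $u$--$v$ path of every spanner you peeled; removing that one edge then severs all $T$ spanner connections simultaneously. The suggested workaround, ``contract $V_\partial$, delete $F'$, then pull back,'' does not repair this: a path through the contracted vertex lifts to a walk that enters $V_\partial$ at one vertex and leaves at another, which may require exactly a $G_{\text{ext}}$-edge to stitch together. So you cannot peel spanners from $J$ and then subtract $M$; you must peel from $J\setminus E(M)$ directly.

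Once you do that, the second gap appears: bucketing by raw resistance is not enough. The paper buckets edges of $I_i\setminus F$ by the \emph{leverage ratio} $r_e/\texttt{Reff}_{I_i}(e)\approx 2^j$, so that in the chosen bucket $B_{j^*}$ every high-leverage vertex has $\Omega(2^{j^*}/\log n)$ incident bucket edges. After peeling $\Theta(2^{j^*}/\log^3 n)$ spanners, any surviving bucket edge has its endpoints joined by that many edge-disjoint $F$-free paths, each of resistance $O(2^{j^*}R\log n)$, yielding $\texttt{Reff}_{G\setminus F}=O(R\log^4 n)$. This certifies that a large fraction of the current center set $K$ is dominated, so $K$ shrinks by a $(1-1/O(\log n))$ factor; iterating $O(\log^2 n)$ times drives $|K|$ down to $O(k\log n)$, and the chain of $O(\log^2 n)$ pairings gives the $O(R\log^6 n)$ diameter. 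Without the leverage bucketing you have no lower bound on the number of parallel $F$-free paths, and without the iterative shrinking of $K$ you have no mechanism to turn ``most pairs are close'' into ``$O(k\,\mathrm{polylog})$ centers suffice.'' Both ingredients are missing from your outline.
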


\begin{proof}[Proof of Proposition \ref{prop:one-ball}]
The following iterative algorithm, $\PartitionBall$, partitions $C$ into the desired clusters.

\begin{algorithm}[H]
\SetAlgoLined
\DontPrintSemicolon

    \KwData{A graph $G$, a cluster $C$, and a set of edges $F$ to delete from $G$}

    \KwResult{A partition $C_1\cup\hdots\cup C_{\tau} = C$}

    $H\gets G\backslash F$\;

    \tcp{set of current centers}
    $K\gets V(C)$\;

    $i\gets 0$\;

    \While{$|K| \ge 31k\log n$}{

        $I_i\gets \texttt{Schur}(G,V(F)\cup K)$\;

        $K'\gets K\setminus V(F)$\;

        \ForEach{$j$ from 0 to $2 \log n$}{

            $B_j\gets $ the set of edges $e\in E(I_i)\setminus F$ with $2^j \texttt{Reff}_{I_i}(e)\le r_e^{I_i}\le 2^{j+1}\texttt{Reff}_{I_i}(e)$\;

        }

        $j^*\gets $ the index $j\in \{0,1,\hdots,2\log n\}$ for which the number of vertices incident with leverage score at least $1/(2\log n)$ in $B_j$ is maximized\;

        $I_i'\gets I_i[B_{j^*}]$\;

        \ForEach{of $\lfloor 2^{j^*}/(64\log^3 n)\rfloor$ rounds}{

            $J\gets $ a $2(\log n) - 1$ spanner of $I_i'$ as per Theorem \ref{thm:sp-spanner}\;

            $I_i'\gets I_i'\setminus E(J)$\;

        }

        $K\gets $ the minimum dominating set of the unweighted version of $I_i'$\;

        $i\gets i+1$\;

    }

    \Return Voronoi diagram of $C$ with respect to $K$ in the metric $\texttt{Reff}_{G\setminus F}$\;

\caption{$\PartitionBall(G,C,F)$, never executed}
\end{algorithm}

Now, we analyze this algorithm. We start by showing that the while loop only executes $O(\log^2 n)$ times. Start by noticing that $|K'|\ge 30k\log n\ge |K|/2$ because $|V(F)|\le 2k$ and $|K|\ge 31k\log n$.

For an edge $e\in I_i$, let $z_e = \texttt{lev}_{I_i}(e)$. For a set $S\subseteq E(I_i)$, let $z(S) = \sum_{e\in S} z_e$. The total $z$-weight of edges incident with each vertex in $I_i$ is at least 1 since the total leverage score incident with a vertex in any graph is at least 1. Since there are only $2\log n$ $B_j$s, at least one contributes $1/(2\log n)$ to the $z$-weight incident with each vertex. Therefore, the number of vertices in $K'$ with more than $1/(2\log n)$ incident $z$-weight in $I_i'$ is at least $|K'|/(2\log n)\ge (|K| + k)/(2\log n)$ before the second ForEach loop. Therefore, by the bucketing resistance lower bound, $|E(I_i')|\ge (|K| + k)2^{j^*}/(2\log n)$ before the second ForEach loop.

The spanners created during the second ForEach loop delete at most $(2^{j^*}/(64\log^3 n))(2(|K| + k)\log n) = (|K| + k)2^{j*}/(32\log^2 n)$ edges from $I_i'$. Each of the $(|K| + k)/(2\log n)$ high $z$-weight vertices is incident with at least $2^{j^*}/(2\log n)$ edges in $B_{j^*}$. Each edge is incident with two vertices. Notice that the number of vertices that can have all of their incident edges deleted is at most

$$\frac{2((|K| + k)2^{j*}/(32\log^2 n))}{2^{j*}/(2\log n)} = (|K| + k)/(8\log n)$$

so the number of vertices with an incident edge after the second ForEach loop is at least $(|K| + k)/(4\log n)$. The spanners are edge-disjoint, so together they contain at least $2^{j*}/(64\log^3 n)$ disjoint paths with at most $2\log n$ edges between the endpoints of each edge of $I_i'$. In particular, each path has resistance length at most $(2\log n)2^{j*+1}R$ by the bucketing resistance upper bound. Therefore, the effective resistance between the endpoints of any edge left over in $I_i'\backslash F$ is at most $256(\log^4 n) R$.

Since the while loop only executes when $|K|\ge 31k\log n$, the number of vertices in $K'$ with an incident edge in $I_i'$ after the second ForEach loop is at least $8k$, for a total of at least $4k$ disjoint pairs. Removing one side of each $K$ pair certifies that there is a dominating set with size at most $|K| - (|K| + k)/(8\log n) + 2k\le (1 - 1/(16\log n))|K|$. Therefore, the algorithm will only take $16\log^2 n$ iterations.

At the end of those iterations, for every vertex in $C$ there is a connected sequence of $16\log^2 n$ pairs to some vertex in $K$ with each pair having $G\backslash F$ effective resistance distance at most $256(\log^4 n) R$. By the triangle inequality, this means that every vertex in $C$ is within $G\backslash F$ distance $O(R \log^6 n)$ of some vertex in $K$ at the end of the algorithm. This is the desired result.
\end{proof}

\subsubsection{Sparsification}

We now prove a version of spectral sparsification that allows a large part of the graph to not be changed. We are sure that this is known, but provide a proof for completeness.

\begin{proposition}[Subset sparsification]\label{prop:subset-sparsify}
Let $G$ be a weighted graph and let $F\subseteq E(G)$. Let $\ep\in (0,1)$. Then, there exists a graph $H$ with $V(H) = V(G)$ and the following additional properties:

\begin{itemize}
\item (Only $F$ modified) The weights of edges in $G\setminus F$ are not modified.
\item (Sparsity of $F$) $|E(H)\cap F|\le (8(\log n)/\ep^2)(1 + \sum_{e\in F} \texttt{lev}_G(e))$
\item (Spectral approximation) For any vector $d\in \mathbb{R}^n$,

$$(1 - \ep) d^T L_G^+ d \le d^T L_H^+ d \le (1 + \ep) d^T L_G^+ d$$
\end{itemize}
\end{proposition}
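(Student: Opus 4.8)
The plan is to prove Proposition~\ref{prop:subset-sparsify} by a standard importance-sampling argument applied only to the edges of $F$, keeping the edges of $G\setminus F$ untouched. First I would decompose the Laplacian as $L_G = L_{G\setminus F} + \sum_{e\in F} c_e b_e b_e^T$ and think of the edges of $G\setminus F$ as a ``fixed background.'' The key move is to work in the transformed space: conjugate everything by $L_G^{+/2}$, writing $M_e := c_e L_G^{+/2} b_e b_e^T L_G^{+/2}$ for $e\in F$, so that $\sum_{e\in F} M_e = \Pi - L_G^{+/2} L_{G\setminus F} L_G^{+/2} =: \Pi - N$, where $\Pi$ is the projection onto $\mathrm{im}(L_G)$. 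Note $\mathrm{tr}(\sum_{e\in F} M_e) = \sum_{e\in F} c_e b_e^T L_G^+ b_e = \sum_{e\in F}\texttt{lev}_G(e)$, which is where the sparsity bound will come from.

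Next I would sample: keep each edge $e\in F$ independently with probability $p_e := \min(1, (8(\log n)/\ep^2)\,\texttt{lev}_G(e))$ and reweight its conductance by $1/p_e$ if kept, leaving all of $G\setminus F$ and all edges with $p_e=1$ exactly as they are. This immediately gives the ``Only $F$ modified'' property, and the expected number of retained $F$-edges is $\sum_{e\in F} p_e \le (8(\log n)/\ep^2)\sum_{e\in F}\texttt{lev}_G(e)$; a Chernoff bound (together with the trivial bound that at most all of $F$ survives, and the $+1$ to absorb the case where the sum of leverage scores is tiny) upgrades this to the stated high-probability sparsity bound $|E(H)\cap F|\le (8(\log n)/\ep^2)(1+\sum_{e\in F}\texttt{lev}_G(e))$. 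For the spectral guarantee, the retained-and-reweighted edges contribute $\sum_{e\in F,\ e\text{ kept}} (1/p_e) M_e$, an unbiased estimator of $\sum_{e\in F} M_e$; adding the deterministic $N$ from the untouched part, we get $L_G^{+/2} L_H L_G^{+/2} = N + \sum_{e\in F,\text{ kept}}(1/p_e)M_e$, whose expectation is exactly $\Pi$ (on $\mathrm{im}(L_G)$). A matrix Chernoff / matrix Bernstein bound (e.g.\ Tropp's inequality) applied to the sum of independent PSD matrices $\{(1/p_e)M_e\}$, each with operator norm at most $M_e/p_e \preceq (\ep^2/(8\log n))\Pi$ by the choice of $p_e$ and $\|M_e\|\le \texttt{lev}_G(e)$, shows that with high probability $\|N + \sum_{e\text{ kept}}(1/p_e)M_e - \Pi\|\le \ep$ on $\mathrm{im}(L_G)$; translating back through the conjugation gives $(1-\ep)d^T L_G^+ d \le d^T L_H^+ d\le (1+\ep)d^T L_G^+ d$ for all $d$ (using that $d^T L_H^+ d = (L_G^{+/2}d)^T (L_G^{+/2}L_H L_G^{+/2})^+ (L_G^{+/2}d)$ when restricted to $\mathrm{im}(L_G)$, and both sides vanish on the kernel).

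The main obstacle is a subtle one: one must verify that the matrix concentration bound actually yields a \emph{two-sided} spectral approximation even though only the $F$-part is randomized and $N$ is a fixed PSD matrix with $N\preceq \Pi$. The cleanest way around this is exactly the reduction above --- prove $\|(N + \sum_{e\text{ kept}}(1/p_e)M_e) - \Pi\|_{\mathrm{op}}\le\ep$ as operators on $\mathrm{im}(L_G)$, which automatically controls both the upper and lower Loewner bounds on $L_G^{+/2}L_H L_G^{+/2}$ and hence (after inverting, valid since $\ep<1$ keeps it well-conditioned on the image) on $L_G^{+/2}L_H^+ L_G^{+/2}$. A secondary fussy point is handling edges with $p_e=1$ (kept deterministically, contribute zero variance, fine) and the degenerate case where $\sum_{e\in F}\texttt{lev}_G(e)$ is so small that the Chernoff bound needs the additive $+1$ slack --- both are routine once flagged. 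I would also remark that this statement is essentially the Spielman--Srivastava effective-resistance sampling theorem \cite{SS08} restricted to a subset of edges, so one could alternatively cite that and note that the proof there only randomizes the sampled edges; but giving the self-contained matrix-Chernoff argument is cleaner for the ``only $F$ modified'' refinement.
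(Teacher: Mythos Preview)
Your proposal is correct and follows essentially the same approach as the paper: both conjugate by $L_G^{+/2}$, leave the edges of $G\setminus F$ as a fixed deterministic background, importance-sample the edges of $F$ by leverage score, and invoke Tropp's matrix Chernoff bound to get the two-sided spectral approximation. The only cosmetic difference is that the paper uses $q = \Theta((\log n)/\ep^2)$ independent trials per edge (each keeping $e\in F$ with probability $\texttt{lev}_G(e)$ and averaging), whereas you use a single Bernoulli sample with clamped probability $p_e=\min(1,(8(\log n)/\ep^2)\texttt{lev}_G(e))$; both are standard variants of Spielman--Srivastava sampling and yield the same guarantees.
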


The proof is a simple application of matrix Chernoff bounds, which we state here:

\begin{theorem}[Theorem 1.1 in \cite{T12}]\label{thm:matrix-chernoff}
Consider a finite sequence $\{X_k\}_k$ of independent, random, self-adjoint matrices with dimension $n$. Assume that each random matrix satisfies

$$X_k\succeq 0$$

and

$$\lambda_{\max}(X_k)\le R$$

Define

$$\mu_{\min} := \lambda_{\min}(\sum_k \textbf{E}[X_k])$$

and

$$\mu_{\max} := \lambda_{\max}(\sum_k \textbf{E}[X_k])$$

Then

$$\Pr[\lambda_{\min}(\sum_k X_k) \le (1 - \delta)\mu_{\min}] \le n \left(\frac{e^{-\delta}}{(1-\delta)^{1-\delta}}\right)^{\mu_{\min}/R}$$

and

$$\Pr[\lambda_{\max}(\sum_k X_k) \le (1 + \delta)\mu_{\max}] \le n \left(\frac{e^{\delta}}{(1+\delta)^{1+\delta}}\right)^{\mu_{\max/R}}$$
\end{theorem}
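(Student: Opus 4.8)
\textbf{Proof plan for Proposition \ref{prop:subset-sparsify} (subset sparsification).}

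The plan is to apply the matrix Chernoff bound (Theorem \ref{thm:matrix-chernoff}) in the usual importance-sampling framework for spectral sparsification, but to sample \emph{only} the edges in $F$ and keep the edges in $G \setminus F$ untouched. First I would normalize: write $L_G = L_G^{+/2} L_G L_G^{+/2}$ conjugation so that the relevant projection matrix $\Pi = L_G^{+/2} L_G L_G^{+/2}$ is the identity on the image of $L_G$. For each edge $e$, let $\psi_e := L_G^{+/2} \sqrt{c_e}\, b_e$, so that $\sum_{e \in E(G)} \psi_e \psi_e^T = \Pi$ and $\|\psi_e\|_2^2 = c_e\, \texttt{Reff}_G(e) = \texttt{lev}_G(e)$. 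Split this sum as $\sum_{e \in G \setminus F} \psi_e \psi_e^T + \sum_{e \in F} \psi_e \psi_e^T =: M_{\text{keep}} + M_F$, where $M_F$ has trace $\sum_{e \in F} \texttt{lev}_G(e)$.

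Next I would sample: take $N := \lceil (8 (\log n)/\ep^2)(1 + \sum_{e \in F} \texttt{lev}_G(e)) \rceil$ independent samples, each picking edge $e \in F$ with probability $p_e := \texttt{lev}_G(e) / (\sum_{f \in F} \texttt{lev}_G(f))$, and for the $k$-th sample set $X_k := \frac{1}{N p_{e_k}} \psi_{e_k} \psi_{e_k}^T$, plus we add the fixed matrix $M_{\text{keep}}$ deterministically. Then $\textbf{E}[\sum_k X_k] = M_F$, so $\textbf{E}[M_{\text{keep}} + \sum_k X_k] = \Pi$, and the new graph $H$ has the edges of $G \setminus F$ with their original weights together with, for each sampled edge $e_k$, an edge of conductance $\frac{c_{e_k}}{N p_{e_k}}$ (with parallel copies merged); thus $L_H^{+/2} L_{G} ...$—more precisely $L_G^{+/2} L_H L_G^{+/2} = M_{\text{keep}} + \sum_k X_k$ on the image of $L_G$. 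Each $X_k$ is PSD with $\lambda_{\max}(X_k) = \frac{\|\psi_{e_k}\|_2^2}{N p_{e_k}} = \frac{\sum_{f \in F}\texttt{lev}_G(f)}{N} =: R$. The shifted matrices $M_{\text{keep}} + \sum_k X_k$ have $\mu_{\min}, \mu_{\max}$ equal to the smallest/largest eigenvalue of $\Pi$ restricted to its image, which is $1$; and since $M_{\text{keep}} \succeq 0$ deterministically, adding it only helps the lower tail. Plugging $\mu_{\min} = \mu_{\max} = 1$ and $R = \frac{\sum_{f\in F}\texttt{lev}_G(f)}{N} \le \frac{\ep^2}{8 \log n}$ into Theorem \ref{thm:matrix-chernoff} with $\delta = \ep$ gives failure probability at most $n \cdot \exp(-\Omega(\ep^2 / R)) \le n \cdot \exp(-\Omega(\log n)) = 1/\poly(n)$ using $e^{-\delta}/(1-\delta)^{1-\delta} \le e^{-\delta^2/2}$ and $e^{\delta}/(1+\delta)^{1+\delta} \le e^{-\delta^2/3}$ for $\delta \in (0,1)$. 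On this good event, $(1-\ep)\Pi \preceq L_G^{+/2} L_H L_G^{+/2} \preceq (1+\ep)\Pi$ on $\mathrm{im}(L_G)$, which after conjugating back by $L_G^{+/2}$ and using that $d \in \mathrm{im}(L_G)$ whenever $d^T L_G^+ d$ is finite (the projection $\Pi$ acts as identity there) yields the stated two-sided bound $d^T L_H^+ d \in [(1-\ep), (1+\ep)] \, d^T L_G^+ d$ for all $d$; the ``Only $F$ modified'' property is immediate from the construction, and ``Sparsity of $F$'' holds since at most $N$ distinct edges of $F$ appear in $H$.

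The main obstacle is bookkeeping around the pseudoinverse and the fact that $H$ may not be connected even if $G$ is (sampling could drop all edges in $F$ crossing some cut that $G \setminus F$ also does not cross—but this cannot happen for cuts that $G \setminus F$ spans, and for cuts where all of $F$'s edges are needed the spectral bound forces enough to be kept, so connectivity on $\mathrm{im}(L_G)$ is preserved with high probability): I would handle this cleanly by always working in the conjugated space $L_G^{+/2}(\cdot)L_G^{+/2}$ and restricting all matrix inequalities to $\mathrm{im}(L_G)$, where $\Pi$ is the identity, so that the eigenvalue computations $\mu_{\min} = \mu_{\max} = 1$ are exact and the connectivity issue never needs to be argued separately. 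A second, minor technical point is the merging of parallel sampled copies of the same edge $e$; since conductances of parallel edges add, the merged graph realizes exactly $\sum_k X_k$, so this is purely cosmetic. One also slightly wastes the constant: I set $N$ using $8(\log n)/\ep^2$ times $(1 + \sum_{e\in F}\texttt{lev}_G(e))$ rather than $\sum_{e\in F}\texttt{lev}_G(e)$ alone precisely so that $R \le \ep^2/(8\log n)$ holds even when $\sum_{e\in F}\texttt{lev}_G(e) < 1$, which is what the statement asks for.
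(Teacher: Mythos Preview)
First, a framing note: Theorem~\ref{thm:matrix-chernoff} is quoted from Tropp and carries no proof in the paper. Your proposal is actually a proof plan for Proposition~\ref{prop:subset-sparsify}, the result that invokes the theorem; I compare your plan to the paper's proof of that proposition.

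Your strategy---conjugate by $L_G^{+/2}$, restrict to $\mathrm{im}(L_G)$, importance-sample only edges in $F$, then apply matrix Chernoff---is the right shape and close to the paper's. However, there is a real gap in how you fold the deterministic piece $M_{\text{keep}}$ into Theorem~\ref{thm:matrix-chernoff}. You simultaneously claim $\mu_{\min}=\mu_{\max}=1$ (which requires $M_{\text{keep}}$ to appear among the summands, since $\textbf{E}\bigl[\sum_k X_k\bigr]=M_F$, not $\Pi$) and $R=\tfrac{\sum_{f\in F}\texttt{lev}_G(f)}{N}$ (which bounds only the sampled $X_k$). If $M_{\text{keep}}$ enters as a single summand, the hypothesis $\lambda_{\max}(X_k)\le R$ forces $R\ge\lambda_{\max}(M_{\text{keep}})$, which can be arbitrarily close to $1$, making the exponent $\mu_{\min}/R$ useless. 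Your remark ``adding it only helps the lower tail'' does not rescue this: matrix Chernoff delivers extreme-eigenvalue bounds, not operator inequalities, so from $\lambda_{\min}\bigl(\sum_k X_k\bigr)\ge(1-\delta)\lambda_{\min}(M_F)$ (possibly zero) together with $M_{\text{keep}}\succeq 0$ one cannot deduce $\lambda_{\min}\bigl(M_{\text{keep}}+\sum_k X_k\bigr)\ge 1-\ep$; the upper tail fails for the same reason.

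The paper resolves this by splitting \emph{every} edge, including the deterministic ones in $G\setminus F$, into $q=20(\log n)/\ep^2$ copies of weight $1/q$: each resulting summand then satisfies $\lambda_{\max}\le 1/q$, the full collection has expectation exactly $\Pi$, and so $\mu_{\min}=\mu_{\max}=1$ and $R=1/q$ hold together. Your scheme is easily repaired the same way: set $Z_k:=X_k+\tfrac{1}{N}M_{\text{keep}}$, so that $\textbf{E}\bigl[\sum_k Z_k\bigr]=\Pi$ and $\lambda_{\max}(Z_k)\le \tfrac{\sum_{f\in F}\texttt{lev}_G(f)}{N}+\tfrac{1}{N}\le \tfrac{\ep^2}{4\log n}$ (this is exactly where your ``$1+$'' in the definition of $N$ earns its keep). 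With that fix, your with-replacement importance sampling and the paper's per-edge Bernoulli sampling are equally valid routes to the same conclusion.
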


\begin{proof}[Proof of Proposition \ref{prop:subset-sparsify}]
Let $q = 20(\log n)/\ep^2$. For each edge $e\in E(G)$, define $q$ matrices $X_e^{(k)}$ for $k\in \{1,2,\hdots q\}$, with

\begin{itemize}
\item $X_e^{(k)} = \frac{c_e^G}{\texttt{lev}_G(e)q} (L_G^+)^{1/2} b_e b_e^T (L_G^+)^{1/2}$ with probability $\texttt{lev}_G(e)$ and 0 otherwise for $e\in F$
\item $X_e^{(k)} = \frac{c_e^G}{q} (L_G^+)^{1/2} b_e b_e^T (L_G^+)^{1/2}$ deterministically for $e\notin F$.
\end{itemize}

Notice that

$$\sum_{e\in E(G)}\sum_{k=1}^q \textbf{E}[X_e^{(k)}] = I$$

so $\mu_{min} = \mu_{max} = 1$. Apply Theorem \ref{thm:matrix-chernoff} with $\delta = \ep$ and $R = 1/q$ (since $\lambda_{\max}(X_e^{(k)}) = c_e^G(b_e^T L_G^+ b_e)/(q\texttt{lev}_G(e)) = 1/q$ for all $e\in F$) to reason about the random matrix $M_H = \sum_{e\in E(G)}\sum_{k=1}^q X_e^{(k)}$ and $L_H = L_G^{1/2} M_H L_G^{1/2}$. We now analyze the guarantees one by one.

\textbf{Only $F$ modified.} The contribution of edge edge $e\notin F$ to $L_H$ is $c_e^G b_e b_e^T$, which is the same as its weight in $G$.

\textbf{Sparsity of $F$.} By standard Chernoff bounds on each edge, the total number of edges from $F$ in $H$ is at most $8(\log n)/\ep^2$.

\textbf{Spectral approximation.} By Theorem \ref{thm:matrix-chernoff},

\begin{align*}
\Pr[(1 + \ep)I \succeq M_H \succeq (1 - \ep)I] &\ge 1 - n\left(\frac{e^{-\ep}}{(1-\ep)^{1-\ep}}\right)^q - n\left(\frac{e^{\ep}}{(1+\ep)^{1+\ep}}\right)^q\\
&\ge 1 - n\left(1 - \ep^2/4\right)^q - n\left(1 - \ep^2/4\right)^q\\
&\ge 1 - 1/n^4\\
\end{align*}

By standard facts about the Loewner ordering, $(1 + \ep)L_G \succeq L_H \succeq (1 - \ep) L_G$ and $(1 + \ep)L_G^+ \succeq L_H^+ \succeq (1 - \ep)L_G^+$, as desired.
\end{proof}

\subsubsection{Tying the parts together}

Now, we prove Lemma \ref{lem:ball-split}. To do this, we exploit the algorithm $\CoveringCommunity$ to split the clusters of $\mc D$ into a small number of well-separated families. All vertices outside of these clusters can be Schur complemented out, as they are irrelevant. By Lemma \ref{lem:well-sep-lev-score}, the intercluster edges have small total leverage score. Proposition \ref{prop:subset-sparsify} allows us to replace these intercluster edges with $m^{o(1)}|\mc D|$ reweighted edges that separate all of the clusters. Therefore, the clusters in each well-separated family can be handled independently using Proposition \ref{prop:one-ball}.

\begin{proof}[Proof of Lemma \ref{lem:ball-split}]
Start by applying Lemma \ref{lem:covering-community} to produce a community

$\mc G \gets \CoveringCommunity(V(H),H,R)$. Form a new community $\mc G'$ by taking each cluster $C$ in a family of $\mc G$ and replacing it with a cluster $C'$ formed by adding all vertices in clusters of $\mc D$ that intersect $C'$. Furthermore, remove all clusters in $\mc G'$ that do not intersect a cluster in $\mc D$.

Each family in $\mc G'$ is $\gammads/3$-well-separated after doing this by the ``Well-separatedness'' guarantee of $\CoveringCommunity$, the ``$R$-community'' guarantee, and the fact that each cluster in $\mc D$ has effective resistance diameter at most $R$. Furthermore, by the ``Covering'' guarantee of $\mc G$, each cluster in $\mc D$ intersects some cluster in a family of $\mc G$. This means that each cluster in $\mc D$ is completely contained in some cluster in some family of $\mc G'$. Therefore, we can focus on each family of $\mc G'$ independently.

The above paragraph effectively reduced the problem to the case where $\mc D$ is a small number of well-separated families of clusters. Now, we exploit this. By Lemma \ref{lem:well-sep-lev-score}, the total leverage score of the intercluster edges in a family is at most $O(C_0(\ell,\gammads)\ell)\le m^{o(1)}\ell$, where $\ell$ is the number of clusters in all families in $\mc G'$. Since each family consists of clusters that contain at least one cluster in $\mc D$, $\ell\le |\mc G||\mc D|\le m^{o(1)}|\mc D|$. By Proposition \ref{prop:subset-sparsify} applied to all intercluster edges, a graph $H_i$ can be made for each family $\mc G_i'\in \mc G$ that spectrally approximates $H$ and only has $m^{o(1)}|\mc D|$ intercluster edges. Add these intercluster edges to $F$ to obtain $F_i$. Each of these edges is incident with at most two clusters in $\mc G_i$.

Apply Proposition \ref{prop:one-ball} to each cluster $C\in \mc G_i$ in the graph $H_i$ with deleted edge set consisting of the edges of $F_i$ incident with the cluster. This shows that $C$ splits into $\tilde{O}(|F_i\cap (C\cup \partial_{H_i} C)|)$ clusters with radius $\tilde{O}(R)$. Add these clusters to $\mc D'$. We now analyze the guarantees one-by-one:

\textbf{Covering.} Each vertex in a cluster in $\mc D$ is in some cluster of $\mc G$, which is in turn covered by the clusters produced using Proposition \ref{prop:one-ball}, as desired.

\textbf{Diameter.} Since $\mc G$ is $R$-bounded, each cluster in $\mc G$ has $H$-effective resistance diameter at most $m^{o(1)}R$. By the ``Spectral approximation'' guarantee of Proposition \ref{prop:subset-sparsify}, each cluster in $\mc G$ has $H_i$-effective resistance diameter at most $(1 + \ep)m^{o(1)}R = m^{o(1)}R$ as well. Proposition \ref{prop:one-ball} implies that after deleting the edges in $F$, all clusters added to $\mc D'$ have $H_i\setminus F = H\setminus F$-effective resistance diameter $O(\log n)m^{o(1)}R = m^{o(1)}$, as desired.

\textbf{Number of clusters.} Summing up $\tilde{O}(|F_i\cap (C\cup \partial_{H_i} C)|)$ over all clusters in $\mc G_i$ shows that applications of Proposition \ref{prop:one-ball} add $\tilde{O}(|F_i|)\le m^{o(1)}(|\mc D| + |F|)$ clusters to $\mc D'$. Doing this for all $|\mc G'|\le m^{o(1)}$ families in $\mc G$ yields the desired result.
\end{proof}

\section{Deferred proofs for Section \ref{sec:slow-fix}}

\subsection{Stable objective subresults}\label{sec:stable-appendix}

We also use the following folklore fact about leverage scores:

\begin{remark}\label{rmk:lev-sum}
In any graph $G$ with $n$ vertices,

$$\sum_{e\in E(G)} \texttt{lev}_G(e) = n-1$$
\end{remark}

\subsubsection{Preliminaries for first-order terms}

\begin{proposition}\label{prop:elec-flow-cont}
Consider a graph $G$, a vertex $x\in V(G)$, a set $Y\subseteq V(G)$ with $x\notin Y$, and $\gamma\in [0,1]$. Calculate electrical potentials with boundary conditions $p_x = 0$ and $p_w = 1$ for every $w\in Y$. Suppose that there is no edge $\{u,v\}\in E(G)$ with $p_u < \gamma$ and $p_v > \gamma$.

Let $U$ be the set of vertices $u$ with $p_u\le \gamma$. Make a new electrical flow with boundary conditions $q_u = 0$ for $u\in U$ and $q_w = 1$ for all $w\in Y$. Then for any $w\in Y$,

$$\sum_{w'\in N_G(w)} \frac{q_w - q_{w'}}{r_{ww'}} = \frac{1}{1-\gamma}\sum_{w'\in N_G(w)} \frac{p_w - p_{w'}}{r_{ww'}}$$
\end{proposition}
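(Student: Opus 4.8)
The plan is to guess the clamped potential $q$ in closed form and then verify it. Concretely, I claim that $q_v = \max\bigl(0,\ (p_v-\gamma)/(1-\gamma)\bigr)$ for every $v\in V(G)$ (here $\gamma<1$, which is forced by the appearance of $1/(1-\gamma)$ on the right-hand side). Since an electrical potential is the unique harmonic extension of its prescribed boundary values — each connected component of $G$ meets $\{x\}\cup Y\subseteq U\cup Y$, which is exactly the condition making both $p$ and $q$ well-posed — it is enough to check two things: that this candidate takes the value $0$ on $U$ and $1$ on $Y$, and that it is harmonic at every vertex outside $U\cup Y$.

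The boundary check is immediate: for $u\in U$ we have $p_u\le\gamma$, so the candidate is $0$; for $w\in Y$ we have $p_w=1$, so it is $1$. For harmonicity, the one real idea in the proof is to extract the following from the hypothesis: if $p_v>\gamma$, then no neighbor $v'$ of $v$ has $p_{v'}<\gamma$, since otherwise $\{v,v'\}$ would be an edge straddling level $\gamma$. So if $v\notin U\cup Y$, then $p_v>\gamma$ and every neighbor $v'$ has $p_{v'}\ge\gamma$, whence $q_v-q_{v'}=(p_v-p_{v'})/(1-\gamma)$ for each neighbor; summing and using harmonicity of $p$ at $v$ (valid because $v\ne x$, as $p_x=0\le\gamma$, and $v\notin Y$) gives $\sum_{v'\in N_G(v)} (q_v-q_{v'})/r_{vv'}=0$. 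This identifies $q$.

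For the final identity I would then apply the same computation at a vertex $w\in Y$: since $p_w=1>\gamma$, all neighbors $w'$ have $p_{w'}\ge\gamma$, so $q_{w'}=(p_{w'}-\gamma)/(1-\gamma)$ and $q_w=1=(1-\gamma)/(1-\gamma)$, and therefore
$$\sum_{w'\in N_G(w)}\frac{q_w-q_{w'}}{r_{ww'}}=\frac{1}{1-\gamma}\sum_{w'\in N_G(w)}\frac{p_w-p_{w'}}{r_{ww'}},$$
which is the claim. I do not anticipate any genuine obstacle; the only points needing a sentence of care are the well-posedness and uniqueness of the potentials (so that verifying the explicit formula suffices), and the elementary but essential use of the no-straddling-edge hypothesis to conclude that clamping below $\gamma$ leaves the potentials above $\gamma$ merely rescaled.
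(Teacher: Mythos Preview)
Your proposal is correct and takes essentially the same approach as the paper: both write down the explicit formula $q_v=\max\bigl(0,(p_v-\gamma)/(1-\gamma)\bigr)$ (the paper writes it as $\max(0,1-(1-p_v)/(1-\gamma))$, which is identical) and verify it, using the no-straddling hypothesis to see that above level $\gamma$ the new potentials are an affine rescaling of the old ones. The paper phrases the verification via the associated flow $g_e$ rather than harmonicity of $q$, but this is the same check from the dual perspective; your write-up is in fact more explicit about why the candidate is harmonic at interior vertices.
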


\begin{figure}
\includegraphics[width=1.0\textwidth]{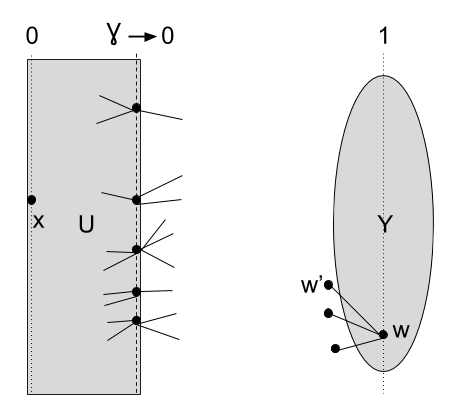}
\caption{Depiction of Proposition \ref{prop:elec-flow-cont}. The proof follows from the observation that identifying the vertices of $U$ to $x$ does not change the $x-Y$ electrical flow (with $y$ identified) on any edge outside of $U$.}
\label{fig:incoming-potential-scale}
\end{figure}

\begin{proof}
For any edge $e=(u,v)$ with $p_v \ge p_u$, let $f_e = \frac{p_v - p_u}{r_e}$,

$$g_e = \begin{cases}\frac{1}{1-\gamma}f_e & \text{ if $p_u\ge \gamma$}\\ 0 &\text{ if $p_v\le \gamma$ } \end{cases}$$

and

$$q_w = \max(0,1-\frac{1-p_w}{1-\gamma})$$

The cases for $g_e$ form a partition of the edges thanks to the condition that no edge can cross the $\gamma$ potential threshold. One can check that $g_e$ is an electrical flow with respect to the potentials $q_w$, $q_w = 1$ for any $w\in Y$, and $q_u = 0$ for all $u\in U$. Therefore, since no edge crosses the $\gamma$ threshold,

$$\sum_{w'\in N_G(w)} \frac{q_w - q_{w'}}{r_{ww'}} = \frac{1}{1-\gamma}\sum_{w'\in N_G(w)} \frac{p_w - p_{w'}}{r_{ww'}}$$

as desired.
\end{proof}

\begin{proposition}\label{prop:cond-pot}
Consider a vertex $x$ and a set $Y$ in a graph $G$. Let $Z$ be a set of edges with both endpoints having electrical potential at most $\gamma$, with $p_x = 0$ and $p_y = 1$ in $G/Y$ with identification $y$. Let $G'$ be a graph obtained by deleting and contracting edges in $Z$ in an arbitrary way. Let $H' = \texttt{Schur}(G',\{x\}\cup Y)$ and $H = \texttt{Schur}(G,\{x\}\cup Y)$. Then for any $w\in Y$,

$$r_{xw}^{H'}\ge (1 - \gamma) r_{xw}^H$$
\end{proposition}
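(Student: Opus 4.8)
The plan is to reduce the claim to Proposition \ref{prop:elec-flow-cont} together with a monotonicity argument. Observe first that the conclusion $r_{xw}^{H'}\ge (1-\gamma)r_{xw}^H$ is equivalent, after inverting, to a statement about the $x$-to-$w$ conductance: $c^{G'}(x,\{w\})\le \frac{1}{1-\gamma}c^G(x,\{w\})$ by Proposition \ref{prop:lin-alg-total-cond}, where $c^{G'}(x,\{w\})$ denotes the conductance of the single edge between $x$ and $w$ in $\texttt{Schur}(G',\{x,w\})$. So it suffices to track how the effective resistance between $x$ and a fixed target vertex $w\in Y$ behaves when we delete/contract the edges of $Z$. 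Here $Z$ consists of edges that lie entirely on the "$x$ side" of the $\gamma$-potential level set of the $x$-$Y$ flow in $G/Y$.

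The key step is to set up the level set $U = \{u : p_u\le \gamma\}$ (potentials computed in $G/Y$ with $p_x = 0$, $p_y = 1$), as in Proposition \ref{prop:elec-flow-cont}. After subdividing every edge that crosses the $\gamma$ threshold and identifying the subdivision vertices — a standard move that does not change any relevant effective resistance — we may assume no edge crosses the threshold, so $Z\subseteq E(G[U])$. The crucial observation is that identifying all of $U$ to the single vertex $x$ leaves the $x$-to-$w$ electrical flow unchanged on every edge outside $U$: all of $U$ already sits at potential $\le\gamma$, and collapsing it only equalizes potentials within $U$. Concretely, in the graph $G/U$ (with $U$ identified to $x$), the $x$-$w$ flow agrees on $E(G)\setminus E(G[U])$ with the flow in $G$ after rescaling potentials by $\frac{1}{1-\gamma}$, exactly the content of Proposition \ref{prop:elec-flow-cont} applied with $Y\gets\{w\}$ (or the whole set $Y$, since the boundary value on $Y$ is uniform). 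Hence $c^{G/U}(x,\{w\}) = \frac{1}{1-\gamma}c^G(x,\{w\})$, i.e. $r_{xw}^{\texttt{Schur}(G/U,\{x,w\})} = (1-\gamma)r_{xw}^H$.

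Now I would finish by Rayleigh monotonicity. The graph $G'$ is obtained from $G$ by arbitrarily deleting and contracting edges of $Z\subseteq E(G[U])$. Deleting an edge only increases effective resistances; contracting an edge of $G[U]$ identifies two vertices both inside $U$, which is a "partial" version of the full identification $G\to G/U$. Since $G/U$ is obtained from $G'$ by further identifications of vertices in $U$ and deletions are innocuous in the direction we need, we get $r_{xw}^{H'}\ge r_{xw}^{\texttt{Schur}(G/U,\{x,w\})}$: any graph lying "between" $G$ and $G/U$ under these operations — in particular $G'$ — has $x$-$w$ resistance at least that of $G/U$. Combining with the previous paragraph yields $r_{xw}^{H'}\ge (1-\gamma)r_{xw}^H$. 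The main obstacle is making the last monotonicity step fully rigorous: contraction of edges inside $U$ does not literally equal a sub-identification toward $G/U$ once deletions are interleaved, so one must argue that $G'$ and $G/U$ can be related by a single chain of resistance-nonincreasing operations (delete the non-$Z$ edges appropriately, then identify within $U$), and that the net effect on $r_{xw}$ is a decrease; handling the arbitrary interleaving of deletions and contractions of $Z$ cleanly is the delicate part, but it is purely a Rayleigh-monotonicity bookkeeping argument.
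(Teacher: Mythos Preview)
Your reformulation in the first paragraph is wrong, and this breaks the whole argument. The quantity $r_{xw}^{H'}$ is the resistance of the single edge $\{x,w\}$ in $H'=\texttt{Schur}(G',\{x\}\cup Y)$, \emph{not} the effective resistance $\texttt{Reff}_{G'}(x,w)=r_{xw}^{\texttt{Schur}(G',\{x,w\})}$. These coincide only when $|Y|=1$. For a concrete failure, take $G$ to be the star $x\!-\!a$, $a\!-\!w_1$, $a\!-\!w_2$ with unit resistances and $Y=\{w_1,w_2\}$. Then $p_a=2/3$, so $\gamma=2/3$ and $U=\{x,a\}$. One computes $r_{xw_1}^{H}=3$ but $\texttt{Reff}_G(x,w_1)=2$, and $\texttt{Reff}_{G/U}(x,w_1)=1$ while $(1-\gamma)\,\texttt{Reff}_G(x,w_1)=2/3$; so your claimed equality $c^{G/U}(x,\{w\})=\tfrac{1}{1-\gamma}c^G(x,\{w\})$ is false. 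Worse, if you add a tiny-resistance edge $w_1\!-\!w_2$, then $\texttt{Reff}_{G/U}(x,w_1)\to 1/2$ while $(1-\gamma)r_{xw_1}^{H}=1$, so even granting your Rayleigh chain $r_{xw}^{H'}\ge \texttt{Reff}_{G/U}(x,w)$, the bound you reach is strictly weaker than what the proposition asserts.

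The paper avoids this by never passing to effective resistance. It Schur-complements onto $Y\cup U$ to get $H_0$, observes that the $w$-to-$U$ edge conductances in $H_0$ are unaffected by modifying $Z\subseteq E(G[U])$, and then uses Proposition~\ref{prop:elec-flow-cont} to identify $\sum_{u\in U} c_{wu}^{H_0}=\tfrac{1}{1-\gamma}\,c_{xw}^{H}$ (this is the flow into $w$ in the $x$-versus-$Y$ potential, which is exactly the Schur edge conductance $c_{xw}^{H}$, not $1/\texttt{Reff}_G(x,w)$). The final step is a direct one-vertex-at-a-time elimination computation showing that $\sum_{u\in U_i} c_{wu}$ is nonincreasing as vertices of $U\setminus\{x\}$ are eliminated, ending at $c_{xw}^{H'}$. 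Your Rayleigh-on-effective-resistance idea cannot substitute for this step: the needed monotonicity is of a single Schur-complement edge weight under elimination, and that is what the paper's explicit calculation establishes.
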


\begin{figure}
\includegraphics[width=1.0\textwidth]{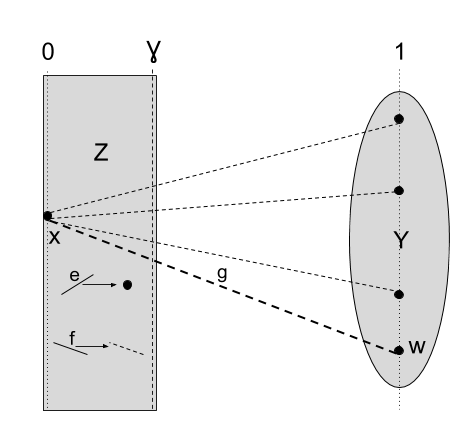}
\caption{Depiction of Proposition \ref{prop:cond-pot}. Contracting edges like $e$ and deleting edges like $f$ can only decrease resistances of edges like $g$ by a small amount.}
\label{fig:schur-comp-change}
\end{figure}

\begin{proof}
Let $G_0$ be a graph with every edge $\{u,v\}$ with $p_u < \gamma$ and $p_v > \gamma$ subdivided at potential $\gamma$. Let $U$ be the set of vertices $u$ with $p_u\le \gamma$. Let $H_0 = \texttt{Schur}(G_0,Y\cup U)$ and $H_0' = \texttt{Schur}(G_0\backslash Z, Y\cup U)$. Since $Z\subseteq E(G[U])$, $H_0' = H_0\backslash Z$, so for any $w\in Y$,

$$\sum_{u\in U} \frac{1}{r_{wu}^{H_0'}} = \sum_{u\in U} \frac{1}{r_{wu}^{H_0}}$$

By Proposition \ref{prop:elec-flow-cont} applied to $H_0$, for any $w\in Y$,

$$\sum_{u\in U} \frac{1}{r_{wu}^{H_0}} = \frac{1}{1-\gamma}\frac{1}{r_{wx}^H}$$

Obtain $H'$ from $H_0'$ by eliminating vertices of $U\backslash \{x\}$ one by one. Let $U_0 = U = \{u_0,u_1,\hdots,u_{\ell},x\}$, $U_i = \{u_i,\hdots,u_{\ell},x\}$, and $H_i'$ be the graph obtained by eliminating $u_0,u_1,\hdots,u_{i-1}$. By the formula for elimination of one vertex,

\begin{align*}
\sum_{u\in U_{i+1}} c_{wu}^{H_{i+1}'} &= \sum_{u\in U_{i+1}} \left(c_{wu}^{H_i'} + \frac{c_{wu_i}^{H_i'}c_{u_iu}^{H_i'}}{\sum_{v\in U_{i+1}\cup Y} c_{u_iv}^{H_i'}}\right)\\
&\le c_{wu_i}^{H_i'} + \sum_{u\in U_{i+1}} c_{wu}^{H_i'}\\
&= \sum_{u\in U_i} c_{wu}^{H_i'}\\
\end{align*}

Chaining these inequalities shows that

$$c_{wx}^{H'} \le \frac{1}{1-\gamma} c_{wx}^H$$

since $U_{\ell+1} = \{x\}$ and $H_{\ell+1}' = H'$. Taking reciprocals shows the desired result.
\end{proof}

\begin{proposition}\label{prop:deg-split}
Consider two sets $X,Y\subseteq V(G)$ with $X\cap Y = \emptyset$. Let $G_0$ be a graph obtained by splitting each edge $e\in \partial_G X$ into a path of length 2 consisting of edges $e_1$ and $e_2$ with $r_{e_1} = r_{e_2} = r_e/2$, with $e_1$ having an endpoint in $X$. Add $X$ and all endpoints of edges $e_1$ to a set $X_0$. Let $x$ and $x_0$ denote the identifications of $X$ and $X_0$ in $G/X$ and $G_0/X_0$ respectively. Then

$$\Delta^{G_0}(X_0,Y)\le 2\Delta^G(X,Y)$$
\end{proposition}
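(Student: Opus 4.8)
The plan is to reduce the statement, via the structure of $G_0/X_0$, to an inequality about doubling the conductances of all edges at a single vertex, and then to reduce that in turn to a local rank-one computation.

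First I would identify $G_0/X_0$. Since $X_0 = X\cup\{m_e : e\in\partial_G X\}$, contracting $X_0$ in $G_0$ turns every inner half $e_1$ (both of whose endpoints lie in $X_0$) into a self-loop, turns every outer half $e_2$ (conductance $2c_e$, running from $m_e\in X_0$ to the non-$X$ endpoint of $e$) into an edge of conductance $2c_e$ from $x_0$ to that endpoint, and leaves every edge with no endpoint in $X$ untouched. So, writing $H := G/X$, the graph $H_0 := G_0/X_0$ is precisely $H$ with the conductance of every edge incident to $x=x_0$ doubled. Since $\Delta^I(S,S')$ depends only on $I/S$ (it is defined through $\texttt{Schur}(I/S,S\cup S')$), the claim is equivalent to $\Delta^{H_0}(\{x\},Y)\le 2\,\Delta^{H}(\{x\},Y)$.

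Next, using $\Delta^I(S,S') = \sum_{w\in S'}\texttt{lev}_{\texttt{Schur}(I/S,S\cup S')}(\{s,w\})$ (the definition together with Theorem~\ref{thm:edge-marginal}) and $\texttt{lev}(\{x,w\}) = c_{xw}\,\texttt{Reff}(x,w)$, together with the invariance of effective resistances under Schur complements, I would write $\Delta^{H}(\{x\},Y) = \sum_{w\in Y} c^{\texttt{Schur}(H,\{x\}\cup Y)}_{xw}\,\texttt{Reff}_{H}(x,w)$, and likewise for $H_0$. Rayleigh monotonicity gives $\texttt{Reff}_{H_0}(x,w)\le\texttt{Reff}_{H}(x,w)$ for every $w$ (since $L_{H_0}\succeq L_{H}$), so it suffices to prove the key inequality $c^{\texttt{Schur}(H_0,\{x\}\cup Y)}_{xw}\le 2\,c^{\texttt{Schur}(H,\{x\}\cup Y)}_{xw}$ for every $w\in Y$; combined with the $\texttt{Reff}$ bound this gives the claim termwise. (Summed over $w$ the key inequality also recovers $c^{H_0}(\{x\},Y)\le 2c^{H}(\{x\},Y)$, which one can independently check because doubling the conductances at a source can decrease the effective resistance from that source by at most a factor $2$ — evaluate the doubled graph's optimal unit $x$-$y$ flow back in the original graph, where its energy is the doubled-graph energy plus the part carried on $x$'s edges.)

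To prove the key inequality I would first reduce to $|Y|=2$: identifying all of $Y\setminus\{w\}$ to a single vertex $y'$ merely merges the Schur-complement edges running to those vertices and leaves $c^{\texttt{Schur}(\cdot,\{x\}\cup Y)}_{xw}$ unchanged, and this identification commutes with doubling the conductances at $x$. With $Y=\{w,y'\}$ I would then Schur-complement out the remaining vertices one at a time, inducting on their number: a vertex not adjacent to $x$ can be eliminated first without touching $x$'s edges, so elimination commutes with the doubling and the inductive hypothesis applies; and when only neighbours of $x$ remain, eliminating one such neighbour $u$ with $x$-conductance $c$ changes $c_{xw}$ by a term $\tfrac{cd}{c+d+e}$ in $H$ versus $\tfrac{2cd}{2c+d+e}$ in $H_0$ for the appropriate $d,e\ge 0$, and $\tfrac{2cd}{2c+d+e}\le\tfrac{2cd}{c+d+e}$ since $c\ge 0$. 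The main obstacle is making this peeling step rigorous when several neighbours of $x$ are mutually adjacent, because eliminating one of them destroys the exact ``doubled-at-$x$'' relationship between the two reduced graphs (the new off-diagonal edges it creates are not simply the doubled ones); I expect this to require an auxiliary monotonicity lemma showing that those extra off-diagonal conductances only lower $x$'s Schur-complement conductance to $w$ relative to the cleanly-doubled graph, or else a direct global electrical-flow argument comparing $c^{\texttt{Schur}(H_0,\cdot)}_{xw}$ with $c^{\texttt{Schur}(H,\cdot)}_{xw}$.
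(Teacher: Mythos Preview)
Your reduction is correct and illuminating: $G_0/X_0$ really is $G/X$ with every edge at $x$ doubled in conductance, and the termwise decomposition $\Delta = \sum_w c_{xw}^{\texttt{Schur}}\cdot\texttt{Reff}(x,w)$ together with Rayleigh for the $\texttt{Reff}$ factor matches the paper exactly. The key inequality you isolate, $c_{xw}^{\texttt{Schur}(H_0,\{x\}\cup Y)}\le 2\,c_{xw}^{\texttt{Schur}(H,\{x\}\cup Y)}$, is also precisely what the paper proves.

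Where you diverge is in the proof of that inequality, and the obstacle you flag is real: once you eliminate a neighbour of $x$ that is adjacent to other neighbours of $x$, the two graphs no longer stand in the ``doubled at $x$'' relationship, and patching this up by ad hoc monotonicity is awkward. The paper avoids elimination entirely and uses a global potential argument (its Proposition~\ref{prop:cond-pot}). Work in $G_0/X$ rather than $G_0/X_0$; there the subdivision midpoints $m_e$ sit at normalized potential $p_{m_e}=p_v/2\le 1/2$ in the $x$--$Y$ electrical flow (with $p_x=0$, $p_w=1$ for $w\in Y$). The point is that if you identify the \emph{entire} sublevel set $\{p\le 1/2\}$ to $x$, the potentials above $1/2$ simply rescale affinely, so the flow into each individual $w\in Y$ scales by exactly $1/(1-1/2)=2$; hence $c_{xw}^{\texttt{Schur}}$ scales by exactly $2$. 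Contracting only a subset of the low-potential edges (here, just the $e_1$'s) can only do less damage: after Schur-complementing onto $U\cup Y$ with $U=\{p\le 1/2\}$, those contractions live inside $E(U)$ and do not touch any $U$--$w$ conductance, and then eliminating $U\setminus\{x\}$ down to $x$ can only decrease the total conductance landing at $w$. This gives the factor-$2$ bound per $w$ without any inductive peeling.

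So your outline is right up to and including identifying the target inequality; to close the gap, replace the elimination induction with this potential-rescaling argument.
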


\begin{figure}
\includegraphics[width=1.0\textwidth]{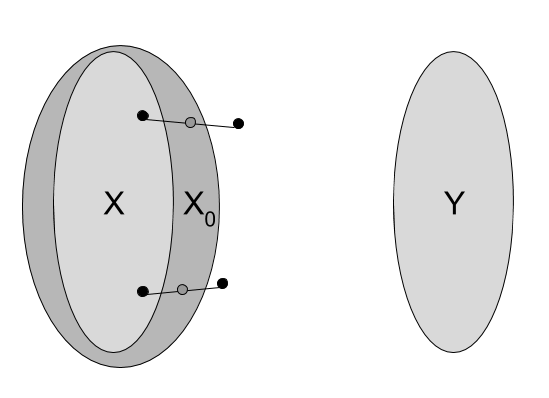}
\caption{Depiction of Proposition \ref{prop:deg-split}.}
\label{fig:degree-split}
\end{figure}

\begin{proof}
Recall that

$$\Delta^{G_0}(X_0,Y) = \sum_{y_i\in Y} b_{x_0y_i}^T L_{G_0/X_0}^+ b_{x_0y_i} c_{x_0y_i}^{H_0}$$

where $H_0 = \texttt{Schur}(G_0,\{x_0\}\cup Y)$. By Rayleigh monotonocity and $X\subseteq X_0$, $b_{x_0y_i}^T L_{G_0/X_0}^+ b_{x_0y_i}\le b_{xy_i}^T L_{G/X}^+ b_{xy_i}$. By Proposition \ref{prop:cond-pot} and the fact that all vertices in $X_0$ have normalized potential at most 1/2 in $G$ with $p_x = 0$ and $p_w = 1$ for all $w\in Y$,

$$c_{x_0y_i}^{H_0}\le 2c_{xy_i}^H$$

where $H = \texttt{Schur}(G,\{x\}\cup Y)$. Therefore,

$$\Delta^{G_0}(X_0,Y)\le 2\sum_{y_i\in Y} (b_{xy_i}^T L_{G/X}^+ b_{xy_i}) c_{xy_i}^H = 2\Delta^G(X,Y)$$

as desired.
\end{proof}

\lemfirstorderdeg*

\begin{proof}[Proof of Lemma \ref{lem:first-order-deg}]
Split every edge in $\partial_G X$ into a path of two edges with half the resistance. Let $X_0$ be the set of edges with endpoints in $X$ or one of the copies incident with $X$ and let $G_0$ be this graph. We start by showing that

$$\sum_{f\in G_0[X_0]} (\texttt{lev}_{G_0}(f) - \texttt{lev}_{G_0/Y}(f)) \le \Delta_0$$

where $\Delta_0 = \Delta^{G_0}(X_0,Y)$, $G_0' = G_0/X_0$, and $x_0$ is the identification of $X_0$. First, by Remark \ref{rmk:lev-sum}

$$\sum_{f\in H_0'[Y]} \texttt{lev}_{H_0'}(f) = |Y| - \Delta_0$$

where $H_0' = \texttt{Schur}(G_0',\{x_0\}\cup Y)$. By Rayleigh monotonicity,

$$\sum_{f\in H_0[Y]} \texttt{lev}_{H_0}(f) \ge |Y| - \Delta_0$$

where $H_0 = \texttt{Schur}(G_0,X_0\cup Y)$. By Remark \ref{rmk:lev-sum},

$$\sum_{f\in E(H_0)\setminus H_0[Y]} \texttt{lev}_{H_0}(f) \le |X_0| + \Delta_0 - 1$$

But by Remark \ref{rmk:lev-sum},

$$\sum_{f\in E(H_0)\setminus H_0[Y]} \texttt{lev}_{H_0/Y}(f) = |X_0|$$

so therefore

$$\sum_{f\in E(H_0)\setminus H_0[Y]} (\texttt{lev}_{H_0}(f) - \texttt{lev}_{H_0/Y}(f)) \le \Delta_0 - 1$$

By Rayleigh monotonicity, the summand of the above sum is always nonnegative. Therefore, the above inequality also applies for any subset of $E(H_0)\setminus H_0[Y]$. In particular, it applies for $G_0[X_0]$, completing the proof that

$$\sum_{f\in G_0[X_0]} (\texttt{lev}_{G_0}(f) - \texttt{lev}_{G_0/Y}(f)) \le \Delta_0$$

Consider an edge $f$ that is one of the two copies of an edge $e$ resulting from subdivision with $\texttt{lev}_G(e)\ge \frac{1}{4}$. Consider the three-vertex Schur complement of $G$ with respect to $V(e)\cup V(f)$. Identifying $Y$ only changes the Schur complement resistance of the edge with endpoints $V(e)$. Let $f'$ be the remaining of the three edges in this Schur complement. $\texttt{lev}_{G_0}(f) = \texttt{lev}_{G_0}(f')$ and $\texttt{lev}_{G_0/Y}(f) = \texttt{lev}_{G_0/Y}(f')$ since the edge weights of $f$ and $f'$ are the same and they are part of the same path with length 2. Therefore, by Remark \ref{rmk:lev-sum} applied to the three-vertex Schur complement,

\begin{align*}
\texttt{lev}_G(e) - \texttt{lev}_{G/Y}(e) &= (1 - (3 - 1 - 2\texttt{lev}_G(f))) - (1 - (3 - 1 - 2\texttt{lev}_{G/Y}(f)))\\
&= 2\texttt{lev}_G(f) - 2\texttt{lev}_{G/Y}(f)\\
\end{align*}

so

$$\sum_{f\in G[X]\cup \partial_G X} (\texttt{lev}_G(f) - \texttt{lev}_{G/Y}(f)) \le 2\Delta_0$$.

By Proposition \ref{prop:deg-split}, $\Delta_0\le 2\Delta$. Combining this with Remark \ref{rmk:lev-cng} proves the lemma.  
\end{proof}

\lemdelfirstorderdeg*

\begin{proof}
By Remark \ref{rmk:lev-cng}, it suffices to show that

$$\sum_{e\in E(G)\setminus D} (\texttt{lev}_{G\setminus D}(e) - \texttt{lev}_G(e))\le |D|$$

Furthermore, we may assume that $D$ consists of just one edge $f$, because deleting one edge at a time suffices for proving the above inequality for general $D$. By Sherman-Morrison,

\begin{align*}
\sum_{e\in E(G)\setminus f} (\texttt{lev}_{G\setminus\{f\}}(e) - \texttt{lev}_G(e)) &= \sum_{e\in E(G)\setminus f} \frac{(b_e^T L_G^+ b_f)^2}{r_e (r_f - b_f^T L_G^+ b_f)}\\
&= \frac{1}{r_f - b_f^T L_G^+ b_f} (b_f^T L_G^+ b_f - \frac{(b_f^T L_G^+ b_f)^2}{r_f})\\
&= \texttt{lev}_G(f)\\
&\le 1
\end{align*}

as desired.
\end{proof}

\subsubsection{Preliminaries for second-order terms}

\begin{proposition}\label{prop:edge-cut-bound}
Consider a graph $G$ with a set $Y\subseteq V(G)$, $X := V(G) \setminus Y$, $e = \{a,b\}\in E(X)$, $s\in X$, and $t\in Y$. Then

$$|b_{st}^T L_G^+ b_e| \le \sum_{f = \{p,q\}\in E(Y,X)} |b_{sq}^T L_G^+ b_e|\frac{|b_{st}^T L_{G/X}^+ b_f|}{r_f}$$
\end{proposition}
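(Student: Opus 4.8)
The plan is to promote the inequality to an identity and then invoke the triangle inequality. Concretely, I will prove that for $e=\{a,b\}\in E_G(X)$, $s\in X$ and $t\in Y$,
$$b_{st}^T L_G^+ b_e \;=\; \sum_{f=\{p,q\}\in E_G(Y,X)} \frac{b_{st}^T L_{G/X}^+ b_f}{r_f}\;\big(b_{sq}^T L_G^+ b_e\big),$$
where each crossing edge $f$ is oriented from its $X$-endpoint $q$ to its $Y$-endpoint $p$. Passing to absolute values and using $|\sum|\le\sum|\cdot|$ yields exactly the Proposition, since the factor $|b_{st}^T L_{G/X}^+ b_f|/r_f$ does not depend on the orientation of $f$. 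Recalling that $b_e^T L_G^+ d$ equals $r_e$ times the net electrical current carried by $e$ in the unit electrical flow with demand $d$, the displayed equality is precisely a superposition statement for electrical flows, and that is what I will establish. Throughout I may assume $G$ is connected; otherwise $s$ and $t$ lie in different components, $\texttt{Reff}_G(s,t)=\infty$, and, with $e\in E_G(X)$ and $t\in Y$, the statement is degenerate.

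For the superposition, write $x$ for the vertex of $G/X$ obtained by identifying $X$ (so $s$ is identified with $x$), and for $f=\{p,q\}\in E_G(Y,X)$ with $q\in X$ put $w_f:=b_{st}^T L_{G/X}^+ b_f/r_f$, which is the current flowing out of $x$ along $f$ in the unit $x$-$t$ electrical flow in $G/X$. Because $x$ is the unique source of that flow, $\sum_f w_f=1$. A direct algebraic check (using $\sum_f w_f=1$, and that $b_{sq}=0$ when $q=s$) gives the demand-vector decomposition
$$b_{st} \;=\; \sum_{f=\{p,q\}\in E_G(Y,X)} w_f\, b_{sq} \;+\; d', \qquad d' := \Big(\sum_{f=\{p,q\}} w_f\, b_q\Big) - b_t,$$
and $d'$ has coordinate sum zero, so it is a legitimate demand vector. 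Applying $b_e^T L_G^+$ to both sides and using the symmetry of $L_G^+$, the target identity follows once I show that the unit electrical flow in $G$ with demand $d'$ carries no current across any edge of $E_G(X)$, i.e. $b_e^T L_G^+ d'=0$ for every $e\in E_G(X)$.

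That last step I will handle by exhibiting the potential vector explicitly. Let $\chi$ be the potential for the unit $x$-$t$ flow in $G/X$ and let $\hat\chi\in\mathbb{R}^{V(G)}$ be its lift ($\hat\chi_p=\chi_p$ for $p\in Y$ and $\hat\chi_q=\chi_x$ for $q\in X$). Then $\hat\chi$ is constant on $X$, hence has zero potential drop across every edge of $E_G(X)$, so it suffices to check $L_G\hat\chi=d'$: at a vertex $p\in Y$ the current balance of $\hat\chi$ in $G$ agrees with that of $\chi$ at $p$ in $G/X$, which equals $0$ for $p\neq t$ and $-1$ at $t$; at $q\in X$ the only edges with nonzero potential drop run to $Y$, giving $(L_G\hat\chi)_q=\sum_{f=\{p,q\}}(\hat\chi_q-\hat\chi_p)/r_f=\sum_{f=\{p,q\}} w_f=d'_q$. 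Since $G$ is connected, $L_G^+ d'$ differs from $\hat\chi$ by an additive constant, so $b_e^T L_G^+ d'=b_e^T\hat\chi=0$ for every $e\in E_G(X)$, completing the proof. The only delicate point is this last computation: correctly matching the current balance of $\hat\chi$ in $G$ with that of $\chi$ in $G/X$ at vertices of $Y$ and at $x$ (where several $X$-to-$x$ edges of $G$ collapse into parallel edges of $G/X$), which also re-derives $\sum_f w_f=1$. Everything else is routine manipulation of demand vectors together with a single application of the triangle inequality.
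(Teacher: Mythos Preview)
Your proof is correct. Both you and the paper establish the same underlying identity
\[
b_{st}^T L_G^+ b_e \;=\; \sum_{f=\{p,q\}\in E(Y,X)} \frac{|b_{st}^T L_{G/X}^+ b_f|}{r_f}\,\big(b_{sq}^T L_G^+ b_e\big)
\]
and then apply the triangle inequality, but the routes to the identity differ. The paper argues probabilistically: it interprets $b_{st}^T L_G^+ b_{su}/b_{su}^T L_G^+ b_{su}$ as the hitting probability $\Pr_t[t_s>t_u]$, factors this probability by conditioning on which crossing edge the random walk from $t$ first uses to exit $Y$, invokes the Markov property at the landing vertex $q$, and identifies the exit probability with the electrical current $|b_{st}^T L_{G/X}^+ b_f|/r_f$; subtracting the resulting equality for $u=a$ and $u=b$ gives the identity for $b_e$. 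You instead work directly with demand vectors and potentials: you decompose $b_{st}=\sum_f w_f b_{sq}+d'$ and show that the residual demand $d'$ has an electrical potential that is constant on $X$ (namely the lift of the $G/X$ potential), hence contributes zero drop across any $e\in E(X)$. Your argument is self-contained linear algebra and avoids the random-walk interpretations the paper invokes; the paper's argument, on the other hand, makes the probabilistic meaning of the coefficients transparent and reuses facts (Theorems~\ref{thm:offline-prob} and~\ref{thm:online-prob}) already established in its preliminaries.
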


\begin{proof}
We start by showing that

$$b_{st}^T L_G^+ b_{su} = \sum_{f = \{p,q\}\in E(Y,X)} b_{sq}^T L_G^+ b_{su} \frac{|b_{st}^T L_{G/X}^+ b_f|}{r_f}$$

for any vertex $u\in X$. To show this, interpret the statement probabilistically. Notice that

$$\Pr_t[t_s > t_u] = \frac{b_{st}^T L_G^+ b_{su}}{b_{su}^T L_G^+ b_{su}}$$

This probability can be factored based on which edge is used to exit $Y$ first:

\begin{align*}
\Pr_t[t_s > t_u] &= \sum_{f = \{p,q\}\in E(Y,X)} \Pr_t[t_s > t_u, f \text{ used to exit $Y$ for the first time}]\\
&= \sum_{f = \{p,q\}\in E(Y,X)} \Pr_t[t_s > t_u | f \text{ used to exit $Y$ for the first time}]\\
&\Pr_t[ f \text{ used to exit $Y$ for the first time}]\\
&= \sum_{f = \{p,q\}\in E(Y,X)} \Pr_q[t_s > t_u] \Pr_t[ f \text{ used to exit $Y$ for the first time}]\\
&= \sum_{f = \{p,q\}\in E(Y,X)} \frac{b_{sq}^T L_G^+ b_{su}}{b_{su}^T L_G^+ b_{su}} \frac{|b_{st}^T L_{G/X}^+ b_f|}{r_f}\\
\end{align*}

Multiplying both sides by $b_{su}^T L_G^+ b_{su}$ shows that

$$b_{st}^T L_G^+ b_{su} = \sum_{f = \{p,q\}\in E(Y,X)} b_{sq}^T L_G^+ b_{su} \frac{|b_{st}^T L_{G/X}^+ b_f|}{r_f}$$

For any two vertices $u,v\in X$, subtracting the resulting equations shows that

$$b_{st}^T L_G^+ b_{uv} = \sum_{f = \{p,q\}\in E(Y,X)} b_{sq}^T L_G^+ b_{uv} \frac{|b_{st}^T L_{G/X}^+ b_f|}{r_f}$$

Letting $b_{uv} := b_e$ and orienting the edge in the positive direction shows the desired inequality.
\end{proof}

\begin{proposition}\label{prop:high-drop-flow}
Consider a graph $G$ with a set $Y\subseteq V(G)$, $X := V(G)\setminus Y$, $u\in X$, $s\in X$, and $t\in Y$. Let $S_{\gamma}$ be the set of edges $f = \{p,q\} \in E(Y,X)$ with $|b_{sq}^T L_G^+ b_{su}| \ge \gamma |b_{st}^T L_G^+ b_{st}|$. Then

$$\sum_{f\in S_{\gamma}} \frac{|b_{st}^T L_{G/X}^+ b_f|}{r_f} \le \frac{3}{\gamma}$$
\end{proposition}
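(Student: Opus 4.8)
### Proof plan for Proposition \ref{prop:high-drop-flow}

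The plan is to interpret the quantities probabilistically, as in the proof of Proposition \ref{prop:edge-cut-bound}, and then bound the total ``escape probability'' through the edges in $S_\gamma$ by a union-bound/Markov-type argument. Recall from the proof of Proposition \ref{prop:edge-cut-bound} that for each edge $f = \{p,q\}\in E(Y,X)$, the quantity $\frac{|b_{st}^T L_{G/X}^+ b_f|}{r_f}$ is exactly $\Pr_t[f \text{ is used to exit } X^c = Y \text{ for the first time}]$ — that is, the probability that a random walk started at $t$ uses $f$ on its first step out of $Y$. Actually, more precisely: one should set up the graph $G/X$ with $X$ identified to a vertex $x$, so that these numbers are the probabilities over which boundary edge the walk from $t$ first crosses into $X$. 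Summing $\frac{|b_{st}^T L_{G/X}^+ b_f|}{r_f}$ over all of $E(Y,X)$ gives $1$; the point is to show that the subsum over $S_\gamma$ is at most $3/\gamma$.

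The key observation is that membership in $S_\gamma$ forces $q$ (the $X$-endpoint of $f$) to have large normalized potential with respect to the $s$--$u$ demand, i.e. $|b_{sq}^T L_G^+ b_{su}| \ge \gamma |b_{st}^T L_G^+ b_{st}|$. First I would relate the right side to the $s$--$u$ effective resistance: since potentials are maximized on the support of the demand, $|b_{sq}^T L_G^+ b_{su}| \le b_{su}^T L_G^+ b_{su} = \texttt{Reff}_G(s,u)$, and also $|b_{st}^T L_G^+ b_{st}| = \texttt{Reff}_G(s,t)$. So for $f\in S_\gamma$, $\Pr_q[t_s > t_u] = \frac{|b_{sq}^T L_G^+ b_{su}|}{\texttt{Reff}_G(s,u)} \ge \gamma \frac{\texttt{Reff}_G(s,t)}{\texttt{Reff}_G(s,u)}$ — hmm, this requires $\texttt{Reff}_G(s,t)$ and $\texttt{Reff}_G(s,u)$ to be comparable, which is not given. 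Instead, the cleaner route is: from the factorization in the proof of Proposition \ref{prop:edge-cut-bound},
$$b_{st}^T L_G^+ b_{su} = \sum_{f=\{p,q\}\in E(Y,X)} b_{sq}^T L_G^+ b_{su}\,\frac{|b_{st}^T L_{G/X}^+ b_f|}{r_f}.$$
All terms $b_{sq}^T L_G^+ b_{su}$ are nonnegative (they are normalized potentials times a positive quantity, and $u,q,s$ all lie in $X$ so the walk-from-$q$-hits-$u$-before-$s$ interpretation gives nonnegativity — I would check this sign claim carefully, as it is where the lower bound $S_\gamma$ definition and the absence of absolute values must line up). Then, restricting the sum to $f\in S_\gamma$ and using the lower bound $b_{sq}^T L_G^+ b_{su} \ge \gamma |b_{st}^T L_G^+ b_{st}|$ on those terms,
$$b_{st}^T L_G^+ b_{su} \ge \gamma |b_{st}^T L_G^+ b_{st}| \sum_{f\in S_\gamma} \frac{|b_{st}^T L_{G/X}^+ b_f|}{r_f}.$$
Rearranging gives $\sum_{f\in S_\gamma} \frac{|b_{st}^T L_{G/X}^+ b_f|}{r_f} \le \frac{1}{\gamma}\cdot\frac{|b_{st}^T L_G^+ b_{su}|}{|b_{st}^T L_G^+ b_{st}|} = \frac{1}{\gamma}\Pr_t[t_s > t_u] \cdot \frac{b_{su}^T L_G^+ b_{su}}{b_{st}^T L_G^+ b_{st}}$ — wait, this still has the ratio of resistances. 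The resolution: $\frac{|b_{st}^T L_G^+ b_{su}|}{|b_{st}^T L_G^+ b_{st}|} = \Pr_t[t_s > t_u]\cdot\frac{\texttt{Reff}(s,u)}{\texttt{Reff}(s,t)}$ is not obviously $\le 3$, so the bound $3/\gamma$ must come from somewhere else. Looking again: $|b_{st}^T L_G^+ b_{su}| \le |b_{st}^T L_G^+ b_{st}|\cdot$(something). Actually by symmetry of the pseudoinverse and Cauchy-Schwarz in the $L_G$ inner product, $|b_{st}^T L_G^+ b_{su}| \le \sqrt{\texttt{Reff}(s,t)\texttt{Reff}(s,u)}$, still not enough. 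I believe the constant $3$ arises from splitting into the cases where the walk from $t$ reaches $s$ before $u$ or not, and bounding $\Pr_t[t_u < t_s]$-weighted contributions separately; in particular one writes $b_{st}^T L_G^+ b_{su}$ using the demand $b_{su}$ and notes the $s$--$u$ flow has energy $\texttt{Reff}(s,u)$, while the factor $|b_{st}^T L_{G/X}^+ b_f|/r_f$ summed against the $q$-potentials reconstructs a bounded fraction.

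The main obstacle will be pinning down exactly where the constant $3$ (rather than $1$) enters and handling the signs/absolute values correctly when passing from $b_{sq}^T L_G^+ b_{su}$ to the lower bound defining $S_\gamma$ — the definition of $S_\gamma$ uses $|b_{sq}^T L_G^+ b_{su}|$ but the factorization identity has no absolute value, so I would need either to argue all these terms share a sign, or to bound $\sum_{f\in S_\gamma}|b_{sq}^T L_G^+ b_{su}|\frac{|b_{st}^T L_{G/X}^+ b_f|}{r_f} \le \sum_{\text{all }f}|b_{sq}^T L_G^+ b_{su}|\frac{|b_{st}^T L_{G/X}^+ b_f|}{r_f}$ and then bound the latter full sum by $O(1)\cdot|b_{st}^T L_G^+ b_{st}|$ via Theorem \ref{thm:harsha} (with $s_1 = s_2 = s$, swapping the roles of source and sink appropriately, which yields a $\log$-free bound of the form $O(1)\cdot b_{st}^T L_G^+ b_{st} + \frac1\tau(\cdots)$ when the two demand vectors coincide up to the endpoint) or by a direct electrical-flow energy argument. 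Once that full-sum bound of the form $\le 3|b_{st}^T L_G^+ b_{st}|$ is established, dividing by $\gamma|b_{st}^T L_G^+ b_{st}|$ gives the claimed $3/\gamma$. So concretely: first set up $G/X$ and the probabilistic interpretation; second, establish nonnegativity or reduce to absolute values of the relevant potentials; third, invoke the factorization identity from Proposition \ref{prop:edge-cut-bound}; fourth, bound the resulting full weighted sum by $3|b_{st}^T L_G^+ b_{st}|$; fifth, restrict to $S_\gamma$ and divide. I expect steps two and four to require the most care.
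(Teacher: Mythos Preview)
Your approach via the factorization identity from the proof of Proposition~\ref{prop:edge-cut-bound} is correct and in fact simpler and sharper than the paper's proof --- it yields $1/\gamma$ rather than $3/\gamma$. You got lost at the last step because you interpreted the wrong ratio probabilistically. After you reach
\[
\sum_{f\in S_\gamma} \frac{|b_{st}^T L_{G/X}^+ b_f|}{r_f} \le \frac{1}{\gamma}\cdot\frac{b_{st}^T L_G^+ b_{su}}{b_{st}^T L_G^+ b_{st}},
\]
do not rewrite the right side as $\Pr_t[t_s>t_u]\cdot\texttt{Reff}(s,u)/\texttt{Reff}(s,t)$. Instead observe that with $\phi:=L_G^+ b_{st}$ this ratio equals $(\phi_s-\phi_u)/(\phi_s-\phi_t)=\Pr_u[t_t<t_s]\in[0,1]$, since $\phi$ is maximized at $s$ and minimized at $t$. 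That immediately gives $1/\gamma$. Your sign worry is also unfounded: with $\psi:=L_G^+ b_{su}$ we have $b_{sq}^T L_G^+ b_{su}=\psi_s-\psi_q\ge 0$ for every $q$, so the absolute value in the definition of $S_\gamma$ is harmless and you may drop nonnegative terms from the identity freely.

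The paper's proof is quite different and more elaborate: it subdivides each edge of $S_\gamma$, takes a Schur complement onto $X\cup\{t\}$ together with the subdivision midpoints, and uses the harmonic property of the $s$--$u$ potential at $t$ (plus Rayleigh monotonicity) to show that at most a $2/\gamma$ fraction of the conductance incident to $t$ passes through those midpoints, which then bounds the flow through $S_\gamma$. Your route bypasses all of this by recognizing that the factorization identity already decomposes $b_{st}^T L_G^+ b_{su}$ as a convex combination (weighted by the exit-edge probabilities) of the nonnegative quantities $b_{sq}^T L_G^+ b_{su}$, making the Markov-type bound immediate. The payoff is a shorter argument with a better constant; the paper's approach has no evident advantage here.
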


\begin{figure}
\includegraphics[width=1.0\textwidth]{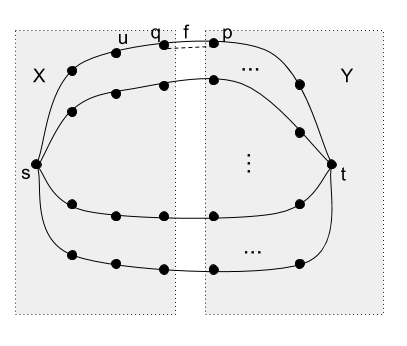}
\caption{Depiction of Proposition \ref{prop:high-drop-flow}. In this example, there are $k$ parallel paths with length $k$ from $s$ to $t$ and the $X-Y$ cut cuts the middle edge of each path. $u$ is close to the cutedge $f$. $f$ is the one $X-Y$ cutedge for which $q$ has roughly $k$ times the $s-u$ potential of $t$. The proposition implies that $f$ has at most $O(1/k)$ flow.}
\label{fig:potential-flow-bound}
\end{figure}

\begin{proof}
Split each edge $e\in S_{\gamma}$ with resistance $r_e$ into a path with two edges $e_1$ and $e_2$ with resistance $r_e/2$. Let the resulting graph be $H$. For an edge $\{p,q\} \in E(Y,X)$, suppose that it is split into $\{p,w\}$ and $\{w,q\}$. Notice that $|b_{st}^T L_{H/X}^+ b_{e_2}|/r_{e_2}^H = |b_{st}^T L_H^+ b_e|/r_e^G$, so it suffices to bound the $st$-flow on the edges $e_2$ for each edge $e\in S_{\gamma}$.

Let $S_{\gamma}'$ be the set of edges $e_2$ for edges $e\in S_{\gamma}$.Notice that $|b_{su}^T L_H^+ b_{sz}| \ge \gamma b_{st}^T L_G^+ b_{st} / 2 = \gamma |b_{st}^T L_H^+ b_{st}| / 2$ for each $z\in V(S_{\gamma}')$. Let $I = \texttt{Schur}(H, V(S_{\gamma}')\cup X\cup\{t\})$. By the harmonic property of vertex potentials,

\begin{align*}
\left(\sum_{x\in V(S_{\gamma}')\cup X} c_{xt}^I\right) b_{st}^T L_I^+ b_{st} &\ge \left(\sum_{x\in V(S_{\gamma}')\cup X} c_{xt}^I\right) b_{su}^T L_I^+ b_{st}\\
&= \sum_{x\in V(S_{\gamma}') \cup X} c_{xt}^I b_{su}^T L_I^+ b_{sx}\\
&\ge \sum_{x\in V(S_{\gamma}')\setminus X} c_{xt}^I b_{su}^T L_I^+ b_{sx}\\
&\ge \frac{\gamma}{2} b_{st}^T L_I^+ b_{st} \sum_{x\in V(S_{\gamma}')\setminus X} c_{xt}^I
\end{align*}

so at least a $1 - 2/\gamma$ fraction of the conductance incident with $t$ is also incident with $X$. By Rayleigh monotonicity, $1 / \left(\sum_{x\in V(S_{\gamma}') \cup X} c_{xt}^I\right) \le b_{st}^T L_{I/X}^+ b_{st}$. Therefore, the total flow in $I/X$ incident with $t$ that does not come directly from $s$ is

\begin{align*}
\sum_{x\in V(S_{\gamma}') \setminus X} c_{xt}^{I/X} b_{st}^T L_{I/X}^+ b_{xt} &= 1 - c_{st}^{I/X} b_{st}^T L_{I/X}^+ b_{st}\\
&= 1 - \left(\sum_{x\in X} c_{xt}^I\right) b_{st}^T L_{I/X}^+ b_{st}\\
&\le 1 - \frac{\sum_{x\in X} c_{xt}^I}{\sum_{x\in V(S_{\gamma}') \cup X} c_{xt}^I}\\
&= \frac{\sum_{x\in V(S_{\gamma}')\setminus X} c_{xt}^I}{\sum_{x\in V(S_{\gamma}') \cup X} c_{xt}^I}\\
&\le \frac{2}{\gamma}\\
\end{align*}

Notice that $I$ contains the edges in $S_{\gamma}'$ and that all $s-t$ flow on these edges in $I/X$ enters $t$ through some vertex besides $s$. Since all edges of $S_{\gamma}'$ are incident with $X$, flow only travels across one edge in $S_{\gamma}'$. This means that

$$\sum_{f\in S_{\gamma}'} c_f^I |b_{st}^T L_{I/X}^+ b_f| \le \frac{2}{\gamma}$$

The desired result follows from the fact that $|b_{st}^T L_{I/X}^+ b_f| = |b_{st}^T L_{G/X}^+ b_f|$ by definition of the Schur complement for all $f\in S_{\gamma}'$ and the fact that $c_f^I \ge c_f^G$.
\end{proof}

\begin{proposition}\label{prop:pot-energy-ub}
Consider a graph $I$ with two vertices $s,t\in V(I)$. Let $F$ be the set of edges $e = \{a,b\}\in E(I)$ with $\max_{c\in \{a,b\}} b_{st}^T L_I^+ b_{sc}\le \rho$ for some $\rho > 0$. Then

$$\sum_{e\in F} (b_{st}^T L_I^+ b_e)^2/r_e\le \rho$$
\end{proposition}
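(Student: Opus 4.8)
The statement I need to prove is Proposition~\ref{prop:pot-energy-ub}: in a graph $I$ with $s,t\in V(I)$, if $F$ is the set of edges $e=\{a,b\}$ whose endpoints both have normalized $s$-$t$ potential (i.e. $b_{st}^T L_I^+ b_{sc}$) at most $\rho$, then $\sum_{e\in F}(b_{st}^T L_I^+ b_e)^2/r_e\le \rho$. The plan is to view the left-hand side as the electrical energy of a truncated flow and compare it against the full $s$-$t$ energy.

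\textbf{Approach.} First I would recall that the $s$-$t$ electrical flow in $I$ sending one unit has energy $b_{st}^T L_I^+ b_{st}$ and that the potential at a vertex $c$ is exactly $\phi_c := b_{st}^T L_I^+ b_{sc}$, with $\phi_s = 0$ and $\phi_t = b_{st}^T L_I^+ b_{st}$. For an edge $e=\{a,b\}$ (oriented so $\phi_b\ge\phi_a$), the energy contribution of $e$ to the $s$-$t$ flow is $(b_{st}^T L_I^+ b_e)^2/r_e = (\phi_b-\phi_a)^2/r_e$. So the sum in question is the total energy carried by edges in $F$, i.e.\ edges both of whose endpoint potentials lie in $[0,\rho]$. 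The key observation is that the set $F$ is contained in the edge set of the subgraph induced by the vertices with potential at most $\rho$, and on that subgraph the restriction of the $s$-$t$ flow potentials, \emph{clamped} at $\rho$, is itself a valid electrical potential for a flow between $s$ and the level set $\{\phi = \rho\}$ (after subdividing threshold-crossing edges at potential $\rho$ and identifying the level set into one terminal). This is precisely the construction used in the proof of Lemma~\ref{lem:p-cond} and Proposition~\ref{prop:elec-flow-cont}.

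\textbf{Key steps.} (1) Subdivide every edge crossing the potential-$\rho$ threshold at the point where $\phi = \rho$; this does not change any energy contributions and makes the level set $L_\rho = \{v : \phi_v = \rho\}$ well-defined as a vertex cut. (2) Let $J$ be the induced subgraph on $\{v : \phi_v \le \rho\}$ with $L_\rho$ identified to a single vertex $r$. The potentials $\phi$ restricted to $V(J)$ satisfy the harmonic/flow-conservation condition at every interior vertex (since they did in $I$ and we only removed edges leaving the region, which carried flow out through $L_\rho$), so $\phi|_{V(J)}$ is the electrical potential vector for the $s$-$r$ flow in $J$ with boundary values $0$ at $s$ and $\rho$ at $r$. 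Hence the energy of that flow in $J$ equals $\sum_{e\in E(J)}(\phi_b-\phi_a)^2/r_e$. (3) Every edge of $F$ lies in $E(J)$ (both endpoints have $\phi\le\rho$), so $\sum_{e\in F}(\phi_b-\phi_a)^2/r_e \le \sum_{e\in E(J)}(\phi_b-\phi_a)^2/r_e = \rho\cdot\mathrm{flow}(J)$, where $\mathrm{flow}(J)$ is the amount of unit flow that enters $r$, which is at most $1$ (it is the fraction of the original unit $s$-$t$ flow crossing $L_\rho$, which is exactly $1$ by flow conservation, but $\le 1$ suffices). Actually the energy of a flow sending value $w$ between potentials $0$ and $\rho$ is $w\rho$, and here $w = 1$, giving the bound $\rho$. (4) Conclude $\sum_{e\in F}(b_{st}^T L_I^+ b_e)^2/r_e \le \rho$.

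\textbf{Main obstacle.} The only delicate point is justifying that $\phi|_{V(J)}$ really is the electrical potential of a genuine flow in $J$ — i.e.\ that restricting to the sublevel region and deleting the outgoing edges preserves flow conservation at every vertex with $\phi < \rho$ and that the net flow out of the region through $L_\rho$ equals $1$. This is exactly the clamping argument already carried out in the proof of Lemma~\ref{lem:p-cond} (and the dual-feasibility argument in Lemma~\ref{lem:well-sep}), so I would cite that pattern rather than redo it. A secondary bookkeeping point is the subdivision step: I should note explicitly that subdividing an edge at an intermediate potential value leaves $\sum_e (\phi_b-\phi_a)^2/r_e$ unchanged (the two sub-edges have resistances summing to $r_e$ and potential drops summing to $\phi_b-\phi_a$, and $x^2/a + y^2/b \ge (x+y)^2/(a+b)$ with equality when $x/a = y/b$, which holds for the harmonic potentials), so no energy is lost or gained and the inequality transfers back to the original graph.
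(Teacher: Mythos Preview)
Your proposal is correct and follows essentially the same idea as the paper: both bound the energy on $F$ by the energy of the unit $s$--$t$ electrical flow restricted to the potential band $[0,\rho]$, which equals $\rho$. The paper's proof is just a slicker packaging of the same observation: it writes the energy on each edge as current times potential drop, expresses this as an integral $\int_0^\rho \big(\sum_{e \text{ crossing level }x} |b_{st}^T L_I^+ b_e|/r_e\big)\,dx$, and uses that the total current crossing any level set is exactly $1$, giving $\int_0^\rho 1\,dx = \rho$ --- no subdivision, clamping, or appeal to Lemma~\ref{lem:p-cond} is needed.
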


\begin{proof}
Write energy as current times potential drop. Doing this shows that

\begin{align*}
\sum_{e\in F} (b_{st}^T L_I^+ b_e)^2/r_e &\le \int_0^{\rho} \sum_{e\in x \text{ threshold cut}} (|b_{st}^T L_I^+ b_e|/r_e)dx\\
&= \int_0^{\rho} dx\\
&= \rho
\end{align*}

as desired.
\end{proof}

\lemsecondorderdeg*

\begin{proof}[Proof of Lemma \ref{lem:second-order-deg}]
We start by reducing to the case where all edges have both endpoints in $X$. Subdividing all edges in $\partial_G X$ to obtain a graph $G_0$. Let $X_0$ be the set of vertices in $G_0$ that are endpoints of edges in $E_G(X)\cup \partial_{G_0} X$. Let $x_0$ be the identification of $X_0$ in $G_0/X_0$. Notice that

\begin{align*}
\sum_{e\in E_G(X)\cup \partial_G X} \max_{t\in Y} \frac{(b_{st}^T L_G^+ b_e)^2}{(b_{st}^T L_G^+ b_{st})r_e^G} &= \sum_{e\in E_{G_0}(X)} \max_{t\in Y} \frac{(b_{st}^T L_{G_0}^+ b_e)^2}{(b_{st}^T L_{G_0}^+ b_{st})r_e^{G_0}}\\
&+ \sum_{e\in \partial_{G_0} X} \max_{t\in Y} \frac{4(b_{st}^T L_{G_0}^+ b_e)^2}{(b_{st}^T L_{G_0}^+ b_{st})2r_e^{G_0}}\\
&\le 2\sum_{e\in E_{G_0}(X_0)} \max_{t\in Y} \frac{(b_{st}^T L_{G_0}^+ b_e)^2}{(b_{st}^T L_{G_0}^+ b_{st})r_e^{G_0}}\\
\end{align*}

with all equalities and inequalities true termwise. By Proposition \ref{prop:deg-split},

$$\Delta^{G_0}(X_0,Y)\le 2\Delta^G(X,Y)$$

so to prove the lemma it suffices to show that

$$\sum_{e\in E_{G_0}(X_0)} \max_{t\in Y} \frac{(b_{st}^T L_{G_0}^+ b_e)^2}{(b_{st}^T L_{G_0}^+ b_{st})r_e^{G_0}}\le 6\xibuc^2\Delta^{G_0}(X_0,Y)$$

Let $H = \texttt{Schur}(G_0,X_0\cup Y)$. The Schur complement only adds edges to $E(X_0)$, so

$$\sum_{e\in E_{G_0}(X_0)} \max_{t\in Y} \frac{(b_{st}^T L_{G_0}^+ b_e)^2}{(b_{st}^T L_{G_0}^+ b_{st})r_e^{G_0}}\le \sum_{e\in E_H(X_0)} \max_{t\in Y} \frac{(b_{st}^T L_H^+ b_e)^2}{(b_{st}^T L_H^+ b_{st})r_e^H}$$

For an edge $g\in E_H(X_0,Y)$, let $x_g$ and $y_g$ denote its endpoints in $X_0$ and $Y$ respectively, $S_{g,i}\subseteq E_H(X_0)$ be the set of edges $e = \{a,b\}$ with $\max_{c \in \{a,b\}} |b_{sx_g}^T L_G^+ b_{sc}| \in [2^i, 2^{i+1}]$ for integers $i\in (\log (r_{min}/m),\log r_{max}]$ and $\max_{c \in \{a,b\}} |b_{sx_g}^T L_G^+ b_{sc}| \in [0, 2^{i+1}]$ for $i = \log (r_{min}/m)$. By Proposition \ref{prop:edge-cut-bound},

$$|b_{st}^T L_H^+ b_e|\le \sum_{g\in E_H(X_0,Y)} |b_{sx_g}^T L_H^+ b_e| \frac{|b_{st}^T L_{H/X_0}^+ b_g|}{r_g^H}$$

Applying Cauchy-Schwarz twice and Proposition \ref{prop:high-drop-flow} shows that

\begin{align*}
&\sum_{e\in E_H(X_0)} \max_{t\in Y} \frac{(b_{st}^T L_H^+ b_e)^2}{(b_{st}^T L_H^+ b_{st})r_e^H} \le \sum_{e\in E_H(X_0)} \max_{t\in Y} \frac{1}{(b_{st}^T L_H^+ b_{st})r_e^H} \left(\sum_{i=\log (r_{min}/m)}^{\log r_{max}}\sum_{g:e\in S_{g,i}} |b_{sx_g}^T L_H^+ b_e| \frac{|b_{st}^T L_{H/X_0}^+ b_g|}{r_g^H}\right)^2\\
&\le \xibuc\sum_{i=\log (r_{min}/m)}^{\log r_{max}} \sum_{e\in E_H(X_0)} \max_{t\in Y} \frac{1}{(b_{st}^T L_H^+ b_{st})r_e^H} \left(\sum_{g:e\in S_{g,i}} |b_{sx_g}^T L_H^+ b_e| \frac{|b_{st}^T L_{H/X_0}^+ b_g|}{r_g^H}\right)^2\\
&\le \xibuc\sum_{i=\log (r_{min}/m)}^{\log r_{max}} \sum_{e\in E_H(X_0)} \max_{t\in Y} \frac{1}{(b_{st}^T L_H^+ b_{st})r_e^H} \left(\sum_{g:e\in S_{g,i}} (b_{sx_g}^T L_H^+ b_e)^2 \frac{|b_{st}^T L_{H/X_0}^+ b_g|}{r_g^H}\right)\left(\sum_{g:e\in S_{g,i}} \frac{|b_{st}^T L_{H/X_0}^+ b_g|}{r_g^H}\right)\\
&\le \xibuc\sum_{i=\log (r_{min}/m)+1}^{\log r_{max}} \sum_{e\in E_H(X_0)} \max_{t\in Y} \left(\sum_{g:e\in S_{g,i}} \frac{(b_{sx_g}^T L_H^+ b_e)^2}{(b_{st}^T L_H^+ b_{st})r_e^H} \frac{|b_{st}^T L_{H/X_0}^+ b_g|}{r_g^H}\right) \frac{3b_{st}^T L_H^+ b_{st}}{2^i}\\
&+ \xibuc \sum_{e\in E_H(X_0)} \max_{t\in Y} \left(\sum_{g:e\in S_{g,\log (r_{min}/m)}} \frac{(b_{sx_g}^T L_H^+ b_e)^2}{(b_{st}^T L_H^+ b_{st})r_e^H} \frac{|b_{st}^T L_{H/X_0}^+ b_g|}{r_g^H}\right) m\\
&\le 3 \xibuc\sum_{i=\log (r_{min}/m)}^{\log r_{max}} \sum_{g\in E_H(X_0,Y)} \sum_{e\in S_{g,i}} \frac{(b_{sx_g}^T L_H^+ b_e)^2}{2^ir_e^H} \left(\max_{t\in Y}\frac{|b_{st}^T L_{H/X_0}^+ b_g|}{r_g^H}\right)\\
\end{align*}

By Proposition \ref{prop:pot-energy-ub},

$$\sum_{e\in S_{g,i}} \frac{(b_{sx_g}^T L_H^+ b_e)^2}{r_e^H} \le 2^{i+1}$$

so

\begin{align*}
\sum_{e\in E_H(X_0)} \max_{t\in Y} \frac{(b_{st}^T L_H^+ b_e)^2}{(b_{st}^T L_H^+ b_{st})r_e^H} &\le 6 \xibuc \sum_{i=\log (r_{min}/m)}^{\log r_{max}} \sum_{g\in E_H(X_0,Y)}\left(\max_{t\in Y}\frac{|b_{st}^T L_{H/X_0}^+ b_g|}{r_g^H}\right)\\
&\le 6 \xibuc \sum_{i=\log (r_{min}/m)}^{\log r_{max}} \sum_{g\in E_H(X_0,Y)}\frac{b_g^T L_{H/X_0}^+ b_g}{r_g^H}\\
&= 6 \xibuc^2 \Delta^{G_0}(X_0,Y)\\
\end{align*}

as desired.
\end{proof}

\subsubsection{Other preliminaries}

\propdegpotbound*

\begin{proof}
Let $H_z = \texttt{Schur}(G/Z, \{z\}\cup Y)$. By definition,

$$\Delta^G(Z,Y) = \sum_{v\in Y} \frac{b_{zv}^T L_{G/Z}^+ b_{zv}}{r_{zv}^{H_z}}$$

By Rayleigh monotonicity and the fact that $X\subseteq Z$,

$$b_{zv}^T L_{G/Z}^+ b_{zv} \le b_{xv}^T L_{G/X}^+ b_{xv}$$

By Proposition \ref{prop:elec-flow-cont} applied to $U = Z$ and $G'$,

$$r_{zv}^{H_z} \ge \frac{1}{1-\gamma} r_{xv}^{H_x}$$

since the conductance in the Schur complement for a vertex $v\in Y$ is proportional to the incoming flow by Proposition \ref{prop:lin-alg-norm-cond}. Substituting these inequalities and using the definition of $\Delta^G(X,Y)$ gives the desired result.
\end{proof}

\propcondpotbound*

\begin{proof}
Apply Proposition \ref{prop:cond-pot} twice, first with $Z\gets A$ and second with $Z\gets B$. Do this for all $w\in X\cup Y$. Each application increases the $X-Y$ conductance by at most a factor of $1/(1-\gamma)$, proving the desired result. 
\end{proof}

\subsection{Objectives that are stable thanks to stable oracles}\label{sec:alg-stable-appendix}

Before proving these stability propositions, we prove one fact that is common to all of them: that splitting can be done in the same direction (parallel or series) for all of the graphs $I_k$:

\begin{proposition}\label{prop:common-split}
Consider some $k\in \{0,1,\hdots,K-1\}$ and any $e\not\in \{f_0,f_1,\hdots,f_{k-1}\}$. Let $(I_0'',\{e^{(0)},e^{(1)}\}) \gets \Split(I_0,e)$ and obtain $I_k''\gets I_{k-1}''[[f_{k-1}]]$. Then,

\begin{itemize}
\item (Splitting later) $I_k''$ is equivalent in distribution to the graph obtained by splitting $e$ in $I_k$ in the same direction as $e$ was split in $I_0$ to obtain $I_0''$
\item (Leverage score bounds) With high probability,

$$1/8\le \texttt{lev}_{I_k''/(S,S')}(e^{(0)})\le \texttt{lev}_{I_k''\setminus D}(e^{(0)})\le 7/8$$

\end{itemize}
\end{proposition}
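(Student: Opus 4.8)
\textbf{Proof proposal for Proposition \ref{prop:common-split}.}

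The plan is to unwind the definitions of $\Split$ and $I[[\cdot]]$ and reduce both claims to facts already established in the excerpt. For the ``Splitting later'' claim, I would first observe that $\Split(I_0,e)$ decides the direction of the split (series vs.\ parallel) based solely on whether $\texttt{lev}_{I_0}(e)\le 1/2$. The operations that produce $I_k$ from $I_0$ are contractions and deletions of the edges $f_0,\dots,f_{k-1}$, none of which equals $e$, together with arbitrary choices among split copies. I would argue by induction on $k$ that splitting $e$ commutes with each such contraction/deletion as a distributional identity: this is exactly the content of Proposition \ref{prop:precomp-partial} (and the underlying Proposition \ref{prop:split-sample}), which says that performing $\Split$ on an edge and then sampling/conditioning on other edges gives the same distribution as conditioning first and then splitting, provided the split of $e$ is performed in a fixed direction. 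Concretely, $\Split$ on $e\ne f$ does not touch $f$ and commutes with $G/f$ and $G\setminus f$ as graph operations; combined with Proposition \ref{prop:split-sample} applied to the copies created when $f$ is split and conditioned, this gives that $I_k''$ and ``split $e$ in $I_k$ in the original direction'' have the same distribution. The one subtlety is that the ``original direction'' is determined by $\texttt{lev}_{I_0}(e)$, not $\texttt{lev}_{I_k}(e)$, so the identity is a distributional statement about splitting in that particular (possibly no longer ``natural'') direction — which is precisely what the statement asserts.

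For the ``Leverage score bounds'' claim, I would start from Proposition \ref{prop:split-cond}, which gives $\texttt{lev}_{I_0''}(e^{(i)})\in[1/4,3/4]$ for both copies after splitting in $I_0$. The monotonicity $\texttt{lev}_{I_k''/(S,S')}(e^{(0)})\le \texttt{lev}_{I_k''\setminus D}(e^{(0)})$ is immediate from Rayleigh monotonicity, since $I_k''/(S,S')$ is obtained from $I_k''\setminus D$ by identifying vertices (which can only decrease effective resistance, hence leverage score) — I would need to double-check the direction but identification decreases $\texttt{Reff}$ while deletion increases it, so $I_k''/(S,S')$ has smaller leverage for $e^{(0)}$ than $I_k''$ does, and $I_k''\setminus D$ has larger. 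Then for the two outer inequalities I would invoke the hypotheses on $\Oracle$ from Definition \ref{def:oracle}: the ``Leverage score stability'' guarantees (both upper and lower) state that $|\texttt{lev}_{I_k\setminus D}(f_k) - \texttt{lev}_{I\setminus D}(f_k)|\le 1/16$ and $|\texttt{lev}_{I_k/(S,S')}(f_k) - \texttt{lev}_{I/(S,S')}(f_k)|\le 1/16$ with high probability. Applying this to the edge $e^{(0)}$ (which is one of the edges the oracle outputs, up to the split bookkeeping) together with the ``Bounded leverage score difference'' input condition $|\texttt{lev}_{I\setminus D}(e) - \texttt{lev}_{I/(S,S')}(e)|\le 1/16$ and the $[1/4,3/4]$ starting bound from Proposition \ref{prop:split-cond}, a triangle-inequality computation gives $\texttt{lev}_{I_k''/(S,S')}(e^{(0)})\ge 1/4 - 1/16 - 1/16 - (\text{split slack}) \ge 1/8$ and symmetrically $\texttt{lev}_{I_k''\setminus D}(e^{(0)})\le 3/4 + \text{slack}\le 7/8$.

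The main obstacle I anticipate is bookkeeping consistency between ``the graph $I_k''$ obtained by splitting $e$ in $I_0$ first'' and ``the graph $I_k$ with $e$ split afterwards,'' because the oracle's stability guarantees in Definition \ref{def:oracle} are phrased in terms of the $I_k$ sequence (edges already split before conditioning), while this proposition wants to retroactively split one more edge $e$ that was \emph{not} among the conditioned edges. I would handle this by noting that $e\notin\{f_0,\dots,f_{k-1}\}$, so splitting $e$ in $I_0$ and then performing the same conditioning sequence yields a graph in which the conditioning edges behave identically (their effective resistances and hence leverage scores are unaffected by splitting a disjoint edge, by Sherman–Morrison / the fact that $b_{e^{(0)}}$ and $b_{f_j}$ interact only through $L^+$ which changes in a controlled way) — more carefully, splitting $e$ only subdivides or doubles $e$, leaving the rest of the graph's Laplacian structure on the other vertices unchanged, so the oracle's high-probability guarantees transfer. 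Once that transfer is justified, the ``Leverage score bounds'' follow from the triangle inequality as above, and the runtime/correctness of deferring the split follows from Proposition \ref{prop:precomp-partial}. I do not expect the calculation itself to be hard; the care is entirely in setting up which graph the oracle's guarantees apply to and checking the constant $1/8$ survives the accumulated $1/16$-type slacks plus the $1/4$ slack from splitting.
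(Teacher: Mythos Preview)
Your ``Splitting later'' argument is correct but over-engineered; the paper dispatches it in one line by noting that effective resistances are electrical functions, so splitting (whose direction and effect depend only on them) commutes with the contractions/deletions of the other $f_j$. You do not need to invoke Proposition~\ref{prop:precomp-partial} or Proposition~\ref{prop:split-sample}.

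For the ``Leverage score bounds'' there is a genuine gap. Your triangle-inequality computation
\[
\texttt{lev}_{I_k''/(S,S')}(e^{(0)}) \ge 1/4 - 1/16 - 1/16 - (\text{split slack}) \ge 1/8
\]
mixes bounds on two different objects: the $1/4$ lower bound from Proposition~\ref{prop:split-cond} is for the split copy $e^{(0)}$ in $I_0''$, whereas the $1/16$ slacks from the oracle's ``Leverage score stability'' and ``Bounded leverage score difference'' conditions are stated for the \emph{unsplit} edge $e$ in the graphs $I_k\setminus D$, $I/(S,S')$, etc. These do not combine directly. To bridge them you need the explicit formulas relating $\texttt{lev}(e^{(0)})$ to $\texttt{lev}(e)$, namely $\texttt{lev}(e^{(0)})=\texttt{lev}(e)/2$ when $e$ is split in parallel and $\texttt{lev}(e^{(0)})=(\texttt{lev}(e)+1)/2$ when split in series---and which formula applies is determined by $\texttt{lev}_{I_0}(e)$, not by any leverage score in $I_k$. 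The paper's proof is organized around exactly this case split: it first establishes $|\texttt{lev}_{I_k\setminus D}(e)-\texttt{lev}_{I_0}(e)|\le 3/16$ and the analogous bound for $/(S,S')$ on the unsplit edge, and only then applies the appropriate split formula. In the parallel case ($\texttt{lev}_{I_0}(e)\ge 1/2$) one gets $\texttt{lev}_{I_k/(S,S')}(e)\ge 1/2-3/16>1/4$, and halving gives the $1/8$ lower bound; in the series case ($\texttt{lev}_{I_0}(e)<1/2$) one gets $\texttt{lev}_{I_k\setminus D}(e)\le 3/4$, and $(\cdot+1)/2$ gives the $7/8$ upper bound. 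Without making this case distinction explicit, your ``split slack'' term is undefined and the inequality does not close. The Rayleigh-monotonicity argument for the middle inequality is fine and matches the paper.
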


\begin{proof}

\textbf{Splitting later.} Effective resistances are electrical functions, so order of splitting does not matter.

\textbf{Leverage score bounds.} By the ``Bounded leverage score difference'' condition in Definition \ref{def:oracle}, the ``Leverage score stability'' guarantees in Definition \ref{def:oracle}, and the triangle inequality,

$$|\texttt{lev}_{I_k\setminus D}(e) - \texttt{lev}_{I_0}(e)|\le 3/16$$

and

$$|\texttt{lev}_{I_k/(S,S')}(e) - \texttt{lev}_{I_0}(e)|\le 3/16$$

If $\texttt{lev}_{I_0}(e)\ge 1/2$, then $e$ is split in parallel in $I_0$ to obtain $I_0''$. Furthermore, $\texttt{lev}_{I_k/(S,S')}(e)\ge 1/2 - 3/16 > 1/4$. This means that $1/2\ge \texttt{lev}_{I_k''/(S,S')}(e^{(0)})\ge (1/4)/2 = 1/8$ since $\texttt{lev}_{I_k''/(S,S')}(e^{(0)}) = \texttt{lev}_{I_k/(S,S')}(e)/2$. By Rayleigh monotonicity, $\texttt{lev}_{I_k''\setminus D}(e) \ge \texttt{lev}_{I_k''/(S,S')}(e)\ge 1/8$. As a result, $1/2\ge \texttt{lev}_{I_k''\setminus D}(e^{(0)})\ge 1/8$ as well.

If $\texttt{lev}_{I_0}(e)\le 1/2$, then $e$ is split in series in $I_0$ to obtain $I_0''$. Furthermore, $\texttt{lev}_{I_k\setminus D}(e)\le 1/2 + 3/16 \le 3/4$. This means that $1/2\le \texttt{lev}_{I_k''\setminus D}(e^{(0)})\le (1/2)(3/4) + 1/2 = 7/8$ since $\texttt{lev}_{I_k''\setminus D}(e^{(0)})\le (1/2)\texttt{lev}_{I_k\setminus D}(e) + 1/2$. By Rayleigh monotonicity, $\texttt{lev}_{I_k''/(S,S')}(e)\le 7/8$. As a result, $1/2\le \texttt{lev}_{I_k''/(S,S')}(e^{(0)})\le 7/8$ as well.

This completes the desired result in both cases.
\end{proof}

Now, we proceed with the proofs of the stability propositions:

\propdeltastability*

\begin{proof}[Proof of Proposition \ref{prop:delta-stability}]

We focus on $\delta_{S,S'}(H\setminus D)$, as the argument for $\delta_{S',S}(H\setminus D)$ is the same, with $S$ and $S'$ swapped.

\underline{Well-definedness.} Recall that $\delta_{S,S'}(H\setminus D)$ is an electrical function of $H$ since $\delta_{S,S'}(H\setminus D)$ is preserved under Schur complementation of vertices besides the ones referenced in the summand of $\delta_{S,S'}(H\setminus D)$. Therefore, stability is well-defined.

\underline{Degree of midpoints.} Let $I_k''$ be the graph obtained from $I_k$ by splitting all edges of $W\setminus \{f_0,f_1,\hdots,f_{k-1}\}$ in series. By the ``Midpoint potential stability'' guarantee of $\Oracle$, the midpoints of these edges are contained in the sets $U_S,U_{S'}\subseteq V(I_k'')$ defined to be the sets of vertices with $s=0-s'=1$ normalized potential less than $1-p/2$ and greater than $p/2$ respectively. By Proposition \ref{prop:deg-pot-bound},

$$\Delta^{I_k''\setminus D}(U_S,S')\le 2\Delta^{I_k\setminus D}(S,S')/p$$

and

$$\Delta^{I_k''\setminus D}(U_{S'},S)\le 2\Delta^{I_k\setminus D}(S',S)/p$$

\textbf{Lipschitz contractions.} For simplicity of notation in the following proof, let $I_k$ denote the graph in which the edge $f_k$ has been split and let $f_k$ denote an arbitrary copy. By Sherman-Morrison,

\begin{align*}
&\delta_{S,S'}((I_k\setminus D)/f_k) - \delta_{S,S'}(I_k\setminus D)\\
&= \sum_{w\in S'} \sum_{e\in \partial_{I_k} w} \left(b_{sw}^T L_{(I_k\setminus D)/S}^+ b_{sw} - \frac{(b_{sw}^T L_{(I_k\setminus D)/S}^+ b_{f_k})^2}{b_{f_k}^T L_{(I_k\setminus D)/S}^+ b_{f_k}}\right)\\
&\left(\frac{b_{ss'}^T L_{(I_k\setminus D)/(S,S')}^+ b_e}{r_e} - \frac{(b_{ss'}^T L_{(I_k\setminus D)/(S,S')}^+ b_{f_k})(b_{f_k}^T L_{(I_k\setminus D)/(S,S')}^+ b_e)}{(b_{f_k}^T L_{(I_k\setminus D)/(S,S')}^+ b_{f_k})r_e}\right)\\
&- \sum_{w\in S'} \sum_{e\in \partial_{I_k} w} \left(b_{sw}^T L_{(I_k\setminus D)/S}^+ b_{sw}\right)\left(\frac{b_{ss'}^T L_{(I_k\setminus D)/(S,S')}^+ b_e}{r_e}\right)\\
&= -\sum_{w\in S'} \sum_{e\in \partial_{I_k} w} \left(\frac{(b_{sw}^T L_{(I_k\setminus D)/S}^+ b_{f_k})^2}{b_{f_k}^T L_{(I_k\setminus D)/S}^+ b_{f_k}}\right)\left(\frac{b_{ss'}^T L_{(I_k\setminus D)/(S,S')}^+ b_e}{r_e}\right)\\
&- \sum_{w\in S'} \sum_{e\in \partial_{I_k} w} \left(b_{sw}^T L_{(I_k\setminus D)/S}^+ b_{sw}\right)\left(\frac{(b_{ss'}^T L_{(I_k\setminus D)/(S,S')}^+ b_{f_k})(b_{f_k}^T L_{(I_k\setminus D)/(S,S')}^+ b_e)}{(b_{f_k}^T L_{(I_k\setminus D)/(S,S')}^+ b_{f_k})r_e}\right)\\
&+ \sum_{w\in S'} \sum_{e\in \partial_{I_k} w} \left(\frac{(b_{sw}^T L_{(I_k\setminus D)/S}^+ b_{f_k})^2}{b_{f_k}^T L_{(I_k\setminus D)/S}^+ b_{f_k}}\right)\left(\frac{(b_{ss'}^T L_{(I_k\setminus D)/(S,S')}^+ b_{f_k})(b_{f_k}^T L_{(I_k\setminus D)/(S,S')}^+ b_e)}{(b_{f_k}^T L_{(I_k\setminus D)/(S,S')}^+ b_{f_k})r_e}\right)\\
\end{align*}

Therefore, by Proposition \ref{prop:common-split}, Rayleigh monotonicity, and the fact that the total energy is an upper bound on the energy of an edge,

\begin{align*}
&|\delta_{S,S'}((I_k\setminus D)/f_k) - \delta_{S,S'}(I_k\setminus D)|\\
&\le 8\sum_{w\in S'} \sum_{e\in \partial_{I_k} w} \left(\frac{(b_{sw}^T L_{(I_k\setminus D)/S}^+ b_{f_k})^2}{r_{f_k}}\right)\left(\frac{b_{ss'}^T L_{(I_k\setminus D)/(S,S')}^+ b_e}{r_e}\right)\\
&+ 8\sum_{w\in S'} \sum_{e\in \partial_{I_k} w} \left(b_{sw}^T L_{(I_k\setminus D)/S}^+ b_{sw}\right)\left(\frac{|b_{ss'}^T L_{(I_k\setminus D)/(S,S')}^+ b_{f_k}||b_{f_k}^T L_{(I_k\setminus D)/(S,S')}^+ b_e|}{r_{f_k}r_e}\right)\\
&+ 64\sum_{w\in S'} \sum_{e\in \partial_{I_k} w} \left(b_{sw}^T L_{(I_k\setminus D)/S}^+ b_{sw}\right)\left(\frac{|b_{ss'}^T L_{(I_k\setminus D)/(S,S')}^+ b_{f_k}| |b_{f_k}^T L_{(I_k\setminus D)/(S,S')}^+ b_e|}{r_{f_k}r_e}\right)
\end{align*}

By the ``$S-S'$ normalized degree change stability'' guarantees of $\Oracle$,

\begin{align*}
|\delta_{S,S'}((I_k\setminus D)/f_k) - \delta_{S,S'}(I_k\setminus D)| &\le \frac{8\rho}{|W|}\delta_{S,S'}(I\setminus D) + \frac{8\rho}{|W|}\delta_{S,S'}(I\setminus D) + \frac{64\rho}{|W|}\delta_{S,S'}(I\setminus D)\\
&\le \frac{80\rho}{|W|}\delta_{S,S'}(I\setminus D)\\
\end{align*}

as desired.

\textbf{Lipschitz deletions.} This bound is very similar to the contraction case. By Sherman-Morrison,

\begin{align*}
&\delta_{S,S'}((I_k\setminus D)\backslash f_k) - \delta_{S,S'}(I_k\setminus D)\\
&= \sum_{w\in S'} \sum_{e\in \partial_{I_k} w} \left(b_{sw}^T L_{(I_k\setminus D)/S}^+ b_{sw} + \frac{(b_{sw}^T L_{(I_k\setminus D)/S}^+ b_{f_k})^2}{r_{f_k} - b_{f_k}^T L_{(I_k\setminus D)/S}^+ b_{f_k}}\right)\\
&\left(\frac{b_{ss'}^T L_{(I_k\setminus D)/(S,S')}^+ b_e}{r_e} + \frac{(b_{ss'}^T L_{(I_k\setminus D)/(S,S')}^+ b_{f_k})(b_{f_k}^T L_{(I_k\setminus D)/(S,S')}^+ b_e)}{(r_{f_k} - b_{f_k}^T L_{(I_k\setminus D)/(S,S')}^+ b_{f_k})r_e}\right)\\
&- \sum_{w\in S'} \sum_{e\in \partial_{I_k} w} \left(b_{sw}^T L_{(I_k\setminus D)/S}^+ b_{sw}\right)\left(\frac{b_{ss'}^T L_{(I_k\setminus D)/(S,S')}^+ b_e}{r_e}\right)\\
&= \sum_{w\in S'} \sum_{e\in \partial_{I_k} w} \left(\frac{(b_{sw}^T L_{(I_k\setminus D)/S}^+ b_{f_k})^2}{r_{f_k} - b_{f_k}^T L_{(I_k\setminus D)/S}^+ b_{f_k}}\right)\left(\frac{b_{ss'}^T L_{(I_k\setminus D)/(S,S')}^+ b_e}{r_e}\right)\\
&+ \sum_{w\in S'} \sum_{e\in \partial_{I_k} w} \left(b_{sw}^T L_{(I_k\setminus D)/S}^+ b_{sw}\right)\left(\frac{(b_{ss'}^T L_{(I_k\setminus D)/(S,S')}^+ b_{f_k})(b_{f_k}^T L_{(I_k\setminus D)/(S,S')}^+ b_e)}{(r_{f_k} - b_{f_k}^T L_{(I_k\setminus D)/(S,S')}^+ b_{f_k})r_e}\right)\\
&+ \sum_{w\in S'} \sum_{e\in \partial_{I_k} w} \left(\frac{(b_{sw}^T L_{(I_k\setminus D)/S}^+ b_{f_k})^2}{r_{f_k} - b_{f_k}^T L_{(I_k\setminus D)/S}^+ b_{f_k}}\right)\left(\frac{(b_{ss'}^T L_{(I_k\setminus D)/(S,S')}^+ b_{f_k})(b_{f_k}^T L_{(I_k\setminus D)/(S,S')}^+ b_e)}{(r_{f_k} - b_{f_k}^T L_{(I_k\setminus D)/(S,S')}^+ b_{f_k})r_e}\right)\\
\end{align*}

Therefore, by Proposition \ref{prop:common-split}, Rayleigh monotonicity, and the fact that the total energy is an upper bound on the energy of an edge,

\begin{align*}
&|\delta_{S,S'}((I_k\setminus D)\backslash f_k) - \delta_{S,S'}(I_k\setminus D)|\\
&\le 8\sum_{w\in S'} \sum_{e\in \partial_{I_k} w} \left(\frac{(b_{sw}^T L_{(I_k\setminus D)/S}^+ b_{f_k})^2}{r_{f_k}}\right)\left(\frac{b_{ss'}^T L_{(I_k\setminus D)/(S,S')}^+ b_e}{r_e}\right)\\
&+ 8\sum_{w\in S'} \sum_{e\in \partial_{I_k} w} \left(b_{sw}^T L_{(I_k\setminus D)/S}^+ b_{sw}\right)\left(\frac{|b_{ss'}^T L_{(I_k\setminus D)/(S,S')}^+ b_{f_k}||b_{f_k}^T L_{(I_k\setminus D)/(S,S')}^+ b_e|}{r_{f_k}r_e}\right)\\
&+ 64\sum_{w\in S'} \sum_{e\in \partial_{I_k} w} \left(b_{sw}^T L_{(I_k\setminus D)/S}^+ b_{sw}\right)\left(\frac{|b_{ss'}^T L_{(I_k\setminus D)/(S,S')}^+ b_{f_k}| |b_{f_k}^T L_{(I_k\setminus D)/(S,S')}^+ b_e|}{r_{f_k}r_e}\right)
\end{align*}

By the ``$S-S'$ normalized degree change stability'' guarantees of $\Oracle$,

\begin{align*}
|\delta_{S,S'}((I_k\setminus D)\backslash f_k) - \delta_{S,S'}(I_k\setminus D)| &\le \frac{8\rho}{|W|}\delta_{S,S'}(I\setminus D) + \frac{8\rho}{|W|}\delta_{S,S'}(I\setminus D) + \frac{64\rho}{|W|}\delta_{S,S'}(I\setminus D)\\
&\le \frac{80\rho}{|W|}\delta_{S,S'}(I\setminus D)\\
\end{align*}

as desired.

\textbf{Change in expectation.} Write down the change in the expectation:

\begin{align*}
&\textbf{E}_{H'\sim I_k[f_k]}[\delta_{S,S'}(H'\setminus D) | f_k] - \delta_{S,S'}(I_k\setminus D)\\
&= \texttt{levcng}_{I_k\rightarrow (I_k\setminus D)/S}(f_k)\sum_{w\in S'} \sum_{e\in \partial_{I_k} w} \left(\frac{(b_{sw}^T L_{(I_k\setminus D)/S}^+ b_{f_k})^2}{r_{f_k}}\right)\left(\frac{b_{ss'}^T L_{(I_k\setminus D)/(S,S')}^+ b_e}{r_e}\right)\\
&\texttt{levcng}_{I_k\rightarrow (I_k\setminus D)/(S,S')}(f_k)\sum_{w\in S'} \sum_{e\in \partial_{I_k} w} \left(b_{sw}^T L_{(I_k\setminus D)/S}^+ b_{sw}\right)\left(\frac{(b_{ss'}^T L_{(I_k\setminus D)/(S,S')}^+ b_{f_k})(b_{f_k}^T L_{(I_k\setminus D)/(S,S')}^+ b_e)}{r_{f_k}r_e}\right)\\
&+ \left(\frac{\texttt{lev}_{I_k}(f_k)}{\texttt{lev}_{(I_k\setminus D)/S}(f_k)\texttt{lev}_{(I_k\setminus D)/(S,S')}(f_k)} + \frac{1 - \texttt{lev}_{I_k}(f_k)}{(1 - \texttt{lev}_{(I_k\setminus D)/S}(f_k))(1 - \texttt{lev}_{(I_k\setminus D)/(S,S')}(f_k))}\right)\\
&\sum_{w\in S'} \sum_{e\in \partial_{I_k} w} \left(\frac{(b_{sw}^T L_{(I_k\setminus D)/S}^+ b_{f_k})^2}{r_{f_k}}\right)\left(\frac{(b_{ss'}^T L_{(I_k\setminus D)/(S,S')}^+ b_{f_k})(b_{f_k}^T L_{(I_k\setminus D)/(S,S')}^+ b_e)}{r_{f_k}r_e}\right)\\
\end{align*}

By Proposition \ref{prop:common-split},

\begin{align*}
&|\textbf{E}_{H'\sim I_k[f_k]}[\delta_{S,S'}(H'\setminus D) | f_k] - \delta_{S,S'}(I_k\setminus D)|\\
&\le |\texttt{levcng}_{I_k\rightarrow (I_k\setminus D)/S}(f_k)|\sum_{w\in S'} \sum_{e\in \partial_{I_k} w} \left(\frac{(b_{sw}^T L_{(I_k\setminus D)/S}^+ b_{f_k})^2}{r_{f_k}}\right)\left(\frac{b_{ss'}^T L_{(I_k\setminus D)/(S,S')}^+ b_e}{r_e}\right)\\
&+ |\texttt{levcng}_{I_k\rightarrow (I_k\setminus D)/(S,S')}(f_k)|\sum_{w\in S'} \sum_{e\in \partial_{I_k} w} \left(b_{sw}^T L_{(I_k\setminus D)/S}^+ b_{sw}\right)\left(\frac{|b_{ss'}^T L_{(I_k\setminus D)/(S,S')}^+ b_{f_k}| |b_{f_k}^T L_{(I_k\setminus D)/(S,S')}^+ b_e|}{r_{f_k}r_e}\right)\\
&+ 128\sum_{w\in S'} \sum_{e\in \partial_{I_k} w} \left(\frac{(b_{sw}^T L_{(I_k\setminus D)/S}^+ b_{f_k})^2}{r_{f_k}}\right)\left(\frac{|b_{ss'}^T L_{(I_k\setminus D)/(S,S')}^+ b_{f_k}| |b_{f_k}^T L_{(I_k\setminus D)/(S,S')}^+ b_e|}{r_{f_k}r_e}\right)\\
\end{align*}

Let $\alpha_{f_k} = \max_{w\in S'} \frac{(b_{sw}^T L_{(I_k\setminus D)/S}^+ b_{f_k})^2}{(b_{sw}^T L_{(I_k\setminus D)/S}^+ b_{sw})r_{f_k}}$. By the ``$S-S'$-normalized degree change stability'' guarantees of $\Oracle$,

\begin{align*}
|\textbf{E}_{H'\sim I_k[f_k]}[\delta_{S,S'}(H'\setminus D) | f_k] - \delta_{S,S'}(I_k\setminus D)| &\le |\texttt{levcng}_{I_k\rightarrow (I_k\setminus D)/S}(f_k)|\left(\frac{\rho}{|W|}\delta_{S,S'}(I\setminus D)\right)\\
&+ \left(|\texttt{levcng}_{I_k\rightarrow (I_k\setminus D)/(S,S')}(f_k)| + 128\alpha_{f_k}\right)\left(\frac{\rho}{|W|}\delta_{S,S'}(I\setminus D)\right)
\end{align*}

Now, we exploit the fact that $f_k$ is chosen randomly from all remaining edges in $W$. $K \le |W|/4$, so there are always at least $|W|/4$ remaining edges in $W$ by the ``Size of $Z$'' guarantee. By Lemma \ref{lem:first-order-deg} applied twice, \ref{lem:del-first-order-deg} applied once, and \ref{lem:second-order-deg} (for $\alpha_{f_k}$) applied to the graph $(I_k\setminus D)/S$ with $X\gets U_S$ and $Y\gets S'$,

\begin{align*}
&|\textbf{E}_{H'\sim I_k[f_k]}[\delta_{S,S'}(H'\setminus D)] - \delta_{S,S'}(I_k\setminus D)|\\
&\le \textbf{E}_{f_k}\left[|\textbf{E}_{H'\sim I_k[f_k]}[\delta_{S,S'}(H'\setminus D) | f_k] - \delta_{S,S'}(I_k\setminus D)|\right]\\
&\le \frac{2}{|W|}\sum_{f_k\in W} \left(|\texttt{levcng}_{I_k\rightarrow (I_k\setminus D)/S}(f_k)| + |\texttt{levcng}_{I_k\rightarrow (I_k\setminus D)/(S,S')}(f_k)| + 128\alpha_{f_k}\right)\left(\frac{\rho}{|W|}\delta_{S,S'}(I\setminus D)\right)\\
&\le \frac{2\rho(2\Delta^{I_k''\setminus D}(U_S,S') + 2\Delta^{I_k''\setminus D}(U_{S'},S) + 2|D| + 128\Delta^{I_k''\setminus D}(U_S,S'))}{|W|^2}\delta_{S,S'}(I\setminus D)\\
\end{align*}

Applying the result of the ``Degree of midpoints'' part shows that

$$|\textbf{E}_{H'\sim I_k[f_k]}[\delta_{S,S'}(H'\setminus D)] - \delta_{S,S'}(I_k\setminus D)|\le \frac{520\rho \Delta_k}{p|W|^2}\delta_{S,S'}(I\setminus D)$$

as desired.
\end{proof}

\propdeferredstability*

\begin{proof}[Proof of Proposition \ref{prop:deferred-stability}]

\underline{Well-definedness.} The function given is an electrical function of $H$.

\underline{Degree of midpoints.} Same as in the proof of Proposition \ref{prop:deferred-stability}.

\textbf{Lipschitz contractions.} As in the previous stability proof, let $I_k$ denote the graph after splitting an edge, one of whose copies is $f_k$. By Sherman-Morrison, the triangle inequality, Proposition \ref{prop:common-split} with Rayleigh monotonicity, and the ``Deferred endpoint potential change stability'' guarantee of $\Oracle$ in that order,

\begin{align*}
&\Biggr|\left(\sum_{\{u,v\}\in A} b_{ss'}^T L_{(I_k\setminus D)/(S,S')/f_k}^+ (b_{su} + b_{sv}) \right) + \left(\sum_{\{u,v\}\in B} b_{ss'}^T L_{(I_k\setminus D)/(S,S')/f_k}^+ (b_{us'} + b_{vs'})\right)\\
&- \left(\sum_{\{u,v\}\in A} b_{ss'}^T L_{(I_k\setminus D)/(S,S')}^+ (b_{su} + b_{sv}) \right) - \left(\sum_{\{u,v\}\in B} b_{ss'}^T L_{(I_k\setminus D)/(S,S')}^+ (b_{us'} + b_{vs'})\right)\Biggr|\\
&= \Biggr|\sum_{\{u,v\}\in A} \frac{(b_{ss'}^T L_{(I_k\setminus D)/(S,S')}^+ b_{f_k})(b_{f_k}^T L_{(I_k\setminus D)/(S,S')}^+ (b_{su} + b_{sv}))}{b_{f_k}^T L_{(I_k\setminus D)/(S,S')}^+ b_{f_k}}\\
&+ \sum_{\{u,v\}\in B} \frac{(b_{ss'}^T L_{(I_k\setminus D)/(S,S')}^+ b_{f_k})(b_{f_k}^T L_{(I_k\setminus D)/(S,S')}^+ (b_{us'} + b_{vs'}))}{b_{f_k}^T L_{(I_k\setminus D)/(S,S')}^+ b_{f_k}}\Biggr|\\
&\le \sum_{\{u,v\}\in A} \frac{|b_{ss'}^T L_{(I_k\setminus D)/(S,S')}^+ b_{f_k}| |b_{f_k}^T L_{(I_k\setminus D)/(S,S')}^+ (b_{su} + b_{sv})|}{b_{f_k}^T L_{(I_k\setminus D)/(S,S')}^+ b_{f_k}}\\
&+ \sum_{\{u,v\}\in B} \frac{|b_{ss'}^T L_{(I_k\setminus D)/(S,S')}^+ b_{f_k}| |b_{f_k}^T L_{(I_k\setminus D)/(S,S')}^+ (b_{us'} + b_{vs'})|}{b_{f_k}^T L_{(I_k\setminus D)/(S,S')}^+ b_{f_k}}\\
&\le 8\sum_{\{u,v\}\in A} \frac{|b_{ss'}^T L_{(I_k\setminus D)/(S,S')}^+ b_{f_k}| |b_{f_k}^T L_{(I_k\setminus D)/(S,S')}^+ (b_{su} + b_{sv})|}{r_{f_k}}\\
&+ 8\sum_{\{u,v\}\in B} \frac{|b_{ss'}^T L_{(I_k\setminus D)/(S,S')}^+ b_{f_k}| |b_{f_k}^T L_{(I_k\setminus D)/(S,S')}^+ (b_{us'} + b_{vs'})|}{r_{f_k}}\\
&\le \frac{8\rho}{|W|}\left(\sum_{\{u,v\}\in A} b_{ss'}^T L_{(I\setminus D)/(S,S')}^+ (b_{su} + b_{sv}) + \sum_{\{u,v\}\in B} b_{ss'}^T L_{(I\setminus D)/(S,S')}^+ (b_{us'} + b_{vs'})\right) + \frac{8r_{min}}{n^4}
\end{align*}

as desired.

\textbf{Lipschitz deletions.} This part is similar to the contraction part. By Sherman-Morrison, the triangle inequality, Proposition \ref{prop:common-split} with Rayleigh monotonicity, and the ``Deferred endpoint potential change stability'' guarantee of $\Oracle$ in that order,

\begin{align*}
&\Biggr|\left(\sum_{\{u,v\}\in A} b_{ss'}^T L_{(I_k\setminus D)/(S,S')\backslash f_k}^+ (b_{su} + b_{sv}) \right) + \left(\sum_{\{u,v\}\in B} b_{ss'}^T L_{(I_k\setminus D)/(S,S')\backslash f_k}^+ (b_{us'} + b_{vs'})\right)\\
&- \left(\sum_{\{u,v\}\in A} b_{ss'}^T L_{(I_k\setminus D)/(S,S')}^+ (b_{su} + b_{sv}) \right) - \left(\sum_{\{u,v\}\in B} b_{ss'}^T L_{(I_k\setminus D)/(S,S')}^+ (b_{us'} + b_{vs'})\right)\Biggr|\\
&= \Biggr|\sum_{\{u,v\}\in A} \frac{(b_{ss'}^T L_{(I_k\setminus D)/(S,S')}^+ b_{f_k})(b_{f_k}^T L_{(I_k\setminus D)/(S,S')}^+ (b_{su} + b_{sv}))}{r_{f_k} - b_{f_k}^T L_{(I_k\setminus D)/(S,S')}^+ b_{f_k}}\\
&+ \sum_{\{u,v\}\in B} \frac{(b_{ss'}^T L_{(I_k\setminus D)/(S,S')}^+ b_{f_k})(b_{f_k}^T L_{(I_k\setminus D)/(S,S')}^+ (b_{us'} + b_{vs'}))}{r_{f_k} - b_{f_k}^T L_{(I_k\setminus D)/(S,S')}^+ b_{f_k}}\Biggr|\\
&\le \sum_{\{u,v\}\in A} \frac{|b_{ss'}^T L_{(I_k\setminus D)/(S,S')}^+ b_{f_k}| |b_{f_k}^T L_{(I_k\setminus D)/(S,S')}^+ (b_{su} + b_{sv})|}{r_{f_k} - b_{f_k}^T L_{(I_k\setminus D)/(S,S')}^+ b_{f_k}}\\
&+ \sum_{\{u,v\}\in B} \frac{|b_{ss'}^T L_{(I_k\setminus D)/(S,S')}^+ b_{f_k}| |b_{f_k}^T L_{(I_k\setminus D)/(S,S')}^+ (b_{us'} + b_{vs'})|}{r_{f_k} - b_{f_k}^T L_{(I_k\setminus D)/(S,S')}^+ b_{f_k}}\\
&\le 8\sum_{\{u,v\}\in A} \frac{|b_{ss'}^T L_{(I_k\setminus D)/(S,S')}^+ b_{f_k}| |b_{f_k}^T L_{(I_k\setminus D)/(S,S')}^+ (b_{su} + b_{sv})|}{r_{f_k}}\\
&+ 8\sum_{\{u,v\}\in B} \frac{|b_{ss'}^T L_{(I_k\setminus D)/(S,S')}^+ b_{f_k}| |b_{f_k}^T L_{(I_k\setminus D)/(S,S')}^+ (b_{us'} + b_{vs'})|}{r_{f_k}}\\
&\le \frac{8\rho}{|W|}\left(\sum_{\{u,v\}\in A} b_{ss'}^T L_{(I\setminus D)/(S,S')}^+ (b_{su} + b_{sv}) + \sum_{\{u,v\}\in B} b_{ss'}^T L_{(I\setminus D)/(S,S')}^+ (b_{us'} + b_{vs'})\right) + \frac{8r_{min}}{n^4}
\end{align*}

as desired.

\textbf{Change in expectation.} Compute the expected change using Sherman-Morrison:

\begin{align*}
&\textbf{E}_{H'\sim I_k[f_k]}\left[\sum_{\{u,v\}\in A} b_{ss'}^T L_{(H'\setminus D)/(S,S')}^+ (b_{su} + b_{sv}) + \sum_{\{u,v\}\in B} b_{ss'}^T L_{(H'\setminus D)/(S,S')}^+ (b_{us'} + b_{vs'})\Biggr| f_k\right]\\
&- \sum_{\{u,v\}\in A} b_{ss'}^T L_{(I_k\setminus D)/(S,S')}^+ (b_{su} + b_{sv}) - \sum_{\{u,v\}\in B} b_{ss'}^T L_{(I_k\setminus D)/(S,S')}^+ (b_{us'} + b_{vs'})\\
&= \texttt{levcng}_{I_k\rightarrow (I_k\setminus D)/(S,S')}(f_k) \Biggr(\sum_{\{u,v\}\in A} \frac{(b_{ss'}^T L_{(I_k\setminus D)/(S,S')}^+ b_{f_k})(b_{f_k}^T L_{(I_k\setminus D)/(S,S')}^+ (b_{su} + b_{sv}))}{r_{f_k}}\\
&+ \sum_{\{u,v\}\in B} \frac{(b_{ss'}^T L_{(I_k\setminus D)/(S,S')}^+ b_{f_k})(b_{f_k}^T L_{(I_k\setminus D)/(S,S')}^+ (b_{us'} + b_{vs'}))}{r_{f_k}}\Biggr)
\end{align*}

By the ``Deferred endpoint potential change stability'' guarantee,

\begin{align*}
&\Biggr|\textbf{E}_{H'\sim I_k[f_k]}\left[\sum_{\{u,v\}\in A} b_{ss'}^T L_{(H'\setminus D)/(S,S')}^+ (b_{su} + b_{sv}) + \sum_{\{u,v\}\in B} b_{ss'}^T L_{(H'\setminus D)/(S,S')}^+ (b_{us'} + b_{vs'})\Biggr| f_k\right]\\
&- \sum_{\{u,v\}\in A} b_{ss'}^T L_{(I_k\setminus D)/(S,S')}^+ (b_{su} + b_{sv}) - \sum_{\{u,v\}\in B} b_{ss'}^T L_{(I_k\setminus D)/(S,S')}^+ (b_{us'} + b_{vs'})\Biggr|\\
&\le |\texttt{levcng}_{I_k\rightarrow (I_k\setminus D)/(S,S')}(f_k)| \Biggr(\sum_{\{u,v\}\in A} \frac{|b_{ss'}^T L_{(I_k\setminus D)/(S,S')}^+ b_{f_k}| |b_{f_k}^T L_{(I_k\setminus D)/(S,S')}^+ (b_{su} + b_{sv})|}{r_{f_k}}\\
&+ \sum_{\{u,v\}\in B} \frac{|b_{ss'}^T L_{(I_k\setminus D)/(S,S')}^+ b_{f_k}| |b_{f_k}^T L_{(I_k\setminus D)/(S,S')}^+ (b_{us'} + b_{vs'})|}{r_{f_k}}\Biggr)\\
&\le |\texttt{levcng}_{I_k\rightarrow (I_k\setminus D)/(S,S')}(f_k)|\Biggr(\frac{\rho}{|W|}\Biggr(\sum_{\{u,v\}\in A} b_{ss'}^T L_{(I\setminus D)/(S,S')}^+ (b_{su} + b_{sv})\\
&+ \sum_{\{u,v\}\in B} b_{ss'}^T L_{(I\setminus D)/(S,S')}^+ (b_{us'} + b_{vs'})\Biggr) + \frac{r_{min}}{n^4}\Biggr)\\
\end{align*}

By Lemmas \ref{lem:first-order-deg} and \ref{lem:del-first-order-deg} and the ``Size of $Z$'' guarantee,

\begin{align*}
&\Biggr|\textbf{E}_{H'\sim I_k[f_k]}\Biggr[\sum_{\{u,v\}\in A} b_{ss'}^T L_{(H'\setminus D)/(S,S')}^+ (b_{su} + b_{sv}) + \sum_{\{u,v\}\in B} b_{ss'}^T L_{(H'\setminus D)/(S,S')}^+ (b_{us'} + b_{vs'})\\
&- \sum_{\{u,v\}\in A} b_{ss'}^T L_{(I_k\setminus D)/(S,S')}^+ (b_{su} + b_{sv}) - \sum_{\{u,v\}\in B} b_{ss'}^T L_{(I_k\setminus D)/(S,S')}^+ (b_{us'} + b_{vs'})\Biggr]\Biggr|\\
&\le \textbf{E}_{f_k}\Biggr[\Biggr|\textbf{E}_{H'\sim I_k[f_k]}\Biggr[\sum_{\{u,v\}\in A} b_{ss'}^T L_{(H'\setminus D)/(S,S')}^+ (b_{su} + b_{sv}) + \sum_{\{u,v\}\in B} b_{ss'}^T L_{(H'\setminus D)/(S,S')}^+ (b_{us'} + b_{vs'})\\
&- \sum_{\{u,v\}\in A} b_{ss'}^T L_{(I_k\setminus D)/(S,S')}^+ (b_{su} + b_{sv}) - \sum_{\{u,v\}\in B} b_{ss'}^T L_{(I_k\setminus D)/(S,S')}^+ (b_{us'} + b_{vs'})\Biggr| f_k\Biggr]\Biggr|\Biggr]\\
&\le \frac{2}{|W|} \left(\sum_{f_k\in W} |\texttt{levcng}_{I_k\rightarrow (I_k\setminus D)/(S,S')}(f_k)|\right)\Biggr(\frac{\rho}{|W|}\Biggr(\sum_{\{u,v\}\in A} b_{ss'}^T L_{(I\setminus D)/(S,S')}^+ (b_{su} + b_{sv})\\
&+ \sum_{\{u,v\}\in B} b_{ss'}^T L_{(I\setminus D)/(S,S')}^+ (b_{us'} + b_{vs'})\Biggr) + \frac{r_{min}}{n^4}\Biggr)\\
&\le \frac{2}{|W|} \left(\Delta^{I_k''\setminus D}(U_S,S') + \Delta^{I_k''\setminus D}(U_{S'},S) + |D|\right)\Biggr(\frac{\rho}{|W|}\Biggr(\sum_{\{u,v\}\in A} b_{ss'}^T L_{(I\setminus D)/(S,S')}^+ (b_{su} + b_{sv})\\
&+ \sum_{\{u,v\}\in B} b_{ss'}^T L_{(I\setminus D)/(S,S')}^+ (b_{us'} + b_{vs'})\Biggr) + \frac{r_{min}}{n^4}\Biggr)\\
\end{align*}

By ``Degree of midpoints,'' $\Delta^{I_k''\setminus D}(U_S,S') + \Delta^{I_k''\setminus D}(U_{S'},S) + |D|\le (2/p)\Delta_k$, so

\begin{align*}
&\Biggr|\textbf{E}_{H'\sim I_k[f_k]}\Biggr[\sum_{\{u,v\}\in A} b_{ss'}^T L_{(H'\setminus D)/(S,S')}^+ (b_{su} + b_{sv}) + \sum_{\{u,v\}\in B} b_{ss'}^T L_{(H'\setminus D)/(S,S')}^+ (b_{us'} + b_{vs'})\\
&- \sum_{\{u,v\}\in A} b_{ss'}^T L_{(I_k\setminus D)/(S,S')}^+ (b_{su} + b_{sv}) - \sum_{\{u,v\}\in B} b_{ss'}^T L_{(I_k\setminus D)/(S,S')}^+ (b_{us'} + b_{vs'})\Biggr]\Biggr|\\
&\le \frac{4\rho\Delta_k}{p|W|^2}\left(\sum_{\{u,v\}\in A} b_{ss'}^T L_{(I\setminus D)/(S,S')}^+ (b_{su} + b_{sv}) + \sum_{\{u,v\}\in B} b_{ss'}^T L_{(I\setminus D)/(S,S')}^+ (b_{us'} + b_{vs'})\right) + \frac{r_{min}}{n^4}
\end{align*}

as desired.

\end{proof}

\propmainstability*

\begin{proof}[Proof of Proposition \ref{prop:main-stability}]

\underline{Well-definedness.} $b_{ss'}^T L_{H\setminus D} b_{ss'}$ is an electrical function of $H$.

\underline{Degree of midpoints.} Same as in the proof of Proposition \ref{prop:deferred-stability}.

\textbf{Lipschitz contractions.} As in the previous stability proofs, let $I_k$ denote the graph after splitting an edge, one of whose copies is $f_k$. By Sherman-Morrison, Proposition \ref{prop:common-split} with Rayleigh monotonicity, and the ``Main objective change stability'' guarantee of $\Oracle$ in that order,

\begin{align*}
|b_{ss'}^T L_{(I_k\setminus D)/(S,S')/f_k}^+ b_{ss'} - b_{ss'}^T L_{(I_k\setminus D)/(S,S')}^+ b_{ss'}| &= \frac{(b_{ss'}^T L_{(I_k\setminus D)/(S,S')}^+ b_{f_k})^2}{b_{f_k}^T L_{(I_k\setminus D)/(S,S')}^+ b_{f_k}}\\
&\le 8\frac{(b_{ss'}^T L_{(I_k\setminus D)/(S,S')}^+ b_{f_k})^2}{r_{f_k}}\\
&\le \frac{8\rho}{|W|}b_{ss'}^T L_{(I\setminus D)/(S,S')}^+ b_{ss'}\\
\end{align*}

as desired.

\textbf{Lipschitz deletions.} This is similar to the contraction proof. By Sherman-Morrison, Proposition \ref{prop:common-split} with Rayleigh monotonicity, and the ``Main objective change stability'' guarantee of $\Oracle$ in that order,

\begin{align*}
|b_{ss'}^T L_{(I_k\setminus D)/(S,S')\backslash f_k}^+ b_{ss'} - b_{ss'}^T L_{(I_k\setminus D)/(S,S')}^+ b_{ss'}| &= \frac{(b_{ss'}^T L_{(I_k\setminus D)/(S,S')}^+ b_{f_k})^2}{r_{f_k} - b_{f_k}^T L_{(I_k\setminus D)/(S,S')}^+ b_{f_k}}\\
&\le 8\frac{(b_{ss'}^T L_{(I_k\setminus D)/(S,S')}^+ b_{f_k})^2}{r_{f_k}}\\
&\le \frac{8\rho}{|W|}b_{ss'}^T L_{(I\setminus D)/(S,S')}^+ b_{ss'}\\
\end{align*}

as desired.

\textbf{Change in expectation.} The change in expectation conditioned on a choice of $f_k$ can be written as

\begin{align*}
\textbf{E}_{H'\sim I_k[f_k]}[b_{ss'}^T L_{(H'\setminus D)/(S,S')}^+ b_{ss'}|f_k] - b_{ss'}^T L_{(I_k\setminus D)/(S,S')}^+ b_{ss'} &= \texttt{levcng}_{I_k\rightarrow (I_k\setminus D)/(S,S')}(f_k) \frac{(b_{ss'}^T L_{(I_k\setminus D)/(S,S')}^+ b_{f_k})^2}{r_{f_k}}
\end{align*}

Therefore, by Lemma \ref{lem:first-order-deg} applied twice, Lemma \ref{lem:del-first-order-deg} applied once, and the ``Size of $Z$'' guarantee,

\begin{align*}
&|\textbf{E}_{H'\sim I_k[f_k]}[b_{ss'}^T L_{(H'\setminus D)/(S,S')}^+ b_{ss'}] - b_{ss'}^T L_{(I_k\setminus D)/(S,S')}^+ b_{ss'}|\\
&\le \textbf{E}_{f_k}\left[|\textbf{E}_{H'\sim I_k[f_k]}[b_{ss'}^T L_{(H'\setminus D)/(S,S')}^+ b_{ss'}|f_k] - b_{ss'}^T L_{(I_k\setminus D)/(S,S')}^+ b_{ss'}|\right]\\
&\le \frac{2}{|Z|} \sum_{f_k\in W} |\texttt{levcng}_{I_k\rightarrow (I_k\setminus D)/(S,S')}(f_k)|\frac{(b_{ss'}^T L_{(I_k\setminus D)/(S,S')}^+ b_{f_k})^2}{r_{f_k}}\\
&\le \frac{4\rho }{|W|^2} (\Delta^{I_k''\setminus D}(U_S,S') + \Delta^{I_k''\setminus D}(U_{S'},S) + |D|) b_{ss'}^T L_{(I\setminus D)/(S,S')}^+ b_{ss'}\\
\end{align*}

By ``Degree of midpoints,''

$$|\textbf{E}_{H'\sim I_k[f_k]}[b_{ss'}^T L_{(H'\setminus D)/(S,S')}^+ b_{ss'}] - b_{ss'}^T L_{(I_k\setminus D)/(S,S')}^+ b_{ss'}| \le \frac{4\rho \Delta_k}{p|W|^2} b_{ss'}^T L_{(I_k\setminus D)/(S,S')}^+ b_{ss'}$$

as desired.

\end{proof}

\section{Deferred proofs for Section \ref{sec:fast-fix}}

\subsection{Fast stability concentration inequalities}\label{sec:fast-concentration-appendix}

\proprandommatixcontrol*

\begin{proof}[Proof of Proposition \ref{prop:random-matrix-control}]
Inductively assume that for all $i\notin S^{(k)}$ and for all $k\le n/2$, $\sum_{j\ne i, j\notin S^{(k)}} M_{ij}^{(k)} \le \sigma_1$. We now use Theorem \ref{thm:martingale-2} to validate this assumption. The third given condition along with the inductive assumption shows that

\begin{align*}
\sum_{j\ne i,j\notin S^{(k+1)}} M_{ij}^{(k+1)} &\le \left(\sum_{j\ne i,j\notin S^{(k+1)}} M_{ij}^{(k)}\right) + \gamma \left(\sum_{l=1}^n M_{il}^{(k)} Z_l^{(k+1)} \left(\sum_{j\ne i,j\notin S^{(k+1)}}M_{lj}^{(k)}\right)\right)\\
&= \left(\sum_{j\ne i,j\notin S^{(k+1)}} M_{ij}^{(k)}\right) + \gamma \left(\sum_{l=1}^n M_{il}^{(k)} Z_l^{(k+1)} \left(\sum_{j\ne i,j\ne w^{(k+1)}, j\notin S^{(k)}}M_{lj}^{(k)}\right)\right)\\
&= \left(\sum_{j\ne i,j\notin S^{(k+1)}} M_{ij}^{(k)}\right) + \gamma \left(\sum_{l=1}^n M_{il}^{(k)} Z_l^{(k+1)} \left(\sum_{j\ne i,j\ne l, j\notin S^{(k)}}M_{lj}^{(k)}\right)\right)\\
&\le \left(\sum_{j\ne i,j\notin S^{(k)}} M_{ij}^{(k)}\right) + \gamma \sigma_1 \left(\sum_{l=1}^n M_{il}^{(k)} Z_l^{(k+1)}\right)
\end{align*}

for $i\notin S^{(k+1)}$. To apply Theorem \ref{thm:martingale-2}, we need bounds on the mean, variance, and maximum deviation of each increment. We start with the mean:

\begin{align*}
\textbf{E}\left[\sum_{j\ne i, j\notin S^{(k+1)}} M_{ij}^{(k+1)}\mid S^{(k)}, i\notin S^{(k+1)}\right] &\le \sum_{j\ne i,j\notin S^{(k)}} M_{ij}^{(k)} + \gamma \sigma_1 \textbf{E}\left[\sum_{l=1}^n M_{il}^{(k)} Z_l^{(k+1)}\mid S^{(k)}, i\notin S^{(k+1)}\right]\\
&\le \sum_{j\ne i,j\notin S^{(k)}} M_{ij}^{(k)} + \gamma \sigma_1 \textbf{E}\left[\sum_{l\ne i, l\notin S^{(k)}}  M_{il}^{(k)} Z_l^{(k+1)}\mid S^{(k)}, i\notin S^{(k+1)}\right]\\
&= \sum_{j\ne i,j\notin S^{(k)}} M_{ij}^{(k)} + \gamma \sigma_1 \sum_{l\ne i, l\notin S^{(k)}}  M_{il}^{(k)} \textbf{E}\left[Z_l^{(k+1)}\mid S^{(k)}, i\notin S^{(k+1)}\right]\\
&= \left(1 + \frac{\gamma \sigma_1}{n-k-1}\right)\sum_{j\ne i,j\notin S^{(k)}} M_{ij}^{(k)}\\
&\le \sum_{j\ne i,j\notin S^{(k)}} M_{ij}^{(k)} + \frac{2\gamma \sigma_1^2}{n}\\
\end{align*}

Next, we bound the variance:

\begin{align*}
\textbf{Var}\left(\sum_{j\ne i, j\notin S^{(k+1)}} M_{ij}^{(k+1)}\mid S^{(k)}, i\notin S^{(k+1)}\right) &= \textbf{Var}\left(\sum_{j\ne i, j\notin S^{(k+1)}} M_{ij}^{(k+1)} - \sum_{j\ne i, j\notin S^{(k)}} M_{ij}^{(k)}\mid S^{(k)}, i\notin S^{(k+1)}\right)\\
&\le \textbf{E}\left[\left(\sum_{j\ne i, j\notin S^{(k+1)}} M_{ij}^{(k+1)} - \sum_{j\ne i, j\notin S^{(k)}} M_{ij}^{(k)}\right)^2 \mid S^{(k)}, i\notin S^{(k+1)}\right]\\
&\le \gamma^2\sigma_1^2\textbf{E}\left[\left(\sum_{l\ne i, l\notin S^{(k)}} M_{il}^{(k)} Z_l^{(k+1)}\right)^2 \mid S^{(k)}, i\notin S^{(k+1)}\right]\\
&= \gamma^2\sigma_1^2\textbf{E}\left[\sum_{l\ne i, l\notin S^{(k)}} (M_{il}^{(k)})^2 Z_l^{(k+1)} \mid S^{(k)}, i\notin S^{(k+1)}\right]\\
&\le \frac{\gamma^2\sigma_1^2}{n-k-1}\sum_{l\ne i, l\notin S^{(k)}} (M_{il}^{(k)})^2\\
&\le \frac{2\gamma^2\sigma_1^4}{n}\\
\end{align*}

Finally, we bound the maximum change:

\begin{align*}
\sum_{j\ne i,j\notin S^{(k+1)}} M_{ij}^{(k+1)} - \sum_{j\ne i,j\notin S^{(k)}} M_{ij}^{(k)} &\le \gamma \sigma_1 \left(\sum_{l=1}^n M_{il}^{(k)} Z_l^{(k+1)}\right)\\
&\le \gamma \sigma_1 \left(\sum_{l\ne i, l\notin S^{(k)}} M_{il}^{(k)}\right)\\
&\le \gamma \sigma_1^2
\end{align*}

conditioned on $i\notin S^{(k+1)}$. Applying Theorem \ref{thm:martingale-2} to the random variables $\{\sum_{j\ne i, j\notin S^{(k)}} M_{ij}^{(k)}\}_k$ before the stopping time $\{k : i\in S^{(k)}\}$ shows that

\begin{align*}
&\Pr\left[\sum_{j\ne i, j\notin S^{(k)}} M_{ij}^{(k)} - \textbf{E}\left[\sum_{j\ne i, j\notin S^{(k)}} M_{ij}^{(k)}\mid i\notin S^{(k)}\right] \ge \lambda \mid i\notin S^{(k)}\right]\\
&\le \exp\left(-\frac{\lambda^2}{(n/2)(2\gamma^2\sigma_1^4/n) + (\lambda/3)\gamma\sigma_1^2}\right)\\
&\le \exp\left(-\frac{\lambda^2}{\gamma^2\sigma_1^4 + (\lambda/3)\gamma\sigma_1^2}\right)\\
\end{align*}

given the inductive assumption and $k\le n/2$. Substituting $\lambda = (8\log n)\gamma\sigma_1^2 \le \sigma_1/4$ gives a probability bound of $1/n^8$. Now, we just have to bound the change in the expectation. Do this by summing up increments:

\begin{align*}
\textbf{E}\left[\sum_{j\ne i, j\notin S^{(k)}} M_{ij}^{(k)}\mid i\notin S^{(k)}\right] - \sum_{j\ne i} M_{ij}^{(0)} &= \sum_{r=0}^{k-1} \textbf{E}\left[\sum_{j\ne i, j\notin S^{(r+1)}} M_{ij}^{(r+1)} - \sum_{j\ne i, j\notin S^{(r)}} M_{ij}^{(r)}\mid i\notin S^{(k)}\right]\\
&= \sum_{r=0}^{k-1} \textbf{E}\left[\sum_{j\ne i, j\notin S^{(r+1)}} M_{ij}^{(r+1)} - \sum_{j\ne i, j\notin S^{(r)}} M_{ij}^{(r)}\mid i\notin S^{(r+1)}\right]\\
&= \sum_{r=0}^{k-1}\\
&\textbf{E}_{S^{(r)}}\left[\textbf{E}_{w^{(r+1)}}\left[\sum_{j\ne i, j\notin S^{(r+1)}} M_{ij}^{(r+1)} - \sum_{j\ne i, j\notin S^{(r)}} M_{ij}^{(r)}\mid i\notin S^{(r+1)},S^{(r)}\right]\right]\\
&\le 2k\gamma\sigma_1^2/n\\
&\le \gamma\sigma_1^2\\
&\le \sigma_1/4\\
\end{align*}

By the first given condition and the fact that $\sigma_0\le \sigma_1/2$,

$$\sum_{j\ne i, j\notin S^{(k)}} M_{ij}^{(k)}\le \sigma_1/4 + \sigma_1/4 + \sigma_1/2 \le \sigma_1$$

with probability at least $1 - 1/n^8$. This completes the verification of the inductive hypothesis and the desired result.
\end{proof}

\proprandomvectorcontrol*

\begin{proof}[Proof of Proposition \ref{prop:random-vector-control}]
Inductively assume that for all $i\notin S^{(k)}$ and for all $k\le n/2$, $v_i^{(k)} \le \tau_1$. We now use Theorem \ref{thm:martingale-2} to validate this assumption. The fourth given condition along with the inductive assumption shows that

\begin{align*}
v_i^{(k+1)} &\le v_i^{(k)} + \gamma \left(\sum_{l=1}^n M_{il}^{(k)} Z_l^{(k+1)} (v_l^{(k)} + v_i^{(k)})\right)\\
&\le v_i^{(k)} + 2\gamma \tau_1 \left(\sum_{l=1}^n M_{il}^{(k)} Z_l^{(k+1)}\right)\\
\end{align*}

for $i\notin S^{(k+1)}$. To apply Theorem \ref{thm:martingale-2}, we need bounds on the mean, variance, and maximum deviation of each increment. We start with the mean:

\begin{align*}
\textbf{E}\left[v_i^{(k+1)}\mid S^{(k)}, i\notin S^{(k+1)}\right] &\le v_i^{(k)} + 2\gamma \tau_1 \textbf{E}\left[\sum_{l=1}^n M_{il}^{(k)} Z_l^{(k+1)}\mid S^{(k)}, i\notin S^{(k+1)}\right]\\
&\le v_i^{(k)} + 2\gamma \tau_1 \textbf{E}\left[\sum_{l\ne i, l\notin S^{(k)}}  M_{il}^{(k)} Z_l^{(k+1)}\mid S^{(k)}, i\notin S^{(k+1)}\right]\\
&= v_i^{(k)} + 2\gamma \tau_1 \sum_{l\ne i, l\notin S^{(k)}}  M_{il}^{(k)} \textbf{E}\left[Z_l^{(k+1)}\mid S^{(k)}, i\notin S^{(k+1)}\right]\\
&\le v_i^{(k)} + \frac{4\gamma \tau_1\sigma_1}{n}\\
\end{align*}

Next, we bound the variance:

\begin{align*}
\textbf{Var}\left(v_i^{(k+1)}\mid S^{(k)}, i\notin S^{(k+1)}\right) &= \textbf{Var}\left(v_i^{(k+1)} - v_i^{(k)}\mid S^{(k)}, i\notin S^{(k+1)}\right)\\
&\le \textbf{E}\left[\left(v_i^{(k+1)} - v_i^{(k)}\right)^2 \mid S^{(k)}, i\notin S^{(k+1)}\right]\\
&\le 4\gamma^2\tau_1^2\textbf{E}\left[\left(\sum_{l\ne i, l\notin S^{(k)}} M_{il}^{(k)} Z_l^{(k+1)}\right)^2 \mid S^{(k)}, i\notin S^{(k+1)}\right]\\
&= 4\gamma^2\tau_1^2\textbf{E}\left[\sum_{l\ne i, l\notin S^{(k)}} (M_{il}^{(k)})^2 Z_l^{(k+1)} \mid S^{(k)}, i\notin S^{(k+1)}\right]\\
&\le \frac{4\gamma^2\tau_1^2}{n-k-1}\sum_{l\ne i, l\notin S^{(k)}} (M_{il}^{(k)})^2\\
&\le \frac{8\gamma^2\tau_1^2\sigma_1^2}{n}\\
\end{align*}

Finally, we bound the maximum change:

\begin{align*}
v_i^{(k+1)} - v_i^{(k)} &\le 2\gamma \tau_1 \left(\sum_{l=1}^n M_{il}^{(k)} Z_l^{(k+1)}\right)\\
&\le 2\gamma \tau_1 \left(\sum_{l\ne i, l\notin S^{(k)}} M_{il}^{(k)}\right)\\
&\le 2\gamma \tau_1\sigma_1
\end{align*}

conditioned on $i\notin S^{(k+1)}$. Applying Theorem \ref{thm:martingale-2} to the random variables $\{v_i^{(k)}\}_k$ before the stopping time $\{k : i\in S^{(k)}\}$ shows that

\begin{align*}
\Pr\left[v_i^{(k)} - \textbf{E}\left[v_i^{(k)}\mid i\notin S^{(k)}\right] \ge \lambda \mid i\notin S^{(k)}\right]&\le \exp\left(-\frac{\lambda^2}{(n/2)(8\gamma^2\sigma_1^2\tau_1^2/n) + (\lambda/3)2\gamma\sigma_1\tau_1}\right)\\
&\le \exp\left(-\frac{\lambda^2}{4\gamma^2\sigma_1^2\tau_1^2 + (\lambda/3)2\gamma\sigma_1\tau_1}\right)\\
\end{align*}

given the inductive assumption and $k\le n/2$. Substituting $\lambda = (16\log n)\gamma\sigma_1\tau_1 \le \tau_1/4$ gives a probability bound of $1/n^8$. Now, we just have to bound the change in the expectation. Do this by summing up increments:

\begin{align*}
\textbf{E}\left[v_i^{(k)}\mid i\notin S^{(k)}\right] - v_i^{(0)} &= \sum_{r=0}^{k-1} \textbf{E}\left[v_i^{(r+1)} - v_i^{(r)}\mid i\notin S^{(k)}\right]\\
&= \sum_{r=0}^{k-1} \textbf{E}\left[v_i^{(r+1)} - v_i^{(r)}\mid i\notin S^{(r+1)}\right]\\
&= \sum_{r=0}^{k-1}\textbf{E}_{S^{(r)}}\left[\textbf{E}_{w^{(r+1)}}\left[v_i^{(r+1)} - v_i^{(r)}\mid S^{(r)}, i\notin S^{(r+1)}\right]\right]\\
&\le 4k\gamma\sigma_1\tau_1/n\\
&\le 2\gamma\sigma_1\tau_1\\
&\le \tau_1/4\\
\end{align*}

By the first given condition and the fact that $\tau\le \tau_1/2$,

$$v_i^{(k)}\le \tau_1/4 + \tau_1/4 + \tau_1/2 \le \tau_1$$

with probability at least $1 - 1/n^8$. This completes the verification of the inductive hypothesis and the desired result.
\end{proof}

\proprandomsequencecontrol*

\begin{proof}
Let $K = m_0 \min(\frac{1}{(\log (M_0^2/\tau))\eta^2\gamma^2},\frac{1}{200\eta \gamma^2 \log (M_0^2/\tau)}(\tau/M_0^2)^{1/\rho})$. Let $\{k_i\}_i$ be the (random variable) subsequence of superscripts for which $v^{(k_i)} - v^{(k_i-1)} > \frac{v^{(k_{i-1})}}{100\log (M_0^2/\tau)}$, with $k_0 := 0$. For all $k\in [k_{i-1},k_i-1]$, by Theorem \ref{thm:martingale-2},

$$\Pr[v^{(k)} - v^{(k_{i-1})} \ge \lambda ] \le e^{-\frac{\lambda^2}{K\eta^2\gamma^2 (v^{(k_{i-1})})^2 / m_0 + \lambda v^{(k_{i-1})} / (100\log (M_0^2/\tau))}}$$

so each such interval can only increase $v^{(k)}$ by a factor of 2 with probability at least $1 - \tau/(M_0^2)$. Therefore, to show the desired concentration bound, we just need to bound the number of $k_i$s. For any $k\in [k_{i-1},k_i]$, only $100\log(M_0^2/\tau) \eta \gamma^2$ different $u_j^{(k)}w_j^{(k)}$ products can be greater than $v^{(k-1)}/(200\gamma\log(M_0^2/\tau))\le v^{(k_{i-1})}/(100\gamma\log(M_0^2/\tau))$. Therefore, the probability that $v_k - v_{k-1}\ge \frac{v^{(k_{i-1})}}{100\log (M_0^2/\tau)}$ is at most 

$$\frac{200\log(M_0^2/\tau) \eta \gamma^2}{m_0}$$

This means that the probability that more than $\rho$ $k_i$s is at most

\begin{align*}
\sum_{a=(\rho+1)}^K \binom{K}{a} \left(\frac{200\log(M_0^2/\tau) \eta \gamma^2}{m_0}\right)^a &\le 2K^{\rho} \left(\frac{200\log(M_0^2/\tau) \eta \gamma^2}{m_0}\right)^\rho\\
&\le 2\tau/M_0^2\\
\end{align*}

For each $i$, we know that

$$v^{(k_i)} \le \eta \gamma v^{(k_i-1)}$$

With probability at least $1 - \frac{2\tau}{M_0^2}$, there are at most $\rho$ $i$s, as discussed above. The value can increase by a factor of at most a $(2\gamma\eta)^{\rho}$ factor in total with probability at least $1 - M_0^2 \frac{2\tau}{M_0^2} = 1 - 2\tau$, as desired.
\end{proof}

\propflexibleobjectives*

\begin{proof}
Consider any minor $H$ of $J$ and consider some edge $f\in X\cap E(H)$ as described in Definition \ref{def:flexible-families}. Fix an edge $e\in X\cap E(H)$. Throughout all proofs, we focus on the contraction case, as the deletion case is exactly the same except with $\texttt{lev}_{\phi(H)}(f)$ replaced with $1 - \texttt{lev}_{\phi(H)}(f)$. 

\textbf{$g_e^{(0)}$ flexiblity.} By Sherman-Morrison (Propositions \ref{prop:deletion-update} and \ref{prop:contraction-update}) and the triangle inequality,

\begin{align*}
g_e^{(0)}(H/f) &= \frac{|b_{ss'}^T L_{(H/f\setminus D)/(S,S')}^+ b_e|}{\sqrt{r_e}}\\
&\le \frac{|b_{ss'}^T L_{(H\setminus D)/(S,S')}^+ b_e|}{\sqrt{r_e}} + \frac{|b_{ss'}^T L_{(H\setminus D)/(S,S')}^+ b_f| |b_f^T L_{(H\setminus D)/(S,S')}^+ b_e|}{\texttt{lev}_{(H\setminus D)/(S,S')}(f) r_f\sqrt{r_e}}\\
&= g_e^{(0)}(H) + \frac{1}{\texttt{lev}_{\phi^{(0)}(H)}(f)} \frac{|b_e^T L_{\phi^{(0)}(H)}^+ b_f|}{\sqrt{r_e}\sqrt{r_f}} g_f^{(0)}(H)\\
\end{align*}

as desired.

\textbf{$g_e^{(1),X\cap E(H),s}$ and $g_e^{(1),X\cap E(H),s'}$ flexibility.} The proof for $g_e^{(1),X,s'}$ is the same as the proof for $g_e^{(1),X,s}$ with $s,S$ and $s',S'$ swapped. Therefore, we focus on $g_e^{(1),X,s}$. By Sherman-Morrison and the triangle inequality,

\begin{align*}
g_e^{(1),X\cap E(H/f),s}(H/f) &= \sum_{\{u,v\}\in (X\cap E(H/f))\setminus \{e\}} \frac{|b_e^T L_{(H/f\setminus D)/(S,S')}^+ (b_{su} + b_{sv})/2|}{\sqrt{r_e}}\\
&= \sum_{\{u,v\}\in (X\cap E(H))\setminus \{e,f\}} \frac{|b_e^T L_{(H/f\setminus D)/(S,S')}^+ (b_{su} + b_{sv})/2|}{\sqrt{r_e}}\\
&\le \sum_{\{u,v\}\in (X\cap E(H))\setminus \{e,f\}} \Biggr(\frac{|b_e^T L_{(H\setminus D)/(S,S')}^+ (b_{su} + b_{sv})/2|}{\sqrt{r_e}}\\
&+ \frac{|b_e^T L_{(H\setminus D)/(S,S')}^+ b_f| |b_f^T L_{(H\setminus D)/(S,S')}^+ (b_{su} + b_{sv})/2|}{\texttt{lev}_{(H\setminus D)/(S,S')}(f) r_f\sqrt{r_e}}\Biggr)\\
&\le g_e^{(1),X\cap E(H),s}(H) + \frac{1}{\texttt{lev}_{\phi^{(1)}(H)}(f)} \frac{|b_e^T L_{\phi^{(1)}(H)}^+ b_f|}{\sqrt{r_e}\sqrt{r_f}} g_f^{(1),X\cap E(H),s}(H)\\
\end{align*}

as desired.

\textbf{$g_e^{(2)}$ flexibility.} By Sherman-Morrison and the triangle inequality,

\begin{align*}
g_e^{(2)}(H/f) &= \sum_{\{u,v\}\in A} \frac{|b_e^T L_{(H/f\setminus D)/(S,S')}^+ (b_{su} + b_{sv})|}{\sqrt{r_e}} + \sum_{\{u,v\}\in B} \frac{|b_e^T L_{(H/f\setminus D)/(S,S')}^+ (b_{us'} + b_{vs'})|}{\sqrt{r_e}}\\
&\le g_e^{(2)}(H) + \sum_{\{u,v\}\in A} \frac{|b_e^T L_{(H\setminus D)/(S,S')}^+ b_f| |b_f^T L_{(H\setminus D)/(S,S')}^+ (b_{su} + b_{sv})|}{\texttt{lev}_{(H\setminus D)/(S,S')}(f)r_f\sqrt{r_e}}\\
&+ \sum_{\{u,v\}\in B} \frac{|b_e^T L_{(H\setminus D)/(S,S')}^+ b_f| |b_f^T L_{(H\setminus D)/(S,S')}^+ (b_{us'} + b_{vs'})|}{\texttt{lev}_{(H\setminus D)/(S,S')}(f)r_f\sqrt{r_e}}\\
&= g_e^{(2)}(H) + \frac{1}{\texttt{lev}_{\phi^{(2)}(H)}(f)} \frac{|b_e^T L_{\phi^{(2)}(H)}^+ b_f|}{\sqrt{r_e}\sqrt{r_f}} g_f^{(2)}(H)\\
\end{align*}

as desired.

\textbf{$g_e^{(3),s}$ and $g_e^{(3),s'}$ flexibility.} We focus on $g_e^{(3),s}$, as $g_e^{(3),s'}$ is the same except that $s',S'$ is swapped with $s,S$. By Sherman-Morrison and the triangle inequality,

\begin{align*}
g_e^{(3),s}(H/f) &= \sum_{w\in S'} b_{sw}^T L_{(J\setminus D)/S}^+ b_{sw} \sum_{e'\in \partial_H w} \frac{|b_e^T L_{(H/f\setminus D)/(S,S')}^+ b_{e'}|}{\sqrt{r_e}r_{e'}}\\
&\le g_e^{(3),s}(H) + \sum_{w\in S'} b_{sw}^T L_{(J\setminus D)/S}^+ b_{sw} \sum_{e'\in \partial_H w} \frac{|b_e^T L_{(H\setminus D)/(S,S')}^+ b_f| |b_f^T L_{(H\setminus D)/(S,S')}^+ b_{e'}|}{\sqrt{r_e}\texttt{lev}_{(H\setminus D)/(S,S')}(f) r_fr_{e'}}\\
&= g_e^{(3),s}(H) + \frac{1}{\texttt{lev}_{\phi^{(3)}(H)}(f)} \frac{|b_e^T L_{\phi^{(3)}(H)}^+ b_f|}{\sqrt{r_e}\sqrt{r_f}} g_f^{(3),s}(H)\\
\end{align*}

as desired.

\textbf{$g_e^{(4),s}$ and $g_e^{(4),s'}$ flexibility.} We focus on $g_e^{(4),s}$, as $g_e^{(4),s'}$ is the same except that $s',S'$ is swapped with $s,S$. By Sherman-Morrison and the triangle inequality,

\begin{align*}
g_e^{(4),s}(H/f) &= \sum_{w\in S'} \left(\frac{|b_{sw}^T L_{(H/f\setminus D)/S}^+ b_e|}{\sqrt{r_e}}\right)^2 \sum_{e'\in \partial_J w} \frac{b_{ss'}^T L_{(J\setminus D)/(S,S')}^+ b_{e'}}{r_{e'}}\\
&\le \sum_{w\in S'} \left(\frac{|b_{sw}^T L_{(H\setminus D)/S}^+ b_e|}{\sqrt{r_e}} + \frac{|b_{sw}^T L_{(H\setminus D)/S}^+ b_f| |b_f^T L_{(H\setminus D)/S}^+ b_e|}{\texttt{lev}_{(H\setminus D)/S}(f) r_f \sqrt{r_e}}\right)^2 \sum_{e'\in \partial_J w} \frac{b_{ss'}^T L_{(J\setminus D)/(S,S')}^+ b_{e'}}{r_{e'}}\\
&= g_e^{(4),s}(H) + \frac{|b_f^T L_{(H\setminus D)/S}^+ b_e|}{\texttt{lev}_{(H\setminus D)/S}(f)\sqrt{r_f}\sqrt{r_e}}\sum_{w\in S'} \Biggr(2\frac{|b_{sw}^T L_{(H\setminus D)/S}^+ b_e| |b_{sw}^T L_{(H\setminus D)/S}^+ b_f|}{\sqrt{r_e}\sqrt{r_f}}\\
&+ \left(\frac{|b_{sw}^T L_{(H\setminus D)/S}^+ b_f|}{\sqrt{r_f}}\right)^2\frac{|b_f^T L_{(H\setminus D)/S}^+ b_e|}{\texttt{lev}_{(H\setminus D)/S}(f) \sqrt{r_f}\sqrt{r_e}}\Biggr)\sum_{e'\in \partial_J w} \frac{b_{ss'}^T L_{(J\setminus D)/(S,S')}^+ b_{e'}}{r_{e'}}\\
&\le g_e^{(4),s}(H) + \frac{3|b_f^T L_{\phi^{(4),s}(H)}^+ b_e|}{(\texttt{lev}_{\phi^{(4),s}(H)}(f))^2\sqrt{r_f}\sqrt{r_e}}(g_e^{(4),s}(H) + g_f^{(4),s}(H))\\
\end{align*}

where the last inequality follows from AM-GM and $\frac{|b_e^T L_{(H\setminus D)/S}^+ b_f|}{\sqrt{r_e}\sqrt{r_f}} \le 1$. This is the desired result.

\end{proof}

\section{Parameters}\label{sec:constants}

Throughout this paper, there are many parameters and constants that are hidden in $m^{o(1)}$ and $\alpha^{o(1)}$ dependencies. For any values of $\sigma_0$ and $\sigma_1$, our main algorithm takes

$$f(\sigma_0,\sigma_1)m^{1 + 1/\sigma_1}\alpha^{1/(\sigma_0+1)}$$

time, where $f(\sigma_0,\sigma_1)\le 2^{\sqrt{\log n}\sigma_1^{100\sigma_0}}$. Setting $\sigma_0 = \sigma_1 = (\log \log n)/(400\log \log \log n)$ yields an algorithm with runtime $m^{o(1)}m^{1 + (\log \log \log n)/(400\log \log n)}\alpha^{1/(\log \log n)} = m^{1 + o(1)}\alpha^{o(1)}$, as desired.

To establish this runtime, we need bounds on the values of each of the following parameters that do not depend on $m^{1/\sigma_1}$ or $\alpha^{1/(\sigma_0+1)}$:

\begin{itemize}
\item The conductivity parameter $\zeta$
\item The modifiedness parameter $\tau$
\item The boundedness parameter $\kappa$
\item The number of clans $\ell$
\item The carving parameter $\mucarve = (2000\muapp)^{\sigma_1^{4\sigma_0}}$
\item Smaller parameters:
    \begin{itemize}
    \item The appendix bounding parameter $\muapp = 2^{100(\log n)^{3/4}}$
    \item The fixing bounding parameter $\mucon = 2^{100\sqrt{\log n}}$
    \item The empire radius bounding parameter $\murad = 2^{1000(\log n)^{3/4}}$
    \item Lemma \ref{lem:clan-fix} parameter: $\mumod = 200\mucon\muapp\mucarve$
    \end{itemize}
\end{itemize}

The parameter $\muapp$ is used for simplicity to bound all $m^{o(1)}\alpha^{o(1)}$ terms that appear in the appendix. $\mucon$ is used in the ``Size'' bound of Lemma \ref{lem:fast-fix}. The carving parameter $\mucarve$ arises from growing a ball around each $\mc Q_i$ in the conditioning digraph with radius $\sigma_1^i\le \sigma_1^{\sigma_0}$. By the ``Temporary condition'' of edges in conditioning digraphs, a part $Q$ with an edge from $P$ is only $8\muapp$ times farther from $\mc Q_i$ than $P$ is, so all parts within distance $\sigma_1^{\sigma_0}$ in the conditioning digraph must be within distance $(8\muapp)^{\sigma_1^{\sigma_0}}\R{i}\le \mucarve\R{i}$. For more details, see Section \ref{sec:choose-parts}.

The conductivity, modifiedness, boundedness, and number of clan parameters increase during each iteration of the while loop in $\ExactTree$. We designate maximum values $\zetamax$, $\taumax$, $\kappamax$, and $\ellmax$ for each. $\ell$ does not depend on $\zeta$, $\tau$, and $\kappa$ because it only increases by more than a $(\log m)$-factor in $\CreateEmpire$, which increases it by a factor of $\muapp$. $\kappa$ increases by at most a factor of $\ell\le \ellmax$ during each iteration. $\tau$ and $\zeta$ depend on one another. $\tau$ additively increases by $\tau\muapp + \zeta\mucarve\muapp$ during each iteration, while $\zeta$ multiplicatively increases by a constant factor and additively increases by $\ell\muapp$. Since the while loop in $\ExactTree$ only executes for $\sigma_1^{\sigma_0}$ iterations, values of

\begin{itemize}
\item $\zetamax = ((\log m)\muapp)^{(2\sigma_1)^{8\sigma_0}}$
\item $\taumax = ((\log m)\muapp)^{(2\sigma_1)^{8\sigma_0}}$
\item $\kappamax = ((\log m)\muapp)^{(2\sigma_1)^{4\sigma_0}}$
\item $\ellmax = ((\log m)\muapp)^{(2\sigma_1)^{2\sigma_0}}$
\end{itemize}

suffice. Therefore, the runtime of the partial sampling step (the bottleneck) is at most $$2^{\sqrt{\log n}\sigma_1^{100\sigma_0}}m^{1 + 1/\sigma_1}\alpha^{1/(\sigma_0+1)}\le m^{1 + o(1)}\alpha^{o(1)}$$ as desired.

\section{List of Definitions}

\vspace*{-10mm}

\listoftheorems[ignoreall,show={definition}]

\section{Almost-linear time, $\ep$-approximate random spanning tree generation for arbitrary weights}\label{sec:apx-tree}

In this section, we obtain a $m^{1+o(1)}\ep^{-o(1)}$-time algorithm for generating an $\ep$-approximate random spanning tree. Specifically, we output each spanning tree according to a distribution with total variation distance at most $\ep$ from the real one.

We do this by reducing the problem to our main result; an $m^{1+o(1)}\alpha^{1/(\sigma_0+1)}$-time algorithm for generating an exact random spanning tree. The reduction crudely buckets edges by their resistance and computes connected components of low-resistance edges. Deleting edges with very high resistances does not affect the marginal of the random spanning tree distribution with respect to low-resistance edges very much. Therefore, sampling a random spanning tree in the graph with high-resistance edges deleted yields an $\ep$-approximation random tree for that marginal in the entire graph.

To make this approach take $m^{1+o(1)}\ep^{-o(1)}$ time, we just need to make sure that conditioning on a connected component $F$ of low-resistance edges does not take much longer longer than $O(|F|m^{o(1)})$ time. In other words, at most $O(|F|m^{o(1)})$ medium-resistance edges should exist. This can be assured using ball growing on $\sqrt{\sigma_0}$ different scales. This results in a graph being conditioned on with $\alpha\le m^{10\sqrt{\sigma_0}}$ and size at most $|F|m^{1/\sqrt{\sigma_0}}$, where $F$ is the set being conditioned on. Therefore, to condition on $|F|$ edges, the algorithm only needs to do $|F|^{1+o(1)}m^{1/\sqrt{\sigma_0}}m^{10\sqrt{\sigma_0}/(\sigma_0+1)}$ work, as desired.

\begin{figure}
\begin{center}
\includegraphics[width=0.7\textwidth]{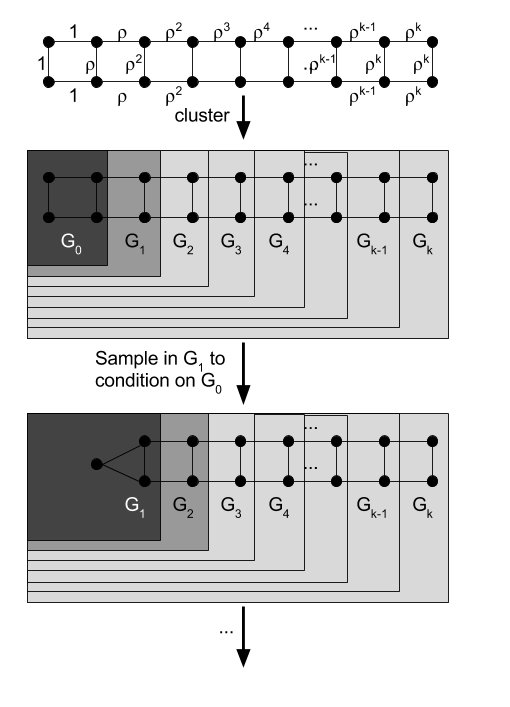}
\end{center}
\caption{The first iteration of the while loop of the $\ApxTree$ algorithm. The amount of work done during each iteration is not much larger than the amount of progress (reduction in graph size) made.}
\label{fig:weight-reduction}
\end{figure}

Now, we implement this approach as follows:

\begin{algorithm}[H]
\SetAlgoLined
\DontPrintSemicolon
\caption{$\ApxTree(G)$}

    $\rho\gets (m/\ep)^{10}$\;

    $T_{return}\gets \emptyset$\;

    \tcp{can compute connected components of all $G_i$s in near-linear time with union-find}

    $G_i\gets$ the subgraph of $G$ consisting of edges with resistance at most $\rho^i r_{min}$\;

    $i^*\gets 0$\;

    \While{$E(G)\ne \emptyset$}{

        \While{$|E(G_{i^*+1})| > m^{1/\sqrt{\sigma_0}} |E(G_{i^*})|$}{ \label{line:size}

            $i^*\gets i^*+1$\;

        }

        \ForEach{ connected component $C$ of $G_{i^*+1}$}{

            $T\gets \ExactTree(G_{i^*+1}[C])$\;\label{line:exact-sample}

            contract edges in $G$ in the intersection of $T$ with $E(G_{i^*})$ and add these edges to $T_{return}$\;\label{line:apx-cont}

            delete edges from $G$ that are in $E(G_{i^*})$ but not $T$\;\label{line:apx-del}

        }

    }

    \Return $T_{return}$

\end{algorithm}

The correctness of this algorithm relies on the following claim:

\begin{proposition}\label{prop:del-high-res}
Consider a graph $I$ and an edge $e\in E(I)$. Let $I'$ be the subgraph of $I$ of edges with resistance at most $\rho r_e^I$. Then

$$\texttt{Reff}_I(e)\le \texttt{Reff}_{I'}(e)\le (1+2\ep/m^9) \texttt{Reff}_I(e)$$
\end{proposition}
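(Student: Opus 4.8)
\textbf{Proof proposal for Proposition \ref{prop:del-high-res}.} The lower bound is immediate: $I'$ is a subgraph of $I$, so by Rayleigh monotonicity $\texttt{Reff}_I(e)\le \texttt{Reff}_{I'}(e)$. The content is the upper bound, which says that deleting all edges of resistance more than $\rho r_e^I$ barely changes the $e$-effective resistance. The plan is to show that every such high-resistance edge is nearly irrelevant to the $e$-electrical flow, because the optimal $e$-flow can route essentially no current through any edge whose resistance vastly exceeds the overall effective resistance.

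First I would set up notation: write $e = \{a,b\}$, let $f\in \mathbb{R}^{E(I)}$ be the unit $a$-$b$ electrical flow in $I$, so that $\texttt{Reff}_I(e) = \sum_{g\in E(I)} r_g^I (f_g)^2$. The key estimate is an energy bound on each individual edge: for any edge $g$, the energy contribution $r_g^I (f_g)^2$ is at most the total energy $\texttt{Reff}_I(e)$ (each summand is bounded by the sum of nonnegative summands). Hence $(f_g)^2 \le \texttt{Reff}_I(e)/r_g^I$, so for a high-resistance edge $g$ with $r_g^I > \rho r_e^I$ we get $(f_g)^2 < \texttt{Reff}_I(e)/(\rho r_e^I) \le 1/\rho$, using $r_e^I \le \texttt{Reff}_I(e)^{-1}$... wait, that inequality goes the wrong way; instead I would use $\texttt{Reff}_I(e)\le r_e^I$ (again Rayleigh monotonicity, since the single edge $e$ is a subgraph routing one unit at energy $r_e^I$), which gives $(f_g)^2 < r_e^I/(\rho r_e^I) = 1/\rho$. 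Then I would modify $f$ by rerouting the small amount of current $f_g$ off of each deleted edge. One clean way: let $F$ be the set of deleted (high-resistance) edges and consider the restriction of $f$ to $I' = I\setminus F$; this is not quite a unit $a$-$b$ flow, but $f - f|_{F}$ has divergence supported on $V(F)$ of total size $\sum_{g\in F}|f_g|$, which is tiny. I would correct this by adding a small auxiliary flow supported in $I'$ (routing the leftover divergence back through any $a$-$b$ path of $I'$, or more carefully through the $I'$-electrical flow between the relevant vertices), and bound the extra energy incurred.

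A cleaner route, and the one I would actually carry out, avoids explicit rerouting: use Sherman–Morrison / the rank-one deletion update (Proposition \ref{prop:deletion-update}) to delete the edges of $F$ one at a time. Deleting a single edge $g$ changes $b_{ab}^T L_I^+ b_{ab}$ to $b_{ab}^T L_{I\setminus g}^+ b_{ab} = \texttt{Reff}_I(e) + \frac{(b_{ab}^T L_I^+ b_g)^2}{r_g^I(1 - \texttt{lev}_I(g))}$. The numerator $(b_{ab}^T L_I^+ b_g)^2 = r_g^I \cdot r_g^I (f_g)^2$ after accounting for the flow normalization; more precisely $b_{ab}^T L_I^+ b_g = r_g^I f_g$ since $f_g = (b_{ab}^T L_I^+ b_g)/r_g^I$. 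So the increase is $\frac{r_g^I (f_g)^2}{1-\texttt{lev}_I(g)}$, and since $r_g^I(f_g)^2 \le \texttt{Reff}_I(e) \le r_e^I \le r_g^I/\rho$ we get $(f_g)^2 \le 1/\rho$ and the increase is at most $\frac{\texttt{Reff}_I(e)}{\rho(1-\texttt{lev}_I(g))}\cdot$ (something). I need $\texttt{lev}_I(g)$ bounded away from $1$: indeed $\texttt{lev}_I(g) = \texttt{Reff}_I(g)/r_g^I \le r_g^I/r_g^I$... that's useless, but $\texttt{Reff}_I(g)\le \texttt{Reff}_{I \text{ via other edges}} \le (n-1)r_{\min}\cdot$... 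Actually $1 - \texttt{lev}_I(g) = \texttt{nonlev}_I(g)$ is the probability that $g$ is NOT in a random spanning tree; it could be exponentially small in general, which is the main obstacle. To handle this I would instead bound the increase directly in terms of the flow: the energy increase from deleting $g$ equals the additional energy needed to reroute $f_g$ units around $g$ in $I\setminus g$, which is at most $(f_g)^2 \cdot R_{I\setminus g}(g)$ where $R_{I\setminus g}(g)$ is the effective resistance between $g$'s endpoints in $I\setminus g$; and $(f_g)^2 R_{I\setminus g}(g) = \frac{r_g^I(f_g)^2}{\texttt{nonlev}_I(g)}$, the same expression. So this is genuinely the crux.

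The resolution: I would not delete the edges one at a time but rather bound the total extra energy of rerouting all of $F$ at once. The flow $f|_{I\setminus F}$ carries the same current out of $a$ into $b$ except it "leaks" $f_g$ at the endpoints of each $g\in F$; the total leaked current is $L := \sum_{g\in F}|f_g|$. Since each $|f_g| < \rho^{-1/2}$ and there are at most $m$ such edges, $L < m\rho^{-1/2} = m \cdot (m/\ep)^{-5} = \ep^5/m^4$, which is minuscule. Correcting this leak costs at most $L^2 \cdot r_{\max}^{I'} \le L^2 \cdot \rho^{i^*+1} r_{\min}$... but I don't have a clean absolute bound on $r_{\max}^{I'}$ relative to $\texttt{Reff}_I(e)$. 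Better: the correction can be routed through the original graph's structure — route the leftover divergence back along the support of $f$ itself, scaled down. Concretely, consider $f' := \frac{1}{1-L'}(f|_{I\setminus F})$ appropriately balanced — I would need to think carefully, and I expect this balancing argument, making precise that a relative-$\ep/m^9$-size correction suffices, to be the main obstacle. The cleanest finish is probably: the energy of the optimal unit flow in $I'$ is at most the energy of $f|_{I'}$ plus a crossterm, and $f|_{I'}$ has energy $\le \texttt{Reff}_I(e)$ while routing $\ge 1 - L$ units; rescaling to a unit flow multiplies energy by $(1-L)^{-2} \le 1 + 3L \le 1 + 2\ep/m^9$ for $L$ this small, but this only works if $f|_{I'}$ is actually a valid flow routing $1-L$ units, which requires the leak to be concentrated at $a$ and $b$ only — it is not, in general. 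So the honest plan is to first reroute the $|V(F)|$-supported leak within $I'$ (possible since $I'$ is connected whenever $e\in I'$ — and if $I'$ is disconnected at $e$'s endpoints the statement is vacuous as $\texttt{Reff}_{I'}(e)=\infty$... but then the claimed finite bound fails, so one must additionally observe $I'$ contains a low-resistance $a$-$b$ path — actually this needs an argument, or the Proposition implicitly assumes it). Granting connectivity of $I'$ between $a,b$, I would bound the rerouting cost of the leak $L$ by $L^2 \cdot \texttt{Reff}_{I'}$ between appropriate vertices, and crucially bound that effective resistance by $\texttt{Reff}_{I'}(e)$ itself (or by $\rho^{i^*+1}r_{\min}$), absorb it, and solve the resulting self-referential inequality $\texttt{Reff}_{I'}(e) \le \texttt{Reff}_I(e) + O(L^2)\texttt{Reff}_{I'}(e)$ to get $\texttt{Reff}_{I'}(e) \le (1 + O(L^2))\texttt{Reff}_I(e) \le (1 + 2\ep/m^9)\texttt{Reff}_I(e)$ since $L^2 \le \ep^{10}/m^8 \ll \ep/m^9$. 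That self-referential step is where I'd spend the most care.
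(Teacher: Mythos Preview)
You correctly identify the divergence issue --- simply restricting the electrical flow $f$ to $I'$ is not a valid $a$--$b$ flow --- but you then go down an unnecessarily complicated path. Two simplifications close the gap immediately. First, your flow bound is weaker than needed: rather than using the energy inequality $r_g f_g^2 \le \texttt{Reff}_I(e)$ to get $|f_g|\le \rho^{-1/2}$, use the potential-drop inequality. The drop across any edge $g$ is at most the total $a$--$b$ drop, i.e.\ $|b_e^T L_I^+ b_g| \le b_e^T L_I^+ b_e = \texttt{Reff}_I(e)\le r_e$, so $|f_g| = |b_e^T L_I^+ b_g|/r_g \le r_e/r_g \le 1/\rho$. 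This is exactly the paper's bound and is quadratically stronger than yours.

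Second, and more importantly, the rerouting is a one-liner once you recall that the electrical flow is acyclic (it is a potential flow). Decompose $f$ into weighted $a$--$b$ paths; the total weight of paths passing through \emph{any} deleted edge is at most $\sum_{g\in F}|f_g| \le m/\rho = \ep/m^9$. Discarding those paths yields a genuine $a$--$b$ flow supported on $I'$, with value at least $1-\ep/m^9$ and energy at most $\texttt{Reff}_I(e)$ (pointwise $|f'_g|\le |f_g|$). Rescaling by $(1-\ep/m^9)^{-1}$ finishes. This also dispatches your connectivity worry: the constructed sub-flow has positive value, so $a,b$ are automatically connected in $I'$. Your Sherman--Morrison detour and the self-referential bound are both unnecessary, and the latter has a real gap anyway --- the leak sits at arbitrary endpoints of deleted edges, and the $I'$-effective resistance between those vertices need not be controlled by $\texttt{Reff}_{I'}(e)$.
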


\begin{proof}
The lower bound follows immediately from Rayleigh monotonicity, so we focus on the upper bound. Let $f$ be the unit electrical flow between the endpoints of $e$. It has energy $\texttt{Reff}_{I'}(e)$. Each edge $g\in E(I)\setminus E(I')$ has

$$f_g \le \frac{b_e^T L_I^+ b_e}{r_g^I}\le r_e^I/(\rho r_e^I)\le \ep/m^{10}$$

Therefore, deleting edges in $I\setminus I'$ and removing the associated flow results in a flow between the endpoints of $e$ with energy at most $b_e^T L_I^+ b_e$ that ships at least $1 - m(\ep/m^{10}) = 1 - \ep/m^9$ units of flow. Scaling the flow up by $1/(1 - \ep/m^9)$ yields the desired result.
\end{proof}

\thmmainresultapply*

\begin{proof}[Proof of Theorem \ref{thm:main-result-apply} given that $\ExactTree$ satisfies Theorem \ref{thm:main-result-aspect}]

\textbf{Correctness.} At each step, the algorithm contracts a forest in $G$, so it maintains the fact that $G$ is connected. Since the algorithm terminates only when $E(G) = \emptyset$, $T_{return}$ is a tree. We now compute the total variation distance between the output distribution and the real spanning tree distribution for $G$.

To do this, we set up a hybrid argument. Let $(e_i)_{i=1}^m$ be an ordering of the edges in $G$ in increasing order of resistance. Notice that the subset $E(G_j)\subseteq E(G)$ is a prefix of this list for all $j$, which means that conditioning on a $G_j$ eliminates a prefix of $(e_i)_{i=1}^m$.

For $k \in \{0,1,\hdots,m\}$, let $\mc D_k$ denote the distribution over spanning trees of $G$ obtained by sampling a tree $T$ using $\ApxTree$, conditioning on its intersection with the set $\{e_1,e_2,\hdots,e_k\}$ to obtain a graph $G'$, and sampling the rest of $T$ using the real uniform random spanning tree distribution of the graph $G'$. $\mc D_0$ is the uniform distribution over spanning trees of $G$, while $\mc D_m$ is the distribution over trees output by $\ApxTree$. Therefore, to complete the correctness proof, we just need to show that the total variation distance between $\mc D_0$ and $\mc D_m$ is at most $\ep$. We do this by bounding the total variation distance between $\mc D_k$ and $\mc D_{k+1}$ for all $k\le m-1$.

Consider a spanning tree $T'$ of $G$. We now compute its probability mass in the distributions $\mc D_k$ and $\mc D_{k+1}$. Consider the $i^*$ in the $\ApxTree$ algorithm for which $e_{k+1}\in E(G_{i^*})\setminus E(G_{i_{prev}^*})$, where $i_{prev}^*$ is the previous value for which $\ApxTree$ conditions on $G_{i_{prev}^*}$ in Lines \ref{line:apx-cont} and \ref{line:apx-del}. Line \ref{line:exact-sample} extends the partial sample from $G_{i_{prev}^*}$ to $G_{i^*}$ by sampling from the true uniform distribution of $G_{i^*+1}$ conditioned on the partial sample for $G_{i_{prev}^*}$ matching $T'$. By Theorem \ref{thm:edge-marginal}, one could equivalently sample from this distribution by sampling edges with probability equal to their leverage score in $G_{i^*+1}$ conditioned on the samples for the previous edges. Applying this reasoning shows that

\begin{align*}
\Pr_{T_a\sim \mc D_k}[T_a = T'] &= \Pr_{T_b\sim \ApxTree(G)}[E(T_b)\cap \{e_i\}_{i=1}^k = E(T')\cap \{e_i\}_{i=1}^k]\\
&\Pr_{T_c\sim H_k}[E(T_c)\cap \{e_{k+1}\} = E(T')\cap \{e_{k+1}\}]\\
&\Pr_{T_d\sim H_{k+1}}[E(T_d)\cap \{e_i\}_{i=k+2}^m = E(T')\cap \{e_i\}_{i=k+2}^m]
\end{align*}

and that

\begin{align*}
\Pr_{T_a\sim \mc D_{k+1}}[T_a = T'] &= \Pr_{T_b\sim \ApxTree(G)}[E(T_b)\cap \{e_i\}_{i=1}^k = E(T')\cap \{e_i\}_{i=1}^k]\\
&\Pr_{T_c\sim H_k'}[E(T_c)\cap \{e_{k+1}\} = E(T')\cap \{e_{k+1}\}]\\
&\Pr_{T_d\sim H_{k+1}}[E(T_d)\cap \{e_i\}_{i=k+2}^m = E(T')\cap \{e_i\}_{i=k+2}^m]
\end{align*}

where $H_k$ is the graph obtained by conditioning on the partial sample of $T'$ in $\{e_1,e_2,\hdots,e_k\}$ and $H_k'$ is obtained from $H_k$ by removing all edges of $H_k$ that are not in $G_{i^*+1}$. By definition of $G_{i^*+1}$ and the fact that $e_k\in E(G_{i^*})$, $H_k'$ contains all edges in $H_k$ with resistance at most $\rho r_{e_{k+1}}$. Therefore, by Proposition \ref{prop:del-high-res} applied after using Theorem \ref{thm:edge-marginal} to write probabilities as leverage and nonleverage scores,

$$|\Pr_{T_c\sim H_k}[E(T_c)\cap \{e_{k+1}\} = E(T')\cap \{e_{k+1}\}] - \Pr_{T_c\sim H_k'}[E(T_c)\cap \{e_{k+1}\} = E(T')\cap \{e_{k+1}\}]|\le 2\ep/m^9$$

Therefore, the total variation distance between $\mc D_k$ and $\mc D_{k+1}$ is

\begin{align*}
&\sum_{\text{spanning trees $T'$ of $G$}} |\Pr_{T_a\sim \mc D_k}[T_a = T'] - \Pr_{T_a\sim \mc D_k}[T_a = T']|\\
&= \sum_{\text{spanning trees $T'$ of $G$}} \Biggr(|\Pr_{T_c\sim H_k}[E(T_c)\cap \{e_{k+1}\} = E(T')\cap \{e_{k+1}\}] - \Pr_{T_c\sim H_k'}[E(T_c)\cap \{e_{k+1}\} = E(T')\cap \{e_{k+1}\}]|\\
&\Pr_{T_b\sim \ApxTree(G)}[E(T_b)\cap \{e_i\}_{i=1}^k = E(T')\cap \{e_i\}_{i=1}^k]\\
&\Pr_{T_d\sim H_{k+1}}[E(T_d)\cap \{e_i\}_{i=k+2}^m = E(T')\cap \{e_i\}_{i=k+2}^m]\Biggr)\\
&\le \frac{2\ep}{m^9}\sum_{\text{spanning trees $T'$ of $G$}} \Biggr(\Pr_{T_b\sim \ApxTree(G)}[E(T_b)\cap \{e_i\}_{i=1}^k = E(T')\cap \{e_i\}_{i=1}^k]\\
&\Pr_{T_d\sim H_{k+1}}[E(T_d)\cap \{e_i\}_{i=k+2}^m = E(T')\cap \{e_i\}_{i=k+2}^m]\Biggr)\\
&\le \frac{4\ep}{m^9}\sum_{\text{restrictions of spanning trees $T'$ of $G$ to $\{e_i\}_{i=1}^k$}} \Biggr(\Pr_{T_b\sim \ApxTree(G)}[E(T_b)\cap \{e_i\}_{i=1}^k = E(T')\cap \{e_i\}_{i=1}^k]\Biggr)\\
&\le \frac{4\ep}{m^9}\\
\end{align*}

Therefore, the total variation distance between $\mc D_0$ and $\mc D_m$ is at most $m(4\ep/m^9)\le \ep$, as desired.

\textbf{Runtime.} When the algorithm samples a tree in $G_{i^*+1}$, it contracts or deletes (removes) all edges in $G_{i^*}$. Line \ref{line:size} ensures that $|E(G_{i^*+1})|\le m^{1/\sqrt{\sigma_0}}|E(G_{i^*})|$ whenever conditioning occurs. Furthermore, since $G_i\subseteq G_{i+1}$ for all $i$, the innermost while loop only executes $\sqrt{\sigma_0}$ times for each iteration of the outer while loop. This means that $G_{i^*+1}$ has $\beta\le \rho^{\sqrt{\sigma_0}}$ since all lower edges where in a prior $G_{i^*}$. By Theorem \ref{thm:main-result-aspect}, $\ExactTree$ takes at most

$$|E(G_{i^*+1})|^{1+o(1)} (\rho^{\sqrt{\sigma_0}})^{1/\sigma_0}\le |E(G_{i^*})|(m/\ep)^{o(1) + 11/\sqrt{\sigma_0}}$$

time. Since the $G_{i^*}$s that are conditioned on each time are edge-disjoint (as they are contracted/deleted immediately), the total size of the $G_{i*}$s over the course of the algorithm is $m$, which means that the total runtime of the outer while loop is at most $m^{1+o(1)+11/\sqrt{\sigma_0}}\ep^{-o(1)} = m^{1+o(1)}\ep^{-o(1)}$, as desired.
\end{proof}

\subsection*{Acknowledgments} We would like to thank Nikhil Srivastava and Satish Rao, and Cheng Xiang for many helpful conversations. We would like to thank Aleksander Madry, Jim Pitman, Akshay Ramachandran, Alex Rusciano, Hong Zhou, and Qiuyi Zhang for helpful conversations. We thank Nima Anari, Antares Chen, Tselil Schramm, Di Wang, and Jason Wu for helpful edits.

\bibliographystyle{alpha}
\bibliography{new-rst}

\newcommand{\etalchar}[1]{$^{#1}$}
\begin{thebibliography}{AGM{\etalchar{+}}10}

\bibitem[ABCP99]{ABCP99}
Baruch Awerbuch, Bonnie Berger, Lenore Cowen, and David Peleg.
\newblock Near-linear time construction of sparse neighborhood covers.
\newblock {\em SIAM J. Comput.}, 28(1):263--277, February 1999.

\bibitem[ADD{\etalchar{+}}93]{ADDJS93}
Ingo Alth{\"o}fer, Gautam Das, David Dobkin, Deborah Joseph, and Jos{\'e}
  Soares.
\newblock On sparse spanners of weighted graphs.
\newblock {\em Discrete {\&} Computational Geometry}, 9(1):81--100, Jan 1993.

\bibitem[AGM{\etalchar{+}}10]{AGMGS10}
Arash Asadpour, Michel~X. Goemans, Aleksander Madry, Shayan~Oveis Gharan, and
  Amin Saberi.
\newblock An o(log n/ log log n)-approximation algorithm for the asymmetric
  traveling salesman problem.
\newblock In {\em Proceedings of the Twenty-first Annual ACM-SIAM Symposium on
  Discrete Algorithms}, SODA '10, pages 379--389, Philadelphia, PA, USA, 2010.
  Society for Industrial and Applied Mathematics.

\bibitem[AGR16]{AGR16}
Nima Anari, Shayan~Oveis Gharan, and Alireza Rezaei.
\newblock Monte carlo markov chain algorithms for sampling strongly rayleigh
  distributions and determinantal point processes.
\newblock In Vitaly Feldman, Alexander Rakhlin, and Ohad Shamir, editors, {\em
  29th Annual Conference on Learning Theory}, volume~49 of {\em Proceedings of
  Machine Learning Research}, pages 103--115, Columbia University, New York,
  New York, USA, 23--26 Jun 2016. PMLR.

\bibitem[AI06]{AI06}
Alexandr Andoni and Piotr Indyk.
\newblock Near-optimal hashing algorithms for approximate nearest neighbor in
  high dimensions.
\newblock In {\em 47th Annual {IEEE} Symposium on Foundations of Computer
  Science {(FOCS} 2006), 21-24 October 2006, Berkeley, California, USA,
  Proceedings}, pages 459--468, 2006.

\bibitem[Ald90]{A90}
David~J. Aldous.
\newblock The random walk construction of uniform spanning trees and uniform
  labelled trees.
\newblock {\em SIAM J. Discret. Math.}, 3(4):450--465, November 1990.

\bibitem[AMS96]{AMS96}
Noga Alon, Yossi Matias, and Mario Szegedy.
\newblock The space complexity of approximating the frequency moments.
\newblock In {\em Proceedings of the Twenty-eighth Annual ACM Symposium on
  Theory of Computing}, STOC '96, pages 20--29, New York, NY, USA, 1996. ACM.

\bibitem[AP90]{AP90}
Baruch Awerbuch and David Peleg.
\newblock Sparse partitions (extended abstract).
\newblock In {\em 31st Annual Symposium on Foundations of Computer Science, St.
  Louis, Missouri, USA, October 22-24, 1990, Volume {II}}, pages 503--513,
  1990.

\bibitem[Bro89]{B89}
A.~Broder.
\newblock Generating random spanning trees.
\newblock In {\em Proceedings of the 30th Annual Symposium on Foundations of
  Computer Science}, SFCS '89, pages 442--447, Washington, DC, USA, 1989. IEEE
  Computer Society.

\bibitem[BS13]{BS13}
Matthew Baker and Farbod Shokrieh.
\newblock Chip-firing games, potential theory on graphs, and spanning trees.
\newblock {\em Journal of Combinatorial Theory, Series A}, 120(1):164 -- 182,
  2013.

\bibitem[CDN89]{CDN89}
Charles~J. Colbourn, Robert~P.J. Day, and Louis~D. Nel.
\newblock Unranking and ranking spanning trees of a graph.
\newblock {\em Journal of Algorithms}, 10(2):271 -- 286, 1989.

\bibitem[CKM{\etalchar{+}}14]{CKMPPRX14}
Michael~B. Cohen, Rasmus Kyng, Gary~L. Miller, Jakub~W. Pachocki, Richard Peng,
  Anup~B. Rao, and Shen~Chen Xu.
\newblock Solving sdd linear systems in nearly mlog1/2n time.
\newblock In {\em Proceedings of the Forty-sixth Annual ACM Symposium on Theory
  of Computing}, STOC '14, pages 343--352, New York, NY, USA, 2014. ACM.

\bibitem[CL06]{CL06}
Fan Chung and Linyuan Lu.
\newblock Concentration inequalities and martingale inequalities: a survey.
\newblock {\em Internet Math.}, 3(1):79--127, 2006.

\bibitem[CMN96]{CMN96}
{Charles J.} Colbourn, {Wendy J.} Myrvold, and Eugene Neufeld.
\newblock Two algorithms for unranking arborescences.
\newblock {\em Journal of Algorithms}, 20(2):268--281, 3 1996.

\bibitem[DKP{\etalchar{+}}17]{DKPRS17}
David Durfee, Rasmus Kyng, John Peebles, Anup~B. Rao, and Sushant Sachdeva.
\newblock Sampling random spanning trees faster than matrix multiplication.
\newblock In {\em Proceedings of the 49th Annual ACM SIGACT Symposium on Theory
  of Computing}, STOC 2017, pages 730--742, New York, NY, USA, 2017. ACM.

\bibitem[DPPR17]{DPPR17}
David Durfee, John Peebles, Richard Peng, and Anup~B. Rao.
\newblock Determinant-preserving sparsification of {SDDM} matrices with
  applications to counting and sampling spanning trees.
\newblock {\em CoRR}, abs/1705.00985, 2017.

\bibitem[DS65]{DS65}
H.~Davenport and A.~Schinzel.
\newblock A combinatorial problem connected with differential equations.
\newblock {\em American Journal of Mathematics}, 87(3):684--694, 1965.

\bibitem[FH10]{FH10}
Wai~Shing Fung and Nicholas J.~A. Harvey.
\newblock Graph sparsification by edge-connectivity and random spanning trees.
\newblock {\em CoRR}, abs/1005.0265, 2010.

\bibitem[GRV09]{GRV09}
Navin Goyal, Luis Rademacher, and Santosh Vempala.
\newblock Expanders via random spanning trees.
\newblock In {\em Proceedings of the Twentieth Annual ACM-SIAM Symposium on
  Discrete Algorithms}, SODA '09, pages 576--585, Philadelphia, PA, USA, 2009.
  Society for Industrial and Applied Mathematics.

\bibitem[GSS11]{GSS11}
Shayan~Oveis Gharan, Amin Saberi, and Mohit Singh.
\newblock A randomized rounding approach to the traveling salesman problem.
\newblock In {\em Proceedings of the 2011 IEEE 52Nd Annual Symposium on
  Foundations of Computer Science}, FOCS '11, pages 550--559, Washington, DC,
  USA, 2011. IEEE Computer Society.

\bibitem[Gue83]{G83}
A.~Guenoche.
\newblock Random spanning tree.
\newblock {\em Journal of Algorithms}, 4:214--220, 1983.

\bibitem[HHN{\etalchar{+}}08]{HHNRR08}
Prahladh Harsha, Thomas~P. Hayes, Hariharan Narayanan, Harald R\"{a}cke, and
  Jaikumar Radhakrishnan.
\newblock Minimizing average latency in oblivious routing.
\newblock In {\em Proceedings of the Nineteenth Annual ACM-SIAM Symposium on
  Discrete Algorithms}, SODA '08, pages 200--207, Philadelphia, PA, USA, 2008.
  Society for Industrial and Applied Mathematics.

\bibitem[HIS13]{HIS13}
Sariel Har{-}Peled, Piotr Indyk, and Anastasios Sidiropoulos.
\newblock Euclidean spanners in high dimensions.
\newblock In {\em Proceedings of the Twenty-Fourth Annual {ACM-SIAM} Symposium
  on Discrete Algorithms, {SODA} 2013, New Orleans, Louisiana, USA, January
  6-8, 2013}, pages 804--809, 2013.

\bibitem[HX16]{HX16}
Nicholas J.~A. Harvey and Keyulu Xu.
\newblock {\em Generating Random Spanning Trees via Fast Matrix
  Multiplication}, pages 522--535.
\newblock Springer Berlin Heidelberg, Berlin, Heidelberg, 2016.

\bibitem[Ind06]{I06}
Piotr Indyk.
\newblock Stable distributions, pseudorandom generators, embeddings, and data
  stream computation.
\newblock {\em J. {ACM}}, 53(3):307--323, 2006.

\bibitem[{Kir}47]{K47}
G.~{Kirchhoff}.
\newblock {Ueber die Aufl{\"o}sung der Gleichungen, auf welche man bei der
  Untersuchung der linearen Vertheilung galvanischer Str{\"o}me gef{\"u}hrt
  wird}.
\newblock {\em Annalen der Physik}, 148:497--508, 1847.

\bibitem[KLP15]{KAP15}
Ioannis Koutis, Alex Levin, and Richard Peng.
\newblock Faster spectral sparsification and numerical algorithms for sdd
  matrices.
\newblock {\em ACM Trans. Algorithms}, 12(2):17:1--17:16, December 2015.

\bibitem[KM09]{KM09}
Jonathan~A. Kelner and Aleksander Madry.
\newblock Faster generation of random spanning trees.
\newblock In {\em 50th Annual {IEEE} Symposium on Foundations of Computer
  Science, {FOCS} 2009, October 25-27, 2009, Atlanta, Georgia, {USA}}, pages
  13--21, 2009.

\bibitem[KMP14]{KMP14}
Ioannis Koutis, Gary~L. Miller, and Richard Peng.
\newblock Approaching optimality for solving sdd linear systems.
\newblock {\em SIAM Journal on Computing}, 43(1):337--354, 2014.

\bibitem[KOSZ13]{KOSZ13}
Jonathan~A. Kelner, Lorenzo Orecchia, Aaron Sidford, and Zeyuan~Allen Zhu.
\newblock A simple, combinatorial algorithm for solving sdd systems in
  nearly-linear time.
\newblock In {\em Proceedings of the Forty-fifth Annual ACM Symposium on Theory
  of Computing}, STOC '13, pages 911--920, New York, NY, USA, 2013. ACM.

\bibitem[KS16]{KS16}
R.~Kyng and S.~Sachdeva.
\newblock Approximate gaussian elimination for laplacians - fast, sparse, and
  simple.
\newblock In {\em 2016 IEEE 57th Annual Symposium on Foundations of Computer
  Science (FOCS)}, pages 573--582, Oct 2016.

\bibitem[Kul90]{K90}
{V. G.} Kulkarni.
\newblock Generating random combinatorial objects.
\newblock {\em Journal of Algorithms}, 11(2):185--207, 1990.

\bibitem[LP16]{MR16}
Russell Lyons and Yuval Peres.
\newblock {\em Probability on Trees and Networks}, volume~42 of {\em Cambridge
  Series in Statistical and Probabilistic Mathematics}.
\newblock Cambridge University Press, New York, 2016.

\bibitem[LPS15]{LPS15}
Yin~Tat Lee, Richard Peng, and Daniel~A. Spielman.
\newblock Sparsified cholesky solvers for {SDD} linear systems.
\newblock {\em CoRR}, abs/1506.08204, 2015.

\bibitem[LR99]{LR99}
Frank~Thomson Leighton and Satish Rao.
\newblock Multicommodity max-flow min-cut theorems and their use in designing
  approximation algorithms.
\newblock {\em J. {ACM}}, 46(6):787--832, 1999.

\bibitem[Mat88]{M88}
Peter Matthews.
\newblock Covering problems for brownian motion on spheres.
\newblock 16, 01 1988.

\bibitem[MST15]{MST15}
Aleksander Madry, Damian Straszak, and Jakub Tarnawski.
\newblock Fast generation of random spanning trees and the effective resistance
  metric.
\newblock In {\em Proceedings of the Twenty-Sixth Annual {ACM-SIAM} Symposium
  on Discrete Algorithms, {SODA} 2015, San Diego, CA, USA, January 4-6, 2015},
  pages 2019--2036, 2015.

\bibitem[Pro]{P10}
Jim Propp.
\newblock Personal communication with Aleksander Madry.

\bibitem[PS14]{PS14}
Richard Peng and Daniel~A. Spielman.
\newblock An efficient parallel solver for sdd linear systems.
\newblock In {\em Proceedings of the Forty-sixth Annual ACM Symposium on Theory
  of Computing}, STOC '14, pages 333--342, New York, NY, USA, 2014. ACM.

\bibitem[SM50]{SM50}
Jack {Sherman} and Winifried~J. {Morrison}.
\newblock {Adjustment of an inverse matrix corresponding to a change in one
  element of a given matrix.}
\newblock {\em {Ann. Math. Stat.}}, 21:124--127, 1950.

\bibitem[SRS17]{SRS17}
Aaron Schild, Satish Rao, and Nikhil Srivastava.
\newblock Localization of electrical flows.
\newblock {\em CoRR}, abs/1708.01632, 2017.

\bibitem[SS08]{SS08}
Daniel~A. Spielman and Nikhil Srivastava.
\newblock Graph sparsification by effective resistances.
\newblock In {\em Proceedings of the 40th Annual {ACM} Symposium on Theory of
  Computing, Victoria, British Columbia, Canada, May 17-20, 2008}, pages
  563--568, 2008.

\bibitem[ST14]{ST14}
Daniel~A. Spielman and Shang{-}Hua Teng.
\newblock Nearly linear time algorithms for preconditioning and solving
  symmetric, diagonally dominant linear systems.
\newblock {\em {SIAM} J. Matrix Analysis Applications}, 35(3):835--885, 2014.

\bibitem[Tro12]{T12}
Joel~A. Tropp.
\newblock User-friendly tail bounds for sums of random matrices.
\newblock {\em Foundations of Computational Mathematics}, 12(4):389--434, Aug
  2012.

\bibitem[Wil96]{W96}
David~Bruce Wilson.
\newblock Generating random spanning trees more quickly than the cover time.
\newblock In {\em Proceedings of the Twenty-eighth Annual ACM Symposium on
  Theory of Computing}, STOC '96, pages 296--303, New York, NY, USA, 1996. ACM.

\end{thebibliography}

\end{document}